\SetMathAlphabet{\mathcal}{normal}{OMS}{cmsy}{m}{n} 
\SetMathAlphabet{\mathcal}{bold}{OMS}{cmsy}{m}{n} 
\numberwithin{equation}{section} 
\numberwithin{table}{section} 
\numberwithin{figure}{section} 
\theoremstyle{plain}
\newtheorem{theorem}{Theorem}[subsection]
\newtheorem{definition}[theorem]{Definition}
\newtheorem{lemma}[theorem]{Lemma}
\newtheorem{corollary}[theorem]{Corollary}
\newtheorem{proposition}[theorem]{Proposition}
\newtheorem{assumption}[theorem]{Assumption}
\newtheorem{metatheorem}[theorem]{Meta Theorem}
\newtheorem{remark}[theorem]{Remark}
\theoremstyle{nonumberplain}
\newtheorem{proof}{Proof}
\providecommand{\ie}{i.~e.~}
\providecommand{\eg}{e.~g.~}
\providecommand{\cf}{cf.~}
\providecommand{\R}{\mathbb{R}}
\providecommand{\C}{\mathbb{C}}
\renewcommand{\C}{\mathbb{C}}
\providecommand{\T}{\mathbb{T}}
\renewcommand{\T}{\mathbb{T}}
\providecommand{\N}{\mathbb{N}}
\providecommand{\Z}{\mathbb{Z}}
\providecommand{\ii}{\mathrm{i}}
\providecommand{\e}{\mathrm{e}}
\renewcommand{\Re}{\mathrm{Re} \,}
\providecommand{\Hil}{\mathcal{H}}
\providecommand{\eps}{\varepsilon}
\providecommand{\Cont}{\mathcal{C}}
\providecommand{\ran}{\mathrm{ran} \, }
\providecommand{\supp}{\mathrm{supp} \,}
\providecommand{\trace}{\mathrm{Tr} \,}
\providecommand{\dd}{\mathrm{d}}
\providecommand{\id}{\mathds{1}}
\providecommand{\order}{\mathcal{O}}
\providecommand{\Fourier}{\mathcal{F}}
\providecommand{\trace}{\mathrm{Tr}}
\providecommand{\abs}[1]{\left \lvert #1 \right \rvert}
\providecommand{\sabs}[1]{\lvert #1 \vert}
\providecommand{\babs}[1]{\bigl \lvert #1 \bigr \rvert}
\providecommand{\norm}[1]{\left \lVert #1 \right \rVert}
\providecommand{\snorm}[1]{\lVert #1 \rVert}
\providecommand{\bnorm}[1]{\bigl \lVert #1 \bigr \rVert}
\providecommand{\Bnorm}[1]{\Bigl \lVert #1 \Bigr \rVert}
\providecommand{\scpro}[2]{\left \langle #1 , #2 \right \rangle}
\providecommand{\sscpro}[2]{\langle #1 , #2 \rangle}
\providecommand{\bscpro}[2]{\bigl \langle #1 , #2 \bigr \rangle}
\providecommand{\Bscpro}[2]{\Bigl \langle #1 , #2 \Bigr \rangle}
\providecommand{\sket}[1]{\vert #1 \rangle}
\providecommand{\bket}[1]{\bigl \vert #1 \bigr \rangle}
\providecommand{\sbra}[1]{\langle #1 \vert}
\providecommand{\bbra}[1]{\bigl \langle #1 \bigr \vert}
\providecommand{\sopro}[2]{\vert #1 \rangle \langle #2 \vert}
\providecommand{\expval}[1]{\left \langle #1 \right \rangle}
\providecommand{\sexpval}[1]{\langle #1 \rangle}
\providecommand{\bexpval}[1]{\bigl \langle #1 \bigr \rangle}
\providecommand{\ad}{\mathrm{ad}}
\providecommand{\BZ}{\T^*}
\providecommand{\Op}{\mathrm{Op}}
\providecommand{\Hper}{H_{\mathrm{per}}}
\providecommand{\eq}{\mathrm{eq}}
\providecommand{\per}{\mathrm{per}}
\providecommand{\Opeq}{\Op_{\eq}}
\providecommand{\Weyl}{\sharp^B}
\providecommand{\Int}{\mathrm{Int}}
\providecommand{\Wigner}{\mathcal{W}}
\providecommand{\Schwartz}{\mathcal{S}}
\providecommand{\WeylSys}{W}
\providecommand{\Schwartz}{\mathcal{S}}
\providecommand{\Fs}{\mathcal{F}_{\sigma}}
\providecommand{\KA}{{\mathsf{K}^A}}
\providecommand{\Reps}{{\mathsf{R}}}
\providecommand{\Zak}{\mathcal{Z}}
\providecommand{\domain}{\mathcal{D}}
\providecommand{\eq}{\mathrm{eq}}
\providecommand{\Hoer}[1]{S^{#1}}
\providecommand{\Hoereq}[1]{S_{\eq}^{#1}}
\providecommand{\SemiHoer}[1]{\mathrm{A}\Hoer{#1}}
\providecommand{\SemiHoereq}[1]{\mathrm{A} \Hoereq{#1}}
\providecommand{\MoyalSpace}{\mathcal{M}}
\begin{document}

\title{A Magnetic Pseudodifferential Calculus \\ for Operator-Valued and \\ Equivariant Operator-Valued Symbols}
\author{Giuseppe De Nittis${}^1$, Max Lein${}^2$ \& Marcello Seri${}^3$}

\maketitle
\vspace{-9mm}
\begin{center}
	${}^1$ Facultad de Matemáticas \& Instituto de Física, 
	Pontificia Universidad Católica de Chile \linebreak
	Avenida Vicuña Mackenna 4860, 
	Santiago, 
	Chile \linebreak
	\medskip
	\\
	${}^2$ Advanced Institute of Materials Research,  
	Tohoku University \linebreak
	2-1-1 Katahira, Aoba-ku, 
	Sendai, 980-8577, 
	Japan \linebreak
	\\
	${}^3$ Bernoulli Institute for Mathematics, Computer Science and Artificial Intelligence, University of Groningen \linebreak
	P.O. Box 407, 9700 AK Groningen, 
	The Netherlands \linebreak
\end{center}
\begin{abstract}
	In this monograph we develop magnetic pseudodifferential theory for oper\-ator-valued and \emph{equivariant} operator-valued functions and distributions from first principles. These have found plentiful applications in mathematical physics, including in rigorous perturbation theory for slow-fast systems and perturbed periodic operators. Yet, a systematic treatise was hitherto missing. While many of the results can be found piecemeal in appendices and as sketches in other articles, this article does contain new results. For instance, we have established Beals-type commutator criteria for both cases, which then imply the existence of Moyal resolvents for (equivariant) selfadjoint-operator-valued, elliptic Hörmander symbols and allows one to construct functional calculi. What is more, we give criteria on the function under which a magnetic pseudodifferential operator is (locally) trace class. 
	
	Our aims for this article are three-fold: (1)~Create a single, solid work that colleagues can refer to. (2)~Be pedagogical and precise. And (3)~give a straightforward strategy for extending results from the operator-valued to the equivariant case, pointing out some caveats and pitfalls that need to be kept in mind.
\end{abstract}
\noindent{\scriptsize \textbf{Key words:} Pseudodifferential operators, magnetic fields, periodic operators, equivariant operators, functional calculus}\\ 
{\scriptsize \textbf{MSC 2020:} 35S05, 46-02, 47G30, 46N50}
%

\tableofcontents

\section{Introduction} 
\label{intro}
Various incarnations of pseudodifferential theory have proven to be an invaluable tool in mathematics and mathematical physics. Nowadays there exist many different flavors, including pseudodifferential operators on manifolds \cite{Derezinski_al:Pseudodifferential_pseudoriemannian:2020,McKeag_Safarov:Pseudodifferential_manifolds:2011,Shubin:pseudodifferential:2001}, vector bundles \cite{Lampert_Teufel:adiabatic_limit_Schroedinger_operators_fiber_bundles:2014} and magnetic pseudodifferential operators \cite{Mantoiu_Purice:magnetic_Weyl_calculus:2004,Iftimie_Mantoiu_Purice:magnetic_psido:2006,Iftimie_Mantoiu_Purice:commutator_criteria:2008,Lein:progress_magWQ:2010,DeNittis_Lein:Bloch_electron:2009}. Applications range from the study of adiabatic \cite{PST:sapt:2002,Stiepan_Teufel:semiclassics_op_valued_symbols:2012} and Born-Oppenheimer-type systems \cite{PST:Born-Oppenheimer:2007}, the non- and semirelativistic limits of the Dirac equation \cite{Cordes:pseudodifferential_FW_transform:1983,Cordes:pseudodifferential_FW_transform:2004,Lein:two_parameter_asymptotics:2008,Fuerst_Lein:scaling_limits_Dirac:2008} as well as the study of effective dynamics in perturbed periodic quantum systems \cite{PST:effective_dynamics_Bloch:2003,Panati_Sparber_Teufel:polarization:2006,Freund_Teufel:non_trivial_Bloch_sapt:2013}
and their analogs in classical waves \cite{DeNittis_Lein:adiabatic_periodic_Maxwell_PsiDO:2013,DeNittis_Lein:sapt_photonic_crystals:2013,DeNittis_Lein:ray_optics_photonic_crystals:2014}. One of the reasons why pseudodifferential theory has proved so powerful and popular in applications is that we can make rigorous perturbation expansions and derive effective dynamics in the form of effective operators or semiclassical equations of motion. 

The main technical tool in the aforementioned publications are calculi for (magnetic) pseudodifferential operators defined from \emph{operator-valued} and \emph{equivariant}, operator-valued symbols. Unfortunately, the theory only exists piecemeal in appendices (\eg \cite[Appendix~A]{PST:effective_dynamics_Bloch:2003}, \cite[Appendix~A]{PST:sapt:2002}), sketches (\eg \cite[Section~2.2]{DeNittis_Lein:Bloch_electron:2009} or \cite[Section~4.1]{DeNittis_Lein:adiabatic_periodic_Maxwell_PsiDO:2013}) or as side remarks in the literature. In fact, the present work started life as an appendix to \cite{DeNittis_Lein_Seri:semiclassics_Bloch_electron_Fermi_surface:2021}. Given the broad utility of equivariant pseudodifferential operators, we believe it is past time to dedicate a work focused on this topic. 

As far as we can tell, the main reason why such a work does not yet exist is that, succinctly put, all major results amount to “this statement can be extended from $\Psi$DOs associated to scalar-valued symbols to $\Psi$DOs for (equivariant) operator-valued symbols.” To illustrate this, let us consider the case of the magnetic Weyl product $\Weyl$, which emulates the operator product on the level of symbols. A standard result from pseudodifferential theory states that it defines a continuous, bilinear map 
\begin{align*}
	\Weyl : S^{m_1}_{\rho,\delta} \times S^{m_2}_{\rho,\delta} \longrightarrow S^{m_1 + m_2}_{\rho,\delta}
\end{align*}
between Hörmander spaces (\cf \eg \cite[Proposition~(2.25)]{Folland:harmonic_analysis_hase_space:1989}); we will give precise definitions of $\Weyl$ and these Fréchet spaces below. This has a natural extension to operator-valued functions, 
\begin{align*}
	\Weyl : S^{m_1}_{\rho,\delta} \bigl ( \mathcal{B}(\Hil,\Hil') \bigr ) \times S^{m_2}_{\rho,\delta} \bigl ( \mathcal{B}(\Hil',\Hil'') \bigr ) \longrightarrow S^{m_1 + m_2}_{\rho,\delta} \bigl ( \mathcal{B}(\Hil,\Hil'') \bigr )
	, 
\end{align*}
where $\Hil$, $\Hil'$ and $\Hil''$ are Hilbert spaces; the proofs require only fairly straight-forward modifications. However, this need not be the case for more advanced results, \eg when one wants to generalize Beals-type commutator criteria to $\Psi$DOs defined from (equivariant) operator-valued symbols. For example, the first proof for scalar-valued symbols \cite{Iftimie_Mantoiu_Purice:commutator_criteria:2008} was fairly sophisticated and laborious, and we can see no obvious path for a straightforward extension without going into the details. Fortunately, Cornean, Helffer and Purice recently found a very elegant and brief proof of the commutator criteria in \cite{Cornean_Helffer_Purice:simple_proof_Beals_criterion_magnetic_PsiDOs:2018}.

One of the mathematical reasons for why this is not as straightforward as it would appear has to do with the subtleties of the topologies of the relevant spaces. Perhaps one is tempted to systematically extend existing results for scalar-valued symbols by expressing operator-valued symbol classes 
\begin{align*}
	S^m_{\rho,\delta} \bigl ( \mathcal{B}(\Hil,\Hil') \bigr ) \overset{\mbox{\Lightning}}{\cong} S^m_{\rho,\delta} \otimes \mathcal{B}(\Hil,\Hil')
\end{align*}
as a tensor product of scalar-valued symbol classes and the Banach space of bounded operators mapping $\Hil$ to $\Hil'$, and then leverage existing results. Unfortunately, unless $\Hil$ and $\Hil'$ are finite-dimensional, this is generally not true, because neither symbol spaces \cite{Witt:weak_topology_symbol_spaces:1997} nor infinite-dimensional Banach spaces are nuclear (\cf Definition~50.1, Theorem~50.1 and Corollary~2 in Chapter~50 of \cite{Treves:topological_vector_spaces:1967}). Hence, there are at least two ways to complete the algebraic tensor product and it is not obvious which, if any, reproduces the Fréchet topology of the space on the left (see \cite{Witt:weak_topology_symbol_spaces:1997}, specifically Theorem~3.2, Remark~3.3 and Proposition~4.4). Furthermore, equivariance is another obstacle to identifying $S^m_{\rho,\delta} \bigl ( \mathcal{B}(\Hil,\Hil') \bigr )$ with a tensor product even when $\Hil$ and $\Hil'$ are both finite-dimensional. 

That is why we believe there is merit to this paper even if none of our results may sound surprising to experts. 
\medskip

\noindent
We start by introducing the setting in Section~\ref{setting}. Next, we give the details of the extensions to $\Psi$DOs for operator-valued symbols (Section~\ref{operator_valued_calculus}) and equivariant operator-valued symbols (Section~\ref{equivariant_calculus}); apart from the fundamentals of the calculus, we will cover more advanced results such as Beals-type commutator criteria, inversion and functional calculi. We will put our results in perspective in Section~\ref{outlook} and give an outlook on potential and planned applications. Lastly, we include an appendix where we have collected some of the more technical arguments and proofs. 

\section{Defining the precise setting} 
\label{setting}
This section has a dual purpose: on the one hand, we will motivate the setting. And on the other, we will introduce suitable notation for a more abstract construction. However, to make the presentation more focussed, we will postpone some of the mathematical definitions until Sections~\ref{operator_valued_calculus} and \ref{equivariant_calculus}.

\subsection{The motivating example for this work} 
\label{setting:motivating_example}
So far we still have not explained to the reader what equivariant pseudodifferential operators are and why they appear naturally in the analysis of perturbed periodic systems. We will only give a very rough outline here and postpone the technical details until later; the interested reader may also look at \eg  \cite{PST:effective_dynamics_Bloch:2003,DeNittis_Lein:Bloch_electron:2009,DeNittis_Lein:sapt_photonic_crystals:2013} for additional context and rigorous definitions. 

A crystalline solid subjected to an external electromagnetic field in the one-electron approximation is described by the magnetic Schrödinger operator 
\begin{align}
  H^{A,\phi}_{\eps} = \bigl ( - \ii \nabla - A(\eps x) \bigr )^2 + V_{\mathrm{per}}(x) + \phi(\eps x)
	. 
	\label{setting:eqn:perturbed_periodic_Schroedinger_operator}
\end{align}
Here, the $A$ is a vector potential for the magnetic field $B = \dd A$. The adiabatic parameter $\eps > 0$ quantifies the difference in spatial scales on which the periodic potential $V_{\mathrm{per}}$ and the potentials $A$ and $\phi$ vary. To avoid inessential technical complications, let us assume that $V_{\mathrm{per}}$ is infinitesimally bounded with respect to the Laplacian. Hence, all operators below are selfadjoint on their standard domains.

As the notation suggests, the potential $V_{\mathrm{per}}$ is periodic, that is, $V_{\mathrm{per}}(x + \gamma) = V_{\mathrm{per}}(x)$ holds whenever 
\begin{align*}
	\gamma \in \Gamma := \mathrm{span}_{\Z} \bigl \{ e_1 , \ldots , e_d \bigr \}
	= \Bigl \{ \gamma = \mbox{$\sum_{j = 1}^d$} n_j \, e_j \; \; \big \vert \; \; n_1 , \ldots , n_d \in \Z \Bigr \}
\end{align*}
belongs to the lattice $\Gamma \cong \Z^d$ spanned by the (non-unique) fundamental lattice vectors $\{ e_1 , \ldots , e_d \}$. 

What works like \cite{PST:effective_dynamics_Bloch:2003,DeNittis_Lein:Bloch_electron:2009,DeNittis_Lein:sapt_photonic_crystals:2013} rest on is that $H_{\eps}^A$ can be regarded as a perturbation of the periodic magnetic Schrödinger operator 
\begin{align}
	\Hper := (- \ii \nabla)^2 + V_{\mathrm{per}}(x) 
	\label{setting:eqn:definition_Hper}
\end{align}
in the following sense: exploiting periodicity, a variant of the discrete Fourier transform called Bloch-Floquet-Zak transform, namely 
\begin{align}
	(\Zak \Psi)(k,y) := \sum_{\gamma \in \Gamma} \e^{- \ii k \cdot (y + \gamma)} \, \Psi(y + \gamma)
	, 
	\label{setting:eqn:Zak_transform}
\end{align}
fibers the periodic operator~\eqref{setting:eqn:definition_Hper} in Bloch momentum $k$ over a fundamental cell $\BZ$ of the dual lattice (\cf equation~\eqref{setting:eqn:Brillouin_zone}), 
\begin{align*}
	\Zak \, \Hper \, \Zak^{-1} = \int_{\BZ}^{\oplus} \dd k \, \Hper(k) 
	. 
\end{align*}
Then the perturbed operator~\eqref{setting:eqn:perturbed_periodic_Schroedinger_operator} can be related to a pseudodifferential operator 
\begin{align}
	\Zak \, H_{\eps}^A \, \Zak^{-1} = \Opeq^A(h)
	\label{setting:eqn:perturbed_periodic_Schroedinger_operators_as_magnetic_PsiDO}
\end{align}
associated to the operator-valued function  
\begin{align}
	h(r,k) = \Hper(k) + \phi(r) 
	\label{setting:eqn:symbol_perturbed_periodic_operator}
\end{align}
that is given in terms of the scalar potential $\phi(r)$ and the fiber operator 
\begin{align*}
	\Hper(k) = (- \ii \nabla + k)^2 + V_{\mathrm{per}}(y) 
\end{align*}
acting on $L^2(\T^d)$. 

\emph{One of the main aims of this paper is to give rigorous meaning to the symbol $\Opeq^A(h)$ in \eqref{setting:eqn:perturbed_periodic_Schroedinger_operators_as_magnetic_PsiDO} and develop a calculus for pseudodifferential operators defined from equivariant, operator-valued functions in Section~\ref{equivariant_calculus}.}

\subsubsection{$h$ is an operator-valued symbol} 
\label{setting:motivating_example:h_as_operator_valued_symbol}
We can already read off the last part: indeed, $(r,k) \mapsto h(r,k) = h(r,k)^*$ takes values in the selfadjoint operators and defines a bounded operator from the second Sobolev space $H^2(\T^d)$ to the Hilbert space $L^2(\T^d)$ over the fundamental cell in real space. Once we assume that $\phi$ lies in class $\Cont^{\infty}_{\mathrm{b}}(\R^d)$, the space of bounded, smooth functions with bounded derivatives to any order, then $h(r,k)$ grows quadratically in $k$ and remains uniformly bounded in $r$. Consequently, we may think of it as a Hörmander symbol that takes values in $\mathcal{B} \bigl ( H^2(\T^d) , L^2(\T^d) \bigr )$ and grows quadratically, 
\begin{align*}
	h \in S^2_{0,0} \bigl ( \mathcal{B} \bigl ( H^2(\T^d) , L^2(\T^d) \bigr ) \bigr ) 
	. 
\end{align*}
%

\subsubsection{The Zak transform} 
\label{setting:motivating_example:Zak_transform}
To understand what we mean by equivariant, we need to have a second look at the Zak transform. It basically exploits the periodicity by splitting real space coordinates 
\begin{align*}
	x \in \R^d \cong \T^d \times \Gamma \ni (y,\gamma)
\end{align*}
into a coordinate $y$ inside the fundamental cell in real space and a lattice vector $\gamma$; by periodicity, we identify the fundamental cell with a torus. Likewise, we can split the conjugate variable momentum after we introduce the reciprocal lattice 
\begin{align}
	\Gamma^* := \mathrm{span}_{\Z} \bigl \{ e_1^* , \ldots , e_d^* \bigr \}
	\label{setting:eqn:Brillouin_zone}
\end{align}
as the integer span of the base vectors $\{ e_1^* , \ldots , e_d^* \}$ that are characterized by 
\begin{align*}
	e_j \cdot e_n^* = 2 \pi \, \delta_{jn} 
	. 
\end{align*}
Then any momentum vector can be decomposed 
\begin{align*}
	p \in \R^d \cong \Gamma^* \times \BZ \ni (\gamma^*,k) 
\end{align*}
into a reciprocal lattice vector $\gamma^*$ and the component lying inside the first Brillouin zone 
\begin{align*}
	\BZ := \Bigl \{ k = \mbox{$\sum_{j = 1}^d$} \mu_j \, e_j^* \in \R^d \; \; \big \vert \; \; \mu_1 , \ldots , \mu_d \in [-\nicefrac{1}{2} , +\nicefrac{1}{2}] \Bigr \} 
	\cong \T^d 
	. 
\end{align*}
As our notation already suggests, we can often think of $\BZ$ as a $d$-dimensional torus, but we chose to add a ${}^{\ast}$ to make it easier for us and the reader to distinguish it from the unit cell $\T^d$ in real space. Group theoretically, $(\gamma,k)$ and $(y,\gamma^*)$ are conjugate pairs of variables as one can see from the definition~\eqref{setting:eqn:Zak_transform} of the Zak transform. 

On the dense set $\Schwartz(\R^d) \subset L^2(\R^d)$ a direct computation confirms 
\begin{subequations}\label{setting:eqn:periodicity_Zak_transform}
	\begin{align}
		(\Zak \Psi)(k,y+\gamma) &= (\Zak \Psi)(k,y)
		&&
		\forall \gamma \in \Gamma 
		, 
		\label{setting:eqn:periodicity_Zak_transform:y}
		\\
		(\Zak \Psi)(k+\gamma^*,y) &= \e^{+ \ii \gamma^* \cdot y} \, (\Zak \Psi)(k,y)
		=: \bigl ( \tau(\gamma^*) \Zak \Psi \bigr )(k,y)
		&&
		\forall \gamma^* \in \Gamma^* 
		\label{setting:eqn:periodicity_Zak_transform:k}
		, 
	\end{align}
\end{subequations}
namely the Zak transform is $\Gamma$-periodic in $y$ and $\Gamma^*$-quasiperiodic in $k$. 

The range of the Zak transform is 
\begin{align*}
	L^2_{\eq} \bigl ( \R^d,L^2(\T^d) \bigr ) := \Bigl \{ \psi \in L^2_{\mathrm{loc}} \bigl ( \R^d , L^2(\T^d) \bigr ) \; \; \big \vert \; \; \psi(k - \gamma^*) = \tau(\gamma^*) \psi(k) \; \; \forall \gamma^* \in \Gamma^* \mbox{ a.~e.~in $k$} \Bigr \} 
	, 
\end{align*}
which is a Hilbert space once we endow it with the scalar product 
\begin{align*}
	\scpro{\varphi}{\psi}_{\tau} := \int_{\BZ} \dd k \, \scpro{\varphi(k)}{\psi(k)}_{L^2(\T^d)} 
	. 
\end{align*}
Although, whenever convenient, we may identify it with 
\begin{align*}
	L^2(\BZ) \otimes L^2(\T^d) \cong L^2_{\eq} \bigl ( \R^d,L^2(\T^d) \bigr ) 
\end{align*}
with the help of Lemma~\ref{appendix:equivariant_operators:lem:weighted_Hilbert_spaces_translated_Brillouin_zones}. 

The operator $\tau(\gamma^*)$ that appears in the quasiperiodicity condition~\eqref{setting:eqn:periodicity_Zak_transform:k} can be regarded as a group representation of $\Gamma^*$: 
\begin{align*}
	\tau : \, &\Gamma^* \longrightarrow \mathcal{U} \bigl ( L^2(\T^d) \bigr )
	, 
	\\
	&\gamma^* \mapsto \e^{+ \ii \gamma^* \cdot \hat{y}} 
\end{align*}
While $\tau$ clearly takes values in the \emph{unitary} operators on $L^2(\T^d)$, by restriction we may also define $\gamma^* \mapsto \e^{+ \ii \gamma^* \cdot \hat{y}}$ on dense subsets of $L^2(\T^d)$ such as $H^m(\T^d)$; to distinguish the unitary group action $\tau$ on $L^2(\T^d)$ from the bounded-invertible-operator-valued group action 
\begin{align*}
	\tau' : \, &\Gamma^* \longrightarrow \mathcal{GL} \bigl ( H^m(\T^d) \bigr )
	, 
	\\
	&\gamma^* \mapsto \e^{+ \ii \gamma^* \cdot \hat{y}} 
	, 
\end{align*}
on the Sobolev space $H^m(\T^d)$, we have added a prime. Hilbert spaces like $L^2_{\eq} \bigl ( \R^d , H^m(\T^d) \bigr )$ are defined analogously to $L^2_{\eq} \bigl ( \R^d , L^2(\T^d) \bigr )$. Importantly, while $\tau'$ is not unitary, it has tempered growth in the sense that the operator norm 
\begin{align}
	\bnorm{\tau'(\gamma^*)}_{\mathcal{B}(H^m(\T^d))} \leq C_m \, \sexpval{\gamma^*}^m 
	\label{setting:eqn:tempered_growth_group_action_H_m}
\end{align}
grows like $\sabs{\gamma^*}^m$ for large $\sabs{\gamma^*}$; here, $\sexpval{\gamma^*} := \sqrt{1 + \sabs{\gamma^*}^2}$ is the Japanese bracket and $C_m > 0$ a suitable constant depending only on $m$. 

\subsubsection{Equivariant symbols} 
\label{setting:motivating_example:equivariant_symbols}
The periodicity of $\Hper$ now manifests itself in the $\Gamma^*$-equivariance of 
\begin{align*}
	\Hper(k - \gamma^*) = \tau(\gamma^*) \, \Hper(k) \, \tau'(\gamma^*)^{-1} 
	. 
\end{align*}
This \emph{equivariance condition} extends to the level of symbols as 
\begin{align}
	h(r,k - \gamma^*) = \tau(\gamma^*) \, h(r,k) \, \tau'(\gamma^*)^{-1} 
	. 
	\label{setting:eqn:equivariance:symbols}
\end{align}
Consequently, the symbol of the magnetic pseudodifferential operator $\Zak \, H^{A,\phi}_{\eps} \, \Zak^{-1}$ is an element of the Hörmander class composed of equivariant symbols of order $2$ (\cf Definition~\ref{equivariant_calculus:defn:equivariance_distribution}),  
\begin{align*}
	h \in S^2_{\eq} \bigl ( \mathcal{B} \bigl ( H^2(\T^d) , L^2(\T^d) \bigr ) \bigr ) 
	. 
\end{align*}
At first glance, it may seem that the equivariance condition~\eqref{setting:eqn:equivariance:symbols} is incompatible with the quadratic growth of the symbol in the momentum variable. Fortunately, this apparent contradiction can be resolved with the help of the growth estimate \eqref{setting:eqn:tempered_growth_group_action_H_m} for $m = 2$ and $m = 0$. Then indeed, for all $\gamma^* \in \Gamma^*$ we can verify that the operator norm 
\begin{align*}
	\bnorm{h(r,k - \gamma^*)}_{\mathcal{B}(H^2(\T^d),L^2(\T^d))} &\leq \bnorm{\tau'(\gamma^*)}_{H^m(\T^d)} \, \bnorm{\tau(\gamma^*)}_{L^2(\T^d)} \, \sup_{(r,k) \in \R^d \times \BZ} \bnorm{h(r,k)}_{\mathcal{B}(H^2(\T^d),L^2(\T^d))}
	\\
	&\leq C_m \, \sexpval{\gamma^*}^2 \, \sup_{(r,k) \in \R^d \times \BZ} \bnorm{h(r,k)}_{\mathcal{B}(H^2(\T^d),L^2(\T^d))}
	= C(h) \; \sexpval{\gamma^*}^2
\end{align*}
grows quadratically as $\sabs{\gamma^*} \rightarrow \infty$ — as it should. 

In fact, we may repeat the above estimate verbatim for arbitrary derivatives $\partial_r^a \partial_k^{\alpha} h$ of our symbol, and conclude that also they grow \emph{quadratically} in $k$. But this seems to conflict with the simple observation that $h(r,k)$ is a quadratic polynomial in $k$ and second-order momentum derivatives such as $\partial_{k_1}^2 h(r,k) \propto \id$ are proportional to the identity. Nevertheless, our conclusion is completely correct: when computing operator norms it matters whether we view \eg $\partial_{k_1}^2 h(r,k)$ as taking values in $\mathcal{B} \bigl ( H^2(\T^d) , L^2(\T^d) \bigr )$ or in $\mathcal{B} \bigl ( L^2(\T^d) \bigr )$. Lemma~\ref{equivariant_calculus:magnetic_PsiDOs:lem:bound_Hoermander_order_by_tau_orders} will show that the order of the symbol class only depends on the orders of the group actions — which is $2$ in the first case and $0$ in the second (\cf Lemma~\ref{setting:lem:magnetic_Sobolev_norm_operator_tau}). 

Therefore, we may view the space of equivariant Hörmander symbols of order $2$ 
\begin{align*}
	S^2_{\eq} \bigl ( \mathcal{B} \bigl ( H^2(\T^d) , L^2(\T^d) \bigr ) \bigr ) \subseteq S^2_{0,0} \bigl ( \mathcal{B} \bigl ( H^2(\T^d) , L^2(\T^d) \bigr ) \bigr )
\end{align*}
as a subspace of the operator-valued Hörmander symbol class of order $2$ and type $(0,0)$. 

\subsubsection{Utility of an equivariant magnetic pseudodifferential calculus} 
\label{setting:motivating_example:utility}
Before we abstract the setting, we think it is worthwhile to revisit the reasons for developing the theory of equivariant pseudodifferential operators from first principles. Our motivation was another upcoming work \cite{DeNittis_Lein_Seri:semiclassics_Bloch_electron_Fermi_surface:2021} where we will need a robust functional calculi for equivariant magnetic $\Psi$DOs in our proofs. These were beyond what one can achieve within a short appendix. So instead, we decided to build a framework within which the extension of more advanced results for $\Psi$DOs defined from scalar-valued Hörmander symbols to equivariant operator-valued symbols would be straightforward. As part of that effort, we have obtained a number of smaller results like the above mentioned Lemma~\ref{equivariant_calculus:magnetic_PsiDOs:lem:bound_Hoermander_order_by_tau_orders}, which simplifies some of the proofs in applications and clarifies some peculiarities of the equivariant calculus. 

Coming back to the question raised in the section header: what are the advantages of using pseudodifferential theory to analyze perturbed periodic operators in the first place? Big picture, we see three major advantages: first, it allows us to infer properties of the operator $\Opeq^A(f)$ from properties of the function $f$. Secondly, products of pseudodifferential operators are again pseudodifferential operators, 
\begin{align*}
	\Opeq^A(f) \, \Opeq^A(g) =: \Opeq^A(f \Weyl g) 
	, 
\end{align*}
and the operator product can be pulled back to the level of functions or distributions. Once we have a notion of product, we can also consider inverses with respect to the (magnetic) Weyl product $\Weyl$, which leads to Moyal resolvents $(f - z)^{(-1)_{\Weyl}}$; these are just the preimages of the resolvent operators under $\Opeq^A$,  
\begin{align*}
	\bigl ( \Opeq^A(f) - z \bigr )^{-1} = \Opeq^A \bigl ( (f - z)^{(-1)_{\Weyl}} \bigr ) 
	. 
\end{align*}
Under certain conditions, we can prove that these Moyal resolvents are not just tempered distributions, but functions; their properties are directly and straightforwardly related to properties of $f$. Moyal resolvents are then the basis for functional calculi on the level of functions on phase space, and it is these functional cacluli that are central to many arguments in the literature. 

The third advantage is that we can make many constructions \emph{microlocally}, \ie local in phase space: when a small parameter is present, then the Weyl product $\Weyl$ and hence, the Moyal resolvent have expansions in the small parameter. For example, we could scale the magnetic field $\lambda \, B$ with $\lambda \ll 1$ and consider the small magnetic field limit \cite{Cornean_Iftimie_Purice:Peierls_substitution_magnetic_PsiDOs:2019}. More commonly, one has a semiclassical parameter $\eps \ll 1$ in the construction, and we can expand 
\begin{align*}
	f \Weyl g \asymp \sum_{n = 0}^{\infty} \eps^n \, (f \Weyl g)_{(n)} 
\end{align*}
asymptotically in powers of $\eps$. The terms $(f \Weyl g)_{(n)}(x,\xi)$ only depend on $f(x,\xi)$, $g(x,\xi)$, the coefficients of the magnetic field $B_{lj}(x)$ as well as all their derivatives evaluated at $(x,\xi)$. Hence, the semiclassical expansion of the Moyal resolvent makes the calculus amenable to microlocalization. That is, in essence, the idea to construct the almost-invariant projection in \cite{Nenciu:effective_dynamics_Bloch:1991,PST:sapt:2002,PST:effective_dynamics_Bloch:2003,DeNittis_Lein:Bloch_electron:2009,DeNittis_Lein:sapt_photonic_crystals:2013,DeNittis_Lein:ray_optics_photonic_crystals:2014}. 

Other operators like effective hamiltonians can likewise be computed order-by-order, resulting in \eg an asymptotic expansion of the semi- and non-relativistic dynamics from the Dirac equation \cite{Fuerst_Lein:scaling_limits_Dirac:2008}. 

\subsection{Abstract setting} 
\label{setting:abstract}
The purpose of Section~\ref{setting:motivating_example} was to motivate the abstract setting, which we will detail now. To define the abstract version of $L^2_{\eq} \bigl ( \R^d , L^2(\T^d) \bigr )$, we will replace $L^2(\T^d)$ with some separable Hilbert space $\Hil$ that is endowed with a group action 
\begin{align*}
	\tau : \Gamma^* \longrightarrow \mathcal{GL}(\Hil)
\end{align*}
For consistency's sake, we shall stick to $\Gamma^*$ to denote our group even though $\Gamma^*$ can be mathematically identified with $\Z^d$. Similarly, we will continue to write $\BZ \cong \T^d$ for the fundamental cell associated to the lattice $\Gamma^*$. Then $\R^d \cong \Gamma^* \times \BZ$ splits into a lattice vector and a vector inside the fundamental cell. 

The Hilbert space of equivariant $\Hil$-valued $L^2$ functions
\begin{align}
	L^2_{\eq}(\R^d,\Hil) := \Bigl \{ \psi \in L^2_{\mathrm{loc}}(\R^d,\Hil) \; \; \big \vert \; \; &\psi(k - \gamma^*) = \tau(\gamma^*) \, \psi(k) 
	\Bigr . \notag \\
	&\Bigl . 
	\mbox{ for almost all $k \in \R^d$} , \; 
	\forall \gamma^* \in \Gamma^* \Bigr \} 
	\label{setting:eqn:equivariant_L2_space}
\end{align}
comes equipped with the scalar product
\begin{align}
	\scpro{\varphi}{\psi}_{L^2_{\eq}} := \int_{\BZ} \dd k \, \scpro{\varphi(k)}{\psi(k)}_{\Hil}
	. 
	\label{setting:eqn:equivariant_scalar_product}
\end{align}
Throughout our work the group action is assumed to have tempered growth of order $q$ in the following precise sense: 
\begin{definition}[Order of $\tau$]\label{setting:defn:order_tau}
	We say that the group representation $\tau : \Gamma^* \longrightarrow \mathcal{GL}(\Hil)$ is of order $q \in \R$ if there exists $C > 0$ so that
	\begin{align*}
		\bnorm{\tau(\gamma^*)}_{\mathcal{B}(\Hil)} \leq C \, \sexpval{\gamma^*}^q
	\end{align*}
	holds for all $\gamma^* \in \Gamma^*$.
\end{definition}
While we do not need this tempered growth assumption to define the Hilbert space, it is \emph{essential} if we want to view 
\begin{align}
	L^2_{\eq}(\R^d,\Hil) \subset \Schwartz^*_{\eq}(\R^d,\Hil) 
	\label{setting:eqn:embedding_L2_eq_Sprime_eq}
\end{align}
as a subset of \emph{equivariant} tempered distributions, which we will define below. Involving distributions becomes necessary since the standard construction of pseudodifferential operators, which is very elegantly explained in \cite[Section~II]{Mantoiu_Purice:magnetic_Weyl_calculus:2004}, first defines $\Op^A$ on Schwartz functions and then extends them by duality to $\Schwartz^*$ (the reason why we deviate from the standard notation and use $\Schwartz^*$, will become clear in the Section~\ref{operator_valued_calculus:extension_by_duality}).

To give rigorous meaning to $\Schwartz^*_{\eq}(\R^d,\Hil)$, we first extend translations 
\begin{align*}
	(T_{\gamma^*} \varphi)(k) := \varphi(k - \gamma^*) 
\end{align*}
from $\Schwartz(\R^d,\Hil)$ to their dual $\Schwartz^*(\R^d,\Hil)$ via 
\begin{align*}
	\bscpro{T_{\gamma^*} F \, }{ \, \varphi}_{\Schwartz(\R^d,\Hil)} := \bscpro{F \, }{ \, T_{-\gamma^*} \varphi}_{\Schwartz(\R^d,\Hil)}
	&&
	\forall F \in \Schwartz^*(\R^d,\Hil)
	, \; 
	\varphi \in \Schwartz(\R^d,\Hil)
	. 
\end{align*}
Flipping the sign is necessary if we want an extension that is consistent with
\begin{align*}
	\bscpro{T_{\gamma^*} \psi \, }{ \, \varphi}_{\Schwartz(\R^d,\Hil)} &= \int_{\BZ} \dd k \, \bscpro{\psi(k - \gamma^*)}{\varphi(k)}_{\Hil} 
	= \int_{\BZ} \dd k \, \bscpro{\psi(k)}{\varphi(k + \gamma^*)}_{\Hil} 
	\\
	&= \bscpro{\psi \, }{ \, T_{-\gamma^*} \varphi}_{\Schwartz(\R^d,\Hil)}
\end{align*}
when also $\psi \in \Schwartz(\R^d,\Hil)$ in the first argument is a test function. Fortunately, this is entirely consistent with translations on $L^2(\R^d,\Hil)$, where the Hilbert space adjoint $T_{\gamma^*}^* = T_{\gamma^*}^{-1} = T_{- \gamma^*}$ also equals the inverse translation. 

Similarly, but keeping into account the complex conjugation,
we define
\begin{align*}
	\bscpro{\tau(\gamma^*) \, F \, }{ \, \varphi}_{\Schwartz(\R^d,\Hil)} := \bscpro{F \, }{ \, \tau(\gamma^*)^* \, \varphi}_{\Schwartz(\R^d,\Hil)}
	, 
\end{align*}
where $\tau(\gamma^*)^*$ is the $\Hil$-adjoint of $\tau(\gamma^*)$. 
\begin{definition}[$\Schwartz^*_{\eq}(\R^d,\Hil)$]
	Suppose $\tau$ has tempered growth. Then the tempered distribution $F$ is equivariant if and only if 
	\begin{align*}
		T_{\gamma^*} F = \tau(\gamma^*) \, F 
	\end{align*}
	holds true for all $\gamma^* \in \Gamma^*$. The Fréchet space of equivariant tempered distributions is denoted with 
	\begin{align*}
		\Schwartz^*_{\eq}(\R^d,\Hil) := \bigl \{ F \in \Schwartz^*(\R^d,\Hil) \; \; \vert \; \; T_{\gamma^*} F = \tau(\gamma^*) \, F \quad \forall \gamma^* \in \Gamma^* \bigr \} 
		. 
	\end{align*}
\end{definition}
The tempered growth assumption is necessary to ensure that we can bound the action of $\tau(\gamma^*)$ on $\Schwartz(\R^d,\Hil)$ uniformly in $\gamma^*$, \ie the group action is \emph{uniformly} continuous. Hence, the Fréchet topology of $\Schwartz^*_{\eq}(\R^d,\Hil)$ is the restriction of the Fréchet topology of $\Schwartz^*(\R^d,\Hil)$ to the subspace of equivariant distributions. 

\subsection{Example of the generalized formalism: a condensed matter system subjected to strong magnetic fields} 
\label{setting:strong_internal_B}
The abstract setting from Section~\ref{setting:abstract} immediately applies to a generalization of our motivating example from Section~\ref{setting:motivating_example}. Here, the perturbed hamiltonian 
\begin{align}
  H^{A,\phi}_{\eps} = \bigl ( - \ii \nabla - A_0(x) - \lambda A(\eps x) \bigr )^2 + V_{\mathrm{per}}(x) + \phi(\eps x)
	. 
	\label{setting:eqn:perturbed_periodic_magnetic_Schroedinger_operator}
\end{align}
contains an additional vector potential $A_0$ for a $\Gamma$-periodic magnetic field $B_0 = \dd A_0$ with \emph{rational flux} through a unit cell. As opposed to the external magnetic field $\eps \lambda \, B(\eps x) = \dd \bigl ( \lambda A(\eps x) \bigr )$ the field $B_0(x)$ does not scale with $\eps$; it is in this sense that the magnetic field $B_0(x)$ is stronger than the external magnetic field $\eps \lambda \, B(\eps x)$. The most common example is that of a constant magnetic field $B_0 = \mathrm{const.}$, but restricting ourselves to this particular case is mathematically not necessary. 

The rational flux assumption on $B_0$ allows us to consider 
\begin{align*}
	\Hper = \bigl ( - \ii \nabla - A_0(x) \bigr )^2 + V_{\mathrm{per}}(x)
\end{align*}
as a periodic operator on a suitably chosen superlattice 
\begin{align*}
	\Gamma_{B_0} \subseteq \Gamma 
	. 
\end{align*}
Just like in Section~\ref{setting:motivating_example} the periodic operator 
\begin{align*}
	\Zak_{A_0} \, \Hper \, \Zak_{A_0}^{-1} = \int_{\BZ_{B_0}}^{\oplus} \dd k \, \Hper(k) 
	, 
\end{align*}
admits a fiber decomposition; the indices $A_0$ and $B_0$ emphasize the fact that the Zak transform $\Zak_{A_0}$ and the Brillouin torus $\BZ_{B_0}$ are those defined with respect to the super lattice $\Gamma_{B_0}$ and magnetic translations. The domain of $\Hper(k)$ is the \emph{magnetic} Sobolev space $H^2_{A_0}(\T^d)$ over the torus. However, the presence of the magnetic vector potential does not alter the tempered growth of the group action $\tau : \Gamma_{B_0} \longrightarrow \mathcal{GL} \bigl ( H^m_{A_0}(\T^d) \bigr )$. 
\begin{lemma}\label{setting:lem:magnetic_Sobolev_norm_operator_tau}
	Suppose the vector potential $A_0$ is polynomially bounded. Then the group action $\gamma^* \mapsto \e^{+ \ii \gamma^* \cdot \hat{y}}$ on the magnetic Sobolev space $H^m_{A_0}(\T^d)$ is of order $m$. 
\end{lemma}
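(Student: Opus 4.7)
The plan is to work directly from the definition of the magnetic Sobolev norm and exploit the fact that multiplication by the phase $\e^{\ii \gamma^* \cdot y}$ conjugates the magnetic momentum into itself shifted by $\gamma^*$.

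Concretely, I would equip $H^m_{A_0}(\T^d)$ with the equivalent norm
\begin{align*}
	\snorm{\psi}_{H^m_{A_0}}^2 := \sum_{\sabs{\alpha} \leq m} \bnorm{D_{A_0}^{\alpha} \psi}_{L^2(\T^d)}^2 ,
	\qquad D_{A_0} := - \ii \nabla - A_0(y),
\end{align*}
so that it suffices to control $\snorm{D_{A_0}^{\alpha} \tau(\gamma^*) \psi}_{L^2}$ for every multi-index $\sabs{\alpha} \leq m$. The key computational input is the one-line intertwining identity
\begin{align*}
	\e^{- \ii \gamma^* \cdot y} \, D_{A_0,j} \, \e^{+ \ii \gamma^* \cdot y} = D_{A_0,j} + \gamma^*_j ,
\end{align*}
which follows immediately from the product rule and the fact that $\gamma^*$ is a constant vector. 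Iterating this identity gives
\begin{align*}
	D_{A_0}^{\alpha} \bigl ( \tau(\gamma^*) \psi \bigr ) = \tau(\gamma^*) \, (D_{A_0} + \gamma^*)^{\alpha} \psi .
\end{align*}

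The next step is to expand $(D_{A_0} + \gamma^*)^{\alpha}$ as a polynomial of degree $\sabs{\alpha}$ in the non-commuting components of $D_{A_0}$. Since the $\gamma^*_j$ are scalars, they commute with everything, so grouping terms by powers of $\gamma^*$ yields a multinomial expansion
\begin{align*}
	(D_{A_0} + \gamma^*)^{\alpha} = \sum_{\beta \leq \alpha} \gamma^{*\beta} \, P_{\alpha,\beta}(D_{A_0}) ,
\end{align*}
where $P_{\alpha,\beta}$ is a (non-commutative) polynomial in the components of $D_{A_0}$ of total degree $\sabs{\alpha - \beta}$. Reordering the factors inside $P_{\alpha,\beta}$ to a standard order produces, through the commutator $[D_{A_0,j}, D_{A_0,k}] = - \ii B_{0,jk}$, correction terms of strictly lower order with coefficients that are polynomials in the components of $B_0$ and their derivatives. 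Because $A_0$ is polynomially bounded, so are $B_0$ and its derivatives; restricted to the fundamental cell of the relevant superlattice they are bounded multiplication operators on $L^2(\T^d)$, and hence $P_{\alpha,\beta}(D_{A_0})$ is a continuous operator from $H^{\sabs{\alpha-\beta}}_{A_0}(\T^d)$ to $L^2(\T^d)$.

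Putting the pieces together, for every $\sabs{\alpha} \leq m$ we obtain
\begin{align*}
	\bnorm{D_{A_0}^{\alpha} \tau(\gamma^*) \psi}_{L^2} \leq \sum_{\beta \leq \alpha} \sabs{\gamma^*}^{\sabs{\beta}} \, \bnorm{P_{\alpha,\beta}(D_{A_0}) \psi}_{L^2} \leq C \, \sexpval{\gamma^*}^m \, \snorm{\psi}_{H^m_{A_0}},
\end{align*}
using $\sabs{\gamma^*}^{\sabs{\beta}} \leq \sexpval{\gamma^*}^m$ and the boundedness of each $P_{\alpha,\beta}(D_{A_0})$ from $H^m_{A_0}$ to $L^2$. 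Summing over $\sabs{\alpha} \leq m$ yields the desired estimate $\snorm{\tau(\gamma^*)}_{\mathcal{B}(H^m_{A_0}(\T^d))} \leq C_m \, \sexpval{\gamma^*}^m$. The only real obstacle is the bookkeeping in step 3: being careful that when one reorders the non-commuting factors of $D_{A_0}$, all error terms produced by the magnetic-field commutators remain continuous between the appropriate magnetic Sobolev spaces—this is precisely where the polynomial boundedness of $A_0$ (and hence of $B_0$ and its derivatives) enters. An inductive argument on $m$ keeps this entirely routine.
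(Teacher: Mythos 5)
Your proof is correct and rests on the same key observation as the paper's: conjugating the magnetic momentum by the phase $\e^{+\ii\gamma^*\cdot\hat{y}}$ shifts it by $\gamma^*$, and the shift contributes a factor $\sexpval{\gamma^*}^m$ while the multiplication operators built from $A_0$ (and $B_0$) are bounded on the compact torus. The implementations differ slightly, and the comparison is worth a remark. The paper works directly with the functional-calculus weight $\sexpval{-\ii\nabla_y - A_0(\hat{y})}^m$ and splits $\sexpval{D_{A_0}+\gamma^*}^m$ into a $\gamma^*$-part and a $D_{A_0}$-part; as written this "identity" is really a Peetre-type estimate rather than an exact equation, so the paper's displayed computation is itself only a sketch. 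Your route replaces the fractional weight by the equivalent norm $\sum_{\sabs{\alpha}\le m}\snorm{D_{A_0}^{\alpha}\psi}_{L^2}^2$ and uses the exact intertwining $\e^{-\ii\gamma^*\cdot y}D_{A_0,j}\e^{+\ii\gamma^*\cdot y}=D_{A_0,j}+\gamma^*_j$ followed by a binomial expansion; since the $\gamma^*_j$ are scalars, the expansion of the ordered product $\prod_j(D_{A_0,j}+\gamma^*_j)^{\alpha_j}$ is in fact exact with no reordering needed, so the commutator bookkeeping you worry about in step 3 is even lighter than you fear. The price is that your norm equivalence is only available for nonnegative integer $m$, whereas the paper's formulation (and Definition~\ref{setting:defn:order_tau}, which allows $q\in\R$) is phrased for general order; in the applications at hand ($m=0,2$) this restriction is harmless.
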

The proof, which can be found in its entirety in Appendix~\ref{appendix:tempered_growth_magnetic_Sobolev_spaces_torus}, rests on the observation that the first term 
\begin{align*}
	\sexpval{- \ii \nabla_y + A_0(\hat{y})}^m \, \e^{+ \ii \gamma^* \cdot \hat{y}} \, \psi &= \e^{+ \ii \gamma^* \cdot \hat{y}} \, \sexpval{\gamma^* + A_0(\hat{y})}^m \, \psi + \e^{+ \ii \gamma^* \cdot \hat{y}} \, \sexpval{- \ii \nabla_y + A_0(\hat{y})}^m \, \psi 
\end{align*}
grows like $\sabs{\gamma^*}^m$ for large $\sabs{\gamma^*}$ in $\mathcal{B} \bigl ( L^2(\T^d) \bigr )$ norm. This example illustrates why the tempered growth condition is quite natural in many applications, but a necessity if we want to go from the concrete to the abstract. 

All that being said, we are once again in the abstract setting described in Section~\ref{setting:abstract}, and Section~\ref{equivariant_calculus} will explain how to make sense of 
\begin{align*}
  H^{A,\phi}_{\eps} = \Opeq^A \bigl ( \Hper + \phi \bigr ) 
\end{align*}
as a magnetic pseudodifferential operator. 

\section{Magnetic pseudodifferential calculus for operator-valued symbols} 
\label{operator_valued_calculus}
First, we will develop magnetic pseudodifferential theory for operator-valued symbols. Even if we were only interested in the equivariant case, this is a necessary first step. And given the plethora of applications of operator-valued pseudodifferential theory to a wide range of problems such as Born-Oppenheimer-type systems \cite{PST:Born-Oppenheimer:2007}, the Dirac equation \cite{Cordes:pseudodifferential_FW_transform:1983,Cordes:pseudodifferential_FW_transform:2004,Fuerst_Lein:scaling_limits_Dirac:2008} and many others, we believe the results of this section will be interesting in and of themselves. 

To be more general, we allow for the presence of a magnetic field $B$, although readers not familiar with the \emph{magnetic} variant of pseudodifferential theory \cite{Mueller:product_rule_gauge_invariant_Weyl_symbols:1999,Mantoiu_Purice:magnetic_Weyl_calculus:2004} can ignore it in much of what follows. In fact, we recommend \cite{Mantoiu_Purice:magnetic_Weyl_calculus:2004} even to readers for whom $B = 0$, since Măntoiu and Purice very elegantly explained the basic construction of pseudodifferential theory. 

For the first part of this section, where we construct the pseudodifferential calculus on $\Schwartz$ and $\Schwartz'$, polynomial growth of the magnetic field can be accommodated. \textbf{Therefore, unless specified otherwise, we will tacitly assume throughout this section that the following assumptions are verified:} 
\begin{assumption}[Polynomially bounded magnetic fields]\label{operator_valued_calculus:assumption:polynomially_bounded_B}
	\begin{enumerate}[(a)]
		\item The components of the magnetic field $B_{jl} \in \Cont^{\infty}_{\mathrm{pol}}(\R^d)$ are polynomially bounded, smooth and have polynomially bounded derivatives to any order. All vector potentials $A \in \Cont^{\infty}_{\mathrm{pol}}(\R^d,\R^d)$ for such magnetic fields $B = \dd A$ are of the same class. 
		\item All Hilbert spaces such as $\Hil$, $\Hil'$, etc.\ are separable. 
	\end{enumerate}
\end{assumption}

\subsection{Building block operators and the magnetic Weyl system} 
\label{operator_valued_calculus:building_block_operators}
A pseudodifferential calculus is built around two families of “building block” operators, namely position and momentum. In the so-called adiabatic representation they are given by 
\begin{subequations}\label{operator_valued_calculus:eqn:building_block_operators}
	\begin{align}
		Q &:= \eps \hat{x} 
		,
		\\
		P^A &:= - \ii \nabla - \lambda A(\eps \hat{x}) 
		, 
	\end{align}
\end{subequations}
where $\eps$ is the semiclassical parameter and $\lambda$ the coupling constant to the magnetic field. We regard these as selfadjoint operators on $L^2(\R^d)$ equipped with the usual domains. 

The operators $(Q,P^A)$ are unitarily equivalent to position and momentum operators in the perhaps more common scaling 
\begin{align*}
	Q' &:= \hat{x} 
	,
	\\
	P^{\prime \, A} &:= - \ii \eps \nabla - \lambda A(\hat{x})
	,
\end{align*}
where the semiclassical parameter appears in front of the derivative in the momentum operator. 

Nevertheless, both sets of operators will lead to the same pseudodifferential calculus as they implement the same commutation relations, which formally read 
\begin{align}
	\ii \, [Q_j , Q_l] = 0 
	, 
	&&
	\ii \, [P^A_j , P^A_l] = \eps \lambda \, B_{jl}(Q)
	, 
	&&
	\ii \, [P^A_j , Q_l] = \eps \delta_{jl} 
	. 
	\label{operator_valued_calculus:eqn:commutation_relations}
\end{align}
In fact, we can conjugate $Q$ and $P^A$ with any unitary operator, the calculus we are developing will be the same. Other relevant unitary transformations are the Fourier transform or the gauge transformation $\e^{+ \ii \lambda \vartheta(Q)}$, which implements the change of gauge $A \mapsto A + \eps \dd \vartheta$. 

The building block operators enter into the definition of the magnetic Weyl system 
\begin{align}
	W^A(X) := \e^{- \ii \sigma(X,(Q,P^A))} \otimes \id_{\Hil} \equiv \e^{- \ii \sigma(X,(Q,P^A))} 
	,
	\label{operator_valued_calculus:eqn:magnetic_Weyl_system}
\end{align}
where $X = (x,\xi) \in T^* \R^d$ and $Y = (y,\eta) \in T^* \R^d$ are phase space variables, $\sigma(X,Y) := \xi \cdot y - x \cdot \eta$ is the non-magnetic symplectic form and the exponential is defined in terms of functional calculus for the selfadjoint operator $\sigma(X,(Q,P^A)) = \xi \cdot Q - x \cdot P^A$. 

Later on, we will also commonly use the phase space variable $Z = (z,\zeta) \in T^* \R^d$. As a matter of convention, the small roman letters $x$, $y$ and $z$ denote the position components and the greek letters $\xi$, $\eta$ and $\zeta$ the momentum components. 

The Weyl system acts on $\Psi \in L^2(\R^d,\Hil)$ as 
\begin{align*}
	\bigl ( W^A(Y) \Psi \bigr )(x) = \e^{- \ii \frac{\lambda}{\eps} \int_{[\eps x , \eps x + \eps y]} A} \, \e^{- \ii \eps \eta \cdot (x + \frac{y}{2})} \, \Psi(x + y)
	. 
\end{align*}
Compared to the theory for scalar-valued symbols developed in \eg \cite{Mueller:product_rule_gauge_invariant_Weyl_symbols:1999,Mantoiu_Purice:magnetic_Weyl_calculus:2004,Iftimie_Mantoiu_Purice:magnetic_psido:2006,Iftimie_Mantoiu_Purice:commutator_criteria:2008,Iftimie_Purice:magnetic_Fourier_integral_operators:2011,Lein:two_parameter_asymptotics:2008,Lein:progress_magWQ:2010}, we have merely tensored on $\id_{\Hil}$. However, to unclutter the notation, we will only make “$\otimes \id_{\Hil}$” explicit when emphasis is needed and write $W^A(X)$ to mean both, $\e^{- \ii \sigma(X,(Q,P^A))} \otimes \id_{\Hil}$ and $\e^{- \ii \sigma(X,(Q,P^A))} \otimes \id_{\Hil'}$ even when $\Hil \neq \Hil'$. 

The magnetic Weyl system rigorously implements the commutation relations~\eqref{operator_valued_calculus:eqn:commutation_relations} since in general the Weyl system evaluated at different points $X \neq Y$ does not commute with itself, 
\begin{align}
	W^A(X) \, W^A(Y) = \e^{+ \ii \frac{\eps}{2} \sigma(X,Y)} \, \e^{+ \ii \frac{\lambda}{\eps} \Gamma^B(Q , Q + \eps x , Q + \eps x + \eps y)} \, W^A(X + Y) 
	. 
\end{align}
Indeed, the first phase factor is a consequence of $\ii \, [P^A_j , Q_l] = \eps \delta_{jl}$. The magnetic phase factor is the exponential of the magnetic flux through the triangle with corners $Q$, $Q + \eps x$ and $Q + \eps x + \eps y$, 
\begin{align*}
	\Gamma^B(q, q + \eps x , q + \eps x + \eps y) := \int_{\sexpval{q,q + \eps x,q + \eps x + \eps y}} B 
	, 
\end{align*}
and stems from $\ii \, [P^A_j , P^A_l] = \eps \lambda \, B_{jl}(Q)$. Note that the area of the triangle is $\order(\eps^2)$ so that the magnetic phase is again of $\order(\eps)$, just like the non-magnetic phase factor. 

The covariance of the Weyl system under unitary transformations is a direct consequence of functional calculus: if we fix a unitary $U$ and replace $(Q,P^A)$ by $Q_U := U \, Q \, U^{-1}$ and $P^A_U := U \, P^A \, U^{-1}$, then the associated Weyl system 
\begin{align}
	W_U^A(X) := \e^{- \ii \sigma(X,(Q_U,P_U^A))} 
	&= U \, \e^{- \ii \sigma(X,(Q,P^A))} \, U^{-1} 
	\notag \\
	&= U \, W^A(X) \, U^{-1} 
	\label{operator_valued_calculus:eqn:covariance_Weyl_system}
\end{align}
is related to the original Weyl system $W^A(X)$ by adjoining the unitary $U$. 

The gauge-covariance of the calculus emerges from a special case of equation~\eqref{operator_valued_calculus:eqn:covariance_Weyl_system}: for changes of gauge $U = \e^{+ \ii \lambda \vartheta(Q)}$ the above reduces to 
\begin{align*}
	\e^{+ \ii \lambda \vartheta(Q)} \, W^A(X) \, e^{- \ii \lambda \vartheta(Q)} &= W^{A + \eps \dd \vartheta}(X)
	. 
\end{align*}
%

\subsection{Rigorous construction on Schwartz spaces} 
\label{operator_valued_calculus:construction_on_Schwartz}
For much of the construction, we can repeat the arguments from \cite[Sections~III–V]{Mantoiu_Purice:magnetic_Weyl_calculus:2004} verbatim, the presence of the extra factor $\otimes \id_{\Hil}$ in the Weyl system does not introduce any extra complications.

\subsubsection{Exploiting the tensor product structure} 
\label{operator_valued_calculus:construction_on_Schwartz:tensor_product_structure}
At first glance, one might suppose that the majority of this section is unnecessary, especially that all it seems we are doing is re-tracing the steps laid out in the literature such as \cite{Mantoiu_Purice:magnetic_Weyl_calculus:2004,Iftimie_Mantoiu_Purice:magnetic_psido:2006,Lein:two_parameter_asymptotics:2008,Iftimie_Mantoiu_Purice:commutator_criteria:2008,Cornean_Helffer_Purice:simple_proof_Beals_criterion_magnetic_PsiDOs:2018}: it is tempting to first prove the relevant facts for operator-valued functions that are tensor products $f_j(r,k) = g_j(r,k) \, G_j$ where $g_j : \R^d \times \R^d \longrightarrow \C$ is an ordinary, scalar-valued Hörmander symbol and $G_j \in \mathcal{B}(\Hil,\Hil')$ some operator. Then we could bootstrap these arguments by approximating arbitrary operator-valued symbols 
\begin{align*}
	f \approx \sum_{j = 1}^n g_j \otimes G_j
\end{align*}
by finite linear combinations of tensor products, thereby extending all sorts of results almost immediately to operator-valued functions. Of particular interest are Hörmander symbol classes, where the $g_j \in S^m_{\rho,\delta}$ (\cf Definition~\ref{operator_valued_calculus:defn:Hoermander_symbols} below) are standard, scalar-valued Hörmander symbols and $f$ is an operator-valued symbol. However, this argument falsely assumes that finite linear combinations lie dense. And unless we are dealing with matrix-valued functions, that is, $\dim \Hil , \dim \Hil' < \infty$, this is unfortunately not correct. In fact, in general there exist at least two topologies with respect to which to complete the algebraic tensor product. 

Specifically, there may be more than one tensor product $\mathcal{X} = \mathcal{X}_1 \otimes \mathcal{X}_2$ of two locally convex topological vector spaces (\cf \cite[Chapter~43–48]{Treves:topological_vector_spaces:1967}). Only when $\mathcal{X}_1$ or $\mathcal{X}_2$ are nuclear (\cf \cite[Definition~50.1]{Treves:topological_vector_spaces:1967}) is the tensor product necessarily unique (\cf \cite[Theorem~50.1]{Treves:topological_vector_spaces:1967}). Infinite-dimensional Banach spaces such as $L^p(\T^d)$ are not nuclear \cite[Chapter~50, Corollary~2]{Treves:topological_vector_spaces:1967} and neither are Hörmander class symbol spaces (see \cite{Witt:weak_topology_symbol_spaces:1997}, specifically Theorem~3.2, Remark~3.3 and Proposition~4.4). 

Fortunately, though, we \emph{can} exploit the tensor product structure for the initial part of the construction: the topological vector space of Schwartz functions $\Schwartz(\R^d)$ \emph{is} nuclear as is its dual $\Schwartz^*(\R^d)$, the space of tempered distributions (\cf \cite[Chapter~51, p.~530, Corollary]{Treves:topological_vector_spaces:1967}); we will go into our reasons for labeling tempered distribution spaces with ${}^{\ast}$ rather than ${}^{\prime}$ in Section~\ref{operator_valued_calculus:extension_by_duality:relevant_spaces} below. Therefore, the completion of the algebraic tensor product is unique and we do have the identifications 
\begin{align*}
	\Schwartz^{(\ast)} \bigl ( T^* \R^d , \mathcal{B}(\Hil,\Hil') \bigr ) \cong \Schwartz^{(\ast)}(T^* \R^d) \otimes \mathcal{B}(\Hil,\Hil') 
	, 
\end{align*}
$\Schwartz^{(\ast)}(\R^d,\Hil) \cong \Schwartz^{(\ast)}(\R^d) \otimes \Hil$ and $\Schwartz^{(\ast)}(\R^d \times \R^d) \cong \Schwartz^{(\ast)}(\R^d) \otimes \Schwartz^{(\ast)}(\R^d)$ in the sense of the relevant Fréchet topologies (\cf \cite[Theorem~51.6 and its Corollary]{Treves:topological_vector_spaces:1967}). The brackets around ${}^{(\ast)}$ are meant to indicate that the above identifications hold for Schwartz spaces as well as tempered distribution spaces. 

\subsubsection{Magnetic Weyl quantization} 
\label{operator_valued_calculus:construction_on_Schwartz:magnetic_weyl_quantization}
Magnetic Weyl quantization can be viewed as a functional calculus associated to a set of non-commuting operators, \ie for suitable operator-valued functions $f : T^* \R^d \longrightarrow \mathcal{B}(\Hil,\Hil')$ we wish to make sense of “$f(Q,P^A)$”. We do this by Fourier transforming back and forth with the symplectic Fourier transform 
\begin{align}
	(\Fourier_{\sigma} f)(X) := \frac{1}{(2\pi)^d} \int_{T^* \R^d} \dd X' \, \e^{+ \ii \sigma(X,X')} \, f(X') 
	, 
	\label{operator_valued_calculus:eqn:symplectic_Fourier_transform}
\end{align}
which gives us the magnetic Weyl quantization of $f$, 
\begin{align}
	\Op^A(f) &= \frac{1}{(2\pi)^d} \int_{T^* \R^d} \dd X \, (\Fourier_{\sigma} f)(X) \, W^A(X)
	\label{operator_valued_calculus:eqn:definition_Op_A}
	\\
	&= \frac{1}{(2\pi)^d} \int_{T^* \R^d} \dd X \, \e^{- \ii \sigma(X,(Q,P^A))} \otimes (\Fourier_{\sigma} f)(X)
	. 
	\notag 
\end{align}
This integral is absolutely convergent for operator-valued Schwartz functions 
\begin{align*}
	f \in \Schwartz \bigl ( T^* \R^d , \mathcal{B}(\Hil,\Hil') \bigr ) 
	, 
\end{align*}
so at this stage it does not matter whether we define the Bochner integral (see \eg \cite[Section~5.5]{Teschl:real_analysis:2019}, \cite[Chapter V.5]{Yosida:Functional_analysis:1995}) in the norm topology, strong sense or weak sense. However, for consistency's sake, we shall interpret \eqref{operator_valued_calculus:eqn:definition_Op_A} in the weak sense, \ie look at “matrix elements” $\scpro{\varphi'}{\Op^A(f) \psi}_{L^2(\R^d,\Hil')}$ for test functions $\psi \in \Schwartz(\R^d,\Hil)$ and $\varphi' \in \Schwartz(\R^d,\Hil')$. 
\begin{definition}[Magnetic Weyl quantization]\label{operator_valued_calculus:defn:magnetic_Weyl_quantization}
	The magnetic pseudodifferential operator associated with $f \in \Schwartz \bigl ( T^* \R^d , \mathcal{B}(\Hil,\Hil') \bigr )$ is given by \eqref{operator_valued_calculus:eqn:definition_Op_A}, where the integral is defined in the weak sense as a map 
	\begin{align*}
		\Op^A(f) : \Schwartz(\R^d,\Hil) \longrightarrow \Schwartz(\R^d,\Hil') 
		. 
	\end{align*}
\end{definition}
We could have worked with the regular Fourier transform in the above definition, but the symplectic Fourier transform~\eqref{operator_valued_calculus:eqn:symplectic_Fourier_transform} has the added benefit of being its own inverse, $\Fourier_{\sigma}^2 = \id$. Hence, \emph{formally} equation~\eqref{operator_valued_calculus:eqn:definition_Op_A} reduces to 
\begin{align*}
	\Op^A(f) = (\Fourier_{\sigma}^2 f)(Q,P^A) 
	= f(Q,P^A) 
	. 
\end{align*}
The fact that Schwartz functions possess an integrable symplectic Fourier transform immediately yields the following 
\begin{lemma}\label{operator_valued_calculus:lem:Weyl_quantization_Schwartz_class}
	Consider any $f \in \Schwartz \bigl ( T^* \R^d , \mathcal{B}(\Hil,\Hil') \bigr )$. Then the following holds true: 
	\begin{enumerate}[(1)]
		\item The associated magnetic pseudodifferential operator 
		\begin{align*}
			\Op^A(f) : \Schwartz(\R^d,\Hil) \longrightarrow \Schwartz(\R^d,\Hil') 
		\end{align*}
		is a continuous linear operator between Fréchet spaces. 
		\item $\Op^A(f)$ extends to a continuous, \ie bounded operator 
		\begin{align*}
			\Op^A(f) : L^2(\R^d,\Hil) \longrightarrow L^2(\R^d,\Hil')
		\end{align*}
		between $L^2$ spaces that we denote with the same symbol. 
	\end{enumerate}
\end{lemma}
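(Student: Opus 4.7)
The plan is to handle (2) first via the unitarity of the Weyl system on $L^2(\R^d,\Hil)$, and then to derive (1) by obtaining quantitative polynomial bounds on the Schwartz seminorms of $W^A(Y) \Psi$ in $Y$. Throughout I exploit the tensor product structure $W^A(Y) = \tilde W^A(Y) \otimes \id_{\Hil}$ with $\tilde W^A(Y) = \e^{- \ii \sigma(Y,(Q,P^A))}$, which, being the exponential of a selfadjoint operator, is unitary on $L^2(\R^d)$.

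For part~(2), the tensor decomposition gives, for every fixed $Y \in T^* \R^d$,
$$\bnorm{(\Fourier_{\sigma} f)(Y) \, W^A(Y)}_{\mathcal{B}(L^2(\R^d,\Hil),L^2(\R^d,\Hil'))} = \bnorm{(\Fourier_{\sigma} f)(Y)}_{\mathcal{B}(\Hil,\Hil')} \, \bnorm{\tilde W^A(Y)}_{\mathcal{B}(L^2(\R^d))} = \bnorm{(\Fourier_{\sigma} f)(Y)}_{\mathcal{B}(\Hil,\Hil')}.$$
Since $\Fourier_{\sigma}$ preserves $\Schwartz \bigl ( T^*\R^d , \mathcal{B}(\Hil,\Hil') \bigr )$, the right-hand side is rapidly decreasing in $Y$ and in particular integrable. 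The weak integral in Definition~\ref{operator_valued_calculus:defn:magnetic_Weyl_quantization} therefore converges in the operator norm topology of $\mathcal{B}(L^2(\R^d,\Hil),L^2(\R^d,\Hil'))$, yielding the bound
$$\bnorm{\Op^A(f)}_{\mathcal{B}(L^2(\R^d,\Hil),L^2(\R^d,\Hil'))} \leq \frac{1}{(2\pi)^d} \int_{T^*\R^d} \dd Y \, \bnorm{(\Fourier_{\sigma} f)(Y)}_{\mathcal{B}(\Hil,\Hil')}.$$

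For part~(1), I would use the explicit action $(W^A(Y) \Psi)(x) = \e^{- \ii \frac{\lambda}{\eps} \int_{[\eps x , \eps x + \eps y]} A} \, \e^{- \ii \eps \eta \cdot (x + y/2)} \, \Psi(x + y)$. The first step is to verify that $W^A(Y)$ maps $\Schwartz(\R^d,\Hil)$ into itself continuously, with Schwartz seminorms that grow at most polynomially in $Y$: for each Schwartz seminorm $p_N$ on $\Schwartz(\R^d,\Hil)$ there exist $N',C > 0$ such that $p_N \bigl ( W^A(Y) \Psi \bigr ) \leq C \, \sexpval{Y}^{N} \, p_{N'}(\Psi)$ for all $Y \in T^* \R^d$ and $\Psi \in \Schwartz(\R^d,\Hil)$. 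Combining this with the rapid decay of $\Fourier_{\sigma} f$ in the $\mathcal{B}(\Hil,\Hil')$-norm turns the weak integral into a convergent Bochner integral in $\Schwartz(\R^d,\Hil')$, and the joint seminorm estimate yields continuity of $\Op^A(f)$ as a map between Fréchet spaces (and, as a byproduct, continuity of $f \mapsto \Op^A(f)\Psi$ in the Fréchet topology on $\Schwartz \bigl ( T^*\R^d , \mathcal{B}(\Hil,\Hil') \bigr )$).

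The main obstacle is tracking derivatives of the magnetic phase factor $\e^{- \ii (\lambda/\eps) \int_{[\eps x , \eps x + \eps y]} A}$. Each $x$-derivative brings down $\lambda$ times a line integral of either $A_j$ or of some partial derivative of $A_j$ along $[\eps x , \eps x + \eps y]$, while each $y$-derivative produces analogous expressions plus factors coming from $\eps \eta \cdot y / 2$. Under Assumption~\ref{operator_valued_calculus:assumption:polynomially_bounded_B}, all of these quantities are polynomially bounded in $(x,y)$, so they are absorbed by the rapid decay of $\Psi$ and its derivatives at the cost of raising the auxiliary seminorm order $N'$. Since the factor $\id_{\Hil}$ in the Weyl system is inert under all of these estimates, the bookkeeping is identical to the scalar case treated in Sections~III–V of \cite{Mantoiu_Purice:magnetic_Weyl_calculus:2004}, only carried out in the $\Hil$-norm in place of the modulus.
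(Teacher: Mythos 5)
Your proposal is correct and coincides with the argument the paper intends: the paper gives no separate proof of this lemma, stating only that it follows "immediately" from the integrability of $\Fourier_{\sigma} f$ together with the Măntoiu–Purice construction repeated verbatim, and your write-up (unitarity of $W^A(Y)$ on $L^2$ for part~(2), polynomially growing Schwartz seminorms of $W^A(Y)\Psi$ in $Y$ absorbed by the rapid decay of $\bnorm{(\Fourier_{\sigma} f)(Y)}_{\mathcal{B}(\Hil,\Hil')}$ for part~(1)) is exactly that argument spelled out. The only cosmetic point is that the polynomial growth exponent in your seminorm bound should be an $M(N)$ depending on the seminorm rather than literally $N$, but this does not affect the conclusion.
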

Magnetic Weyl quantization inherits the covariance~\eqref{operator_valued_calculus:eqn:covariance_Weyl_system} of the Weyl system: should we use the operators $(Q_U,P^A_U) = \bigl ( U \, Q \, U^{-1} \, , \, U \, P^A \, U^{-1} \bigr )$ that have been conjugated with a unitary $U$ instead, then the associated pseudodifferential operator 
\begin{align*}
	\Op_U^A(f) &= U \, \Op^A(f) \, U^{-1} 
\end{align*}
is unitarily equivalent to $\Op^A(f)$. For example, if $U = \e^{+ \ii \lambda \vartheta(Q)}$ is a gauge transformation, then the above reads 
\begin{align}
	\Op^{A + \eps \dd \vartheta}(f) &= \e^{+ \ii \lambda \vartheta(Q)} \, \Op^A(f) \, \e^{- \ii \lambda \vartheta(Q)} 
	. 
	\label{operator_valued_calculus:eqn:covariance_Op_A}
\end{align}
\begin{lemma}\label{operator_valued_calculus:lem:covariance_Weyl_quantization_Schwartz_class}
	Magnetic Weyl quantization $\Op^A$ on $\Schwartz \bigl ( T^* \R^d , \mathcal{B}(\Hil,\Hil') \bigr )$ is gauge-covariant in the sense of equation~\eqref{operator_valued_calculus:eqn:covariance_Op_A}.
\end{lemma}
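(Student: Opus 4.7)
The plan is to reduce the gauge covariance of $\Op^A$ directly to the already-established gauge covariance of the magnetic Weyl system in~\eqref{operator_valued_calculus:eqn:covariance_Weyl_system}. Setting $U := \e^{+\ii\lambda\vartheta(Q)}$, I first check that conjugation by $U$ implements precisely the change of gauge on the building block operators, that is,
\[
	U \, Q \, U^{-1} = Q,
	\qquad
	U \, P^A \, U^{-1} = P^{A + \eps \dd\vartheta}.
\]
The first identity is immediate since $U$ is a function of $Q$ alone. For the second, a short computation using the chain rule for the derivation $[-\ii\partial_{x_j}, \, \cdot \,]$ gives $U(-\ii\partial_{x_j})U^{-1} = -\ii\partial_{x_j} - \eps\lambda\, (\partial_j \vartheta)(\eps \hat{x})$, and combining this with $-\lambda A_j(\eps\hat{x})$ yields $P^{A + \eps \dd\vartheta}_j$.

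With $(Q_U, P^A_U) = (Q, P^{A + \eps \dd\vartheta})$ in hand, equation~\eqref{operator_valued_calculus:eqn:covariance_Weyl_system} applied to our particular choice of $U$ reduces to
\[
	W^{A + \eps \dd\vartheta}(X) = U \, W^A(X) \, U^{-1}
	\qquad \forall\, X \in T^* \R^d.
\]
Plugging this identity into the definition~\eqref{operator_valued_calculus:eqn:definition_Op_A} of the magnetic quantization applied to the gauge-transformed vector potential yields
\[
	\Op^{A + \eps \dd\vartheta}(f) = \frac{1}{(2\pi)^d} \int_{T^* \R^d} \dd X \; (\Fourier_\sigma f)(X) \, U \, W^A(X) \, U^{-1},
\]
and the claim~\eqref{operator_valued_calculus:eqn:covariance_Op_A} follows at once by pulling the fixed operators $U$ and $U^{-1}$ out of the integral.

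The only point that requires a word of justification is this last interchange. Since the integral is understood in the weak sense on test functions (Definition~\ref{operator_valued_calculus:defn:magnetic_Weyl_quantization}) and since $\vartheta \in \Cont^{\infty}_{\mathrm{pol}}(\R^d)$ implies that $U$ and $U^{-1}$ are unitary multiplication operators that restrict to continuous automorphisms of $\Schwartz(\R^d,\Hil)$ and $\Schwartz(\R^d,\Hil')$, the map $T \mapsto U T U^{-1}$ is continuous on the relevant space of operators. Hence it commutes with the absolutely convergent weak Bochner integral guaranteed by Lemma~\ref{operator_valued_calculus:lem:Weyl_quantization_Schwartz_class}. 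I expect no genuine obstacle here: the entire argument is essentially bookkeeping, with the only mild subtleties being (i) getting the sign and $\eps$-prefactor right in the computation $U P^A U^{-1} = P^{A + \eps \dd\vartheta}$, and (ii) confirming that $\vartheta$ is regular enough for $U$ to preserve the Schwartz class, both of which are built into the standing assumptions.
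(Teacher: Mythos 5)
Your proof is correct and follows essentially the same route as the paper, which likewise deduces gauge covariance from the covariance relation~\eqref{operator_valued_calculus:eqn:covariance_Weyl_system} of the Weyl system specialized to $U = \e^{+\ii\lambda\vartheta(Q)}$ and then pulls the conjugation out of the quantization integral. Your explicit verification that $U \, P^A \, U^{-1} = P^{A+\eps\dd\vartheta}$ and the remark justifying the interchange of conjugation with the weak integral merely spell out details the paper leaves implicit.
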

Note that the other common recipe to treat magnetic system, combining ordinary Weyl calculus for the operators $Q$ and non-magnetic momentum $P = - \ii \nabla$ with minimal substitution, is \emph{not} gauge-invariant; we refer to \eg \cite[Section~6.2]{Iftimie_Mantoiu_Purice:magnetic_psido:2006} and \cite[Chapter~2.2.1]{Lein:progress_magWQ:2010} for details. The advantage of our covariant prescription is that all essential properties of the pseudodifferential operator only depend on the magnetic field rather than on our choice of vector potential. 

\subsubsection{The integral kernel map} 
\label{operator_valued_calculus:construction_on_Schwartz:integral_kernel_map}
While not strictly necessary for the definition, the magnetic pseudodifferential operator 
\begin{align*}
	\Op^A(f) := \Int(K^A_f) 
\end{align*}
can be expressed in terms of its operator kernel 
\begin{align}
	K^A_f(x,y) := \e^{- \ii \frac{\lambda}{\eps} \int_{[\eps x , \eps y]} A} \frac{1}{(2\pi)^d} \int_{\R^d} \dd \eta \, \e^{- \ii (y - x) \cdot \eta} \, f \bigl ( \tfrac{\eps}{2} (x + y) , \eta \bigr ) 
	\in \mathcal{B}(\Hil,\Hil')
	\label{operator_valued_calculus:eqn:operator_kernel}
\end{align}
and the integral map 
\begin{align*}
	\bigl ( \Int(K) \Psi \bigr )(x) := \int_{\R^d} \dd y \, K(x,y) \, \Psi(y) 
	\in \Hil' 
	. 
\end{align*}
This will be insightful conceptually and useful practically when extending $\Op^A$ to tempered distributions. 
\begin{lemma}\label{operator_valued_calculus:lem:kernel_map_Schwartz_class}
	The kernel map $f \mapsto K_f^A$ defined through \eqref{operator_valued_calculus:eqn:operator_kernel} is a topological vector space isomorphism 
	\begin{align*}
		\Schwartz \bigl ( T^* \R^d , \mathcal{B}(\Hil,\Hil') \bigr ) \longrightarrow \Schwartz \bigl ( \R^d \times \R^d , \mathcal{B}(\Hil,\Hil') \bigr )
	\end{align*}
	between Schwartz spaces. 
\end{lemma}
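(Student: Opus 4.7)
The plan is to factor the kernel map $f \mapsto K_f^A$ into three elementary operations, each manifestly a TVS isomorphism on the appropriate Schwartz space, and then to lift the result from scalar-valued to operator-valued symbols via the nuclear tensor product identification recorded in Section~\ref{operator_valued_calculus:construction_on_Schwartz:tensor_product_structure}. Concretely, I would write $K_f^A = M \circ C \circ F(f)$, where $F$ is the partial inverse Fourier transform $f(q,\eta) \mapsto \tilde f(q,z) := \frac{1}{(2\pi)^d} \int_{\R^d} \dd\eta \, \e^{-\ii z\cdot\eta}\, f(q,\eta)$ in the momentum variable, $C$ is pullback under the linear diffeomorphism $(x,y) \mapsto (q,z) = \bigl(\tfrac{\eps}{2}(x+y),\, y-x\bigr)$ of $\R^d \times \R^d$, and $M$ is pointwise multiplication by the unimodular phase $\Lambda^A(x,y) := \e^{-\ii\frac{\lambda}{\eps}\int_{[\eps x,\eps y]} A}$.

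Steps $F$ and $C$ are standard: the Fourier transform is a TVS automorphism of $\Schwartz(\R^d)$, and by the Fubini-type identification $\Schwartz(\R^d_q \times \R^d_\eta) \cong \Schwartz(\R^d_q) \otimes \Schwartz(\R^d_\eta)$ the partial transform $F$ is a continuous automorphism of $\Schwartz(\R^d_q \times \R^d_\eta)$; linear changes of variables with nonzero Jacobian are TVS automorphisms of $\Schwartz(\R^d \times \R^d)$ with continuous inverse.

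For step $M$, I need to verify that both $\Lambda^A$ and its inverse $\overline{\Lambda^A}$ lie in $\Cont^{\infty}_{\mathrm{pol}}(\R^{2d})$, the algebra of smooth functions all of whose derivatives grow at most polynomially; multiplication by any such function is a continuous linear self-map of $\Schwartz$, and unimodularity together with closure under complex conjugation then gives bijectivity. Parametrizing the line segment yields $\int_{[\eps x,\eps y]} A = \eps (y-x) \cdot \int_0^1 A\bigl(\eps x + s\eps(y-x)\bigr)\, \dd s$, and Assumption~\ref{operator_valued_calculus:assumption:polynomially_bounded_B} together with differentiation under the integral sign places the resulting phase $\phi^A(x,y)$ in $\Cont^{\infty}_{\mathrm{pol}}(\R^{2d})$; a Faà di Bruno expansion of $\Lambda^A = \e^{-\ii\phi^A}$ then bounds each of its partial derivatives by $1$ times a polynomial in the derivatives of $\phi^A$, so that $\Lambda^A \in \Cont^{\infty}_{\mathrm{pol}}(\R^{2d})$ as required.

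Finally, to reach the operator-valued statement, I would invoke the identifications $\Schwartz\bigl(T^*\R^d, \mathcal{B}(\Hil,\Hil')\bigr) \cong \Schwartz(T^*\R^d) \otimes \mathcal{B}(\Hil,\Hil')$ and $\Schwartz(\R^d \times \R^d, \mathcal{B}(\Hil,\Hil')) \cong \Schwartz(\R^d \times \R^d) \otimes \mathcal{B}(\Hil,\Hil')$ from Section~\ref{operator_valued_calculus:construction_on_Schwartz:tensor_product_structure}, which hold because $\Schwartz$ is nuclear. Each of $F$, $C$ and $M$ acts only on the phase-space variable and is therefore the scalar map tensored with $\id_{\mathcal{B}(\Hil,\Hil')}$, so continuity and bijectivity transfer without modification. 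The only nontrivial step I expect to require care is the verification that $\Lambda^A$ and all its derivatives remain polynomially bounded despite $A$ itself being only polynomially (rather than uniformly) bounded, since differentiating the phase pulls down extra polynomial factors in $(x,y)$ that must be carefully tracked through the Faà di Bruno expansion; everything else is bookkeeping.
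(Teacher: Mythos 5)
Your proof is correct and follows essentially the same route as the paper's: the paper likewise factors $f \mapsto K^A_f$ into a partial Fourier transform, a linear invertible change of variables, and multiplication by the $\Cont^{\infty}_{\mathrm{pol}}$ magnetic phase, each a continuous map with continuous inverse. Your only variations — handling the operator-valuedness via the nuclear tensor-product identification $\Schwartz \otimes \mathcal{B}(\Hil,\Hil')$ rather than via Bochner integrals, and spelling out the Faà di Bruno verification that $\Lambda^A \in \Cont^{\infty}_{\mathrm{pol}}$ — are cosmetic refinements of the same argument.
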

\begin{proof}
	We can repeat the arguments in the beginning of \cite[Section~IV.B]{Mantoiu_Purice:magnetic_Weyl_calculus:2004} verbatim, but working with Bochner integrals this time to take the operator-valuedness of $f$ into account. The kernel map is a composition of multiplication by the $\Cont^{\infty}_{\mathrm{pol}}$ phase function $\e^{- \ii \frac{\lambda}{\eps} \int_{[\eps x , \eps y]} A}$, a linear, invertible change of variables and a partial Fourier transform. All of these are continuous maps between Schwartz spaces, and hence, as a composition of continuous maps between Schwartz functions, the kernel map $f \mapsto K_f^A$ is continuous. 
	
	Further, all of these are evidently invertible with continuous inverses, which proves that $f \mapsto K_f^A$ is indeed a topological vector space isomorphism. 
\end{proof}
%

\subsubsection{The adjoint} 
\label{operator_valued_calculus:construction_on_Schwartz:adjoint}
When $\Hil \subseteq \Hil'$ is a dense subspace of the target space, we may interpret $f(X) \in \mathcal{B}(\Hil,\Hil')$ as taking values in the unbounded operators on $\Hil'$ with $X$-independent domain. 

For such operators we can define the adjoint $\Op^A(f)^*$ via the pointwise adjoint 
\begin{align*}
	\bscpro{\Op^A(f)^* \Phi'}{\Psi}_{L^2(\R^d,\Hil')} := \bscpro{\Phi'}{\Op^A(f) \Psi}_{L^2(\R^d,\Hil')}
	, 
\end{align*}
where we have used the same scalar product on both sides of the equation since $\Psi \in \Schwartz(\R^d,\Hil)$ can also be regarded as an element of $\Schwartz(\R^d,\Hil') \subset L^2(\R^d,\Hil')$. A quick computation confirms that the adjoint of the quantization is the quantization of the adjoint, 
\begin{align*}
	\Op^A(f)^* = \Op^A(f^*) 
	, 
	&&
	f \in \Schwartz \bigl ( T^* \R^d , \mathcal{B}(\Hil,\Hil') \bigr ) 
	. 
\end{align*}
%

\subsubsection{The magnetic Wigner transform} 
\label{operator_valued_calculus:construction_on_Schwartz:magnetic_wigner_transform}
Conceptually speaking, the Wigner transform 
\begin{align}
	(\Wigner^A K)(X) = \frac{1}{(2\pi)^{\nicefrac{d}{2}}}\int_{\R^d} \dd y \, \e^{- \ii y \cdot \xi} \, \e^{- \ii \frac{\lambda}{\eps} \int_{[x - \frac{\eps}{2} y , x + \frac{\eps}{2} y]} A} \, K \bigl ( \tfrac{x}{\eps} + \tfrac{y}{2} , \tfrac{x}{\eps} - \tfrac{y}{2} \bigr )
	\in \mathcal{B}(\Hil,\Hil')
	\label{operator_valued_calculus:eqn:Wigner_transform}
\end{align}
fulfills two roles: first of all, in applications it is used to relate quantum states with quasi-classical states. And secondly, it implements the inverse to $\Op^A$, that is, if $F = \Int(K_F)$ is an operator with Schwartz class kernel $K_F \in \Schwartz \bigl ( \R^d \times \R^d , \mathcal{B}(\Hil,\Hil') \bigr )$, then we have 
\begin{align*}
	\bigl ( \Op^A \bigr )^{-1}(F) = \Wigner^A K_F 
	. 
\end{align*}
The quickest way to see this is to compare the kernel map~\eqref{operator_valued_calculus:eqn:operator_kernel} with the Wigner transform~\eqref{operator_valued_calculus:eqn:Wigner_transform}: all the operations are inverses to one another and thus, $\Wigner^A$ is the inverse to $f \mapsto K^A_f$. 

Mathematically, the Wigner transform is implicitly defined through the relation 
\begin{align*}
	\int_{T^* \R^d} \dd X \, \mathrm{Tr}_{\Hil'} \Bigl ( f(X) \, \Wigner^A \bigl ( K_{\sopro{\Phi'}{\Psi}} \bigr )(X) \Bigr ) := \scpro{\Phi'}{\Op^A(f) \Psi}_{L^2(\R^d,\Hil')}
	, 
\end{align*}
where $f \in \Schwartz \bigl ( T^* \R^d , \mathcal{B}(\Hil,\Hil') \bigr )$, $\Psi \in \Schwartz(\R^d,\Hil)$ and $\Phi' \in \Schwartz(\R^d,\Hil')$ are all Schwartz class. 
\begin{lemma}\label{operator_valued_calculus:lem:Wigner_transform_Schwartz_class}
	The Wigner transform~\eqref{operator_valued_calculus:eqn:Wigner_transform} defines a topological vector space isomorphism 
	\begin{align*}
		\Wigner^A : \Schwartz \bigl ( \R^d \times \R^d , \mathcal{B}(\Hil,\Hil') \bigr ) \longrightarrow \Schwartz \bigl ( T^* \R^d , \mathcal{B}(\Hil,\Hil') \bigr )
		. 
	\end{align*}
	Its inverse is the kernel map~\eqref{operator_valued_calculus:eqn:operator_kernel}. 
\end{lemma}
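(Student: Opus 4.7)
The strategy mirrors the proof of Lemma~\ref{operator_valued_calculus:lem:kernel_map_Schwartz_class}. Comparing the explicit formula~\eqref{operator_valued_calculus:eqn:Wigner_transform} with the kernel formula~\eqref{operator_valued_calculus:eqn:operator_kernel}, I would decompose $\Wigner^A$ into three elementary operations on operator-valued Schwartz functions: (i)~the linear change of variables $(x_1,x_2) \mapsto \bigl ( \tfrac{x}{\eps} + \tfrac{y}{2} , \tfrac{x}{\eps} - \tfrac{y}{2} \bigr )$; (ii)~pointwise multiplication by the smooth, polynomially bounded phase $\e^{- \ii \frac{\lambda}{\eps} \int_{[x - \frac{\eps}{2} y , x + \frac{\eps}{2} y]} A}$, which lies in $\Cont^{\infty}_{\mathrm{pol}}$ thanks to Assumption~\ref{operator_valued_calculus:assumption:polynomially_bounded_B}; and (iii)~a partial Fourier transform in the variable $y$, realized as a Bochner integral against $\e^{- \ii y \cdot \xi}$. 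Each factor is a continuous linear map between the relevant operator-valued Schwartz spaces and admits a continuous inverse of the same type — the inverse change of variables, multiplication by the reciprocal phase (again $\Cont^{\infty}_{\mathrm{pol}}$), and the inverse partial Fourier transform. Hence $\Wigner^A$ is a topological vector space isomorphism.

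For the identification of the inverse, I would then match the elementary decomposition of $\Wigner^A$ against the one employed for the kernel map in Lemma~\ref{operator_valued_calculus:lem:kernel_map_Schwartz_class}. After the change of variables carried out in $\Wigner^A$, the line segment $[x - \frac{\eps}{2} y , x + \frac{\eps}{2} y]$ coincides with the segment $[\eps x_1 , \eps x_2]$ appearing in the kernel map, so the two magnetic phase factors are exact reciprocals; the two linear changes of variables are mutually inverse with Jacobian determinant $\eps^{-d}$, absorbed into the $(2\pi)$-normalizations; and the partial Fourier transforms compose to the identity by $\Fourier^{-1} \Fourier = \id$. A short direct computation therefore yields $\Wigner^A \circ (f \mapsto K^A_f) = \id$ on $\Schwartz \bigl ( T^* \R^d , \mathcal{B}(\Hil,\Hil') \bigr )$ and the analogous identity in the other direction on $\Schwartz \bigl ( \R^d \times \R^d , \mathcal{B}(\Hil,\Hil') \bigr )$.

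I do not foresee a genuine obstacle, since the lemma is essentially a restatement of Lemma~\ref{operator_valued_calculus:lem:kernel_map_Schwartz_class} via inversion. The only point requiring mild care is that each elementary factor acts continuously on \emph{Banach-space-valued} Schwartz functions. However, the Schwartz seminorms of $\Schwartz \bigl ( \R^n , \mathcal{B}(\Hil,\Hil') \bigr )$ are of the form $\sup_{X} \sexpval{X}^N \, \snorm{\partial^{\alpha} f(X)}_{\mathcal{B}(\Hil,\Hil')}$, involving derivatives only in the Euclidean variables together with the operator norm. The scalar-valued estimates therefore transport verbatim once ordinary integrals are replaced by Bochner integrals, exactly as in the proof of Lemma~\ref{operator_valued_calculus:lem:kernel_map_Schwartz_class}.
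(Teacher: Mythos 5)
Your proposal is correct and follows essentially the same route as the paper, which simply verifies by explicit computation that $\Wigner^A$ inverts the kernel map and then inherits the isomorphism property from Lemma~\ref{operator_valued_calculus:lem:kernel_map_Schwartz_class}. Your additional stand-alone decomposition of $\Wigner^A$ into a change of variables, a $\Cont^{\infty}_{\mathrm{pol}}$ phase multiplication and a partial Fourier transform is redundant given that inheritance, but it is exactly the argument used for the kernel map itself and introduces nothing new.
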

\begin{proof}
	An explicit computation verifies that the Wigner transform is the inverse of the kernel map. And since the latter is a topological vector space isomorphism by Lemma~\ref{operator_valued_calculus:lem:kernel_map_Schwartz_class}, so is the Wigner transform $\Wigner^A$. 
\end{proof}
%

\subsubsection{The magnetic Weyl product} 
\label{operator_valued_calculus:construction_on_Schwartz:magnetic_weyl_product}
The magnetic Weyl product $\Weyl$ pulls back the operator product to the level of functions or distributions on phase space, 
\begin{align*}
	\Op^A(f \Weyl g) := \Op^A(f) \, \Op^A(g) 
	. 
\end{align*}
Of course, for the above equation to make sense, the ranges of $f$ and $g$ need to be composable, that is, $f$ has to take values in $\mathcal{B}(\Hil,\Hil')$ and $g$ must map onto $\mathcal{B}(\Hil',\Hil'')$. 

As the notation suggests, $\Weyl$ only depends on the magnetic field, something that can be inferred from gauge-covariance~\eqref{operator_valued_calculus:eqn:covariance_Op_A} and confirmed by deriving an explicit integral formula for it, 
\begin{align}
	(f \Weyl g)(X) = \frac{1}{(2\pi)^{2d}} \int_{T^* \R^d} \dd Y \int_{T^* \R^d} \dd Z \, &\e^{+ \ii \sigma(X,Y+Z)} \, \e^{+ \ii\frac{\eps}{2} \sigma(Y,Z)} \, \e^{- \ii \lambda \gamma^B(x,y,z)} 
	\, \cdot \notag \\
	&\cdot 
	(\Fourier_{\sigma} f)(Y) \, (\Fourier_{\sigma} g)(Z) 
	, 
	\label{operator_valued_calculus:eqn:Weyl_product}
\end{align}
where we have introduced the abbreviation 
\begin{align*}
	\gamma^B(x,y,z) := \tfrac{1}{\eps} \, \Gamma^B \bigl ( x - \tfrac{\eps}{2}(y+z) \, , x + \tfrac{\eps}{2}(y-z) \, , x + \tfrac{\eps}{2}(y+z) \bigr ) 
	. 
\end{align*}
There are other equivalent integral expressions for the product, \eg \cite[equation~(36)]{Mantoiu_Purice:magnetic_Weyl_calculus:2004}. 

Retracing the arguments for the scalar-valued case, replacing the absolute value with operator norms where appropriate, yields the following 
\begin{proposition}
	The magnetic Weyl product
	\begin{align*}
		\Weyl : \Schwartz \bigl ( T^* \R^d , \mathcal{B}(\Hil,\Hil') \bigr ) \times \Schwartz \bigl ( T^* \R^d , \mathcal{B}(\Hil',\Hil'') \bigr ) \longrightarrow \Schwartz \bigl ( T^* \R^d , \mathcal{B}(\Hil,\Hil'') \bigr )
	\end{align*}
	defines a bilinear, continuous map between Schwartz spaces, and its explicit expression as an absolutely convergent integral is given by equation~\eqref{operator_valued_calculus:eqn:Weyl_product}. 
\end{proposition}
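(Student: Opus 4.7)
The strategy is to retrace the classical scalar-valued argument of Măntoiu--Purice but with Bochner integrals taking values in $\mathcal{B}(\Hil,\Hil'')$, replacing absolute values by operator norms throughout. First I would \emph{derive} the integral formula~\eqref{operator_valued_calculus:eqn:Weyl_product}: starting from the definition $\Op^A(f)\,\Op^A(g)$, I plug in the Fourier inversion formula \eqref{operator_valued_calculus:eqn:definition_Op_A} for each factor and use the composition law
\begin{align*}
W^A(Y)\,W^A(Z)=\e^{+\ii\frac{\eps}{2}\sigma(Y,Z)}\,\e^{+\ii\frac{\lambda}{\eps}\Gamma^B(Q,Q+\eps y,Q+\eps y+\eps z)}\,W^A(Y+Z).
\end{align*}
Evaluating both sides against test functions $\psi\in\Schwartz(\R^d,\Hil)$, $\varphi''\in\Schwartz(\R^d,\Hil'')$, one reads off the kernel of the composition, inverts the Weyl quantization via $\Wigner^A$ (Lemma~\ref{operator_valued_calculus:lem:Wigner_transform_Schwartz_class}), and a change of variables yields \eqref{operator_valued_calculus:eqn:Weyl_product}. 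Because $f$ and $g$ are Schwartz and the two phase factors have modulus $1$, the integral is absolutely convergent in the Bochner sense in $\mathcal{B}(\Hil,\Hil'')$, with the obvious bound $\snorm{(f\Weyl g)(X)}_{\mathcal{B}(\Hil,\Hil'')}\le(2\pi)^{-2d}\snorm{\Fourier_\sigma f}_{L^1(\mathcal{B}(\Hil,\Hil'))}\snorm{\Fourier_\sigma g}_{L^1(\mathcal{B}(\Hil',\Hil''))}$.

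Next I would verify that $f\Weyl g$ is in fact Schwartz and that the map $(f,g)\mapsto f\Weyl g$ is continuous in the Fréchet topologies. Bilinearity is immediate from \eqref{operator_valued_calculus:eqn:Weyl_product}. For the Schwartz seminorm estimates, I would differentiate under the integral sign in $X=(x,\xi)$: derivatives in $\xi$ pull down a factor $\ii(y+\eta+z+\zeta)$ from $\e^{+\ii\sigma(X,Y+Z)}$, while derivatives in $x$ produce a factor from the same phase plus a contribution $-\ii\lambda\partial_x\gamma^B(x,y,z)$ from the magnetic phase. To obtain polynomial decay of $f\Weyl g$ in $X$, I would integrate by parts in $Y$ and $Z$, trading each power of $X$ for a derivative on $\Fourier_\sigma f$ or $\Fourier_\sigma g$ via the identities $x^\alpha\,\e^{+\ii\sigma(X,Y+Z)}=(-\ii\partial_\eta-\ii\partial_\zeta)^{|\alpha|}\e^{+\ii\sigma(X,Y+Z)}$ and the analogous identity for $\xi^\beta$. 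The resulting expression is an absolutely convergent integral involving seminorms of $\Fourier_\sigma f$ and $\Fourier_\sigma g$, hence of $f$ and $g$; this establishes both that $f\Weyl g\in\Schwartz(T^*\R^d,\mathcal{B}(\Hil,\Hil''))$ and the continuity of $\Weyl$ between the Fréchet topologies.

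The main technical obstacle is controlling the magnetic phase factor $\e^{-\ii\lambda\gamma^B(x,y,z)}$ when integrating by parts. Derivatives with respect to $x$, $y$ and $z$ produce polynomial factors coming from $\partial^\alpha_x\partial^\beta_y\partial^\gamma_z\gamma^B$, which by Assumption~\ref{operator_valued_calculus:assumption:polynomially_bounded_B} are of class $\Cont^{\infty}_{\mathrm{pol}}(\R^d\times\R^d\times\R^d)$. These polynomial factors in $(x,y,z)$ must be absorbed by the Schwartz decay of $\Fourier_\sigma f$, $\Fourier_\sigma g$ and (for factors in $x$) by further integration by parts — essentially, one rewrites any polynomial in $x$ as a derivative in $\eta+\zeta$ of the symplectic phase, transferring the growth onto the symplectic Fourier transforms. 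This bookkeeping is exactly the same as in \cite[Section~V]{Mantoiu_Purice:magnetic_Weyl_calculus:2004}; the only modification is that at each estimation step one reads the integrand as a Bochner-integrable $\mathcal{B}(\Hil,\Hil'')$-valued function and bounds it in operator norm, using the submultiplicativity $\snorm{AB}_{\mathcal{B}(\Hil,\Hil'')}\le\snorm{A}_{\mathcal{B}(\Hil',\Hil'')}\snorm{B}_{\mathcal{B}(\Hil,\Hil')}$ to separate the contributions of $\Fourier_\sigma f$ and $\Fourier_\sigma g$.
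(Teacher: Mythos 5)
Your proposal is correct and follows essentially the same route as the paper, which simply defers to \cite[Proposition~13]{Mantoiu_Purice:magnetic_Weyl_calculus:2004} (and \cite[Theorem~2.2.17]{Lein:progress_magWQ:2010}) with the remark that one replaces absolute values by operator norms and that the estimates on the magnetic phase carry over verbatim. Your write-up just makes explicit the derivation of \eqref{operator_valued_calculus:eqn:Weyl_product} from the Weyl-system composition law and the integration-by-parts bookkeeping, with submultiplicativity of the operator norm doing the work that the triangle inequality for $\abs{\,\cdot\,}$ does in the scalar case.
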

\begin{proof}
	The proof is a straightforward modification of \cite[Proposition~13]{Mantoiu_Purice:magnetic_Weyl_calculus:2004} or \cite[Theorem~2.2.17]{Lein:progress_magWQ:2010}: we just need to replace absolute values by operator products. Estimates on the magnetic phase $\e^{+ \ii \lambda \gamma^B(x,y,z)}$ carry over verbatim. 
\end{proof}
While we will postpone a more in-depth explanation of the asymptotic expansion of 
\begin{align*}
	f \Weyl g = f \, g - \eps \, \tfrac{\ii}{2} \{ f , g \}_{\lambda B} + \order(\eps^2)
	, 
\end{align*}
let us take the opportunity to introduce the magnetic Poisson bracket 
\begin{align*}
	 \{ f , g \}_{\lambda B} := \nabla_{\xi} f \cdot \nabla_x g - \nabla_x f \cdot \nabla_{\xi} g - \lambda \, \sum_{j , k = 1}^d B_{jk} \, \partial_{\xi_j} f \, \partial_{\xi_k} g 
\end{align*}
and mention that even when $\Hil = \Hil' = \Hil''$ the lowest order term $f(X) \, g(X) \neq g(X) \, f(X)$ is no longer commutative. 

\subsection{Extension to tempered distributions by duality} 
\label{operator_valued_calculus:extension_by_duality}
In spirit the extension by duality proceeds just as in the scalar-valued case. However, there are some additional technical complications we need to be cognizant of when adapting the definitions via the duality bracket. The first step is to enumerate and characterize the spaces of Schwartz functions and tempered distributions that will be involved.

\subsubsection{Relevant spaces of Schwartz functions and distrributions} 
\label{operator_valued_calculus:extension_by_duality:relevant_spaces}
For this subsubsection and this subsubsection alone, for any Fréchet space $\mathcal{X}$ let $\mathcal{X}'$ denote its strong dual; in case $\mathcal{X}$ is a Banach space, this is nothing but the Banach space dual. To avoid confusion, within this subsubsection we will number Hilbert spacesi $\Hil_1,\ \Hil_2,\ \ldots$ rather than add primes. 

We need to introduce distributions to complete our rigged Hilbert space or Gel'fand triple \cite{berezanskii1968expansions}
\begin{align}
	\Schwartz(\R^d,\Hil) \subset L^2(\R^d,\Hil) \subset \Schwartz^*(\R^d,\Hil) 
	. 
	\label{operator_valued_calculus:eqn:rigged_Hilbert_space_S_Sstar}
\end{align}
Evidently, we can also consider magnetic Sobolev spaces $H^m_A(\R^d,\Hil)$ as rigged Hilbert spaces in the above sense. Here, we prefer to use ${}^*$ rather than ${}'$ to indicate that instead of the standard bilinear duality bracket, we use the “scalar product”-like, sesquilinear duality bracket 
\begin{align*}
	\scpro{\varphi}{\psi}_{\Schwartz(\R^d,\Hil)} := \int_{\R^d} \dd x \, \bscpro{\varphi(x)}{\psi(x)}_{\Hil} 
	.  
\end{align*}
Clearly, any construction such as extending derivatives and other linear continuous operators on $\Schwartz(\R^d,\Hil)$ via the ordinary, bilinear duality bracket can be directly translated to the sesquilinear one. 

The advantage of using $\scpro{\, \cdot \,}{\, \cdot \,}_{\Schwartz(\R^d,\Hil)}$ is that this leads to the canonical identification of 
\begin{align*}
	\Schwartz(\R^d,\Hil)^* = \Schwartz^*(\R^d,\Hil)
	. 
\end{align*}
Had we used the standard duality bracket instead, then  we would have had to distinguish between $\Hil$ and continuous linear functionals on $\Hil$. Specifically, according to \cite[p.~534]{Treves:topological_vector_spaces:1967} the strong dual can be thought of in two ways: 
\begin{align*}
	\Schwartz(\R^d,\Hil)' = \Schwartz'(\R^d,\Hil')
\end{align*}
After those definitions, we can consider linear continuous operators acting between these spaces. Given two Fréchet spaces $\mathcal{X}$ and $\mathcal{Y}$, we denote the space of linear continuous operators between them with $\mathcal{L}(\mathcal{X},\mathcal{Y})$. Then the rigged Hilbert space~\eqref{operator_valued_calculus:eqn:rigged_Hilbert_space_S_Sstar} immediately leads to two continuous linear inclusions between spaces of continuous linear operators, 
\begin{align*}
	\mathcal{L} \bigl ( \Schwartz^*(\R^d,\Hil_1) \, , \, \Schwartz(\R^d,\Hil_2) \bigr ) \subset \mathcal{B} \bigl ( L^2(\R^d,\Hil_1) \, , \, L^2(\R^d,\Hil_2) \bigr ) 
	\subset \mathcal{L} \bigl ( \Schwartz(\R^d,\Hil_1) \, , \, \Schwartz^*(\R^d,\Hil_2) \bigr )
	. 
\end{align*}
These inclusions are the basis for the (extension of) the 
\begin{lemma}[Schwartz Kernel Theorem]\label{operator_valued_calculus:lem:Schwartz_kernel_theorem}
	Any element $F$ of $\mathcal{B} \bigl ( L^2(\R^d,\Hil_1) \, , \, L^2(\R^d,\Hil_2) \bigr )$ has a distributional operator kernel $K_F \in \Schwartz^* \bigl ( \R^d \times \R^d \, , \, \mathcal{B}(\Hil_1,\Hil_2) \bigr )$. 
\end{lemma}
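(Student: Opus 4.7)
The plan is to reduce to the classical scalar-valued Schwartz Kernel Theorem by leveraging the nuclearity of $\Schwartz(\R^d)$. Via the chain of continuous inclusions $\Schwartz(\R^d,\Hil_i) \subset L^2(\R^d,\Hil_i) \subset \Schwartz^*(\R^d,\Hil_i)$ recorded just above the statement, the operator $F$ restricts and extends to a continuous linear map $F : \Schwartz(\R^d,\Hil_1) \to \Schwartz^*(\R^d,\Hil_2)$. Equivalently, the sesquilinear form
\begin{align*}
B_F(\psi,\varphi) := \bscpro{\psi}{F \varphi}_{L^2(\R^d,\Hil_2)}
,
\qquad
\varphi \in \Schwartz(\R^d,\Hil_1), \ \psi \in \Schwartz(\R^d,\Hil_2)
\end{align*}
is jointly continuous, with the concrete estimate $\babs{B_F(\psi,\varphi)} \leq \snorm{F} \, \snorm{\psi}_{L^2} \, \snorm{\varphi}_{L^2}$; the $L^2$-norm is in turn controlled by any of the generating Schwartz seminorms, with multiplicative dependence on $\snorm{u}_{\Hil_1} \snorm{v}_{\Hil_2}$ on elementary tensors $\varphi = \varphi_1 \otimes u$, $\psi = \psi_1 \otimes v$.

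Next, I fix $u \in \Hil_1$ and $v \in \Hil_2$ and consider the restricted sesquilinear form $(\psi_1,\varphi_1) \mapsto B_F(\psi_1 \otimes v, \varphi_1 \otimes u)$ on $\Schwartz(\R^d) \times \Schwartz(\R^d)$. The classical Schwartz Kernel Theorem — whose hypotheses are met precisely because $\Schwartz(\R^d)$ is nuclear — yields a unique distribution $K_{v,u} \in \Schwartz^*(\R^d \times \R^d)$ representing this form. The assignment $(v,u) \mapsto K_{v,u}$ is sesquilinear by construction, and the estimate of the previous paragraph is designed to show that each seminorm of $K_{v,u}$ is bounded by a constant times $\snorm{v}_{\Hil_2} \snorm{u}_{\Hil_1}$. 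Consequently, for every $\phi \in \Schwartz(\R^d \times \R^d)$ the sesquilinear form $(v,u) \mapsto K_{v,u}(\phi)$ on $\Hil_2 \times \Hil_1$ is bounded, and by Riesz's theorem equals $\sscpro{v}{K_F(\phi) u}_{\Hil_2}$ for a unique operator $K_F(\phi) \in \mathcal{B}(\Hil_1,\Hil_2)$. Continuity of $\phi \mapsto K_F(\phi)$ in the operator norm — inherited from the same seminorm bounds — then packages $K_F$ as the desired element of $\Schwartz^*(\R^d \times \R^d, \mathcal{B}(\Hil_1,\Hil_2))$.

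The main obstacle is precisely the uniformity in $(u,v)$: the classical kernel theorem is essentially non-constructive and does not, on its own, supply quantitative seminorm bounds on $K_{v,u}$ in terms of the continuity data of the form. Two complementary remedies are available. One can run a Banach--Steinhaus argument on the pointwise-bounded family $\{K_{v,u} : \snorm{v}_{\Hil_2}, \snorm{u}_{\Hil_1} \leq 1\}$, using that the Fréchet space $\Schwartz(\R^d \times \R^d)$ is barrelled, to upgrade pointwise bounds in $\phi$ to uniform seminorm bounds. Alternatively — and more in the spirit of Section~\ref{operator_valued_calculus:construction_on_Schwartz:tensor_product_structure} — one can invoke Grothendieck's abstract kernel theorem for nuclear spaces to identify $\mathcal{L}(\Schwartz(\R^d,\Hil_1), \Schwartz^*(\R^d,\Hil_2))$ with $\Schwartz^*(\R^d \times \R^d) \otimes \mathcal{B}(\Hil_1,\Hil_2) \cong \Schwartz^*(\R^d \times \R^d, \mathcal{B}(\Hil_1,\Hil_2))$ in a single step, which yields $K_F$ directly and bypasses the scalar detour altogether.
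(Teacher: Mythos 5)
Your second remedy is, in essence, the paper's actual argument: the lemma is stated without a separate proof and rests on the identification
\begin{align*}
	\Schwartz^* \bigl ( \R^d \times \R^d , \mathcal{B}(\Hil_1,\Hil_2) \bigr ) \cong \mathcal{L} \bigl ( \Schwartz(\R^d,\Hil_1) \, , \, \Schwartz^*(\R^d,\Hil_2) \bigr )
	,
\end{align*}
which is established via nuclearity of $\Schwartz$ and the $\pi$-tensor product in the proof of Proposition~\ref{operator_valued_calculus:prop:extension_OpA_tempered_distributions}~(2) (Appendix~\ref{appendix:identification_operator_valued_Schwartz_spaces_distributions}), combined with the inclusion $\mathcal{B}(L^2,L^2) \subset \mathcal{L}(\Schwartz,\Schwartz^*)$ displayed just above the lemma. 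So your Grothendieck route is not an "alternative" but the intended proof; what it buys is that the kernel lands directly in $\Schwartz^*(\R^d \times \R^d , \mathcal{B}(\Hil_1,\Hil_2))$ with the correct topology and no uniformity issues. Your primary route — scalar reduction plus Riesz representation — is a genuinely different and more hands-on argument, and it can be completed, but the Banach--Steinhaus remedy is circular as stated: pointwise boundedness of $\{ K_{v,u} \}$ on \emph{all} of $\Schwartz(\R^d \times \R^d)$ is precisely what is not yet known, since the scalar kernel theorem only controls $K_{v,u}$ on the (dense but not barrelled-in-itself-relevant) algebraic tensor product. To make that remedy work you must first show that, for fixed $v$, the map $u \mapsto K_{v,u}$ is continuous into $\Schwartz^*(\R^d\times\R^d)$ — e.g.\ by a closed graph theorem (the strong dual of a Fréchet space is webbed) or by using a quantitative version of the scalar kernel theorem that bounds a seminorm of the kernel by the continuity constants of the form; only then does separate continuity of the sesquilinear form $(v,u) \mapsto K_{v,u}(\phi)$ give joint boundedness, after which Banach--Steinhaus yields the uniform seminorm bound and operator-norm continuity of $\phi \mapsto K_F(\phi)$. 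With that step supplied, your first route is correct; without it, you should fall back on the nuclear tensor-product identification, as the paper does.
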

The second class of tempered distributions enters when we wish to make sense of $\Op^A(F)$ when 
\begin{align*}
	F \in \Schwartz^* \bigl ( T^* \R^d , \mathcal{B}(\Hil_1,\Hil_2) \bigr ) 
	. 
\end{align*}
We are in the unusual position that we have determined what dual space we wish to work with and need to find the pre-dual. Importantly, the spaces of Schwartz functions and tempered distributions are nuclear (\cf \cite[p.~530, Corollary]{Treves:topological_vector_spaces:1967}), which means we can think of the Schwartz space of Banach space $\mathcal{X}$-valued functions 
\begin{align*}
	\Schwartz(T^* \R^d,\mathcal{X}) \cong \Schwartz(T^* \R^d) \otimes \mathcal{X}
\end{align*}
as \emph{the} tensor product of the (scalar-valued) Schwartz space $\Schwartz(T^* \R^d)$ and $\mathcal{X}$. In general there are several topologies with respect to which we can complete the algebraic tensor product (composed of finite linear combinations), but for nuclear spaces these all necessarily coincide (\cf \cite[Theorem~50.1]{Treves:topological_vector_spaces:1967}). 

The tensor product decomposition tells us that all we need to focus on is the Banach space $\mathcal{X}$ and its dual. Generally, we can think of the strong dual of the $\mathcal{X}$-valued Schwartz functions as 
\begin{align*}
  \Schwartz(T^* \R^d,\mathcal{X})' = \Schwartz'(\R^{2d},\mathcal{X}') 
	. 
\end{align*}
Importantly, $\Schwartz'(T^* \R^d,\mathcal{X}) \neq \Schwartz'(\R^{2d},\mathcal{X}')$ need not coincide. Put another way, we have to look for a Banach space $\mathcal{X}$ so that its Banach space dual $\mathcal{X}' = \mathcal{B}(\Hil_1,\Hil_2)$ gives us the bounded operators. 

Fortunately, we know the answer (\cf \cite[Theorem~48.5']{Treves:topological_vector_spaces:1967}), it is the space of nuclear operators (\cf \cite[Definition~47.2]{Treves:topological_vector_spaces:1967}); although we shall refer to it as the space of \emph{trace class operators} $\mathcal{L}^1(\Hil_2,\Hil_1)$ since $F \in \mathcal{L}^1(\Hil_2,\Hil_1)$ is equivalent to imposing 
\begin{align*}
  \sqrt{F^* F} \in \mathcal{L}^1(\Hil_2)
  \quad\mbox{or, equivalently,}\quad 
  \sqrt{F F^*} \in \mathcal{L}^1(\Hil_1)
\end{align*}
are trace class in the usual sense.
For the benefit of the reader, we have included all pertinent definitions and specific references in Appendix~\ref{appendix:identification_operator_valued_Schwartz_spaces_distributions}. 

Clearly, we can repeat the construction from Section~\ref{operator_valued_calculus:construction_on_Schwartz} for $\Schwartz \bigl ( T^* \R^d , \mathcal{L}^1(\Hil_2,\Hil_1) \bigr )$, we just have to use the trace norm instead of the operator norm when defining Bochner integrals. The duality bracket now involves the Hilbert space trace $\trace_{\Hil_1}$, for two Schwartz functions $f \in \Schwartz \bigl ( T^* \R^d , \mathcal{L}^1(\Hil_2,\Hil_1) \bigr )$ and $g \in \Schwartz \bigl ( T^* \R^d , \mathcal{L}^1(\Hil_1,\Hil_2) \bigr )$ we set 
\begin{align*}
	(f , g)_{\Schwartz(T^* \R^d , \mathcal{L}^1(\Hil_2,\Hil_1))} := \int_{T^* \R^d} \dd X \, \trace_{\Hil_2} \bigl ( f(X) \, g(X) \bigr ) 
	. 
\end{align*}
An important fact we will exploit later is that the cyclicity of the trace allows us to swap $f$ and $g$, 
\begin{align*}
	(f , g)_{\Schwartz(T^* \R^d , \mathcal{L}^1(\Hil_2,\Hil_1))} = (g , f)_{\Schwartz(T^* \R^d , \mathcal{L}^1(\Hil_1,\Hil_2))} 
	. 
\end{align*}
One annoying point of basing our arguments on the Banach space and strong duals is that in equations like the one above we need to sometimes exchange the order of the Hilbert spaces. That also prevents us from exploiting the inclusion 
\begin{align*}
	\Schwartz \bigl ( T^* \R^d , \mathcal{L}^1(\Hil_1,\Hil_2) \bigr ) \subset \Schwartz' \bigl ( T^* \R^d , \mathcal{L}^1(\Hil_1,\Hil_2) \bigr )
\end{align*}
in a natural way. So also here we will opt for a sesquilinear duality bracket 
\begin{align*}
	\scpro{F}{g}_{\Schwartz(T^* \R^d , \mathcal{L}^1(\Hil_1,\Hil_2))} := \int_{T^* \R^d} \dd X \, \trace_{\Hil_1} \bigl ( F^*(X) \, g(X) \bigr ) 
\end{align*}
and denote the associated space of distributions with 
\begin{align*}
	\Schwartz^* \bigl ( T^* \R^d , \mathcal{B}(\Hil_1,\Hil_2) \bigr ) = \Schwartz \bigl ( T^* \R^d , \mathcal{L}^1(\Hil_1,\Hil_2) \bigr )^* 
	. 
\end{align*}
Thanks to the inclusion $\mathcal{L}^1(\Hil_1,\Hil_2) \subseteq \mathcal{B}(\Hil_1,\Hil_2)$ of the trace class operators into the bounded operators, we can now view the space of operator-valued tempered distributions \emph{as an extension} of the class of test functions it is based on. 

\subsubsection{The magnetic quantization rule} 
\label{operator_valued_calculus:extension_by_duality:Weyl_quantization}
With the definitions of the relevant spaces out of the way, the extension is very easy. We begin with the kernel map~\eqref{operator_valued_calculus:eqn:operator_kernel}. 
\begin{lemma}\label{operator_valued_calculus:lem:extension_kernel_map_distributions}
	The kernel map $f \mapsto K^A_f$ from equation~\eqref{operator_valued_calculus:eqn:operator_kernel} extends to a topological vector space isomorphism on the level of distributions, 
	\begin{align*}
		f \mapsto K^A_f : \Schwartz^* \bigl ( T^* \R^d , \mathcal{B}(\Hil,\Hil') \bigr ) \longrightarrow \Schwartz^* \bigl ( \R^d \times \R^d , \mathcal{B}(\Hil,\Hil') \bigr )
		. 
	\end{align*}
\end{lemma}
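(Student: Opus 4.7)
The plan is to extend the kernel map by taking the formal adjoint with respect to the sesquilinear trace pairing introduced in Section~\ref{operator_valued_calculus:extension_by_duality:relevant_spaces}, and then invoke the identification $\Schwartz^*(\cdot,\mathcal{B}(\Hil,\Hil')) = \Schwartz(\cdot,\mathcal{L}^1(\Hil,\Hil'))^*$. I would proceed in three steps.

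First I would observe that the proof of Lemma~\ref{operator_valued_calculus:lem:kernel_map_Schwartz_class} goes through verbatim with $\mathcal{B}(\Hil,\Hil')$ replaced by the trace ideal $\mathcal{L}^1(\Hil,\Hil')$: the decomposition of $f \mapsto K_f^A$ as a partial Fourier transform in the second variable, a linear invertible change of coordinates $(x,y) \mapsto \bigl ( \tfrac{\eps}{2}(x+y),\, y-x \bigr )$, and multiplication by the $\Cont^{\infty}_{\mathrm{pol}}$ phase $\mathrm{e}^{-\ii\frac{\lambda}{\eps} \int_{[\eps x,\eps y]} A}$ is insensitive to the particular Banach space in the target, provided one uses the trace norm in place of the operator norm in the Bochner integrals. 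Hence the same formula already yields a topological vector space isomorphism on the pre-dual side.

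Secondly, I would compute the formal adjoint $(K^A)^{\sharp}$ of the kernel map with respect to the trace pairing: for $f \in \Schwartz \bigl ( T^*\R^d,\mathcal{B}(\Hil,\Hil') \bigr )$ and $g \in \Schwartz \bigl ( \R^d \times \R^d,\mathcal{L}^1(\Hil,\Hil') \bigr )$ one expands $\scpro{K^A f}{g}$ using the explicit formula~\eqref{operator_valued_calculus:eqn:operator_kernel}, substitutes $r = \tfrac{\eps}{2}(x+y)$, $z = y-x$, and applies Fubini (justified by the Schwartz decay of $g$ and the tempered growth of the phase) to exhibit the result as $\scpro{f}{(K^A)^{\sharp} g}$. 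The resulting $(K^A)^{\sharp}$ is again the composition of a partial Fourier transform, a linear change of variables, and multiplication by a $\Cont^{\infty}_{\mathrm{pol}}$ phase (the complex conjugate of the original one), and is therefore a topological vector space isomorphism
\begin{align*}
	(K^A)^{\sharp} : \Schwartz \bigl ( \R^d \times \R^d,\mathcal{L}^1(\Hil,\Hil') \bigr ) \longrightarrow \Schwartz \bigl ( T^*\R^d,\mathcal{L}^1(\Hil,\Hil') \bigr )
\end{align*}
by the first step applied in the reverse direction.

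Finally I would define the extended kernel map by duality: for $F \in \Schwartz^* \bigl ( T^*\R^d,\mathcal{B}(\Hil,\Hil') \bigr )$, set
\begin{align*}
	\scpro{K_F^A}{g}_{\Schwartz(\R^d \times \R^d,\mathcal{L}^1(\Hil,\Hil'))} := \bscpro{F}{(K^A)^{\sharp} g}_{\Schwartz(T^*\R^d,\mathcal{L}^1(\Hil,\Hil'))}
	,
\end{align*}
which is continuous in $g$ because $(K^A)^{\sharp}$ is continuous. Since $(K^A)^{\sharp}$ is bijective with continuous inverse, the induced map on strong duals is a topological vector space isomorphism, and it visibly agrees with $f \mapsto K_f^A$ on the Schwartz subspace via the pairing, so it is the unique continuous extension. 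The main technical obstacle I anticipate is in step two: carefully tracking the phase factor under complex conjugation and change of variables to verify that $(K^A)^{\sharp}$ has the claimed form; once this explicit formula is in hand, everything else is formal consequences of Lemma~\ref{operator_valued_calculus:lem:kernel_map_Schwartz_class} and the duality setup.
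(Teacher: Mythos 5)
Your proposal is correct and follows essentially the same route as the paper: both rest on the factorization of $f \mapsto K^A_f$ into a partial Fourier transform, a linear invertible change of variables and multiplication by a $\Cont^{\infty}_{\mathrm{pol}}$ phase, each of which is a topological vector space isomorphism of the relevant Schwartz spaces extending to distributions. The only difference is presentational — you make the duality mechanism explicit by computing the formal adjoint of the whole composition against the trace pairing on the $\mathcal{L}^1$-valued pre-dual, whereas the paper extends each factor to distributions implicitly — so no further comparison is needed.
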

\begin{proof}
	The kernel map is a composition of a partial Fourier transform, a linear, invertible change of coordinates on $T^* \R^d$ and multiplication by the $\Cont^{\infty}_{\mathrm{pol}}$ phase $\e^{- \ii \frac{\lambda}{\eps} \int_{[\eps x , \eps y]} A}$. All of them define continuous maps on operator-valued Schwartz functions that extend continuously to operator-valued tempered distributions. 
\end{proof}
We can now extend the map $\Op^A$ in two ways: we can either stick to $\Op^A(f)$ for Schwartz functions and extend the space these operators act on. Or we can define $\Op^A(F)$ for tempered distributions.  
\begin{proposition}\label{operator_valued_calculus:prop:extension_OpA_tempered_distributions}
	\begin{enumerate}[(1)]
		\item $\Op^A$ extends to the topological vector space isomorphism 
		\begin{align*}
			\Op^A : \Schwartz \bigl ( T^* \R^d , \mathcal{L}^1(\Hil,\Hil') \bigr ) \longrightarrow \mathcal{L} \bigl ( \Schwartz^*(\R^d,\Hil) \, , \, \Schwartz(\R^d,\Hil') \bigr )
			. 
		\end{align*}
		\item $\Op^A$ extends to the topological vector space isomorphism 
		\begin{align*}
			\Op^A : \Schwartz^* \bigl ( T^* \R^d , \mathcal{B}(\Hil,\Hil') \bigr ) \longrightarrow \mathcal{L} \bigl ( \Schwartz(\R^d,\Hil) \, , \, \Schwartz^*(\R^d,\Hil') \bigr ) 
			. 
		\end{align*}
		\item The two extensions from (1) and (2) are still gauge-covariant, \ie for any $\vartheta \in \Cont^{\infty}_{\mathrm{pol}}(\R^d,\R)$ we have 
		\begin{align*}
			\Op^{A + \eps \dd \vartheta}(f) &= \e^{+ \ii \lambda \vartheta(Q)} \, \Op^A(f) \, \e^{- \ii \lambda \vartheta(Q)}
		\end{align*}
		where $f$ is either an operator-valued Schwartz function or distribution. 
	\end{enumerate}
\end{proposition}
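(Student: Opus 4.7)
The plan is to factor $\Op^A$ as the composition of the kernel map $f \mapsto K^A_f$ and the integration map $K \mapsto \Int(K)$, and then invoke two operator-valued variants of the Schwartz kernel theorem to read off the two topological vector space isomorphisms in (1) and (2). Gauge covariance in (3) will follow from the Schwartz-class statement by inclusion and density.

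For (1), I would first observe that the proof of Lemma~\ref{operator_valued_calculus:lem:kernel_map_Schwartz_class} transplants verbatim to trace-class-valued Schwartz functions: the kernel map is a composition of a partial Fourier transform, a linear invertible change of coordinates on $T^* \R^d$ and multiplication by the $\Cont^{\infty}_{\mathrm{pol}}$ phase $\e^{- \ii \frac{\lambda}{\eps} \int_{[\eps x , \eps y]} A}$, each of which is a TVS isomorphism on Banach-space-valued Schwartz spaces regardless of whether the target Banach space is $\mathcal{B}(\Hil,\Hil')$ or $\mathcal{L}^1(\Hil,\Hil')$; one only replaces the operator norm by the trace norm in the Bochner-integral estimates. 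This gives a TVS isomorphism $f \mapsto K^A_f : \Schwartz \bigl ( T^* \R^d , \mathcal{L}^1(\Hil,\Hil') \bigr ) \to \Schwartz \bigl ( \R^d \times \R^d , \mathcal{L}^1(\Hil,\Hil') \bigr )$. I would then compose with the operator-valued “regularizing half” of the Schwartz kernel theorem: since $\Schwartz(\R^d)$ is nuclear and $\mathcal{L}^1(\Hil,\Hil')$ is the predual of the bounded operators (see Appendix~\ref{appendix:identification_operator_valued_Schwartz_spaces_distributions}), the integration map $K \mapsto \Int(K)$ furnishes a TVS isomorphism from $\Schwartz \bigl ( \R^d \times \R^d , \mathcal{L}^1(\Hil,\Hil') \bigr )$ onto $\mathcal{L} \bigl ( \Schwartz^*(\R^d,\Hil) , \Schwartz(\R^d,\Hil') \bigr )$, which yields (1).

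For (2), Lemma~\ref{operator_valued_calculus:lem:extension_kernel_map_distributions} already supplies a TVS isomorphism from $\Schwartz^* \bigl ( T^* \R^d , \mathcal{B}(\Hil,\Hil') \bigr )$ onto $\Schwartz^* \bigl ( \R^d \times \R^d , \mathcal{B}(\Hil,\Hil') \bigr )$; I would couple this with the distributional extension of Lemma~\ref{operator_valued_calculus:lem:Schwartz_kernel_theorem}, promoted to a TVS isomorphism with $\mathcal{L} \bigl ( \Schwartz(\R^d,\Hil) , \Schwartz^*(\R^d,\Hil') \bigr )$. For (3), the Schwartz-class identity of Lemma~\ref{operator_valued_calculus:lem:covariance_Weyl_quantization_Schwartz_class} applies immediately to the setting of (1) because $\mathcal{L}^1 \subseteq \mathcal{B}$ means every trace-class-valued Schwartz function is in particular a $\mathcal{B}(\Hil,\Hil')$-valued Schwartz function to which the Schwartz-class covariance applies. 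For (2), both sides of the gauge-covariance identity are continuous in $f$ with respect to the Fréchet topology of $\Schwartz^* \bigl ( T^* \R^d , \mathcal{B}(\Hil,\Hil') \bigr )$, and the trace-class-valued Schwartz symbols sit densely inside it, so unique extension by density closes the argument.

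The main obstacle I foresee is the operator-valued Schwartz kernel theorem invoked in (1) and (2). The nuclearity of $\Schwartz(\R^d)$ underwrites the canonical tensor-product identifications recalled in Section~\ref{operator_valued_calculus:construction_on_Schwartz:tensor_product_structure}, but one must still verify with some care that the trace pairing of $\mathcal{L}^1(\Hil,\Hil')$ with $\mathcal{B}(\Hil',\Hil)$ is precisely what matches continuous linear maps between the appropriate rigged Hilbert spaces, in the correct direction. The sesquilinear versus bilinear bracket conventions of Section~\ref{operator_valued_calculus:extension_by_duality:relevant_spaces} must be applied consistently so that the orderings of $\Hil$ and $\Hil'$ on the two sides of each identification are aligned, and the adjustments to the duality bracket that this necessitates are what make the bookkeeping, rather than the analysis, the delicate point.
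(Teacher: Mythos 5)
Your proposal is correct and takes essentially the same route as the paper: the appendix proof establishes exactly the chain of identifications you package as the "operator-valued Schwartz kernel theorem," namely $\mathcal{L} \bigl ( \Schwartz^*(\R^d,\Hil) , \Schwartz(\R^d,\Hil') \bigr ) \cong \Schwartz(\R^d,\Hil) \otimes \Schwartz(\R^d,\Hil') \cong \Schwartz(T^* \R^d) \otimes_{\pi} \bigl ( \Hil \otimes_{\pi} \Hil' \bigr ) \cong \Schwartz \bigl ( T^* \R^d , \mathcal{L}^1(\Hil,\Hil') \bigr )$ via nuclearity and the fact that $\Hil \otimes_{\pi} \Hil' = \mathcal{L}^1(\Hil,\Hil')$, and dually for item (2). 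Your gauge-covariance argument (inclusion for Schwartz symbols, continuity plus density for distributions) likewise matches the paper's.
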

In a nutshell, the proof of \cite[Proposition~5]{Mantoiu_Purice:magnetic_Weyl_calculus:2004} — and our generalization — rely on tensor product decompositions of the various spaces. The rather technical proofs are relegated to Appendix~\ref{appendix:identification_operator_valued_Schwartz_spaces_distributions}. 

\subsubsection{The magnetic Wigner transform} 
\label{operator_valued_calculus:extension_by_duality:magnetic_wigner_transform}
Likewise, the Wigner transform has an extension to tempered distributions. Fortunately, seeing as the Wigner transform is the inverse of the kernel map $f \mapsto K_f^A$, all of the work has already been completed when proving Lemma~\ref{operator_valued_calculus:lem:extension_kernel_map_distributions}. So the following corollary is a combination of Lemmas~\ref{operator_valued_calculus:lem:kernel_map_Schwartz_class}, \ref{operator_valued_calculus:lem:Wigner_transform_Schwartz_class} and \ref{operator_valued_calculus:lem:extension_kernel_map_distributions}: 
\begin{corollary}
	The Wigner transform~\eqref{operator_valued_calculus:eqn:Wigner_transform} extends to a continuous map 
	\begin{align*}
		\Wigner^A : \Schwartz^* \bigl ( \R^d \times \R^d \, , \, \mathcal{B}(\Hil,\Hil') \bigr ) \longrightarrow \Schwartz^* \bigl ( T^* \R^d \, , \, \mathcal{B}(\Hil,\Hil') \bigr ) 
	\end{align*}
	between operator-valued tempered distributions. 
\end{corollary}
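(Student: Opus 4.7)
The proof is essentially a one-line deduction from the three lemmas cited, but let me spell out the plan carefully.

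My plan is to \emph{define} the extension of $\Wigner^A$ to operator-valued tempered distributions as the inverse of the extended kernel map. Concretely, Lemma~\ref{operator_valued_calculus:lem:extension_kernel_map_distributions} tells us that the kernel map
\[
f \mapsto K^A_f : \Schwartz^* \bigl ( T^* \R^d , \mathcal{B}(\Hil,\Hil') \bigr ) \longrightarrow \Schwartz^* \bigl ( \R^d \times \R^d , \mathcal{B}(\Hil,\Hil') \bigr )
\]
is a topological vector space isomorphism. Since the inverse of a TVS isomorphism between Fréchet spaces is automatically continuous, we set $\Wigner^A := (f \mapsto K^A_f)^{-1}$ on the distribution level. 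This immediately produces a continuous map in the required direction.

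The only real thing to verify is that this extension is genuinely \emph{the} extension of the original $\Wigner^A$ from equation~\eqref{operator_valued_calculus:eqn:Wigner_transform}. For this I would invoke Lemma~\ref{operator_valued_calculus:lem:Wigner_transform_Schwartz_class}, which establishes that on Schwartz classes the original Wigner transform already coincides with the inverse of the kernel map. Combined with the continuity of both the Schwartz-level and the distributional kernel maps, and the density of $\Schwartz \bigl ( \R^d \times \R^d , \mathcal{L}^1(\Hil',\Hil) \bigr )$ inside $\Schwartz^* \bigl ( \R^d \times \R^d , \mathcal{B}(\Hil,\Hil') \bigr )$ in the appropriate weak-$*$ sense (inherited from the nuclearity discussion in Section~\ref{operator_valued_calculus:extension_by_duality:relevant_spaces}), uniqueness of the extension follows.

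If one prefers a more hands-on perspective, an equivalent route is to extend $\Wigner^A$ by duality through the sesquilinear pairing
\[
\scpro{\Wigner^A K}{g}_{\Schwartz(T^* \R^d,\mathcal{L}^1(\Hil',\Hil))} := \scpro{K}{K^A_g}_{\Schwartz(\R^d \times \R^d,\mathcal{L}^1(\Hil',\Hil))}
\]
for $K \in \Schwartz^* \bigl ( \R^d \times \R^d , \mathcal{B}(\Hil,\Hil') \bigr )$ and $g \in \Schwartz \bigl ( T^* \R^d , \mathcal{L}^1(\Hil',\Hil) \bigr )$, using the fact established in Lemma~\ref{operator_valued_calculus:lem:kernel_map_Schwartz_class} that $g \mapsto K^A_g$ is continuous on the Schwartz side. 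Continuity of the resulting distribution is automatic, and consistency on Schwartz inputs is a direct unwinding of the integral formula~\eqref{operator_valued_calculus:eqn:Wigner_transform}.

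The main obstacle, if there is one, is merely bookkeeping: one has to pair the right Hilbert spaces with the right trace class, since our sesquilinear brackets swap $\mathcal{L}^1(\Hil,\Hil')$ with $\mathcal{L}^1(\Hil',\Hil)$ as noted at the end of Section~\ref{operator_valued_calculus:extension_by_duality:relevant_spaces}. Once that is handled correctly, no genuine analysis is required — continuity, inversion, and consistency all come for free from the already-proved lemmas.
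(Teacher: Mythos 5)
Your proposal is correct and follows essentially the same route as the paper: the corollary is obtained by combining Lemmas~\ref{operator_valued_calculus:lem:kernel_map_Schwartz_class}, \ref{operator_valued_calculus:lem:Wigner_transform_Schwartz_class} and \ref{operator_valued_calculus:lem:extension_kernel_map_distributions}, defining the distributional Wigner transform as the (automatically continuous) inverse of the extended kernel map and checking consistency on Schwartz inputs. One cosmetic remark: continuity of the inverse is already built into the term \emph{topological vector space isomorphism} (and the distribution spaces are not Fréchet), so the appeal to Fréchet-space automatic continuity is unnecessary, but this does not affect the argument.
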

%

\subsubsection{The magnetic Weyl product} 
\label{operator_valued_calculus:extension_by_duality:the_magnetic_weyl_product}
To extend the magnetic Weyl product from Schwartz functions to tempered distributions in spirit we are going to follow the arguments in \cite[Section~V.B]{Mantoiu_Purice:magnetic_Weyl_calculus:2004} for the scalar-valued (as opposed to operator-valued) case: the extension rests on 
\begin{align*}
	\int_{T^* \R^d} \dd X \, (f \Weyl g)(X) &= \int_{T^* \R^d} \dd X \, f(X) \, g(X) = \bscpro{\overline{f}}{g} 
	= \bscpro{\overline{g}}{f} 
	&&
	\forall f , g \in \Schwartz(T^* \R^d) 
	, 
\end{align*}
which allows us to shift the Weyl product from one argument in the duality bracket to the other. The analog of the above formula for the operator-valued case is as follows: 
\begin{lemma}\label{operator_valued_calculus:lem:int_Weyl_product_equals_int_product}
	Let $f , g \in \Schwartz \bigl ( T^* \R^d , \mathcal{L}^1(\Hil,\Hil') \bigr )$ be two Schwartz functions. Then the following equality holds: 
	\begin{align}
		\int_{T^* \R^d} \dd X \, \trace_{\Hil} \bigl ( f^* \Weyl g(X) \bigr ) &= \int_{T^* \R^d} \dd X \, \trace_{\Hil} \bigl ( f^*(X) \, g(X) \bigr ) 
		= \scpro{f}{g}_{\Schwartz(T^* \R^d , \mathcal{L}^1(\Hil,\Hil'))} 
		\label{operator_valued_calculus:eqn:int_Weyl_product_equals_int_product}
	\end{align}
\end{lemma}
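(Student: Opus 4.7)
The plan is to start from the explicit integral representation \eqref{operator_valued_calculus:eqn:Weyl_product} of the magnetic Weyl product applied to $f^*$ and $g$. Since $f \in \Schwartz \bigl ( T^* \R^d , \mathcal{L}^1(\Hil,\Hil') \bigr )$, the pointwise adjoint $f^*(X) \in \mathcal{L}^1(\Hil',\Hil)$ is again a Schwartz function in the trace norm, and $(\Fourier_\sigma f^*)(Y) \, (\Fourier_\sigma g)(Z)$ lies in $\mathcal{L}^1(\Hil)$ with rapidly decaying trace norm. Therefore, integrating against the magnetic phase factors and applying $\trace_{\Hil}$ produces an absolutely convergent iterated Bochner integral, and Fubini lets me interchange orders freely.

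Next, I would perform the $X$-integration first. The only $X$-dependence in \eqref{operator_valued_calculus:eqn:Weyl_product} sits in the exponential $\e^{+\ii \sigma(X,Y+Z)}$ and in the magnetic phase $\e^{-\ii \lambda \gamma^B(x,y,z)}$, but the latter localizes once we integrate in $X$ against the former. Concretely, $\int \dd X \, \e^{+\ii \sigma(X,Y+Z)} = (2\pi)^{2d} \, \delta(Y+Z)$, so the $Z$-integration collapses onto $Z = -Y$. On this locus two miracles happen simultaneously: first, the symplectic phase vanishes because $\sigma(Y,-Y) = 0$; and second, the magnetic phase trivializes because the triangle with vertices $x - \tfrac{\eps}{2}(y+z)$, $x + \tfrac{\eps}{2}(y-z)$, $x + \tfrac{\eps}{2}(y+z)$ degenerates (its first and third vertex coincide when $y+z = 0$), so the flux $\Gamma^B$ is identically zero. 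Hence
\begin{align*}
	\int_{T^* \R^d} \dd X \, \trace_{\Hil} \bigl ( f^* \Weyl g(X) \bigr )
	= \int_{T^* \R^d} \dd Y \, \trace_{\Hil} \bigl ( (\Fourier_\sigma f^*)(Y) \, (\Fourier_\sigma g)(-Y) \bigr )
	.
\end{align*}

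Finally, I would recognize the right-hand side as a symplectic Parseval identity. Using $\Fourier_\sigma^2 = \id$ and the elementary computation $(\Fourier_\sigma g)(-Y) = \tfrac{1}{(2\pi)^d} \int \dd X \, \e^{+\ii \sigma(X,Y)} g(X)$, I substitute and exchange the $X$- and $Y$-integrals (justified by absolute convergence in trace norm) to obtain $\int \dd X \, \trace_{\Hil}\bigl ( f^*(X) \, g(X) \bigr )$. This matches the definition of $\scpro{f}{g}_{\Schwartz(T^* \R^d , \mathcal{L}^1(\Hil,\Hil'))}$ recorded earlier, closing the chain of equalities.

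The main technical point is not any single computation but checking that every interchange of integrals is licit: the $X$-integral of $\e^{+\ii \sigma(X,Y+Z)}$ is only conditionally convergent, so I would perform it via a standard tempered-distribution argument (pair against the Schwartz test function $Y \mapsto \trace_{\Hil}((\Fourier_\sigma f^*)(Y)(\Fourier_\sigma g)(Z))$ in the variable $Y+Z$), or equivalently by inserting a Gaussian regularizer $\e^{-\delta \abs{X}^2}$, exchanging integrals using absolute convergence in trace norm, and sending $\delta \downarrow 0$ with the help of dominated convergence for Bochner integrals. Nothing about the operator-valued setting changes the skeleton of the scalar argument from \cite{Mantoiu_Purice:magnetic_Weyl_calculus:2004}; the only adjustment is to carry an operator product pointwise and apply $\trace_{\Hil}$ at the end, using continuity of the trace on $\mathcal{L}^1(\Hil)$.
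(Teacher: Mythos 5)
Your proposal is correct and follows essentially the same route as the paper's (sketched) proof: regularize, produce $\delta(y+z)$ from the momentum integration, observe that the flux triangle degenerates on $y+z=0$ so the magnetic phase trivializes, then integrate out the remaining position variable and invoke $\Fourier_{\sigma}^2 = \id$, carrying the trace along by continuity on $\mathcal{L}^1(\Hil)$. The only point to keep straight in the write-up is that the $\dd\xi$- and $\dd x$-integrations must be performed in that order (as the paper does), since $\gamma^B$ depends on $x$ and only becomes trivial \emph{after} the first delta forces $z=-y$ — which is exactly the localization you describe.
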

\begin{proof}
	We will only sketch the proof, since it is a straightforward extension of the arguments in \eg \cite[Lemma~14]{Mantoiu_Purice:magnetic_Weyl_calculus:2004} or \cite[Lemma~2.3.6]{Lein:progress_magWQ:2010}. 
	
	First of all, the three integral expressions in equation~\eqref{operator_valued_calculus:eqn:int_Weyl_product_equals_int_product} are well-defined, because any trace class operator is bounded and thus, its adjoint is also bounded. The product of a bounded and a trace class operator is again trace class. (In fact, Lemma~\ref{operator_valued_calculus:lem:facts_needed_proof_extension_product}~(1) below tells us that $f^*$ takes values in the trace class operators again.)
	
	To make the following arguments rigorous, we need to regularize the integral in $X$ on the left hand side of \eqref{operator_valued_calculus:eqn:int_Weyl_product_equals_int_product}. Starting with equation~\eqref{operator_valued_calculus:eqn:Weyl_product}, we first integrate over $\xi$, which gives a $(2\pi)^d \, \delta(y + z)$. Then we can eliminate the integral in, say, $z$. After setting $z = -y$, we see that the magnetic flux triangle collapses on one side and the exponential of the magnetic flux is $0$; this eliminates the $x$-dependence of the integrand besides the exponential $\e^{- \ii x \cdot (\eta + \zeta)}$. 
	
	Hence, we are now free to integrate over $x$, which produces $(2\pi)^d \, \delta(\eta + \zeta)$. Eliminating the $\zeta$ integral and using the unitarity of the symplectic Fourier transform now yields the claim. Note that all these phase factors are scalar and can be pulled out of the trace. Moreover, the potentially infinite sum implicit in the trace can be controlled using Dominated Convergence for sums. 
\end{proof}
Before we proceed with the modified statement, we will collect a few elementary facts needed in the subsequent proofs:
\begin{lemma}\label{operator_valued_calculus:lem:facts_needed_proof_extension_product}
	\begin{enumerate}[(1)]
		\item $f \in \Schwartz \bigl ( T^* \R^d , \mathcal{L}^1(\Hil,\Hil') \bigr )$ implies the pointwise adjoint 
		\begin{align*}
			f^* \in \Schwartz \bigl ( T^* \R^d , \mathcal{L}^1(\Hil',\Hil) \bigr ) 
		\end{align*}
		also takes values in the trace class operators. 
		\item For any $f \in \Schwartz \bigl ( T^* \R^d , \mathcal{L}^1(\Hil',\Hil'') \bigr )$ and $g \in \Schwartz \bigl ( T^* \R^d , \mathcal{L}^1(\Hil,\Hil') \bigr )$ we have 
		\begin{align*}
			(f \Weyl g)^* = g^* \Weyl f^*
			. 
		\end{align*}
	\end{enumerate}
\end{lemma}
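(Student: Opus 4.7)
The two parts are quite different in character and I would handle them separately.

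For part (1), the argument is essentially tautological once one recalls that the adjoint map $T \mapsto T^*$ is an isometric antilinear involution on trace class operators, that is, $\snorm{T^*}_{\mathcal{L}^1(\Hil',\Hil)} = \snorm{T}_{\mathcal{L}^1(\Hil,\Hil')}$, and that taking the adjoint commutes pointwise with partial derivatives in $X = (x,\xi)$. Consequently, for any multi-index $(\alpha,\beta)$ and any polynomial weight, the Schwartz seminorm
\begin{align*}
  \sup_{X \in T^* \R^d} \sexpval{X}^N \, \bnorm{\partial_x^\alpha \partial_\xi^\beta f^*(X)}_{\mathcal{L}^1(\Hil',\Hil)}
  = \sup_{X \in T^* \R^d} \sexpval{X}^N \, \bnorm{\partial_x^\alpha \partial_\xi^\beta f(X)}_{\mathcal{L}^1(\Hil,\Hil')}
\end{align*}
is finite, so $f^*$ lies in $\Schwartz \bigl ( T^* \R^d , \mathcal{L}^1(\Hil',\Hil) \bigr )$. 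There is really no obstacle here.

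For part (2), my preferred route is to leverage the quantization rather than wrestle with the integral formula~\eqref{operator_valued_calculus:eqn:Weyl_product} for $\Weyl$. From Section~\ref{operator_valued_calculus:construction_on_Schwartz:adjoint} we know $\Op^A(f)^* = \Op^A(f^*)$ whenever the domains are compatible, and by construction $\Op^A(h_1 \Weyl h_2) = \Op^A(h_1) \, \Op^A(h_2)$. Applied to $h_1 = f$ and $h_2 = g$, standard operator algebra gives
\begin{align*}
  \Op^A \bigl ( (f \Weyl g)^* \bigr ) &= \Op^A(f \Weyl g)^* = \bigl ( \Op^A(f) \, \Op^A(g) \bigr )^* \\
  &= \Op^A(g)^* \, \Op^A(f)^* = \Op^A(g^*) \, \Op^A(f^*) = \Op^A(g^* \Weyl f^*).
\end{align*}
Since the kernel map (and hence $\Op^A$) is a topological vector space isomorphism on the relevant Schwartz spaces by Lemma~\ref{operator_valued_calculus:lem:kernel_map_Schwartz_class}, injectivity of $\Op^A$ on Schwartz symbols then forces $(f \Weyl g)^* = g^* \Weyl f^*$. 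The only thing to verify in passing is that both sides are genuinely defined: the composability of ranges $\Hil \to \Hil' \to \Hil''$ for $g \Weyl f$ corresponds to $\Hil'' \to \Hil' \to \Hil$ for $g^* \Weyl f^*$, so everything lines up.

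The potentially fiddly point, and the one I would double-check carefully, is that $\Op^A(f)^* = \Op^A(f^*)$ is used here for trace-class-operator-valued Schwartz functions rather than bounded-operator-valued ones, and that the identity $\Op^A(f \Weyl g) = \Op^A(f) \, \Op^A(g)$ is invoked with these mismatched target spaces $\Hil, \Hil', \Hil''$. Both extensions, however, are immediate from Section~\ref{operator_valued_calculus:construction_on_Schwartz} since the arguments there never used that the codomains coincide, only that the Bochner integrals involved converge absolutely in the appropriate norm; the trace norm dominates the operator norm, so Schwartz-in-trace-norm symbols are in particular Schwartz-in-operator-norm, and all earlier results apply verbatim. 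Alternatively, one could give a direct proof from the integral formula~\eqref{operator_valued_calculus:eqn:Weyl_product}: taking the pointwise adjoint of $(f \Weyl g)(X)$ complex-conjugates the two symplectic phases and the magnetic-flux phase (which flips sign, matching the antisymmetry of the triangle under swapping the last two vertices after relabeling $Y \leftrightarrow Z$), exchanges the order to $(\Fourier_\sigma g)^*(Z) (\Fourier_\sigma f)^*(Y)$, and after the change of variables $(Y,Z) \mapsto (-Z,-Y)$ one recognizes the integral representation of $g^* \Weyl f^*$.
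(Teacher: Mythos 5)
Your proof is correct and takes essentially the same route as the paper: part (2) is verbatim the paper's argument (conjugate by $\Op^A$, use $\Op^A(h)^* = \Op^A(h^*)$ and the intertwining with the operator product, then invoke injectivity of the quantization). For part (1) the paper argues purely pointwise via the characterization $\sqrt{A^*A} \in \mathcal{L}^1(\Hil) \Leftrightarrow \sqrt{A A^*} \in \mathcal{L}^1(\Hil')$, whereas you additionally control the Schwartz seminorms via the trace-norm isometry of the adjoint and its commutation with derivatives — a slightly more complete version of the same observation.
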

\begin{proof}
	\begin{enumerate}[(1)]
		\item It suffices to prove this pointwise. One equivalent way to define $\mathcal{L}^1(\Hil,\Hil') \ni A$ is to impose the condition 
		\begin{align*}
			\sqrt{A^* \, A} \in \mathcal{L}^1(\Hil) 
			\quad \mbox{or, equivalently,} \quad 
			\sqrt{A \, A^*} \in \mathcal{L}^1(\Hil') 
		\end{align*}
		on its elements (\cf Definition~\ref{appendix:extension_by_duality_operator_valued_calculus:defn:trace_class_operators} and the Corollary on \cite[p.~494]{Treves:topological_vector_spaces:1967}). Hence, $f(X) \in \mathcal{L}^1(\Hil,\Hil')$ holds exactly when the adjoint $f^*(X) = f(X)^* \in \mathcal{L}^1(\Hil',\Hil)$ is trace class as well. 
		\item We can prove this using the intertwining between the operator and the Weyl products as well as the two adjoints via $\Op^A$: 
		\begin{align*}
			\Op^A \bigl ( (f \Weyl g)^* \bigr ) &= \bigl ( \Op^A(f \Weyl g) \bigr )^* 
			= \bigl ( \Op^A(f) \, \Op^A(g) \bigr )^* 
			\\
			&= \Op^A(g)^* \, \Op^A(f)^* 
			= \Op^A(g^*) \, \Op^A(f^*) 
			\\
			&= \Op^A(g^* \Weyl f^*)
		\end{align*}
	\end{enumerate}
\end{proof}
As a way to motivate the correct extension to tempered distributions, we will prove the following identities for Schwartz functions first. 
\begin{proposition}\label{operator_valued_calculus:prop:relation_Weyl_product_Schwartz_functions_duality_bracket}
	For all $f \in \Schwartz \bigl ( T^* \R^d , \mathcal{L}^1(\Hil',\Hil'') \bigr )$, $g \in \Schwartz \bigl ( T^* \R^d , \mathcal{L}^1(\Hil,\Hil') \bigr )$ and $h \in \Schwartz \bigl ( T^* \R^d , \mathcal{L}^1(\Hil,\Hil'') \bigr )$ the following identities hold true: 
	\begin{enumerate}[(1)]
		\item $\displaystyle \bscpro{f \Weyl g}{h}_{\Schwartz(T^* \R^d , \mathcal{L}^1(\Hil,\Hil''))} = \bscpro{f}{h \Weyl g^*}_{\Schwartz(T^* \R^d , \mathcal{L}^1(\Hil',\Hil''))} = \bscpro{f}{(g \Weyl h^*)^*}_{\Schwartz(T^* \R^d , \mathcal{L}^1(\Hil',\Hil''))}$
		\item $\displaystyle \bscpro{f \Weyl g}{h}_{\Schwartz(T^* \R^d , \mathcal{L}^1(\Hil,\Hil''))} = \bscpro{g}{f^* \Weyl h}_{\Schwartz(T^* \R^d , \mathcal{L}^1(\Hil,\Hil'))} = \bscpro{g}{(h^* \Weyl f)^*}_{\Schwartz(T^* \R^d , \mathcal{L}^1(\Hil,\Hil'))}$
	\end{enumerate}
\end{proposition}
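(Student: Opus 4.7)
The plan is to derive both equalities by moving the magnetic Weyl product between slots of the duality bracket using four basic tools: (i) the definition $\scpro{F}{g} = \int_{T^*\R^d} \dd X\, \trace \bigl( F^*(X)\, g(X) \bigr)$ of the sesquilinear pairing; (ii) the adjoint rule $(u \Weyl v)^* = v^* \Weyl u^*$ from Lemma~\ref{operator_valued_calculus:lem:facts_needed_proof_extension_product}\,(2); (iii) the ``absorption'' identity
\begin{align*}
	\int_{T^*\R^d} \dd X\, \trace \bigl( (u \Weyl v)(X) \bigr) = \int_{T^*\R^d} \dd X\, \trace \bigl( u(X)\, v(X) \bigr),
\end{align*}
obtained from Lemma~\ref{operator_valued_calculus:lem:int_Weyl_product_equals_int_product} by replacing $f$ there with $u^*$ (allowed by Lemma~\ref{operator_valued_calculus:lem:facts_needed_proof_extension_product}\,(1)) and read in either direction; and (iv) cyclicity of the Hilbert space trace. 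Associativity of $\Weyl$, not stated explicitly but available at this point, is inherited from the associativity of operator composition through the intertwining $\Op^A(u \Weyl v) = \Op^A(u)\,\Op^A(v)$ together with the injectivity of $\Op^A$ on Schwartz symbols (a consequence of the kernel map being an isomorphism, Lemma~\ref{operator_valued_calculus:lem:kernel_map_Schwartz_class}). All intermediate pointwise products lie in the correct trace ideal by Lemma~\ref{operator_valued_calculus:lem:facts_needed_proof_extension_product}\,(1), so every trace below is finite.

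For identity (1), I would start from $\scpro{f \Weyl g}{h} = \int \trace \bigl( (f \Weyl g)^*\, h \bigr)\,\dd X$, apply (ii), and use cyclicity of the trace to move $h$ to the front, obtaining $\int \trace \bigl( h\, (g^* \Weyl f^*) \bigr)\,\dd X$. Then (iii) inserts a $\Weyl$, associativity regroups, and (iii) strips the outer $\Weyl$:
\begin{align*}
	\int \trace \bigl( h\, (g^* \Weyl f^*) \bigr)\,\dd X
	&= \int \trace \bigl( h \Weyl g^* \Weyl f^* \bigr)\,\dd X
	= \int \trace \bigl( (h \Weyl g^*)\, f^* \bigr)\,\dd X.
\end{align*}
A final application of cyclicity gives $\int \trace \bigl( f^*\, (h \Weyl g^*) \bigr)\,\dd X = \scpro{f}{h \Weyl g^*}$. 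The remaining equality with $\scpro{f}{(g \Weyl h^*)^*}$ is immediate from (ii), which yields $(g \Weyl h^*)^* = h \Weyl g^*$.

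Identity (2) is a shorter version of the same argument: after applying (ii) the initial cyclicity step is unnecessary, and one has
\begin{align*}
	\scpro{f \Weyl g}{h} = \int \trace \bigl( (g^* \Weyl f^*)\, h \bigr)\,\dd X = \int \trace \bigl( g^* \Weyl f^* \Weyl h \bigr)\,\dd X = \int \trace \bigl( g^*\, (f^* \Weyl h) \bigr)\,\dd X,
\end{align*}
where (iii) is used twice, with associativity in between. The right-hand side equals $\scpro{g}{f^* \Weyl h}$, and the remaining equality reduces to $(h^* \Weyl f)^* = f^* \Weyl h$, which is again (ii).

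I do not anticipate a substantive obstacle. The one point requiring care is the bookkeeping of Hilbert spaces: each use of (iii) invokes Lemma~\ref{operator_valued_calculus:lem:int_Weyl_product_equals_int_product} with a different choice of ambient Hilbert spaces, and each trace must be taken over the correct space among $\Hil,\Hil',\Hil''$. Because the composability of ranges is encoded in the hypotheses on $f, g, h$, these checks are purely mechanical.
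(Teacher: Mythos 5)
Your proposal is correct and follows essentially the same route as the paper's proof: apply the adjoint rule $(f\Weyl g)^* = g^*\Weyl f^*$, use cyclicity of the trace, and invoke Lemma~\ref{operator_valued_calculus:lem:int_Weyl_product_equals_int_product} twice to insert and then strip a Weyl product, with the final equalities reducing to the adjoint rule again. Your explicit remark on justifying associativity of $\Weyl$ via the intertwining with $\Op^A$ is a small point of extra care the paper leaves implicit, but it does not change the argument.
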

\begin{proof}
	\begin{enumerate}[(1)]
		\item Three auxiliary lemmas will enter the computation below: apart from equation~\eqref{operator_valued_calculus:eqn:int_Weyl_product_equals_int_product} from Lemma~\ref{operator_valued_calculus:lem:int_Weyl_product_equals_int_product} (marked with $\star$), we will need the cyclicity of the trace (Lemma~\ref{appendix:extension_by_duality_operator_valued_calculus:prop:properties_generalized_L1_spaces}~(5), marked with $\natural$) as well as Lemma~\ref{operator_valued_calculus:lem:facts_needed_proof_extension_product}~(2) (marked with $\flat$), 
		\begin{align*}
			&\bscpro{f \Weyl g}{h}_{\Schwartz(T^* \R^d , \mathcal{L}^1(\Hil,\Hil''))} 
			= \\
			&\qquad 
			= \int_{T^* \R^d} \dd X \, \trace_{\Hil} \bigl ( (f \Weyl g)^*(X) \, h(X) \bigr ) 
			\overset{\natural,\flat}{=} \int_{T^* \R^d} \dd X \, \trace_{\Hil''} \bigl ( h(X) \, (g^* \Weyl f^*)(X) \bigr ) 
			\\
			&\qquad 
			\overset{\star}{=} \int_{T^* \R^d} \dd X \, \trace_{\Hil''} \bigl ( h \Weyl g^* \Weyl f^*(X) \bigr ) 
			\overset{\star}{=} \int_{T^* \R^d} \dd X \, \trace_{\Hil''} \bigl ( (h \Weyl g^*)(X) \, f^*(X) \bigr ) 
			\\
			&\qquad 
			\overset{\natural}{=} \int_{T^* \R^d} \dd X \, \trace_{\Hil''} \bigl ( f^*(X) \, (h \Weyl g^*)(X) \bigr ) 
			\\
			&\qquad 
			= \bscpro{f}{h \Weyl g^*}_{\Schwartz(T^* \R^d , \mathcal{L}^1(\Hil',\Hil''))} 
			\overset{\flat}{=} \bscpro{f}{(g \Weyl h^*)^*}_{\Schwartz(T^* \R^d , \mathcal{L}^1(\Hil',\Hil''))}
			. 
		\end{align*}
		\item Reusing the same abbreviations as in the proof of item~(1), we make a computation similar to the one above: 
		\begin{align*}
			&\bscpro{f \Weyl g}{h}_{\Schwartz(T^* \R^d , \mathcal{L}^1(\Hil,\Hil''))} 
			= \\
			&\qquad
			= \int_{T^* \R^d} \dd X \, \trace_{\Hil} \bigl ( (f \Weyl g)^*(X) \, h(X) \bigr ) 
			\overset{\natural,\flat}{=} \int_{T^* \R^d} \dd X \, \trace_{\Hil} \bigl ( (g^* \Weyl f^*)(X) \, h(X) \bigr ) 
			\\
			&\qquad
			\overset{\star}{=} \int_{T^* \R^d} \dd X \, \trace_{\Hil} \bigl ( g^* \Weyl f^* \Weyl h(X) \bigr ) 
			\overset{\star}{=} \int_{T^* \R^d} \dd X \, \trace_{\Hil} \bigl ( g^*(X) \, (f^* \Weyl h)(X) \bigr ) 
			\\
			&\qquad
			= \bscpro{g}{f^* \Weyl h}_{\Schwartz(T^* \R^d , \mathcal{L}^1(\Hil,\Hil'))} 
			\overset{\flat}{=} \bscpro{g}{(h^* \Weyl f)^*}_{\Schwartz(T^* \R^d , \mathcal{L}^1(\Hil,\Hil'))} 
		\end{align*}
	\end{enumerate}
\end{proof}
The above formulas provide a way to extend the Weyl product from $\mathcal{L}^1$-valued Schwartz functions to bounded-operator-valued tempered distributions. 
\begin{definition}[Extension of the Weyl product to distributions]\label{operator_valued_calculus:defn:extension_Weyl_product_distributions}
	The magnetic Weyl product can be extended as follows: 
	\begin{enumerate}[(1)]
		\item For $F \in \Schwartz^* \bigl ( T^* \R^d , \mathcal{B}(\Hil',\Hil'') \bigr )$ and $g \in \Schwartz \bigl ( T^* \R^d , \mathcal{L}^1(\Hil,\Hil') \bigr )$ we set 
		\begin{align*}
			\bscpro{F \Weyl g}{h}_{\Schwartz(T^* \R^d , \mathcal{L}^1(\Hil,\Hil''))} := \bscpro{F}{h \Weyl g^*}_{\Schwartz(T^* \R^d , \mathcal{L}^1(\Hil',\Hil''))}
			&&
			\forall h \in \Schwartz \bigl ( T^* \R^d , \mathcal{L}^1(\Hil,\Hil'') \bigr )
			. 
		\end{align*}
		\item For $f \in \Schwartz \bigl ( T^* \R^d , \mathcal{L}^1(\Hil',\Hil'') \bigr )$ and $G \in \Schwartz^* \bigl ( T^* \R^d , \mathcal{B}(\Hil,\Hil') \bigr )$ we set 
		\begin{align*}
			\bscpro{f \Weyl G}{h}_{\Schwartz(T^* \R^d , \mathcal{L}^1(\Hil,\Hil''))} := \bscpro{G}{f^* \Weyl h}_{\Schwartz(T^* \R^d , \mathcal{L}^1(\Hil,\Hil'))} 
			&&
			\forall h \in \Schwartz \bigl ( T^* \R^d , \mathcal{L}^1(\Hil,\Hil'') \bigr )
			. 
		\end{align*}
	\end{enumerate}
\end{definition}
Let us dot our i's and cross our t's, and check that this extension is well-defined.
\begin{lemma}\label{operator_valued_calculus:lem:extension_Weyl_product_to_tempered_distributions_associative}
	The extensions of the magnetic Weyl product in Definition~\ref{operator_valued_calculus:defn:extension_Weyl_product_distributions} are well-defined and associative. 
\end{lemma}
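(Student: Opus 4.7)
The plan is to verify well-definedness and associativity separately; both reduce to identities and continuity statements already established at the Schwartz level.

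For well-definedness, consider the first extension. Fix $F \in \Schwartz^* \bigl ( T^* \R^d , \mathcal{B}(\Hil',\Hil'') \bigr )$ and $g \in \Schwartz \bigl ( T^* \R^d , \mathcal{L}^1(\Hil,\Hil') \bigr )$. By Lemma~\ref{operator_valued_calculus:lem:facts_needed_proof_extension_product}~(1), the pointwise adjoint $g^*$ again takes trace class values and lies in $\Schwartz \bigl ( T^* \R^d , \mathcal{L}^1(\Hil',\Hil) \bigr )$. The operator-valued analog of the continuity of $\Weyl$ on Schwartz spaces, combined with the fact that the composition of a trace class operator with a bounded operator is again trace class, then shows that $h \mapsto h \Weyl g^*$ is a continuous linear map from $\Schwartz \bigl ( T^* \R^d , \mathcal{L}^1(\Hil,\Hil'') \bigr )$ into $\Schwartz \bigl ( T^* \R^d , \mathcal{L}^1(\Hil',\Hil'') \bigr )$. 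Composing this continuous map with the action of $F$ therefore produces a continuous linear functional on the test space, i.e.\ a well-defined element of $\Schwartz^* \bigl ( T^* \R^d , \mathcal{B}(\Hil,\Hil'') \bigr )$. Consistency with the existing Schwartz-level product, in the case when $F$ itself happens to be Schwartz, is precisely Proposition~\ref{operator_valued_calculus:prop:relation_Weyl_product_Schwartz_functions_duality_bracket}~(1); indeed, this is exactly why the particular pairing $h \Weyl g^*$ appears in Definition~\ref{operator_valued_calculus:defn:extension_Weyl_product_distributions}~(1) rather than some other combination. The second extension $f \Weyl G$ is handled by a mirror-image argument using part~(2) of the same proposition.

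For associativity, the relevant identities all have the form $(A_1 \Weyl A_2) \Weyl A_3 = A_1 \Weyl (A_2 \Weyl A_3)$ where exactly one of the three factors is a distribution and the other two are Schwartz; the position of the distribution (left, middle, or right) gives three cases. In each case I would pair both sides against an arbitrary test function $h$ and unfold via Definition~\ref{operator_valued_calculus:defn:extension_Weyl_product_distributions}, possibly using Lemma~\ref{operator_valued_calculus:lem:facts_needed_proof_extension_product}~(2) to push adjoints across Weyl products. Both sides then collapse to an expression of the form $\bscpro{F}{h_1 \Weyl h_2 \Weyl h_3}$ with the same three Schwartz factors on the right, differing only in how the triple product is parenthesized. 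Equality then follows from the associativity of $\Weyl$ on Schwartz spaces, which itself is immediate from associativity of the operator product under the bijection $\Op^A$ supplied by Lemma~\ref{operator_valued_calculus:lem:Weyl_quantization_Schwartz_class}.

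The main obstacle is not conceptual but bookkeeping: in each case one must carefully track which Hilbert spaces the various factors map between, and insert the adjoints $g^*$ or $f^*$ in precisely the positions that keep the sesquilinear duality bracket, the cyclicity of the trace, and the composability of the ranges of the Weyl factors all mutually compatible. This is exactly why Proposition~\ref{operator_valued_calculus:prop:relation_Weyl_product_Schwartz_functions_duality_bracket} was established at the Schwartz level first; once those identities are in hand, Definition~\ref{operator_valued_calculus:defn:extension_Weyl_product_distributions} is essentially forced and the present lemma reduces to a routine verification.
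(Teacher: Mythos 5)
Your proposal is correct and follows essentially the same route as the paper: well-definedness via the trace-class property of $h \Weyl g^*$ and $f^* \Weyl h$ together with continuity of the Schwartz-level product, consistency via Proposition~\ref{operator_valued_calculus:prop:relation_Weyl_product_Schwartz_functions_duality_bracket}, and associativity by pairing against a test function and reducing to associativity of $\Weyl$ on Schwartz functions (the paper only writes out the case with the distribution in the middle factor, but the argument is the same for all three placements). One small citation slip: injectivity of $\Op^A$ on Schwartz functions, needed to transfer associativity from the operator product, comes from the kernel map being an isomorphism (Lemma~\ref{operator_valued_calculus:lem:kernel_map_Schwartz_class}), not from Lemma~\ref{operator_valued_calculus:lem:Weyl_quantization_Schwartz_class}.
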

\begin{proof}
	First of all, the trace norm estimate~\eqref{appendix:extension_by_duality_operator_valued_calculus:eqn:trace_norm_estimate_product} implies that both, 
	\begin{align*}
		h \Weyl g^* &\in \Schwartz \bigl ( T^* \R^d , \mathcal{L}^1(\Hil',\Hil'') \bigr ) 
		, 
		\\
		f^* \Weyl h &\in \Schwartz \bigl ( T^* \R^d , \mathcal{L}^1(\Hil,\Hil') \bigr )
		, 
	\end{align*}
	are \emph{trace class}-valued Schwartz functions. Hence, the two extensions are well-defined. 
	
	When \eg $F \in \Schwartz \bigl ( T^* \R^d , \mathcal{L}^1(\Hil',\Hil'') \bigr ) \subset \Schwartz^* \bigl ( T^* \R^d , \mathcal{B}(\Hil',\Hil'') \bigr )$, then Definition~\ref{operator_valued_calculus:defn:extension_Weyl_product_distributions}~(1) is consistent with Proposition~\ref{operator_valued_calculus:prop:relation_Weyl_product_Schwartz_functions_duality_bracket}~(1). Similarly, the extension of multiplying with a distribution from the right is consistent with Proposition~\ref{operator_valued_calculus:prop:relation_Weyl_product_Schwartz_functions_duality_bracket}~(2). 
	
	Associativity of the Weyl product follows from the associativity of $\Weyl$ on Schwartz functions. Indeed, for any combination of Schwartz functions $g_R \in \Schwartz \bigl ( T^* \R^d , \mathcal{L}^1(\Hil'',\Hil''') \bigr )$, $g_L \in \Schwartz \bigl ( T^* \R^d , \mathcal{L}^1(\Hil,\Hil') \bigr )$ and $h \in \Schwartz(T^* \R^d , \mathcal{L}^1(\Hil,\Hil'''))$, and tempered distribution $F \in \Schwartz^* \bigl ( T^* \R^d , \mathcal{B}(\Hil',\Hil'') \bigr )$, we can move the brackets after pushing the Weyl product to the right-hand side of the duality bracket,
	\begin{align*}
		\bscpro{(g_L \Weyl F) \Weyl g_R}{h}_{\Schwartz(T^* \R^d , \mathcal{L}^1(\Hil,\Hil'''))} &= \bscpro{g_L \Weyl F}{h \Weyl g_R^*}_{\Schwartz(T^* \R^d , \mathcal{L}^1(\Hil',\Hil'''))}
		\\
		&= \bscpro{F}{g_L^* \Weyl (h \Weyl g_R^*)}_{\Schwartz(T^* \R^d , \mathcal{L}^1(\Hil',\Hil''))}
		\\
		&= \bscpro{F}{(g_L^* \Weyl h) \Weyl g_R^*}_{\Schwartz(T^* \R^d , \mathcal{L}^1(\Hil',\Hil''))} 
		\\
		&= \bscpro{F \Weyl g_R}{g_L^* \Weyl h}_{\Schwartz(T^* \R^d , \mathcal{L}^1(\Hil,\Hil''))} 
		\\
		&= \bscpro{g_L \Weyl (F \Weyl g_R)}{h}_{\Schwartz(T^* \R^d , \mathcal{L}^1(\Hil,\Hil'''))} 
		. 
	\end{align*}
	This finishes the proof. 
\end{proof}
The next step is to introduce the three magnetic Moyal spaces, which are the analogs of the Moyal algebras introduced in \cite[Section~V.C]{Mantoiu_Purice:magnetic_Weyl_calculus:2004}. 
\begin{definition}[Magnetic Moyal spaces]\label{operator_valued_calculus:defn:Moyal_spaces}
	We define the \emph{left Moyal space} 
	\begin{align*}
		\mathcal{M}^B_{\mathrm{L}} \bigl ( \mathcal{B}(\Hil',\Hil'') \bigr ) := \Bigl \{ F \in \Schwartz^* \bigl ( T^* \R^d , \mathcal{B}(\Hil',\Hil'') \bigr ) \; \; \big \vert \; \; &F \Weyl g \in \Schwartz \bigl ( T^* \R^d , \mathcal{L}^1(\Hil,\Hil'') \bigr ) 
		\Bigr . \\ 
		&\Bigl . 
		\forall g \in \Schwartz \bigl ( T^* \R^d , \mathcal{L}^1(\Hil,\Hil') \bigr ) \Bigr \} 
		, 
	\end{align*}
	the \emph{right Moyal space}
	\begin{align*}
		\mathcal{M}^B_{\mathrm{R}} \bigl ( \mathcal{B}(\Hil,\Hil') \bigr ) := \Bigl \{ G \in \Schwartz^* \bigl ( T^* \R^d , \mathcal{B}(\Hil,\Hil') \bigr ) \; \; \big \vert \; \; &f \Weyl G \in \Schwartz \bigl ( T^* \R^d , \mathcal{L}^1(\Hil,\Hil'') \bigr ) 
		\Bigr . \\ 
		&\Bigl . 
		\forall f \in \Schwartz \bigl ( T^* \R^d , \mathcal{L}^1(\Hil',\Hil'') \bigr ) \Bigr \} 
		, 
	\end{align*}
	and the \emph{Moyal space} 
	\begin{align*}
		\mathcal{M}^B \bigl ( \mathcal{B}(\Hil,\Hil') \bigr ) := \mathcal{M}^B_{\mathrm{L}} \bigl ( \mathcal{B}(\Hil,\Hil') \bigr ) \cap \mathcal{M}^B_{\mathrm{R}} \bigl ( \mathcal{B}(\Hil,\Hil') \bigr ) 
		. 
	\end{align*}
	We say two (left or right) Moyal spaces such as $\mathcal{M}^B \bigl ( \mathcal{B}(\Hil,\Hil') \bigr )$ and $\mathcal{M}^B \bigl ( \mathcal{B}(\Hil',\Hil'') \bigr )$ are \emph{composable} when initial space $\Hil'$ of one is the target space of the other. 
\end{definition}
Unless $\Hil = \Hil'$, we may not multiply elements of Moyal spaces with one another; hence, in general they only form (vector) \emph{spaces} rather than algebras. 

As we will soon see, these definitions are related to and consistent with the literature.
\begin{proposition}\label{operator_valued_calculus:prop:tensor_product_decomposition_Moyal_spaces}
	The three Moyal spaces admit a tensor product decomposition, 
	\begin{align*}
		\mathcal{M}^B_{\mathrm{L},\mathrm{R}} \bigl ( \mathcal{B}(\Hil,\Hil') \bigr ) &= \mathcal{M}^B_{\mathrm{L},\mathrm{R}}(\C) \otimes \mathcal{B}(\Hil,\Hil') 
		, 
		\\
		\mathcal{M}^B \bigl ( \mathcal{B}(\Hil,\Hil') \bigr ) &= \mathcal{M}^B(\C) \otimes \mathcal{B}(\Hil,\Hil') 
		, 
	\end{align*}
	where $\otimes$ can stand for either the $\eps$- or $\pi$-tensor product (\cf \cite[Chapter~43]{Treves:topological_vector_spaces:1967}), and $\mathcal{M}^B_{\mathrm{L},\mathrm{R}}(\C)$ and $\mathcal{M}^B(\C)$ are the Moyal spaces from \cite[Section~V.C]{Mantoiu_Purice:magnetic_Weyl_calculus:2004}. 
\end{proposition}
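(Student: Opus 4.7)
The strategy is to reduce everything to the nuclearity of the scalar Schwartz spaces, which we have already exploited to obtain the topological isomorphisms $\Schwartz^{(\ast)}(T^* \R^d , \mathcal{B}(\Hil,\Hil')) \cong \Schwartz^{(\ast)}(T^* \R^d) \otimes \mathcal{B}(\Hil,\Hil')$ and $\Schwartz(T^* \R^d , \mathcal{L}^1(\Hil,\Hil')) \cong \Schwartz(T^* \R^d) \otimes \mathcal{L}^1(\Hil,\Hil')$, with the $\eps$- and $\pi$-tensor completions coinciding. It suffices to treat the left Moyal case; the right one is entirely symmetric, and the two-sided space $\mathcal{M}^B = \mathcal{M}^B_{\mathrm{L}} \cap \mathcal{M}^B_{\mathrm{R}}$ then follows by intersection. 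The decisive input is that the magnetic Weyl product factorizes through the tensor structure: on elementary tensors one has
\begin{align*}
	(\tilde{F} \otimes T) \Weyl (\tilde{g} \otimes S) = (\tilde{F} \Weyl \tilde{g}) \otimes (T S)
	,
\end{align*}
where the $\Weyl$ on the right denotes the scalar magnetic Weyl product of \cite{Mantoiu_Purice:magnetic_Weyl_calculus:2004}. This is immediate from the integral formula \eqref{operator_valued_calculus:eqn:Weyl_product}: the kernel involves only phase-space variables while the operator-valued factors are multiplied pointwise in $X$. Bilinear continuity of $\Weyl$ together with the nuclear tensor identifications then extends the identity to the completed tensor products.

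For the inclusion $\supseteq$, I would take $F \in \mathcal{M}^B_{\mathrm{L}}(\C) \otimes \mathcal{B}(\Hil,\Hil')$ and a test function $g$ of the kind required by Definition~\ref{operator_valued_calculus:defn:Moyal_spaces}, decompose both as (limits of) sums of elementary tensors, and apply the factorization termwise. The scalar factors $\tilde{F}_i \Weyl \tilde{g}_j$ lie in $\Schwartz(T^* \R^d)$ by the very definition of the scalar Moyal space, while the operator compositions $T_i S_j$ are trace class since the composition of a bounded and a trace-class operator is again trace class. This places $F \Weyl g$ in $\Schwartz(T^* \R^d) \otimes \mathcal{L}^1 \cong \Schwartz(T^* \R^d , \mathcal{L}^1)$, and joint continuity of $\Weyl$ lets us pass to the limit.

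The substantive direction is $\subseteq$. Given $F \in \mathcal{M}^B_{\mathrm{L}}(\mathcal{B}(\Hil,\Hil'))$, the nuclear identification already places $F$ in $\Schwartz^*(T^* \R^d) \otimes \mathcal{B}(\Hil,\Hil')$, so the task is to argue that its scalar components live in $\mathcal{M}^B_{\mathrm{L}}(\C)$. The plan is to contract with trace-class operators: for each $R$ in the relevant predual, define the scalar distribution $\tilde{F}_R := (\mathrm{id} \otimes \sscpro{R}{\cdot}_{\trace}) F \in \Schwartz^*(T^* \R^d)$. The Moyal condition on $F$, combined with the factorization of $\Weyl$ above, yields $\tilde{F}_R \Weyl \tilde{g} \in \Schwartz(T^* \R^d)$ for every $\tilde{g} \in \Schwartz(T^* \R^d)$, so each $\tilde{F}_R$ indeed belongs to $\mathcal{M}^B_{\mathrm{L}}(\C)$. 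The main obstacle lies in the final reconstruction step: one must verify that the assignment $R \mapsto \tilde{F}_R$ is continuous from $\mathcal{L}^1$ into $\mathcal{M}^B_{\mathrm{L}}(\C)$ endowed with its subspace topology from $\Schwartz^*(T^* \R^d)$, after which uniqueness of the nuclear tensor topology forces $F$ to lie inside $\mathcal{M}^B_{\mathrm{L}}(\C) \otimes \mathcal{B}(\Hil,\Hil')$ rather than merely in the ambient $\Schwartz^*(T^* \R^d) \otimes \mathcal{B}(\Hil,\Hil')$.
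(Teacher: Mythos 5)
Your route is genuinely different from the paper's, and it contains a gap that you yourself flag but do not close. The paper's proof is essentially two lines: since subspaces of nuclear spaces are nuclear (\cite[Proposition~50.1, (50.3)]{Treves:topological_vector_spaces:1967}), the scalar Moyal spaces $\mathcal{M}^B_{\mathrm{L},\mathrm{R}}(\C) , \mathcal{M}^B(\C) \subset \Schwartz^*(T^* \R^d)$ are themselves nuclear, so their completed tensor product with $\mathcal{B}(\Hil,\Hil')$ is unique and is then identified with the spaces of Definition~\ref{operator_valued_calculus:defn:Moyal_spaces}. This nuclearity of the \emph{Moyal spaces themselves} — not merely of the ambient Schwartz spaces — is the one idea your argument is missing, and it is precisely what dissolves the topological ambiguity you are fighting in your last paragraph. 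Your positive contributions are real: the factorization $(\tilde{F} \otimes T) \Weyl (\tilde{g} \otimes S) = (\tilde{F} \Weyl \tilde{g}) \otimes (T S)$ does follow from \eqref{operator_valued_calculus:eqn:Weyl_product} because the integral kernel is scalar, and the $\supseteq$ inclusion goes through along the lines you sketch (though "joint continuity of $\Weyl$" in the first argument, for the subspace topology on the Moyal space, is itself not free).

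The genuine gap is the $\subseteq$ direction. First, a repairable imprecision: the contraction $R \mapsto \tilde{F}_R$ does not commute with $\Weyl$ on the nose — from the factorization one gets $\bscpro{R'}{\,\cdot\,}$ applied to $F \Weyl (\tilde{g} \otimes S)$ equal to $\tilde{F}_{R} \Weyl \tilde{g}$ only after factoring $R = R' S^*$ with $R'$ bounded and $S^* \in \mathcal{L}^1$ (polar decomposition does this), and you need to say so before you may conclude $\tilde{F}_R \in \mathcal{M}^B_{\mathrm{L}}(\C)$. Second, and decisively, the reconstruction step is left as an announced obligation ("one must verify\ldots"): knowing that every contraction $\tilde{F}_R$ lies in the subspace $\mathcal{M}^B_{\mathrm{L}}(\C) \subset \Schwartz^*(T^* \R^d)$ does not by itself place $F$ in $\mathcal{M}^B_{\mathrm{L}}(\C) \otimes \mathcal{B}(\Hil,\Hil')$; "uniqueness of the nuclear tensor topology" is a statement about $\Schwartz^*(T^* \R^d) \otimes \mathcal{B}(\Hil,\Hil')$ and does not supply the componentwise membership criterion you are invoking. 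To close this you would need either the closedness of $\mathcal{M}^B_{\mathrm{L}}(\C)$ in $\Schwartz^*(T^* \R^d)$ together with the corresponding statement for tensoring closed subspaces, or — much more economically — the paper's observation that $\mathcal{M}^B_{\mathrm{L}}(\C)$ is nuclear, after which the two completions coincide and the identification with Definition~\ref{operator_valued_calculus:defn:Moyal_spaces} can be made directly. As written, the proposal does not establish the harder inclusion.
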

\begin{proof}
	As subspaces of a nuclear space are nuclear (\cf \cite[Proposition~50.1, (50.3)]{Treves:topological_vector_spaces:1967}), the three Moyal algebras for scalar-valued functions 
	\begin{align*}
		\mathcal{M}^B_{\mathrm{L},\mathrm{R}}(\C) , \mathcal{M}^B(\C) \subset \Schwartz^*(T^* \R^d) 
	\end{align*}
	are nuclear as well. Consequently, the tensor product of them with $\mathcal{B}(\Hil,\Hil')$ is unique and coincides with the Moyal spaces from Definition~\ref{operator_valued_calculus:defn:Moyal_spaces}. 
\end{proof}
Consequently, the mathematical properties of the Moyal \emph{spaces} defined here are essentially the same as the Moyal algebras from the literature (\cf \cite[Section~V.C]{Mantoiu_Purice:magnetic_Weyl_calculus:2004}). One example is the following 
\begin{corollary}\label{operator_valued_calculus:cor:left_right_Moyal_spaces_related_via_adjoint}
	Left and right Moyal spaces are related through the adjoint,  
	\begin{align*}
		\mathcal{M}^B_{\mathrm{L},\mathrm{R}} \bigl ( \mathcal{B}(\Hil,\Hil') \bigr )^* = \mathcal{M}^B_{\mathrm{R},\mathrm{L}} \bigl ( \mathcal{B}(\Hil',\Hil) \bigr ) 
		. 
	\end{align*}
\end{corollary}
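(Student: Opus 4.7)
My plan is to exploit the fact that the pointwise adjoint $F \mapsto F^*$ is an involutive, conjugate-linear topological isomorphism at every relevant level, and that on Schwartz trace-class functions it intertwines left and right Weyl multiplication via the identity $(F \Weyl g)^* = g^* \Weyl F^*$ from Lemma~\ref{operator_valued_calculus:lem:facts_needed_proof_extension_product}(2). Since $(F^*)^* = F$, it suffices to show the single-sided inclusion $\bigl \{ F^* \; \big \vert \; F \in \mathcal{M}^B_{\mathrm{L}} \bigl ( \mathcal{B}(\Hil,\Hil') \bigr ) \bigr \} \subseteq \mathcal{M}^B_{\mathrm{R}} \bigl ( \mathcal{B}(\Hil',\Hil) \bigr )$; the inclusion in the other direction and the analog for $\mathcal{M}^B_{\mathrm{R}}$ are obtained by swapping the roles of L and R and reapplying involutivity.

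The first preparatory step is to extend the pointwise adjoint from Schwartz trace-class functions to tempered distributions. For Schwartz functions, cyclicity of the trace gives $\scpro{F^*}{h}_{\Schwartz(T^* \R^d , \mathcal{L}^1(\Hil',\Hil))} = \scpro{h^*}{F}_{\Schwartz(T^* \R^d , \mathcal{L}^1(\Hil,\Hil'))}$, and since the right-hand side depends continuously on $h$ for any $F \in \Schwartz^* \bigl ( T^* \R^d , \mathcal{B}(\Hil,\Hil') \bigr )$, this formula defines $F^* \in \Schwartz^* \bigl ( T^* \R^d , \mathcal{B}(\Hil',\Hil) \bigr )$ and yields a conjugate-linear topological vector space isomorphism that squares to the identity.

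Next, I promote the adjoint--Weyl compatibility to distributions, \ie I prove that $(F \Weyl f^*)^* = f \Weyl F^*$ for $F \in \Schwartz^* \bigl ( T^* \R^d , \mathcal{B}(\Hil,\Hil') \bigr )$ and $f \in \Schwartz \bigl ( T^* \R^d , \mathcal{L}^1(\Hil,\mathcal{K}) \bigr )$. Pairing the left-hand side against an arbitrary test function $h \in \Schwartz \bigl ( T^* \R^d , \mathcal{L}^1(\Hil',\mathcal{K}) \bigr )$, applying the extended adjoint, and then invoking Definition~\ref{operator_valued_calculus:defn:extension_Weyl_product_distributions}(1) rewrites the pairing as one involving $h^* \Weyl f$ against $F$. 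Performing the mirror-image unpacking for the right-hand side via Definition~\ref{operator_valued_calculus:defn:extension_Weyl_product_distributions}(2), and using the associativity from Lemma~\ref{operator_valued_calculus:lem:extension_Weyl_product_to_tempered_distributions_associative} together with the Schwartz-function version of the adjoint identity, shows that both pairings coincide; since this holds for every $h$, the distributional identity follows.

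With these two tools in hand, the corollary is a one-line deduction. Given $F \in \mathcal{M}^B_{\mathrm{L}} \bigl ( \mathcal{B}(\Hil,\Hil') \bigr )$ and arbitrary $f \in \Schwartz \bigl ( T^* \R^d , \mathcal{L}^1(\Hil,\mathcal{K}) \bigr )$, Lemma~\ref{operator_valued_calculus:lem:facts_needed_proof_extension_product}(1) yields $f^* \in \Schwartz \bigl ( T^* \R^d , \mathcal{L}^1(\mathcal{K},\Hil) \bigr )$, so the defining property of the left Moyal space gives $F \Weyl f^* \in \Schwartz \bigl ( T^* \R^d , \mathcal{L}^1(\mathcal{K},\Hil') \bigr )$; applying the pointwise adjoint (which preserves Schwartz trace-class regularity by Lemma~\ref{operator_valued_calculus:lem:facts_needed_proof_extension_product}(1) again) and using the distributional identity from the previous paragraph produces $f \Weyl F^* = (F \Weyl f^*)^* \in \Schwartz \bigl ( T^* \R^d , \mathcal{L}^1(\Hil',\mathcal{K}) \bigr )$, so $F^* \in \mathcal{M}^B_{\mathrm{R}} \bigl ( \mathcal{B}(\Hil',\Hil) \bigr )$. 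The main obstacle is the distributional bootstrap of the adjoint--Weyl identity in the third paragraph: it requires careful tracking of which Hilbert space sits in which ``slot'' of the sesquilinear duality brackets introduced in Section~\ref{operator_valued_calculus:extension_by_duality:relevant_spaces}, but once that bookkeeping is resolved the rest of the argument is purely formal.
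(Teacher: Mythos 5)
Your proof is correct, but it takes a genuinely different route from the paper's. The paper derives Corollary~\ref{operator_valued_calculus:cor:left_right_Moyal_spaces_related_via_adjoint} as an immediate consequence of the tensor product decomposition in Proposition~\ref{operator_valued_calculus:prop:tensor_product_decomposition_Moyal_spaces}: because $\mathcal{M}^B_{\mathrm{L},\mathrm{R}} \bigl ( \mathcal{B}(\Hil,\Hil') \bigr ) = \mathcal{M}^B_{\mathrm{L},\mathrm{R}}(\C) \otimes \mathcal{B}(\Hil,\Hil')$ with a uniquely determined tensor product (nuclearity of the scalar Moyal spaces), the known scalar-valued relation $\mathcal{M}^B_{\mathrm{L}}(\C)^* = \mathcal{M}^B_{\mathrm{R}}(\C)$ from Măntoiu--Purice combines with the adjoint $\mathcal{B}(\Hil,\Hil') \rightarrow \mathcal{B}(\Hil',\Hil)$ to give the claim, and no separate argument is written out. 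You instead verify the statement directly from Definitions~\ref{operator_valued_calculus:defn:Moyal_spaces} and \ref{operator_valued_calculus:defn:extension_Weyl_product_distributions}, after extending ${}^*$ to distributions by duality and proving $(F \Weyl f^*)^* = f \Weyl F^*$ for a distribution $F$ and a trace-class-valued Schwartz function $f$. That identity does check out: pairing either side against a test function $h$, Definition~\ref{operator_valued_calculus:defn:extension_Weyl_product_distributions}~(1) reduces the left-hand side to $\overline{\bscpro{F}{h^* \Weyl f}}$, while Definition~\ref{operator_valued_calculus:defn:extension_Weyl_product_distributions}~(2) together with Lemma~\ref{operator_valued_calculus:lem:facts_needed_proof_extension_product}~(2) applied to the Schwartz functions $f^*$ and $h$ reduces the right-hand side to the same expression (associativity is not actually needed in this step). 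The single inclusion plus involutivity of ${}^*$ then correctly yields the set equality. Your route is longer but entirely self-contained — it invokes neither nuclearity nor the external scalar-valued result — and as a by-product it makes explicit the distributional adjoint and the intertwining identity that the paper uses only implicitly; the paper's route is a one-liner but buys this at the price of the tensor-product machinery.
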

However, in general, they are not algebras because when $\Hil \neq \Hil'$ we cannot multiply elements in these sets with one another. Still, elements of \emph{composable} Moyal spaces can be multiplied with one another just as before. 
\begin{definition}[Extension of Weyl product to Moyal spaces]\label{operator_valued_calculus:defn:extension_Weyl_product_Moyal_spaces}
	\begin{enumerate}[(1)]
		\item For all $F \in \mathcal{M}^B_{\mathrm{L}} \bigl ( \mathcal{B}(\Hil',\Hil'') \bigr )$ and $G \in \mathcal{M}^B_{\mathrm{L}} \bigl ( \mathcal{B}(\Hil,\Hil') \bigr )$ we define their Moyal product as 
		\begin{align*}
			\bscpro{F \Weyl G}{h}_{\Schwartz(T^* \R^d , \mathcal{L}^1(\Hil,\Hil''))} := \bscpro{F}{(G \Weyl h^*)^*}_{\Schwartz(T^* \R^d , \mathcal{L}^1(\Hil',\Hil''))}
			&&
			\forall h \in \Schwartz \bigl ( T^* \R^d , \mathcal{L}^1(\Hil,\Hil'') \bigr )
			. 
		\end{align*}
		\item For all $F \in \mathcal{M}^B_{\mathrm{R}} \bigl ( \mathcal{B}(\Hil',\Hil'') \bigr )$ and $G \in \mathcal{M}^B_{\mathrm{R}} \bigl ( \mathcal{B}(\Hil,\Hil') \bigr )$ we define their Moyal product as 
		\begin{align*}
			\bscpro{F \Weyl G}{h}_{\Schwartz(T^* \R^d , \mathcal{L}^1(\Hil,\Hil''))} := \bscpro{G}{(h^* \Weyl F)^*}_{\Schwartz(T^* \R^d , \mathcal{L}^1(\Hil,\Hil'))}
			&&
			\forall h \in \Schwartz \bigl ( T^* \R^d , \mathcal{L}^1(\Hil,\Hil'') \bigr )
			. 
		\end{align*}
		\item When $F \in \mathcal{M}^B \bigl ( \mathcal{B}(\Hil',\Hil'') \bigr )$ and $G \in \mathcal{M}^B \bigl ( \mathcal{B}(\Hil,\Hil') \bigr )$ lie in their corresponding composable Moyal spaces, we can define the product by (1) or (2). 
	\end{enumerate}
\end{definition}
Apart from making the technical arguments in this section, we will almost exclusively be concerned with Moyal spaces $\mathcal{M}^B \bigl ( \mathcal{B}(\Hil,\Hil') \bigr )$ rather than left or right Moyal spaces. One of the reasons is that we want $\Weyl$ to be associative. 
\begin{lemma}\label{operator_valued_calculus:lem:extension_Weyl_product_Moyal_spaces}
	\begin{enumerate}[(1)]
		\item The extension of the Weyl product to composable Moyal spaces is well-defined and associative. 
		\item The Weyl product defined on left and right Moyal spaces fails to be associative. 
		\item When $\Hil = \Hil'$, left and right Moyal spaces $\mathcal{M}^B_{\mathrm{L},\mathrm{R}} \bigl ( \mathcal{B}(\Hil) \bigr )$ form algebras. And the Moyal space $\mathcal{M}^B \bigl ( \mathcal{B}(\Hil) \bigr )$ endowed with the involution ${}^*$ forms a $\ast$-algebra. In that case, we will refer to them as (left/right) \emph{Moyal algebras}. 
	\end{enumerate}
\end{lemma}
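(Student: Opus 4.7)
For item (1) well-definedness, I will verify that each right-hand side in Definition~\ref{operator_valued_calculus:defn:extension_Weyl_product_Moyal_spaces} pairs a tempered distribution against a Schwartz function in the appropriate trace-class space. For example, with $F \in \mathcal{M}^B_{\mathrm{L}}(\mathcal{B}(\Hil',\Hil''))$, $G \in \mathcal{M}^B_{\mathrm{L}}(\mathcal{B}(\Hil,\Hil'))$ and $h \in \Schwartz(T^*\R^d,\mathcal{L}^1(\Hil,\Hil''))$: Lemma~\ref{operator_valued_calculus:lem:facts_needed_proof_extension_product}(1) gives $h^* \in \Schwartz(T^*\R^d,\mathcal{L}^1(\Hil'',\Hil))$, the left-Moyal property of $G$ yields $G \Weyl h^* \in \Schwartz(T^*\R^d,\mathcal{L}^1(\Hil'',\Hil'))$, and a second application of the same lemma puts $(G \Weyl h^*)^*$ into the correct Schwartz space to pair against $F$. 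Continuity in $h$ follows from continuity of $\Weyl$ and ${}^*$ on Schwartz spaces, and the right-Moyal case is symmetric. The delicate consistency of the two definitions on the intersection $\mathcal{M}^B$ I would deduce from the tensor product decomposition of Proposition~\ref{operator_valued_calculus:prop:tensor_product_decomposition_Moyal_spaces}, which reduces the statement to the scalar-valued case already established in \cite{Mantoiu_Purice:magnetic_Weyl_calculus:2004}; alternatively, one can pair both versions against $h$ and apply Proposition~\ref{operator_valued_calculus:prop:relation_Weyl_product_Schwartz_functions_duality_bracket}(1)-(2) with a density argument.

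For associativity on composable Moyal spaces, my plan is to mimic Lemma~\ref{operator_valued_calculus:lem:extension_Weyl_product_to_tempered_distributions_associative}: pair $(F_1 \Weyl F_2) \Weyl F_3$ and $F_1 \Weyl (F_2 \Weyl F_3)$ against a test function $h$ and iteratively push Weyl products across the duality bracket using Definition~\ref{operator_valued_calculus:defn:extension_Weyl_product_Moyal_spaces}. After a few such moves both sides reduce to $\bscpro{F_1}{\Phi(F_2,F_3,h)}$ for the same Schwartz-function-valued expression $\Phi$, and equality follows from associativity of $\Weyl$ on Schwartz functions. It is essential at each rearrangement that the intermediate product (say $F_1 \Weyl F_2$) lie in a Moyal space composable with the remaining factor, which is precisely why we insist on two-sided membership $F_i \in \mathcal{M}^B$ and not merely in one of $\mathcal{M}^B_{\mathrm{L}}$ or $\mathcal{M}^B_{\mathrm{R}}$.

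Item (2) then follows from the very same obstruction: in a purely left (resp.\ right) Moyal setting, $F_1 \Weyl F_2$ need not be left (resp.\ right) Moyal again, so the rearrangement argument breaks at the symmetric step needed for associativity. I would make this concrete by tensoring a scalar-valued $F \in \mathcal{M}^B_{\mathrm{L}}(\C)\setminus \mathcal{M}^B(\C)$ (examples of such distributions can be extracted from \cite{Mantoiu_Purice:magnetic_Weyl_calculus:2004}) with a rank-one operator on $\Hil$ and exhibiting an explicit triple for which the two bracketings, tested against a suitably chosen Schwartz function, yield different values.

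For item (3), closure of $\mathcal{M}^B_{\mathrm{L}}(\mathcal{B}(\Hil))$ under $\Weyl$ reduces, for $F_1, F_2$ in the left Moyal algebra and any Schwartz $g$, to the rearrangement $(F_1 \Weyl F_2) \Weyl g = F_1 \Weyl (F_2 \Weyl g)$, which is valid here because the intermediate $F_2 \Weyl g$ is already Schwartz so no Moyal obstruction arises; the right Moyal algebra is handled analogously. Stability of $\mathcal{M}^B(\mathcal{B}(\Hil))$ under the pointwise adjoint is Corollary~\ref{operator_valued_calculus:cor:left_right_Moyal_spaces_related_via_adjoint} specialized to $\Hil = \Hil'$, and the $\ast$-compatibility $(F \Weyl G)^* = G^* \Weyl F^*$ extends from the Schwartz-level identity (Lemma~\ref{operator_valued_calculus:lem:facts_needed_proof_extension_product}(2)) by pairing both sides against a test function $h$ and unfolding each side with Definition~\ref{operator_valued_calculus:defn:extension_Weyl_product_Moyal_spaces}. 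The hardest step throughout is the consistency check in item (1) on $\mathcal{M}^B$, together with tracking at every rearrangement in the associativity proof that intermediate products stay in a Moyal space composable with the next factor; both require simultaneously invoking the left- and right-Moyal properties and being meticulous about which trace-class Schwartz space each expression inhabits.
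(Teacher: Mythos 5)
Your overall strategy matches the paper's: well-definedness via the left/right Moyal properties and Lemma~\ref{operator_valued_calculus:lem:facts_needed_proof_extension_product}, associativity by shuffling $\Weyl$ across the duality bracket, and $\ast$-stability from Corollary~\ref{operator_valued_calculus:cor:left_right_Moyal_spaces_related_via_adjoint}. Your closure argument in item~(3) — rewriting $(F_1 \Weyl F_2) \Weyl g = F_1 \Weyl (F_2 \Weyl g)$ and noting the inner product is already Schwartz — is in fact more explicit than the paper's one-line justification. The main divergence is in how the hard step of item~(1), the agreement of Definitions~\ref{operator_valued_calculus:defn:extension_Weyl_product_Moyal_spaces}~(1) and (2) on $\mathcal{M}^B$, is handled. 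The paper sandwiches $F \Weyl G$ between regularizing Schwartz functions $h_{\mathrm{L}}, h_{\mathrm{R}}$, shows both definitions collapse to the same expression $(h_{\mathrm{L}} \Weyl F) \Weyl (G \Weyl h_{\mathrm{R}})$ using only the associativity of the distribution-times-Schwartz extension from Lemma~\ref{operator_valued_calculus:lem:extension_Weyl_product_to_tempered_distributions_associative}, and then lets $h_{\mathrm{L},\mathrm{R}} \to \id$; the same device handles associativity of the triple product. Your ``density argument'' is essentially this regularization left implicit, and is fine once spelled out. Your alternative route via the tensor product decomposition of Proposition~\ref{operator_valued_calculus:prop:tensor_product_decomposition_Moyal_spaces} is shakier: the operator-valued product is \emph{defined} by the duality formulas, not by extending the scalar product across the tensor factor, so reducing to the scalar case presupposes a compatibility between the two descriptions that is itself of the same difficulty as the consistency you are trying to prove.

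One further point on item~(2): the paper's proof is merely the observation that $\Weyl_{\mathrm{L}}$ and $\Weyl_{\mathrm{R}}$ each give meaning to only one bracketing of a triple product, so associativity is not available on purely left or right Moyal spaces. You instead promise an explicit counterexample built from a scalar distribution in $\mathcal{M}^B_{\mathrm{L}}(\C) \setminus \mathcal{M}^B(\C)$ tensored with a rank-one operator. That is a strictly stronger claim than the lemma requires, and you have not exhibited the triple nor verified that the two bracketings actually differ; as written this is a promissory note rather than a proof. Either carry out the construction or fall back on the paper's weaker (but sufficient) reading of ``fails to be associative.''
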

\begin{proof}
	\begin{enumerate}[(1)]
		\item The strategy is to regularize the double product first, prove it for the regularized expression and then take the limit again. 
		
		To make our arguments more explicit, for the purpose of the proof we will distinguish between the left extension $\Weyl_{\mathrm{L}}$ (Definition~\ref{operator_valued_calculus:defn:extension_Weyl_product_Moyal_spaces}~(1)) and the right extension $\Weyl_{\mathrm{R}}$ (Definition~\ref{operator_valued_calculus:defn:extension_Weyl_product_Moyal_spaces}~(2)). Lastly, we will also need the two extensions from Definition~\ref{operator_valued_calculus:defn:extension_Weyl_product_distributions}, which we will denote with $\Weyl_{\Schwartz^*}$; importantly, $\Weyl_{\Schwartz^*}$ is associative when we multiply one distribution with finitely many Schwartz functions (Lemma~\ref{operator_valued_calculus:lem:extension_Weyl_product_to_tempered_distributions_associative}). 
		
		That being said, now we proceed with the actual computations. Let us pick two distributions $F \in \mathcal{M}^B \bigl ( \mathcal{B}(\Hil',\Hil'') \bigr )$ and $G \in \mathcal{M}^B \bigl ( \mathcal{B}(\Hil,\Hil') \bigr )$ from composable Moyal spaces as well as two regularizing Schwartz functions $h_{\mathrm{L}} \in \Schwartz \bigl ( T^* \R^d , \mathcal{L}^1(\Hil'') \bigr )$ and $h_{\mathrm{R}} \in \Schwartz \bigl ( T^* \R^d , \mathcal{L}^1(\Hil) \bigr )$. Later on, we will let $h_{\mathrm{L},\mathrm{R}}$ tend to $\id_{\Hil'',\Hil}$. Then for all test functions $j \in \Schwartz \bigl ( T^* \R^d , \mathcal{L}^1(\Hil,\Hil'') \bigr )$ we can rewrite the left product in the following fashion: 
		\begin{align*}
			\Bscpro{h_{\mathrm{L}} \Weyl_{\Schwartz^*} \bigl ( F \Weyl_{\mathrm{L}} G \bigr ) \Weyl_{\Schwartz^*} h_{\mathrm{R}} \, }{ \, j}_{\Schwartz(T^* \R^d,\mathcal{L}^1(\Hil,\Hil''))} &= \Bscpro{F \Weyl_{\mathrm{L}} G \, }{ \, h_{\mathrm{L}}^* \Weyl j \Weyl h_{\mathrm{R}}^*}_{\Schwartz(T^* \R^d,\mathcal{L}^1(\Hil,\Hil''))}
			\\
			&= \Bscpro{F \, }{ \, \Bigl ( G \Weyl_{\Schwartz^*} \bigl ( h_{\mathrm{L}}^* \Weyl_{\Schwartz^*} j \Weyl_{\Schwartz^*} h_{\mathrm{R}}^* \bigr )^* \Bigr )^*}_{\Schwartz(T^* \R^d,\mathcal{L}^1(\Hil',\Hil''))}
			\\
			&= \Bscpro{F \, }{ \, h_{\mathrm{L}}^* \Weyl_{\Schwartz^*} \bigl ( G \Weyl_{\Schwartz^*} h_{\mathrm{R}} \Weyl_{\Schwartz^*} j^* \bigr )^*}_{\Schwartz(T^* \R^d,\mathcal{L}^1(\Hil',\Hil''))}
			\\
			&= \Bscpro{h_{\mathrm{L}} \Weyl_{\Schwartz^*} F \, }{ \, \Bigl ( \bigl ( h_{\mathrm{R}}^* \Weyl_{\Schwartz^*} G^* \bigr )^* \Weyl_{\Schwartz^*} j^* \Bigr )^* }_{\Schwartz(T^* \R^d,\mathcal{L}^1(\Hil',\Hil''))}
			\\
			&= \Bscpro{\bigl ( h_{\mathrm{L}} \Weyl_{\Schwartz^*} F \bigr ) \Weyl_{\Schwartz^*} \bigl ( G \Weyl_{\Schwartz^*} h_{\mathrm{R}} \bigr ) \, }{ \, j}_{\Schwartz(T^* \R^d,\mathcal{L}^1(\Hil,\Hil''))}
		\end{align*}
		Note that we were able to use the associativity of $\Weyl_{\Schwartz^*}$, exploiting that \eg $h_{\mathrm{L}} \Weyl_{\Schwartz^*} F \in \Schwartz \bigl ( T^* \R^d , \mathcal{L}^1(\Hil',\Hil'') \bigr )$ is again a Schwartz function by assumption. 
		
		We can compare the above expression with the result of a similar computation, this time starting with the right product, 
		\begin{align*}
			\Bscpro{h_{\mathrm{L}} \Weyl_{\Schwartz^*} \bigl ( F \Weyl_{\mathrm{R}} G \bigr ) \Weyl_{\Schwartz^*} h_{\mathrm{R}} \, }{ \, j}_{\Schwartz(T^* \R^d,\mathcal{L}^1(\Hil,\Hil''))} &= \Bscpro{F \Weyl_{\mathrm{R}} G \, }{ \, h_{\mathrm{L}}^* \Weyl j \Weyl h_{\mathrm{R}}^*}_{\Schwartz(T^* \R^d,\mathcal{L}^1(\Hil,\Hil''))}
			\\
			&= \Bscpro{G \, }{ \, \Bigl ( \bigl ( h_{\mathrm{L}}^* \Weyl_{\Schwartz^*} j \Weyl_{\Schwartz^*} h_{\mathrm{R}}^* \bigr )^* \Weyl_{\Schwartz^*} F \Bigr )^*}_{\Schwartz(T^* \R^d,\mathcal{L}^1(\Hil,\Hil'))}
			\\
			&= \Bscpro{G \, }{ \, F^* \Weyl_{\Schwartz^*} h_{\mathrm{L}}^* \Weyl_{\Schwartz^*} j \Weyl_{\Schwartz^*} h_{\mathrm{R}}^*}_{\Schwartz(T^* \R^d,\mathcal{L}^1(\Hil,\Hil'))}
			\\
			&= \Bscpro{G \Weyl_{\Schwartz^*} h_{\mathrm{R}} \, }{ \, \bigl ( h_{\mathrm{L}} \Weyl_{\Schwartz^*} F \bigr )^* \Weyl_{\Schwartz^*} j}_{\Schwartz(T^* \R^d,\mathcal{L}^1(\Hil,\Hil'))}
			\\
			&= \Bscpro{\bigl ( h_{\mathrm{L}} \Weyl_{\Schwartz^*} F \bigr ) \Weyl_{\Schwartz^*} \bigl ( G \Weyl_{\Schwartz^*} h_{\mathrm{R}} \bigr ) \, }{ \, j}_{\Schwartz(T^* \R^d,\mathcal{L}^1(\Hil,\Hil''))} 
			. 
		\end{align*}
		Indeed, the two expressions agree, 
		\begin{align*}
			\Bscpro{h_{\mathrm{L}} \Weyl_{\Schwartz^*} \bigl ( F \Weyl_{\mathrm{L}} G \bigr ) \Weyl_{\Schwartz^*} h_{\mathrm{R}} \, }{ \, j}_{\Schwartz(T^* \R^d,\mathcal{L}^1(\Hil,\Hil''))} &= \Bscpro{h_{\mathrm{L}} \Weyl_{\Schwartz^*} \bigl ( F \Weyl_{\mathrm{R}} G \bigr ) \Weyl_{\Schwartz^*} h_{\mathrm{R}} \, }{ \, j}_{\Schwartz(T^* \R^d,\mathcal{L}^1(\Hil,\Hil''))} 
			, 
		\end{align*}
		and taking the limit $h_{\mathrm{L}} \rightarrow \id_{\Hil''}$ and $h_{\mathrm{R}} \rightarrow \id_{\Hil}$ gives $F \Weyl_{\mathrm{L}} G = F \Weyl_{\mathrm{R}} G$. 
		
		Furthermore, this also implies associativity of the extension of $\Weyl$ to composable Moyal spaces: adding two regularizing Schwartz functions $j_{\mathrm{L},\mathrm{R}}$, we compare 
		\begin{align*}
			&\Bscpro{j_{\mathrm{L}} \Weyl \bigl ( (F \Weyl G) \Weyl H \bigr ) \Weyl j_{\mathrm{R}}}{k}_{\Schwartz(T^* \R^d,\mathcal{L}^1(\Hil,\Hil'''))} 
			= \\
			&\qquad 
			= \Bscpro{(F \Weyl_{\mathrm{R}} G) \Weyl_{\mathrm{L}} H}{j_{\mathrm{L}}^* \Weyl_{\Schwartz^*} k  \Weyl_{\Schwartz^*} j_{\mathrm{R}}^*}_{\Schwartz(T^* \R^d,\mathcal{L}^1(\Hil,\Hil'''))}
			\\
			&\qquad 
			= \Bscpro{G}{\Bigl ( \Bigl ( \bigl ( H \Weyl_{\Schwartz^*} j_{\mathrm{R}} \bigr ) \Weyl_{\Schwartz^*} k^* \Bigr ) \Weyl_{\Schwartz^*} \bigl ( j_{\mathrm{L}} \Weyl_{\Schwartz^*} F \bigr ) \Bigr )^*}_{\Schwartz(T^* \R^d,\mathcal{L}^1(\Hil',\Hil''))}
			\\
			&\qquad 
			= \Bscpro{G}{\Bigl ( \bigl ( H \Weyl_{\Schwartz^*} j_{\mathrm{R}} \bigr ) \Weyl_{\Schwartz^*} k^* \Weyl_{\Schwartz^*} \bigl ( j_{\mathrm{L}} \Weyl_{\Schwartz^*} F \bigr ) \Bigr )^*}_{\Schwartz(T^* \R^d,\mathcal{L}^1(\Hil',\Hil''))}
		\end{align*}
		with 
		\begin{align*}
			&\Bscpro{j_{\mathrm{L}} \Weyl \bigl ( F \Weyl (G \Weyl H) \bigr ) \Weyl j_{\mathrm{R}}}{k}_{\Schwartz(T^* \R^d,\mathcal{L}^1(\Hil,\Hil'''))} 
			= \\
			&\qquad 
			= \Bscpro{F \Weyl_{\mathrm{L}} (G \Weyl_{\mathrm{R}} H) \, }{ \, j_{\mathrm{L}}^* \Weyl_{\Schwartz^*} k  \Weyl_{\Schwartz^*} j_{\mathrm{R}}^*}_{\Schwartz(T^* \R^d,\mathcal{L}^1(\Hil,\Hil'''))}
			\\
			&\qquad 
			= \Bscpro{G}{\Bigl ( \bigl ( H \Weyl_{\Schwartz^*} j_{\mathrm{R}} \bigr ) \Weyl_{\Schwartz^*} \Bigl ( k^* \Weyl_{\Schwartz^*} \bigl ( j_{\mathrm{L}} \Weyl_{\Schwartz^*} F \bigr ) \Bigr ) \Bigr )^*}_{\Schwartz(T^* \R^d,\mathcal{L}^1(\Hil',\Hil''))}
			\\
			&\qquad 
			= \Bscpro{G}{\Bigl ( \bigl ( H \Weyl_{\Schwartz^*} j_{\mathrm{R}} \bigr ) \Weyl_{\Schwartz^*} k^* \Weyl_{\Schwartz^*} \bigl ( j_{\mathrm{L}} \Weyl_{\Schwartz^*} F \bigr ) \Bigr )^*}_{\Schwartz(T^* \R^d,\mathcal{L}^1(\Hil',\Hil''))}
		\end{align*}
		and see that they agree; again, associativity of $\Weyl_{\Schwartz^*}$ was the essential ingredient. Taking once more the limit $j_{\mathrm{L}} \rightarrow \id_{\Hil'''}$ and $j_{\mathrm{R}} \rightarrow \id_{\Hil}$ and remarking that the limits converge in the correct Moyal spaces yields the claim. 
		\item This follows directly from the definition. More specifically, if we follow Definition~\ref{operator_valued_calculus:defn:extension_Weyl_product_Moyal_spaces} to give meaning to 
		\begin{align*}
			F \Weyl_{\mathrm{L}} G \Weyl_{\mathrm{L}} H &= \bigl ( F \Weyl_{\mathrm{L}} G \bigr ) \Weyl_{\mathrm{L}} H 
			, 
			\\
			F \Weyl_{\mathrm{R}} G \Weyl_{\mathrm{R}} H &= F \Weyl_{\mathrm{R}} \bigl ( G  \Weyl_{\mathrm{R}} H \bigr ) 
			, 
		\end{align*}
		we see that this implicitly chooses an order. 
		\item Definitions~\ref{operator_valued_calculus:defn:extension_Weyl_product_Moyal_spaces}~(1) and (2) declare products on left and right Moyal spaces $\mathcal{M}^B_{\mathrm{L},\mathrm{R}} \bigl ( \mathcal{B}(\Hil) \bigr )$, respectively. Thus, they indeed form algebras. 
		
		To show that $\mathcal{M}^B \bigl ( \mathcal{B}(\Hil) \bigr )$ defines a $\ast$-algebra, we first note that we can use either extension of the product in Definition~\ref{operator_valued_calculus:defn:extension_Weyl_product_Moyal_spaces}, according to item~(2) they are equivalent. Consequently, also $\mathcal{M}^B \bigl ( \mathcal{B}(\Hil) \bigr )$ defines an algebra. 
		
		Proving that it is left invariant by the adjoint, we note the relationship between left and right Moyal algebras 
		\begin{align*}
			\mathcal{M}^B_{\mathrm{L},\mathrm{R}} \bigl ( \mathcal{B}(\Hil) \bigr )^* = \mathcal{M}^B_{\mathrm{R},\mathrm{L}} \bigl ( \mathcal{B}(\Hil) \bigr ) 
		\end{align*}
		proven in Corollary~\ref{operator_valued_calculus:cor:left_right_Moyal_spaces_related_via_adjoint}. As the Moyal algebra is the intersection of left and right Moyal algebra, we deduce $\mathcal{M}^B \bigl ( \mathcal{B}(\Hil) \bigr )$ is invariant under ${}^*$, \ie a $\ast$-algebra. 
	\end{enumerate}
\end{proof}
Let us summarize some important properties of Moyal spaces. To formulate the statements, we need to introduce the space of linear continuous maps 
\begin{align*}
	\mathcal{L} \bigl ( \Schwartz(\R^d,\Hil) \, , \, \Schwartz(\R^d,\Hil') \bigr ) := \Bigl \{ \hat{F} \in \mathcal{L} \bigl ( \Schwartz(\R^d,\Hil) , \Schwartz^*(\R^d,\Hil') \bigr ) \; \; \big \vert \; \; \ran \hat{F} \subseteq \Schwartz(\R^d,\Hil') \Bigr \} 
\end{align*}
between Schwartz functions, defined as those elements which map into the Schwartz functions again. 

Similarly, we define the space of linear continuous maps 
\begin{align*}
	\mathcal{L} \bigl ( \Schwartz^*(\R^d,\Hil) \, , \, \Schwartz^*(\R^d,\Hil') \bigr ) := \Bigl \{ &\hat{F} \in \mathcal{L} \bigl ( \Schwartz(\R^d,\Hil) , \Schwartz^*(\R^d,\Hil') \bigr ) \; \; \big \vert \; \; 
	\Bigr . \\
	\Bigl . 
	&\mbox{$\exists$ continuous extension } \hat{G} : \Schwartz^*(\R^d,\Hil) \longrightarrow \Schwartz^*(\R^d,\Hil') \Bigr \} 
\end{align*}
between tempered distributions. 
\begin{proposition}[Properties of Moyal spaces]
	\begin{enumerate}[(1)]
		\item The image of left and right Moyal spaces under the magnetic quantization map are 
		\begin{align*}
			\Op^A \Bigl ( \mathcal{M}^B_{\mathrm{L}} \bigl ( \mathcal{B}(\Hil,\Hil') \bigr ) \Bigr ) &= \mathcal{L} \bigl ( \Schwartz(\R^d,\Hil) \, , \, \Schwartz(\R^d,\Hil') \bigr )
			, 
			\\
			\Op^A \Bigl ( \mathcal{M}^B_{\mathrm{R}} \bigl ( \mathcal{B}(\Hil,\Hil') \bigr ) \Bigr ) &= \mathcal{L} \bigl ( \Schwartz^*(\R^d,\Hil) \, , \, \Schwartz^*(\R^d,\Hil') \bigr )
			. 
		\end{align*}
		\item $\Op^A$ is a topological vector space isomorphism between $\mathcal{M}^B \bigl ( \mathcal{B}(\Hil,\Hil') \bigr )$ and 
		\begin{align*}
			\mathcal{L} \bigl ( \Schwartz(\R^d,\Hil) , \Schwartz(\R^d,\Hil') \bigr ) \cap \mathcal{L} \bigl ( \Schwartz^*(\R^d,\Hil) , \Schwartz^*(\R^d,\Hil') \bigr ) 
			. 
		\end{align*}
		\item Left and right Moyal spaces have the following characterization: 
		\begin{align*}
			\Schwartz^* \bigl ( T^* \R^d , \mathcal{B}(\Hil,\Hil') \bigr ) \Weyl \Schwartz \bigl ( T^* \R^d , \mathcal{L}^1(\Hil',\Hil'') \bigr ) &\subseteq \mathcal{M}^B_{\mathrm{R}} \bigl ( \mathcal{B}(\Hil,\Hil'') \bigr ) 
			, 
			\\
			\Schwartz \bigl ( T^* \R^d , \mathcal{L}^1(\Hil,\Hil') \bigr ) \Weyl \Schwartz^* \bigl ( T^* \R^d , \mathcal{B}(\Hil',\Hil'') \bigr ) &\subseteq \mathcal{M}^B_{\mathrm{L}} \bigl ( \mathcal{B}(\Hil,\Hil'') \bigr ) 
			. 
		\end{align*}
		\item The Moyal space $\mathcal{M}^B \bigl ( \mathcal{B}(\Hil,\Hil') \bigr )$ contains $\Schwartz(T^* \R^d,\C)$ as a selfadjoint two-sided ideal. 
		\item $\mathcal{M}^B \bigl ( \mathcal{B}(\Hil) \bigr )$ is a unital $\ast$-algebra containing $\Schwartz(T^* \R^d,\C)$ as a selfadjoint two-sided ideal. 
	\end{enumerate}
\end{proposition}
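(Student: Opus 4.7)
My plan is to establish item~(1) as the technical core and then derive (2)--(5) as consequences. Throughout I would rely on the bijection properties from Proposition~\ref{operator_valued_calculus:prop:extension_OpA_tempered_distributions} together with the intertwining $\Op^A(F \Weyl G) = \Op^A(F) \, \Op^A(G)$, which is built into Definition~\ref{operator_valued_calculus:defn:extension_Weyl_product_distributions} and whose associativity is Lemma~\ref{operator_valued_calculus:lem:extension_Weyl_product_to_tempered_distributions_associative}.

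For item~(1), the inclusion $\Op^A \bigl ( \mathcal{M}^B_{\mathrm{L}} \bigr ) \subseteq \mathcal{L} \bigl ( \Schwartz(\R^d,\Hil'),\Schwartz(\R^d,\Hil'') \bigr )$ is almost tautological: if $F \in \mathcal{M}^B_{\mathrm{L}} \bigl ( \mathcal{B}(\Hil',\Hil'') \bigr )$, then for every $g \in \Schwartz \bigl ( T^* \R^d, \mathcal{L}^1(\Hil,\Hil') \bigr )$ the product $F \Weyl g$ lies in $\Schwartz \bigl ( T^* \R^d , \mathcal{L}^1(\Hil,\Hil'') \bigr )$, so $\Op^A(F) \, \Op^A(g)$ maps $\Schwartz^*(\R^d,\Hil) \to \Schwartz(\R^d,\Hil'')$ by Proposition~\ref{operator_valued_calculus:prop:extension_OpA_tempered_distributions}~(1); letting $\Op^A(g)$ sweep over operators whose images collectively cover $\Schwartz(\R^d,\Hil')$ forces $\Op^A(F)$ to map $\Schwartz(\R^d,\Hil')$ continuously to $\Schwartz(\R^d,\Hil'')$. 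Conversely, given $T \in \mathcal{L} \bigl ( \Schwartz(\R^d,\Hil'),\Schwartz(\R^d,\Hil'') \bigr )$, Proposition~\ref{operator_valued_calculus:prop:extension_OpA_tempered_distributions}~(2) produces a unique $F \in \Schwartz^*$ with $\Op^A(F) = T$; for any such $g$ the composition $T \, \Op^A(g) : \Schwartz^*(\R^d,\Hil) \to \Schwartz(\R^d,\Hil'')$ pulls back through $\Op^A$ to put $F \Weyl g \in \Schwartz \bigl ( T^* \R^d, \mathcal{L}^1(\Hil,\Hil'') \bigr )$, so $F \in \mathcal{M}^B_{\mathrm{L}}$. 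The case of $\mathcal{M}^B_{\mathrm{R}}$ is symmetric, using the continuity of $\Op^A(f) : \Schwartz^*(\R^d,\Hil') \to \Schwartz^*(\R^d,\Hil'')$ for Schwartz $f$. Item~(2) is then the intersection of the two equivalences in~(1).

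For~(3), given $F \in \Schwartz^* \bigl ( T^* \R^d, \mathcal{B}(\Hil,\Hil') \bigr )$ and $g \in \Schwartz \bigl ( T^* \R^d, \mathcal{L}^1(\Hil',\Hil'') \bigr )$, I verify $F \Weyl g \in \mathcal{M}^B_{\mathrm{R}}$ by checking that $f \Weyl (F \Weyl g)$ is a trace-class-valued Schwartz function for every Schwartz $f$. Associativity rewrites this as $(f \Weyl F) \Weyl g$; since $\Op^A(f \Weyl F) = \Op^A(f) \, \Op^A(F)$ sends $\Schwartz \to \Schwartz$ by Proposition~\ref{operator_valued_calculus:prop:extension_OpA_tempered_distributions}~(1), item~(1) puts $f \Weyl F \in \mathcal{M}^B_{\mathrm{L}}$, whence $(f \Weyl F) \Weyl g$ is Schwartz straight from the defining property. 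The other inclusion is entirely analogous. For~(4), scalar-valued $\phi \in \Schwartz(T^* \R^d,\C)$, acting as $\phi \cdot \id$, lies in $\mathcal{M}^B$ because $\Weyl$ restricts to a continuous bilinear map on Schwartz functions; the ideal property $\mathcal{M}^B \Weyl \Schwartz \subseteq \Schwartz$ is then exactly the definition of $\mathcal{M}^B_{\mathrm{L}}$, and selfadjointness uses invariance of $\Schwartz$ under the pointwise adjoint. For~(5), when $\Hil = \Hil'$ the constant distribution $1 \in \Schwartz^*(T^* \R^d,\C)$ satisfies $\Op^A(1) = \id$ and thereby furnishes a unit for $\mathcal{M}^B \bigl ( \mathcal{B}(\Hil) \bigr )$, while the $\ast$-algebra structure is already Lemma~\ref{operator_valued_calculus:lem:extension_Weyl_product_Moyal_spaces}~(3).

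The main obstacle is the ``exhaustion'' step inside the forward direction of~(1): one needs a sufficiently rich family of Schwartz-class symbols $g$ so that the $\Op^A(g)$ collectively probe every Schwartz vector in $\Schwartz(\R^d,\Hil')$. My preferred route is to invoke Proposition~\ref{operator_valued_calculus:prop:tensor_product_decomposition_Moyal_spaces} to reduce the claim to the scalar-valued statement already treated in \cite[Section~V.C]{Mantoiu_Purice:magnetic_Weyl_calculus:2004}, thereby bypassing a standalone mollifier argument in the operator-valued setting; failing that, a direct approximation of the identity on $L^2(\R^d,\Hil')$ using Schwartz-class kernels, combined with Lemmas~\ref{operator_valued_calculus:lem:extension_kernel_map_distributions} and~\ref{operator_valued_calculus:lem:Weyl_quantization_Schwartz_class}, yields the same conclusion.
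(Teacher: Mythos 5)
Your proposal is correct, and for the decisive step it lands on the same device the paper uses: the paper's proof of items (1), (3) and (4) consists almost entirely of invoking the tensor product decomposition of Proposition~\ref{operator_valued_calculus:prop:tensor_product_decomposition_Moyal_spaces} to reduce to the scalar-valued Propositions~21 and 22 of \cite{Mantoiu_Purice:magnetic_Weyl_calculus:2004}, which is exactly your ``preferred route'' for the exhaustion step. Where you differ is that you also supply direct operator-valued arguments that the paper leaves to the reader: the rank-one-kernel construction showing that the ranges of the $\Op^A(g)$ with $g \in \Schwartz \bigl ( T^* \R^d , \mathcal{L}^1(\Hil,\Hil') \bigr )$ exhaust $\Schwartz(\R^d,\Hil')$, and the associativity trick $f \Weyl (F \Weyl g) = (f \Weyl F) \Weyl g$ for item~(3), which cleanly reduces the right-Moyal-space membership of $F \Weyl g$ to item~(1) via Lemma~\ref{operator_valued_calculus:lem:extension_Weyl_product_to_tempered_distributions_associative}. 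This buys a self-contained proof that does not depend on transplanting the scalar arguments wholesale, at the cost of one step you should make explicit: both directions of your item~(1) and your item~(3) tacitly use the intertwining identity $\Op^A(F \Weyl g) = \Op^A(F) \, \Op^A(g)$ for the \emph{duality-extended} products of Definition~\ref{operator_valued_calculus:defn:extension_Weyl_product_distributions}, which the paper never isolates as a lemma; it does follow from Proposition~\ref{operator_valued_calculus:prop:relation_Weyl_product_Schwartz_functions_duality_bracket} by unwinding the duality brackets, but a careful write-up should record it before using it. Your treatment of (2), (4) and (5) matches the paper's (intersection of the two cases of (1); scalar Schwartz functions embedded as $\phi \otimes \id$ with the ideal property read off from the definitions; the constant symbol as unit together with Lemma~\ref{operator_valued_calculus:lem:extension_Weyl_product_Moyal_spaces}~(3)).
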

\begin{proof}
	\begin{enumerate}[(1)]
		\item This follows from a straightforward adaptation of the arguments in the proof of \cite[Proposition~21]{Mantoiu_Purice:magnetic_Weyl_calculus:2004} with the help of the tensor product decomposition from Proposition~\ref{operator_valued_calculus:prop:tensor_product_decomposition_Moyal_spaces}: we just need to make sure to multiply only operators that are composable in the sense of Definition~\ref{operator_valued_calculus:defn:Moyal_spaces}. 
		\item This follows directly from the definition of 
		\begin{align*}
			\mathcal{M}^B \bigl ( \mathcal{B}(\Hil,\Hil') \bigr ) = \mathcal{M}^B_{\mathrm{L}} \bigl ( \mathcal{B}(\Hil,\Hil') \bigr ) \cap \mathcal{M}^B_{\mathrm{R}} \bigl ( \mathcal{B}(\Hil,\Hil') \bigr ) 
		\end{align*}
		as the intersection of left and right Moyal spaces, and combining that with item~(2). 
		\item We leave it to the reader to adapt the arguments in the proof of  \cite[Proposition~22]{Mantoiu_Purice:magnetic_Weyl_calculus:2004}. 
		\item Here, we again need to make use of the tensor product decomposition of the Moyal space  (Proposition~\ref{operator_valued_calculus:prop:tensor_product_decomposition_Moyal_spaces}) and adapt the arguments from the proof of \cite[Proposition~22]{Mantoiu_Purice:magnetic_Weyl_calculus:2004} accordingly. 
		\item Closedness under the product can be inferred directly from the definition of the Moyal algebra $\mathcal{M}^B \bigl ( \mathcal{B}(\Hil) \bigr )$, Lemma~\ref{operator_valued_calculus:lem:extension_Weyl_product_Moyal_spaces}~(3) and item~(4). 
	\end{enumerate}
\end{proof}
%

\subsection{A pseudodifferential calculus of operator-valued Hörmander symbols} 
\label{operator_valued_calculus:Hoermander_symbols}
Everything we have done up until now was for the purpose of giving a rigorous description of pseuododifferential operators for operator-valued Hörmander symbols. The idea is to view functions like Hörmander symbols and elements of $L^2(\R^d,\Hil)$ as tempered distributions, use the extension of the calculus from Section~\ref{operator_valued_calculus:extension_by_duality} and then study the range once we restrict 
\begin{align*}
	\Op^A(f) \big \vert_{L^2(\R^d,\Hil)} : L^2(\R^d,\Hil) \subseteq \Schwartz^*(\R^d,\Hil) \longrightarrow \Schwartz^*(\R^d,\Hil')
\end{align*}
to a Hilbert space like $L^2(\R^d,\Hil)$ or some magnetic Sobolev space. 

We first introduce the notion of operator-valued Hörmander symbols, which is the relevant class of functions we wish to consider. 
\begin{definition}[Operator-valued Hörmander symbols]\label{operator_valued_calculus:defn:Hoermander_symbols}
	\begin{enumerate}[(1)]
		\item For $m \in \R$, $0 \leq \delta \leq \rho \leq 1$ the operator-valued Hörmander classes are the families of operator valued functions
		\begin{align*}
			S^m_{\rho,\delta} \bigl ( \mathcal{B}(\Hil',\Hil) \bigr ) := \Bigl \{ f \in \Cont^{\infty} \bigl ( T^* \R^d , \mathcal{B}(\Hil',\Hil) \bigr ) \; \; \big \vert \; \; \snorm{f}_{m,a \alpha} < \infty \; \forall a , \alpha \in \N_0^d \Bigr \} 
			, 
		\end{align*}
		where the seminorms are defined akin to the scalar-valued case, 
		\begin{align*}
			\snorm{f}_{m,a \alpha} := \sup_{(x,\xi) \in T^* \R^d} \Bigl ( \sexpval{\xi}^{- m + \rho \sabs{\alpha} - \delta \sabs{a}} \, \bnorm{\partial_x^a \partial_{\xi}^{\alpha} f(x,\xi)}_{\mathcal{B}(\Hil',\Hil)} \Bigr )
			, 
		\end{align*}
		with the absolute values replaced by operator norms. 
		\item For $0 \leq \delta \leq \rho \leq 1$ we define $S_{\rho,\delta}^{\infty} \bigl ( \mathcal{B}(\Hil',\Hil) \bigr ) := \bigcup_{m \in \R} S^m_{\rho,\delta} \bigl ( \mathcal{B}(\Hil',\Hil) \bigr )$ and endow it with the inductive limit topology. 
		\item We define $S^{-\infty} \bigl ( \mathcal{B}(\Hil',\Hil) \bigr ) := \bigcap_{m \in \R} S^m_{\rho,\delta} \bigl ( \mathcal{B}(\Hil',\Hil) \bigr )$ and endow it with the projective limit topology. This space is independent of the values of $0 \leq \delta \leq \rho \leq 1$. 
	\end{enumerate}
\end{definition}
Since Hörmander symbols are bounded in $x$, we need to impose stricter assumptions on the magnetic fields, namely that they are bounded in the following sense: 
\begin{assumption}[Bounded magnetic fields]\label{operator_valued_calculus:assumption:bounded_magnetic_fields}
	The components of the magnetic field $B_{jl} \in \Cont^{\infty}_{\mathrm{b}}(\R^d)$ are bounded, smooth and have smooth derivatives to any order. All vector potentials $A \in \Cont^{\infty}_{\mathrm{pol}}(\R^d,\R^d)$ for $B = \dd A$ are smooth, polynomially bounded functions. 
\end{assumption}
As we have already explained in Section~\ref{operator_valued_calculus:construction_on_Schwartz:tensor_product_structure}, we may \emph{not} think of 
\begin{align*}
	S^m_{\rho,\delta} \bigl ( \mathcal{B}(\Hil',\Hil) \bigr ) \neq S^m_{\rho,\delta}(\C) \otimes \mathcal{B}(\Hil',\Hil) 
\end{align*}
as the tensor product of the usual Hörmander class with the appropriate Banach space of bounded operators, \emph{unless} $\Hil$ and $\Hil'$ are finite-dimensional. When one of the Hilbert spaces is infinite-dimensional, neither $S^m_{\rho,\delta}(\C)$ nor $\mathcal{B}(\Hil',\Hil)$ are nuclear and there are several ways to complete the algebraic tensor product. 

Nevertheless, we are still able to exploit the tensor product structures of 
\begin{align*}
	\Schwartz^{(\ast)}(\R^d,\Hil) &\cong \Schwartz^{(\ast)}(\R^d) \otimes \Hil 
\end{align*}
as these are the spaces our operators are initially defined on. The magnetic Sobolev spaces 
\begin{align*}
	H^m_A(\R^d,\Hil) :& \negmedspace = \bigl \{ \Psi \in L^2(\R^d,\Hil) \; \; \vert \; \; \sexpval{P^A}^m \Psi \in L^2(\R^d,\Hil) \bigr \} 
	\\
	&\cong H^m_A(\R^d) \otimes \Hil 
\end{align*}
for $m \geq 0$ are just the usual Hilbert space tensor product of the magnetic Sobolev spaces with $\Hil$.
 The operator 
\begin{align*}
	\sexpval{P^A}^m = \left(\sqrt{1 + (P^A)^2}\right)^{m}
\end{align*}
that appears in the definition and as a weight in the scalar product 
\begin{align}
	\scpro{\Phi}{\Psi}_{H^m_A(\R^d,\Hil)} &:= \scpro{\sexpval{P^A}^m \Phi}{\sexpval{P^A}^m \Psi}_{L^2(\R^d,\Hil)} 
	\label{operator_valued_calculus:eqn:scalar_product_Sobolev_space}
\end{align}
is defined through functional calculus for the magnetic Laplacian. 

We can also define Hörmander spaces of negative order as the anti-dual of the corresponding positive-order Hörmander space, we just need to modify \cite[Definition~4.8]{Iftimie_Mantoiu_Purice:magnetic_psido:2006} to include the $\Hil$-valuedness.

\subsubsection{Properties of magnetic pseudodifferential operators associated to Hörmander symbols} 
\label{operator_valued_calculus:Hoermander_symbols:PsiDOs}
First of all, Hörmander class symbols are all uniformly polynomially bounded functions and therefore lie inside the magnetic Moyal spaces (which can be proven by modifying the arguments in \cite[Section~V.D]{Mantoiu_Purice:magnetic_Weyl_calculus:2004}),  
\begin{align*}
	S^m_{\rho,\delta} \bigl ( \mathcal{B}(\Hil,\Hil') \bigr ) \subset \Cont^{\infty}_{\mathrm{u,pol}} \bigl ( T^* \R^d , \mathcal{B}(\Hil,\Hil') \bigr ) 
	\subset \MoyalSpace \bigl ( \mathcal{B}(\Hil,\Hil') \bigr ) 
	, 
\end{align*}
and therefore they define continuous linear operators 
\begin{align}
	\Op^A(f) : \Schwartz^*(\R^d,\Hil) \longrightarrow \Schwartz^*(\R^d,\Hil')
	, 
	\label{operator_valued_calculus:eqn:Hoermander_symbols_continuous_map_Sprime_Sprime}
\end{align}
which restrict to continuous linear operators $\Op^A(f) : \Schwartz(\R^d,\Hil) \longrightarrow \Schwartz(\R^d,\Hil')$ between the corresponding Schwartz spaces. 

However, ultimately we are usually not interested in operators between Schwartz spaces or tempered distributions, they are just a stepping stone to define operators on magnetic Sobolev spaces and other relevant Hilbert spaces. The basis for defining magnetic pseudodifferential operators on Hilbert spaces is the rigged Hilbert space
\begin{align*}
	\Schwartz(\R^d,\Hil) \hookrightarrow H^m_A(\R^d,\Hil) \hookrightarrow \Schwartz^*(\R^d,\Hil)
	. 
\end{align*}
Vector-valued test functions $\Schwartz(\R^d,\Hil)$ are dense in the magnetic Sobolev spaces. To convince ourselves of that, we just express these spaces as Hilbert space tensor products and invoke the density of $\Schwartz(\R^d) \subseteq H^m_A(\R^d)$, $m \geq 0$.

Clearly, these nested inclusions allow us to define 
\begin{align*}
	\Op^A(f) : H^m_A(\R^d,\Hil) \longrightarrow \Schwartz^*(\R^d,\Hil')
\end{align*}
as the restriction of \eqref{operator_valued_calculus:eqn:Hoermander_symbols_continuous_map_Sprime_Sprime} to the magnetic Sobolev space. \emph{A priori} we do \emph{not} know whether the range is some $H^{m'}_A(\R^d,\Hil')$, $m' \in \R$, though. 

The first result in this direction is a combination of the magnetic Calderón-Vaillancourt Theorem due to Iftimie, Măntoiu and Purice, Theorem~3.1 in \cite{Iftimie_Mantoiu_Purice:magnetic_psido:2006}, and the operator-valued, non-magnetic version proven by Panati, Spohn and Teufel \cite[Proposition~A.4]{PST:sapt:2002}: 
\begin{theorem}[Calderón-Vaillancourt Theorem]\label{operator_valued_calculus:thm:Calderon_Vaillancourt}
	Suppose the magnetic field $B$ is bounded in the sense of Assumption~\ref{operator_valued_calculus:assumption:bounded_magnetic_fields}. Then a Hörmander symbol $f \in S^0_{\rho,\delta} \bigl ( \mathcal{B}(\Hil,\Hil') \bigr )$ where either $0 \leq \delta < \rho \leq 1$ or $\rho = \delta \in [0,1)$, defines a bounded magnetic pseudodifferential operator
	\begin{align*}
		\Op^A(f) : L^2(\R^d,\Hil) \longrightarrow L^2(\R^d,\Hil') 
	\end{align*}
	between Hilbert spaces. More precisely, there exist constants $C(d)$ and $n(d)$ that only depend on the dimension of the ambient space that allow us to bound the operator norm 
	\begin{align}
		\bnorm{\Op^A(f)}_{\mathcal{B}(L^2(\R^d,\Hil) , L^2(\R^d,\Hil') )} \leq C(d) \, \max_{\sabs{a} , \sabs{\alpha} \leq n(d)} \snorm{f}_{0,a \alpha}
		\label{operator_valued_calculus:eqn:Calderon_Vaillancourt_estimate}
	\end{align}
	in terms of finitely many Hörmander seminorms. 
\end{theorem}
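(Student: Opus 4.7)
The strategy is to transport the scalar magnetic Calderón–Vaillancourt theorem of Iftimie–Măntoiu–Purice to the operator-valued setting, relying on the tensor product identifications $L^2(\R^d,\Hil)\cong L^2(\R^d)\otimes\Hil$ and the analogous identifications for Schwartz spaces proven in Section~\ref{operator_valued_calculus:construction_on_Schwartz:tensor_product_structure}. The key observation is that nearly every estimate in the scalar proof depends on two ingredients: scalar phase factors (the symplectic phase $\e^{+\ii\frac{\eps}{2}\sigma(Y,Z)}$ and the magnetic flux phase $\e^{-\ii\lambda\gamma^B(x,y,z)}$), which are unaffected by the upgrade to operator-valued symbols; and inequalities involving absolute values $\sabs{f(x,\xi)}$. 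In the operator-valued setting one simply replaces $\sabs{f(x,\xi)}$ by $\bnorm{f(x,\xi)}_{\mathcal{B}(\Hil,\Hil')}$ everywhere. This upgrade is permitted because Bochner integration satisfies $\bnorm{\int\cdot}\le\int\bnorm{\cdot}$, and because the seminorms $\snorm{f}_{0,a\alpha}$ control the relevant symbolic derivatives in operator norm by Definition~\ref{operator_valued_calculus:defn:Hoermander_symbols}.

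Concretely, I would proceed in three steps. First, for the easier subcase $0\le\delta<\rho\le 1$, one writes out the operator kernel $K^A_f(x,y)$ via equation~\eqref{operator_valued_calculus:eqn:operator_kernel} and performs $N$ integrations by parts in $\eta$ using the identity $(1-\Delta_\eta)^N \e^{-\ii (y-x)\cdot\eta}=\sexpval{y-x}^{2N}\e^{-\ii(y-x)\cdot\eta}$. Because each $\partial_\xi$ costs a factor $\sexpval{\xi}^{-\rho}$ on the symbol (with no loss in $x$ since $\delta<\rho$), this yields a pointwise bound $\bnorm{K^A_f(x,y)}_{\mathcal{B}(\Hil,\Hil')}\le C_N\,\sexpval{y-x}^{-2N}\,\max_{\sabs\alpha\le 2N}\snorm{f}_{0,0\alpha}$ that decays faster than any polynomial in $\sabs{x-y}$. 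Schur's test for Bochner-integrable operator-valued kernels then gives $L^2$-boundedness with the seminorm estimate~\eqref{operator_valued_calculus:eqn:Calderon_Vaillancourt_estimate}. Second, for the hard subcase $\rho=\delta\in[0,1)$, one chooses a unit-scale partition of unity $\{\chi_j\}_{j\in\Z^{2d}}$ on phase space, writes $f=\sum_j\chi_j f$, and applies Cotlar–Stein almost-orthogonality to the family $\{\Op^A(\chi_j f)\}$. The essential input is that the cross products $\Op^A(\chi_j f)^*\Op^A(\chi_k f) = \Op^A((\chi_j f)^*\Weyl(\chi_k f))$ have operator norm decaying rapidly in $\sabs{j-k}$, which is shown by integration by parts in the Weyl product integral~\eqref{operator_valued_calculus:eqn:Weyl_product} — an argument that transcribes verbatim to the operator-valued setting because all estimates are pointwise in the symbol variables. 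Third, one tracks the derivatives consumed along the way to read off that only finitely many seminorms $\snorm{f}_{0,a\alpha}$ with $\sabs{a},\sabs\alpha\le n(d)$ enter the final bound, with $C(d)$ and $n(d)$ independent of $\Hil$, $\Hil'$ and the specific symbol.

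The main obstacle is the Cotlar–Stein estimate in the hard subcase $\rho=\delta$, for two reasons. Firstly, one cannot exploit any gain from the Moyal expansion in $\eps$, so the almost-orthogonality decay must be extracted solely from oscillation in the product integral. Secondly, the magnetic flux phase $\e^{-\ii\lambda\gamma^B(x,y,z)}$ couples the three phase-space variables, and its derivatives must be controlled uniformly to carry out the integration by parts; this is precisely where Assumption~\ref{operator_valued_calculus:assumption:bounded_magnetic_fields} on boundedness of $B$ and all its derivatives is used decisively, since polynomial bounds as in Assumption~\ref{operator_valued_calculus:assumption:polynomially_bounded_B} would be insufficient. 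Once these scalar estimates are in place, extending them to operator-valued symbols is automatic, since the fact that the operator-valued Hörmander classes do not admit a nuclear tensor product decomposition plays no role in an argument that is pointwise in $(x,\xi)$.
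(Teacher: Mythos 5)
Your Step~2 is essentially the paper's argument: the paper reduces everything to the case $\rho = \delta \in [0,1)$ via the inclusion $S^0_{\rho,\delta} \subseteq S^0_{\delta,\delta}$, regularizes $f$ by compactly supported cutoffs $\chi(\epsilon X)$ so that the formal manipulations are justified for Schwartz symbols (a step you omit and should at least mention), and then runs the Cotlar--Stein almost-orthogonality argument of Iftimie--M\u{a}ntoiu--Purice, replacing absolute values by $\bnorm{\,\cdot\,}_{\mathcal{B}(\Hil,\Hil')}$ and complex conjugates by Hilbert-space adjoints. Your observation that the phases are scalar and the estimates are pointwise in $(x,\xi)$, so that non-nuclearity of the symbol classes is irrelevant, is exactly the right justification.

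However, your Step~1 contains a genuine error. For a symbol of order $0$, integrating by parts in $\eta$ with $\bigl(\tfrac{1-\Delta_\eta}{\sexpval{y-x}^2}\bigr)^N$ transfers $(1-\Delta_\eta)^N$ onto $f$, and this operator contains the \emph{underived} term $f\bigl(\tfrac{\eps}{2}(x+y),\eta\bigr)$ itself, which is merely bounded in $\eta$ and not integrable over $\R^d_\eta$. No choice of $N$ makes the $\eta$-integral absolutely convergent, so the claimed pointwise bound $\bnorm{K^A_f(x,y)}_{\mathcal{B}(\Hil,\Hil')} \leq C_N \sexpval{y-x}^{-2N}$ is not obtained; indeed the Schwartz kernel of an order-zero $\Psi$DO is genuinely singular on the diagonal (at best of Calder\'on--Zygmund type, $\sim \sabs{x-y}^{-d/\rho}$ after a dyadic decomposition in $\eta$), hence not locally integrable, and Schur's test cannot apply. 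This is precisely why Calder\'on--Vaillancourt is a nontrivial theorem even for $\delta < \rho$. The failure is not fatal to your overall plan, because the case $0 \leq \delta < \rho \leq 1$ is subsumed by the case $\rho = \delta$ through the symbol-class inclusion noted above — which is how the paper disposes of it — but as written your ``easier subcase'' argument does not prove anything and should be replaced by that reduction.
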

\begin{remark}
	Since the natural inclusions 
	\begin{align*}
		S^m_{\rho,\delta} \bigl ( \mathcal{B}(\Hil,\Hil') \bigr ) \subseteq S^{m'}_{\rho',\delta'} \bigl ( \mathcal{B}(\Hil,\Hil') \bigr )
		, 
		&&
		m \leq m' 
		, \; 
		\rho \geq \rho' 
		, \; 
		\delta \leq \delta' 
		, 
	\end{align*}
	extend to operator-valued Hörmander symbols, it suffices to prove the $L^2$-boundedness for the case $m = 0$ and $\rho = \delta \in [0,1)$. 
\end{remark}
\begin{proof}[Sketch]
	The proof is essentially identical to that of \cite[Theorem~3.1]{Iftimie_Mantoiu_Purice:magnetic_psido:2006}, we merely replace absolute value with operator norms where necessary. Therefore, we will only sketch the argument. 
	
	The first step is to replace $f \in S^0_{\rho,\rho} \bigl ( \mathcal{B}(\Hil,\Hil') \bigr )$ with a Schwartz function and employ \cite[Remark~3.2]{Iftimie_Mantoiu_Purice:magnetic_psido:2006}: multiplying $f$ with a smoothened bump function $\chi(\epsilon X)$ with compact support and scaling the plateaux with $\epsilon \in (0,1]$ gives us a sequence of Schwartz functions for which the desired estimate~\eqref{operator_valued_calculus:eqn:Calderon_Vaillancourt_estimate} holds uniformly in $\epsilon$.
	
	The second step proceeds under the assumption $f \in \Schwartz \bigl ( T^* \R^d , \mathcal{B}(\Hil,\Hil') \bigr )$ and relies crucially on the Stein-Cotlar-Knapp Lemma; the latter states that under certain circumstances sums of “almost orthogonal”, bounded operators are bounded. Since the “Hörmander $L$ operators” used in the proof, $L_{\xi}$ and $M$, act trivially on the $\mathcal{B}(\Hil,\Hil')$ part, these arguments still go through. To account for the operator- or Hilbert space-valuedness, we \eg need to pick $u \in \Schwartz(\R^d,\Hil)$ in Steps~1–2 and $v \in \Schwartz(\R^d,\Hil')$ in Step~3 as well as replace the complex conjugate $\overline{g(z,y;\eta)} \in \C$ by the Hilbert space adjoint $g(z,y;\eta)^* \in \mathcal{B}(\Hil',\Hil)$. 
\end{proof}
The commutator criteria we give in Section~\ref{operator_valued_calculus:Hoermander_symbols:commutator_criteria} below do the opposite of the Calderón-Vaillancourt Theorem, it tells us which operators are magnetic pseudodifferential operators with a Hörmander symbol.

A direct consequence of the Calderón-Vaillancourt Theorem~\ref{operator_valued_calculus:thm:Calderon_Vaillancourt} is the following 
\begin{corollary}\label{operator_valued_calculus:cor:boundedness_magnetic_Sobolev_spaces}
	Suppose the magnetic field $B$ is bounded in the sense of Assumption~\ref{operator_valued_calculus:assumption:bounded_magnetic_fields} and either $0 \leq \delta < \rho \leq 1$ or $\rho = \delta \in [0,1)$. Then a Hörmander symbol $f \in S^m_{\rho,\delta} \bigl ( \mathcal{B}(\Hil,\Hil') \bigr )$ defines a bounded magnetic pseudodifferential operator
	\begin{align*}
		\Op^A(f) : H^s_A(\R^d,\Hil) \longrightarrow H^{s-m}_A(\R^d,\Hil') 
	\end{align*}
	between magnetic Sobolev spaces for any $s \in \R$. 
\end{corollary}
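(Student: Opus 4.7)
The plan is to reduce the statement to the $L^2$-boundedness already established in Theorem~\ref{operator_valued_calculus:thm:Calderon_Vaillancourt} by conjugating with the weight operators that define the magnetic Sobolev norms. Setting $\Lambda_t := \sexpval{P^A}^t$, the definition~\eqref{operator_valued_calculus:eqn:scalar_product_Sobolev_space} of the magnetic Sobolev scalar product implies that $\Lambda_s : H^s_A(\R^d,\Hil) \longrightarrow L^2(\R^d,\Hil)$ and $\Lambda_{s-m} : H^{s-m}_A(\R^d,\Hil') \longrightarrow L^2(\R^d,\Hil')$ are isometric isomorphisms (since $H^t_A(\R^d,\Hil) \cong H^t_A(\R^d) \otimes \Hil$). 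Consequently, proving the corollary is equivalent to showing that
\[
\Lambda_{s-m} \, \Op^A(f) \, \Lambda_{-s} : L^2(\R^d,\Hil) \longrightarrow L^2(\R^d,\Hil')
\]
is bounded.

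The next step is to identify this conjugated operator as a magnetic pseudodifferential operator whose symbol lies in $S^0_{\rho,\delta}\bigl(\mathcal{B}(\Hil,\Hil')\bigr)$. Under Assumption~\ref{operator_valued_calculus:assumption:bounded_magnetic_fields}, a standard result from the scalar magnetic pseudodifferential calculus \cite{Iftimie_Mantoiu_Purice:magnetic_psido:2006} provides, for every $t \in \R$, a scalar-valued Hörmander symbol $q_t \in S^t_{1,0}(\C)$ with $\Lambda_t = \Op^A(q_t)$. Viewing $q_t$ as the operator-valued symbol $q_t \otimes \id_{\Hil}$ and invoking the composition property of the magnetic Weyl product for operator-valued Hörmander symbols announced in the introduction,
\[
\Weyl : S^{m_1}_{\rho,\delta}\bigl(\mathcal{B}(\Hil_1,\Hil_2)\bigr) \times S^{m_2}_{\rho,\delta}\bigl(\mathcal{B}(\Hil_2,\Hil_3)\bigr) \longrightarrow S^{m_1+m_2}_{\rho,\delta}\bigl(\mathcal{B}(\Hil_1,\Hil_3)\bigr),
\]
we obtain
\[
\Lambda_{s-m} \, \Op^A(f) \, \Lambda_{-s} = \Op^A\bigl( q_{s-m} \Weyl f \Weyl q_{-s} \bigr),
\]
with $q_{s-m} \Weyl f \Weyl q_{-s} \in S^0_{\rho,\delta}\bigl(\mathcal{B}(\Hil,\Hil')\bigr)$, since the orders add to $(s-m) + m + (-s) = 0$. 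The $L^2$-boundedness of this operator then follows from Theorem~\ref{operator_valued_calculus:thm:Calderon_Vaillancourt}, which finishes the proof.

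The main technical point — and the only place where work beyond the material of this section is genuinely invoked — is the identification of $\sexpval{P^A}^t$ as a magnetic $\Psi$DO with a Hörmander symbol of order $t$. In the non-magnetic case this is immediate, with symbol $\sexpval{\xi}^t$, whereas in the magnetic case it rests on a functional-calculus argument within the scalar magnetic Weyl calculus. Once that ingredient is granted, the extension to the operator-valued setting is cost-free: composing a scalar Hörmander symbol with an operator-valued one does not interact with the fibre $\mathcal{B}(\Hil,\Hil')$ and is therefore immune to the non-nuclearity issues flagged in Section~\ref{operator_valued_calculus:construction_on_Schwartz:tensor_product_structure}.
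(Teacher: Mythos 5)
Your proof is correct and follows essentially the same strategy as the paper's: conjugate with the Sobolev weights, use the composition theorem for operator-valued Hörmander symbols to identify the conjugated operator as a magnetic $\Psi$DO of order $0$, and conclude with the Calder\'on--Vaillancourt Theorem~\ref{operator_valued_calculus:thm:Calderon_Vaillancourt}. The one place where you and the paper diverge is the treatment of the weights, and it is exactly the point the paper flags as delicate. You assert that $\sexpval{P^A}^t = \Op^A(q_t)$ with $q_t \in S^t_{1,0}(\C)$ for \emph{every} $t \in \R$, citing \cite{Iftimie_Mantoiu_Purice:magnetic_psido:2006}; but that reference only provides a \emph{parametrix} for these weights, which is why its Sobolev-boundedness result is restricted to $0 \leq \delta < \rho \leq 1$. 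To also cover $\rho = \delta \in [0,1)$ — which the corollary claims — the paper sidesteps your assertion entirely: it replaces $\sexpval{P^A}^m$ by the norm-equivalent weights $\Op^A(w_m)$ with $w_m = \sexpval{\xi}^m + \lambda(m)$ for $m \geq 0$, the shift $\lambda(m)$ being chosen (via the G\aa rding inequality) so that the \emph{exact} Moyal inverse $w_{-m} = w_m^{(-1)_{\Weyl}} \in S^{-m}_{1,0}(\C)$ exists as a Hörmander symbol by the spectral-invariance results of \cite{Mantoiu_Purice_Richard:spectral_propagation_results_magnetic_Schroedinger_operators:2007} and \cite{Iftimie_Mantoiu_Purice:commutator_criteria:2008}. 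Your argument goes through verbatim once you either adopt these modified weights or back your claim about $q_{-s}$ with one of those exact-inverse references rather than the parametrix construction; as written, the citation does not deliver the negative-order weight symbol you need in the $\rho = \delta$ case.
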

\begin{remark}
	Compared to \cite[Proposition~4.3]{Iftimie_Mantoiu_Purice:magnetic_psido:2006} we are also able to deal with the case $\rho = \delta \in [0,1)$. The main ingredient in the proof by Iftimie et.\ al needed was a parametrix construction for $\sexpval{P^A}^m$; the restriction $\rho > \delta$ is necessary to ensure the remainder is a smoothing operator. When we replace $\sexpval{P^A}^m$ by $\sexpval{P^A}^m + \lambda(m)$ for some suitably chosen non-negative constant $\lambda(m) \geq 0$, later results cited below guarantee the existence of an exact inverse \emph{without} remainder. 
\end{remark}
The proof also relies on a fact we will only prove below, Theorem~\ref{operator_valued_calculus:thm:Weyl_product_Hoermander_symbols}. It tells us that the product of two Hörmander symbols is a Hörmander symbol whose order is the sum of the orders of the two factors. 
\begin{proof}
	Let us at first assume $m \geq 0$ is non-negative. The first ingredient is that the weight 
	\begin{align*}
		\sexpval{P^A}^m = \Op^A \bigl ( \sexpval{\xi}^m \bigr ) 
		,
	\end{align*}
	which enters the definition of $H^m_A(\R^d,\Hil)$, can also be viewed as a pseudodifferential operator associated with $\sexpval{\xi}^m \in S^m_{1,0}(\C)$. Moreover, it is evidently elliptic in the usual sense (\cf also Definition~\ref{operator_valued_calculus:defn:elliptic_symbols} which generalizes the concept of ellipticity to operator-valued symbols). 
	
	In fact, we may replace $\sexpval{\xi}^m$ by $p_m(\xi) := \sexpval{\xi}^m + \lambda(m)$ where $\lambda(m) \geq 0$ is a suitable non-negative constant. Following the arguments of \cite[§3.3]{Lein_Mantoiu_Richard:anisotropic_mag_pseudo:2009} which hinge on \cite[Theorem~1.8]{Mantoiu_Purice_Richard:spectral_propagation_results_magnetic_Schroedinger_operators:2007}, we may choose the value of $\lambda(m)$ large enough so that the Moyal inverse $p^{(-1)_{\Weyl}} \in S^{-m}_{1,0}(\C)$ exists as a Hörmander symbol, \ie the symbol that satisfies 
	\begin{align*}
		p_m \Weyl p_m^{(-1)_{\Weyl}} = 1 = p_m^{(-1)_{\Weyl}} \Weyl p_m 
		. 
	\end{align*}
	At this stage we can drop the assumption $m \geq 0$: we can now define weights $w_m$ for $m \in \R$ by setting 
	\begin{align}
		w_m := 
		\begin{cases}
			p_m & m \geq 0 \\
			p_{\sabs{m}}^{(-1)_{\Weyl}} & m < 0 \\
		\end{cases}
		\in S^m_{1,0}(\C) 
		.
		\label{operator_valued_calculus:eqn:definition_alternate_weights_magnetic_Sobolev_spaces}
	\end{align}
	Importantly, we may replace $\Op^A(\sexpval{\xi}^m)$ by $\Op^A(w_m)$ when defining the \emph{Banach} spaces $H^m_A(\R^d,\Hil)$, for the norms are evidently equivalent. 
	
	Once we show that the composition of two operator-valued Hörmander symbols is a Hörmander symbol whose order is the sum of the orders, a debt we will pay with Theorem~\ref{operator_valued_calculus:thm:Weyl_product_Hoermander_symbols} below, we can finish the proof: using $\Op^A(w_m)$ as a weight for the magnetic Sobolev norms, we can insert the weights and factor the product as 
	\begin{align}
		\norm{\Op^A(f) \Psi}_{H^{s-m}_A(\R^d,\Hil')} &= \norm{\Op^A(w_{s-m}) \, \Op^A(f) \, \Op^A(w_{-s}) \, \Op^A(w_s) \Psi}_{L^2(\R^d,\Hil')} 
		\notag \\
		&= \norm{\Op^A \bigl ( w_{s-m} \Weyl f \Weyl w_{-s} \bigr ) \, \Op^A(w_s) \Psi}_{L^2(\R^d,\Hil')} 
		. 
		\label{operator_valued_calculus:eqn:proof_corollary_Calderon_Vaillancourt_Sobolev_norm_L2_norm}
	\end{align}
	Since $\Psi \in H^s_A(\R^d,\Hil)$ belongs to the magnetic Sobolev space of order $s$, the vector $\Op^A(w_s) \Psi \in L^2(\R^d,\Hil)$ lies in $L^2$. And thanks to Theorem~\ref{operator_valued_calculus:thm:Weyl_product_Hoermander_symbols} we know 
	\begin{align*}
		w_{s-m} \Weyl f \Weyl w_{-s} \in S^{s - m + m - s}_{\rho,\delta} \bigl ( \mathcal{B}(\Hil,\Hil') \bigr ) 
		= S^0_{\rho,\delta} \bigl ( \mathcal{B}(\Hil,\Hil') \bigr ) 
	\end{align*}
	is a Hörmander symbol of order $0$; consequently, the magnetic Calderón-Vaillancourt Theorem~\ref{operator_valued_calculus:thm:Calderon_Vaillancourt} applies and we get a bounded operator on $L^2(\R^d,\Hil) \longrightarrow L^2(\R^d,\Hil')$.
	
	Finally, we can estimate the right-hand side of \eqref{operator_valued_calculus:eqn:proof_corollary_Calderon_Vaillancourt_Sobolev_norm_L2_norm} by
	\begin{align*}
		\ldots &\leq C(d) \, \Bigl ( \max_{\sabs{a} , \sabs{\alpha} \leq n(d)} \snorm{f}_{m,a \alpha} \Bigr ) \, 
		\snorm{\Psi}_{H^s_A(\R^d,\Hil)}
		, 
	\end{align*}
	and therefore, $\Op^A(f) : H^s_A(\R^d,\Hil) \longrightarrow H^{s-m}_A(\R^d,\Hil')$ defines a bounded operator. 
\end{proof}
The previous statement is not optimal in the sense that $f \in S^m_{\rho,\delta} \bigl ( \mathcal{B}(\Hil,\Hil') \bigr )$ implies $f \in S^{m + m'}_{\rho,\delta} \bigl ( \mathcal{B}(\Hil,\Hil') \bigr )$ for any $m' > 0$. Thus, $m$ does not necessarily characterize the growth of $f$ as $\sabs{\xi} \rightarrow \infty$, it just gives an upper bound. The degree of growth is captured by the notion of ellipticity: 
\begin{definition}[Elliptic operator-valued symbol]\label{operator_valued_calculus:defn:elliptic_symbols}
	An operator-valued Hörmander symbol $f \in S^m_{\rho,\delta} \bigl ( \mathcal{B}(\Hil,\Hil') \bigr )$ is called \emph{elliptic} if and only if there exist two constants $C > 0$ and $R > 0$ such that 
	\begin{align*}
		\bnorm{f(x,\xi)}_{\mathcal{B}(\Hil,\Hil')} \geq C \, \sexpval{\xi}^m 
	\end{align*}
	holds for all $\sabs{\xi} \geq R$. 
\end{definition}
When $f$ is scalar-valued, then \emph{real}-valued elliptic symbols define selfadjoint operators. In our setting, the real-valuedness assumption needs to be replaced by \emph{selfadjoint}-operator-valuedness. 
\begin{theorem}\label{operator_valued_calculus:thm:selfadjointness_elliptic_symbols}
	Assume the magnetic field is bounded in the sense of Assumption~\ref{operator_valued_calculus:assumption:bounded_magnetic_fields}, and that the initial Hilbert space $\Hil \subseteq \Hil'$ can be regarded as a dense subspace of the target Hilbert space $\Hil'$. Suppose $h \in S^m_{\rho,\delta} \bigl ( \mathcal{B}(\Hil,\Hil') \bigr )$ is a Hörmander symbol of order $m \in \R$ and type $0 \leq \delta < \rho \leq 1$ that takes values in the selfadjoint operators on $\Hil'$, 
	\begin{align*}
		h(X)^* = h(X) 
		&&
		\forall X \in T^* \R^d 
		. 
	\end{align*}
	In case $m > 0$ we assume in addition that $h$ is elliptic of order $m$ (\cf Definition~\ref{operator_valued_calculus:defn:elliptic_symbols}). 
	
	Then $\Op^A(h) = \Op^A(h)^*$ defines a selfadjoint operator on $L^2(\R^d,\Hil)$. Its domain is either $L^2(\R^d,\Hil)$ ($m \leq 0$) or $H^m_A(\R^d,\Hil)$ ($m > 0$); the space of Schwartz functions $\Schwartz(\R^d,\Hil) \subseteq L^2(\R^d,\Hil)$ is a core. 
\end{theorem}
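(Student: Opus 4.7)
The plan is to prove symmetry on the common core $\Schwartz(\R^d,\Hil)$ and then upgrade to (essential) selfadjointness, splitting into the cases $m \leq 0$ (bounded operator) and $m > 0$ (Moyal resolvent argument). Symmetry is immediate: the computation of Section~\ref{operator_valued_calculus:construction_on_Schwartz:adjoint} gives $\Op^A(h)^* = \Op^A(h^*)$ on Schwartz test functions, and the pointwise identity $h(X)^* = h(X)$ — meaningful because $\Hil \subseteq \Hil'$ densely, so $h(X)$ is a well-defined unbounded operator on $\Hil'$ with $X$-independent domain $\Hil$ — then yields that $\Op^A(h)$ is densely defined and symmetric on $\Schwartz(\R^d,\Hil)$.

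When $m \leq 0$, Corollary~\ref{operator_valued_calculus:cor:boundedness_magnetic_Sobolev_spaces} (applied with $s = 0$) extends $\Op^A(h)$ to a bounded operator on $L^2(\R^d,\Hil)$; any bounded symmetric operator is selfadjoint, and any dense subspace — in particular $\Schwartz(\R^d,\Hil)$ — is automatically a core. No ellipticity enters this case.

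For $m > 0$ I would prove essential selfadjointness on $\Schwartz(\R^d,\Hil)$ via the standard range criterion: I must show that $\mathrm{Ran}\bigl(\Op^A(h) \mp \ii \lambda\bigr)$ is dense in $L^2(\R^d,\Hil)$ for some $\lambda > 0$ sufficiently large. The mechanism is to construct the Moyal resolvent $r_{\pm \ii \lambda} := (h \mp \ii \lambda)^{(-1)_\Weyl} \in S^{-m}_{\rho,\delta}\bigl(\mathcal{B}(\Hil',\Hil)\bigr)$ by the classical inductive parametrix scheme: the leading symbol is the pointwise resolvent $(h(X) \mp \ii \lambda)^{-1}$, whose existence and $\sexpval{\xi}^{-m}$-decay follow by combining the pointwise selfadjointness of $h(X)$ on $\Hil'$ (which supplies $\bnorm{(h(X) \mp \ii \lambda)^{-1}}_{\mathcal{B}(\Hil')} \leq \lambda^{-1}$) with ellipticity (which supplies the $\sexpval{\xi}^{-m}$ factor when mapping back into $\Hil$); higher-order corrections are generated by the Weyl product (Theorem~\ref{operator_valued_calculus:thm:Weyl_product_Hoermander_symbols}), and the asymptotic series is resummed with a residual smoothing remainder that is invertible by a Neumann series for $\lambda$ large, in direct analogy with \cite{Mantoiu_Purice_Richard:spectral_propagation_results_magnetic_Schroedinger_operators:2007}. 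Quantizing, $\Op^A(r_{\pm \ii \lambda})$ becomes a two-sided inverse to $\Op^A(h) \mp \ii \lambda$ on $\Schwartz$; as it maps $L^2 \to H^m_A$ boundedly by Corollary~\ref{operator_valued_calculus:cor:boundedness_magnetic_Sobolev_spaces}, this both forces density of the range in $L^2$ and pins the domain of the selfadjoint closure to exactly $H^m_A(\R^d,\Hil)$.

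The main obstacle is the rigorous operator-valued Moyal resolvent construction. In the scalar case this is deferred to \cite{Mantoiu_Purice_Richard:spectral_propagation_results_magnetic_Schroedinger_operators:2007}, but here one must control all derivatives of $(h(X) \mp \ii \lambda)^{-1}$ as a $\mathcal{B}(\Hil',\Hil)$-valued symbol by iterating the resolvent identity $\partial (h \mp \ii \lambda)^{-1} = - (h \mp \ii \lambda)^{-1} \, (\partial h) \, (h \mp \ii \lambda)^{-1}$ together with the Leibniz rule, and then verify inductively that these derivatives stay in the correct Hörmander class. The delicate interplay between the uniform $\lambda^{-1}$ bound from selfadjointness (in $\mathcal{B}(\Hil')$) and the $\sexpval{\xi}^{-m}$ factor from ellipticity (when restricting the target back to $\Hil$) is precisely what makes the estimates close at every step of the iteration and of the Neumann series.
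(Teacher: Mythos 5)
Your proposal is correct in outline but takes a genuinely different route for the hard case $m > 0$. The paper also begins with the symmetry computation on $\Schwartz(\R^d,\Hil)$ (identical to yours, via $\Op^A(h)^* = \Op^A(h^*)$ and $h = h^*$), but it then proves selfadjointness \emph{directly}: it uses Corollary~\ref{operator_valued_calculus:cor:boundedness_magnetic_Sobolev_spaces} to extend $H^A = \Op^A(h)$ to a bounded map $H^m_A(\R^d,\Hil) \to L^2(\R^d,\Hil')$, invokes ellipticity to show the graph norm of $H^A$ is equivalent to the $H^m_A$-norm (deferring to the tensor-product decomposition $H^m_A(\R^d,\Hil) \cong H^m_A(\R^d)\otimes\Hil$ and \cite[Lemma~4.3~(3)]{Iftimie_Mantoiu_Purice:magnetic_psido:2006}), and then identifies $\domain\bigl({H^A}^*\bigr) = H^m_A(\R^d,\Hil) = \domain(H^A)$ by an elliptic-regularity argument for the weak adjoint. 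You instead go through the range criterion for essential selfadjointness and construct the full Moyal resolvent $(h \mp \ii\lambda)^{(-1)_{\Weyl}} \in S^{-m}_{\rho,\delta}\bigl(\mathcal{B}(\Hil',\Hil)\bigr)$. Both arguments ultimately rest on the same parametrix machinery (hence both need $\rho > \delta$), but the paper's route is lighter — it only needs graph-norm equivalence, not a globally defined resolvent symbol — whereas yours, if carried out, delivers strictly more (the resolvent as a Hörmander-class $\Psi$DO, which the paper only obtains later via Beals' criterion in Theorem~\ref{operator_valued_calculus:thm:existence_Moyal_resolvent}).

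One point to be careful about in your construction of the leading symbol: the paper's Definition~\ref{operator_valued_calculus:defn:elliptic_symbols} of ellipticity is only a lower bound on the \emph{operator norm} $\bnorm{h(x,\xi)}_{\mathcal{B}(\Hil,\Hil')} \geq C\,\sexpval{\xi}^m$, which for genuinely operator-valued $h$ does not by itself imply the coercivity $\norm{h(x,\xi)u}_{\Hil'} + \lambda\norm{u}_{\Hil'} \gtrsim \sexpval{\xi}^m \norm{u}_{\Hil}$ that you need to conclude $\bnorm{(h(X)\mp\ii\lambda)^{-1}}_{\mathcal{B}(\Hil',\Hil)} \lesssim \sexpval{\xi}^{-m}$. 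You should either strengthen the ellipticity hypothesis to a two-sided (invertibility) statement or argue separately why the pointwise resolvent gains the full weight $\sexpval{\xi}^{-m}$; this is a weakness shared with the paper's own appeal to the scalar-valued lemma of Iftimie et al., so it does not disqualify your approach, but it is the step most in need of justification.
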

In the above statement it is important that the selfadjointness is with respect to the Hilbert space $L^2(\R^d,\Hil')$ (\ie we need to consider $L^2$-functions taking values in the \emph{primed} Hilbert space). That is because $\Hil \subseteq \Hil'$ is the \emph{“pointwise”} domain of selfadjointness of the symbol $h(x,\xi) = h(x,\xi)^*$. 
\begin{remark}[When $m > 0$ then $\rho > \delta$ is a necessary assumption]
	While for many applications the assumption $\rho > \delta$ is natural, for equivariant magnetic pseudodifferential operators we invariably have $\rho = 0 = \delta$. Unfortunately, the case $\rho = \delta \in [0,1)$ is not covered by our extension of \cite[Theorem~5.1]{Iftimie_Mantoiu_Purice:magnetic_psido:2006}: a central piece in the argument was to show that the graph norm is equivalent to the magnetic Sobolev norm. The upper bound follows from Corollary~\ref{operator_valued_calculus:cor:boundedness_magnetic_Sobolev_spaces}. For the lower bound Iftimie et al.\ constructed a parametrix for the operator $\Op^A(h)$, \ie an operator $\Op^A(g)$ which “inverts $\Op^A(h)$ up to a compact operator”, 
	\begin{align*}
		\Op^A(g) \, \Op^A(h) - \id_{\Hil} =: \Op^A(r)
		. 
	\end{align*}
	Such a non-unique operator $\Op^A(g)$ can be explicitly constructed with the help of pseudodifferential theory (\cf Section~\ref{operator_valued_calculus:Hoermander_symbols:Weyl_product} and \cite[Theorem~2.8]{Iftimie_Mantoiu_Purice:magnetic_psido:2006}). The assumption $\rho > \delta$ is necessary to ensure that $r \in S^{-\infty} \bigl ( \mathcal{B}(\Hil) \bigr ) \cap S^{-\infty} \bigl ( \mathcal{B}(\Hil') \bigr )$ is a smoothing symbol. If $\rho = \delta$, then the symbol $r \in S^m_{\rho,\rho} \bigl ( \mathcal{B}(\Hil) \bigr )$ of the remainder would be of the same order $m$ as the operator $h$. Therefore, the remainder $\Op^A(r)$ would not even define a bounded operator between $L^2$-spaces. 
\end{remark}
\begin{proof}
	First of all, our assumption that $h(x,\xi)^* = h(x,\xi)$ takes values in the selfadjoint operators means that the domain of selfadjointness $\domain \bigl ( h(x,\xi) \bigr ) = \Hil$ is just the Hilbert space $\Hil \subseteq \Hil'$. And to allow for $h(x,\xi)$ taking values in the \emph{unbounded} selfadjoint operators on $\Hil'$, the subspace $\Hil$ only needs to be dense. Further, the selfadjointness of $h(x,\xi)^* = h(x,\xi)$ implies that the pointwise adjoint $h(x,\xi)^*$ again defines a bounded operator from $\Hil$ to $\Hil'$ rather than the other way around. 
	
	We can check by hand that $\Op^A(h)$ is symmetric on $\Schwartz(\R^d,\Hil) \subseteq \Schwartz(\R^d,\Hil')$: for all $\phi , \psi \in \Schwartz(\R^d,\Hil)$ we plug in the selfadjointness of $h(x,\xi) = h(x,\xi)^*$ and compute 
	\begin{align*}
		&\scpro{\phi}{\Op^A(h) \psi}_{L^2(\R^d,\Hil')} = 
		\\
		&\quad 
		= \frac{1}{(2\pi)^d} \int_{\R^d} \dd x \int_{\R^d} \dd y \int_{\R^d} \dd \eta \, \e^{- \ii (y - x) \cdot \eta} \, \e^{- \ii \frac{\lambda}{\eps} \int_{[\eps x , \eps y]} A} \, \scpro{\phi(x) \, }{ \, h \bigl ( \tfrac{\eps}{2} (x + y) , \eta \bigr ) \psi(y)}_{\Hil'}
		\\
		&\quad 
		= \frac{1}{(2\pi)^d} \int_{\R^d} \dd x \int_{\R^d} \dd y \int_{\R^d} \dd \eta \, \scpro{\e^{+ \ii (y - x) \cdot \eta} \, \e^{+ \ii \frac{\lambda}{\eps} \int_{[\eps x , \eps y]} A} \, h \bigl ( \tfrac{\eps}{2} (x + y) , \eta \bigr ) \phi(x) \, }{ \, \psi(y)}_{\Hil'}
		\\
		&\quad 
		= \scpro{\Op^A(h) \phi}{\psi}_{L^2(\R^d,\Hil')}
		. 
	\end{align*}
	To get to the last equality, we remark that $x$ and $y$ now play opposite roles of one another and $[\eps x,\eps y] = - [\eps y , \eps x]$ reverses the orientation of the line integral. Consequently, $\Op^A(f) \vert_{\Schwartz(\R^d,\Hil)}$ is a symmetric operator. 
	
	In fact, Corollary~\ref{operator_valued_calculus:cor:boundedness_magnetic_Sobolev_spaces} tells us that 
	\begin{align*}
		H^A := \Op^A(h) : H^m_A(\R^d,\Hil) \longrightarrow L^2(\R^d,\Hil') 
	\end{align*}
	defines a bounded extension of $\Op^A(h)$, and the above computation applies verbatim to when 
	\begin{align*}
		\phi , \psi \in H^m_A(\R^d,\Hil) \subseteq L^2(\R^d,\Hil) \subseteq L^2(\R^d,\Hil') 
	\end{align*}
	are taken from the larger subspace of $L^2(\R^d,\Hil')$. 
	
	The ellipticity of $h$ implies that the graph norm of $H^A$ is equivalent to the Sobolev norm: we just write $H^m_A(\R^d,\Hil) = H^m_A(\R^d) \otimes \Hil$ as a (Hilbert space) tensor product, and apply \cite[Lemma~4.3~(3)]{Iftimie_Mantoiu_Purice:magnetic_psido:2006}. 
	
	It remains to check that the domain 
	\begin{align*}
		\mathcal{D} \bigl ( {H^A}^* \bigr ) = \mathcal{D}(H^A) = H^m_A(\R^d,\Hil) \subseteq L^2(\R^d,\Hil')
	\end{align*}
	of the adjoint operator coincides with the domain of $H^A$. Writing out the definition of the adjoint operator, we see that for any $\Phi \in \domain \bigl ( {H^A}^* \bigr )$ there exists $\Psi \in L^2(\R^d,\Hil)$ so that 
	\begin{align*}
		\scpro{H^A \varphi}{\Phi}_{L^2(\R^d,\Hil')} 
		= \scpro{\varphi}{\Psi}_{L^2(\R^d,\Hil')} 
	\end{align*}
	holds for all $\varphi \in \Schwartz(\R^d,\Hil)$. This proves ${H^A}^* \Phi = \Psi \in \Schwartz^*(\R^d,\Hil')$ in the weak sense. 
	
	Invoking Corollary~\ref{operator_valued_calculus:cor:boundedness_magnetic_Sobolev_spaces} once more tells us that $\Phi \in H^m_A(\R^d,\Hil)$ must hold, which proves the first (and only non-trivial) inclusion in 
	\begin{align*}
		\domain \bigl ( {H^A}^* \bigr ) \subseteq \domain(H^A) \subseteq \domain \bigl ( {H^A}^* \bigr ) 
		. 
	\end{align*}
	Thus, $H^A = {H^A}^*$ is selfadjoint and $\Schwartz(\R^d,\Hil)$ a core. 
\end{proof}
%

\subsubsection{The magnetic Weyl product and its asymptotic expansions} 
\label{operator_valued_calculus:Hoermander_symbols:Weyl_product}
The extension of the Weyl product to operator-valued Hörmander symbols is straightforward: we just need to replace the absolute value by operator norms in the estimates. While in principle, it is not clear that derivatives of an operator-valued function $f : T^* \R^d \longrightarrow \mathcal{B}(\Hil,\Hil')$ need to take values in $\mathcal{B}(\Hil,\Hil')$ as well, this assumption is baked into the definition of the operator-valued Hörmander classes. 
\begin{theorem}[Composition of operator-valued Hörmander symbols]\label{operator_valued_calculus:thm:Weyl_product_Hoermander_symbols}
	Suppose the magnetic field is bounded in the sense of Assumption~\ref{operator_valued_calculus:assumption:bounded_magnetic_fields} and $(\rho,\delta)$ which determine the order of the symbol class satisfy either $0 \leq \delta < \rho \leq 1$ or $0 \leq \rho \leq 1$ and $\delta = 0$. 
	Then Weyl product defines a continuous, bilinear map between operator-valued Hörmander spaces
	\begin{align*}
		\Weyl : S^{m_1}_{\rho,\delta} \bigl ( \mathcal{B}(\Hil,\Hil') \bigr ) \times S^{m_2}_{\rho,\delta} \bigl ( \mathcal{B}(\Hil',\Hil'') \bigr ) \longrightarrow S^{m_1 + m_2}_{\rho,\delta} \bigl ( \mathcal{B}(\Hil,\Hil'') \bigr ) 
		. 
	\end{align*}
\end{theorem}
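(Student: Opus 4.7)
The plan is to adapt the standard proof for scalar-valued Hörmander symbols (see for instance \cite[Proposition~4.2]{Iftimie_Mantoiu_Purice:magnetic_psido:2006} or \cite[Theorem~2.2.18]{Lein:progress_magWQ:2010}) by systematically replacing absolute values with operator norms, \emph{carefully preserving the order of factors} so that composability of the ranges $\mathcal{B}(\Hil,\Hil')$ and $\mathcal{B}(\Hil',\Hil'')$ is respected at every step. A crucial simplification is that the magnetic phase $\e^{-\ii \lambda \gamma^B(x,y,z)}$ is a scalar and therefore commutes with the operator-valued integrand without disturbing the order in which $f$ and $g$ multiply.

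First I would rewrite the integral in \eqref{operator_valued_calculus:eqn:Weyl_product} in its more standard oscillatory form, obtained by (formally) carrying out the symplectic Fourier transforms in $Y$ and $Z$. This yields an integral in which $f$ evaluated at one phase-space point multiplies $g$ evaluated at another, weighted by a scalar oscillatory factor built from the symplectic form and $\gamma^B$. For Schwartz-class $f,g$ the integral is absolutely convergent; for Hörmander symbols it is only an oscillatory integral, and I would justify it by inserting regularising cutoffs $\chi(\epsilon Y)\chi(\epsilon Z)$, proving all subsequent estimates uniformly in $\epsilon \in (0,1]$, and passing to the limit $\epsilon \to 0$ by dominated convergence after the relevant integrations by parts have been carried out.

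The core of the argument is the seminorm bound
\begin{align*}
	\snorm{f \Weyl g}_{m_1+m_2,a\alpha} \leq C_{a\alpha} \sum_{a',\alpha',b,\beta} \snorm{f}_{m_1,a'\alpha'} \, \snorm{g}_{m_2,b\beta}
	,
\end{align*}
over a \emph{finite} set of multi-indices, from which bilinearity and continuity follow at once. To obtain it I would differentiate under the integral in $(x,\xi)$ — permissible because the operator-valued Hörmander classes are closed under differentiation by definition, so all derivatives automatically land in $\mathcal{B}(\Hil,\Hil')$ or $\mathcal{B}(\Hil',\Hil'')$ — and then apply integrations by parts with two families of first-order differential operators: one producing negative powers of $\sexpval{(\eta,\zeta)}$ at the cost of $(y,z)$-derivatives of the integrand, the other producing negative powers of $\sexpval{(y,z)}$ at the cost of $(\eta,\zeta)$-derivatives. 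A Peetre-type inequality then extracts $\sexpval{\xi}^{m_1+m_2}$ from $\sexpval{\xi+\eta}^{m_2}\sexpval{\xi+\zeta}^{m_1}$, leaving behind polynomial factors in $(\eta,\zeta)$ that are absorbed by sufficiently many further integrations by parts.

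The main obstacle I anticipate is the magnetic phase. Derivatives of $\e^{-\ii\lambda\gamma^B(x,y,z)}$ in $x$ generate polynomial prefactors in $(y,z)$ multiplied by derivatives of the components of $B$: under Assumption~\ref{operator_valued_calculus:assumption:bounded_magnetic_fields} the latter are uniformly bounded, but the former must be traded back for decay through additional integrations by parts in $(\eta,\zeta)$. The restriction on $(\rho,\delta)$ in the statement — $\rho > \delta$, or else $\delta = 0$ — is precisely what keeps this bookkeeping consistent: either each $x$-derivative of $f$ or $g$ costs strictly less $\xi$-growth than a $\xi$-derivative saves, or no additional $x$-derivatives are produced in the first place. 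The quantitative estimates on $\gamma^B$ and its derivatives needed to close the argument are already available from the scalar-valued theory and, being statements about the scalar magnetic phase only, can be invoked verbatim.
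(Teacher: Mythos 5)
Your proposal is correct and follows essentially the same route as the paper, which itself does not spell out the details but simply notes that the scalar-valued proofs (\cite[Theorem~2.6]{Iftimie_Mantoiu_Purice:magnetic_psido:2006} for $0 \leq \delta < \rho \leq 1$, \cite[Appendix~D]{Lein:two_parameter_asymptotics:2008} for $\delta = 0$) carry over once absolute values are replaced by operator norms via the submultiplicative estimate $\bnorm{f(x,\xi)\,g(y,\eta)}_{\mathcal{B}(\Hil,\Hil'')} \leq \bnorm{f(x,\xi)}_{\mathcal{B}(\Hil,\Hil')}\,\bnorm{g(y,\eta)}_{\mathcal{B}(\Hil',\Hil'')}$. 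The ingredients you identify — regularisation of the oscillatory integral, integration by parts, Peetre's inequality, the scalar nature of the magnetic phase and the boundedness of $B$ — are exactly those of the cited proofs, so your write-up is in effect a fuller version of the paper's argument.
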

Since the extension is really just a matter of replacing absolute values by suitable operator norms as well as elementary inequalities like 
\begin{align*}
	\bnorm{f(x,\xi) \, g(y,\eta)}_{\mathcal{B}(\Hil,\Hil'')} \leq \bnorm{f(x,\xi)}_{\mathcal{B}(\Hil,\Hil')} \, \bnorm{g(y,\eta)}_{\mathcal{B}(\Hil',\Hil'')} 
	, 
\end{align*}
we will not bother repeating one of the existing proofs; we refer the interested reader to \cite[Theorem~2.6]{Iftimie_Mantoiu_Purice:magnetic_psido:2006} for when $0 \leq \delta < \rho \leq 1$ or \cite[Appendix~D]{Lein:two_parameter_asymptotics:2008} for the case $\delta = 0$, $0 \leq \rho \leq 1$. 
\begin{remark}
	It also makes sense to consider products of operator-valued and \emph{scalar}-valued symbols, we just need to identify $f \equiv f \otimes \id_{\Hil} \in S^{m_1}_{\rho,\delta}(\C) \subseteq S^{m_1}_{\rho,\delta} \bigl ( \mathcal{B}(\Hil) \bigr )$. Therefore, the above Theorem implies the continuity of the maps 
	\begin{align*}
		\Weyl &: S^{m_1}_{\rho,\delta}(\C) \times S^{m_2}_{\rho,\delta} \bigl ( \mathcal{B}(\Hil,\Hil') \bigr ) \longrightarrow S^{m_1 + m_2}_{\rho,\delta} \bigl ( \mathcal{B}(\Hil,\Hil') \bigr ) 
		, 
		\\
		\Weyl &: S^{m_1}_{\rho,\delta} \bigl ( \mathcal{B}(\Hil,\Hil') \bigr ) \times S^{m_2}_{\rho,\delta}(\C) \longrightarrow S^{m_1 + m_2}_{\rho,\delta} \bigl ( \mathcal{B}(\Hil,\Hil') \bigr ) 
		. 
	\end{align*}
\end{remark}
The Weyl product admits asymptotic expansions in small parameters. For example, if we scale the magnetic field $B \rightarrow \lambda B$ by a coupling parameter $\lambda$, the leading-order term in the $\lambda$-expansion of the Weyl product 
\begin{align*}
	f \Weyl g = f \sharp^{B = 0} g + \order(\lambda) 
\end{align*}
is the non-magnetic Weyl product $\sharp^{B = 0}$. Similarly, we can obtain expansions in the semiclassical parameter $\eps$ and simultaneously in $\eps$ \emph{and} $\lambda$. Just like with the product itself, the proofs from \cite{Lein:two_parameter_asymptotics:2008} carry over after minimal modifications. 

Still, we need to clarify what we mean by asymptotic expansion: many constructions in pseudodifferential theory yield symbols that are defined order-by-order in a small parameter $\epsilon$ (such as $\eps$ or $\lambda$), and we need to make sense of \emph{formal} (or asymptotic) sums 
\begin{align*}
	f_{\epsilon} \asymp \sum_{n = 0}^{\infty} \epsilon^n \, f_n 
	. 
\end{align*}
Here, each of the terms $f_n \in S^{m - n \mu (\rho - \delta)}_{\rho,\delta} \bigl ( \mathcal{B}(\Hil,\Hil') \bigr )$ belongs to a particular symbol class. The parameter $\mu \geq 0$ determines whether and how quickly the order of the symbol classes decreases as $n \rightarrow \infty$; typically, one assumes $\mu > 0$, but to ensure all statements cover \emph{equivariant} symbol classes from Section~\ref{equivariant_calculus}, we will allow $\mu = 0$ as well. 

At least when $\mu > 0$ each formal sum has many resummations (\cf \cite[Proposition~2.26]{Folland:harmonic_analysis_hase_space:1989} and \cite[Proposition~2.5.33]{Ruzhansky_Turunen:pseudodifferential_operators_symmetries:2010}), but the difference between two resummations is $\order(\epsilon^{\infty})$ small. What this means in mathematical terms is covered in the next definition: 
\begin{definition}[Asymptotic Hörmander symbol class $\SemiHoer{m}_{\rho,\delta} \bigl ( \mathcal{B}(\Hil,\Hil') \bigr )$]\label{operator_valued_calculus:defn:asymptotic_Hoermander_class}
	Let $\epsilon \in (0,\epsilon_0)$, $\epsilon_0 \ll 1$, be a small parameter, $\mu \geq 0$, $m \in \R$ and $0 \leq \delta \leq \rho \leq 1$. A map 
	\begin{align*}
		\epsilon \mapsto f_{\epsilon} \in S^m_{\rho,\delta} \bigl ( \mathcal{B}(\Hil,\Hil') \bigr )
	\end{align*}
	is called an \emph{asymptotic Hörmander symbol} of order $m$ and type $(\rho,\delta)$ if and only if there exists a sequence $\{ f_n \}_{n \in \N_0}$ of Hörmander symbols $f_n \in S^{m - n \mu (\rho - \delta)}_{\rho,\delta} \bigl ( \mathcal{B}(\Hil,\Hil') \bigr )$ such that 
	\begin{align*}
		f_{\epsilon} - \sum_{n = 0}^N \epsilon^n \, f_n = \order(\epsilon^{N+1}) \in S^{m - (N+1) \mu (\rho - \delta)}_{\rho,\delta} \bigl ( \mathcal{B}(\Hil,\Hil') \bigr ) 
	\end{align*}
	holds uniformly in the following sense: for all $N \in \N_0$ and all multi indices $a , \alpha \in \N_0^d$ there exists a constant $C_{N,a \alpha} > 0$ that is independent of $\epsilon$ such that for all $\epsilon \in (0,\epsilon_0)$ we have 
	\begin{align*}
		\norm{f_{\epsilon} - \sum_{n = 0}^N \epsilon^n \, f_n}_{m - (N+1) \mu (\rho - \delta) ,  a \alpha} \leq C_{N,a \alpha} \, \epsilon^{N+1} 
		. 
	\end{align*}
	We denote the space of all asymptotic Hörmander symbols of order $m$ and type $(\rho,\delta)$ with $\SemiHoer{m}_{\rho,\delta} \bigl ( \mathcal{B}(\Hil,\Hil') \bigr )$; it is endowed with the inductive limit topology. 
	
	Moreover, we denote the space of formal (or asymptotic) sums with $\Sigma \Hoer{m}_{\rho,\delta} \bigl ( \mathcal{B}(\Hil,\Hil') \bigr )$; it is endowed with the inductive limit topology. 
\end{definition}
One of the authors extended this definition to the case where one has two small parameters; we refer the reader to \cite[Definition~2.3]{Lein:two_parameter_asymptotics:2008}. 
\begin{remark}[Explicit description of the topology of $\SemiHoer{m}_{\rho,\delta} \bigl ( \mathcal{B}(\Hil,\Hil') \bigr )$]
	We were unable to find an explicit description of the topology of $\SemiHoer{m}_{\rho,\delta} \bigl ( \mathcal{B}(\Hil,\Hil') \bigr )$ in the literature. Suppose, we are given some $N \in \N_0$ and consider the Hörmander symbols that admit an expansion up to $\order(\epsilon^{N+1})$. Specifically, we are interested in maps $\epsilon \mapsto f_{\epsilon}$ that can be written as the \emph{finite} sum 
	\begin{align*}
		f_{\epsilon} = \sum_{n = 0}^N \epsilon^N \, f_n + \epsilon^{N+1} \, R_N 
		, 
	\end{align*}
	where the $f_n \in S^{m - n \mu (\rho - \delta)}_{\rho,\delta} \bigl ( \mathcal{B}(\Hil,\Hil') \bigr )$, $n = 0 , \ldots , N$, as well as the remainder 
	\begin{align*}
		R_N := \epsilon^{-(N+1)} \left ( f_{\epsilon} - \sum_{n = 0}^N \epsilon^n \, f_n \right ) 
		\in S^{m - (N+1) \mu (\rho - \delta)}_{\rho,\delta} \bigl ( \mathcal{B}(\Hil,\Hil') \bigr )
	\end{align*}
	need to lie in the appropriate Hörmander classes by definition. Asymptotic symbols with an expansion up to order $N$ can therefore be identified with 
	\begin{align*}
		\mathrm{A}^N_{\mu} S^m_{\rho,\delta} \bigl ( \mathcal{B}(\Hil,\Hil') \bigr ) := \bigoplus_{n = 0}^{N+1} S^{m - n \mu (\rho - \delta)}_{\rho,\delta} \bigl ( \mathcal{B}(\Hil,\Hil') \bigr ) 
		. 
	\end{align*}
	Clearly, these finite-order spaces carry a Fréchet topology and nest into one another, 
	\begin{align*}
		\mathrm{A}^N_{\mu} S^m_{\rho,\delta} \bigl ( \mathcal{B}(\Hil,\Hil') \bigr ) \subseteq \mathrm{A}^{N+1}_{\mu} S^m_{\rho,\delta} \bigl ( \mathcal{B}(\Hil,\Hil') \bigr )
		. 
	\end{align*}
	Hence, we may view the space of asymptotic Hörmander symbols 
	\begin{align*}
		\SemiHoer{m}_{\rho,\delta} \bigl ( \mathcal{B}(\Hil,\Hil') \bigr ) := \bigcap_{N \in \N_0} \mathrm{A}^N_{\mu} S^m_{\rho,\delta} \bigl ( \mathcal{B}(\Hil,\Hil') \bigr )
	\end{align*}
	as the inductive limit of the finite-order spaces; to unburden the notation, we do not make the dependence of $\SemiHoer{m}_{\rho,\delta} \bigl ( \mathcal{B}(\Hil,\Hil') \bigr )$ on $\mu$ explicit. 
	
	Note, however, that the $\SemiHoer{m}_{\rho,\delta} \bigl ( \mathcal{B}(\Hil,\Hil') \bigr )$ are \emph{not} Fréchet spaces as they are no longer Hausdorff: each formal sum $\sum_{n = 0}^{\infty} \eps^n f_n$ has several resummations. To see this, we topologize the space of formal sums. As before, we begin with the space of finite sums 
	\begin{align*}
		\Sigma^N_{\mu} S^m_{\rho,\delta} \bigl ( \mathcal{B}(\Hil,\Hil') \bigr ) := \bigoplus_{n = 0}^N S^{m - n \mu (\rho - \delta)}_{\rho,\delta} \bigl ( \mathcal{B}(\Hil,\Hil') \bigr ) 
		\subsetneq \mathrm{A}^N_{\mu} S^m_{\rho,\delta} \bigl ( \mathcal{B}(\Hil,\Hil') \bigr ) 
		, 
	\end{align*}
	which differs from $\mathrm{A}^N_{\mu} S^m_{\rho,\delta} \bigl ( \mathcal{B}(\Hil,\Hil') \bigr )$ only in that we discard the last term in the direct sum for the remainder. As before, these spaces nest into one another and the intersection 
	\begin{align*}
		\Sigma S^m_{\rho,\delta} \bigl ( \mathcal{B}(\Hil,\Hil') \bigr ) := \bigcap_{N \in \N_0} \Sigma^N_{\mu} S^m_{\rho,\delta} \bigl ( \mathcal{B}(\Hil,\Hil') \bigr ) 
	\end{align*}
	of all finite sum spaces is therefore naturally endowed with the inductive limit topology. 
	
	When $\mu > 0$ and $\rho > \delta$, the order $m - n \mu (\rho - \delta) \rightarrow -\infty$ of the Hörmander spaces tend to $-\infty$. Consequently, the difference 
	\begin{align*}
		f_{\epsilon} - \sum_{n = 0}^{\infty} \epsilon^n \, f_n = \order(\epsilon^{\infty}) 
		\in S^{-\infty} \bigl ( \mathcal{B}(\Hil,\Hil') \bigr ) 
	\end{align*}
	is a smoothing symbol. Conversely, results such as \cite[Proposition~2.26]{Folland:harmonic_analysis_hase_space:1989} or \cite[Proposition~2.5.33]{Ruzhansky_Turunen:pseudodifferential_operators_symmetries:2010} explain how to construct \emph{a} resummation from a given formal sum; the constructive proofs make clear that resummations are not unique, and different resummations differ by $S^{-\infty} \bigl ( \mathcal{B}(\Hil,\Hil') \bigr )$. Put succinctly, for this choice of parameters the space of asymptotic Hörmander symbols is related to the space of asymptotic sums by 
	\begin{align*}
		\SemiHoer{m}_{\rho,\delta} \bigl ( \mathcal{B}(\Hil,\Hil') \bigr ) \Big / S^{-\infty}\bigl ( \mathcal{B}(\Hil,\Hil') \bigr ) \cong \Sigma \Hoer{m}_{\rho,\delta} \bigl ( \mathcal{B}(\Hil,\Hil') \bigr )
		. 
	\end{align*}
	Hence, the relationship between the two mimics that of $\mathcal{L}^p(\R^d)$ and $L^p(\R^d) = \mathcal{L}^p(\R^d) \big / \sim$. 
\end{remark}
In perturbative expansions one constructs a formal sum $\sum_{n = 0}^{\infty} \epsilon^n \, f_n$ order-by-order from some recursive procedure and not an asymptotic symbol $f_{\epsilon}$. Fortunately, when $\rho > \delta$ and $\mu > 0$ we usually need not distinguish between formal sums and asymptotic symbols. 
\begin{lemma}\label{operator_valued_calculus:lem:existence_resummation_formal_sum}
	Suppose $\rho > \delta$ and $\mu > 0$. Then there is a one-to-one correspondence between asymptotic symbols and formal sums up to $S^{-\infty} \bigl ( \mathcal{B}(\Hil,\Hil') \bigr )$ and $\order(\epsilon^{\infty})$ in the following sense: any asymptotic symbol $f_{\epsilon} \in \SemiHoer{m}_{\rho,\delta} \bigl ( \mathcal{B}(\Hil,\Hil') \bigr )$ determines the coefficients in the formal sum $\sum_{n = 0}^{\infty} \epsilon^n \, f_n \in \Sigma S^m_{\rho,\delta} \bigl ( \mathcal{B}(\Hil,\Hil') \bigr )$. 
	
	Conversely, any formal sum $\sum_{n = 0}^{\infty} \epsilon^n \, f_n \in \Sigma S^m_{\rho,\delta} \bigl ( \mathcal{B}(\Hil,\Hil') \bigr )$ admits a resummation $f_{\epsilon} \in \SemiHoer{m}_{\rho,\delta} \bigl ( \mathcal{B}(\Hil,\Hil') \bigr )$ so that 
	\begin{align*}
		f_{\epsilon} - \sum_{n = 0}^{\infty} \epsilon^n \, f_n = \order(\epsilon^{\infty})
		\in S^{-\infty} \bigl ( \mathcal{B}(\Hil,\Hil') \bigr ) 
		. 
	\end{align*}
	The resummation is unique up to $S^{-\infty} \bigl ( \mathcal{B}(\Hil,\Hil') \bigr )$ and $\order(\epsilon^{\infty})$, \ie the difference of two resummations $f_{\epsilon} - \tilde{f}_{\epsilon} = \order(\epsilon^{\infty}) \in S^{-\infty} \bigl ( \mathcal{B}(\Hil,\Hil') \bigr )$ is a smoothing symbol. 
\end{lemma}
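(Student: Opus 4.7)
The plan is to follow the classical Borel--Hörmander resummation argument, adapted to the operator-valued setting, splitting the proof into the three assertions: extraction of coefficients, existence of a resummation, and uniqueness modulo $S^{-\infty}$ and $\order(\epsilon^{\infty})$.

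For the first direction, starting from an asymptotic symbol $f_{\epsilon}$, I would extract the coefficients $f_n$ recursively. By the defining estimate with $N = 0$, the difference $f_{\epsilon} - f_0$ lies in $\epsilon \, S^{m - \mu(\rho - \delta)}_{\rho,\delta} \bigl ( \mathcal{B}(\Hil,\Hil') \bigr )$ uniformly, which forces $f_0(X) = \lim_{\epsilon \to 0} f_{\epsilon}(X)$ pointwise; the uniform bounds on the Hörmander seminorms of $f_\epsilon$ then descend to $f_0 \in S^m_{\rho,\delta}$. Iterating with $f_{n+1} := \lim_{\epsilon \to 0} \epsilon^{-(n+1)} \bigl ( f_{\epsilon} - \sum_{k \le n} \epsilon^k f_k \bigr )$ produces the desired formal sum in $\Sigma S^m_{\rho,\delta} \bigl ( \mathcal{B}(\Hil,\Hil') \bigr )$, and the recursion shows the $f_n$ are uniquely determined by $f_{\epsilon}$.

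For the second direction I would invoke a Borel-type cutoff construction. Fix a cutoff $\chi \in \Cont^{\infty}(\R)$ with $\chi(t) = 0$ for $t \le \nicefrac{1}{2}$ and $\chi(t) = 1$ for $t \ge 1$, and given the sequence $f_n \in S^{m - n \mu(\rho - \delta)}_{\rho,\delta} \bigl ( \mathcal{B}(\Hil,\Hil') \bigr )$, choose a rapidly increasing sequence $\lambda_n \to \infty$ and set
\begin{align*}
	f_{\epsilon} := \sum_{n=0}^{\infty} \chi(\lambda_n \epsilon) \, \epsilon^n \, f_n .
\end{align*}
For each $\epsilon > 0$ only finitely many terms are nonzero, so convergence is automatic; the task reduces to picking $\lambda_n$ so that for every multi-index pair $(a,\alpha)$ with $\sabs{a} + \sabs{\alpha} \le n$ the relevant Hörmander seminorm of the $n$th summand is bounded by $2^{-n}$. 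A standard diagonal argument, exploiting that multiplication by $\chi(\lambda_n \epsilon)$ only contributes where $\epsilon \ge \nicefrac{1}{2\lambda_n}$, then guarantees that for every $N$ the tail beyond index $N$ lies in $\epsilon^{N+1} \, S^{m - (N+1) \mu(\rho - \delta)}_{\rho,\delta} \bigl ( \mathcal{B}(\Hil,\Hil') \bigr )$ uniformly in $\epsilon$, which is exactly the defining property of an asymptotic symbol. The operator-valuedness enters only by replacing absolute values with operator norms in the seminorm estimates.

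The third claim then follows almost immediately: if $f_{\epsilon}$ and $\tilde{f}_{\epsilon}$ both resum the same formal sum, then for every $N \in \N_0$ their difference lies in $\epsilon^{N+1} \, S^{m - (N+1) \mu(\rho - \delta)}_{\rho,\delta} \bigl ( \mathcal{B}(\Hil,\Hil') \bigr )$. Since $\mu(\rho - \delta) > 0$, the orders tend to $-\infty$ and intersecting over $N$ places $f_{\epsilon} - \tilde{f}_{\epsilon}$ in $S^{-\infty} \bigl ( \mathcal{B}(\Hil,\Hil') \bigr )$ with a prefactor of $\order(\epsilon^{\infty})$. The only genuinely delicate step is the calibration of the $\lambda_n$ in the Borel construction; this is precisely where the hypotheses $\mu > 0$ and $\rho > \delta$ are essential, since they guarantee that the symbol classes housing successive remainders strictly sharpen, so the diagonal argument can simultaneously satisfy the countably many seminorm requirements.
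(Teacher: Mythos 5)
Your route is the paper's route: the paper delegates the second direction to the standard Borel--Hörmander resummation (Folland, Proposition~2.26; Ruzhansky--Turunen, Proposition~2.5.33), remarking only that operator norms replace absolute values, while you spell the construction out. Your coefficient extraction via $\epsilon \to 0$ limits is a more explicit version of the paper's ``follows from the definition'', and your uniqueness argument is the intended one.

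There is, however, a concrete error in the resummation step: your cutoff convention is reversed, and as written the construction fails. With $\chi(t) = 0$ for $t \le \nicefrac{1}{2}$, $\chi(t) = 1$ for $t \ge 1$, and $\lambda_n \to \infty$, for any fixed $\epsilon > 0$ one has $\chi(\lambda_n \epsilon) = 1$ for all sufficiently large $n$, so \emph{infinitely} many summands are nonzero — contrary to your claim of local finiteness. Worse, for fixed $n \le N$ and $\epsilon < \nicefrac{1}{2 \lambda_n}$ the $n$th term is annihilated, so $f_{\epsilon} - \sum_{n \le N} \epsilon^n f_n$ contains the contribution $- \epsilon^n \, f_n$ near $\epsilon = 0$, which is only $\order(\epsilon^n)$ rather than $\order(\epsilon^{N+1})$; the asymptotic expansion already fails at order zero. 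The repair is to replace $\chi$ by $1 - \chi$ (i.e., take a cutoff equal to $1$ on $[0,\nicefrac{1}{2}]$ and vanishing on $[1,\infty)$): then the $n$th summand is supported in $\{ \epsilon \le \nicefrac{1}{\lambda_n} \}$, the sum is locally finite in $\epsilon$, and on that support $\epsilon^{\, n - N - 1} \le \lambda_n^{-(n - N - 1)}$ for $n > N + 1$, so the diagonal choice of $\lambda_n$ delivers precisely the tail estimates you describe. With that correction the argument is complete.
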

\begin{proof}
	The first implication, determining the formal sum from an asymptotic symbol, follows directly from the definition of $\SemiHoer{m}_{\rho,\delta} \bigl ( \mathcal{B}(\Hil,\Hil') \bigr )$. 
	
	Only for the other direction do we need the assumption $\rho > \delta$ and $\mu > 0$: suppose we are given a formal sum $\sum_{n = 0}^{\infty} \epsilon^n \, f_n \in \Sigma S^m_{\rho,\delta} \bigl ( \mathcal{B}(\Hil,\Hil') \bigr )$. Then we can use the strategy in the proofs of \cite[Proposition~2.26]{Folland:harmonic_analysis_hase_space:1989} or \cite[Proposition~2.5.33]{Ruzhansky_Turunen:pseudodifferential_operators_symmetries:2010} to construct a resummation $f_{\epsilon}$ with the help of a cutoff function; the assumptions $\rho > \delta$ and $\mu > 0$ ensure that the order 
	\begin{align*}
		m - n \mu (\rho - \delta) \xrightarrow{n \rightarrow \infty} -\infty
	\end{align*}
	of the terms goes to $-\infty$ independently of $m \in \R$. Clearly, the resummation depends on our choice of cutoff function, but ultimately, the difference amounts to a smoothing symbol $S^{-\infty} \bigl ( \mathcal{B}(\Hil,\Hil') \bigr )$ of order $\epsilon^{\infty}$. The operator-valuedness only enters insofar that we need to replace the ordinary Hörmander seminorms by the operator-valued ones from Definition~\ref{operator_valued_calculus:defn:Hoermander_symbols}. 
\end{proof}
\begin{remark}[No results for resummations when $\mu = 0$ or $\rho = \delta$]\label{operator_valued_calculus:rem:lack_resummation_results_rho_equal_delta}
	Despite a thorough search of the literature, we were unable to find results that proved the existence of resummations for when $\rho = \delta$ or $\mu = 0$. The proofs for the case $\rho > \delta$ such as \cite[Proposition~2.26]{Folland:harmonic_analysis_hase_space:1989} or \cite[Proposition~2.5.33]{Ruzhansky_Turunen:pseudodifferential_operators_symmetries:2010} are constructive, and they fail when $\rho = \delta$ or $\mu = 0$. 
\end{remark}
Asymptotic expansions of the Weyl product are naturally formulated on asymptotic Hörmander symbol classes: 
\begin{theorem}[Asymptotic expansions of $\Weyl$]
	Suppose the magnetic field is bounded in the sense of Assumption~\ref{operator_valued_calculus:assumption:bounded_magnetic_fields} and $\rho \in [0,1]$. 
	Then the Weyl product 
	\begin{align*}
		\Weyl : S^{m_1}_{\rho,0} \bigl ( \mathcal{B}(\Hil,\Hil') \bigr ) \times S^{m_2}_{\rho,0} \bigl ( \mathcal{B}(\Hil',\Hil'') \bigr ) \longrightarrow S^{m_1 + m_2}_{\rho,0} \bigl ( \mathcal{B}(\Hil,\Hil'') \bigr )
	\end{align*}
	has asymptotic expansions in $\eps$, $\lambda$ as well as $\eps$ \emph{and} $\lambda$, all of which define bilinear continuous maps 
	\begin{align*}
		&\Weyl : \SemiHoer{m_1}_{\rho,0} \bigl ( \mathcal{B}(\Hil,\Hil') \bigr ) \times \SemiHoer{m_2}_{\rho,0} \bigl ( \mathcal{B}(\Hil',\Hil'') \bigr ) \longrightarrow \SemiHoer{m_1 + m_2}_{\rho,0} \bigl ( \mathcal{B}(\Hil,\Hil'') \bigr ) 
		, 
		\\
		&\Weyl : \Sigma \Hoer{m_1}_{\rho,0} \bigl ( \mathcal{B}(\Hil,\Hil') \bigr ) \times \Sigma \Hoer{m_2}_{\rho,0} \bigl ( \mathcal{B}(\Hil',\Hil'') \bigr ) \longrightarrow \Sigma \Hoer{m_1 + m_2}_{\rho,0} \bigl ( \mathcal{B}(\Hil,\Hil'') \bigr ) 
		. 
	\end{align*}
	The terms of the expansion can be found in \cite{Lein:two_parameter_asymptotics:2008}, specifically Theorems~1.1 and 2.12. 
\end{theorem}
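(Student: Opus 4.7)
The plan is to reduce the operator-valued statement to the scalar case treated in \cite[Theorems~1.1 and 2.12]{Lein:two_parameter_asymptotics:2008} by systematically replacing absolute values $\sabs{\,\cdot\,}$ with operator norms $\bnorm{\,\cdot\,}_{\mathcal{B}(\,\cdot\,,\,\cdot\,)}$ throughout. The starting point is the oscillatory integral representation~\eqref{operator_valued_calculus:eqn:Weyl_product}, which remains valid verbatim for composable operator-valued symbols; the only change compared to the scalar case is that the \emph{order} of $(\Fs f)(Y)$ and $(\Fs g)(Z)$ in the integrand now matters. All subsequent Taylor expansions of the symplectic phase $\e^{+ \ii \frac{\eps}{2} \sigma(Y,Z)}$ in $\eps$ and of the magnetic flux phase $\e^{- \ii \lambda \gamma^B(x,y,z)}$ in $\lambda$ produce \emph{scalar} phase factors, so they commute with the operator-valued integrand and yield the same explicit coefficients $(f \Weyl g)_{(n,\ell)}$ as in the scalar case, provided one is careful to keep products of derivatives of $f$ and $g$ in the correct order.

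First I would check that every coefficient $(f \Weyl g)_{(n,\ell)}$ belongs to the expected Hörmander class. By the Leibniz rule, derivatives $\partial_x^a \partial_{\xi}^{\alpha} f$ lie in $S^{m_1 - \rho \sabs{\alpha}}_{\rho,0} \bigl ( \mathcal{B}(\Hil,\Hil') \bigr )$ and likewise for $g$; the magnetic field components $B_{jk}$ and their derivatives furnish $S^0_{\rho,0}(\C)$ multipliers by Assumption~\ref{operator_valued_calculus:assumption:bounded_magnetic_fields}. Combining these via pointwise multiplication of Hörmander symbols then places $(f \Weyl g)_{(n,\ell)} \in S^{m_1 + m_2 - (n+\ell) \rho}_{\rho,0} \bigl ( \mathcal{B}(\Hil,\Hil'') \bigr )$, which identifies $\mu = 1$ in Definition~\ref{operator_valued_calculus:defn:asymptotic_Hoermander_class} and gives the correct order reduction term by term.

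The main obstacle will be the remainder estimate, which is where \cite{Lein:two_parameter_asymptotics:2008} expends most of its effort. After writing
\begin{align*}
  f \Weyl g = \sum_{n + \ell \leq N} \eps^n \, \lambda^{\ell} \, (f \Weyl g)_{(n,\ell)} + R_N(\eps,\lambda),
\end{align*}
where $R_N$ is an explicit oscillatory integral containing the integral remainders of Taylor's theorem applied to both phases, the scalar proof uses repeated integration by parts in $Y$ and $Z$ to convert polynomial weights into derivatives of $(\Fs f)(Y)$ and $(\Fs g)(Z)$, followed by stationary-phase-type estimates on $\gamma^B$ to control the integrand by finitely many Hörmander seminorms of $f$ and $g$. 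Each such step is performed pointwise in $(x,\xi)$, and every absolute value on the integrand can be replaced by $\bnorm{(\Fs f)(Y)}_{\mathcal{B}(\Hil',\Hil'')} \, \bnorm{(\Fs g)(Z)}_{\mathcal{B}(\Hil,\Hil')}$ via submultiplicativity of the operator norm. One concludes that $R_N \in S^{m_1 + m_2 - (N+1)\rho}_{\rho,0} \bigl ( \mathcal{B}(\Hil,\Hil'') \bigr )$ with all seminorms bounded by finitely many seminorms of $f$ and $g$ uniformly in $(\eps,\lambda)$.

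Continuity of the bilinear maps on $\SemiHoer{m}_{\rho,0}$ and $\Sigma \Hoer{m}_{\rho,0}$ then follows essentially for free. For formal sums one performs the Cauchy-product rearrangement of coefficients combined with Theorem~\ref{operator_valued_calculus:thm:Weyl_product_Hoermander_symbols}; continuity is inherited order by order. For asymptotic symbols the uniform remainder estimate above, applied at each $N$, is precisely the continuity of $\Weyl$ with respect to the inductive limit topology of Definition~\ref{operator_valued_calculus:defn:asymptotic_Hoermander_class}, and passage to $\Sigma \Hoer{m}_{\rho,0}$ amounts to quotienting by $S^{-\infty}$, using Lemma~\ref{operator_valued_calculus:lem:existence_resummation_formal_sum} whenever $\rho > 0$ to guarantee existence of resummations.
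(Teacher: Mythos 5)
Your proposal is correct and follows essentially the same route the paper takes: the paper gives no separate proof of this theorem but states that the proofs of \cite[Theorems~1.1 and 2.12]{Lein:two_parameter_asymptotics:2008} ``carry over after minimal modifications,'' namely replacing absolute values by operator norms while keeping the non-commuting factors in order — which is precisely what you spell out. Your closing caveat about resummations only being available for $\rho > 0$ matches the paper's own Remark~\ref{operator_valued_calculus:rem:lack_resummation_results_rho_equal_delta}.
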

Note that asymptotic expansions other than the expansion in $\eps$ and $\lambda$ have been considered in the literature. In \cite{Fuerst_Lein:scaling_limits_Dirac:2008} one of the authors showed how to obtain the semirelativistic limit of the Dirac dynamics; the scaling $\eps = \nicefrac{v}{c}$ and $\lambda = \nicefrac{v^2}{c^2}$ emerges then, where $v$ is the typical speed the particle travels at and $c$ is the speed of light. 
\begin{remark}
	We reckon we can extend the above theorem to symbol classes $S^m_{\rho,\delta}$ where $0 \leq \delta < \rho \leq 1$, all one needs to do is extend the existence results of the oscillatory integral in \cite[Appendix~D]{Lein:two_parameter_asymptotics:2008}. 
\end{remark}
One important application is the existence of a parametrix for elliptic magnetic pseudodifferential operators. When a scalar-valued pseudodifferential operator $\Op^A(f)$ possesses a bounded inverse, then its parametrix is the asymptotic expansion of 
\begin{align*}
	\Op^A(f)^{-1} \asymp \sum_{n = 0}^{\infty} \Op^A(g_n) 
\end{align*}
One of the most common applications is the case $f = h - z$, $z\in \C\setminus\sigma(\Op^A(h))$, where $\Op^A(h)$ defines a selfadjoint operator. Then the parametrix yields an asymptotic expansion of the operator resolvent. However, there are situations where $\Op^A(f)$ is not invertible, but the parametrix nevertheless exists. While this may seem like a downside, in many applications it is actually a feature and not a bug. For example, one can still formulate a holomorphic functional calculus analogous to the one described in Section~\ref{operator_valued_calculus:Hoermander_symbols:inversion_and_functional_calculus} below and define \eg projections perturbatively order-by-order \cite{Panati_Spohn_Teufel:sapt_PRL:2002,DeNittis_Lein:Bloch_electron:2009,Fuerst_Lein:scaling_limits_Dirac:2008,DeNittis_Lein:sapt_photonic_crystals:2013,Freund_Teufel:non_trivial_Bloch_sapt:2013}. Moreover, they enter in the proofs of many technical results such as that of  Corollary~\ref{operator_valued_calculus:cor:boundedness_magnetic_Sobolev_spaces}. 

Given that there are many asymptotic expansions of the magnetic Weyl product (\cf \eg \cite{Lein:two_parameter_asymptotics:2008,Fuerst_Lein:scaling_limits_Dirac:2008}), we will keep the discussion a bit more abstract. 
\begin{proposition}[Existence of a left parametrix]\label{operator_valued_calculus:prop:existence_left_parametrix}
	Suppose he magnetic field $B$ is bounded in the sense of Assumption~\ref{operator_valued_calculus:assumption:bounded_magnetic_fields}. Furthermore, we make the following assumptions: 
	\begin{enumerate}[(a)]
		\item We are given a symbol $f \in S^m_{\rho,\delta} \bigl ( \mathcal{B}(\Hil,\Hil') \bigr )$, $m \in \R$ and $\rho > \delta \geq 0$. 
		\item We are given an asymptotic expansion of the magnetic Weyl product 
		\begin{align*}
			h \Weyl k \asymp \sum_{n = 0}^{\infty} \epsilon^n \, (h \Weyl k)_{(n)} 
			\in \SemiHoer{m_1 + m_2}_{\rho,\delta} \bigl ( \mathcal{B}(\Hil,\Hil'') \bigr ) 
		\end{align*}
		of $h \in \SemiHoer{m_1}_{\rho,\delta} \bigl ( \mathcal{B}(\Hil',\Hil'') \bigr )$ and $k \in \SemiHoer{m_2}_{\rho,\delta} \bigl ( \mathcal{B}(\Hil,\Hil') \bigr )$ in a small parameter $\epsilon \ll 1$ in the sense of Definition~\ref{operator_valued_calculus:defn:asymptotic_Hoermander_class} with $\mu > 0$. 
		\item There exists a symbol $g_0 \in S^{-m}_{\rho,\delta} \bigl ( \mathcal{B}(\Hil',\Hil) \bigr )$ that satisfies 
		\begin{align*}
			g_0 \Weyl f - \id_{\Hil} &= \order(\epsilon) \in S^{-\mu (\rho - \delta)}_{\rho,\delta} \bigl ( \mathcal{B}(\Hil) \bigr )
			. 
		\end{align*}
	\end{enumerate}
	Then there exists a symbol 
	\begin{align*}
		f^{(-1)_{\epsilon}} = \sum_{n = 0}^{\infty} \epsilon^n \, f^{(-1)_{\epsilon}}_n
		\in S^{-m}_{\rho,\delta} \bigl ( \mathcal{B}(\Hil',\Hil) \bigr ) 
	\end{align*}
	called the \emph{left parametrix}, which is the inverse with respect to $\Weyl$ up to $\order(\epsilon^{\infty})$, 
	\begin{align*}
		f^{(-1)_{\epsilon}} \Weyl f - \id_{\Hil} &= \order(\epsilon^{\infty}) 
		\in S^{-\infty} \bigl ( \mathcal{B}(\Hil) \bigr )
		. 
	\end{align*}
	The terms $f^{(-1)_{\epsilon}}_n \in S^{-m - n \mu(\rho - \delta)}_{\rho,\delta} \bigl ( \mathcal{B}(\Hil',\Hil) \bigr )$ can be computed explicitly order-by-order, and the $0$th-order term $f^{(-1)_{\epsilon}}_0 = g_0$ is the symbol from assumption~(c). The symbol $f^{(-1)_{\epsilon}}$ is unique up to $\order(\epsilon^{\infty}) \in S^{-\infty} \bigl ( \mathcal{B}(\Hil',\Hil) \bigr )$. 
\end{proposition}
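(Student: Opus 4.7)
The plan is to invert $f$ by factorizing $g_{0} \Weyl f = \id_{\Hil} - r_{\epsilon}$, where $r_{\epsilon} := \id_{\Hil} - g_{0} \Weyl f$ is a small remainder, and then to produce a Moyal inverse of $\id_{\Hil} - r_{\epsilon}$ by a formal Neumann series with respect to $\Weyl$. Hypothesis~(c) gives $r_{\epsilon} = \order(\epsilon) \in S^{-\mu(\rho-\delta)}_{\rho,\delta}\bigl(\mathcal{B}(\Hil)\bigr)$, and continuity of $\Weyl$ on operator-valued Hörmander classes (Theorem~\ref{operator_valued_calculus:thm:Weyl_product_Hoermander_symbols}) shows
\begin{align*}
r_{\epsilon}^{\Weyl k} = \order(\epsilon^{k}) \in S^{-k\mu(\rho-\delta)}_{\rho,\delta}\bigl(\mathcal{B}(\Hil)\bigr),
\end{align*}
so each Moyal power gains a factor of $\epsilon$ and drops $\mu(\rho-\delta)$ in Hörmander order. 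Once a symbol $s_{\epsilon} \in \SemiHoer{0}_{\rho,\delta}\bigl(\mathcal{B}(\Hil)\bigr)$ with $s_{\epsilon} \Weyl (\id_{\Hil} - r_{\epsilon}) - \id_{\Hil} = \order(\epsilon^{\infty}) \in S^{-\infty}\bigl(\mathcal{B}(\Hil)\bigr)$ is available, the candidate
\begin{align*}
f^{(-1)_{\epsilon}} := s_{\epsilon} \Weyl g_{0} \in \SemiHoer{-m}_{\rho,\delta}\bigl(\mathcal{B}(\Hil',\Hil)\bigr)
\end{align*}
automatically satisfies $f^{(-1)_{\epsilon}} \Weyl f = s_{\epsilon} \Weyl (g_{0} \Weyl f) = s_{\epsilon} \Weyl (\id_{\Hil} - r_{\epsilon}) = \id_{\Hil} + \order(\epsilon^{\infty})$ modulo $S^{-\infty}$, and the leading coefficient is $f^{(-1)_{\epsilon}}_{0} = g_{0}$ as required.

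\textbf{Construction of $s_{\epsilon}$.} Expanding each Moyal power $r_{\epsilon}^{\Weyl k} \asymp \sum_{j \geq k} \epsilon^{j} r_{k,j}$ and collecting powers of $\epsilon$ produces a formal sum whose $n$th coefficient
\begin{align*}
s_{n} := \sum_{k=0}^{n} r_{k,n} \in S^{-n\mu(\rho-\delta)}_{\rho,\delta}\bigl(\mathcal{B}(\Hil)\bigr)
\end{align*}
is a \emph{finite} sum, since only the first $n+1$ Moyal powers contribute at order $\epsilon^{n}$. Hence $\sum_{n \geq 0} \epsilon^{n} s_{n}$ is a bona fide element of $\Sigma S^{0}_{\rho,\delta}\bigl(\mathcal{B}(\Hil)\bigr)$. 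Since $\mu > 0$ and $\rho > \delta$, Lemma~\ref{operator_valued_calculus:lem:existence_resummation_formal_sum} supplies a resummation $s_{\epsilon} \in \SemiHoer{0}_{\rho,\delta}\bigl(\mathcal{B}(\Hil)\bigr)$, unique modulo $S^{-\infty}$ and $\order(\epsilon^{\infty})$. The truncated Moyal geometric identity
\begin{align*}
\Bigl(\sum_{k=0}^{N} r_{\epsilon}^{\Weyl k}\Bigr) \Weyl (\id_{\Hil} - r_{\epsilon}) = \id_{\Hil} - r_{\epsilon}^{\Weyl (N+1)},
\end{align*}
together with $r_{\epsilon}^{\Weyl (N+1)} = \order(\epsilon^{N+1}) \in S^{-(N+1)\mu(\rho-\delta)}_{\rho,\delta}$, then forces the difference $s_{\epsilon} \Weyl (\id_{\Hil} - r_{\epsilon}) - \id_{\Hil}$ to lie in Hörmander spaces of arbitrarily negative order and to vanish to all orders in $\epsilon$, hence to be $\order(\epsilon^{\infty}) \in S^{-\infty}$, as needed.

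\textbf{Uniqueness and main obstacle.} The construction is canonical once $g_{0}$ has been chosen: the formal sum $\sum_{n} \epsilon^{n} s_{n}$ is determined by $g_{0}$ and $f$ via the Moyal Neumann series, and the only freedom lies in the choice of resummation, which by Lemma~\ref{operator_valued_calculus:lem:existence_resummation_formal_sum} is parametrized by $S^{-\infty}\bigl(\mathcal{B}(\Hil)\bigr)$ and $\order(\epsilon^{\infty})$; post-composition with $\Weyl g_{0}$ then gives the stated uniqueness of $f^{(-1)_{\epsilon}} \in \SemiHoer{-m}_{\rho,\delta}\bigl(\mathcal{B}(\Hil',\Hil)\bigr)$ modulo $S^{-\infty}\bigl(\mathcal{B}(\Hil',\Hil)\bigr)$ and $\order(\epsilon^{\infty})$. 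The only technical hurdle is the bookkeeping needed to confirm that the double sum $\sum_{k} r_{\epsilon}^{\Weyl k}$ genuinely belongs to the correct operator-valued Hörmander classes at each order in $\epsilon$; this reduces to repeated application of Theorem~\ref{operator_valued_calculus:thm:Weyl_product_Hoermander_symbols} and the per-step gain of $\mu(\rho-\delta)$, with operator-valuedness entering only as a routine replacement of absolute values by operator norms. Everything beyond that is a transcription of the scalar-valued argument.
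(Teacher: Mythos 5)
Your proof is correct and follows essentially the same route as the paper: the paper likewise forms the inversion defect $r_0 = \epsilon^{-1}(g_0 \Weyl f - \id_{\Hil})$, inverts $\id_{\Hil} + \epsilon\, r_0$ by the formal Moyal geometric series, and sets $f^{(-1)_{\epsilon}} := r \Weyl g_0$, with the order-by-order cancellation left as routine bookkeeping. Your version merely makes the resummation step (via Lemma~\ref{operator_valued_calculus:lem:existence_resummation_formal_sum}) and the finiteness of each coefficient $s_n$ explicit, which is a welcome refinement rather than a deviation.
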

\begin{proof}
	The strategy of the proof is quite standard and can be found in \eg \cite[Theorem~2.8]{Iftimie_Mantoiu_Purice:magnetic_psido:2006}. At the core the idea is to derive the asymptotic expansion of the inverse $f^{(-1)_{\epsilon}}$ from the \emph{inversion defect}
	\begin{align*}
		r_0 := \epsilon^{-1} \bigl ( g_0 \Weyl f - \id_{\Hil} \bigr ) 
		= \order(1) 
		\in S^{- \mu (\rho - \delta)}_{\rho,\delta} \bigl ( \mathcal{B}(\Hil) \bigr )
	\end{align*}
	that belongs to the associated symbol class by assumption~(c). Since we have added a factor of $\epsilon^{-1}$, the inversion defect is $r_0 = \order(1)$. Now we define another symbol by asymptotically summing the formal geometric series 
	\begin{align*}
		r := \id_{\Hil} + \sum_{n = 1}^{\infty} (-1)^n \, \epsilon^n \, r_0^{\Weyl \, n}
	\end{align*}
	where $r_0^{\Weyl \, n} := r_0 \Weyl \cdots \Weyl r_0$ is the $n$-fold Weyl product of $r_0$ with itself. \emph{Formally,} $r = \bigl ( \id_{\Hil} - \eps \, r_0 \bigr )^{(-1)_{\epsilon}}$ is just the inverse of $\id_{\Hil} - \eps \, r_0$ with respect to the magnetic Weyl product, which explains why 
	\begin{align*}
		f^{(-1)_{\epsilon}} := r \Weyl g_0 
	\end{align*}
	is a left-inverse of $f$, 
	\begin{align*}
		f^{(-1)_{\epsilon}} \Weyl f - \id_{\Hil} &= r \Weyl \bigl ( \epsilon \, r_0 - \id_{\Hil} \bigr )
		= \order(\epsilon^{\infty})
		\in S^{-\infty} \bigl ( \mathcal{B}(\Hil) \bigr ) 
		. 
	\end{align*}
	Doing this rigorously just amounts to checking that order-by-order in $\epsilon$ the terms cancel. 
\end{proof}
An analogous statement holds for the right parametrix. Since the proof is a straightforward modification of the one above, we will omit it for brevity. 
\begin{corollary}[Existence of a right parametrix]\label{operator_valued_calculus:cor:existence_right_parametrix}
	Suppose we are in the setting of Proposition~\ref{operator_valued_calculus:prop:existence_left_parametrix}, but we replace assumption~(c) with 
	\begin{enumerate}[(a')]
		\setcounter{enumi}{2}
		\item There exists a symbol $g_0 \in S^{-m}_{\rho,\delta} \bigl ( \mathcal{B}(\Hil',\Hil) \bigr )$ that satisfies 
		\begin{align*}
			f \Weyl g_0 - \id_{\Hil'} &= \order(\epsilon) \in S^{-\mu (\rho - \delta)}_{\rho,\delta} \bigl ( \mathcal{B}(\Hil') \bigr )
			. 
		\end{align*}
	\end{enumerate}
	Then there exists a symbol 
	\begin{align*}
		f^{(-1)_{\epsilon}} = \sum_{n = 0}^{\infty} \epsilon^n \, f^{(-1)_n}
		\in S^{-m}_{\rho,\delta} \bigl ( \mathcal{B}(\Hil',\Hil) \bigr ) 
	\end{align*}
	called the \emph{right parametrix}, which is the inverse with respect to $\Weyl$ up to $\order(\epsilon^{\infty})$, 
	\begin{align*}
		f \Weyl f^{(-1)_{\epsilon}} - \id_{\Hil'} &= \order(\epsilon^{\infty}) 
		\in S^{-\infty} \bigl ( \mathcal{B}(\Hil') \bigr )
		. 
	\end{align*}
	The terms $f^{(-1)_n} \in S^{-m - n \mu(\rho - \delta)}_{\rho,\delta} \bigl ( \mathcal{B}(\Hil',\Hil) \bigr )$ can be computed explicitly order-by-order, and the $0$th-order term $f^{(-1)_0} = g_0$ is the symbol from assumption~(c'). The symbol $f^{(-1)_{\epsilon}}$ is unique up to $\order(\epsilon^{\infty}) \in S^{-\infty} \bigl ( \mathcal{B}(\Hil',\Hil) \bigr )$. 
\end{corollary}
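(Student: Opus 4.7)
The plan is to mirror the construction used for the left parametrix in Proposition~\ref{operator_valued_calculus:prop:existence_left_parametrix}, but with the multiplication order reversed throughout. The key difference is that the inversion defect will now live in $S^{-\mu(\rho-\delta)}_{\rho,\delta} \bigl ( \mathcal{B}(\Hil') \bigr )$ (note: on the \emph{target} space), and I need to compose $g_0$ from the \emph{left} rather than the right so that the Hilbert space bookkeeping works out correctly.

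Concretely, I would proceed as follows. First, define the (right) inversion defect
\begin{align*}
	s_0 := \epsilon^{-1} \bigl ( f \Weyl g_0 - \id_{\Hil'} \bigr ) \in S^{-\mu(\rho - \delta)}_{\rho,\delta} \bigl ( \mathcal{B}(\Hil') \bigr )
	,
\end{align*}
which is well-defined as a Hörmander symbol thanks to assumption~(c'). Using that $\mu > 0$ and $\rho > \delta$, the iterated magnetic Weyl products $s_0^{\Weyl \, n} \in S^{-n \mu(\rho-\delta)}_{\rho,\delta} \bigl ( \mathcal{B}(\Hil') \bigr )$ belong to Hörmander spaces whose orders tend to $-\infty$, so Lemma~\ref{operator_valued_calculus:lem:existence_resummation_formal_sum} provides a resummation
\begin{align*}
	s \asymp \id_{\Hil'} + \sum_{n = 1}^{\infty} (-1)^n \, \epsilon^n \, s_0^{\Weyl \, n} \in \SemiHoer{0}_{\rho,\delta} \bigl ( \mathcal{B}(\Hil') \bigr )
	.
\end{align*}
The candidate right parametrix is then
\begin{align*}
	f^{(-1)_{\epsilon}} := g_0 \Weyl s \in \SemiHoer{-m}_{\rho,\delta} \bigl ( \mathcal{B}(\Hil',\Hil) \bigr )
	,
\end{align*}
where the order follows from Theorem~\ref{operator_valued_calculus:thm:Weyl_product_Hoermander_symbols}. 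Its zeroth-order term is evidently $f^{(-1)_0} = g_0$, and the $n$th-order term is $g_0 \Weyl (-s_0)^{\Weyl \, n}$.

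To verify the defining relation, I would compute, using associativity of $\Weyl$ on Hörmander symbols (which follows from Lemma~\ref{operator_valued_calculus:lem:extension_Weyl_product_Moyal_spaces} since Hörmander symbols embed in the Moyal space),
\begin{align*}
	f \Weyl f^{(-1)_{\epsilon}} = (f \Weyl g_0) \Weyl s = (\id_{\Hil'} + \epsilon \, s_0) \Weyl s
	,
\end{align*}
and observe that the formal geometric-series cancellation $(\id_{\Hil'} + \epsilon s_0) \Weyl s \equiv \id_{\Hil'}$ holds at every finite order in $\epsilon$, so the remainder lies in $S^{-\infty} \bigl ( \mathcal{B}(\Hil') \bigr )$. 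Uniqueness up to $\order(\epsilon^{\infty})$ follows by the standard argument: if $f^{(-1)_{\epsilon}}$ and $\tilde{f}^{(-1)_{\epsilon}}$ are both right parametrices, their difference $d_\epsilon$ satisfies $f \Weyl d_\epsilon = \order(\epsilon^\infty)$, and composing from the left with any left parametrix (whose existence, under the dual hypothesis, would be guaranteed by Proposition~\ref{operator_valued_calculus:prop:existence_left_parametrix}) shows $d_\epsilon = \order(\epsilon^{\infty})$; alternatively, one can verify uniqueness directly by reading off coefficients order-by-order from the recursion $f \Weyl d_\epsilon = 0$ modulo $\order(\epsilon^\infty)$, which, since the leading symbol $g_0$ of a candidate right-inverse is determined modulo $\order(\epsilon)$ by assumption~(c'), forces all higher coefficients of $d_\epsilon$ to vanish up to $S^{-\infty}$. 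The only real subtlety is ensuring the correct Hilbert-space side for each factor, since $g_0$ maps $\Hil' \to \Hil$ and $s$ is an endomorphism of $\Hil'$; this is why $g_0$ must stand on the left of $s$ in the definition of $f^{(-1)_{\epsilon}}$.
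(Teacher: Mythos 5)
Your proposal is correct and is precisely the ``straightforward modification'' of the proof of Proposition~\ref{operator_valued_calculus:prop:existence_left_parametrix} that the paper invokes (and omits) for this corollary: you form the right inversion defect $s_0 = \epsilon^{-1}(f \Weyl g_0 - \id_{\Hil'}) \in S^{-\mu(\rho-\delta)}_{\rho,\delta}\bigl(\mathcal{B}(\Hil')\bigr)$, resum the formal Neumann series, and place $g_0$ on the left so that the Hilbert-space bookkeeping matches, exactly mirroring the paper's construction with the multiplication order reversed. The order-by-order cancellation argument for the defining relation and for uniqueness is the same as in the paper's left-parametrix proof.
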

One can combine Proposition~\ref{operator_valued_calculus:prop:existence_left_parametrix} and Corollary~\ref{operator_valued_calculus:cor:existence_right_parametrix} to deduce the existence of a (left \emph{and} right) parametrix. 
\begin{corollary}[Existence of a parametrix]\label{operator_valued_calculus:cor:existence_parametrix}
	Suppose the conditions listed in Proposition~\ref{operator_valued_calculus:prop:existence_left_parametrix} \emph{and} Corollary~\ref{operator_valued_calculus:cor:existence_right_parametrix} are satisfied. Moreover, we make the following additional assumptions: 
	\begin{enumerate}[(a'')]
		\item The Hilbert space $\Hil \hookrightarrow \Hil'$ can be continuously and densely embedded into $\Hil'$. 
		\item The symbol $g_0 \in S^{-m}_{\rho,\delta} \bigl ( \mathcal{B}(\Hil',\Hil) \bigr )$ from assumptions~(c) and (c') is such that the expresssion
		\begin{align*}
			g_0 \Weyl f - \id_{\Hil'} &= \order(\epsilon) \in S^{-\mu (\rho - \delta)}_{\rho,\delta} \bigl ( \mathcal{B}(\Hil') \bigr )
		\end{align*}
		continuously extends to operator-valued symbols on $\Hil' \supseteq \Hil$ and the expression 
		\begin{align*}
			f \Weyl g_0 - \id_{\Hil} &= \order(\epsilon) \in S^{-\mu (\rho - \delta)}_{\rho,\delta} \bigl ( \mathcal{B}(\Hil) \bigr )
		\end{align*}
		restricts to a symbol taking values in $\mathcal{B}(\Hil)$. 
	\end{enumerate}
	Then the left and right parametrices $f^{(-1)_{\epsilon}}_{\mathrm{L}}$ and $f^{(-1)_{\epsilon}}_{\mathrm{R}}$ constructed in Proposition~\ref{operator_valued_calculus:prop:existence_left_parametrix} and Corollary~\ref{operator_valued_calculus:cor:existence_right_parametrix} agree asymptotically in the sense that 
	\begin{align*}
		f^{(-1)_{\epsilon}}_{\mathrm{L}} - f^{(-1)_{\epsilon}}_{\mathrm{R}} &= \order(\epsilon^{\infty}) 
		\in S^{-\infty} \bigl ( \mathcal{B}(\Hil) \bigr ) \cap S^{-\infty} \bigl ( \mathcal{B}(\Hil') \bigr ) 
		. 
	\end{align*}
\end{corollary}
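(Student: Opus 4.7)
The strategy is the classical one-line argument that a left inverse and a right inverse of an element agree, adapted to the pseudodifferential setting modulo $\order(\epsilon^\infty)$. Set
\begin{align*}
	r_{\mathrm{L}} &:= f^{(-1)_{\epsilon}}_{\mathrm{L}} \Weyl f - \id_{\Hil} \in S^{-\infty} \bigl ( \mathcal{B}(\Hil) \bigr )
	,
	\\
	r_{\mathrm{R}} &:= f \Weyl f^{(-1)_{\epsilon}}_{\mathrm{R}} - \id_{\Hil'} \in S^{-\infty} \bigl ( \mathcal{B}(\Hil') \bigr )
	,
\end{align*}
which are smoothing symbols by Proposition~\ref{operator_valued_calculus:prop:existence_left_parametrix} and Corollary~\ref{operator_valued_calculus:cor:existence_right_parametrix}, respectively. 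Note the triple product $f^{(-1)_{\epsilon}}_{\mathrm{L}} \Weyl f \Weyl f^{(-1)_{\epsilon}}_{\mathrm{R}}$ is well-defined as a symbol in $S^{-m}_{\rho,\delta} \bigl ( \mathcal{B}(\Hil',\Hil) \bigr )$, since the chain $\Hil' \to \Hil \to \Hil' \to \Hil$ is composable.

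The plan is then to evaluate this triple product in two ways using the associativity of $\Weyl$ established in Theorem~\ref{operator_valued_calculus:thm:Weyl_product_Hoermander_symbols}. Bracketing on the left yields
\begin{align*}
	\bigl ( f^{(-1)_{\epsilon}}_{\mathrm{L}} \Weyl f \bigr ) \Weyl f^{(-1)_{\epsilon}}_{\mathrm{R}} = \bigl ( \id_{\Hil} + r_{\mathrm{L}} \bigr ) \Weyl f^{(-1)_{\epsilon}}_{\mathrm{R}} = f^{(-1)_{\epsilon}}_{\mathrm{R}} + r_{\mathrm{L}} \Weyl f^{(-1)_{\epsilon}}_{\mathrm{R}}
	,
\end{align*}
where the first equality is valid because $f^{(-1)_{\epsilon}}_{\mathrm{R}}$ takes values mapping into $\Hil$, the natural domain of $\id_{\Hil}$ and $r_{\mathrm{L}}$. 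Bracketing on the right gives
\begin{align*}
	f^{(-1)_{\epsilon}}_{\mathrm{L}} \Weyl \bigl ( f \Weyl f^{(-1)_{\epsilon}}_{\mathrm{R}} \bigr ) = f^{(-1)_{\epsilon}}_{\mathrm{L}} \Weyl \bigl ( \id_{\Hil'} + r_{\mathrm{R}} \bigr ) = f^{(-1)_{\epsilon}}_{\mathrm{L}} + f^{(-1)_{\epsilon}}_{\mathrm{L}} \Weyl r_{\mathrm{R}}
	,
\end{align*}
since $f^{(-1)_{\epsilon}}_{\mathrm{L}}$ has initial space $\Hil'$, matching the range of $\id_{\Hil'}$ and $r_{\mathrm{R}}$. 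Equating the two expressions,
\begin{align*}
	f^{(-1)_{\epsilon}}_{\mathrm{L}} - f^{(-1)_{\epsilon}}_{\mathrm{R}} = r_{\mathrm{L}} \Weyl f^{(-1)_{\epsilon}}_{\mathrm{R}} - f^{(-1)_{\epsilon}}_{\mathrm{L}} \Weyl r_{\mathrm{R}}
	.
\end{align*}

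It remains to argue that the right-hand side is a smoothing symbol when viewed in the correct symbol classes. The first summand $r_{\mathrm{L}} \Weyl f^{(-1)_{\epsilon}}_{\mathrm{R}}$ is the Weyl product of a symbol in $S^{-\infty} \bigl ( \mathcal{B}(\Hil) \bigr )$ with one in $S^{-m}_{\rho,\delta} \bigl ( \mathcal{B}(\Hil',\Hil) \bigr )$, hence lies in $S^{-\infty} \bigl ( \mathcal{B}(\Hil',\Hil) \bigr )$ by Theorem~\ref{operator_valued_calculus:thm:Weyl_product_Hoermander_symbols}, and symmetrically for the second summand. To upgrade this to the claimed intersection $S^{-\infty} \bigl ( \mathcal{B}(\Hil) \bigr ) \cap S^{-\infty} \bigl ( \mathcal{B}(\Hil') \bigr )$, we invoke the extension/restriction property of assumption~(b''): that property propagates through the Neumann-series construction of the parametrices, so that both $f^{(-1)_{\epsilon}}_{\mathrm{L}}$ and $f^{(-1)_{\epsilon}}_{\mathrm{R}}$ may simultaneously be interpreted as taking values in $\mathcal{B}(\Hil)$ (via the dense inclusion $\Hil \hookrightarrow \Hil'$ from assumption~(a''), composing with the inclusion) and in $\mathcal{B}(\Hil')$ (via the continuous extension). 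Repeating the above identity in each of these two interpretations yields the desired conclusion.

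The main obstacle will be the last step, namely the bookkeeping needed to verify that the extension/restriction property of assumption~(b'') is inherited by all higher-order terms of the Neumann-series ansatz used in the proofs of Proposition~\ref{operator_valued_calculus:prop:existence_left_parametrix} and Corollary~\ref{operator_valued_calculus:cor:existence_right_parametrix}; this is essentially a routine induction on the order in $\epsilon$, but it is where the two seemingly separate smoothing remainders in $\mathcal{B}(\Hil)$ and $\mathcal{B}(\Hil')$ get identified on the common space. The algebraic core of the proof, however, is the two-line associativity computation above.
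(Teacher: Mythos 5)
Your proof is correct and follows essentially the same route as the paper: both hinge on the identity $f^{(-1)_{\epsilon}}_{\mathrm{L}} - f^{(-1)_{\epsilon}}_{\mathrm{R}} = r_{\mathrm{L}} \Weyl f^{(-1)_{\epsilon}}_{\mathrm{R}} - f^{(-1)_{\epsilon}}_{\mathrm{L}} \Weyl r_{\mathrm{R}}$ obtained by adding and subtracting the mixed term $f^{(-1)_{\epsilon}}_{\mathrm{L}} \Weyl f \Weyl f^{(-1)_{\epsilon}}_{\mathrm{R}}$, together with assumptions (a'') and (b'') to place both parametrices in $S^{-m}_{\rho,\delta}\bigl(\mathcal{B}(\Hil)\bigr) \cap S^{-m}_{\rho,\delta}\bigl(\mathcal{B}(\Hil')\bigr)$ so that the difference lands in the claimed intersection of smoothing classes.
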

Assumption~(a'') is quite natural when \eg $\Hil = H^m_A(\R^d,\C^n)$ and $\Hil' = H^{m'}_A(\R^d,\C^n)$ are magnetic Sobolev spaces of different orders. When $m \leq m'$ we can naturally include $H^m_A(\R^d,\C^n) \subseteq  H^{m'}_A(\R^d,\C^n)$ the Sobolev space of lower order into the one of higher order. 
\begin{proof}
	Let us denote left and right parametrices with $f^{(-1)_{\epsilon}}_{\mathrm{L}}$ and $f^{(-1)_{\epsilon}}_{\mathrm{R}}$. Note that the symbol $g_0$ that enters assumptions~(c) and (c') is one and the same. 
	
	Assumption~(a'') is necessary to make sure \eg we may view $\id_{\Hil} = \id_{\Hil'} \big \vert_{\Hil}$ as restriction of the identity on $\Hil'$.
	
	Assumption~(b'') tells us we can view the symbols 
	\begin{align*}
		r_{\mathrm{L},0} := \epsilon^{-1} \bigl ( g_0 \Weyl f - \id_{\Hil} \bigr ) 
		\in S^{-\mu (\rho - \delta)}_{\rho,\delta} \bigl ( \mathcal{B}(\Hil) \bigr ) \cap S^{-\mu (\rho - \delta)}_{\rho,\delta} \bigl ( \mathcal{B}(\Hil') \bigr )
		, 
		\\
		r_{\mathrm{R},0} := \epsilon^{-1} \bigl ( f \Weyl g_0 - \id_{\Hil'} \bigr )
		\in S^{-\mu (\rho - \delta)}_{\rho,\delta} \bigl ( \mathcal{B}(\Hil) \bigr ) \cap S^{-\mu (\rho - \delta)}_{\rho,\delta} \bigl ( \mathcal{B}(\Hil') \bigr )
		, 
	\end{align*}
	as elements of symbol spaces that take values either in $\mathcal{B}(\Hil)$ or $\mathcal{B}(\Hil')$. Thus, the symbols $f^{(-1)_{\epsilon}}_{\mathrm{L}}$ and $f^{(-1)_{\epsilon}}_{\mathrm{R}}$ constructed in Proposition~\ref{operator_valued_calculus:prop:existence_left_parametrix} and Corollary~\ref{operator_valued_calculus:cor:existence_right_parametrix}, respectively, can be viewed as elements of 
	\begin{align*}
		f^{(-1)_{\epsilon}}_{\mathrm{L}} , f^{(-1)_{\epsilon}}_{\mathrm{R}} \in S^{-m}_{\rho,\delta} \bigl ( \mathcal{B}(\Hil) \bigr ) \cap S^{-m}_{\rho,\delta} \bigl ( \mathcal{B}(\Hil') \bigr )
		. 
	\end{align*}
	Consequently, it makes sense to consider the difference 
	\begin{align*}
		f^{(-1)_{\epsilon}}_{\mathrm{L}} - f^{(-1)_{\epsilon}}_{\mathrm{R}} &= f^{(-1)_{\epsilon}}_{\mathrm{L}} \Weyl \bigl ( \id - f \Weyl f^{(-1)_{\epsilon}}_{\mathrm{R}} \bigr ) - \bigl ( \id - f^{(-1)_{\epsilon}}_{\mathrm{L}} \Weyl f \bigr ) \Weyl f^{(-1)_{\epsilon}}_{\mathrm{R}} 
		\\
		&= \order(\epsilon^{\infty}) 
		\in S^{-\infty} \bigl ( \mathcal{B}(\Hil) \bigr ) \cap S^{-\infty} \bigl ( \mathcal{B}(\Hil') \bigr )
		, 
	\end{align*}
	which, after adding and subtracting the mixed term $f^{(-1)_{\epsilon}}_{\mathrm{L}} \Weyl f \Weyl f^{(-1)_{\epsilon}}_{\mathrm{R}}$, can be shown to be small thanks to Theorem~\ref{operator_valued_calculus:thm:Weyl_product_Hoermander_symbols}. 
\end{proof}
\begin{remark}[Existence of (left/right) parametrices when $\rho = 0$ and $\mu = 0$]\label{operator_valued_calculus:rem:existence_parametrices}~\\
	The conditions $\rho > 0$ and $\mu > 0$ can be lifted. For instance, when we construct parametrices for \emph{equivariant} magnetic pseudodifferential operators, the equivariance condition~\eqref{equivariant_calculus:eqn:equivariance_condition_symbols} forces $\rho = 0 = \delta$ (\cf Lemma~\ref{equivariant_calculus:magnetic_PsiDOs:lem:bound_Hoermander_order_by_tau_orders}). 
	
	Fortunately, the statements and proofs of Proposition~\ref{operator_valued_calculus:prop:existence_left_parametrix} as well as Corollaries~\ref{operator_valued_calculus:cor:existence_right_parametrix} and \ref{operator_valued_calculus:cor:existence_parametrix} can be modified to cover the case $\rho = 0$ and $\mu = 0$. Here, all the terms and the remainders of the asymptotic expansion of the magnetic Weyl product $\Weyl$ lie in the same symbol class, namely $\Sigma S^{m_1 + m_2}_{0,0} \bigl ( \mathcal{B}(\Hil,\Hil'') \bigr )$. 
	
	The only price we have to pay is that the parametrix exists only as a formal sum, and not as an asymptotic symbol. That is because we are not aware of a general existence result for a resummation when $\rho = \delta$ or $\mu = 0$ (\cf Remark~\ref{operator_valued_calculus:rem:lack_resummation_results_rho_equal_delta}). The lack of a resummation may or may not matter: for perturbation expansions, it usually does not matter. However, there are cases such as Theorem~\ref{operator_valued_calculus:thm:selfadjointness_elliptic_symbols} where the existence of a resummation of the parametrix is crucial for the proof. We will return to this point in Section~\ref{equivariant_calculus:calculus:boundedness_results} in the discussion of Theorem~\ref{equivariant_calculus:thm:selfadjointness_elliptic_symbols_weak_result}. 
\end{remark}
%

\subsubsection{Beals' Commutator Criterion to identify magnetic $\Psi$DOs with Hörmander symbols} 
\label{operator_valued_calculus:Hoermander_symbols:commutator_criteria}
The Calderón-Vaillancourt Theorem~\ref{operator_valued_calculus:thm:Calderon_Vaillancourt} tells us that magnetic $\Psi$DOs defined from Hörmander symbols of order $m$ define bounded operators $H^m_A(\R^d,\Hil) \longrightarrow L^2(\R^d,\Hil')$. We could turn this question on its head and ask: under what conditions is a given operator a magnetic $\Psi$DO defined from a Hörmander symbol? The standard tool to answer this conclusively is a commutator criterion, the most common ones being Beals' and Bony's Commutator Criteria \cite{Beals:characterization_psido:1977,Bony:characterization_psido:1996}. 

The first proof of Beals' and Bony's Commutator Criteria for \emph{magnetic} pseudodifferential operators is due to Iftimie, Măntoiu and Purice \cite{Iftimie_Mantoiu_Purice:commutator_criteria:2008}; however, rather than following the original work, we will adapt a more modern and elegant proof of Beals' Criterion given by Cornean, Helffer and Purice \cite{Cornean_Helffer_Purice:simple_proof_Beals_criterion_magnetic_PsiDOs:2018}. We will show that it extends in a straightforward fashion to operator-valued symbols. Small parameters do not play a role in the proof. Nevertheless, we will include the semiclassical parameter $\eps$ and the parameter $\lambda$ for the magnetic field strength in the formulæ below. 

The idea can be most easily explained for the symbol class 
\begin{align*}
	S^0_{0,0} \bigl ( \mathcal{B}(\Hil,\Hil') \bigr ) = \Cont^{\infty}_{\mathrm{b}} \bigl ( T^* \R^d , \mathcal{B}(\Hil,\Hil') \bigr ) 
	, 
\end{align*}
which consists of bounded functions whose derivatives in $x$ and $\xi$ are all bounded. Any $f \in S^0_{0,0} \bigl ( \mathcal{B}(\Hil,\Hil') \bigr )$ defines a \emph{bounded} magnetic pseudodifferential operator
\begin{align*}
	\Op^A(f) : L^2(\R^d,\Hil) \longrightarrow L^2(\R^d,\Hil') 
	,
\end{align*}
precisely because we can control sufficiently many derivatives; this is the content of the Calderón-Vaillancourt Theorem~\ref{operator_valued_calculus:thm:Calderon_Vaillancourt}. 

If we reverse the premise, suppose we are given a bounded operator 
\begin{align*}
	F : L^2(\R^d,\Hil) \longrightarrow L^2(\R^d,\Hil') 
	, 
\end{align*}
and we would like to know whether $F = \Op^A(f)$ for some $f \in S^0_{0,0} \bigl ( \mathcal{B}(\Hil,\Hil') \bigr )$. \emph{A priori} all we know from the Schwartz Kernel Theorem~\ref{operator_valued_calculus:lem:Schwartz_kernel_theorem} is that there exists a \emph{tempered distribution} $f \in \Schwartz^* \bigl ( T^* \R^d , \mathcal{B}(\Hil,\Hil') \bigr )$ with $F = \Op^A(f)$. 

They key is to control “derivatives” of the operator $F$. On the level of operators we can define a notion of “derivative” from the building block operators $Q_j$ and $P_j^A$ through commutators. More properly, though, we should refer to%
\begin{subequations}\label{operator_valued_calculus:eqn:derivations_for_commutator_criteria}
	\begin{align}
		\ad_{Q_j}(F) :& \negmedspace = [Q_j , F] := Q_j \, F - F \, Q_j 
		, 
		\\
		\ad_{P^A_j}(F) :& \negmedspace = [P^A_j , F] := P^A_j \, F - F \, P^A_j 
		, 
	\end{align}
\end{subequations}
for $j = 1 , \ldots , d$ as \emph{derivations}, because they satisfy the product rule (Leibniz's law), \eg 
\begin{align*}
	\ad_{Q_j}(F \, G) = \ad_{Q_j}(F) \, G + F \, \ad_{Q_j}(G) 
\end{align*}
holds for all suitable composable operators $F$ and $G$. Of course, in general the derivations~\eqref{operator_valued_calculus:eqn:derivations_for_commutator_criteria} need not exist. Indeed, their existence as bounded operators enters as an assumption in Theorem~\ref{operator_valued_calculus:thm:Beals_commutator_criterion} below. 

Note that the commutation relations~\eqref{operator_valued_calculus:eqn:commutation_relations} for position and momentum imply the derivations do not all commute with one another, and in principle, their order is important. For example, in the presence of magnetic fields two momentum derivations 
\begin{align*}
	\ad_{P^A_j} \circ \ad_{P^A_k}(F) &= \ad_{P^A_k} \circ \ad_{P^A_j}(F) + \bigl [ [P^A_j , P^A_k] \, , F \bigr ] 
	\\
	&= \ad_{P^A_k} \circ \ad_{P^A_j}(F) - \ii \eps \lambda \, \bigr [ B_{jk}(Q) \, , F \bigr ]
	\\
	&\neq \ad_{P^A_k} \circ \ad_{P^A_j}(F) 
\end{align*}
need not commute. This is where the Boundedness Assumption~\ref{operator_valued_calculus:assumption:bounded_magnetic_fields} on the magnetic field $B$ enters into our reasoning. It ensures that the extra term is the commutator of the \emph{bounded} operators $B_{jk}(Q)$ and $F$, and as a result its presence does not impose additional regularity conditions. This reasoning extends to all other and higher-order derivations: the extra terms that appear when changing the order of derivations involve $B$, derivatives of $B$ and lower-order derivations of $F$. 

Consequently, the order in which we take the derivations is immaterial for characterizing “smoothness”, and we may opt for lexicographical order, 
\begin{align}
	\partial^{(a,\alpha)}_{(Q,P^A)} F := \ad_{P^A_1}^{a_1} \circ \cdots \circ \ad_{P^A_d}^{a_d} \circ \ad_{Q_1}^{\alpha_1} \circ \cdots \circ \ad_{Q_d}^{\alpha_d}(F)
	, 
	&&
	a , \alpha \in \N_0^d 
	. 
	\label{operator_valued_calculus:eqn:commutator_criterion_derivations_operators}
\end{align}
This way we can express the $\Cont^{\infty}_{\mathrm{b}}$ condition on the operator more compactly as 
\begin{align}
	\partial^{(a,\alpha)}_{(Q,P^A)} F \in \mathcal{B} \bigl ( L^2(\R^d,\Hil) , L^2(\R^d,\Hil') \bigr ) 
	&&
	\forall a , \alpha \in \N_0^d 
	. 
	\label{operator_valued_calculus:eqn:boundedness_derivations_operators}
\end{align}
For practical purposes it is more expedient to directly work with distributions rather than operators. We pull back the Banach space of bounded operators and define 
\begin{align*}
	\mathfrak{A}^B \bigl ( \mathcal{B}(\Hil,\Hil') \bigr ) := \bigl ( \Op^A \bigr )^{-1} \Bigl ( \mathcal{B} \bigl ( L^2(\R^d,\Hil) \, , L^2(\R^d,\Hil') \bigr ) \Bigr ) 
	\subseteq \Schwartz^* \bigl ( T^* \R^d , \mathcal{B}(\Hil,\Hil') \bigr ) 
	.
\end{align*}
Because we can continuously embed the bounded operators into 
\begin{align*}
	\mathcal{B} \bigl ( L^2(\R^d,\Hil) , L^2(\R^d,\Hil') \bigr ) \subset \mathcal{L} \bigl ( \Schwartz(\R^d,\Hil) , \Schwartz^*(\R^d,\Hil') \bigr ) 
	= \Op^A \Bigl ( \Schwartz^* \bigl ( T^* \R^d , \mathcal{B}(\Hil,\Hil') \bigr ) \Bigr ) 
	, 
\end{align*}
and the latter is the range of $\Op^A$ under the operator-valued tempered distributions (\cf Proposition~\ref{operator_valued_calculus:prop:extension_OpA_tempered_distributions}~(2)), the vector space $\mathfrak{A}^B \bigl ( \mathcal{B}(\Hil,\Hil') \bigr )$ indeed consists of tempered distributions. Endowed with the pulled back norm 
\begin{align*}
	\norm{f}_{\mathfrak{A}^B(\mathcal{B}(\Hil,\Hil'))} := \bnorm{\Op^A(f)}_{\mathcal{B}(L^2(\R^d,\Hil),L^2(\R^d,\Hil'))} 
	, 
\end{align*}
it again forms a Banach space. One of the advantages is that due to gauge covariance, Proposition~\ref{operator_valued_calculus:prop:extension_OpA_tempered_distributions}~(3), the Banach space $\mathfrak{A}^B \bigl ( \mathcal{B}(\Hil,\Hil') \bigr )$ only depends on the magnetic \emph{field} $B$ rather than the vector potential $A$. With this notation in hand, we can reformulate the Calderon-Vallaincourt Theorem~\ref{operator_valued_calculus:thm:Calderon_Vaillancourt} as 
\begin{corollary}\label{operator_valued_calculus:cor:Calderon_Vaillancourt}
	For any $0 \leq \delta \leq \rho \leq 1$ with $\delta \neq 1$ the Hörmander symbol class
	\begin{align*}
		S^0_{\rho,\delta} \bigl ( \mathcal{B}(\Hil,\Hil') \bigr ) \hookrightarrow \mathfrak{A}^B \bigl ( \mathcal{B}(\Hil,\Hil') \bigr )
	\end{align*}
	of order $0$ and type $(\rho,\delta)$ can be continuously injected into $\mathfrak{A}^B \bigl ( \mathcal{B}(\Hil,\Hil') \bigr )$. 
\end{corollary}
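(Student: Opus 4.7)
The plan is to recognize this corollary as a direct reformulation of the Calderón-Vaillancourt Theorem~\ref{operator_valued_calculus:thm:Calderon_Vaillancourt} in the language of the Banach space $\mathfrak{A}^B \bigl ( \mathcal{B}(\Hil,\Hil') \bigr )$. There is essentially no new mathematical content here; the entire task reduces to unpacking definitions and checking three things in turn: set-theoretic inclusion, injectivity, and continuity.

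First I would check that the hypothesis $0 \leq \delta \leq \rho \leq 1$ with $\delta \neq 1$ exactly matches the hypotheses of Theorem~\ref{operator_valued_calculus:thm:Calderon_Vaillancourt}. That theorem covers two cases, $0 \leq \delta < \rho \leq 1$ and $\rho = \delta \in [0,1)$, whose union is precisely our range. For any $f \in S^0_{\rho,\delta} \bigl ( \mathcal{B}(\Hil,\Hil') \bigr )$, Theorem~\ref{operator_valued_calculus:thm:Calderon_Vaillancourt} then gives that $\Op^A(f) : L^2(\R^d,\Hil) \longrightarrow L^2(\R^d,\Hil')$ is bounded, so by the definition of $\mathfrak{A}^B \bigl ( \mathcal{B}(\Hil,\Hil') \bigr )$ as the $\Op^A$-preimage of the bounded operators, $f$ lies in $\mathfrak{A}^B \bigl ( \mathcal{B}(\Hil,\Hil') \bigr )$. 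Injectivity of the inclusion is inherited from Proposition~\ref{operator_valued_calculus:prop:extension_OpA_tempered_distributions}~(2), since $\Op^A$ is already injective on the larger space of operator-valued tempered distributions in which both $S^0_{\rho,\delta} \bigl ( \mathcal{B}(\Hil,\Hil') \bigr )$ and $\mathfrak{A}^B \bigl ( \mathcal{B}(\Hil,\Hil') \bigr )$ sit.

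Continuity is the quantitative content of the same theorem: the estimate~\eqref{operator_valued_calculus:eqn:Calderon_Vaillancourt_estimate} bounds the Banach norm
\begin{align*}
\snorm{f}_{\mathfrak{A}^B(\mathcal{B}(\Hil,\Hil'))} = \bnorm{\Op^A(f)}_{\mathcal{B}(L^2(\R^d,\Hil),L^2(\R^d,\Hil'))}
\end{align*}
by a finite maximum of Hörmander seminorms $\snorm{f}_{0,a\alpha}$ on $S^0_{\rho,\delta} \bigl ( \mathcal{B}(\Hil,\Hil') \bigr )$, and this is by definition what continuity of a linear map from a Fréchet space into a Banach space means. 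I do not anticipate any genuine obstacle: all the hard work — the Stein--Cotlar--Knapp argument, the estimates on the magnetic phase, the Hilbert-space-valued modifications — has already been absorbed into Theorem~\ref{operator_valued_calculus:thm:Calderon_Vaillancourt}. The only thing to watch for is the bookkeeping point that the operator-valued seminorms of Definition~\ref{operator_valued_calculus:defn:Hoermander_symbols} are the ones referenced by the right-hand side of~\eqref{operator_valued_calculus:eqn:Calderon_Vaillancourt_estimate}, which they are by construction.
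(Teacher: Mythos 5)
Your proposal is correct and coincides with the paper's treatment: the corollary is stated there precisely as a reformulation of Theorem~\ref{operator_valued_calculus:thm:Calderon_Vaillancourt} in the language of $\mathfrak{A}^B \bigl ( \mathcal{B}(\Hil,\Hil') \bigr )$, with no separate proof given, and your unpacking of the hypotheses, the membership via boundedness, injectivity via $\Op^A$, and continuity via the estimate~\eqref{operator_valued_calculus:eqn:Calderon_Vaillancourt_estimate} is exactly the intended argument.
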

When two such Banach spaces $\mathfrak{A}^B \bigl ( \mathcal{B}(\Hil',\Hil'') \bigr ) \ni f$ and $\mathfrak{A}^B \bigl ( \mathcal{B}(\Hil,\Hil') \bigr ) \ni g$ are composable, we can pull back the operator product to the level of distributions, 
\begin{align}
	f \Weyl g := \bigl ( \Op^A \bigr )^{-1} \Bigl ( \Op^A(f) \, \Op^A(g) \Bigr ) 
	\in \mathfrak{A}^B \bigl ( \mathcal{B}(\Hil,\Hil'') \bigr )
	. 
	\label{operator_valued_calculus:eqn:Weyl_product_Banach_space_distributions_preimage_bounded_operators}
\end{align}
Evidently, the notation is not at all accidental, for suitable elements, say Schwartz functions, the product $f \Weyl g$ defined through \eqref{operator_valued_calculus:eqn:Weyl_product_Banach_space_distributions_preimage_bounded_operators} coincides with the magnetic Weyl product~\eqref{operator_valued_calculus:eqn:Weyl_product}. Likewise, we can introduce a notion of adjoint ${}^*$ that assigns $f \mapsto f^* \in \mathfrak{A}^B \bigl ( \mathcal{B}(\Hil',\Hil) \bigr )$; note that the order of $\Hil$ and $\Hil'$ is reversed. 

Hence, the derivations~\eqref{operator_valued_calculus:eqn:derivations_for_commutator_criteria} pull back to Moyal commutators%
\begin{subequations}\label{operator_valued_calculus:eqn:Moyal_derivations_for_commutator_criteria}
	\begin{align}
		\ad_{x_j}(f) :& \negmedspace = [x_j , f]_{\Weyl} := x_j \Weyl f - f \Weyl x_j 
		, 
		\\
		\ad_{\xi_j}(f) :& \negmedspace = [\xi_j , f]_{\Weyl} := \xi_j \Weyl f - f \Weyl \xi_j
		, 
	\end{align}
\end{subequations}
which suggests why the boundedness of $\partial^{(a,\alpha)}_{(Q,P^A)}(F)$ implies $\Cont^{\infty}_{\mathrm{b}}$-regularity of the prequantization $f$: formally, with the help of the asymptotic expansion of the Weyl product $\Weyl$ in $\eps$ (\cf \cite[Theorem~1.1]{Lein:two_parameter_asymptotics:2008}) the Moyal commutators simplify to 
\begin{subequations}\label{operator_valued_calculus:eqn:action_ad_x_ad_xi_functions}
	\begin{align}
		\ad_{x_j}(f) &= + \eps \, \ii \partial_{\xi_j} f 
		, 
		\label{operator_valued_calculus:eqn:action_ad_x_ad_xi_functions:ad_x}
		\\
		\ad_{\xi_j}(f) &= - \eps \, \ii \partial_{x_j} f - \eps \, \lambda \sum_{k = 1}^d B_{jk} \, \partial_{\xi_k} f + \order(\eps^3)
		\label{operator_valued_calculus:eqn:action_ad_x_ad_xi_functions:ad_xi}
		. 
	\end{align}
\end{subequations}
Up to a prefactor, the position \emph{derivation} equals the momentum \emph{derivative} (the order is reversed!); the momentum \emph{derivation} involves both, $x$- and $\xi$-derivatives of $f$ as well as the magnetic field $B_{jk}$. We again see the Boundedness Assumption~\ref{operator_valued_calculus:assumption:bounded_magnetic_fields} on $B_{jk} \in \Cont^{\infty}_{\mathrm{b}}(\R^d)$ implies that we can control the derivations $\ad_{x_j}(f)$ and $\ad_{\xi_j}(f)$ in terms of \emph{derivatives} of $f$ — and vice versa. 

Just as above, we will pick lexicographical order for our higher-order derivations 
\begin{align}
	\partial^{(a,\alpha)}_{(x,\xi)} f := \ad_{\xi_1}^{a_1} \circ \cdots \circ \ad_{\xi_d}^{a_d} \circ \ad_{x_1}^{\alpha_1} \circ \cdots \circ \ad_{x_d}^{\alpha_d}(f) 
	, 
	&&
	a , \alpha \in \N_0^d
	, 
	\label{operator_valued_calculus:eqn:Moyal_commutator_derivations_distributions_phase_space}
\end{align}
on the level of tempered distributions. Again, due to our assumptions on $B$, the order in which we take these derivations is not important for Theorem~\ref{operator_valued_calculus:thm:Beals_commutator_criterion} below. 

With these notions in hand, we can find an equivalent form for \eqref{operator_valued_calculus:eqn:boundedness_derivations_operators}, which places regularity conditions on the distribution $f$ rather than the operator $F = \Op^A(f)$:
\begin{align*}
	\partial^{(a,\alpha)}_{(x,\xi)} f &\in \mathfrak{A}^B \bigl ( \mathcal{B}(\Hil,\Hil') \bigr ) 
	\quad \forall a , \alpha \in \N_0^d 
	\\
	&\qquad \qquad \big\Updownarrow 
	\\
	\partial^{(a,\alpha)}_{(Q,P^A)} F  &\in \mathcal{B} \bigl ( L^2(\R^d,\Hil) , L^2(\R^d,\Hil') \bigr ) 
	\quad \forall a , \alpha \in \N_0^d 
\end{align*}
As the notation suggests, in the end the Fréchet topology defined by the \emph{derivations}~\eqref{operator_valued_calculus:eqn:Moyal_commutator_derivations_distributions_phase_space} is equivalent to the usual Fréchet topology of $\Cont^{\infty}_{\mathrm{b}} \bigl ( T^* \R^d , \mathcal{B}(\Hil,\Hil') \bigr )$ generated by the family of sup norms of all \emph{derivatives} $\partial_x^a \partial_{\xi}^{\alpha} f$. Indeed, this is the point of view taken in \cite{Iftimie_Mantoiu_Purice:commutator_criteria:2008} to prove Beals' Commutator Criterion: 
\begin{theorem}[Beals' Commutator Criterion]\label{operator_valued_calculus:thm:Beals_commutator_criterion}
	Suppose the magnetic field $B$ is bounded in the sense of Assumption~\ref{operator_valued_calculus:assumption:bounded_magnetic_fields}. 
	Then a bounded operator 
	\begin{align*}
		F  = \Op^A(f) \in \mathcal{B} \bigl ( L^2(\R^d,\Hil) \, , \, L^2(\R^d,\Hil') \bigr )
		. 
	\end{align*}
	is a magnetic pseudodifferential operator associated to a Hörmander symbol $f \in S^0_{0,0} \bigl ( \mathcal{B}(\Hil,\Hil') \bigr )$ if and only if  
	\begin{align*}
		\partial^{(a,\alpha)}_{(x,\xi)} f \in \mathfrak{A}^B \bigl ( \mathcal{B}(\Hil,\Hil') \bigr ) 
		&&
		\forall a , \alpha \in \N_0^d 
		. 
	\end{align*}
\end{theorem}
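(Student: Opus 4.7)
The proof splits into two implications, following the elegant strategy of Cornean, Helffer and Purice \cite{Cornean_Helffer_Purice:simple_proof_Beals_criterion_magnetic_PsiDOs:2018} for the scalar case. The extension to operator-valued symbols is made essentially cosmetic by the tensor product decomposition $\Schwartz(\R^d,\Hil) \cong \Schwartz(\R^d) \otimes \Hil$ established in Section~\ref{operator_valued_calculus:construction_on_Schwartz:tensor_product_structure}, which allows one to reduce the substantive analytic estimates to the scalar $\Schwartz(\R^d)$ factor while carrying the $\Hil, \Hil'$ entries as passive parameters.

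For the easy direction $(\Rightarrow)$, suppose $f \in S^0_{0,0} \bigl ( \mathcal{B}(\Hil,\Hil') \bigr )$. Calderón-Vaillancourt (Theorem~\ref{operator_valued_calculus:thm:Calderon_Vaillancourt}) immediately gives $f \in \mathfrak{A}^B \bigl ( \mathcal{B}(\Hil,\Hil') \bigr )$. To handle the iterated derivations, I would proceed inductively. Although $x_j, \xi_j \in S^1_{1,0}(\C)$ individually, the leading-order (pointwise product) term in the asymptotic expansion of $\Weyl$ cancels in the Moyal commutators $[x_j,f]_\Weyl$ and $[\xi_j,f]_\Weyl$; the next-order terms from \eqref{operator_valued_calculus:eqn:action_ad_x_ad_xi_functions} involve $\partial_{\xi_j}f$, $\partial_{x_j}f$ and the magnetic field $B_{jk} \in \Cont^\infty_{\mathrm{b}}(\R^d)$, so by Theorem~\ref{operator_valued_calculus:thm:Weyl_product_Hoermander_symbols} these commutators remain in $S^0_{0,0} \bigl ( \mathcal{B}(\Hil,\Hil') \bigr )$. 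Iterating and reapplying Calderón-Vaillancourt finishes this direction.

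The substantive content is the direction $(\Leftarrow)$. Given $F = \Op^A(f)$ bounded with all $\partial^{(a,\alpha)}_{(Q,P^A)}F$ bounded, I would consider the $X$-dependent conjugate
\begin{align*}
	\tilde F(X) := W^A(X)^{-1} \, F \, W^A(X)
	\in \mathcal{B} \bigl ( L^2(\R^d,\Hil) , L^2(\R^d,\Hil') \bigr ).
\end{align*}
Using the commutation relations~\eqref{operator_valued_calculus:eqn:commutation_relations} and the fact that $W^A(X)$ is unitary, partial derivatives $\partial_x^a \partial_\xi^\alpha \tilde F(X)$ are expressible as conjugations of linear combinations of iterated commutators $\ad^{(a',\alpha')}_{(Q,P^A)}(F)$ with $\sabs{a'} \leq \sabs{a}$, $\sabs{\alpha'} \leq \sabs{\alpha}$. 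Hence $\tilde F$ is smooth as a map from $T^* \R^d$ into the bounded operators, with derivatives of every order \emph{uniformly} bounded in $X$. The remaining step is a recovery formula: following CHP, one pairs $\tilde F(X)$ against a normalized Gaussian coherent state $\varphi_0 \in \Schwartz(\R^d)$ to obtain $f(X)$ up to explicit phases. In our setting, I would use test vectors $\varphi_0 \otimes u$ and $\varphi_0 \otimes u'$ for $u \in \Hil$, $u' \in \Hil'$, realizing $\bscpro{u'}{f(X) u}_{\Hil'}$ as a matrix element of $\tilde F(X)$; the uniform $X$-bounds transfer, giving $\snorm{\partial_x^a \partial_\xi^\alpha f(X)}_{\mathcal{B}(\Hil,\Hil')} < \infty$ uniformly, which is exactly $f \in S^0_{0,0} \bigl ( \mathcal{B}(\Hil,\Hil') \bigr )$.

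The main obstacle is verifying that the scalar CHP recovery formula lifts faithfully to the tensor-product setting and that the bounds obtained are in \emph{operator norm} on $\mathcal{B}(\Hil,\Hil')$ rather than merely weak-operator sense. This amounts to checking consistency with the inverse of the kernel map from Section~\ref{operator_valued_calculus:extension_by_duality}, namely the magnetic Wigner transform. The nuclearity of $\Schwartz(\R^d)$ ensures that the identifications between operator-valued and scalar-valued test function spaces respect the Fréchet topology, so no genuinely new analytic input (magnetic flux estimates, symplectic Fourier bounds, etc.) is required beyond what is already proved for the scalar case in \cite{Cornean_Helffer_Purice:simple_proof_Beals_criterion_magnetic_PsiDOs:2018}; only the bookkeeping of $\mathcal{B}(\Hil,\Hil')$-valued estimates in place of $\C$-valued ones needs care.
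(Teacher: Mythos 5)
Your forward direction matches the paper's: Calderón--Vaillancourt plus the observation that the Moyal derivations preserve $S^0_{0,0}\bigl(\mathcal{B}(\Hil,\Hil')\bigr)$ (the paper isolates this as a separate lemma on the action of $\ad_{x_j}$ and $\ad_{\xi_j}$ on Hörmander classes). The converse is where the content lies, and there your route diverges from the paper's and has real gaps. First, you attribute to Cornean--Helffer--Purice a coherent-state/conjugation argument, but their proof --- and the paper's extension of it --- is built on \emph{Gabor frames}: one fixes a compactly supported partition of unity, forms the magnetic Gabor system $G^A_{\gamma,k}$, tensors it with a tight normalized frame $\{\varphi_n\}$ of $\Hil$, and controls the operator-valued matrix blocks $F^A_{\gamma',\gamma;k',k}\in\mathcal{B}(\Hil,\Hil')$ uniformly; the commutator hypotheses then yield rapid off-diagonal decay of these blocks, from which the kernel and the symbol are reassembled. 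Your proposed conjugation $\tilde F(X)=W^A(X)^{-1}\,F\,W^A(X)$ is the classical non-magnetic strategy. In the magnetic case $W^A(X)\,W^A(Y)$ carries the flux factor $\e^{+\ii\frac{\lambda}{\eps}\Gamma^B(\cdots)}$, so $\partial_x^a\partial_\xi^\alpha\tilde F(X)$ is \emph{not} simply a conjugate of the commutators $\partial^{(a',\alpha')}_{(Q,P^A)}F$; one picks up derivatives of the magnetic flux, and the subsequent recovery of $f(X)$ from Gaussian matrix elements of $\tilde F(X)$ produces a magnetic Husimi-type symbol whose deconvolution back to the Weyl symbol is exactly the step the flux obstructs. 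None of this is carried out in your sketch.

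Second, your claim that the passage from scalar to operator-valued symbols is ``essentially cosmetic'' because of the decomposition $\Schwartz(\R^d,\Hil)\cong\Schwartz(\R^d)\otimes\Hil$ proves too much. Nuclearity holds for the Schwartz spaces, but the spaces where the conclusion lives --- $S^0_{0,0}\bigl(\mathcal{B}(\Hil,\Hil')\bigr)$ and $\mathfrak{A}^B\bigl(\mathcal{B}(\Hil,\Hil')\bigr)$ --- are \emph{not} nuclear and do \emph{not} decompose as completed tensor products with $\mathcal{B}(\Hil,\Hil')$; the paper makes precisely this point to explain why no abstract transfer principle from the scalar case is available. Testing against $u\in\Hil$ and $u'\in\Hil'$ only gives weak (matrix-element) bounds on $\partial_x^a\partial_\xi^\alpha f(X)$, and upgrading these to uniform $\mathcal{B}(\Hil,\Hil')$-norm bounds is the one genuinely new step; the paper resolves it by the frame computation showing $\bnorm{F^A_{\gamma',\gamma;k',k}}_{\mathcal{B}(\Hil,\Hil')}\leq(2\pi)^{-d}\,\snorm{\chi}_{L^2(\R^d)}^2\,\snorm{F}$ uniformly, obtained as a supremum over unit vectors rather than a weak limit. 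You correctly flag this as ``the main obstacle'' but do not close it, so as written the converse direction is incomplete.
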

Before we furnish the proof, let us give an important generalization that identifies Hörmander symbols $S^m_{\rho,0} \bigl ( \mathcal{B}(\Hil,\Hil') \bigr )$ of arbitrary order and more general type. However, to be able to apply Beal's Criterion~\ref{operator_valued_calculus:thm:Beals_commutator_criterion}, we wish to relate $\Op^A(f)$ to a map between $L^2$-spaces. The trick here is the same as in the proof of Corollary~\ref{operator_valued_calculus:cor:boundedness_magnetic_Sobolev_spaces}, which states that a magnetic $\Psi$DO defined from $f \in S^m_{\rho,0} \bigl ( \mathcal{B}(\Hil,\Hil') \bigr )$ gives rise to a continuous operator 
\begin{align*}
	\Op^A(f) : H^s_A(\R^d,\Hil) \longrightarrow H^{s-m}_A(\R^d,\Hil')
\end{align*}
between the magnetic Sobolev space of order $s$ and $s-m$. 

One way to do this is to multiply $\Op^A(f)$ with another operator that maps $H^{s-m}_A(\R^d,\Hil')$ to $H^s_A(\R^d,\Hil)$. Our choice is the magnetic $\Psi$DO associated to the symbol 
\begin{align*}
	w_m \in S^m_{1,0}(\C) \subseteq S^m_{\rho,0} \bigl ( \mathcal{B}(\Hil) \bigr )
\end{align*}
defined in \eqref{operator_valued_calculus:eqn:definition_alternate_weights_magnetic_Sobolev_spaces}. Importantly, $w_m^{(-1)_{\Weyl}} = w_{-m} \in S^{-m}_{1,0}(\C)$ has already been demonstrated in the literature (\eg as a consequence of \cite[Proposition~6.31]{Iftimie_Mantoiu_Purice:commutator_criteria:2008} or \cite[Theorem~1.8]{Mantoiu_Purice_Richard:spectral_propagation_results_magnetic_Schroedinger_operators:2007}). 

For non-negative $m \geq 0$ it is elliptic and equals 
\begin{align*}
	w_m(x,\xi) = \sexpval{\xi}^m + \lambda(m)
	, 
\end{align*}
where the constant $\lambda(m) \geq 0$ is chosen large enough to ensure that the Moyal inverse $w_m^{(-1)_{\Weyl}} =: w_{-m}$ exists. Such a constant $\lambda(m)$ always exists: the Gårding Inequality for magnetic $\Psi$DOs \cite[Corollary~6.4]{Iftimie_Mantoiu_Purice:magnetic_psido:2006} guarantees that $\Op^A \bigl ( \sexpval{\xi}^m \bigr )$ is bounded from below. 

Naturally, $w_m$ can be used to define an equivalent norm on magnetic Sobolev spaces by using $\Op^A(w_m)$ as a weight in the scalar product~\eqref{operator_valued_calculus:eqn:scalar_product_Sobolev_space}. Therefore, invoking Corollary~\ref{operator_valued_calculus:cor:boundedness_magnetic_Sobolev_spaces} once more, we may view 
\begin{align*}
	\Op^A(w_m) : H^{-m}_A(\R^d,\Hil') \longrightarrow L^2(\R^d,\Hil')
\end{align*}
as a continuous operator that maps back into $L^2(\R^d,\Hil)$. Thus, their composition 
\begin{align*}
	\Op^A(w_{-m}) \, \Op^A(f) = \Op^A \bigl ( w_{-m} \Weyl f \bigr ) : L^2(\R^d,\Hil) \longrightarrow L^2(\R^d,\Hil') 
\end{align*}
indeed maps between $L^2$-spaces, which proves 
\begin{align*}
	w_{-m} \Weyl f \in \mathfrak{A}^B \bigl ( \mathcal{B}(\Hil,\Hil') \bigr ) 
	. 
\end{align*}
To see where this condition comes from, pretend we already know $f \in S^m_{\rho,0} \bigl ( \mathcal{B}(\Hil,\Hil') \bigr )$ is a Hörmander symbol. Then by virtue of Theorem~\ref{operator_valued_calculus:thm:Weyl_product_Hoermander_symbols} their product is mapped into 
\begin{align*}
	w_{-m} \Weyl f \in S^{-m}_{\rho,0}(\C) \Weyl S^m_{\rho,0} \bigl ( \mathcal{B}(\Hil,\Hil') \bigr ) 
	\subseteq S^0_{0,0} \bigl ( \mathcal{B}(\Hil,\Hil') \bigr ) 
	. 
\end{align*}
Derivatives of $f$ in $\xi$ lower the order of the Hörmander class by $\rho$ whereas derivatives in $x$ do not change symbol class, 
\begin{align}
	\partial_x^a \partial_{\xi}^{\alpha} f \in S^{m - \sabs{\alpha} \rho}_{\rho,0} \bigl ( \mathcal{B}(\Hil,\Hil') \bigr ) 
	&&
	\forall a , \alpha \in \N_0^d 
	. 
\end{align}
We may equivalently express this as 
\begin{align}
	w_{-m + \sabs{\alpha} \rho} \Weyl \partial_x^a \partial_{\xi}^{\alpha} f \in S^0_{0,0} \bigl ( \mathcal{B}(\Hil,\Hil') \bigr ) \subset \mathfrak{A}^B \bigl ( \mathcal{B}(\Hil,\Hil') \bigr ) 
	. 
	\label{operator_valued_calculus:eqn:Beals_commutator_criterion_condition_w_minus_m_Weyl_f}
\end{align}
These arguments not only explain how to modify the condition in Theorem~\ref{operator_valued_calculus:thm:Beals_commutator_criterion}, it proves one of the two directions of the following if-and-only-if statement: 
%
%
\begin{corollary}\label{operator_valued_calculus:cor:Beals_commutator_criterion}
	Suppose the magnetic field $B$ is bounded in the sense of Assumption~\ref{operator_valued_calculus:assumption:bounded_magnetic_fields}. 
	Then a bounded operator 
	\begin{align*}
		F  = \Op^A(f) \in \mathcal{B} \bigl ( H^m_A(\R^d,\Hil) \, , \, L^2(\R^d,\Hil') \bigr )
		. 
	\end{align*}
	is a magnetic pseudodifferential operator associated to a Hörmander symbol $f \in S^m_{\rho,0} \bigl ( \mathcal{B}(\Hil,\Hil') \bigr )$ if and only if  
	\begin{align}
		w_{- m + \sabs{\alpha} \rho} \Weyl \partial^{(a,\alpha)}_{(x,\xi)} f \in \mathfrak{A}^B \bigl ( \mathcal{B}(\Hil,\Hil') \bigr ) 
		&&
		\forall a , \alpha \in \N_0^d 
		. 
		\label{operator_valued_calculus:eqn:Beals_commutator_criterion_condition_S_m_rho}
	\end{align}
\end{corollary}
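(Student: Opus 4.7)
The necessity direction $(\Rightarrow)$ is the computation sketched in the paragraph preceding the statement: for $f \in S^m_{\rho,0}\bigl(\mathcal{B}(\Hil,\Hil')\bigr)$, iterating the commutator identities~\eqref{operator_valued_calculus:eqn:action_ad_x_ad_xi_functions} together with Assumption~\ref{operator_valued_calculus:assumption:bounded_magnetic_fields} gives $\partial^{(a,\alpha)}_{(x,\xi)} f \in S^{m-\sabs{\alpha}\rho}_{\rho,0}\bigl(\mathcal{B}(\Hil,\Hil')\bigr)$; Theorem~\ref{operator_valued_calculus:thm:Weyl_product_Hoermander_symbols} then places $w_{-m+\sabs{\alpha}\rho}\Weyl\partial^{(a,\alpha)}_{(x,\xi)} f$ in $S^0_{\rho,0}\bigl(\mathcal{B}(\Hil,\Hil')\bigr) \subseteq S^0_{0,0}\bigl(\mathcal{B}(\Hil,\Hil')\bigr)$, and Corollary~\ref{operator_valued_calculus:cor:Calderon_Vaillancourt} lands it in $\mathfrak{A}^B\bigl(\mathcal{B}(\Hil,\Hil')\bigr)$.

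For the converse $(\Leftarrow)$, the plan is to bootstrap from the $(m,\rho,\delta)=(0,0,0)$ version, Theorem~\ref{operator_valued_calculus:thm:Beals_commutator_criterion}. For each $\alpha \in \N_0^d$ I set
\begin{align*}
h_\alpha := w_{-m+\sabs{\alpha}\rho} \Weyl \partial^{(0,\alpha)}_{(x,\xi)} f,
\end{align*}
so the $a=0$ case of the hypothesis reads $h_\alpha \in \mathfrak{A}^B\bigl(\mathcal{B}(\Hil,\Hil')\bigr)$. The main step is to verify that $h_\alpha$ itself satisfies Beals' criterion, \ie $\partial^{(b,\beta)}_{(x,\xi)} h_\alpha \in \mathfrak{A}^B\bigl(\mathcal{B}(\Hil,\Hil')\bigr)$ for every $b,\beta \in \N_0^d$. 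Since each $\ad_{x_j}$ and $\ad_{\xi_j}$ is an exact $\Weyl$-derivation, iterated Leibniz expands $\partial^{(b,\beta)}_{(x,\xi)} h_\alpha$ into a finite sum of principal terms
\begin{align*}
\partial^{(b',\beta')}_{(x,\xi)} w_{-m+\sabs{\alpha}\rho} \; \Weyl \; \partial^{(b-b',\,\alpha+\beta-\beta')}_{(x,\xi)} f, \qquad b' \leq b, \; \beta' \leq \beta,
\end{align*}
plus lower-order reordering commutators produced by $[\ad_{\xi_i},\ad_{\xi_j}] = -\ii\eps\lambda\,\ad_{B_{ij}(\hat x)}$; by Assumption~\ref{operator_valued_calculus:assumption:bounded_magnetic_fields} these corrections live in strictly smaller Hörmander classes and are absorbed by the same argument. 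For a principal summand I insert the exact Moyal resolution $w_{m-\sabs{\alpha+\beta-\beta'}\rho} \Weyl w_{-m+\sabs{\alpha+\beta-\beta'}\rho} = 1$ and rewrite the term as
\begin{align*}
s_{b',\beta'} \; \Weyl \; \Bigl(w_{-m+\sabs{\alpha+\beta-\beta'}\rho} \Weyl \partial^{(b-b',\,\alpha+\beta-\beta')}_{(x,\xi)} f\Bigr), \qquad s_{b',\beta'} := \partial^{(b',\beta')}_{(x,\xi)} w_{-m+\sabs{\alpha}\rho} \Weyl w_{m-\sabs{\alpha+\beta-\beta'}\rho}.
\end{align*}
The bracket belongs to $\mathfrak{A}^B\bigl(\mathcal{B}(\Hil,\Hil')\bigr)$ by the hypothesis at multi-index $(b-b',\alpha+\beta-\beta')$. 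Since $w_s$ is $x$-independent, $\ad_x^{\beta'} w_s = (\eps\ii)^{\sabs{\beta'}} \partial_\xi^{\beta'} w_s \in S^{s-\sabs{\beta'}}_{1,0}(\C)$ exactly by~\eqref{operator_valued_calculus:eqn:action_ad_x_ad_xi_functions:ad_x}, and subsequent $\ad_{\xi_j}$'s do not worsen the type-$(1,0)$ order by~\eqref{operator_valued_calculus:eqn:action_ad_x_ad_xi_functions:ad_xi} (the $\partial_{x_j}$ term stays in the same class and the magnetic correction drops one order, using Assumption~\ref{operator_valued_calculus:assumption:bounded_magnetic_fields}); hence $\partial^{(b',\beta')}_{(x,\xi)} w_s \in S^{s-\sabs{\beta'}}_{1,0}(\C)$, and Theorem~\ref{operator_valued_calculus:thm:Weyl_product_Hoermander_symbols} combined with $\sabs{\alpha+\beta-\beta'} = \sabs{\alpha}+\sabs{\beta-\beta'}$ places
\begin{align*}
s_{b',\beta'} \in S^{-\sabs{\beta'}-\rho\sabs{\beta-\beta'}}_{1,0}(\C) \subseteq S^0_{0,0}(\C).
\end{align*}
Corollary~\ref{operator_valued_calculus:cor:Calderon_Vaillancourt} then makes $\Op^A(s_{b',\beta'})$ bounded on $L^2$, and composition of a bounded $L^2$-operator with an element of $\mathfrak{A}^B$ remains in $\mathfrak{A}^B$. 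Beals' condition for $h_\alpha$ is verified.

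Theorem~\ref{operator_valued_calculus:thm:Beals_commutator_criterion} then yields $h_\alpha \in S^0_{0,0}\bigl(\mathcal{B}(\Hil,\Hil')\bigr)$ for every $\alpha$. Using $\partial^{(0,\alpha)}_{(x,\xi)} f = (\eps\ii)^{\sabs{\alpha}} \partial_\xi^\alpha f$ from~\eqref{operator_valued_calculus:eqn:action_ad_x_ad_xi_functions:ad_x} and the exact Moyal inverse $w_{m-\sabs{\alpha}\rho} \in S^{m-\sabs{\alpha}\rho}_{1,0}(\C)$ constructed around~\eqref{operator_valued_calculus:eqn:definition_alternate_weights_magnetic_Sobolev_spaces}, Theorem~\ref{operator_valued_calculus:thm:Weyl_product_Hoermander_symbols} gives $\partial_\xi^\alpha f = (\eps\ii)^{-\sabs{\alpha}} w_{m-\sabs{\alpha}\rho} \Weyl h_\alpha \in S^{m-\sabs{\alpha}\rho}_{0,0}\bigl(\mathcal{B}(\Hil,\Hil')\bigr)$. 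Specializing this bound at $\alpha+\beta$ in place of $\alpha$ for arbitrary $a,\beta \in \N_0^d$ yields $\sup_{(x,\xi)}\sexpval{\xi}^{-m+\sabs{\alpha+\beta}\rho}\bnorm{\partial_x^a\partial_\xi^{\alpha+\beta}f(x,\xi)}_{\mathcal{B}(\Hil,\Hil')} < \infty$, which is precisely the Hörmander seminorm $\snorm{f}_{m,a(\alpha+\beta)}$ defining $f \in S^m_{\rho,0}\bigl(\mathcal{B}(\Hil,\Hil')\bigr)$. The hard part will be the bookkeeping around the non-commutativity of the $\ad_{\xi_j}$'s: the iterated Leibniz expansion is clean only modulo reordering commutators involving $B_{ij}(\hat x)$ and its derivatives, and one must verify inductively that every such correction again factors through a non-positive-order scalar weight composed with a hypothesis term — a check made tractable only by the boundedness of $B$ and all its derivatives in Assumption~\ref{operator_valued_calculus:assumption:bounded_magnetic_fields}.
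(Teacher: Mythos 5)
Your proposal is correct and follows essentially the same route as the paper's proof: both directions rest on reducing to the $S^0_{0,0}$ Beals criterion of Theorem~\ref{operator_valued_calculus:thm:Beals_commutator_criterion} via Leibniz's rule for the Moyal derivations, insertion of the exact Moyal inverses $w_{s} \Weyl w_{-s} = 1$ to factor each term into a non-positive-order scalar weight times a hypothesis term, and recovery of $f$ through Lemma~\ref{operator_valued_calculus:lem:characterization_Hoermander_symbols_via_derivations}. The only difference is organizational — you collapse the paper's two-stage bootstrap (first $\rho=0$ with general $m$, then $\rho\in(0,1]$) into a single pass over the symbols $h_\alpha$ — and you are right that the remaining work is the bookkeeping of the reordering commutators with $B_{jk}(\hat{x})$, which the paper disposes of in the same way via Assumption~\ref{operator_valued_calculus:assumption:bounded_magnetic_fields}.
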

Alternatively, we may put the weights to the right: 
%
%
\begin{corollary}
	In Corollary~\ref{operator_valued_calculus:cor:Beals_commutator_criterion} we may equivalently impose 
	\begin{align*}
		\partial^{(a,\alpha)}_{(x,\xi)} f \Weyl w_{- m + \sabs{\alpha} \rho} \in \mathfrak{A}^B \bigl ( \mathcal{B}(\Hil,\Hil') \bigr ) 
		&&
		\forall a , \alpha \in \N_0^d 
		. 
	\end{align*}
\end{corollary}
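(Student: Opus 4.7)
My plan is to prove the equivalence with the left-weighted criterion in Corollary~\ref{operator_valued_calculus:cor:Beals_commutator_criterion} by an adjoint symmetry argument, since the right-weighted form is obtained from the left-weighted form by applying it to the pointwise adjoint symbol $f^{\ast}$ and then using $(h \Weyl k)^{\ast} = k^{\ast} \Weyl h^{\ast}$ to move weights from the left of a product to the right.

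For the forward direction, I would simply note that if $f \in S^m_{\rho,0} \bigl ( \mathcal{B}(\Hil,\Hil') \bigr )$ then the derivations $\partial^{(a,\alpha)}_{(x,\xi)} f$ lie in $S^{m - \sabs{\alpha}\rho}_{\rho,0} \bigl ( \mathcal{B}(\Hil,\Hil') \bigr )$, as can be read off from the expansion~\eqref{operator_valued_calculus:eqn:action_ad_x_ad_xi_functions} together with Assumption~\ref{operator_valued_calculus:assumption:bounded_magnetic_fields} on $B$. Combining this with Theorem~\ref{operator_valued_calculus:thm:Weyl_product_Hoermander_symbols} applied to the product with the scalar weight $w_{-m + \sabs{\alpha}\rho} \in S^{-m + \sabs{\alpha}\rho}_{1,0}(\C)$ places $\partial^{(a,\alpha)}_{(x,\xi)} f \Weyl w_{-m + \sabs{\alpha}\rho}$ in $S^0_{\rho,0} \bigl ( \mathcal{B}(\Hil,\Hil') \bigr )$, and Calderón--Vaillancourt (Theorem~\ref{operator_valued_calculus:thm:Calderon_Vaillancourt}) then embeds this into $\mathfrak{A}^B \bigl ( \mathcal{B}(\Hil,\Hil') \bigr )$.

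For the reverse direction, I would collect three facts about the pointwise adjoint: (i) since $\snorm{T^{\ast}}_{\mathcal{B}(\Hil',\Hil)} = \snorm{T}_{\mathcal{B}(\Hil,\Hil')}$, adjoining induces a bijection $S^m_{\rho,0} \bigl ( \mathcal{B}(\Hil,\Hil') \bigr ) \leftrightarrow S^m_{\rho,0} \bigl ( \mathcal{B}(\Hil',\Hil) \bigr )$ and similarly $\mathfrak{A}^B \bigl ( \mathcal{B}(\Hil,\Hil') \bigr ) \leftrightarrow \mathfrak{A}^B \bigl ( \mathcal{B}(\Hil',\Hil) \bigr )$, the latter because $\Op^A(g)^{\ast} = \Op^A(g^{\ast})$; (ii) because $x_j$ and $\xi_j$ are real and scalar-valued, we have $(\ad_{x_j} g)^{\ast} = - \ad_{x_j}(g^{\ast})$ and $(\ad_{\xi_j} g)^{\ast} = - \ad_{\xi_j}(g^{\ast})$, hence $(\partial^{(a,\alpha)}_{(x,\xi)} g)^{\ast} = (-1)^{\sabs{a}+\sabs{\alpha}} \, \partial^{(a,\alpha)}_{(x,\xi)}(g^{\ast})$; and (iii) the weights satisfy $w_m^{\ast} = w_m$, since for $m \geq 0$ the symbol $w_m = \sexpval{\xi}^m + \lambda(m)$ is real scalar-valued, and for $m < 0$ applying ${}^{\ast}$ to $w_{\sabs{m}} \Weyl w_m = 1$ gives $w_m^{\ast} \Weyl w_{\sabs{m}} = 1$, so uniqueness of the Moyal inverse forces $w_m^{\ast} = w_m$.

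With these in hand, the reverse direction follows cleanly: the right-weighted condition for $f$ can be rewritten, using $(h \Weyl k)^{\ast} = k^{\ast} \Weyl h^{\ast}$ together with (ii) and (iii), as
\begin{align*}
  \bigl ( \partial^{(a,\alpha)}_{(x,\xi)} f \Weyl w_{-m + \sabs{\alpha} \rho} \bigr )^{\ast}
  = (-1)^{\sabs{a}+\sabs{\alpha}} \, w_{-m + \sabs{\alpha} \rho} \Weyl \partial^{(a,\alpha)}_{(x,\xi)}(f^{\ast})
  \in \mathfrak{A}^B \bigl ( \mathcal{B}(\Hil',\Hil) \bigr ) ,
\end{align*}
which is precisely the left-weighted criterion~\eqref{operator_valued_calculus:eqn:Beals_commutator_criterion_condition_S_m_rho} of Corollary~\ref{operator_valued_calculus:cor:Beals_commutator_criterion} for $f^{\ast}$. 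That corollary then yields $f^{\ast} \in S^m_{\rho,0} \bigl ( \mathcal{B}(\Hil',\Hil) \bigr )$, hence $f \in S^m_{\rho,0} \bigl ( \mathcal{B}(\Hil,\Hil') \bigr )$ by (i). The only place that requires any care is item (iii), the self-adjointness of the scalar weights under the pointwise adjoint for negative order; everything else is a bookkeeping exercise in intertwining the operator product, the Weyl product, and the adjoint.
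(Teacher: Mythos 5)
Your proof is correct and follows the same route as the paper, whose entire argument for this corollary is the one-line observation that applying the pointwise adjoint ${}^*$ flips the order of $w_{-m+\sabs{\alpha}\rho}$ and $\partial^{(a,\alpha)}_{(x,\xi)} f$ in the Weyl product. Your items (i)--(iii) — the adjoint bijections of the symbol classes and of $\mathfrak{A}^B$, the sign rule for derivations, and the self-adjointness $w_m^* = w_m$ via uniqueness of the Moyal inverse — are exactly the bookkeeping the paper leaves implicit.
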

This corollary follows immediately from Corollary~\ref{operator_valued_calculus:cor:Beals_commutator_criterion}, we merely have to apply the adjoint ${}^*$, which flips the order of $w_{- m + \sabs{\alpha} \rho}$ and $\partial^{(a,\alpha)}_{(x,\xi)} f$. 

To tidy up the presentation of the proof of these two corollaries, we will factor out some auxiliary statements and prove them first. 
\begin{lemma}\label{operator_valued_calculus:lem:characterization_Hoermander_symbols_via_derivations}
	\begin{enumerate}[(1)]
		\item For Hörmander symbols $f \in S^m_{\rho,0} \bigl ( \mathcal{B}(\Hil,\Hil') \bigr )$ the derivations with respect to $x_j$ and $\xi_j$ simplify to equations~\eqref{operator_valued_calculus:eqn:action_ad_x_ad_xi_functions}. 
		\item We have the following characterization of operator-valued Hörmander symbols: 
		\begin{align*}
			f \in S^m_{\rho,0} \bigl ( \mathcal{B}(\Hil,\Hil') \bigr ) 
			\qquad \Longleftrightarrow \qquad 
			\partial_{(x,\xi)}^{(a,\alpha)} f \in S^{m - \sabs{\alpha} \rho}_{0,0} \bigl ( \mathcal{B}(\Hil,\Hil') \bigr ) 
			&&
			\forall a , \alpha \in \N_0^d 
		\end{align*}
		\item We have the following characterization of operator-valued Hörmander symbols: 
		\begin{align*}
			f \in S^m_{\rho,0} \bigl ( \mathcal{B}(\Hil,\Hil') \bigr ) 
			\qquad \Longleftrightarrow \qquad 
			\partial_{(x,\xi)}^{(0,\alpha)} f \in S^{m - \sabs{\alpha} \rho}_{0,0} \bigl ( \mathcal{B}(\Hil,\Hil') \bigr ) 
			&&
			\forall \alpha \in \N_0^d 
		\end{align*}
		\item The first-order derivations~\eqref{operator_valued_calculus:eqn:action_ad_x_ad_xi_functions} define continuous mappings 
		\begin{subequations}\label{operator_valued_calculus:eqn:continuity_derivations_Hoermander_symbol_classes}
			\begin{align}
				\ad_{x_j} &: S^m_{\rho,0} \bigl ( \mathcal{B}(\Hil,\Hil') \bigr ) \longrightarrow S^{m - \rho}_{\rho,0} \bigl ( \mathcal{B}(\Hil,\Hil') \bigr ) 
				, 
				\\
				\ad_{\xi_j} &: S^m_{\rho,0} \bigl ( \mathcal{B}(\Hil,\Hil') \bigr ) \longrightarrow S^m_{\rho,0} \bigl ( \mathcal{B}(\Hil,\Hil') \bigr ) 
				, 
			\end{align}
		\end{subequations}
		Thus, the higher-order derivations in lexicographical order, 
		\begin{align*}
			\partial_{(x,\xi)}^{(a,\alpha)} : S^m_{\rho,0} \bigl ( \mathcal{B}(\Hil,\Hil') \bigr ) \longrightarrow S^{m - \sabs{\alpha} \rho}_{\rho,0} \bigl ( \mathcal{B}(\Hil,\Hil') \bigr ) 
			, 
			&&
			a , \alpha \in \N_0^d 
			, 
		\end{align*}
		define continuous maps between Hörmander symbol classes. 
	\end{enumerate}
\end{lemma}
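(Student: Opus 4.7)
The strategy is to prove (1) first as the key computational lemma, then use it to derive (4), (2), and (3) in that order; the interdependencies are tight but each step is short once (1) is in hand.

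For part (1), I will apply the asymptotic expansion of $\Weyl$ in $\eps$ (from \cite{Lein:two_parameter_asymptotics:2008}, Theorem~1.1) to the polynomial symbols $x_j$ and $\xi_j$. The critical simplification is that $x_j$ and $\xi_j$ are at most linear, so many terms in the $\eps$-expansion of $\Weyl$ vanish once derivatives of order $\geq 2$ hit them. Concretely, for $\ad_{x_j}$ the non-magnetic Moyal bracket gives exactly $\eps \, \ii \, \partial_{\xi_j} f$, the magnetic contributions vanish because they all involve $\partial_{\xi} x_j = 0$, and higher-order terms of the expansion vanish because they require at least two derivatives on $x_j$. For $\ad_{\xi_j}$, the non-magnetic Moyal bracket yields $- \eps \, \ii \, \partial_{x_j} f$, the leading magnetic contribution produces $- \eps \, \lambda \sum_k B_{jk} \, \partial_{\xi_k} f$ (via the $B_{jk} \, \partial_{\xi_j} \xi_l \, \partial_{\xi_k} f = B_{jk} \, \delta_{jl} \, \partial_{\xi_k} f$ term in $\{\xi_j,f\}_{\lambda B}$), the $\order(\eps^2)$ correction vanishes by the antisymmetry of the commutator (odd orders in the Moyal expansion survive, even orders cancel), leaving an $\order(\eps^3)$ remainder that can be tracked explicitly through the integral formula~\eqref{operator_valued_calculus:eqn:Weyl_product}.

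For part (4), $\ad_{x_j} = \eps \, \ii \, \partial_{\xi_j}$ is manifestly continuous $S^m_{\rho,0} \to S^{m - \rho}_{\rho,0}$ by the definition of the Hörmander classes. For $\ad_{\xi_j}$: the first term $- \eps \, \ii \, \partial_{x_j} f$ stays in $S^m_{\rho,0}$, and since $B_{jk} \in \Cont^{\infty}_{\mathrm{b}}(\R^d)$ by Assumption~\ref{operator_valued_calculus:assumption:bounded_magnetic_fields}, the magnetic term $B_{jk} \, \partial_{\xi_k} f \in S^{m-\rho}_{\rho,0} \subseteq S^m_{\rho,0}$. The $\order(\eps^3)$ remainder needs to be estimated in Hörmander seminorms directly from the oscillatory integral representation of $\xi_j \Weyl f - f \Weyl \xi_j$; here one integrates by parts in the auxiliary variables to extract polynomial decay in $\xi$ at order $m$, mimicking the estimates in the proof of Theorem~\ref{operator_valued_calculus:thm:Weyl_product_Hoermander_symbols}. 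The higher-order continuity follows by iterating the two first-order bounds.

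Part~(2) is then essentially bookkeeping. The $(\Rightarrow)$ direction is immediate from the continuity in part~(4) plus the inclusion $S^{m-\sabs{\alpha}\rho}_{\rho,0} \subseteq S^{m-\sabs{\alpha}\rho}_{0,0}$. For the $(\Leftarrow)$ direction I will invert the relations in (1) to express ordinary derivatives in terms of derivations: $\partial_{\xi_j} f = - \frac{\ii}{\eps} \, \ad_{x_j}(f)$ and $\partial_{x_j} f = \frac{\ii}{\eps} \, \ad_{\xi_j}(f) + \lambda \sum_k B_{jk} \, \partial_{\xi_k} f + \order(\eps^2)$, where the remainder is itself expressible through iterated derivations of $f$ and bounded coefficients. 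A joint induction on $\sabs{a} + \sabs{\alpha}$ then upgrades the hypothesized derivation bounds to bounds on all $\partial_x^a \partial_\xi^\alpha f$ in $S^{m - \sabs{\alpha} \rho}_{0,0}$, which is exactly the Hörmander condition of order $m$, type $(\rho,0)$. Part~(3) is a corollary: since $\ad_{x_j} = \eps \, \ii \, \partial_{\xi_j}$ by (1), pure $x$-derivations give pure $\xi$-derivatives $\partial_\xi^\alpha f \in S^{m - \sabs{\alpha} \rho}_{0,0}$, and because membership in $S^{m - \sabs{\alpha} \rho}_{0,0}$ already controls all $x$-derivatives of $\partial_\xi^\alpha f$ by $\sexpval{\xi}^{m - \sabs{\alpha}\rho}$, this immediately verifies the full Hörmander condition.

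The main obstacle is controlling the $\order(\eps^3)$ remainder in the asymptotic identity for $\ad_{\xi_j}$ as a bona fide element of $S^m_{\rho,0}$ uniformly in $f$, since the asymptotic expansion in \cite{Lein:two_parameter_asymptotics:2008} a priori yields only formal-sum statements. One has to return to the integral formula, bound the magnetic-phase contributions using Assumption~\ref{operator_valued_calculus:assumption:bounded_magnetic_fields}, and repeatedly integrate by parts to recover the correct power of $\sexpval{\xi}$. A secondary subtlety is ensuring that the induction in the backward direction of (2) closes consistently: each inversion step introduces $\order(\eps^2)$ error terms that themselves involve higher derivations of $f$, and one has to arrange the induction so that these are controlled before they are used.
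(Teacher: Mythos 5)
Your proposal is correct and follows essentially the same route as the paper: apply the $\eps$-expansion of $\Weyl$ to the linear, scalar-valued symbols $x_j$ and $\xi_j$ so that the pointwise commutator vanishes and (for $\ad_{x_j}$) all magnetic and higher-order terms drop out, giving the exact identity $\ad_{x_j}(f) = \eps\,\ii\,\partial_{\xi_j} f$ and the $\order(\eps^3)$ formula for $\ad_{\xi_j}(f)$, from which (2)--(4) follow. One remark: the joint induction you set up for the backward direction of (2), with its inversion of the $\ad_{\xi_j}$ relation and the attendant $\order(\eps^2)$ error terms, is avoidable — the clean observation you already make for (3), namely that $\partial_{(x,\xi)}^{(0,\alpha)} f = (\ii\eps)^{\sabs{\alpha}}\,\partial_{\xi}^{\alpha} f$ is exact and that membership of $\partial_{\xi}^{\alpha} f$ in $S^{m - \sabs{\alpha}\rho}_{0,0}\bigl(\mathcal{B}(\Hil,\Hil')\bigr)$ automatically controls all further $x$-derivatives, closes (2) as well (this is how the paper argues), so the self-referential subtlety you flag never arises.
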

\begin{proof}
	\begin{enumerate}[(1)]
		\item Given that the momentum derivation is defined in terms of $\xi_j \in S^1_{1,0}(\C)$, Theorem~\ref{operator_valued_calculus:thm:Weyl_product_Hoermander_symbols} guarantees that $\ad_{\xi_j}(f)$ exists in $S^{m+1}_{\rho,0} \bigl ( \mathcal{B}(\Hil,\Hil') \bigr )$. Moreover, we can insert the asymptotic expansion of $\Weyl$ in $\eps$ into $\ad_{\xi_j}(f)$ and exploit that $(x,\xi) \mapsto \xi_j$ is a scalar-valued function so that all pointwise commutators vanish. Consequently, the asymptotic expansion starts at $\order(\eps)$ and $\ad_{\xi_j}(f) \in S^m_{\rho,0} \bigl ( \mathcal{B}(\Hil,\Hil') \bigr )$ is at most of the same order as $f$, namely $m$. This gives the first expression, equation~\eqref{operator_valued_calculus:eqn:action_ad_x_ad_xi_functions:ad_xi}. 
		
		To handle equation~\eqref{operator_valued_calculus:eqn:action_ad_x_ad_xi_functions:ad_x}, we need to make some straightforward modifications to standard existence results of the relevant oscillatory integrals such as \cite[Lemma~D.3]{Lein:two_parameter_asymptotics:2008}; this shows that $\ad_{x_j}(f)$ exists. 
		
		What is more, because the \emph{magnetic} Weyl product of a function that depends only on $x$ with another function reduces to the \emph{non-magnetic} Weyl product, we deduce 
		\begin{align*}
			\ad_{x_j}(f) &= x_j \Weyl f - f \Weyl x_j 
			= x_j \sharp^{B = 0} f - f \sharp^{B = 0} x_j 
			. 
		\end{align*}
		Exploiting the scalar-valuedness of $(x,\xi) \mapsto x_j$ yields that the zeroth-order term in the $\eps$ expansion, the pointwise commutator of $x_j$ and $f$ vanishes. Moreover, all magnetic terms vanish, and hence, the asymptotic expansion of $\ad_{x_j}(f)$ terminates after the $\order(\eps)$ term. Consequently, equation~\eqref{operator_valued_calculus:eqn:action_ad_x_ad_xi_functions:ad_x} is exact and we deduce $\ad_{x_j}(f) \in S^{m - \rho}_{\rho,0} \bigl ( \mathcal{B}(\Hil,\Hil') \bigr )$. 
		\item The $\Longrightarrow$ direction is obvious, it follows directly from Definition~\ref{operator_valued_calculus:defn:Hoermander_symbols}. 
	
		To show the converse implication, suppose $\partial_{(x,\xi)}^{(a,\alpha)} f \in S^{m - \sabs{\alpha} \rho}_{0,0} \bigl ( \mathcal{B}(\Hil,\Hil') \bigr )$ holds true for all $a , \alpha \in \N_0^d$. When we choose $a = 0 = \alpha$, the assumption implies $f \in S^m_{0,0} \bigl ( \mathcal{B}(\Hil,\Hil') \bigr )$ is a Hörmander symbol of order $m$. 
	
		Moreover, lexicographical ordering of the derivations yields $\partial_{(x,\xi)}^{(a,\alpha)} f = \partial_{(x,\xi)}^{(a,0)} \bigl ( \partial_{(x,\xi)}^{(0,\alpha)} f \bigr )$, and we may consider position and momentum derivations separately. From the continuity of $\ad_{x_j}$ on the level of Hörmander symbols and equation~\eqref{operator_valued_calculus:eqn:action_ad_x_ad_xi_functions:ad_x}, the momentum derivations 
		\begin{align*}
			\partial_{(x,\xi)}^{(0,\alpha)} f = (\ii \eps)^{\sabs{\alpha}} \, \partial_{\xi}^{\alpha} f \overset{\star}{\in} S^{m - \sabs{\alpha} \rho}_{0,0} \bigl ( \mathcal{B}(\Hil,\Hil') \bigr ) 
		\end{align*}
		reduce to the usual derivatives, and those lie in the correct Hörmander class ($\star$). 
	
		While we do not have such a neat equality for the position derivations, we may proceed by induction to deduce that $\partial_{(x,\xi)}^{(a,0)} f$ is a linear combination of derivatives of $f$ in $x$ as well as terms that depend on (derivatives of) $B_{jk}$ and $\partial_{\xi_k} f$, $k = 1 , \ldots , d$. 
		\item This follows directly from (2). 
		\item The proof of continuity for all first-order derivations was part of the proof of items~(1) and (2). That extends directly to higher-order derivations by induction. 
	\end{enumerate}
\end{proof}
With this out of the way, we can proceed with a proof. While we could in principle ask the reader to generalize Theorems~5.20 and 5.21 in \cite{Iftimie_Mantoiu_Purice:commutator_criteria:2008}, we can actually leverage their results and simplify our proofs. One substantial simplification is that we already know $w_{-m} = w_m^{(-1)_{\Weyl}} \in S^{-m}_{1,0}(\C)$ holds for all values $m \in \R$. This either follows from \cite[Proposition~6.31]{Iftimie_Mantoiu_Purice:commutator_criteria:2008} and the G{\r a}rding Inequality (\cite[Corollary~5.4]{Iftimie_Mantoiu_Purice:magnetic_psido:2006}); alternatively, we may refer to \cite[Theorem~1.8]{Mantoiu_Purice_Richard:spectral_propagation_results_magnetic_Schroedinger_operators:2007} for a direct proof. 
\begin{proof}[Corollary~\ref{operator_valued_calculus:cor:Beals_commutator_criterion}]
	$\Longrightarrow$: Suppose $f \in S^m_{\rho,0} \bigl ( \mathcal{B}(\Hil,\Hil') \bigr )$ is a Hörmander class symbol. Then according to Corollary~\ref{operator_valued_calculus:cor:boundedness_magnetic_Sobolev_spaces} to the Calderón-Vaillancourt Theorem~\ref{operator_valued_calculus:thm:Calderon_Vaillancourt} the magnetic pseudodifferential operator 
	\begin{align*}
		\Op^A(f) : H^m_A(\R^d,\Hil) \longrightarrow L^2(\R^d,\Hil')
	\end{align*}
	is bounded. What is more, in view of Theorem~\ref{operator_valued_calculus:thm:Weyl_product_Hoermander_symbols} the composition 
	\begin{align*}
		w_{-m + \sabs{\alpha} \rho} \Weyl \partial_x^a \partial_{\xi}^{\alpha} f \in S^0_{\rho,0} \bigl ( \mathcal{B}(\Hil,\Hil') \bigr ) 
	\end{align*}
	is a Hörmander symbol of order $0$, no matter what multi indices $a , \alpha  \in \N_0^d$ we pick. Consequently, it also defines an element of $\mathfrak{A}^B \bigl ( \mathcal{B}(\Hil,\Hil') \bigr )$ (Corollary~\ref{operator_valued_calculus:cor:Calderon_Vaillancourt}), and we deduce that condition~\eqref{operator_valued_calculus:eqn:Beals_commutator_criterion_condition_S_m_rho} is satisfied. 
	\medskip
	
	\noindent
	$\Longleftarrow$: Suppose the operator $\Op^A(f)$ is such that equation~\eqref{operator_valued_calculus:eqn:Beals_commutator_criterion_condition_S_m_rho} holds true. 

	When $m = 0$ and $\rho = 0$, this is just the content of Beals' Commutator Criterion~\ref{operator_valued_calculus:cor:Beals_commutator_criterion}. 
	
	So assume $m \neq 0$ or $\rho = 0$; we will use a bootstrap argument to infer the case $\rho \in (0,1]$ from the case $\rho = 0$ afterwards. Lemma~\ref{operator_valued_calculus:lem:characterization_Hoermander_symbols_via_derivations}~(1) gives us an equivalent characterization of elements $f \in S^m_{0,0} \bigl ( \mathcal{B}(\Hil,\Hil') \bigr )$ in terms of derivations, 
	\begin{align*}
		w_{-m} \Weyl \partial_{(x,\xi)}^{(a,\alpha)} f \in S^0_{0,0} \bigl ( \mathcal{B}(\Hil,\Hil') \bigr )
		&&
		\forall a , \alpha \in \N_0^d 
		. 
	\end{align*}
	Since we want to invoke Beals' Commutator Criterion~\ref{operator_valued_calculus:cor:Beals_commutator_criterion}, we need to prove that $f$ verifies the condition 
	\begin{align*}
		\partial_{(x,\xi)}^{(a,\alpha)} \bigl ( w_{-m} \Weyl f \bigr ) \in \mathfrak{A}^B \bigl ( \mathcal{B}(\Hil,\Hil') \bigr )
		&&
		\forall a , \alpha \in \N_0^d 
		. 
	\end{align*}
	Fortunately, this condition is equivalent to \eqref{operator_valued_calculus:eqn:Beals_commutator_criterion_condition_S_m_rho}: for any pair of multi indices $a , \alpha \in \N_0^d$ we can use Leibniz's rule to express 
	\begin{align*}
		\partial_{(x,\xi)}^{(a,\alpha)} \bigl ( w_{-m} \Weyl f \bigr ) &= \sum_{\substack{b + c = a \\ \beta + \gamma = \alpha}} \bigl ( \partial_{(x,\xi)}^{(b,\beta)} w_{-m} \bigr ) \Weyl \partial_{(x,\xi)}^{(c,\gamma)} f 
		\\
		&= \sum_{\substack{b + c = a \\ \beta + \gamma = \alpha}} \underbrace{\bigl ( \partial_{(x,\xi)}^{(b,\beta)} w_{-m} \bigr ) \Weyl w_m}_{\in S^0_{0,0}(\C) \subseteq \mathfrak{A}^B(\mathcal{B}(\Hil'))} \Weyl \underbrace{w_{-m} \Weyl \partial_{(x,\xi)}^{(c,\gamma)} f}_{\in \mathfrak{A}^B(\mathcal{B}(\Hil,\Hil'))} 
		\in \mathfrak{A}^B \bigl ( \mathcal{B}(\Hil,\Hil') \bigr ) 
	\end{align*}
	as a sum of derivations of $w_{-m} = w_m^{(-1)_{\Weyl}}$ and $f$. 
	
	The first term in the product of the sum is an element of $S^0_{1,0}(\C) \subseteq S^0_{\rho,0} \bigl ( \mathcal{B}(\Hil') \bigr )$: from the characterization of Hörmander classes through Lemma~\ref{operator_valued_calculus:lem:characterization_Hoermander_symbols_via_derivations}~(2) we have learnt 
	\begin{align*}
		\partial_{(x,\xi)}^{(b,\beta)} w_{-m} \in S^{- m - \sabs{\beta}}_{1,0}(\C)
		\subseteq S^{-m}_{0,0}(\C)
		. 
	\end{align*}
	Combined with $\beta + \gamma = \alpha$ this confirms that the first term lies in $S^0_{0,0}(\C) \subset \mathfrak{A}^B \bigl ( \mathcal{B}(\Hil') \bigr )$ (Corollary~\ref{operator_valued_calculus:cor:Calderon_Vaillancourt}). The second term is an element of $\mathfrak{A}^B \bigl ( \mathcal{B}(\Hil,\Hil') \bigr )$ by assumption. Consequently, also their product is an element of $\mathfrak{A}^B \bigl ( \mathcal{B}(\Hil,\Hil') \bigr )$. 
	
	From Beals' Commutator Criterion~\ref{operator_valued_calculus:thm:Beals_commutator_criterion} we therefore deduce 
	\begin{align*}
		w_{-m} \Weyl f \in S^0_{0,0} \bigl ( \mathcal{B}(\Hil,\Hil') \bigr ) 
	\end{align*}
	is a Hörmander symbol, and we have shown the case $\rho = 0$ even when $m \neq 0$. 
	
	Lastly, we will consider the case $m \in \R$ and $\rho \in (0,1]$. Our arguments will hinge on two facts: the first is the characterization of elements of $S^m_{\rho,0} \bigl ( \mathcal{B}(\Hil,\Hil') \bigr )$ in terms of position derivations only  (Lemma~\ref{operator_valued_calculus:lem:characterization_Hoermander_symbols_via_derivations}~(3)). The other ingredient is equation~\eqref{operator_valued_calculus:eqn:action_ad_x_ad_xi_functions:ad_x}, \ie the fact that for Hörmander symbols position \emph{derivations} equal momentum \emph{derivatives} up to a factor $(\ii \eps)^{\pm \sabs{\alpha}}$ (Lemma~\ref{operator_valued_calculus:lem:characterization_Hoermander_symbols_via_derivations}~(1)). 
	
	Consequently, with the help of the commutator criterion for $\rho = 0$ that we have just proven we can give the following equivalent characterization for any $\beta \in \N_0^d$: 
	\begin{align*}
		\partial_{(x,\xi)}^{(0,\beta)} f \in S^{m - \sabs{\beta} \rho}_{0,0} \bigl ( \mathcal{B}(\Hil,\Hil') \bigr )
		\quad \Longleftrightarrow \quad 
		w_{- m + \sabs{\beta} \rho} \Weyl \partial_{(x,\xi)}^{(a,\alpha + \beta)} f \in \mathfrak{A}^B \bigl ( \mathcal{B}(\Hil,\Hil') \bigr ) 
		&&
		\forall a , \alpha \in \N_0^d 
		. 
	\end{align*}
	But all we have to do here is insert $1 = w_{m - (\sabs{\alpha} + \sabs{\beta}) \rho} \Weyl w_{m - (\sabs{\alpha} + \sabs{\beta}) \rho}^{(-1)_{\Weyl}}$ into the middle of the equation on the right, 
	\begin{align*}
		w_{- m + \sabs{\beta} \rho} \Weyl \partial_{(x,\xi)}^{(a,\alpha)} \partial_{\xi}^{\beta} f &= \Bigl ( w_{-m + \sabs{\beta} \rho} \Weyl w_{m - (\sabs{\alpha} + \sabs{\beta}) \rho} \Bigr ) \Weyl \Bigl ( w_{-m + (\sabs{\alpha} + \sabs{\beta}) \rho} \Weyl \partial_{(x,\xi)}^{(a,\alpha+\beta)} f \Bigr )
		, 
	\end{align*}
	and consider the two factors separately: the first defines an element of $\mathfrak{A}^B \bigl ( \mathcal{B}(\Hil') \bigr )$ since Theorem~\ref{operator_valued_calculus:thm:Weyl_product_Hoermander_symbols} and Corollary~\ref{operator_valued_calculus:cor:Calderon_Vaillancourt} imply 
	\begin{align*}
		w_{- m + \sabs{\beta} \rho} \Weyl w_{m - (\sabs{\alpha} + \sabs{\beta}) \rho} \in S^{- m + \sabs{\beta} \rho}_{1,0}(\C) \Weyl S^{m - (\sabs{\alpha} + \sabs{\beta}) \rho}_{1,0}(\C) 
		\subseteq S^{- \sabs{\alpha} \rho}_{1,0}(\C) 
		\subseteq \mathfrak{A}^B \bigl ( \mathcal{B}(\Hil') \bigr ) 
		. 
	\end{align*}
	The second factor is an element of $\mathfrak{A}^B \bigl ( \mathcal{B}(\Hil') \bigr )$ by assumption. Hence, no matter the value $\beta \in \N_0^d$ the momentum derivative 
	\begin{align*}
		\partial_{\xi}^{\beta} f = (\ii \eps)^{-\sabs{\beta}} \, \partial_{(x,\xi)}^{(0,\beta)} f \in S^{m - \sabs{\beta} \rho}_{0,0} \bigl ( \mathcal{B}(\Hil,\Hil') \bigr ) 
	\end{align*}
	lies in the Hörmander space of the correct order, which is equivalent to $f \in S^m_{\rho,0} \bigl ( \mathcal{B}(\Hil,\Hil') \bigr )$. This not only finishes the proof of the case $\rho \in (0,1]$, but the proof of the corollary as a whole. 
\end{proof}
Now is our turn to furnish a proof of Beal's Commutator Criterion~\ref{operator_valued_calculus:thm:Beals_commutator_criterion}. Clearly, one of the implications, 
\begin{align*}
	f \in S^0_{0,0} \bigl ( \mathcal{B}(\Hil,\Hil') \bigr )
	\quad \Longrightarrow \quad
	\partial^{(a,\alpha)}_{(x,\xi)} f \in \mathfrak{A}^B \bigl ( \mathcal{B}(\Hil,\Hil') \bigr ) 
	&&
	\forall a , \alpha \in \N_0^d 
	.
\end{align*}
is a direct consequence of the Calderón-Vaillancourt Theorem~\ref{operator_valued_calculus:thm:Calderon_Vaillancourt}. 

Thus, we just owe the reader a proof of the converse implication of Theorem~\ref{operator_valued_calculus:thm:Beals_commutator_criterion}. 
\begin{proposition}\label{operator_valued_calculus:prop:other_direction_Beals_commutator_criterion}
	Suppose the magnetic field $B$ is bounded in the sense of Assumption~\ref{operator_valued_calculus:assumption:bounded_magnetic_fields}. Then the implication $\Longleftarrow$ in Theorem~\ref{operator_valued_calculus:thm:Beals_commutator_criterion} holds true:
	\begin{align*}
		f \in S^0_{0,0} \bigl ( \mathcal{B}(\Hil,\Hil') \bigr )
		\quad \Longleftarrow \quad
		\partial^{(a,\alpha)}_{(x,\xi)} f \in \mathfrak{A}^B \bigl ( \mathcal{B}(\Hil,\Hil') \bigr ) 
		&&
		\forall a , \alpha \in \N_0^d 
		.
	\end{align*}
	More specifically, Lemmas~3.1, 3.2 and 3.3 as well as their magnetic analogs from \cite{Cornean_Helffer_Purice:simple_proof_Beals_criterion_magnetic_PsiDOs:2018} extend to the case where $F$ and its commutators with $Q$ and $P^A$ lie in $\mathcal{B} \bigl ( L^2(\R^d,\Hil) , L^2(\R^d,\Hil') \bigr )$. That is, there exists an operator-valued symbol $f \in S^0_{0,0} \bigl ( \mathcal{B}(\Hil,\Hil') \bigr ) = \Cont^{\infty}_{\mathrm{b}} \bigl ( T^* \R^d , \mathcal{B}(\Hil,\Hil') \bigr )$ such that for all $\Psi' \in \Schwartz(\R^d,\Hil')$ and $\Psi \in \Schwartz(\R^d,\Hil)$ the inner product is given by the phase space integral
	\begin{align}
		&\bscpro{\Psi'}{F \Psi}_{L^2(\R^d,\Hil')}
		=
		\label{operator_valued_calculus:eqn:relation:matrix_element_operator_OpA_f}
		\\
		&\qquad
		= \frac{1}{(2\pi)^d} \int_{\R^d} \dd x \int_{\R^d} \dd y \int_{\R^d} \dd \eta \, \e^{- \ii \frac{\lambda}{\eps} \int_{[\eps x , \eps y]} A} \, \e^{- \ii \eta \cdot (y - x)} \; \bscpro{\Psi'(x) \, }{ \, f \bigl ( \tfrac{\eps}{2}(x+y) , \eta \bigr ) \, \Psi(y)}_{\Hil'}
		.
		\notag
	\end{align}
	Equivalently, $F = \Op^A(f)$ is the magnetic Weyl quantization of $f$. 
\end{proposition}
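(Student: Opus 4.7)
The plan is to adapt the short, elegant proof of Cornean, Helffer, and Purice \cite{Cornean_Helffer_Purice:simple_proof_Beals_criterion_magnetic_PsiDOs:2018} to the operator-valued setting by systematically replacing absolute values with $\mathcal{B}(\Hil,\Hil')$-operator norms. Since the $\Longrightarrow$ direction is immediate from the Calderón-Vaillancourt Corollary~\ref{operator_valued_calculus:cor:Calderon_Vaillancourt}, only the $\Longleftarrow$ direction requires work.

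First I would identify the candidate symbol. By the Schwartz Kernel Theorem (Lemma~\ref{operator_valued_calculus:lem:Schwartz_kernel_theorem}) the bounded operator $F$ admits an operator-valued distributional kernel $K_F \in \Schwartz^* \bigl ( \R^d \times \R^d , \mathcal{B}(\Hil,\Hil') \bigr )$, and I would define $f := \Wigner^A K_F \in \Schwartz^* \bigl ( T^* \R^d , \mathcal{B}(\Hil,\Hil') \bigr )$ using the extension of the Wigner transform to distributions. Because $\Wigner^A$ inverts the kernel map, we automatically have $\Op^A(f) = F$ as a continuous map $\Schwartz(\R^d,\Hil) \to \Schwartz^*(\R^d,\Hil')$. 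Thus the entire problem reduces to upgrading the distribution $f$ to an element of $\Cont^{\infty}_{\mathrm{b}} \bigl ( T^* \R^d , \mathcal{B}(\Hil,\Hil') \bigr ) = S^0_{0,0} \bigl ( \mathcal{B}(\Hil,\Hil') \bigr )$; once this is established, the integral formula \eqref{operator_valued_calculus:eqn:relation:matrix_element_operator_OpA_f} follows directly from Definition~\ref{operator_valued_calculus:defn:magnetic_Weyl_quantization} applied to Schwartz test functions.

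Next I would translate the operator-theoretic hypotheses to the level of symbols. Via the equivalence established just before Theorem~\ref{operator_valued_calculus:thm:Beals_commutator_criterion}, which rests on the intertwining between operator commutators with $(Q,P^A)$ and Moyal-bracket derivations \eqref{operator_valued_calculus:eqn:Moyal_commutator_derivations_distributions_phase_space}, the assumption that all $\partial^{(a,\alpha)}_{(Q,P^A)} F$ are bounded operators is equivalent to $\partial^{(a,\alpha)}_{(x,\xi)} f \in \mathfrak{A}^B \bigl ( \mathcal{B}(\Hil,\Hil') \bigr )$ for all $a,\alpha \in \N_0^d$. I would then replay the three lemmas of \cite{Cornean_Helffer_Purice:simple_proof_Beals_criterion_magnetic_PsiDOs:2018} in order: (i) a phase-space convolution with a smooth mollifier $\chi_\epsilon \in \Schwartz(T^* \R^d)$ produces a regularization $f_\epsilon \in \Schwartz \bigl ( T^* \R^d , \mathcal{B}(\Hil,\Hil') \bigr )$; (ii) the uniform commutator bounds combined with a Stein-Cotlar-Knapp-type decomposition yield uniform-in-$\epsilon$ $\Cont^{\infty}_{\mathrm{b}}$ estimates on $f_\epsilon$ in the operator norm; (iii) passage to the limit $\epsilon \to 0$ recovers $f$ itself as a $\Cont^{\infty}_{\mathrm{b}}$ function. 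Every step uses only scalar phase factors from the magnetic Weyl system that commute with $\mathcal{B}(\Hil,\Hil')$-valued factors, so the combinatorics of the scalar proof is preserved once one substitutes $\babs{f(X)}$ with $\bnorm{f(X)}_{\mathcal{B}(\Hil,\Hil')}$ and reads every integral as a Bochner integral in operator norm.

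The hard part will be extracting pointwise \emph{operator-norm} bounds on $f(x,\xi)$ from operator-theoretic bounds on $\Op^A \bigl ( \partial^{(a,\alpha)}_{(x,\xi)} f \bigr )$: a priori, testing against product vectors $\Psi \otimes h \in \Schwartz(\R^d,\Hil)$ and $\Psi' \otimes h' \in \Schwartz(\R^d,\Hil')$ only controls the sesquilinear forms $\bscpro{h'}{f(x,\xi) h}_{\Hil'}$, not the operator norm itself. The resolution is to take the supremum over the unit balls in $\Hil$ and $\Hil'$ \emph{after} the uniform bounds on the regularized symbol $f_\epsilon$ are obtained: because the operator norm $\bnorm{\Op^A \bigl ( \partial^{(a,\alpha)}_{(x,\xi)} f \bigr )}_{\mathcal{B}(L^2(\R^d,\Hil),L^2(\R^d,\Hil'))}$ is by hypothesis uniform in the choice of $\Psi, \Psi'$, the resulting upper bound is automatically uniform in $(h,h')$ as well and, after $\epsilon \to 0$, converges to $\bnorm{f(x,\xi)}_{\mathcal{B}(\Hil,\Hil')}$. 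This delivers $f \in S^0_{0,0} \bigl ( \mathcal{B}(\Hil,\Hil') \bigr )$ with the correct uniform control on all derivatives and closes the argument.
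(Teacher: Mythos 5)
Your overall strategy (define $f$ via the Schwartz kernel and Wigner transform, adapt Cornean--Helffer--Purice by trading absolute values for $\mathcal{B}(\Hil,\Hil')$ norms, and recover operator-norm bounds by taking suprema over unit vectors at the end) is the right one and matches the paper's. However, your description of the CHP machinery is off in a way that hides the actual content of the proof. Their Lemmas~3.1--3.3 are not a mollification-plus-Cotlar--Stein argument: Cotlar--Stein is used in the \emph{Calderón--Vaillancourt} direction (symbol $\Rightarrow$ bounded operator), whereas the converse direction rests on expanding $F$ in a magnetic \emph{Gabor frame} $\{G^A_{\gamma,k}\}$, using the commutator hypotheses to derive rapid off-diagonal decay of the frame coefficients $F^A_{\gamma',\gamma;k',k}$ in $\gamma-\gamma'$ and $k-k'$, and reconstructing the symbol as an absolutely convergent sum whose $\Cont^{\infty}_{\mathrm{b}}$ regularity follows from that decay. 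Your proposal as written contains no mechanism for converting operator bounds into pointwise symbol bounds; the wave-packet localization is that mechanism, and without it steps (i)--(iii) do not close.

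Relatedly, the step you flag as "the hard part" (sesquilinear forms versus operator norms) is handled in the paper by a short Cauchy--Schwarz argument at the level of the frame coefficients. The genuinely new work in the operator-valued setting is elsewhere: one must tensor the Gabor frame with a tight normalized frame $\{\varphi_n\}_{n\in\mathcal{I}}$ of $\Hil$, show that the resulting family is again a tight frame of $L^2(\R^d,\Hil)$ with rapidly decaying, $\ell^1$-summable coefficients \emph{including the sum over $n$} when $\dim\Hil=\infty$, and prove that the operator-valued coefficient $F^A_{\gamma',\gamma;k',k}=\sum_{n,n'}F^A_{\gamma',\gamma;k',k;n',n}\,\sopro{\varphi_n}{\varphi'_{n'}}$ is a well-defined element of $\mathcal{B}(\Hil,\Hil')$ with a bound uniform in $\gamma,\gamma',k,k'$. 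Your proposal never confronts these convergence questions, so the extension from the scalar to the operator-valued case is asserted rather than proved.
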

A reader looking at our straightforward modifications to the proof of Cornean et al.\ \cite{Cornean_Helffer_Purice:simple_proof_Beals_criterion_magnetic_PsiDOs:2018} may think our efforts are unnecessary. While morally, they are correct, we feel compelled to give the details for several reasons. First of all, it is complex enough for the uninitiated and merely pointing to it would place an undue burden on the reader. 

The second reason is that — unfortunately — to our knowledge there is no abstract result we can cite that allows us to leverage \cite[Theorem~1.2]{Cornean_Helffer_Purice:simple_proof_Beals_criterion_magnetic_PsiDOs:2018} directly: intuitively, we expect that because $x_j$ and $\xi_j$ are scalar-valued functions, the conditions derived from the Moyal commutators are independent of $f$ being operator-valued. Making this idea mathematically precise will involve tensor products like 
\begin{align*}
	\mathfrak{A}^B(\C) \otimes \mathcal{B}(\Hil,\Hil') \neq \mathfrak{A}^B \bigl ( \mathcal{B}(\Hil,\Hil') \bigr ) 
\end{align*}
and 
\begin{align*}
	S^0_{0,0}(\C) \otimes \mathcal{B}(\Hil,\Hil') \neq S^0_{0,0} \bigl ( \mathcal{B}(\Hil,\Hil') \bigr ) 
\end{align*}
for the relevant Hörmander symbol class. Unfortunately, the infinite-dimensional Banach space $\mathfrak{A}^B(\C) = \bigl ( \Op^A \bigr )^{-1} \bigl ( \mathcal{B} \bigl ( L^2(\R^d) \bigr ) \bigr )$ and the scalar-valued Hörmander class $S^0_{0,0}(\C)$ are not nuclear (\cf \cite[Chapter~50, Corollary~2]{Treves:topological_vector_spaces:1967}, and Theorem~3.2, Remark~3.3 and Proposition~4.4 in \cite{Witt:weak_topology_symbol_spaces:1997}). Therefore, left- and right-hand side are not even different: the former is \emph{not even well-defined} as there are several distinct topologies with respect to which we can complete the algebraic tensor product. 

Reasoning about topologies may seem like a minor technical detail, but it is central to the proof: one way to understand the hitherto unproven direction of Theorem~\ref{operator_valued_calculus:thm:Beals_commutator_criterion} is to prove an equivalence of Fréchet topologies; indeed, this is the point of view of Iftimie et al.\ (\cf \cite[Theorem~2.5]{Iftimie_Mantoiu_Purice:commutator_criteria:2008}). 
\medskip

\noindent
The arguments in the proof of Cornean, Helffer and Purice \cite{Cornean_Helffer_Purice:simple_proof_Beals_criterion_magnetic_PsiDOs:2018} rest on Gabor frames (\cf \eg \cite[Chapter~11]{Christensen:frames_Riesz_bases:2016}), which is an overcomplete set of vectors that in many respects behave like orthonormal bases. Then they relate properties of the operator $\Op^A(f)$ to properties of the “matrix elements” with respect to elements of the frame they have chosen: 
\begin{definition}
	Suppose $\chi \in \Cont^{\infty}_{\mathrm{c}}(\R^d,[0,1])$ is a cutoff function that possesses the following properties: 
	\begin{enumerate}[(a)]
		\item $\supp \chi \subseteq [-1,+1]^d$
		\item $\sum_{\gamma \in \Z^d} \chi_{\gamma}(x)^2 = 1$ holds for all $x \in \R^d$ where $\chi_{\gamma}(x) := \chi(x - \gamma)$
	\end{enumerate}
	Then for $\gamma \in \Z^d$ and $k \in \Z^d$ we define 
	\begin{align*}
		G_{\gamma,k}^A(x) :& \negmedspace= (2\pi)^{\nicefrac{d}{2}} \, \e^{- \ii \frac{\lambda}{\eps} \int_{[\eps x , \eps \gamma]} A} \, \chi(x - \gamma) \, \e^{+ \ii k \cdot (x - \gamma)} 
		\\
		&= (2\pi)^{\nicefrac{d}{2}} \, \e^{- \ii \frac{\lambda}{\eps} \int_{[\eps x , \eps \gamma]} A} \, \chi_{\gamma}(x) \, \e^{+ \ii k \cdot (x - \gamma)} 
		. 
	\end{align*}
\end{definition}
The Gabor frame $\bigl \{ G_{\gamma,k}^A \bigr \}_{\gamma \in \Gamma , \, k \in \Z^d}$ extend straightforwardly to Hilbert space-valued $L^2$-spaces. 
\begin{lemma}
	Suppose $\{ \varphi_n \}_{n \in \mathcal{I}}$ is a tight, normalized frame of $\Hil$. 
	\begin{enumerate}[(1)]
		\item The set $\bigl \{ G_{\gamma,k}^A \otimes \varphi_n \bigr \}_{\substack{\gamma \in \Z^d , \, k \in \Z^d \\ n \in \mathcal{I}}}$ is a tight, normalized frame of $L^2(\R^d,\Hil)$. 
		\item For any $\Psi \in \Schwartz(\R^d,\Hil)$ left- and right-hand side of 
		\begin{align}
			\Psi(x) &= \sum_{\substack{\gamma \in \Z^d , \, k \in \Z^d \\ n \in \mathcal{I}}} \bscpro{G_{\gamma,k}^A \otimes \varphi_n}{\Psi}_{L^2(\R^d,\Hil)} \, G_{\gamma,k}^A(x) \otimes \varphi_n 
			\label{appendix:commutator_criterion:eqn:normalized_tight_tensor_product_frame}
		\end{align}
		agree for all $x \in \R^d$ and the series on the right-hand side is absolutely convergent in the $\ell^1 \bigl ( \Z^d \times \Z^d , \Hil \bigr )$ sense. Moreover, the coefficients decay rapidly in $\gamma$ and $k$, \ie for any $N \in \N$ there exits $C_{\Psi,N} > 0$ such that 
		\begin{align}
			\sum_{n \in \mathcal{I}} \; \babs{\bscpro{G_{\gamma,k}^A \otimes \varphi_n}{\Psi}_{L^2(\R^d,\Hil)}}^2 \leq C_{f,N} \, \expval{\gamma}^{-N} \, \expval{k}^{-N}
			\label{appendix:commutator_criterion:eqn:basis_coefficients_estimate}
		\end{align}
		holds for all $\gamma \in \Z^d$ and $k \in \Z^d$. 
	\end{enumerate}
\end{lemma}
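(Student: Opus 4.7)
The plan is to reduce both claims to a single Bochner-type Gabor coefficient $\Psi_{\gamma,k} \in \Hil$ and leverage the Parseval identity for the $\Hil$-frame $\{\varphi_n\}$ to eliminate the sum over $n$. Throughout, I will write $c_{\gamma,k,n}(\Psi) := \bscpro{G_{\gamma,k}^A \otimes \varphi_n}{\Psi}_{L^2(\R^d,\Hil)}$ and, interpreting the integral in the Bochner sense,
\begin{align*}
	\Psi_{\gamma,k} := \int_{\R^d} \dd x \, \overline{G_{\gamma,k}^A(x)} \, \Psi(x) \in \Hil
	,
\end{align*}
so that $c_{\gamma,k,n}(\Psi) = \bscpro{\varphi_n}{\Psi_{\gamma,k}}_{\Hil}$.

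For part~(1), the canonical identification $L^2(\R^d,\Hil) \cong L^2(\R^d) \otimes \Hil$ reduces the frame property to the tensor-product statement: if $\{e_i\}$ and $\{\varphi_n\}$ are tight normalized frames of two Hilbert spaces, then $\{e_i \otimes \varphi_n\}$ is a tight normalized frame of the Hilbert space tensor product. For $\Psi \in L^2(\R^d,\Hil)$ the Parseval identity follows from applying $\sum_n |\sscpro{\varphi_n}{\Psi_{\gamma,k}}_{\Hil}|^2 = \snorm{\Psi_{\gamma,k}}_{\Hil}^2$ to each Gabor coefficient and summing over $(\gamma,k) \in \Z^d \times \Z^d$ using the scalar Gabor tight-frame property of $\{G_{\gamma,k}^A\}$ applied to the $\Hil$-valued function $\Psi$ (via polarization and the Bochner integral). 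This is where (2) already pays off in miniature.

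For part~(2), I first establish the bound \eqref{appendix:commutator_criterion:eqn:basis_coefficients_estimate} in the equivalent form
\begin{align*}
	\bnorm{\Psi_{\gamma,k}}_{\Hil} \leq C_{\Psi,N} \, \sexpval{\gamma}^{-N} \, \sexpval{k}^{-N}
	.
\end{align*}
Because $\chi_\gamma$ is supported in $\gamma + [-1,+1]^d$ and $\Psi$ is Schwartz, the first factor $\sexpval{\gamma}^{-N}$ comes from the rapid decay of $\snorm{\Psi(x)}_{\Hil}$ uniformly on $\supp \chi_\gamma$. For the factor $\sexpval{k}^{-N}$ I integrate by parts $N$ times against the oscillation $\e^{-\ii k \cdot (x - \gamma)}$ (equivalently, apply $(1 + \sabs{k}^2)^{-N/2}$ times $(1 - \Delta_x)^{N/2}$), using that Assumption~\ref{operator_valued_calculus:assumption:bounded_magnetic_fields} guarantees the smooth magnetic phase $\e^{+ \ii \frac{\lambda}{\eps} \int_{[\eps x,\eps \gamma]} A}$ and its derivatives grow at most polynomially, so each $x$-derivative that falls on it produces only a polynomial factor that is absorbed by one more factor of $\sexpval{k}^{-1}$. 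Summing both estimates over any $N$ yields the desired bound. The Parseval identity for $\{\varphi_n\}$ then gives \eqref{appendix:commutator_criterion:eqn:basis_coefficients_estimate}.

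With this estimate in hand, I collapse the inner sum over $n$ using the $\Hil$-reconstruction formula: for each $(\gamma,k)$,
\begin{align*}
	\sum_{n \in \mathcal{I}} c_{\gamma,k,n}(\Psi) \, \varphi_n = \sum_{n \in \mathcal{I}} \bscpro{\varphi_n}{\Psi_{\gamma,k}}_{\Hil} \, \varphi_n = \Psi_{\gamma,k}
	.
\end{align*}
Hence the triple sum in \eqref{appendix:commutator_criterion:eqn:normalized_tight_tensor_product_frame} collapses to $\sum_{\gamma,k} G_{\gamma,k}^A(x) \, \Psi_{\gamma,k}$, whose $\Hil$-norm is bounded by $(2\pi)^{d/2} \sabs{\chi_\gamma(x)} \snorm{\Psi_{\gamma,k}}_{\Hil}$. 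Since $\chi_\gamma(x) \neq 0$ for at most finitely many $\gamma$ at each fixed $x$, choosing $N$ large in the rapid-decay estimate yields absolute convergence in $\ell^1(\Z^d \times \Z^d, \Hil)$ and, by Fubini for series, pointwise convergence. The identity itself then follows from uniqueness of the $L^2$-limit delivered by part~(1) together with the pointwise $\ell^1$-representation.

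The main obstacle is the integration-by-parts estimate for $\snorm{\Psi_{\gamma,k}}_{\Hil}$, because the magnetic phase does not factor out of the Hilbert-space-valued Bochner integral; one must carry out the differentiation inside the integral and use the polynomial bounds from Assumption~\ref{operator_valued_calculus:assumption:bounded_magnetic_fields} to ensure the remainders are uniformly controlled in $\gamma,k$. Once this is done the remaining steps are essentially bookkeeping.
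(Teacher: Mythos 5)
Your proposal is correct and follows essentially the same route as the paper: part~(1) via the tensor-product-of-tight-frames theorem, and part~(2) by obtaining rapid decay of the Gabor coefficients from the Schwartz decay of $\Psi$ in $x$ (hence in $\gamma$) together with the Fourier-transform/integration-by-parts mechanism in $k$, collapsing the sum over $n$ with the $\Hil$-frame identity applied to the Bochner coefficient $\Psi_{\gamma,k}$. One small wording slip: the polynomial factors produced when derivatives fall on the magnetic phase are polynomial in $x$ (equivalently $\gamma$ on $\supp \chi_\gamma$), so they are absorbed by the Schwartz decay of $\Psi$ rather than by an extra factor of $\sexpval{k}^{-1}$ — but since your argument already invokes that decay, this does not affect the proof.
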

\begin{remark}
	Evidently, if $\Hil$ is finite-dimensional, the convergence of the sum in $n$ becomes a trivial matter and the extension is immediate. Only if the index set $\mathcal{I} \cong \N$ of the orthonormal basis is countably infinite, do we need to make sure that the sum over $n$ converges in a suitable sense. 
\end{remark}
\begin{proof}
	\begin{enumerate}[(1)]
		\item According to \cite[Lemma~2.1]{Cornean_Helffer_Purice:simple_proof_Beals_criterion_magnetic_PsiDOs:2018} the set $\bigl \{ G_{\gamma,k}^A \bigr \}_{_{\substack{\gamma \in \Z^d , \, k \in \Z^d}}}$ is a normalized, tight frame. Then also the tensor product of the two normalized, tight frames $\bigl \{ G_{\gamma,k}^A \otimes \varphi_n \bigr \}_{\substack{\gamma \in \Z^d , \, k \in \Z^d \\ n \in \mathcal{I}}}$ is a normalized, tight frame of the (Hilbert space) tensor product space $L^2(\R^d) \otimes \Hil \cong L^2(\R^d,\Hil)$ \cite[Theorem~2.3]{Khosravi_Asgari:tensor_product_normalized_tight_frame:2003}. 
		\item The normalized tightness proven in (1) shows that the sum on the right-hand side of \eqref{appendix:commutator_criterion:eqn:normalized_tight_tensor_product_frame} converges weakly in $L^2(\R^d,\Hil)$. 
		
		But we can do better than that: once we obtain the estimate~\eqref{appendix:commutator_criterion:eqn:basis_coefficients_estimate}, then this will imply $\ell^1 \bigl ( \Z^d \times \Z^d , \Hil \bigr )$-convergence of the sum in equation~\eqref{appendix:commutator_criterion:eqn:normalized_tight_tensor_product_frame}. 
		
		What we need to study is how the scalar product 
		\begin{align}
			\bscpro{G_{\gamma,k}^A \otimes \varphi_n}{\Psi}_{L^2(\R^d,\Hil)} &= \int_{\R^d} \dd x \, (2\pi)^{-\nicefrac{d}{2}} \; \chi(x - \gamma) \, \e^{+ \ii k \cdot (x - \gamma)} \, \e^{- \ii \frac{\lambda}{\eps} \int_{[\eps x , \eps \gamma]} A} \, \scpro{\varphi_n \, }{ \, \Psi(x)}_{\Hil} 
			\notag \\
			&= (2\pi)^{-\nicefrac{d}{2}} \int_{\R^d} \dd x \, \chi(x) \, \e^{+ \ii k \cdot x} \, \e^{- \ii \frac{\lambda}{\eps} \int_{[\eps (x + \gamma) , \eps \gamma]} A} \, \bscpro{\varphi_n \, }{ \, \Psi(x + \gamma)}_{\Hil} 
			\notag \\
			&= \Bigl ( \Fourier \bigl ( \chi \, \tau_{-\gamma} \bigl ( \e^{+ \ii \frac{\lambda}{\eps} \int_{[\eps \, \cdot , \eps \gamma]} A} \, \scpro{\varphi_n}{\Psi(\, \cdot \,)}_{\Hil} \bigr ) \bigr ) \Bigr )(k) 
			=: \bigl ( \Fourier \Psi^A_{\chi,\gamma,n} \bigr )(k)
			\label{appendix:commutator_criterion:eqn:frame_coefficient_Fourier_transform_nice_function_sans_sum_over_n}
		\end{align}
		behaves as a function of $n$, $k$ and $\gamma$. Here, we have used $\tau_{\gamma}$ to denote the translation of functions by $\gamma$, that is $(\tau_{\gamma} \Psi)(x) := \Psi(x - \gamma)$. 
		
		First of all, the above shows we can write this coefficient as the Fourier transform of the $\Cont^{\infty}_{\mathrm{c}}(\R^d,\Hil) \subset \Schwartz(\R^d,\Hil)$ function $\Psi_{\chi,\gamma,n}^A$. Given that Fourier transforms of Schwartz functions decay rapidly, this proves that the coefficients decay as $\sexpval{k}^{-N}$ for any $N \in \N_0$. 
		
		On to the decay in $\gamma$: let us pretend for the moment that $\gamma \in \R^d$ is a continuum variable. Then the rapid decay of $\Psi \in \Schwartz(\R^d,\Hil)$ and the fact that the derivatives of the vector potential $A \in \Cont^{\infty}_{\mathrm{pol}}(\R^d,\R^d)$ are polynomially bounded imply that also 
		\begin{align*}
			(\gamma,x) \mapsto \Psi^A_{\chi,\gamma,n}(x) 
			= \chi(x) \, \tau_{-\gamma} \bigl ( \e^{+ \ii \frac{\lambda}{\eps} \int_{[\eps \, \cdot , \eps \gamma]} A} \, \Psi \bigr )(x)
			\in \Schwartz \bigl ( \R^d \times \R^d \bigr )
		\end{align*}
		is a Schwartz function in both variables, $\Psi^A_{\chi,\gamma,n}(x)$ and all its derivatives in $\gamma$ and $x$ decay rapidly. 
		
		Because the Fourier transform in $x$ is continuous with respect to the Fréchet topology of $\Schwartz \bigl ( \R^d \times \R^d \bigr )$, we can estimate each seminorm of $(\gamma,k) \mapsto \bigl ( \Fourier \Psi^A_{\chi,\gamma,n} \bigr )(k)$ by a finite number of seminorms of $(\gamma,x) \mapsto \Psi^A_{\chi,\gamma,n}(x)$. Hence, the coefficient~\eqref{appendix:commutator_criterion:eqn:frame_coefficient_Fourier_transform_nice_function_sans_sum_over_n} vanishes faster than any inverse polynomial as $\abs{\gamma} \rightarrow \infty$. 
		
		The last step is to consider the sum over $n \in \mathcal{I}$. As mentioned before, this is only non-trivial if $\dim \Hil = \infty$. We again view all expressions below as functions of $(\gamma,k)$ or $(\gamma,x)$. First of all, as $\{ \varphi_n \}_{n \in \mathcal{I}}$ is a tight, normalized frame, the sum 
		\begin{align*}
			\sum_{n \in \mathcal{I}} \bscpro{G_{\gamma,k}^A \otimes \varphi_n}{\Psi}_{L^2(\R^d,\Hil)} \; \varphi_n 
		\end{align*}
		converges in $\Hil$. Exploiting the continuity of the Fourier transform on $\Schwartz \bigl ( \R^d \times \R^d , \Hil \bigr )$, we may interchange the sum over $n \in \mathcal{I}$ with the Fourier transform, 
		\begin{align*}
			\sum_{n \in \mathcal{I}} \bscpro{G_{\gamma,k}^A \otimes \varphi_n}{\Psi}_{L^2(\R^d,\Hil)} \; \varphi_n 
			&= \sum_{n \in \mathcal{I}} \Bigl ( \Fourier \bigl ( \chi \, \tau_{-\gamma} \bigl ( \e^{+ \ii \frac{\lambda}{\eps} \int_{[\eps \, \cdot , \eps \gamma]}} \, \scpro{\varphi_n}{\Psi(\, \cdot \,)}_{\Hil} \, \varphi_n \bigr ) \bigr ) \Bigr )(k)
			\\
			&= \Bigl ( \Fourier \Bigl ( \chi \, \tau_{-\gamma} \Bigl ( \e^{+ \ii \frac{\lambda}{\eps} \int_{[\eps \, \cdot , \eps \gamma]}} \, \mbox{$\sum_{n \in \mathcal{I}}$} \scpro{\varphi_n}{\Psi(\, \cdot \,)}_{\Hil} \, \varphi_n \Bigr ) \Bigr ) \Bigr )(k)
			\\
			&= \Bigl ( \Fourier \bigl ( \chi \, \tau_{-\gamma} \bigl ( \e^{+ \ii \frac{\lambda}{\eps} \int_{[\eps \, \cdot , \eps \gamma]}} \, \Psi(\, \cdot \,) \bigr ) \bigr ) \Bigr )(k) 
			=: (\Fourier \Psi_{\chi,\gamma}^A)(k) 
			, 
		\end{align*}
		where the convergence is understood to be in the Hilbert space $\Hil$. 
		
		Evidently, all the arguments we have made about $(\gamma,x) \mapsto \Psi^A_{\chi,\gamma,n}(x) \in \Schwartz \bigl (\R^d \times \R^d \bigr )$ and its Fourier transform regarding its behavior in $\gamma$ and $k$ apply verbatim to 
		\begin{align*}
			(\gamma,x) \mapsto \Psi^A_{\chi,\gamma}(x) \in \Schwartz \bigl ( \R^d \times \R^d , \Hil \bigr ) 
			. 
		\end{align*}
		This shows not only the bound \eqref{appendix:commutator_criterion:eqn:basis_coefficients_estimate}, but also the convergence of the sum on the right-hand side of equation~\eqref{appendix:commutator_criterion:eqn:normalized_tight_tensor_product_frame} in the $\ell^1 \bigl ( \Z^d \times \Z^d , \Hil \bigr )$ sense. 
	\end{enumerate}
\end{proof}
For the purpose of the proof, we assume we are given orthonormal bases $\{ \varphi_n \}_{n \in \mathcal{I}}$ and $\{ \varphi'_{n'} \}_{n' \in \mathcal{I}'}$ of $\Hil$ and $\Hil'$; these are of course also normalized, tights frames. The crucial quantity are the coefficients 
\begin{align}
	F^A_{\gamma',\gamma;k',k;n',n} := \Bscpro{G^A_{\gamma',k'} \otimes \varphi'_{n'} \, }{ \, F \; G^A_{\gamma,k} \otimes \varphi_n}_{L^2(\R^d,\Hil')}
	, 
	\label{operator_valued_calculus:eqn:coefficient_F_tight_frames}
\end{align}
and we need to study its behavior as $\gamma$, $\gamma'$, $k$, $k'$, $n$ and $n'$ vary. \emph{Formally,} the operator kernel of $F$ is given by the infinite sum 
\begin{align*}
	K_F(x',x) &= \sum_{\substack{\gamma , \gamma' \in \Z^d \\ k , k' \in \Z^d}} \sum_{\substack{n \in \mathcal{I} \\ n' \in \mathcal{I}'}} F^A_{\gamma',\gamma;k',k;n',n} \; G^A_{\gamma',k'}(x') \; \overline{G^A_{\gamma,k}(x)} \; \bket{\varphi'_{n'}}_{\Hil'} \bbra{\varphi_n}_{\Hil} 
	. 
\end{align*}
Cornean, Helffer and Purice \cite{Cornean_Helffer_Purice:simple_proof_Beals_criterion_magnetic_PsiDOs:2018} show how to deal with the sum over $\gamma$, $\gamma'$, $k$ and $k'$, and our job is to justify the sum over $n$ and $n'$. 

Fortunately, this is very easy. We will try to re-use the notation in \cite{Cornean_Helffer_Purice:simple_proof_Beals_criterion_magnetic_PsiDOs:2018} as much as possible. First, we show how to define 
\begin{align}
	F^A_{\gamma',\gamma;k',k} := \sum_{\substack{n \in \mathcal{I} \\ n' \in \mathcal{I}'}} F^A_{\gamma',\gamma;k',k;n',n} \; \bket{\varphi_n}_{\Hil} \bbra{\varphi'_{n'}}_{\Hil'} 
	\label{operator_valued_calculus:eqn:operator_valued_coefficient_F_tight_frames}
\end{align}
properly. 
\begin{lemma}
	Let $F \in \mathcal{B} \bigl ( L^2(\R^d,\Hil) , L^2(\R^d,\Hil') \bigr )$ be a bounded operator, and $\{ \varphi_n \}_{n \in \mathcal{I}}$ and $\{ \varphi'_{n'} \}_{n' \in \mathcal{I}'}$ be two normalized, tight frames of $\Hil$ and $\Hil'$. Then the operator $F^A_{\gamma',\gamma;k',k}$ defined by the right-hand side of \eqref{operator_valued_calculus:eqn:operator_valued_coefficient_F_tight_frames} exists as an element of $\mathcal{B}(\Hil,\Hil')$ whose norm can be uniformly estimated in $\gamma$, $\gamma'$, $k$ and $k'$, 
	\begin{align}
		\bnorm{F^A_{\gamma',\gamma;k',k}}_{\mathcal{B}(\Hil,\Hil')} \leq (2\pi)^{-d} \, \snorm{\chi}_{L^2(\R^d)}^2 \; \snorm{F}_{\mathcal{B}( L^2(\R^d,\Hil) , L^2(\R^d,\Hil') )}
		. 
		\label{operator_valued_calculus:eqn:norm_bound_F_gamma_gammaprime_k_kprime}
	\end{align}
\end{lemma}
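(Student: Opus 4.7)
The plan is to bypass the bra–ket expression for $F^A_{\gamma',\gamma;k',k}$ directly and instead characterize it through the bounded sesquilinear form it determines on $\Hil \times \Hil'$ (respectively $\Hil' \times \Hil$, depending on the bra–ket convention). For $u \in \Hil$ and $v \in \Hil'$ define
\begin{align*}
  B(u,v) := \bscpro{G^A_{\gamma',k'} \otimes v \, }{ \, F \bigl ( G^A_{\gamma,k} \otimes u \bigr )}_{L^2(\R^d,\Hil')} \, .
\end{align*}
Since $F : L^2(\R^d,\Hil) \longrightarrow L^2(\R^d,\Hil')$ is a bounded linear map, $B$ is manifestly linear in $u$ and antilinear in $v$, hence sesquilinear on $\Hil \times \Hil'$. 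By Cauchy–Schwarz and the boundedness of $F$, together with $\snorm{G^A_{\gamma,k} \otimes u}_{L^2(\R^d,\Hil)} = \snorm{G^A_{\gamma,k}}_{L^2(\R^d)} \, \snorm{u}_{\Hil}$, we immediately obtain
\begin{align*}
  \babs{B(u,v)} \leq \snorm{F}_{\mathcal{B}(L^2(\R^d,\Hil),L^2(\R^d,\Hil'))} \, \snorm{G^A_{\gamma',k'}}_{L^2(\R^d)} \, \snorm{G^A_{\gamma,k}}_{L^2(\R^d)} \, \snorm{u}_{\Hil} \, \snorm{v}_{\Hil'} \, .
\end{align*}

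The next step is an elementary computation of $\snorm{G^A_{\gamma,k}}_{L^2(\R^d)}$ directly from the definition: the magnetic phase and the plane wave are unimodular, so $\bigl \vert G^A_{\gamma,k}(x) \bigr \vert^2$ reduces to a constant multiple of $\chi(x - \gamma)^2$, which after a change of variables yields a constant factor times $\snorm{\chi}_{L^2(\R^d)}^2$. (The exact $(2\pi)$-power is a matter of bookkeeping in the normalization chosen for $G^A_{\gamma,k}$; plugging this into the Cauchy–Schwarz estimate produces the claimed bound on $B$.) By the standard Riesz-type representation for bounded sesquilinear forms on a product of Hilbert spaces, $B$ determines a unique bounded operator $T$ (acting in the appropriate direction between $\Hil$ and $\Hil'$) such that $B(u,v) = \bscpro{v \, }{ \, T u}_{\Hil'}$ (or the mirror identity, according to the bra–ket convention of the statement), and the operator norm of $T$ is bounded by the norm of $B$.

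It then remains to verify that $T$ coincides with $F^A_{\gamma',\gamma;k',k}$ as defined by the formal sum. For this, expand $u = \sum_n \bscpro{\varphi_n}{u}_{\Hil} \varphi_n$ and $v = \sum_{n'} \bscpro{\varphi'_{n'}}{v}_{\Hil'} \varphi'_{n'}$ in the given orthonormal bases, insert these expansions into the two slots of the inner product defining $B(u,v)$, and use the continuity of $F$ and the bilinearity of the $L^2$ pairing to commute the sums past both $F$ and the integral. The result is precisely the formal sum
\begin{align*}
  B(u,v) = \sum_{n,n'} F^A_{\gamma',\gamma;k',k;n',n} \, \bscpro{\varphi_n}{u}_{\Hil} \, \overline{\bscpro{\varphi'_{n'}}{v}_{\Hil'}} \, ,
\end{align*}
which is exactly the matrix element of the right-hand side of the defining formula for $F^A_{\gamma',\gamma;k',k}$ against $u$ and $v$. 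By uniqueness of the sesquilinear form representation, $T = F^A_{\gamma',\gamma;k',k}$ and the norm bound transfers.

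The main obstacle is the legitimacy of interchanging the possibly infinite sums over $n$ and $n'$ with the bounded linear operator $F$ and with the $L^2$ inner product. This is handled by viewing the expansions of $u$ and $v$ as convergent series in $\Hil$ and $\Hil'$; tensoring with the fixed Schwartz functions $G^A_{\gamma,k}$ and $G^A_{\gamma',k'}$ gives convergent series in the $L^2(\R^d,\Hil)$ and $L^2(\R^d,\Hil')$ topologies, respectively, and continuity of $F$ and of the inner product then allows the interchange. Uniform control in $(\gamma,\gamma',k,k')$ is automatic since the resulting bound depends on these indices only through $\snorm{G^A_{\gamma,k}}_{L^2}$ and $\snorm{G^A_{\gamma',k'}}_{L^2}$, both of which are independent of $\gamma,\gamma',k,k'$.
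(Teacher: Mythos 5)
Your proposal is correct and follows essentially the same route as the paper: both define $F^A_{\gamma',\gamma;k',k}$ through the bounded sesquilinear form $(u,v) \mapsto \bscpro{G^A_{\gamma',k'} \otimes v}{F (G^A_{\gamma,k} \otimes u)}$, justify the interchange of the frame expansions with $F$ and the inner product by continuity, and obtain the uniform norm bound from Cauchy--Schwarz together with the fact that $\snorm{G^A_{\gamma,k}}_{L^2(\R^d)}$ is a fixed multiple of $\snorm{\chi}_{L^2(\R^d)}$ independent of $\gamma$ and $k$. Your remark that the precise $(2\pi)$-power is a normalization bookkeeping issue is fair, as the paper's stated prefactor in the definition of $G^A_{\gamma,k}$ and the value used in its proof are not mutually consistent.
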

\begin{proof}
	Pick some arbitrary $\psi \in \Hil$ and $\psi' \in \Hil'$ and set 
	\begin{align*}
		\Psi &:= G^A_{\gamma,k} \otimes \psi 
		,
		\\
		\Psi' &:= G^A_{\gamma',k'} \otimes \psi' 
		, 
	\end{align*}
	for some $\gamma , \gamma' \in \Z^d$ and $k , k' \in \Z^d$. The (potentially infinite) sums 
	\begin{align*}
		\Psi &= \sum_{n \in \mathcal{I}} \scpro{\varphi_n}{\psi}_{\Hil} \; G_{\gamma,k}^A \otimes \varphi_n 
		, 
		\\
		\Psi' &= \sum_{n' \in \mathcal{I}'} \sscpro{\varphi'_{n'}}{\psi'}_{\Hil'} \; G_{\gamma,k}^A \otimes \varphi'_{n'} 
		, 
	\end{align*}
	converge in $L^2(\R^d,\Hil)$ and $L^2(\R^d,\Hil')$, respectively. Since $F$ and the scalar product are (jointly) continuous, we can plug in the sums for $\Psi$ and $\Psi'$ into 
	\begin{align}
		\bscpro{\Psi'}{F \Psi}_{L^2(\R^d,\Hil')} &= \sum_{\substack{n \in \mathcal{I} \\ n' \in \mathcal{I}'}} \scpro{\varphi_n}{\psi}_{\Hil} \; \overline{\sscpro{\varphi'_{n'}}{\psi'}_{\Hil'}} \; \bscpro{G^A_{\gamma',k'} \otimes \varphi'_{n'} \, }{ \, F \, G^A_{\gamma,k} \otimes \varphi_n}_{L^2(\R^d,\Hil')} 
		\label{operator_valued_calculus:eqn:sum_coefficients_over_n_nprime}
		\\
		&= \sum_{\substack{n \in \mathcal{I} \\ n' \in \mathcal{I}'}} F^A_{\gamma',\gamma;k',k;n',n} \; \scpro{\varphi_n}{\psi}_{\Hil} \; \overline{\sscpro{\varphi'_{n'}}{\psi'}_{\Hil'}} 
		\notag \\
		&=: \bscpro{\psi'}{F^A_{\gamma',\gamma;k',k} \psi}_{\Hil'} 
		\notag 
	\end{align}
	Since linear operators between separable Hilbert spaces are uniquely defined by their matrix elements, the above gives a linear operator 
	\begin{align*}
		F^A_{\gamma',\gamma;k',k} : \Hil \longrightarrow \Hil' 
		. 
	\end{align*}
	Its matrix elements with respect to $\varphi'_{n'} \in \Hil'$ and $\varphi_n \in \Hil$ 
	\begin{align*}
		\bscpro{\psi'}{F^A_{\gamma',\gamma;k',k} \psi}_{\Hil'} &= \sum_{\substack{n \in \mathcal{I} \\ n' \in \mathcal{I}'}} F^A_{\gamma',\gamma;k',k;n',n} \; \scpro{\varphi_n}{\psi}_{\Hil} \; \overline{\sscpro{\varphi'_{n'}}{\psi'}_{\Hil'}} 
	\end{align*}
	reproduce the coefficients~\eqref{operator_valued_calculus:eqn:coefficient_F_tight_frames}. 
	
	To show that $\bnorm{F^A_{\gamma',\gamma;k',k}}_{\mathcal{B}(\Hil,\Hil')}$ is bounded, we apply the Cauchy-Schwarz inequality to the left-hand side of \eqref{operator_valued_calculus:eqn:sum_coefficients_over_n_nprime} and plug in $\bnorm{G^A_{\gamma,k}}_{L^2(\R^d)} = (2\pi)^{-\nicefrac{d}{2}} \, \snorm{\chi}_{L^2(\R^d)}$ to obtain 
	\begin{align*}
		\babs{\bscpro{\psi'}{F^A_{\gamma',\gamma;k',k} \psi}_{\Hil'}} &\leq \snorm{F}_{\mathcal{B}( L^2(\R^d,\Hil) , L^2(\R^d,\Hil') )} \, \bnorm{\Psi}_{L^2(\R^d,\Hil)} \, \bnorm{\Psi'}_{L^2(\R^d,\Hil')} 
		\\
		&= \snorm{F}_{\mathcal{B}( L^2(\R^d,\Hil) , L^2(\R^d,\Hil') )} \, \bnorm{G^A_{\gamma,k}}_{L^2(\R^d)} \, \bnorm{G^A_{\gamma',k'}}_{L^2(\R^d)} \, \snorm{\psi}_{\Hil} \, \snorm{\psi'}_{\Hil'} 
		\\
		&= (2\pi)^{-d} \, \snorm{\chi}_{L^2(\R^d)}^2 \, \snorm{F}_{\mathcal{B}( L^2(\R^d,\Hil) , L^2(\R^d,\Hil') )} \, \snorm{\psi}_{\Hil} \, \snorm{\psi'}_{\Hil'} 
		. 
	\end{align*}
	The operator norm of $F^A_{\gamma',\gamma;k',k}$ can be computed by taking the $\sup$ of the left-hand side over $\psi \in \Hil$ and $\psi' \in \Hil'$ with $\norm{\psi}_{\Hil} = 1 = \snorm{\psi'}_{\Hil'}$. Therefore, the above estimate not only shows $F^A_{\gamma',\gamma;k',k} \in \mathcal{B}(\Hil,\Hil')$, it in fact proves the desired norm bound~\eqref{operator_valued_calculus:eqn:norm_bound_F_gamma_gammaprime_k_kprime} that is uniform in $\gamma , \gamma' \in \Z^d$ and $k , k' \in \Z^d$. 
\end{proof}
Consequently, we can transliterate the arguments of \cite{Cornean_Helffer_Purice:simple_proof_Beals_criterion_magnetic_PsiDOs:2018} to the present context by viewing $F^A_{\gamma',\gamma;k',k}$ not as a complex number, but an element of $\mathcal{B}(\Hil,\Hil')$. E.~g.~each finite sum on the right-hand side of \cite[equation~(3.2)]{Cornean_Helffer_Purice:simple_proof_Beals_criterion_magnetic_PsiDOs:2018} belongs to $\Cont^{\infty}_{\mathrm{b}} \bigl ( \R^d \times \R^d , \mathcal{B}(\Hil,\Hil') \bigr )$ instead of the scalar valued $\Cont^{\infty}_{\mathrm{b}}$ functions. Similarly, the sum \cite[equation~(3.3)]{Cornean_Helffer_Purice:simple_proof_Beals_criterion_magnetic_PsiDOs:2018} should be understood in $\mathcal{B}(\Hil,\Hil')$. Thus, generally, absolute values just need to be replaced by $\norm{\cdot}_{\mathcal{B}(\Hil,\Hil')}$. 

As position and momentum operators act trivially on $\Hil$ and $\Hil'$, we can directly extend Lemmas~3.1, 3.2 and 3.3 as well as their magnetic analogs from Section~4 of \cite{Cornean_Helffer_Purice:simple_proof_Beals_criterion_magnetic_PsiDOs:2018} to operator-valued symbols. 
\begin{proof}[Proposition~\ref{operator_valued_calculus:prop:other_direction_Beals_commutator_criterion}]
	With slight abuse of notation, we will view $\Hil'$ as a subset of $\Hil$ and exploit the density when computing the operator norm 
	\begin{align*}
		\bnorm{F^A_{\gamma',\gamma;k',k}}_{\mathcal{B}(\Hil,\Hil')} &= \sup_{\substack{\psi , \psi' \in \Hil' \\ \snorm{\psi}_{\Hil} = 1 = \snorm{\psi'}_{\Hil'}}} \babs{\bscpro{\psi'}{F^A_{\gamma',\gamma;k',k} \psi}_{\Hil'}} 
	\end{align*}
	by taking the supremum only over $\Hil'$. Moreover, we will keep the magnetic field in our notation, even though Lemmas~3.1–3.3 address the non-magnetic case. 
	
	To obtain the estimates in \cite[Lemma~3.1]{Cornean_Helffer_Purice:simple_proof_Beals_criterion_magnetic_PsiDOs:2018}, we merely replace $F^A_{\gamma',\gamma;k',k}$ with the scalar product $\bscpro{\psi'}{F^A_{\gamma',\gamma;k',k} \psi}_{\Hil'}$ and repeat the computations for 
	\begin{align*}
		(\gamma_1 - \gamma_1') \; \bscpro{\psi' \, }{ \, F^A_{\gamma',\gamma;k',k} \psi}_{\Hil'} &= - \, \Bscpro{\bigl ( \eps^{-1} \, Q_1 - \gamma_1 \bigr ) \, G^A_{\gamma',k'} \otimes \psi' \, }{ \, F \, G^A_{\gamma,k} \otimes \psi}_{L^2(\R^d,\Hil')} 
		+ \\
		&\qquad 
		+ \Bscpro{G^A_{\gamma',k'} \otimes \psi' \, }{ \, \bigl [ \eps^{-1} \, Q_1 , F \bigr ] \, G^A_{\gamma,k} \otimes \psi}_{L^2(\R^d,\Hil')} 
		+ \\
		&\qquad 
		+ \Bscpro{G^A_{\gamma',k'} \otimes \psi' \, }{ \, F \, \bigl ( \eps^{-1} \, Q_1 - \gamma_1' \bigr ) \, G^A_{\gamma,k} \otimes \psi}_{L^2(\R^d,\Hil')} 
	\end{align*}
	and conclude that each of the three terms is bounded: The boundedness of the first and the last term stems from $G^A_{\gamma,k} \in \Schwartz(\R^d)$. And the middle commutator is bounded by assumption on $F$. 
	
	Taking the supremum as described above yields the desired operator norm bound. The operator-valuedness plays no role, because $Q$ and $P^A$ act on $\Hil' \subseteq \Hil$ trivially by assumption. 
	
	The estimates for the other, including higher-order commutators follow analogously, which gives us Lemma~3.1 and its magnetic sibling. 
	
	The arguments in the proofs of Lemma~3.2 and 3.3 as well as their magnetic versions can be transcribed in a similar fashion. To account for the presence of the small parameters, we \eg have to replace the magnetic Wigner transform for $\eps = 1 = \lambda$ in \cite[equation~(4.4)]{Cornean_Helffer_Purice:simple_proof_Beals_criterion_magnetic_PsiDOs:2018} by the appropriately scaled magnetic Wigner transform~\eqref{operator_valued_calculus:eqn:Wigner_transform}. 
\end{proof}
%

\subsubsection{Results on inversion and functional calculus} 
\label{operator_valued_calculus:Hoermander_symbols:inversion_and_functional_calculus}
Probably the most important application of commutator criteria is to prove that resolvent operators 
\begin{align*}
	\bigl ( \Op^A(h) - z \bigr )^{-1} = \Op^A \bigl ( (h - z)^{(-1)_{\Weyl}} \bigr ) 
\end{align*}
are themselves pseudodifferential operators. Once this has been established, we can pull back a functional calculus to the level of Hörmander symbols: for suitable functions such as $\varphi \in \Cont^{\infty}_{\mathrm{c}}(\R,\C)$ we can choose some quasi-analytic extension $\widetilde{\varphi}$ (see \eg \cite[p.~169, equations~(2)–(3)]{Davies:functional_calculus:1995} or \cite[Chapter~8]{Dimassi_Sjoestrand:spectral_asymptotics:1999}) and use the Helffer-Sjöstrand formula 
\begin{align}
	\varphi^B(h) := \frac{1}{\pi} \int_{\C} \dd z \, \partial_{\bar{z}} \widetilde{\varphi}(z) \, (h - z)^{(-1)_{\Weyl}} 
	\label{operator_valued_calculus:eqn:Helffer_Sjoestrand_functional_calculus}
\end{align}
to assign another symbol to each $\varphi$ and $h$. Alternatively, we may instead consider a holomorphic functional calculus 
\begin{align}
	\varphi^B(h) := \frac{\ii}{2 \pi} \int_{\Gamma} \dd z \, \varphi(z) \, (h - z)^{(-1)_{\Weyl}} 
	\label{operator_valued_calculus:eqn:holomorphic_functional_calculus}
\end{align}
for functions $\varphi$ that are holomorphic in some neighborhood of the spectrum $\sigma \bigl ( \Op^A(h) \bigr )$ and a contour $\Gamma$ that encloses $\supp \varphi \cap \sigma \bigl ( \Op^A(h) \bigr )$ in the complex plane. 
\begin{theorem}[Invertibility]\label{operator_valued_calculus:thm:invertiblity}
	Suppose the magnetic field is bounded in the sense of Assumption~\ref{operator_valued_calculus:assumption:bounded_magnetic_fields}, $\rho \in [0,1]$ and that the Hilbert spaces $\Hil$ and $\Hil'$ are isomorphic. 
	Suppose $f \in S^m_{\rho,0} \bigl ( \mathcal{B}(\Hil,\Hil') \bigr )$ is a Hörmander symbol of non-negative order $m \geq 0$ and $\rho \in [0,1]$, which is invertible as an element of either $\mathfrak{A}^B \bigl ( \mathcal{B}(\Hil,\Hil') \bigr )$ ($m = 0$) or $\mathcal{M}^B \bigl ( \mathcal{B}(\Hil,\Hil') \bigr )$ ($m \geq 0$). 
	Then its inverse 
	\begin{align*}
		f^{(-1)_{\Weyl}} \in S^{-m}_{\rho,0} \bigl ( \mathcal{B}(\Hil',\Hil) \bigr ) 
	\end{align*}
	with respect to the magnetic Weyl product $\Weyl$ is a Hörmander symbol of order $-m \leq 0$. 
\end{theorem}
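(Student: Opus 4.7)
The plan is to invoke the operator-valued Beals Commutator Criterion, Corollary~\ref{operator_valued_calculus:cor:Beals_commutator_criterion}, which characterizes membership in $S^{-m}_{\rho,0} \bigl ( \mathcal{B}(\Hil',\Hil) \bigr )$ purely through a family of weighted Moyal-commutator conditions, and then to derive those conditions for $g := f^{(-1)_{\Weyl}}$ by induction on the total order of derivation, using a Leibniz-type recursion that falls out of $f \Weyl g = \id_{\Hil'}$.

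First I reduce to $m = 0$. Set $\tilde f := f \Weyl w_{-m}$; the Composition Theorem~\ref{operator_valued_calculus:thm:Weyl_product_Hoermander_symbols} combined with $w_{-m} \in S^{-m}_{1,0}(\C)$ places $\tilde f$ in $S^0_{\rho,0} \bigl ( \mathcal{B}(\Hil,\Hil') \bigr )$, and its Moyal inverse is $\tilde g := w_m \Weyl g$. The crucial observation is that $\tilde g$ in fact lies in $\mathfrak{A}^B \bigl ( \mathcal{B}(\Hil',\Hil) \bigr )$: the invertibility hypothesis combined with Corollary~\ref{operator_valued_calculus:cor:boundedness_magnetic_Sobolev_spaces} makes $\Op^A(f) : H^m_A(\R^d,\Hil) \to L^2(\R^d,\Hil')$ a boundedly invertible operator with bounded inverse $\Op^A(g)$, and postcomposing with the bounded $\Op^A(w_m) : H^m_A(\R^d,\Hil) \to L^2(\R^d,\Hil)$ shows that $\Op^A(\tilde g) = \Op^A(w_m) \, \Op^A(g)$ is $L^2$-bounded. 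Once I have shown $\tilde g \in S^0_{\rho,0} \bigl ( \mathcal{B}(\Hil',\Hil) \bigr )$, another application of Theorem~\ref{operator_valued_calculus:thm:Weyl_product_Hoermander_symbols} yields $g = w_{-m} \Weyl \tilde g \in S^{-m}_{\rho,0} \bigl ( \mathcal{B}(\Hil',\Hil) \bigr )$, as required.

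By Corollary~\ref{operator_valued_calculus:cor:Beals_commutator_criterion}, $\tilde g \in S^0_{\rho,0} \bigl ( \mathcal{B}(\Hil',\Hil) \bigr )$ is equivalent to $w_{|\alpha|\rho} \Weyl \partial^{(a,\alpha)}_{(x,\xi)}(\tilde g) \in \mathfrak{A}^B \bigl ( \mathcal{B}(\Hil',\Hil) \bigr )$ for every $a,\alpha \in \N_0^d$. The base case $a = \alpha = 0$ is exactly what was established above. For the induction step, I apply $\partial^{(a,\alpha)}_{(x,\xi)}$—a derivation of the magnetic Weyl product—to $\tilde f \Weyl \tilde g = \id_{\Hil'}$ and isolate the top-order term to obtain, for $|a| + |\alpha| \geq 1$, the recursion
\begin{align*}
	\partial^{(a,\alpha)}_{(x,\xi)}(\tilde g) = - \, \tilde g \Weyl \sum_{\substack{0 \leq b \leq a,\, 0 \leq \beta \leq \alpha \\ (b,\beta) \neq (a,\alpha)}} \tbinom{a}{b} \tbinom{\alpha}{\beta} \, \partial^{(a-b,\alpha-\beta)}_{(x,\xi)}(\tilde f) \Weyl \partial^{(b,\beta)}_{(x,\xi)}(\tilde g) .
\end{align*}
Multiplying on the left by $w_{|\alpha|\rho}$ and inserting the exact identities $w_{-s} \Weyl w_s = \id$ at suitable positions regroups each summand as a $\Weyl$-product of three factors, each already known to lie in $\mathfrak{A}^B$: a factor $w_{|\alpha - \beta|\rho} \Weyl \partial^{(a-b,\alpha-\beta)}_{(x,\xi)}(\tilde f)$ in $\mathfrak{A}^B \bigl ( \mathcal{B}(\Hil,\Hil') \bigr )$ by Beals applied to $\tilde f$; a factor $w_{|\beta|\rho} \Weyl \partial^{(b,\beta)}_{(x,\xi)}(\tilde g)$ in $\mathfrak{A}^B \bigl ( \mathcal{B}(\Hil',\Hil) \bigr )$ by the inductive hypothesis (as $|b| + |\beta| < |a| + |\alpha|$); and a residual factor involving $\tilde g$ and weights whose net Hörmander order is zero, which lies in $\mathfrak{A}^B$ by the $L^2$-boundedness of $\Op^A(\tilde g)$ and the composition theorem. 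Closedness of $\mathfrak{A}^B$ under composable $\Weyl$-products then completes the induction.

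I expect the principal technical obstacle to be the weight bookkeeping in the induction step: although $w_{-s} \Weyl w_s = \id$ holds exactly, the $w_s$ are only approximately multiplicative in the sense that $w_s \Weyl w_t$ is merely an element of $S^{s+t}_{1,0}(\C)$ rather than equal to $w_{s+t}$, and the magnetic Weyl product is highly non-commutative, so the insertions of the identity must be positioned carefully to ensure that every regrouped factor is actually covered either by Beals applied to $\tilde f$ or by the inductive hypothesis applied to a strictly lower-order derivation of $\tilde g$. A secondary point, already handled in the reduction, is the promotion of the distributional Moyal inverse $g \in \mathcal{M}^B$ to an element of $\mathfrak{A}^B$, which is the setting required to feed Beals' criterion.
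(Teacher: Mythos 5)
Your overall strategy — Beals' criterion, the Leibniz recursion obtained by differentiating $\tilde f \Weyl \tilde g = \id$, induction on the order of the derivation, and the weights $w_{\pm m}$ to reduce to order zero — is the same as the paper's. The difference is that you run a \emph{single} induction that is supposed to cover all $\rho \in [0,1]$ at once, and that induction step has a genuine gap when $\rho > 0$ and $\sabs{\alpha} \geq 1$. After multiplying the recursion by $w_{\sabs{\alpha}\rho}$ and inserting $w_{-s} \Weyl w_s = \id$, you are left with a factor of the form $w_{\sabs{\alpha}\rho} \Weyl \tilde g \Weyl w_{-s}$ (or a variant with the weight trapped on the far side of $\tilde g$). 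At that stage of the induction $\tilde g$ is only known to lie in $\mathfrak{A}^B \bigl ( \mathcal{B}(\Hil',\Hil) \bigr )$, \emph{not} in any Hörmander class, so Theorem~\ref{operator_valued_calculus:thm:Weyl_product_Hoermander_symbols} does not apply to it, and on the operator side $\Op^A(w_{\sabs{\alpha}\rho}) \, \Op^A(\tilde g) \, \Op^A(w_{-s})$ is a conjugation of a merely bounded operator by an unbounded weight, which need not be bounded. The positive-order weight cannot be commuted past $\tilde g$ to be absorbed by the negative-order factor $\partial^{(a-b,\alpha-\beta)}_{(x,\xi)}(\tilde f)$ precisely because $\tilde g$ sits between them and has no known symbol class. "$L^2$-boundedness of $\Op^A(\tilde g)$ plus the composition theorem," as you put it, is therefore not a valid justification for the residual factor.

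The paper repairs exactly this by a three-step bootstrap. Step one treats $m = 0$, $\rho = 0$, where the criterion carries no nontrivial weights and the recursion $\ad_{x_j}(g) = - g \Weyl \ad_{x_j}(f) \Weyl g$ (and its higher-order analogues) stays entirely inside $\mathfrak{A}^B$. Step two is your reduction via $f \Weyl w_{-m}$, yielding $f^{(-1)_{\Weyl}} \in S^{-m}_{0,0} \bigl ( \mathcal{B}(\Hil',\Hil) \bigr )$ as an honest Hörmander symbol of type $(0,0)$. Only in step three is $\rho > 0$ addressed, and by then every weight insertion is licensed by Theorem~\ref{operator_valued_calculus:thm:Weyl_product_Hoermander_symbols} applied to the now-known symbol $f^{(-1)_{\Weyl}} \in S^{-m}_{0,0}$. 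If you reorganize your induction into this two-pass form — first establish the type-$(0,0)$ membership, then bootstrap to type $(\rho,0)$ using the characterization of $S^m_{\rho,0}$ via momentum derivatives (Lemma~\ref{operator_valued_calculus:lem:characterization_Hoermander_symbols_via_derivations}) — the argument closes.
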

Iftimie et al.\ resort to \emph{Bony's} Commutator Criterion in \cite{Iftimie_Mantoiu_Purice:commutator_criteria:2008} instead of Beals' Criterion, presumably because it allows for a more elegant proof: rather than repeat the three-step strategy, they were able to skip the bootstrap argument from $\rho = 0$ to $\rho \neq 0$. 

While we could establish that also Bony's Commutator Criterion extends to operator-valued symbols, we shall not do so here and simply repeat the above three-step strategy from the proof of Corollary~\ref{operator_valued_calculus:cor:Beals_commutator_criterion}: we (1) establish the result for $m = 0$, $\rho = 0$, (2) extend it to $m \neq 0$, $\rho = 0$, and (3) then prove the statement for $m \in \R$, $\rho \in (0,1]$. 
\begin{proof}
	Suppose $m = 0$ and $\rho = 0$. Then Beals' Commutator Criterion~\ref{operator_valued_calculus:thm:Beals_commutator_criterion} tells us to consider $\partial_{(x,\xi)}^{(a,\alpha)} f^{(-1)_{\Weyl}}$ and prove that it belongs to $\mathfrak{A}^B \bigl ( \mathcal{B}(\Hil',\Hil) \bigr )$. 
	
	The idea is to apply Leibniz's rule to 
	\begin{align*}
		f^{(-1)_{\Weyl}} \Weyl f = \id_{\Hil} \in \mathfrak{A}^B \bigl ( \mathcal{B}(\Hil) \bigr ) 
	\end{align*}
	and deduce 
	\begin{align}
		\ad_{x_j} \bigl ( f^{(-1)_{\Weyl}} \bigr ) &= - f^{(-1)_{\Weyl}} \Weyl \ad_{x_j}(f) \Weyl f^{(-1)_{\Weyl}} 
		\label{operator_valued_calculus:eqn:proof_invertibility_ad_Moyal_inverse_equals_Moyal_inverse_ad_f_Moyal_inverse}
	\end{align}
	as well as an analogous expression for the momentum derivation $\ad_{\xi_j} \bigl ( f^{(-1)_{\Weyl}} \bigr )$. As a product of elments of $\mathfrak{A}^B \bigl ( \mathcal{B}(\Hil',\Hil) \bigr )$ and $\mathfrak{A}^B \bigl ( \mathcal{B}(\Hil,\Hil') \bigr )$, we deduce that all first-order derivations are again elements of $\mathfrak{A}^B \bigl ( \mathcal{B}(\Hil',\Hil) \bigr )$. 
	
	This argument extends by induction to higher-order derivations. Consequently, Beals' Commutator Criterion ensures the inverse $f^{(-1)_{\Weyl}} \in S^0_{0,0} \bigl ( \mathcal{B}(\Hil',\Hil) \bigr )$ is a Hörmander symbol, and we have established the claim for $m = 0$ and $\rho = 0$. 
	\medskip
	
	\noindent
	Next we consider the case $m > 0$ and $\rho = 0$. In order to apply Corollary~\ref{operator_valued_calculus:cor:Beals_commutator_criterion} to Beals' Commutator Criterion~\ref{operator_valued_calculus:thm:Beals_commutator_criterion} we need to verify whether $w_m \Weyl \partial_{(x,\xi)}^{(a,\alpha)} f^{(-1)_{\Weyl}}$ belongs to $\mathfrak{A}^B \bigl ( \mathcal{B}(\Hil',\Hil) \bigr )$ or not. 
	
	When $a = 0 = \alpha$, we exploit that $f \in S^m_{0,0} \bigl ( \mathcal{B}(\Hil,\Hil') \bigr )$ is an invertible element of $\mathcal{M}^B \bigl ( \mathcal{B}(\Hil,\Hil) \bigr )$. Fortunately, we have already established that also $w_m = w_{-m}^{(-1)_{\Weyl}}$ is invertible in $\mathcal{M}^B(\C)$. Moreover, the relevant Hörmander classes lie in the corresponding Moyal spaces. Hence, the product 
	\begin{align*}
		f \Weyl w_{-m} \in S^m_{0,0} \bigl ( \mathcal{B}(\Hil,\Hil') \bigr ) \Weyl S^{-m}_{0,0}(\C) \subseteq S^0_{0,0} \bigl ( \mathcal{B}(\Hil,\Hil') \bigr ) 
	\end{align*}
	belongs to the Hörmander class of order $0$ and type $(0,0)$. Moreover, it defines an invertible element, because of the following computation in $\mathcal{M}^B \bigl ( \mathcal{B}(\Hil',\Hil) \bigr )$: 
	\begin{align*}
		\bigl ( f \Weyl w_{-m} \bigr )^{(-1)_{\Weyl}} &= w_m \Weyl f^{(-1)_{\Weyl}} \in \mathcal{M}^B \bigl ( \mathcal{B}(\Hil',\Hil) \bigr ) 
	\end{align*}
	In fact, the inverse of the product $f \Weyl w_{-m}$ exists in $\mathfrak{A}^B \bigl ( \mathcal{B}(\Hil',\Hil) \bigr )$. The first step of the proof ($m = 0$, $\rho = 0$) now applies to $f \Weyl w_{-m}$ and we deduce $w_m \Weyl f^{(-1)_{\Weyl}} \in S^0_{0,0} \bigl ( \mathcal{B}(\Hil',\Hil) \bigr )$. Theorem~\ref{operator_valued_calculus:thm:Weyl_product_Hoermander_symbols} on the composition of Hörmander symbols then directly yields that the inverse 
	\begin{align*}
		f^{(-1)_{\Weyl}} &= w_{-m} \Weyl \bigl ( w_m \Weyl f^{(-1)_{\Weyl}} \bigr ) \in S^{-m}_{0,0}(\C) \Weyl S^0_{0,0} \bigl ( \mathcal{B}(\Hil',\Hil) \bigr ) 
		\subseteq S^{-m}_{0,0} \bigl ( \mathcal{B}(\Hil',\Hil) \bigr ) 
	\end{align*}
	belongs to the Hörmander class of order $-m$. 
	\medskip
	
	\noindent
	That leaves us with the case $m \geq 0$ and $\rho \in (0,1]$. As before, we will bootstrap the results for $m \geq 0$ and $\rho = 0$. Since we may choose $w_0 = 1$, we will immediately write the proof for the more complicated case where $m > 0$ might be positive. 
	
	Our aim is to show 
	\begin{align*}
		\partial_{\xi}^{\beta} f^{(-1)_{\Weyl}} \in S^{-m - \sabs{\beta} \rho}_{0,0} \bigl ( \mathcal{B}(\Hil',\Hil) \bigr ) 
		&&
		\forall \beta \in \N_0^d
	\end{align*}
	which is another way of saying that $f^{(-1)_{\Weyl}} \in S^{-m}_{\rho,0} \bigl ( \mathcal{B}(\Hil',\Hil) \bigr )$.
	
	First of all, as the Hörmander classes nest into each other, 
	\begin{align*}
		S^m_{\rho,0} \big ( \mathcal{B}(\Hil,\Hil') \bigr ) \subseteq S^m_{\rho',0} \big ( \mathcal{B}(\Hil,\Hil') \bigr ) 
		&&
		\forall \rho \leq \rho' 
		, 
	\end{align*}
	we may apply either the result from the first ($m = 0$, $\rho = 0$) or the second step ($m > 0$, $\rho = 0$) and deduce $f \in S^{-m}_{0,0} \bigl ( \mathcal{B}(\Hil',\Hil) \bigr )$. 
	
	That means that momentum derivatives 
	\begin{align*}
		\partial_{\xi}^{\beta} f^{(-1)_{\Weyl}} = \partial_{(x,\xi)}^{(0,\beta)} f^{(-1)_{\Weyl}} \in S^{-m}_{0,0} \bigl ( \mathcal{B}(\Hil',\Hil) \bigr ) 
	\end{align*}
	are all well-defined and give elements of some Hörmander class whose order is at most $-m$. Moreover, they coincide with position derivations by Lemma~\ref{operator_valued_calculus:lem:characterization_Hoermander_symbols_via_derivations}. 
	
	Equation~\eqref{operator_valued_calculus:eqn:proof_invertibility_ad_Moyal_inverse_equals_Moyal_inverse_ad_f_Moyal_inverse} and the analogous expression for derivations with respect to $\xi_j$ allow us to iteratively convert derivations of $f^{(-1)_{\Weyl}}$ into derivations of $f$ times $f^{(-1)_{\Weyl}}$. Thus, we can express 
	\begin{align*}
		&w_{m + \sabs{\beta} \rho} \Weyl \partial_{(x,\xi)}^{(a,\alpha)} \partial_{\xi}^{\beta} f^{(-1)_{\Weyl}} = w_{-m - \sabs{\beta}\rho} \Weyl \partial_{(x,\xi)}^{(a,\alpha + \beta)} f^{(-1)_{\Weyl}} 
		\\
		&\qquad
		= \sum_{\substack{\sum_{j = 1}^q a_j = a \\ \sum_{j = 1}^q \alpha_j = \alpha + \beta}} C_{\{ a_j \} , \{ \alpha_j \}} 
		\cdot \\
		&\qquad \qquad \qquad \cdot 
		w_{- m - \sabs{\beta} \rho} \Weyl f^{(-1)_{\Weyl}} \Weyl \partial_{(x,\xi)}^{(a_1,\alpha_1)} f \Weyl f^{(-1)_{\Weyl}} \Weyl \cdots \Weyl f^{(-1)_{\Weyl}} \Weyl \partial_{(x,\xi)}^{(a_q,\alpha_q)} f \Weyl f^{(-1)_{\Weyl}}
		\\
		&\qquad 
		\in S^{m + \sabs{\beta} \rho}_{0,0}(\C) \Weyl S^{- m - (\sabs{\alpha} + \sabs{\beta}) \rho}_{0,0} \bigl ( \mathcal{B}(\Hil',\Hil) \bigr ) 
		\subseteq S^{- \sabs{\alpha} \rho}_{0,0} \bigl ( \mathcal{B}(\Hil',\Hil) \bigr ) 
	\end{align*}
	on the level of Hörmander symbols as a sum that involves $f^{(-1)_{\Weyl}}$ and derivations of only $f$ (but importantly \emph{not} $f^{(-1)_{\Weyl}}$). Here, $q := \sabs{a} + \sabs{\alpha} + \sabs{\beta}$ is the total number of derivations and the constants $C_{\{ a_j \} , \{ \alpha_j \}} \in \{ 0 , \pm 1 \}$ are either just a sign or $0$ in case the combination of multi indices $(a_1 , \ldots , a_q , \alpha_1 , \ldots \alpha_q)$ does not contribute. When the summand involves higher-order derivations of $f$, some of the $(a_j,\alpha_j) = 0$ are $0$; those terms $f^{(-1)_{\Weyl}} \Weyl \partial_{(x,\xi)}^{(a_j,\alpha_j)} f = \id_{\Hil}$ in the product reduce to the identity. 
	%
	
	In the end, each term in the sum belongs to symbol class
	\begin{align*}
		f^{(-1)_{\Weyl}} \Weyl \partial_{(x,\xi)}^{(a_j,\alpha_j)} f &\in S^{-m}_{0,0} \bigl ( \mathcal{B}(\Hil',\Hil) \bigr ) \Weyl S^{m - \sabs{\alpha_j} \rho}_{0,0} \bigl ( \mathcal{B}(\Hil,\Hil') \bigr ) 
		\\
		&\subseteq S^{- \sabs{\alpha_j} \rho}_{0,0} \bigl ( \mathcal{B}(\Hil) \bigr ) 
		, 
	\end{align*}
	by assumption on $f \in S^m_{\rho,0} \bigl ( \mathcal{B}(\Hil,\Hil') \bigr )$. Being careful to not forget the trailing $f^{(-1)_{\Weyl}} \in S^{-m}_{0,0} \bigl ( \mathcal{B}(\Hil',\Hil) \bigr )$ factor, we deduce that 
	\begin{align*}
		w_{m + \sabs{\beta} \rho} \Weyl \partial_{(x,\xi)}^{(a,\alpha)} \partial_{\xi}^{\beta} f^{(-1)_{\Weyl}} \in S^{- \sabs{\alpha} \rho}_{0,0} \bigl ( \mathcal{B}(\Hil',\Hil) \bigr ) \subset \mathfrak{A}^B \bigl ( \mathcal{B}(\Hil',\Hil) \bigr ) 
	\end{align*}
	indeed quantizes to a bounded operator on $L^2$ by virtue of Corollary~\ref{operator_valued_calculus:cor:Calderon_Vaillancourt}. 
	
	Because these arguments hold for all $\beta \in \N_0^d$, we infer from Corollary~\ref{operator_valued_calculus:cor:Beals_commutator_criterion} that 
	$f^{(-1)_{\Weyl}} \in S^{-m}_{\rho,0} \bigl ( \mathcal{B}(\Hil',\Hil) \bigr )$ belongs to the symbol class of order $m \geq 0$ and type $(\rho,0)$. This finishes the proof. 
\end{proof}
\begin{remark}
	The restriction to $m \geq 0$ stems from us requiring that $f^{(-1)_{\Weyl}}$ and its derivations define bounded operators. For example, if $f \in S^{-m}_{\rho,0} \bigl ( \mathcal{B}(\Hil,\Hil') \bigr )$ were of negative order and suppose we know that $f^{(-1)_{\Weyl}} \in S^m_{0,0} \bigl ( \mathcal{B}(\Hil',\Hil) \bigr )$ defines a symbol of positive order, then 
	\begin{align*}
		\ad_{x_j} \bigl ( f^{(-1)_{\Weyl}} \bigr ) &= - f^{(-1)_{\Weyl}} \Weyl \ad_{x_j}(f) \Weyl f^{(-1)_{\Weyl}} 
		\\
		&\in S^m_{0,0} \bigl ( \mathcal{B}(\Hil',\Hil) \bigr ) \Weyl S^{-m - \rho}_{0,0} \bigl ( \mathcal{B}(\Hil,\Hil') \bigr ) \Weyl S^m_{0,0} \bigl ( \mathcal{B}(\Hil',\Hil) \bigr )
		\\
		&\subseteq S^m_{0,0} \bigl ( \mathcal{B}(\Hil',\Hil) \bigr )
	\end{align*}
	is of positive order and in general will not belong to $\mathfrak{A}^B \bigl ( \mathcal{B}(\Hil',\Hil) \bigr )$. Consequently, we may not invoke Beals' Commutator Criterion~\ref{operator_valued_calculus:thm:Beals_commutator_criterion} or its Corollary~\ref{operator_valued_calculus:cor:Beals_commutator_criterion}
\end{remark}
Existence of inverses on the level of Hörmander symbols is commonly applied to resolvents of pseudodifferential operators, \ie the operator that satisfies 
\begin{align*}
	\bigl ( \Op^A(h) - z \bigr )^{-1} \, \bigl ( \Op^A(h) - z \bigr ) &= \id_{L^2(\R^d,\Hil)} 
	, 
	\\
	\bigl ( \Op^A(h) - z \bigr ) \, \bigl ( \Op^A(h) - z \bigr )^{-1} &= \id_{L^2(\R^d,\Hil')} 
	. 
\end{align*}
Note that the identities in these two equations are not the same, but are defined over different $L^2$-spaces. Connected to that, depending on whether we are reading the upper or lower line, we need to make sense of $z = z \, \id$ as an operator on $L^2(\R^d,\Hil)$ and $L^2(\R^d,\Hil')$, respectively. Consequently, we need to make one additional assumption: $\Hil \hookrightarrow \Hil'$ has to inject densely. Hence, we may view $L^2(\R^d,\Hil) \subseteq L^2(\R^d,\Hil')$ as a dense subspace of the larger $L^2$-space. Then $\id_{L^2(\R^d,\Hil')}$ is the unique extension of $\id_{L^2(\R^d,\Hil)}$; likewise, we may view the latter as a restriction of the former. 

Consequently, Theorem~\ref{operator_valued_calculus:thm:invertiblity} immediately implies 
\begin{theorem}[Existence of Moyal resolvent]\label{operator_valued_calculus:thm:existence_Moyal_resolvent}
	Suppose the magnetic field is bounded in the sense of Assumption~\ref{operator_valued_calculus:assumption:bounded_magnetic_fields}, $\rho \in [0,1]$ and that $\Hil \hookrightarrow \Hil'$ can be continuously and densely injected into $\Hil'$. 
	Assume we are given a symbol $f \in S^m_{\rho,0} \bigl ( \mathcal{B}(\Hil,\Hil') \bigr )$ of non-negative order $m \geq 0$ and $f - z$ is an invertible element of $\mathcal{M}^B \bigl ( \mathcal{B}(\Hil,\Hil') \bigr )$ for some $z \not\in \sigma \bigl ( \Op^A(f) \bigr )$. Then the Moyal resolvent 
	\begin{align*}
		(f - z)^{(-1)_{\Weyl}} \in S^{-m}_{\rho,0} \bigl ( \mathcal{B}(\Hil',\Hil) \bigr )
	\end{align*}
	exists as a Hörmander symbol of order $-m \leq 0$. 
\end{theorem}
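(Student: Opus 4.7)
The plan is to deduce the statement directly from Theorem~\ref{operator_valued_calculus:thm:invertiblity} (Invertibility) by verifying that the symbol $g := f - z$ satisfies its hypotheses. The main conceptual point is to interpret the scalar $z$ as a constant operator-valued symbol in the correct class, which is where the continuous and dense injection $\Hil \hookrightarrow \Hil'$ enters in an essential way.

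First I would fix the interpretation of $z$. The assumption $\Hil \hookrightarrow \Hil'$ yields a bounded inclusion $\iota : \Hil \longrightarrow \Hil'$, and thus $z \, \iota \in \mathcal{B}(\Hil,\Hil')$ defines a constant (hence smooth, with vanishing derivatives) operator-valued function on $T^* \R^d$. All Hörmander seminorms in Definition~\ref{operator_valued_calculus:defn:Hoermander_symbols} are finite, so $z \, \iota \in S^0_{\rho,0} \bigl ( \mathcal{B}(\Hil,\Hil') \bigr )$. Because $m \geq 0$ implies the natural inclusion $S^0_{\rho,0} \subseteq S^m_{\rho,0}$ on the level of operator-valued Hörmander classes, and since $f \in S^m_{\rho,0} \bigl ( \mathcal{B}(\Hil,\Hil') \bigr )$ by hypothesis, we conclude
\begin{align*}
	g := f - z \, \iota \in S^m_{\rho,0} \bigl ( \mathcal{B}(\Hil,\Hil') \bigr )
	.
\end{align*}

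Next I would read off invertibility. By assumption, $f - z$ is an invertible element of the Moyal space $\mathcal{M}^B \bigl ( \mathcal{B}(\Hil,\Hil') \bigr )$. Combined with the fact that $g$ lies in the Hörmander class of order $m$, this puts us exactly in the situation covered by Theorem~\ref{operator_valued_calculus:thm:invertiblity} with the case $m \geq 0$ (invertibility in $\mathcal{M}^B$). Applying that theorem yields
\begin{align*}
	(f - z)^{(-1)_{\Weyl}} = g^{(-1)_{\Weyl}} \in S^{-m}_{\rho,0} \bigl ( \mathcal{B}(\Hil',\Hil) \bigr )
	,
\end{align*}
which is precisely the claim.

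The only mildly delicate step is the identification of $z$ as a symbol in the right operator-valued class: without the dense continuous embedding $\Hil \hookrightarrow \Hil'$, multiplication by the scalar $z$ would not even produce an element of $\mathcal{B}(\Hil,\Hil')$, so neither $g \in S^m_{\rho,0} \bigl ( \mathcal{B}(\Hil,\Hil') \bigr )$ nor the statement $\sigma \bigl ( \Op^A(f) \bigr ) \subseteq \C$ would be meaningful. Once that bookkeeping is settled, the conclusion is a one-line invocation of Theorem~\ref{operator_valued_calculus:thm:invertiblity}; no bootstrap from $\rho = 0$ or commutator estimates need to be redone here, since all of that work is already packaged into the Invertibility Theorem.
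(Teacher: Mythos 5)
Your proposal is correct and follows essentially the same route as the paper: the paper's own argument consists precisely of the observation that the dense continuous injection $\Hil \hookrightarrow \Hil'$ is what makes $z = z\,\id$ meaningful as a (constant, order-zero) element of $S^m_{\rho,0}\bigl(\mathcal{B}(\Hil,\Hil')\bigr)$, after which the statement is declared an immediate consequence of the Invertibility Theorem~\ref{operator_valued_calculus:thm:invertiblity}. Your write-up merely makes that bookkeeping slightly more explicit.
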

We may ask for conditions on $f$ when the Moyal resolvent exists. A common one concerns elliptic symbols that quantize to selfadjoint operators: 
\begin{corollary}
	Suppose the magnetic field is bounded in the sense of Assumption~\ref{operator_valued_calculus:assumption:bounded_magnetic_fields}, $\rho \in [0,1]$ and that $\Hil \hookrightarrow \Hil'$ can be continuously and densely injected into $\Hil'$. 
	
	Assume we are given a symbol $h \in S^m_{\rho,0} \bigl ( \mathcal{B}(\Hil,\Hil') \bigr )$ of order $m \geq 0$ that is elliptic in the sense of Definition~\ref{operator_valued_calculus:defn:elliptic_symbols} when $m > 0$ and takes values in the selfadjoint operators. Then for any $z \not\in \sigma \bigl ( \Op^A(h) \bigr )$ the Moyal resolvent 
	\begin{align*}
		(h - z)^{(-1)_{\Weyl}} \in S^{-m}_{\rho,0} \bigl ( \mathcal{B}(\Hil,\Hil') \bigr ) 
	\end{align*}
	exists as a Hörmander symbol.
\end{corollary}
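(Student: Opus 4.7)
The proof plan is to combine the selfadjointness of $\Op^A(h)$ with Theorem~\ref{operator_valued_calculus:thm:existence_Moyal_resolvent}. The core step is verifying that for $z$ in the resolvent set, the symbol $h-z$ is an invertible element of the Moyal space $\mathcal{M}^B \bigl ( \mathcal{B}(\Hil,\Hil') \bigr )$, after which Theorem~\ref{operator_valued_calculus:thm:existence_Moyal_resolvent} immediately delivers the conclusion.

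First, I would invoke Theorem~\ref{operator_valued_calculus:thm:selfadjointness_elliptic_symbols} to conclude that $\Op^A(h)$ is selfadjoint on $L^2(\R^d,\Hil')$ with domain $L^2(\R^d,\Hil)$ if $m=0$ and $H^m_A(\R^d,\Hil)$ if $m>0$; the hypotheses of that theorem, namely dense inclusion $\Hil \hookrightarrow \Hil'$, pointwise selfadjointness of $h(X)$, and ellipticity for $m>0$, are all granted by assumption. In particular, for any $z \not\in \sigma \bigl ( \Op^A(h) \bigr )$, the resolvent $\bigl ( \Op^A(h) - z \bigr )^{-1}$ exists as a bounded operator $L^2(\R^d,\Hil') \to L^2(\R^d,\Hil)$ (and in fact into $H^m_A(\R^d,\Hil)$), which in turn defines a continuous linear map between Schwartz spaces as well as between tempered distribution spaces.

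Second, using the characterization of $\mathcal{M}^B$ from the proposition on Moyal spaces, specifically that $\Op^A$ is a topological vector space isomorphism between $\mathcal{M}^B \bigl ( \mathcal{B}(\Hil',\Hil) \bigr )$ and the continuous operators mapping both $\Schwartz \to \Schwartz$ and $\Schwartz^* \to \Schwartz^*$, the resolvent operator is of the form $\Op^A(g)$ for a unique $g \in \mathcal{M}^B \bigl ( \mathcal{B}(\Hil',\Hil) \bigr )$. Since $h - z \in S^m_{\rho,0} \bigl ( \mathcal{B}(\Hil,\Hil') \bigr ) \subset \mathcal{M}^B \bigl ( \mathcal{B}(\Hil,\Hil') \bigr )$ and $\Op^A$ intertwines the operator product with the Moyal product on composable Moyal spaces, the identities
\begin{align*}
	g \Weyl (h - z) = \id_{\Hil}, \qquad (h - z) \Weyl g = \id_{\Hil'},
\end{align*}
hold in the respective Moyal algebras, which precisely means $h-z$ is an invertible element of $\mathcal{M}^B \bigl ( \mathcal{B}(\Hil,\Hil') \bigr )$ in the sense required by Theorem~\ref{operator_valued_calculus:thm:existence_Moyal_resolvent}.

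Third, I would apply Theorem~\ref{operator_valued_calculus:thm:existence_Moyal_resolvent} directly: its hypotheses are now satisfied (bounded magnetic field, $\rho \in [0,1]$, dense embedding $\Hil \hookrightarrow \Hil'$, $h$ of non-negative order $m$, invertibility of $h-z$ in the Moyal space). The conclusion is that $(h-z)^{(-1)_{\Weyl}}$ exists as a Hörmander symbol of order $-m$, which is exactly the claim. The main obstacle is the edge case $\rho = 0$ with $m > 0$: Theorem~\ref{operator_valued_calculus:thm:selfadjointness_elliptic_symbols} is stated for $\rho > \delta = 0$, so selfadjointness for $\rho = 0$ would either need to be excluded by a slight restriction of hypotheses, or recovered by invoking the stronger parametrix existence in the $\rho = 0$ regime (cf.~Remark~\ref{operator_valued_calculus:rem:existence_parametrices}) to reproduce the graph-norm equivalence used in the proof of Theorem~\ref{operator_valued_calculus:thm:selfadjointness_elliptic_symbols}.
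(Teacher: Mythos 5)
Your overall strategy --- selfadjointness gives a bounded resolvent, which certifies invertibility of $h-z$ on the symbol side, after which Theorem~\ref{operator_valued_calculus:thm:existence_Moyal_resolvent} finishes the job --- is the same as the paper's, which notes that ellipticity makes $\Op^A\bigl((h-z)\Weyl w_{-m}\bigr) : L^2(\R^d,\Hil)\to L^2(\R^d,\Hil')$ a continuous bijection with continuous inverse and then feeds this into the invertibility machinery. The weak point is your second step. From $L^2$-boundedness of $\bigl(\Op^A(h)-z\bigr)^{-1}$ you only obtain an element of $\mathcal{L}\bigl(\Schwartz(\R^d,\Hil'),\Schwartz^*(\R^d,\Hil)\bigr)$, i.e.\ a prequantization in $\Schwartz^*\bigl(T^*\R^d,\mathcal{B}(\Hil',\Hil)\bigr)$; membership in $\mathcal{M}^B\bigl(\mathcal{B}(\Hil',\Hil)\bigr)$ additionally requires that the resolvent map $\Schwartz\to\Schwartz$ and extend continuously to $\Schwartz^*\to\Schwartz^*$. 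You assert this without proof, and it does not follow from $L^2$-boundedness alone --- it is essentially equivalent to what the corollary is trying to establish, so as written the step is circular. The paper's phrasing avoids this: it reduces to the \emph{order-zero} symbol $(h-z)\Weyl w_{-m}$, whose inverse only needs to exist in $\mathfrak{A}^B$ (the $m=0$ case of Theorem~\ref{operator_valued_calculus:thm:invertiblity}), and $\mathfrak{A}^B$-invertibility is exactly what the bounded inverse theorem hands you from the $L^2$-bijection. Beals' criterion then upgrades that inverse to an $S^0_{\rho,0}$ symbol, and composing with $w_{-m}$ produces a genuine two-sided $\Weyl$-inverse of $h-z$ inside $\mathcal{M}^B$. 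You should route your argument through this reduction rather than through Schwartz-preservation of the resolvent.

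Your closing caveat about $\rho=0$ with $m>0$ is well taken: Theorem~\ref{operator_valued_calculus:thm:selfadjointness_elliptic_symbols} requires $\rho>\delta$, and the paper's own argument (which leans on the graph-norm equivalence coming from the parametrix construction of Iftimie--Măntoiu--Purice) suffers from the same restriction; the paper concedes as much in its later discussion of the equivariant case. So that is a limitation of the corollary as stated rather than a defect specific to your proposal, but it means the selfadjointness input must be supplied by other means when $\rho=0$ and $m>0$.
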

The proof consists of adapting the arguments before \cite[Proposition~6.31]{Iftimie_Mantoiu_Purice:commutator_criteria:2008}. The essential idea is that elliptic symbols can be used to define an equivalent norm on magnetic Sobolev spaces and thus, 
\begin{align*}
	\Op^A(h - z) \, \Op^A(w_{-m}) = \Op^A \bigl ( (h - z) \Weyl w_{-m} \bigr ) : L^2(\R^d,\Hil) \longrightarrow L^2(\R^d,\Hil') 
\end{align*}
defines a continuous bijection with continuous inverse. Importantly, this is equivalent to saying that the domain of selfadjointness is $\mathcal{D} \bigl ( \Op^A(h) \bigr ) = H^m_A(\R^d,\Hil) = \ran \Op^A(w_{-m})$. 
\medskip

\noindent
We take this opportunity to compare the Moyal inverse constructed here to the parametrices from Section~\ref{operator_valued_calculus:Hoermander_symbols:Weyl_product}. At first glance, it seems that parametrices are — up to $\order(\epsilon^{\infty})$ — identical to Moyal resolvents. But this is not quite right. Instead, both approaches represent trade-offs: on the one hand, the (exact) Moyal resolvent uniquely locates the spectrum of $\Op^A(h)$ in the complex plane. The advantage of using a parametrix is that in some circumstances we \emph{are} able to construct an approximate resolvent \emph{even when} $z \in \sigma \bigl ( \Op^A(h) \bigr )$ lies in the spectrum of the operator. That is not contradictory: the parametrix exists only as an asymptotic expansion, \ie we have no control over whether and in what sense the asymptotic series converges. In fact, when $z \in \sigma \bigl ( \Op^A(h) \bigr )$ lies in the spectrum the formal series for the Moyal resolvent \emph{cannot} converge to a symbol $(h - z)^{(-1)_{\epsilon}}$ for which 
\begin{align*}
	(h - z)^{(-1)_{\epsilon}} \Weyl (h - z) &= \id_{\Hil} 
	\quad \mbox{and} \quad 
	(h - z) \Weyl (h - z)^{(-1)_{\epsilon}} = \id_{\Hil'} 
\end{align*}
hold \emph{without} a remainder. These trade-offs will become even more apparent when we insert the resolvents into equations~\eqref{operator_valued_calculus:eqn:Helffer_Sjoestrand_functional_calculus} or \eqref{operator_valued_calculus:eqn:holomorphic_functional_calculus} to obtain a functional calculus. 
\medskip

\noindent
Speaking of functional calculi, let us briefly review the main ideas. The idea is to pull back the functional calculus from the level of operators 
\begin{align*}
	\varphi \bigl ( \Op^A(f) \bigr ) &= \Op^A \bigl ( \varphi^B(f) \bigr ) 
\end{align*}
to the level of Hörmander symbols. Depending on the class of functions we take $\varphi$ from, we can relax or have to tighten the assumptions on the symbols. One general result ensures the existence of a \emph{holomorphic} functional calculus on $S^0_{\rho,0} \bigl ( \mathcal{B}(\Hil) \bigr )$. 
\begin{theorem}[Holomorphic functional calculus on $S^0_{\rho,0} \bigl ( \mathcal{B}(\Hil) \bigr )$]\label{operator_valued_calculus:thm:holomorphic_functional_calculus}
	Suppose the magnetic field is bounded in the sense of Assumption~\ref{operator_valued_calculus:assumption:bounded_magnetic_fields}, $\rho \in [0,1]$. 
	
	Then there exists a holomorphic functional calculus on $S^0_{\rho,0} \bigl ( \mathcal{B}(\Hil) \bigr )$, \ie for any function $\varphi$ that is holomorphic on some neighborhood of the spectrum $\sigma \bigl ( \Op^A(f) \bigr )$, equation~\eqref{operator_valued_calculus:eqn:holomorphic_functional_calculus} defines a map 
	\begin{align*}
		\varphi \mapsto \varphi^B(f) \in S^0_{\rho,0} \bigl ( \mathcal{B}(\Hil) \bigr ) 
	\end{align*}
	into the Hörmander symbols of order $0$. 
\end{theorem}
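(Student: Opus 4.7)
The strategy is to realize $\varphi^B(f)$ in~\eqref{operator_valued_calculus:eqn:holomorphic_functional_calculus} as a Bochner integral in the Fréchet space $S^0_{\rho,0} \bigl ( \mathcal{B}(\Hil) \bigr )$. Since $\Op^A(f)$ is bounded by the Calderón-Vaillancourt Theorem~\ref{operator_valued_calculus:thm:Calderon_Vaillancourt}, its spectrum $\sigma \bigl ( \Op^A(f) \bigr ) \subset \C$ is compact, and we may take the contour $\Gamma$ to be compact, lying in the domain of holomorphy of $\varphi$ and enclosing $\sigma \bigl ( \Op^A(f) \bigr )$. The proof then reduces to two things: (i)~the integrand $z \mapsto (f - z)^{(-1)_{\Weyl}}$ takes values in $S^0_{\rho,0} \bigl ( \mathcal{B}(\Hil) \bigr )$, and (ii)~it does so continuously with Hörmander seminorms uniformly bounded in $z \in \Gamma$.

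For (i), for every $z \in \Gamma$ the bounded operator $\Op^A(f) - z$ is invertible on $L^2(\R^d,\Hil)$ with norm bounded by $\dist \bigl ( z , \sigma(\Op^A(f)) \bigr )^{-1}$. Thus $f - z$ is invertible in $\mathfrak{A}^B \bigl ( \mathcal{B}(\Hil) \bigr ) \subseteq \mathcal{M}^B \bigl ( \mathcal{B}(\Hil) \bigr )$, and Theorem~\ref{operator_valued_calculus:thm:existence_Moyal_resolvent} applied in the easy case $m = 0$ yields $(f - z)^{(-1)_{\Weyl}} \in S^0_{\rho,0} \bigl ( \mathcal{B}(\Hil) \bigr )$.

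For (ii), continuity (in fact holomorphy) in the Fréchet topology follows from the Moyal resolvent identity
\[
	(f - z_1)^{(-1)_{\Weyl}} - (f - z_2)^{(-1)_{\Weyl}} = (z_1 - z_2) \, (f - z_1)^{(-1)_{\Weyl}} \Weyl (f - z_2)^{(-1)_{\Weyl}}
\]
combined with the continuity of $\Weyl$ on Hörmander symbols (Theorem~\ref{operator_valued_calculus:thm:Weyl_product_Hoermander_symbols}). For uniform seminorm bounds I would unwind the inductive Leibniz computation from the proof of Theorem~\ref{operator_valued_calculus:thm:invertiblity}, noting that $\partial^{(a,\alpha)}_{(x,\xi)}(f - z) = \partial^{(a,\alpha)}_{(x,\xi)} f$ because $z$ is Weyl-central: this expresses each $\partial_{(x,\xi)}^{(a,\alpha)} (f - z)^{(-1)_{\Weyl}}$ as a finite sum of Weyl products whose factors are either $z$-independent derivations of $f$ or the resolvent $(f - z)^{(-1)_{\Weyl}}$ itself. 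Feeding this into Corollary~\ref{operator_valued_calculus:cor:Beals_commutator_criterion} bounds every Hörmander seminorm of $(f - z)^{(-1)_{\Weyl}}$ by a polynomial in $\bnorm{(\Op^A(f) - z)^{-1}}_{\mathcal{B}(L^2(\R^d,\Hil))}$ and finitely many seminorms of $f$. The first quantity is controlled by $\dist \bigl ( \Gamma , \sigma(\Op^A(f)) \bigr )^{-1}$ uniformly on the compact contour, so each Hörmander seminorm of $(f - z)^{(-1)_{\Weyl}}$ is uniformly bounded in $z \in \Gamma$.

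Once (i) and (ii) are in place, $\varphi$ is continuous and bounded on the compact $\Gamma$, so the integrand is a continuous $S^0_{\rho,0} \bigl ( \mathcal{B}(\Hil) \bigr )$-valued function with bounded seminorms; the Bochner integral in~\eqref{operator_valued_calculus:eqn:holomorphic_functional_calculus} therefore converges to an element $\varphi^B(f) \in S^0_{\rho,0} \bigl ( \mathcal{B}(\Hil) \bigr )$. Compatibility with the usual operator-level holomorphic functional calculus, $\Op^A \bigl ( \varphi^B(f) \bigr ) = \varphi \bigl ( \Op^A(f) \bigr )$, follows by commuting $\Op^A$, viewed as a continuous linear map through Theorem~\ref{operator_valued_calculus:thm:Calderon_Vaillancourt}, with the Bochner integral. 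I expect the principal technical obstacle to be the bookkeeping in step~(ii): making the polynomial dependence in the Leibniz expansion explicit enough to extract \emph{uniform} rather than merely pointwise estimates in $z$; everything else is either classical Bochner integration or already supplied by the machinery developed earlier in the section.
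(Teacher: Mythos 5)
Your argument is correct in outline, but it takes a genuinely different route from the paper. The paper explicitly declines to prove the theorem ``by hand'' and instead invokes the abstract theory of $\Psi^*$-algebras: it checks that $S^0_{\rho,0} \bigl ( \mathcal{B}(\Hil) \bigr )$ is a Fréchet $\ast$-subalgebra of the $C^*$-algebra $\mathfrak{A}^B \bigl ( \mathcal{B}(\Hil) \bigr )$ with continuous inclusion (Calderón-Vaillancourt plus the composition theorem and continuity of the adjoint) and that it is spectrally invariant (Theorem~\ref{operator_valued_calculus:thm:invertiblity}, itself a consequence of Beals' criterion); the abstract result for $\Psi^*$-algebras then delivers stability under holomorphic functional calculus, openness of the set of invertibles and continuity of $f \mapsto f^{(-1)_{\Weyl}}$ in one stroke. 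Your steps~(i) and (ii) are precisely the analytic content that the abstract machinery packages: openness of the invertibles and continuity of inversion in the Fréchet topology give continuity of $z \mapsto (f - z)^{(-1)_{\Weyl}}$ on the compact contour, which is what makes the Riemann/Bochner integral converge in $S^0_{\rho,0} \bigl ( \mathcal{B}(\Hil) \bigr )$. Your hands-on version works and has the merit of producing explicit uniform seminorm bounds, but you should make one point explicit that you currently only gesture at: converting the uniform $\mathfrak{A}^B$-bounds on $w_{\sabs{\alpha} \rho} \Weyl \partial^{(a,\alpha)}_{(x,\xi)} (f - z)^{(-1)_{\Weyl}}$ into uniform Hörmander seminorm bounds requires the \emph{quantitative} form of Beals' criterion, \ie the equivalence of the Fréchet topology generated by the derivation norms with the usual one on $S^0_{\rho,0} \bigl ( \mathcal{B}(\Hil) \bigr )$ (the qualitative if-and-only-if statement alone does not suffice; the equivalence follows from the closed graph theorem or from the formulation in Iftimie--Măntoiu--Purice). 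The trade-off is clear: your route is self-contained, while the paper's is shorter and additionally yields the heredity property that closed unital $\ast$-subalgebras are again $\Psi^*$-algebras, which is exploited later for anisotropic symbol classes.
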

Importantly, $f$ need \emph{not} be such that it gives rise to a selfadjoint operator. While one could prove this by hand, we will instead rely on abstract results, following the arguments in \cite[Section~3]{Lein_Mantoiu_Richard:anisotropic_mag_pseudo:2009}. The relevant notion is that of a $\Psi^*$-algebra. Suppose we are given a unital $C^*$-algebra $\mathfrak{A}$ and a $\ast$-subalgebra $\mathfrak{B} \subseteq \mathfrak{A}$ with unit. We call $\mathfrak{B}$ spectrally invariant if and only if $\mathfrak{B} \cap \mathfrak{A}^{-1} = \mathfrak{B}^{-1}$, that is, if the inverse of any invertible element of $\mathfrak{B}$ also lies in the smaller $\ast$-subalgebra $\mathfrak{B}$. Furthermore, we call $\mathfrak{B}$ a $\Psi^*$-algebra if it is spectrally invariant and is endowed with a Fréchet topology so that the inclusion $\mathfrak{B} \hookrightarrow \mathfrak{A}$ is continuous. 

Let us check all ingredients in turn. We have assumed $\Hil = \Hil'$ in the above so that $\mathfrak{A}^B \bigl ( \mathcal{B}(\Hil) \bigr )$ inherits the $C^*$-algebraic nature of $\mathcal{B} \bigl ( L^2(\R^d,\Hil) \bigr )$. 

The Calderón-Vaillancourt Theorem~\ref{operator_valued_calculus:thm:Calderon_Vaillancourt} states we can continuously embed $S^0_{\rho,0} \bigl ( \mathcal{B}(\Hil) \bigr )$ into $\mathfrak{A}^B \bigl ( \mathcal{B}(\Hil) \bigr )$ (\cf also Corollary~\ref{operator_valued_calculus:cor:Calderon_Vaillancourt}). Furthermore, the product of two Hörmander symbols of order $0$ gives another Hörmander symbol of order $0$ (Theorem~\ref{operator_valued_calculus:thm:Weyl_product_Hoermander_symbols}). The adjoint $f \mapsto f^*$ also defines a continuous maps on $S^0_{\rho,0} \bigl ( \mathcal{B}(\Hil) \bigr )$. Consequently, $S^0_{\rho,0} \bigl ( \mathcal{B}(\Hil) \bigr ) \subset \mathfrak{A}^B \bigl ( \mathcal{B}(\Hil) \bigr )$ forms a Fréchet $\ast$-subalgebra. 

We have also proven spectral invariance in Theorem~\ref{operator_valued_calculus:thm:invertiblity}, itself a consequence of Beals' Commutator Criterion~\ref{operator_valued_calculus:thm:Beals_commutator_criterion}. Overall, this proves the following: 
\begin{corollary}
	Suppose the magnetic field is bounded in the sense of Assumption~\ref{operator_valued_calculus:assumption:bounded_magnetic_fields}, $\rho \in [0,1]$. 
	Then $S^0_{\rho,0} \bigl ( \mathcal{B}(\Hil) \bigr ) \subset \mathfrak{A}^B \bigl ( \mathcal{B}(\Hil) \bigr )$ defines a $\Psi^*$-algebra. As such, it is stable under holomorphic functional calculus, the set of invertible elements 
	\begin{align*}
		\Bigl ( S^0_{\rho,0} \bigl ( \mathcal{B}(\Hil) \bigr ) \Bigr )^{-1} \subset \mathfrak{A}^B \bigl ( \mathcal{B}(\Hil) \bigr ) 
	\end{align*}
	is open and the map 
	\begin{align*}
		f \mapsto f^{(-1)_{\Weyl}} \in S^0_{\rho,0} \bigl ( \mathcal{B}(\Hil) \bigr ) 
	\end{align*}
	is continuous. 
\end{corollary}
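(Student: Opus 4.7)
The plan is to verify the abstract $\Psi^*$-algebra axioms for the pair $S^0_{\rho,0}\bigl(\mathcal{B}(\Hil)\bigr) \subset \mathfrak{A}^B\bigl(\mathcal{B}(\Hil)\bigr)$ and then invoke the well-known structure theory of $\Psi^*$-algebras (see \eg Gramsch, or the summary in \cite{Lein_Mantoiu_Richard:anisotropic_mag_pseudo:2009}), which automatically delivers the three listed consequences: stability under holomorphic functional calculus, openness of the invertibles, and continuity of the inversion map. The discussion in the paragraphs preceding the corollary already collects every ingredient needed; the proof is just bookkeeping.

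Concretely, I would proceed in four short steps. First, since $\Hil = \Hil'$, the preimage $\mathfrak{A}^B\bigl(\mathcal{B}(\Hil)\bigr) = (\Op^A)^{-1}\bigl(\mathcal{B}(L^2(\R^d,\Hil))\bigr)$ inherits unital $C^*$-algebra structure from $\mathcal{B}(L^2(\R^d,\Hil))$ via the isometry $\norm{f}_{\mathfrak{A}^B} := \snorm{\Op^A(f)}_{\mathcal{B}(L^2)}$, with the pulled-back product~\eqref{operator_valued_calculus:eqn:Weyl_product_Banach_space_distributions_preimage_bounded_operators} and involution. Second, I would check that $S^0_{\rho,0}\bigl(\mathcal{B}(\Hil)\bigr)$ is a Fréchet $\ast$-subalgebra containing the unit: it is Fréchet by Definition~\ref{operator_valued_calculus:defn:Hoermander_symbols}, the inclusion into $\mathfrak{A}^B\bigl(\mathcal{B}(\Hil)\bigr)$ is continuous by the Calderón-Vaillancourt Theorem~\ref{operator_valued_calculus:thm:Calderon_Vaillancourt} (equivalently Corollary~\ref{operator_valued_calculus:cor:Calderon_Vaillancourt}), closure under $\Weyl$ with continuity of the product follows from Theorem~\ref{operator_valued_calculus:thm:Weyl_product_Hoermander_symbols} applied with $m_1 = m_2 = 0$, the constant symbol $1$ serves as the unit, and the pointwise adjoint leaves $S^0_{\rho,0}\bigl(\mathcal{B}(\Hil)\bigr)$ invariant and is continuous since adjoining commutes with derivatives and preserves operator norms.

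Third, I would establish spectral invariance: if $f \in S^0_{\rho,0}\bigl(\mathcal{B}(\Hil)\bigr)$ is invertible in $\mathfrak{A}^B\bigl(\mathcal{B}(\Hil)\bigr)$, then the inverse with respect to $\Weyl$ exists and, by Theorem~\ref{operator_valued_calculus:thm:invertiblity} applied with $m = 0$, belongs again to $S^0_{\rho,0}\bigl(\mathcal{B}(\Hil)\bigr)$. Thus
\begin{align*}
	S^0_{\rho,0}\bigl(\mathcal{B}(\Hil)\bigr) \cap \mathfrak{A}^B\bigl(\mathcal{B}(\Hil)\bigr)^{-1} = \bigl(S^0_{\rho,0}\bigl(\mathcal{B}(\Hil)\bigr)\bigr)^{-1},
\end{align*}
which together with the previous step shows that $S^0_{\rho,0}\bigl(\mathcal{B}(\Hil)\bigr)$ is, by definition, a $\Psi^*$-algebra.

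Fourth, the remaining three consequences are then purely abstract: the general theory of $\Psi^*$-algebras (Gramsch's theorem) guarantees that any $\Psi^*$-subalgebra of a unital $C^*$-algebra is stable under holomorphic functional calculus on the spectrum computed in the enclosing $C^*$-algebra, that its set of invertibles is open in the Fréchet topology, and that inversion is continuous there. I would simply cite this abstract result; no further hands-on argument is needed. The only potential subtlety, and the step I would double-check most carefully, is the continuity of the involution and of the Weyl product with respect to the Fréchet topology of $S^0_{\rho,0}\bigl(\mathcal{B}(\Hil)\bigr)$ in the operator-valued setting, since the tensor-product identifications invoked elsewhere in the paper fail here; but this is handled by the explicit seminorm estimates underlying Theorem~\ref{operator_valued_calculus:thm:Weyl_product_Hoermander_symbols} and by noting that $\snorm{f^*}_{0,a\alpha} = \snorm{f}_{0,a\alpha}$ pointwise.
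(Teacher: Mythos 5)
Your proposal is correct and follows essentially the same route as the paper: the text immediately preceding the corollary verifies exactly these ingredients — the $C^*$-structure of $\mathfrak{A}^B\bigl(\mathcal{B}(\Hil)\bigr)$, the continuous Fréchet $\ast$-subalgebra property via the Calderón-Vaillancourt Theorem~\ref{operator_valued_calculus:thm:Calderon_Vaillancourt} and Theorem~\ref{operator_valued_calculus:thm:Weyl_product_Hoermander_symbols}, and spectral invariance via Theorem~\ref{operator_valued_calculus:thm:invertiblity} — and then invokes the abstract $\Psi^*$-algebra machinery of \cite{Lein_Mantoiu_Richard:anisotropic_mag_pseudo:2009} for the stated consequences. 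Your extra care about the continuity of the involution and of $\Weyl$ in the operator-valued Fréchet topology is a reasonable addition but not a deviation from the paper's argument.
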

The $\Psi^*$-algebraic nature of $S^0_{\rho,0} \bigl ( \mathcal{B}(\Hil) \bigr )$ is extremely useful in many applications. For example, any closed $\ast$-subalgebra of a $\Psi^*$-algebra containing the unit element is automatically a $\Psi^*$-algebra. Put another way, we are no longer obliged to prove an analog of Theorem~\ref{operator_valued_calculus:thm:invertiblity} for that smaller Fréchet $\ast$-subalgebra. That has been exploited in \cite{Lein_Mantoiu_Richard:anisotropic_mag_pseudo:2009,Belmonte_Lein_Mantoiu:mag_twisted_actions:2010}, where the notion of anisotropic Hörmander classes was introduced; these are Hörmander classes where the behavior in $x$ is characterized by a suitable subalgebra of $\Cont^{\infty}_{\mathrm{b}}(\R^d)$ (\eg symbols are $\Z^d$-periodic in $x$). 

Another approach is to impose more assumptions on the symbols and relax assumptions on the functions $\varphi$. The Helffer-Sjöstrand formula allows us to make sense of $\varphi^B(h)$ when $h$ defines a selfadjoint magnetic pseudodifferential operator. 
\begin{theorem}[Existence of a functional calculus]\label{operator_valued_calculus:thm:functional_calculus_Helffer_Sjoestrand}
	Suppose the magnetic field is bounded in the sense of Assumption~\ref{operator_valued_calculus:assumption:bounded_magnetic_fields}, $\rho \in [0,1]$ and that $\Hil \hookrightarrow \Hil'$ can be continuously and densely injected into $\Hil'$. 
	Assume we are given a symbol $h \in S^m_{\rho,0} \bigl ( \mathcal{B}(\Hil,\Hil') \bigr )$ of order $m \geq 0$ that is elliptic in the sense of Definition~\ref{operator_valued_calculus:defn:elliptic_symbols} when $m > 0$ and takes values in the selfadjoint operators. 
	Then this defines a functional calculus for $\varphi \in \Cont^{\infty}_{\mathrm{c}}(\R,\C)$ via the Hellfer-Sjöstrand formula~\eqref{operator_valued_calculus:eqn:Helffer_Sjoestrand_functional_calculus}, and for each smooth function with compact support 
	\begin{align*}
		\varphi^B(h) \in S^{-m}_{\rho,0} \bigl ( \mathcal{B}(\Hil',\Hil) \bigr ) 
	\end{align*}
	defines a Hörmander symbol of order $-m$. 
\end{theorem}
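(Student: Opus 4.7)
The plan is to adapt the classical Helffer--Sjöstrand machinery for scalar-valued Hörmander symbols to the operator-valued setting, leveraging the existence of the Moyal resolvent from Theorem~\ref{operator_valued_calculus:thm:existence_Moyal_resolvent} as the basic building block. Since $h(X)^* = h(X)$ is pointwise selfadjoint and elliptic whenever $m > 0$, Theorem~\ref{operator_valued_calculus:thm:selfadjointness_elliptic_symbols} implies that $\Op^A(h)$ is selfadjoint on $H^m_A(\R^d,\Hil)$, so $\sigma(\Op^A(h)) \subseteq \R$. Consequently, for each $z \in \C \setminus \R$ the Moyal resolvent $(h-z)^{(-1)_{\Weyl}} \in S^{-m}_{\rho,0}\bigl(\mathcal{B}(\Hil',\Hil)\bigr)$ exists by Theorem~\ref{operator_valued_calculus:thm:existence_Moyal_resolvent}, and we may define $\varphi^B(h)$ formally by the integral~\eqref{operator_valued_calculus:eqn:Helffer_Sjoestrand_functional_calculus}.

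The second step is to fix a quasi-analytic (or almost-analytic) extension $\widetilde{\varphi} \in \Cont^{\infty}_{\mathrm{c}}(\C,\C)$ of $\varphi$ for which $\widetilde{\varphi}\vert_\R = \varphi$ and which satisfies, for every $N \in \N_0$, an estimate of the form $\bnorm{\partial_{\bar z} \widetilde{\varphi}(z)} \leq C_N \, \sabs{\Im z}^N$ on a bounded neighborhood of $\supp \varphi$; the existence of such extensions is standard. The key analytic task is then to establish polynomial bounds in $\sabs{\Im z}^{-1}$ for all Hörmander seminorms of $(h-z)^{(-1)_{\Weyl}}$ in $S^{-m}_{\rho,0}\bigl(\mathcal{B}(\Hil',\Hil)\bigr)$: concretely, for any multi-indices $a,\alpha \in \N_0^d$ there should exist $N_{a\alpha} \in \N_0$ and $C_{a\alpha} > 0$ so that
\begin{align*}
  \bnorm{(h-z)^{(-1)_{\Weyl}}}_{-m,a\alpha} \leq C_{a\alpha} \, \sabs{\Im z}^{-N_{a\alpha}}
\end{align*}
uniformly in $z$ on a neighborhood of $\supp \widetilde{\varphi}$. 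Once such bounds are in place, choosing $N > N_{a\alpha}$ in the quasi-analytic estimate shows that the Bochner integral~\eqref{operator_valued_calculus:eqn:Helffer_Sjoestrand_functional_calculus} converges absolutely in the Fréchet space $S^{-m}_{\rho,0}\bigl(\mathcal{B}(\Hil',\Hil)\bigr)$, and therefore $\varphi^B(h)$ lies in that symbol class.

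The polynomial bounds are obtained by a two-part induction. The base case is a pointwise estimate: because $h(X) - z$ acts as a bounded invertible operator between magnetic Sobolev scales with $\snorm{(h(X)-z)^{-1}}_{\mathcal{B}(\Hil',\Hil)} \lesssim \sabs{\Im z}^{-1}$ — up to the weight $w_{-m}$ needed to trade orders when $m > 0$ — the zeroth seminorm is controlled by $\sabs{\Im z}^{-1}$. The inductive step reuses the derivation identity
\begin{align*}
  \partial^{(a,\alpha)}_{(x,\xi)} (h-z)^{(-1)_{\Weyl}} = \sum C_{\{a_j\},\{\alpha_j\}} \; (h-z)^{(-1)_{\Weyl}} \Weyl \partial^{(a_1,\alpha_1)}_{(x,\xi)} h \Weyl \cdots \Weyl (h-z)^{(-1)_{\Weyl}}
\end{align*}
which already appeared in the proof of Theorem~\ref{operator_valued_calculus:thm:invertiblity}. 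Each application of this identity introduces one additional Moyal resolvent factor, multiplying the $\sabs{\Im z}^{-1}$ bound, while Theorem~\ref{operator_valued_calculus:thm:Weyl_product_Hoermander_symbols} on the Weyl composition of operator-valued Hörmander symbols keeps the orders of the symbol classes in check.

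The main obstacle is precisely the combination of the polynomial estimate in $\sabs{\Im z}^{-1}$ with the operator-valuedness: one must carry the estimates simultaneously in $\mathcal{B}(\Hil,\Hil')$ and $\mathcal{B}(\Hil',\Hil)$ (so that the resolvent and $h$ can be alternately composed), and keep track of the weights $w_{\pm m}$ in the $m > 0$, elliptic case. The non-commutativity prevents the clean scalar-valued reduction $(h-z)^{-(n+1)}$ for nested derivations, but after using Theorem~\ref{operator_valued_calculus:thm:Weyl_product_Hoermander_symbols} and the norm bound $\snorm{(h(X)-z)^{-1}}_{\mathcal{B}(\Hil',\Hil)} \leq C \sabs{\Im z}^{-1}$ pointwise, the degree of the polynomial in $\sabs{\Im z}^{-1}$ is at most $\sabs{a} + \sabs{\alpha} + 1$, which is absorbed by the quasi-analytic decay of $\partial_{\bar z} \widetilde{\varphi}$.
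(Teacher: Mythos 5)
Your proposal is correct and follows essentially the same route as the paper, which itself only sketches this proof: pull the derivations $\partial^{(a,\alpha)}_{(x,\xi)}$ through the Helffer--Sjöstrand integral, expand them via the derivation identity from the proof of Theorem~\ref{operator_valued_calculus:thm:invertiblity}, and trade the resulting polynomial growth in $\sabs{\Im z}^{-1}$ against the $\order \bigl ( \sabs{\Im z}^N \bigr )$ decay of $\partial_{\bar z} \widetilde{\varphi}$. The one place to phrase more carefully is your base case: the zeroth seminorm bound concerns the \emph{Moyal} resolvent $(h-z)^{(-1)_{\Weyl}}$, not the pointwise inverse $(h(X)-z)^{-1}$, so it should be derived from $\bnorm{(\Op^A(h)-z)^{-1}}_{\mathcal{B}(L^2)} \leq \sabs{\Im z}^{-1}$ (selfadjointness) together with a quantitative form of Beals' criterion, rather than from a pointwise operator bound.
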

\begin{remark}
	The above results extends in a straightforward manner to many functions $\varphi \not\in \Cont^{\infty}_{\mathrm{c}}(\R,\C)$ that do not have compact support. 
\end{remark}
\begin{proof}[Sketch]
	The proof consists of combining the ideas from the proof of \cite[Theorem~6.34]{Iftimie_Mantoiu_Purice:commutator_criteria:2008} with the strategy employed in our proof of Theorem~\ref{operator_valued_calculus:thm:invertiblity}. (Iftimie et al.\ use Bony's Commutator Criterion, which allows for a more straightforward proof. Instead, we will have to repeat the three-step argument from before.) The essential idea is to pull through the derivations into the integral, 
	\begin{align*}
		\partial_{(x,\xi)}^{(a,\alpha)} \bigl ( \varphi^B(h) \bigr ) &= \frac{1}{\pi} \int_{\C} \dd z \, \partial_{\bar{z}} \widetilde{\varphi}(z) \, \partial_{(x,\xi)}^{(a,\alpha)} \bigl ( (h - z)^{(-1)_{\Weyl}} \bigr )
		, 
	\end{align*}
	and then estimate the corresponding expressions for the Moyal resolvents. 
\end{proof}
\begin{figure}[t]
	  \centering
	  \tikzset{every picture/.style={line width=0.75pt}} 
	  \begin{tikzpicture}[x=0.75pt,y=0.75pt,yscale=-1,xscale=1]

		\draw [color={rgb, 255:red, 190; green, 190; blue, 190 }  ,draw opacity=1 ]   (279.03,136.34) -- (279.28,170.84) ;
		\draw [color={rgb, 255:red, 190; green, 190; blue, 190 }  ,draw opacity=1 ]   (233.78,121.09) -- (234.03,155.59) ;
		\draw  [draw opacity=0][fill={rgb, 255:red, 245; green, 166; blue, 35 }  ,fill opacity=0.1 ] (258.03,136.08) .. controls (258.03,136.08) and (258.03,136.08) .. (258.03,136.08) -- (320.03,136.08) .. controls (320.03,136.08) and (320.03,136.08) .. (320.03,136.08) -- (320.03,171.33) .. controls (320.03,171.33) and (320.03,171.33) .. (320.03,171.33) -- (258.03,171.33) .. controls (258.03,171.33) and (258.03,171.33) .. (258.03,171.33) -- cycle ;
		\draw  [draw opacity=0][fill={rgb, 255:red, 74; green, 144; blue, 226 }  ,fill opacity=0.5 ][line width=0.75]  (19.01,104.35) .. controls (46.61,115.95) and (178.61,72.75) .. (209.41,73.15) .. controls (236.61,73.55) and (352.61,109.55) .. (419.01,80.35) .. controls (418.75,76.21) and (419.41,58.09) .. (419.41,49.95) .. controls (410.83,50.35) and (20.65,50.2) .. (19.09,49.8) .. controls (18.7,50.6) and (19.34,91.69) .. (19.01,104.35) -- cycle ;
    \draw  [draw opacity=0][fill={rgb, 255:red, 74; green, 144; blue, 226 }  ,fill opacity=0.5 ][line width=0.75]  (419.72,195.5) .. controls (340.5,173) and (340.5,215) .. (267.5,223) .. controls (200.5,224) and (154.5,173) .. (101.5,157) .. controls (62.5,147) and (53.5,183) .. (19.41,195.55) .. controls (19.67,199.68) and (19.38,242.31) .. (19.4,250.46) .. controls (27.97,250.05) and (418.16,249.65) .. (419.72,250.05) .. controls (420.1,249.25) and (419.41,208.15) .. (419.72,195.5) -- cycle ;

		\draw  [draw opacity=0][fill={rgb, 255:red, 208; green, 2; blue, 27 }  ,fill opacity=0.4 ] (19.88,122.95) .. controls (102.91,98.15) and (280.4,163.09) .. (419.41,137.35) .. controls (419.41,158.15) and (419.8,165.75) .. (419.41,180.95) .. controls (334.44,134.55) and (102.13,127.35) .. (19.88,156.55) .. controls (19.89,144.29) and (20.27,127.35) .. (19.88,122.95) -- cycle ;
		\draw [color={rgb, 255:red, 208; green, 2; blue, 27 }  ,draw opacity=1 ]   (419.41,180.95) .. controls (334.44,134.55) and (102.13,127.35) .. (19.88,156.55) ;
		\draw [color={rgb, 255:red, 208; green, 2; blue, 27 }  ,draw opacity=1 ]   (19.88,122.95) .. controls (102.91,98.15) and (280.4,163.09) .. (419.41,137.35) ;

		\draw  (5.33,150.56) -- (433.73,150.56)(220.38,30.65) -- (220.38,269.85) (426.73,145.56) -- (433.73,150.56) -- (426.73,155.56) (215.38,37.65) -- (220.38,30.65) -- (225.38,37.65)  ;

		\draw  [draw opacity=0][fill={rgb, 255:red, 189; green, 16; blue, 224 }  ,fill opacity=1 ] (232.54,150.65) .. controls (232.54,149.85) and (233.19,149.2) .. (233.99,149.2) .. controls (234.79,149.2) and (235.44,149.85) .. (235.44,150.65) .. controls (235.44,151.45) and (234.79,152.1) .. (233.99,152.1) .. controls (233.19,152.1) and (232.54,151.45) .. (232.54,150.65) -- cycle ;
		\draw  [draw opacity=0][fill={rgb, 255:red, 245; green, 166; blue, 35 }  ,fill opacity=1 ] (277.79,150.65) .. controls (277.79,149.85) and (278.44,149.2) .. (279.24,149.2) .. controls (280.04,149.2) and (280.69,149.85) .. (280.69,150.65) .. controls (280.69,151.45) and (280.04,152.1) .. (279.24,152.1) .. controls (278.44,152.1) and (277.79,151.45) .. (277.79,150.65) -- cycle ;
		\draw  [draw opacity=0][fill={rgb, 255:red, 189; green, 16; blue, 224 }  ,fill opacity=0.1 ] (212.03,121) .. controls (212.03,121) and (212.03,121) .. (212.03,121) -- (285.53,121) .. controls (285.53,121) and (285.53,121) .. (285.53,121) -- (285.53,155.83) .. controls (285.53,155.83) and (285.53,155.83) .. (285.53,155.83) -- (212.03,155.83) .. controls (212.03,155.83) and (212.03,155.83) .. (212.03,155.83) -- cycle ;
		\draw [color={rgb, 255:red, 189; green, 16; blue, 224 }  ,draw opacity=0.4 ][line width=0.75]    (175.03,163.33) .. controls (214.63,133.63) and (192.84,182.01) .. (231.21,153.88) ;
		\draw [shift={(232.4,153)}, rotate = 143.13] [color={rgb, 255:red, 189; green, 16; blue, 224 }  ,draw opacity=0.4 ][line width=0.75]    (10.93,-3.29) .. controls (6.95,-1.4) and (3.31,-0.3) .. (0,0) .. controls (3.31,0.3) and (6.95,1.4) .. (10.93,3.29)   ;
		\draw [color={rgb, 255:red, 245; green, 166; blue, 35 }  ,draw opacity=0.4 ]   (278.78,199.08) .. controls (318.38,169.38) and (238.52,182.48) .. (275.73,153.64) ;
		\draw [shift={(276.9,152.75)}, rotate = 143.13] [color={rgb, 255:red, 245; green, 166; blue, 35 }  ,draw opacity=0.4 ][line width=0.75]    (10.93,-3.29) .. controls (6.95,-1.4) and (3.31,-0.3) .. (0,0) .. controls (3.31,0.3) and (6.95,1.4) .. (10.93,3.29)   ;
		\draw [color={rgb, 255:red, 34; green, 19; blue, 254 }  ,draw opacity=1 ]   (176,105) .. controls (210.8,107.63) and (172.52,122.22) .. (208.1,127.06) ;
		\draw [shift={(209.77,127.27)}, rotate = 186.83] [color={rgb, 255:red, 34; green, 19; blue, 254 }  ,draw opacity=1 ][line width=0.75]    (10.93,-3.29) .. controls (6.95,-1.4) and (3.31,-0.3) .. (0,0) .. controls (3.31,0.3) and (6.95,1.4) .. (10.93,3.29)   ;
		\draw [color={rgb, 255:red, 34; green, 19; blue, 254 }  ,draw opacity=1 ]   (211.7,120.67) -- (211.7,155.49) ;
		\draw [color={rgb, 255:red, 34; green, 19; blue, 254 }  ,draw opacity=1 ]   (285.2,120.67) -- (285.2,155.49) ;
		\draw [color={rgb, 255:red, 189; green, 16; blue, 224 }  ,draw opacity=1 ]   (212.03,121) -- (285.53,121) ;
		\draw [color={rgb, 255:red, 189; green, 16; blue, 224 }  ,draw opacity=1 ]   (212.03,155.83) -- (285.53,155.83) ;
		\draw [color={rgb, 255:red, 245; green, 166; blue, 35 }  ,draw opacity=1 ]   (258.03,171.33) -- (320.03,171.33) ;
		\draw [color={rgb, 255:red, 245; green, 166; blue, 35 }  ,draw opacity=1 ]   (258.03,136.08) -- (320.03,136.08) ;
		\draw [color={rgb, 255:red, 200; green, 190; blue, 20 }  ,draw opacity=1 ]   (258.03,136.08) -- (258.03,171.33) ;
		\draw [color={rgb, 255:red, 200; green, 190; blue, 20 }  ,draw opacity=1 ]   (320.03,136.08) -- (320.03,171.33) ;
		\draw [color={rgb, 255:red, 200; green, 190; blue, 20}  ,draw opacity=1 ]   (367.67,121.33) .. controls (322.79,125.75) and (372.64,150.4) .. (325.78,158.37) ;
		\draw [shift={(324.33,158.61)}, rotate = 351.12] [color={rgb, 255:red, 200; green, 190; blue, 20 }  ,draw opacity=1 ][line width=0.75]    (10.93,-3.29) .. controls (6.95,-1.4) and (3.31,-0.3) .. (0,0) .. controls (3.31,0.3) and (6.95,1.4) .. (10.93,3.29)   ;

		\draw (213.53,14.53) node [anchor=north west][inner sep=0.75pt]    {$E$};
		\draw (440.2,143.07) node [anchor=north west][inner sep=0.75pt]    {$( x,\xi )$};
		\draw (257.73,199.48) node [anchor=north west][inner sep=0.75pt]  [font=\small,color={rgb, 255:red, 245; green, 166; blue, 35 }  ,opacity=1 ]  {$( x_{2} ,\xi _{2})$};
		\draw (155.13,163.71) node [anchor=north west][inner sep=0.75pt]  [font=\small,color={rgb, 255:red, 189; green, 16; blue, 224 }  ,opacity=1 ]  {$( x_{1} ,\xi _{1})$};
		\draw (238.36,159.32) node [anchor=north west][inner sep=0.75pt]  [color={rgb, 255:red, 189; green, 16; blue, 224 }  ,opacity=1 ]  {$U_{1}$};
		\draw (301.98,175.1) node [anchor=north west][inner sep=0.75pt]  [color={rgb, 255:red, 245; green, 166; blue, 35 }  ,opacity=1 ]  {$U_{2}$};
		\draw (125.67,99.07) node [anchor=north west][inner sep=0.75pt]  [font=\small,color={rgb, 255:red, 34; green, 19; blue, 254 }  ,opacity=1 ]  {$\mathrm{supp}( \varphi _{1})$};
		\draw (370.33,112.4) node [anchor=north west][inner sep=0.75pt]  [font=\small,color={rgb, 255:red, 200; green, 190; blue, 20 }  ,opacity=1 ]  {$\mathrm{supp}( \varphi _{2})$};
	\end{tikzpicture}

	\caption{This illustrates the micorolocal construction of the (almost) projection $\pi(x,\xi)$ in a family of neighborhoods. The relevant part of the “pointwise” spectrum $\sigma \bigl ( h(x,\xi) \bigr )$ we wish to isolate is colored in red. It is separated from the remainder of the spectrum (in blue) by a local gap. }
	\label{operator_valued_calculus:figure:microlocal_construction_operator}
\end{figure}
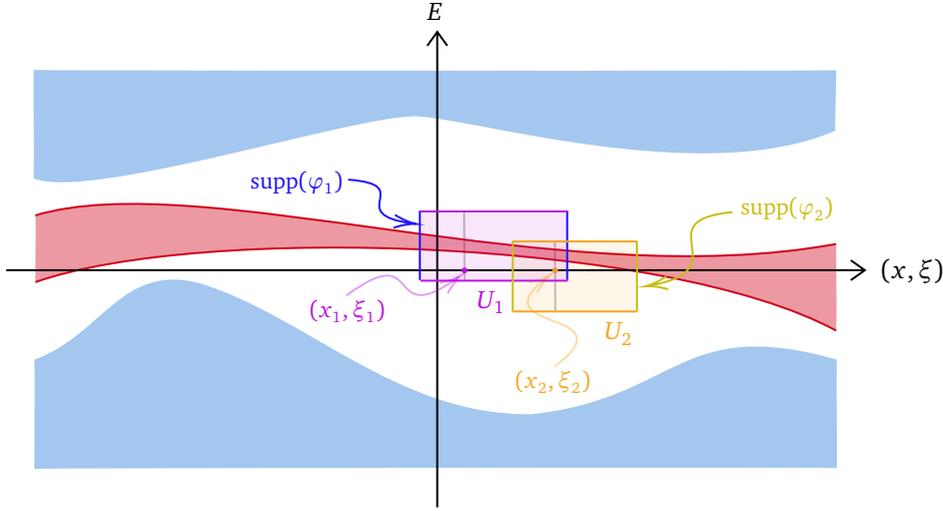
Lastly, let us finish our comparison between Moyal resolvents and parametrices. Indeed, it comes as no surprise that one can construct a functional calculus based on parametrices, and ascribe meaning to 
\begin{align*}
	\pi(x,\xi) := \frac{1}{\pi} \int_{\C} \dd z \, \partial_{\bar{z}} \widetilde{\varphi}(z) \, (h - z)^{(-1)_{\eps}}(x,\xi) + \order(\eps^{\infty})
	, 
\end{align*}
provided the parametrices exist for all relevant values $z$. As the notation suggests, we may construct $\pi$ and thus, the operator $\Op^A(\pi)$ \emph{microlocally}. The price we have to pay is that these operators are only well-defined up to $\order(\eps^{\infty})$. 

A common use case is to construct approximate spectral projections, something which has been implemented many times in the literature (\eg \cite{PST:sapt:2002,PST:effective_dynamics_Bloch:2003,PST:Born-Oppenheimer:2007,DeNittis_Lein:Bloch_electron:2009,DeNittis_Lein:sapt_photonic_crystals:2013,Fuerst_Lein:scaling_limits_Dirac:2008}). For the sake of discussion, we will dispense with mathematical rigor and focus only on the ideas behind the construction. Let us consider the expansion in the adiabatic/semiclassical parameter $\eps$. Here, the terms in the expansion of $\Weyl$ involve pointwise products of $B_{jk}$, the two functions and their derivatives; specifically, the $0$th-order term is just the pointwise product (of operators) (\cf \cite[Theorems~1.1 and 2.12]{Lein:two_parameter_asymptotics:2008}). In this circumstance, the invertibility condition~(c) in Corollary~\ref{operator_valued_calculus:cor:existence_parametrix}, which is necessary to construct the parametrix, simplifies to $z \not\in \sigma \bigl ( h(x,\xi) \bigr )$. 

A typical scenario is sketched in Figure~\ref{operator_valued_calculus:figure:microlocal_construction_operator}: the relevant part of the spectrum $\sigma_{\mathrm{rel}}(x,\xi)$ is the central band (colored in red), the remainder $\sigma \bigl ( h(x,\xi) \bigr ) \setminus \sigma_{\mathrm{rel}}(x,\xi)$ is the union of the upper and lower regions (colored in blue). The two spectral regions are separated by \emph{local} spectral gaps. 

To construct the (approximate) spectral projection $\Op^A(\pi)$ associated to $\sigma_{\mathrm{rel}}$, we need to find a function $\varphi$, which picks out $\sigma_{\mathrm{rel}}$. However, as the gap in Figure~\ref{operator_valued_calculus:figure:microlocal_construction_operator} is only local, but \emph{not global}, no single function $\varphi$ exists that covers only the relevant red part of the spectrum for \emph{all} $(x,\xi) \in T^* \R^d$ — we always get a contribution by the unwanted blue part of the spectrum or miss part of the relevant spectrum for some values of $(x,\xi)$. 

Instead, we use the functional calculus for the parametrix $(h - z)^{(-1)_{\eps}}(x,\xi)$ and define $\pi(x,\xi)$ (micro)locally. We begin by choosing a suitable countable covering $\{ U_j \}_{j \in \mathcal{I}}$ of $T^* \R^d$. Suitable means that on each $U_j$ we may choose a “smoothened characteristic function” $\varphi_j \in \Cont^{\infty}_{\mathrm{c}}(\R,\C)$ that satisfies 
\begin{align*}
	\varphi_j \vert_{\sigma_{\mathrm{rel}}(x,\xi)} &= 1 
	\qquad \forall (x,\xi) \in U_j
	, 
	\\
	\varphi_j \vert_{\sigma (h(x,\xi)) \setminus \sigma_{\mathrm{rel}}(x,\xi)} &= 0 
	\qquad \forall (x,\xi) \in U_j
	. 
\end{align*}
The values $\varphi_j$ takes inside the local gap on $U_j$ are irrelevant. Hence, we may use $\varphi_1$ or $\varphi_2$ from Figure~\ref{operator_valued_calculus:figure:microlocal_construction_operator} to define $\pi(x,\xi)$ on the intersection $U_1 \cap U_2$. Consequently, we are able to patch the local definitions together for they are consistent, and we obtain a globally defined, smooth function. To ensure $\pi$ lies in some Hörmander class, we need additional technical assumptions on $h$ and the spectral gap. 

Technically speaking, $\Op^A(\pi) = \Op^A(\pi)^2 + \order_{\norm{\cdot}}(\eps^{\infty})$ is only an \emph{almost} projection as it does not \emph{exactly} square to itself but only up to $\order(\eps^{\infty})$ in operator norm. Nevertheless, very often in applications we are only interested in finite-order expansions and the difference between true and almost projections is not relevant. What \emph{is} important, though, is that this “approximate spectral projection” exists at all even when the true spectral projection does not. Indeed, the inability of parametrices to reliably detect the spectrum of $\Op^A(h)$ can turn from a liability into an asset. 

Equivalently, we could have used the holomorphic functional calculus with the help of locally constant contours: the spectrum $\sigma \bigl ( h(x,\xi) \bigr )$ decomposes into two disjoint, closed subsets, and on each $U_j$ we may pick a contour $\Gamma_j$ that only encloses $\sigma_{\mathrm{rel}}(x,\xi)$. On the overlap of two neighborhoods $U_j \cap U_k \neq \emptyset$ the two symbols $\pi$ agree up to $\order(\eps^{\infty})$: the complex integral only depends on the spectrum the contour encloses, and $\sigma_{\mathrm{rel}}(x,\xi)$ is the set where the parametrix $(h - z)^{(-1)_{\eps}}(x,\xi)$ has its singularities. 

\subsubsection{Trace-class criterion for operator-valued magnetic $\Psi$DOs} 
\label{operator_valued_calculus:Hoermander_symbols:trace_class_criterion}
The Calderón-Vaillancourt Theorem~\ref{operator_valued_calculus:thm:Calderon_Vaillancourt} gives criteria when a pseudodifferential operator defines a bounded operator between $L^2$-spaces. However, this is not the only relevant Banach space of operators one encounters in applications. The other standard class of operators are the Banach space of trace class operators and its generalization, the $p$-Schatten classes (\cf \cite[Chapter~2]{Simon:trace_ideals_applications:2005}). And one may ask under what conditions a magnetic pseudodifferential operator defines a trace-class or $p$-Schatten class operator. 

This line of inquiry has been addressed in the literature. The first work by  \cite{Rondeaux:schatten_classes_pseudodifferential:1984} gave criteria for non-magnetic pseudodifferential operators for scalar-valued symbols; the analogous results for \emph{magnetic} $\Psi$DOs is due to Athmouni and Purice \cite{Athmouni_Purice:schatten_class_magnetic_Weyl:2018}. Stiepan and Teufel \cite{Stiepan_Teufel:semiclassics_op_valued_symbols:2012} have extended the non-magnetic result to operator-valued symbols. The next results follow from combining \cite{Athmouni_Purice:schatten_class_magnetic_Weyl:2018} with \cite{Stiepan_Teufel:semiclassics_op_valued_symbols:2012}. 

To state the result, we need to introduce the Sobolev spaces of order $(m,\mu)$, $m , \mu \in \N_0$, 
\begin{align*}
	W^{(m,\mu),p}(T^* \R^d , \mathcal{X}) := \Bigl \{ f \in L^p(T^* \R^d , \mathcal{X}) \; \; \big \vert \; \; 
	&(x,\xi) \mapsto \bnorm{\partial_x^a \partial_{\xi}^{\alpha} f(x,\xi)}_{\mathcal{X}} \in L^p(T^* \R^d) 
	\Bigr . \notag \\ 
    &\Bigl . 
    \forall \sabs{a} \leq m , \, \forall \sabs{\alpha} \leq \mu \Bigr \} 
\end{align*}
that take values in a Banach space $\mathcal{X}$. For out intents and purposes, the reader may think of $\mathcal{X} = \mathcal{B}(\Hil,\Hil')$ or $\mathcal{X} = \mathcal{L}^p(\Hil,\Hil')$, $1 \leq p < \infty$ (\cf Definition~\ref{appendix:extension_by_duality_operator_valued_calculus:defn:Schatten_class_operators}). When $\mathcal{X} = \mathcal{B}(\Hil,\Hil')$, we can equivalently characterize $f \in W^{(m,\mu),p} \bigl ( T^* \R^d , \mathcal{B}(\Hil,\Hil') \bigr )$ via the condition 
\begin{align*}
	(x,\xi) \mapsto \bscpro{\psi' \, }{ \, \partial_x^a \partial_{\xi}^{\alpha} f(x,\xi) \, \varphi}_{\Hil'} \in L^p(T^* \R^d) 
	&&
	\forall \varphi \in \Hil 
	, \; 
	\psi' \in \Hil' 
	, \; 
	\sabs{a} \leq m
	, \; 
	\sabs{\alpha} \leq \mu
	. 
\end{align*}
Note that elements of $W^{(m,\mu),p}(T^* \R^d , \mathcal{X})$ are tempered distributions, which is why we denote its elements with capital letters. Moreover, 

Since we may regard any element $f \in W^{(m,\mu),p}(T^* \R^d , \mathcal{X})$ as a tempered distribution when $\mathcal{X} = \mathcal{B}(\Hil,\Hil')$ or $\mathcal{X} = \mathcal{L}^1(\Hil,\Hil')$, we may use the extension of $\Op^A$ from Proposition~\ref{operator_valued_calculus:prop:extension_OpA_tempered_distributions} to tempered distributions in order to make sense of $\Op^A(f)$ as a magnetic pseudodifferential operator. The integrability of $f$ in both variables and all its derivatives up to orders $(m,\mu)$ not only ensure that $\Fourier_{\sigma} f$ exists, but when $m > d$ and $\mu > d$ are chosen large enough (as in the statement below), the (symplectic) Fourier transform is again integrable in both variables. Hence, we may interpret equation~\eqref{operator_valued_calculus:eqn:definition_Op_A} as a Bochner integral and bound the norm of the magnetic pseudodifferential operator $\Op^A(f)$ by 
\begin{align*}
	\bnorm{\Op^A(f)}_{\mathcal{B}(L^2(\R^d,\Hil) , L^2(\R^d,\Hil'))} \leq C \, \snorm{f}_{W^{(m,\mu),1}(T^* \R^d , \mathcal{B}(\Hil,\Hil'))} 
	. 
\end{align*}
With the preamble out of the way, let us state the generalization of \cite[Theorem~1.2]{Athmouni_Purice:schatten_class_magnetic_Weyl:2018}. 
\begin{theorem}\label{operator_valued_calculus:thm:magnetic_PsiDO_trace_class}
	Suppose the magnetic field $B$ is bounded in the sense of Assumption~\ref{operator_valued_calculus:assumption:bounded_magnetic_fields} and that $f \in W^{(m,\mu),1} \bigl ( T^* \R^d , \mathcal{L}^1(\Hil,\Hil') \bigr )$ holds for $m \geq 2 [\nicefrac{d}{2}] + 2$ and $\mu \geq d + [\nicefrac{d}{2}] + 1$. 
	\begin{enumerate}[(1)]
		\item The operator $\Op^A(f) \in \mathcal{L}^1 \bigl ( L^2(\R^d,\Hil) \, , \, L^2(\R^d,\Hil') \bigr )$ is trace class. 
		\item When $\Hil = \Hil'$ we may express the trace of the operator as a phase space integral, 
		\begin{align*}
			\trace_{L^2(\R^d,\Hil)} \bigl ( \Op^A(f) \bigr ) = \frac{1}{(2\pi)^d} \int_{T^* \R^d} \dd x \, \dd \xi \, \trace_{\Hil} \bigl ( f(\eps x,\xi) \bigr ) 
			. 
		\end{align*}
	\end{enumerate}
\end{theorem}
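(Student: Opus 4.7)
The plan is to combine the scalar-valued magnetic trace-class criterion of \cite{Athmouni_Purice:schatten_class_magnetic_Weyl:2018} with the operator-valued non-magnetic analog of \cite{Stiepan_Teufel:semiclassics_op_valued_symbols:2012}, substituting absolute values $\sabs{f(X)}$ by Bochner trace-class seminorms $\bnorm{f(X)}_{\mathcal{L}^1(\Hil,\Hil')}$ throughout and upgrading the scalar $L^2$-Hilbert-Schmidt criterion to its operator-valued version $\bnorm{\Op^A(g)}_{\mathcal{L}^2}^2 = (2\pi)^{-d} \int_{T^* \R^d} \dd X \, \bnorm{g(X)}_{\mathcal{L}^2(\Hil,\Hil')}^2$. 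These substitutions are viable because $W^A(X) \otimes \id_{\Hil}$ acts trivially on the Hilbert-space fiber and the magnetic phase is scalar-valued, so that all interchanges between the Hilbert-space trace and the phase-space integrals reduce to (possibly infinite-dimensional) Bochner-Fubini arguments. The first step is to reduce to Schwartz-valued symbols by mollification, using density of $\Schwartz(T^* \R^d, \mathcal{L}^1(\Hil,\Hil'))$ in $W^{(m,\mu),1}(T^* \R^d, \mathcal{L}^1(\Hil,\Hil'))$ together with the Banach-ideal property of $\mathcal{L}^1$; this reduces the problem to the quantitative estimate $\bnorm{\Op^A(f)}_{\mathcal{L}^1} \leq C \, \snorm{f}_{W^{(m,\mu),1}(\mathcal{L}^1)}$ for Schwartz-valued $f$, with $C$ depending only on $d, m, \mu$ and $B$.

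To obtain that estimate I would factor $\Op^A(f) = T_1 \cdot T_2$ as a product of two Hilbert-Schmidt operators. Splitting the Schwartz kernel $K^A_f \in \Schwartz(\R^d \times \R^d, \mathcal{L}^1(\Hil,\Hil'))$ into two $\mathcal{L}^2$-valued Schwartz kernels is engineered through scalar weights $w_{-k} \in S^{-k}_{1,0}(\C)$ from \eqref{operator_valued_calculus:eqn:definition_alternate_weights_magnetic_Sobolev_spaces} with $k > [\nicefrac{d}{2}]+\nicefrac{1}{2}$, which are used to distribute the $(x,\xi)$-regularity burden between the two factors; the pointwise trace-class structure of $f(X)$ then provides enough $\mathcal{L}^2$-control in the Hilbert-space direction via the continuous embedding $\mathcal{L}^1 \hookrightarrow \mathcal{L}^2$ and the inequality $\bnorm{f(X)}_{\mathcal{L}^2} \leq \bnorm{f(X)}_{\mathcal{L}^1}$. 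Each factor is Hilbert-Schmidt by the operator-valued Plancherel identity, and the Schatten bound $\snorm{T_1 T_2}_{\mathcal{L}^1} \leq \snorm{T_1}_{\mathcal{L}^2} \snorm{T_2}_{\mathcal{L}^2}$ combined with careful tracking of Sobolev orders through the splitting yields exactly the thresholds $m \geq 2[\nicefrac{d}{2}]+2$ (two factors of $[\nicefrac{d}{2}]+1$ position derivatives) and $\mu \geq d + [\nicefrac{d}{2}]+1$ (the $d$ compensates the $\eta$-integration in the kernel formula, the remaining $[\nicefrac{d}{2}]+1$ the second Hilbert-Schmidt factor).

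For the trace formula, the Schwartz kernel $K^A_f \in \Schwartz(\R^d \times \R^d, \mathcal{L}^1(\Hil))$ is continuous with trace-class values, so the operator-valued analog of Mercer's theorem, proven by expanding against a tensor-product frame as in Section \ref{operator_valued_calculus:Hoermander_symbols:commutator_criteria} or \cite[Section~4]{Stiepan_Teufel:semiclassics_op_valued_symbols:2012}, gives $\trace_{L^2(\R^d,\Hil)} \Op^A(f) = \int_{\R^d} \dd x \, \trace_{\Hil} K^A_f(x,x)$. Specializing the kernel formula \eqref{operator_valued_calculus:eqn:operator_kernel} to $x=y$ makes the magnetic phase $\e^{- \ii \frac{\lambda}{\eps} \int_{[\eps x, \eps x]} A} = 1$ collapse, and Bochner-Fubini produces the claimed identity after commuting $\trace_{\Hil}$ with the $\eta$-integral (justified by the $\Schwartz(T^*\R^d, \mathcal{L}^1(\Hil))$-regularity). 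Both sides extend by continuity in $\snorm{\cdot}_{W^{(m,\mu),1}(\mathcal{L}^1)}$ to general $f$ in the Sobolev space. The main technical obstacle lies in Step~2: organizing the Hilbert-Schmidt factorization without losing sharp Sobolev indices, since the naive pointwise polar decomposition of $f(X) \in \mathcal{L}^1(\Hil,\Hil')$ is generically only measurable, so the splitting must be engineered purely through smooth scalar weight manipulations on the kernel level, essentially integrating by parts $\sabs{\alpha}$ times in $\eta$ and $\sabs{a}$ times in $x$ to convert $W^{(m,\mu),1}(\mathcal{L}^1)$-regularity into the joint $L^2(T^*\R^d, \mathcal{L}^2)$-control required by the operator-valued Plancherel identity.
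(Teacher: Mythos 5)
Your overall route---reduce to Schwartz symbols by mollification, prove a quantitative bound $\snorm{\Op^A(f)}_{\mathcal{L}^1} \leq C \snorm{f}_{W^{(m,\mu),1}(\mathcal{L}^1)}$ by a Hilbert--Schmidt factorization, and get the trace formula from the kernel diagonal---is genuinely different from the paper's. The paper never factorizes: after reducing (via equality of the trace norms of $\Op^A(f)$ and $\Op^A(f)^*$, plus a polarization argument at the end) to $\Hil = \Hil'$ and $f \geq 0$, it fixes an orthonormal basis $\{\varphi_k\}$ of $\Hil$, applies the \emph{scalar} theorem of Athmouni--Purice to each matrix element $f_{kj} = \scpro{\varphi_k}{f(\cdot)\,\varphi_j}_{\Hil} \in W^{(m,\mu),1}(T^*\R^d,\C)$, and identifies $\sum_k \trace_{L^2(\R^d)}\bigl(\Op^A(f_{kk})\bigr)$ with $(2\pi\eps)^{-d}\,\snorm{f}_{L^1(T^*\R^d,\mathcal{L}^1(\Hil))}$; the trace formula in part~(2) falls out of the same computation. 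That argument uses the scalar result as a black box and sidesteps any quantitative operator-valued Schatten estimate.

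The gap in your proposal sits in Step~2, which is also its crux. The weights $w_{-k} \in S^{-k}_{1,0}(\C)$ from \eqref{operator_valued_calculus:eqn:definition_alternate_weights_magnetic_Sobolev_spaces} depend on $\xi$ alone, so $\Op^A(w_{-k})$ is an invertible function of $P^A$ only; its kernel has no decay in $x + y$ and the operator is \emph{never} Hilbert--Schmidt on $L^2(\R^d,\Hil)$, for any $k$. A Hilbert--Schmidt factor needs joint decay in position \emph{and} momentum, i.e.\ factors of the type $\sexpval{Q}^{-N}\sexpval{P^A}^{-M}$ with $N, M > \nicefrac{d}{2}$; the position decay must be extracted from the $L^1$-integrability of $f$ in $x$, and splitting that integrability between two factors is exactly the step that cannot be performed by ``smooth scalar weight manipulations'' on the symbol side --- it amounts to writing an $L^1$ function of $x$ as a product of two $L^2$ functions compatibly with the $\mu$ derivatives, which is the measurable square-root problem you yourself flag and then do not resolve. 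The usual repair writes $\Op^A(f) = B \cdot \sexpval{Q}^{-N}\sexpval{P^A}^{-M}$ with the right factor trace class and $B$ bounded via Calder\'on--Vaillancourt (or follows Athmouni--Purice's actual decomposition), and the derivative counting that yields precisely $m \geq 2[\nicefrac{d}{2}]+2$ and $\mu \geq d + [\nicefrac{d}{2}]+1$ must then be redone; it is not a formal consequence of the splitting you describe. Your Steps~1 and~3 (mollification; operator-valued Mercer via a tensor-product frame, with the magnetic phase collapsing on the diagonal) are sound and reproduce the stated formula.
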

\begin{proof}
	We will prove (2) in the process of proving (1). 
	
	Before we begin with the proof proper, we will justify a few simplifications. First of all, we may assume we are dealing with the more difficult case where $\dim \Hil = \infty = \dim \Hil'$. Otherwise, one of the spaces is finite-dimensional, $\dim \Hil < \infty$ or $\dim \Hil' < \infty$, and the trace over $\Hil$ or $\Hil'$ involves a finite sum. Then also all the relevant sums below will be finite and questions of convergence do not arise. 
	
	To see this, we exploit that the trace norm of $\Op^A(f)$ equals the trace norm of its adjoint: Hilbert spaces are reflexive, and applying \cite[Proposition~47.5]{Treves:topological_vector_spaces:1967} twice yields the equality of both norms, 
	\begin{align*}
		\bnorm{\Op^A(f)^*}_{\mathcal{L}^1(L^2(\R^d,\Hil'),L^2(\R^d,\Hil))} &\leq \bnorm{\Op^A(f)}_{\mathcal{L}^1(L^2(\R^d,\Hil),L^2(\R^d,\Hil'))}
		\\
		&= \bnorm{\Op^A(f)^{\ast \ast}}_{\mathcal{L}^1(L^2(\R^d,\Hil),L^2(\R^d,\Hil'))} 
		\\
		&\leq \bnorm{\Op^A(f)^*}_{\mathcal{L}^1(L^2(\R^d,\Hil'),L^2(\R^d,\Hil))}
		. 
	\end{align*}
	So we may use either $\Op^A(f)$ or $\Op^A(f)^*$ in our estimates. 
	
	By its very Definition~\ref{appendix:extension_by_duality_operator_valued_calculus:defn:trace_class_operators}, the operator $\Op^A(f) \in \mathcal{L}^1 \bigl ( L^2(\R^d,\Hil) \, , L^2(\R^d,\Hil') \bigr )$ is trace class whenever its modulus $\babs{\Op^A(f)} \in \mathcal{L}^1 \bigl ( L^2(\R^d,\Hil) \bigr )$ is. Equivalently, we could have used the absolute value $\babs{\Op^A(f)^*} \in \mathcal{L}^1 \bigl ( L^2(\R^d,\Hil') \bigr )$ of the operator adjoint. When $\Hil$ or $\Hil'$ is finite-dimensional, we pick either $\Op^A(f)$ or $\Op^A(f)^*$ and exploit that the relevant sums are finite. 
	
	Consequently, we will proceed under the assumption $\dim \Hil = \infty = \dim \Hil'$. Even in this case, the above arguments tell us we may assume $\Hil = \Hil'$ and $\Op^A(f) = \babs{\Op^A(f)} \geq 0$ in the remainder of the proof. 
	\medskip
	
	\noindent
	The first step is to make a connection to \cite[Theorem~1.2]{Athmouni_Purice:schatten_class_magnetic_Weyl:2018}, the analogous result for the scalar-valued case. For $\Hil$ we pick some orthonormal basis $\{ \varphi_k \}_{k \in \N}$. Then for each pair $k , j \in \N$ we therefore obtain a scalar-valued function 
	\begin{align*}
		(x,\xi) \mapsto f_{k j}(x,\xi) := \bscpro{\varphi_k}{f(x,\xi) \varphi_j}_{\Hil} \in W^{(m,\mu),1}(T^* \R^d , \C)
	\end{align*}
	that lies in the appropriate Sobolev space.
	
	Each of the $f_{k j}$ satisfies the assumptions of \cite[Theorem~1.2]{Athmouni_Purice:schatten_class_magnetic_Weyl:2018} and hence, $\Op^A(f_{k j})$ is trace-class for all $k , j \in \N$. Consequently, we may compute the trace as the integral of the diagonal of the kernel~\eqref{operator_valued_calculus:eqn:operator_kernel} (\cf \cite[Theorem~3.1]{Brislawn:kernels_trace_class_operators:1988}), 
	\begin{align}
		\trace_{L^2(\R^d)} \bigl ( \Op^A(f_{k j}) \bigr ) &= \int_{\R^d} \dd x \, K^A_{f_{k j}}(x,x)
		\notag \\
		&= \frac{1}{(2\pi)^d} \int_{\R^d} \dd x \int_{\R^d} \dd \xi \, f_{k j}(\eps x,\xi) 
		\notag \\
		&= \frac{1}{(2\pi \eps)^d} \int_{\R^d} \dd x \int_{\R^d} \dd \xi \, f_{k j}(x,\xi) 
		. 
		\label{operator_valued_calculus:eqn:trace_matrix_element_equals_phase_space_integral}
	\end{align}
	For the special case $j = k$ this computation shows that $\Op^A(f_{k k}) \geq 0$ is non-negative exactly when $f_{k k} \geq 0$ is non-negative almost everywhere. As our choice of orthonormal basis was arbitrary, $f_{k k} \geq 0$ for all $k \in \N$ implies $f \geq 0$ on a set of full measure. 
	
	Our assumptions on $f \geq 0$ guarantee that the $L^1 \bigl ( T^* \R^d , \mathcal{L}^1(\Hil) \bigr )$ norm of $\Op^A(f)$ is finite, which implies the finiteness of each of the sums in 
	\begin{align*}
		(2\pi \eps)^{-d} \, \snorm{f}_{L^1(T^* \R^d , \mathcal{L}^1(\Hil))} 
		&= \frac{1}{(2\pi \eps)^d} \int_{T^* \R^d} \dd X \, \bnorm{f(X)}_{\mathcal{L}^1(\Hil)} 
		\\
		&= \frac{1}{(2\pi \eps)^d} \int_{T^* \R^d} \dd X \, \sum_{k \in \N} \bscpro{\varphi_k}{f(X) \varphi_k}_{\Hil} 
		\\
		&= \sum_{k \in \N} \frac{1}{(2\pi \eps)^d} \int_{T^* \R^d} \dd X \, f_{k k}(X) 
		= \sum_{k \in \N} \trace_{L^2(\R^d)} \bigl ( \Op^A(f_{k k}) \bigr ) 
		< \infty 
		.  
	\end{align*}
	Next, let us pick some arbitrary orthonormal basis $\{ \chi_n \}_{n \in \N}$ of $L^2(\R^d)$. Then the (Hilbert space) tensor product decomposition of $L^2(\R^d,\Hil) \cong L^2(\R^d) \otimes \Hil$ means $\bigl \{ \chi_n \otimes \varphi_k \bigr \}_{n , k \in \N}$ is an orthonormal basis of $L^2(\R^d,\Hil)$. Consequently, we can recover the trace on $L^2(\R^d,\Hil)$, 
	\begin{align*}
		(2\pi \eps)^{-d} \, \snorm{f}_{L^1(T^* \R^d , \mathcal{L}^1(\Hil))} &= \sum_{k \in \N} \sum_{n \in \N} \bscpro{\chi_n}{\Op^A(f_{k k}) \chi_n}_{L^2(\R^d)} 
		\\
		&= \sum_{n , k \in \N} \bscpro{\chi_n \otimes \varphi_k \, }{ \, \Op^A(f) \chi_n \otimes \varphi_k}_{L^2(\R^d,\Hil)}
		. 
	\end{align*}
	The value of the trace is independent of our choice of orthonormal basis, and hence, we have established the equality from item~(2) for $f \geq 0$.
	
	These arguments extend from non-negative functions to arbitrary operator-valued functions $f \in W^{(m,\mu),1} \bigl ( T^* \R^d , \mathcal{L}^1(\Hil) \bigr )$ by polarization, we just have to define real and imaginary part with respect to the Hilbert space adjoint of $f$, \eg $f_{\Re} := \tfrac{1}{2} (f + f^*) = f_{\Re,+} - f_{\Re,-}$ and $f_{\Re,+} = \tfrac{1}{2} \bigl ( f_{\Re} + \sabs{f_{\Re}} \bigr )$. 
\end{proof}
This generalizes immediately to $p$-Schatten classes by interpolation: Theorem~\ref{operator_valued_calculus:thm:magnetic_PsiDO_trace_class} covers the case $p = 1$ and the Calderón-Vaillancourt Theorem~\ref{operator_valued_calculus:thm:Calderon_Vaillancourt} the case $p = \infty$. For details, we refer to Remark~1.4 and Corollary~1.5 in \cite{Athmouni_Purice:schatten_class_magnetic_Weyl:2018}. 
\begin{corollary}\label{operator_valued_calculus:cor:magnetic_PsiDO_p_Schatten_class}
	Suppose the magnetic field $B$ is bounded in the sense of Assumption~\ref{operator_valued_calculus:assumption:bounded_magnetic_fields} and that $f \in W^{(m,\mu),p} \bigl ( T^* \R^d , \mathcal{L}^p(\Hil,\Hil') \bigr )$ holds. Then there exist to constants $b_{\min}(d)$ and $\beta_{\min}(d)$ that only depend on the dimension $d$ of $T^* \R^d$ so that whenever $m \geq b_{\min}(d)$ and $\mu \geq \beta_{\min}(d)$ the magnetic pseudodifferential operator $\Op^A(F) \in \mathcal{L}^p \bigl ( L^2(\R^d,\Hil) , L^2(\R^d,\Hil') \bigr )$ lies in the $p$-Schatten class. 
\end{corollary}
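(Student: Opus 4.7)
The plan is to deduce the $p$-Schatten case from the two endpoint results we already have: Theorem~\ref{operator_valued_calculus:thm:magnetic_PsiDO_trace_class} handles $p = 1$ and the Calderón-Vaillancourt Theorem~\ref{operator_valued_calculus:thm:Calderon_Vaillancourt} (together with the estimate~\eqref{operator_valued_calculus:eqn:Calderon_Vaillancourt_estimate}) handles $p = \infty$. The intermediate cases $1 < p < \infty$ will then follow by complex interpolation, exactly as in Remark~1.4 and Corollary~1.5 of \cite{Athmouni_Purice:schatten_class_magnetic_Weyl:2018}, with the only modification being that absolute values and the scalar $L^1$-norm are everywhere replaced by the trace norm $\norm{\cdot}_{\mathcal{L}^1(\Hil,\Hil')}$, respectively the operator norm $\norm{\cdot}_{\mathcal{B}(\Hil,\Hil')}$.

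Concretely, I would first set $b_1, \beta_1$ to be the orders guaranteed by Theorem~\ref{operator_valued_calculus:thm:magnetic_PsiDO_trace_class} (so $b_1 \geq 2[\nicefrac{d}{2}] + 2$, $\beta_1 \geq d + [\nicefrac{d}{2}] + 1$), and $b_\infty, \beta_\infty$ to be orders sufficient for the operator-valued Calderón-Vaillancourt estimate to produce a bounded operator from a Sobolev-type control on the symbol (a finite number of seminorms suffice, so one can bound $\norm{\Op^A(f)}_{\mathcal{B}(L^2(\R^d,\Hil),L^2(\R^d,\Hil'))}$ by $\norm{f}_{W^{(b_\infty,\beta_\infty),\infty}(T^*\R^d,\mathcal{B}(\Hil,\Hil'))}$). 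Then set $b_{\min}(d) := \max(b_1,b_\infty)$ and $\beta_{\min}(d) := \max(\beta_1,\beta_\infty)$; because derivatives only help, enlarging the orders does not break either endpoint bound.

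The interpolation step proper: consider, for $\theta \in [0,1]$, the analytic family of linear maps $T_\theta : f \mapsto \Op^A(f)$. On the symbol side, $W^{(m,\mu),p}(T^*\R^d,\mathcal{L}^p(\Hil,\Hil'))$ arises by complex interpolation between $W^{(m,\mu),1}(T^*\R^d,\mathcal{L}^1(\Hil,\Hil'))$ and $W^{(m,\mu),\infty}(T^*\R^d,\mathcal{B}(\Hil,\Hil'))$ with $\tfrac{1}{p} = 1-\theta$; the Banach-space valued interpolation of $L^p$-spaces combined with the commutation of derivatives with the interpolation functor yields this identification (see e.g.\ the vector-valued analog of the Riesz-Thorin theorem). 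On the operator side, $\mathcal{L}^p(L^2(\R^d,\Hil),L^2(\R^d,\Hil'))$ is the classical complex interpolant between the trace class and the bounded operators. The two endpoint estimates
\begin{align*}
\norm{\Op^A(f)}_{\mathcal{L}^1} &\leq C_1 \, \norm{f}_{W^{(b_{\min},\beta_{\min}),1}(T^*\R^d,\mathcal{L}^1(\Hil,\Hil'))}, \\
\norm{\Op^A(f)}_{\mathcal{B}} &\leq C_\infty \, \norm{f}_{W^{(b_{\min},\beta_{\min}),\infty}(T^*\R^d,\mathcal{B}(\Hil,\Hil'))}
\end{align*}
then interpolate to the desired intermediate estimate.

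The only technical point I expect to require any real care is the interpolation identity for the vector-valued Sobolev spaces, specifically that the weak derivative $\partial_x^a \partial_\xi^\alpha$ commutes with the complex interpolation functor when the target Banach space varies along the interpolation scale $(\mathcal{L}^1,\mathcal{B})$. This is standard in the scalar-valued setting, and in our setting it follows from the fact that $\mathcal{L}^p(\Hil,\Hil')$ is obtained as a complex interpolation couple (a classical result of Calderón) so that $\mathcal{L}^p$-valued Sobolev spaces admit the usual interpolation identities. Once that is in place, no new analytic input is needed: the proof is essentially a verbatim transcription of \cite[Cor.~1.5]{Athmouni_Purice:schatten_class_magnetic_Weyl:2018}, with scalar integrands replaced by Bochner integrands valued in the Schatten ideals.
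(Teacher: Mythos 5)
Your proposal matches the paper's argument exactly: the paper likewise obtains the $p$-Schatten case by complex interpolation between the trace-class endpoint (Theorem~\ref{operator_valued_calculus:thm:magnetic_PsiDO_trace_class}) and the Calderón-Vaillancourt endpoint $p = \infty$, deferring all details to Remark~1.4 and Corollary~1.5 of \cite{Athmouni_Purice:schatten_class_magnetic_Weyl:2018}. You have in fact supplied more detail (the choice of $b_{\min}$, $\beta_{\min}$ as maxima of the endpoint requirements and the interpolation identities for the vector-valued Sobolev scales) than the paper, which contents itself with the citation.
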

Of course, we are primarily interested in the case of magnetic pseudodifferential operators associated to some operator-valued Hörmander symbol. Suppose $f \in S^m_{\rho,\delta} \bigl ( \mathcal{L}^1(\Hil,\Hil') \bigr )$ belongs to some Hörmander class of order $m$ and type $(\rho,\delta)$, which takes values in the trace-class operators; these are defined analogously to Definition~\ref{operator_valued_calculus:defn:Hoermander_symbols} after replacing operator norms with trace norms. The integrability condition in $\xi$ implies a simple condition on $m$: 
\begin{corollary}\label{operator_valued_calculus:cor:magnetic_PsiDO_trace_class_necessary_conditions_order_Hoermander_class}
	Suppose the magnetic field $B$ is bounded in the sense of Assumption~\ref{operator_valued_calculus:assumption:bounded_magnetic_fields} and $f \in S^m_{\rho,\delta} \bigl ( \mathcal{L}^1(\Hil,\Hil') \bigr )$ is a Hörmander symbol taking values in $\mathcal{L}^1(\Hil,\Hil')$. A necessary condition for $\Op^A(f) \in \mathcal{L}^1 \bigl ( L^2(\R^d,\Hil) , L^2(\R^d,\Hil') \bigr )$ being trace class is 
	\begin{align*}
		m + \bigl ( 2 [\nicefrac{d}{2}] + 2 \bigr ) \, \delta < -d 
		. 
	\end{align*}
\end{corollary}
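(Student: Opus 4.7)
\noindent
My plan is to \emph{derive} the stated inequality by pulling the hypothesis of Theorem~\ref{operator_valued_calculus:thm:magnetic_PsiDO_trace_class} (or equivalently Corollary~\ref{operator_valued_calculus:cor:magnetic_PsiDO_p_Schatten_class} for $p=1$) back to a condition on the order and type of the Hörmander class. In other words, I read the Corollary as saying that, in order to apply the previous trace-class criterion to a Hörmander symbol $f \in S^m_{\rho,\delta} \bigl ( \mathcal{L}^1(\Hil,\Hil') \bigr )$, the inequality $m + (2[\nicefrac{d}{2}]+2) \, \delta < -d$ is forced.

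First I would recall the hypothesis of Theorem~\ref{operator_valued_calculus:thm:magnetic_PsiDO_trace_class}: trace-classness of $\Op^A(f)$ is guaranteed as soon as
\begin{align*}
	f \in W^{(b,\beta),1} \bigl ( T^* \R^d , \mathcal{L}^1(\Hil,\Hil') \bigr )
\end{align*}
for $b \geq 2[\nicefrac{d}{2}]+2$ and $\beta \geq d + [\nicefrac{d}{2}] + 1$. Hence the goal reduces to asking when the Hörmander estimates
\begin{align*}
	\bnorm{\partial_x^a \partial_{\xi}^{\alpha} f(x,\xi)}_{\mathcal{L}^1(\Hil,\Hil')} \leq C_{a\alpha} \, \sexpval{\xi}^{m + \delta \sabs{a} - \rho \sabs{\alpha}}
	,
	\qquad
	\sabs{a} \leq b , \; \sabs{\alpha} \leq \beta
	,
\end{align*}
which follow from Definition~\ref{operator_valued_calculus:defn:Hoermander_symbols} with the operator norm replaced by the trace norm, are integrable (in $\xi$) on $\R^d$.

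Secondly, I would compute the least favorable choice of multi indices. Since $\delta \geq 0$ and $\rho \geq 0$, the growth in $\sexpval{\xi}$ is maximal when $\sabs{a} = b = 2[\nicefrac{d}{2}]+2$ and $\sabs{\alpha} = 0$. Integrability on $\R^d$ in $\xi$ of $\sexpval{\xi}^{m + \delta b}$ then requires $m + \delta b < -d$, which is the stated inequality. If this inequality fails, then already the $\xi$-integral of the bound on $\partial_x^{a_0} f$ for some $\sabs{a_0} = b$ diverges, and the derivative cannot belong to $L^1(T^* \R^d , \mathcal{L}^1(\Hil,\Hil'))$; in particular the Sobolev hypothesis of Theorem~\ref{operator_valued_calculus:thm:magnetic_PsiDO_trace_class} is violated.

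The step I expect to require the most care is the $x$-integrability, since Hörmander symbols are only \emph{bounded} (not decaying) in $x$. In keeping with the \enquote{necessary} interpretation, the honest reading is that the inequality $m + (2[\nicefrac{d}{2}]+2) \, \delta < -d$ is the $\xi$-decay needed to bring $f$ into the Sobolev-type space of the trace-class criterion, while the $x$-integrability must be supplied separately (typically by compact support or fast enough $x$-decay on top of the Hörmander estimates). I would therefore phrase the statement and proof explicitly in terms of the $\sexpval{\xi}$-integrability condition that arises from the Hörmander seminorms at the extremal multi indices, and flag that the remaining $x$-integrability is an independent requirement inherited from the underlying Sobolev space in Theorem~\ref{operator_valued_calculus:thm:magnetic_PsiDO_trace_class}.
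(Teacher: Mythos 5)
Your argument is correct and follows essentially the same route as the paper: identify the worst-case derivative $\partial_x^a f$ with $\sabs{a} = 2[\nicefrac{d}{2}]+2$, note that it lies in $S^{m + (2[\nicefrac{d}{2}]+2)\delta}_{\rho,\delta}$, and demand $\xi$-integrability of $\sexpval{\xi}^{m + (2[\nicefrac{d}{2}]+2)\delta}$. Your explicit remark that the $x$-integrability is a separate requirement not supplied by the Hörmander estimates is a point the paper's proof passes over silently, and is worth keeping.
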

\begin{proof}
	Since the operator norm estimate in Theorem~\ref{operator_valued_calculus:thm:magnetic_PsiDO_trace_class} requires us to control $x$-derivatives of order up to and including $2 [\nicefrac{d}{2}] + 2$, we know that the maximal value of the order of 
	\begin{align*}
		\partial_x^a \partial_{\xi}^{\alpha} f \in S^{m + ( 2 [\nicefrac{d}{2}] + 2 ) \delta}_{\rho,\delta} \bigl ( \mathcal{L}^1(\Hil,\Hil') \bigr ) 
	\end{align*}
	is $m + \bigl ( 2 [\nicefrac{d}{2}] + 2 \bigr ) \, \delta$. To ensure integrability in $\xi$, we therefore need to impose the condition given in the statement of Corollary~\ref{operator_valued_calculus:cor:magnetic_PsiDO_trace_class_necessary_conditions_order_Hoermander_class}. 
\end{proof}
In applications, we will often encounter products of operators where \eg trace class property in $L^2(\R^d)$ and $\Hil$ are “contributed by different operators”. An example would be the situation covered in the following 
\begin{corollary}\label{operator_valued_calculus:cor:product_W_1_Hoermander_class_trace_class}
	Suppose the magnetic field $B$ is bounded in the sense of Assumption~\ref{operator_valued_calculus:assumption:bounded_magnetic_fields} and we are given two functions $f$ and $g$ that satisfy the following assumptions: 
	\begin{enumerate}[(a)]
		\item $f \in W^{(\alpha,\beta),1} \bigl ( T^* \R^d , \mathcal{B}(\Hil',\Hil) \bigr )$
		\item $g \in S^0_{0,0} \bigl ( \mathcal{B}(\Hil,\Hil') \bigr )$
		\item $f \, g : T^* \R^d \longrightarrow \mathcal{L}^1(\Hil)$
		\item $(x,\xi) \mapsto \trace_{\Hil} \bigl ( f(x,\xi) \, g(x,\xi) \bigr ) \in L^1(T^* \R^d)$ 
	\end{enumerate}
	Then the operator $\Op^A(f) \, \Op^A(g) \in \mathcal{L}^1 \bigl ( L^2(\R^d,\Hil) \bigr )$ is trace class and its trace is given by 
	\begin{align*}
		\trace_{L^2(\R^d,\Hil)} \bigl ( \Op^A(f) \, \Op^A(g) \bigr ) = \frac{1}{(2 \pi \eps)^d} \int_{T^* \R^d} \dd x \, \dd \xi \, \trace_{\Hil} \bigl ( f(x,\xi) \, g(x,\xi) \bigr ) 
		. 
	\end{align*}
\end{corollary}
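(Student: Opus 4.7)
The strategy is to realize $\Op^A(f)\Op^A(g) = \Op^A(f \Weyl g)$, to first prove the trace-class property and trace formula for Schwartz approximants of $g$, and then to pass to the limit in trace norm. The main ingredients are Theorem~\ref{operator_valued_calculus:thm:magnetic_PsiDO_trace_class}, Lemma~\ref{operator_valued_calculus:lem:int_Weyl_product_equals_int_product}, and the cyclicity of the trace.

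First I verify that $\Op^A(f)\Op^A(g)$ is a well-defined bounded operator on $L^2(\R^d,\Hil)$. Reading hypothesis~(a) with $\alpha,\beta$ at least as large as the orders appearing in Theorem~\ref{operator_valued_calculus:thm:magnetic_PsiDO_trace_class}, the Bochner integral representation in the preamble of that theorem gives the Sobolev bound $\bnorm{\Op^A(f)}_{\mathcal{B}(L^2(\R^d,\Hil'),L^2(\R^d,\Hil))} \leq C\,\snorm{f}_{W^{(\alpha,\beta),1}(\mathcal{B}(\Hil',\Hil))}$, and Calderón-Vaillancourt (Theorem~\ref{operator_valued_calculus:thm:Calderon_Vaillancourt}) controls $\Op^A(g)$, so the composition is bounded on $L^2(\R^d,\Hil)$.

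Next I approximate $g$ by a sequence $g_n \in \Schwartz(T^*\R^d,\mathcal{L}^1(\Hil,\Hil'))$ — for instance $g_n(X) := \chi(X/n)\,P'_n \, g(X) \, P_n$ with $\chi \in \Cont^{\infty}_{\mathrm{c}}(T^*\R^d)$ a phase-space cutoff equal to one near the origin and $P_n, P'_n$ increasing finite-rank projections on $\Hil$ and $\Hil'$. For each $n$, Proposition~\ref{operator_valued_calculus:prop:extension_OpA_tempered_distributions}~(1) makes $\Op^A(g_n)$ trace-class, and hence $\Op^A(f)\Op^A(g_n) \in \mathcal{L}^1(L^2(\R^d,\Hil))$ as the product of a bounded and a trace-class operator. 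Using cyclicity of the trace, the identity $\Op^A(g_n)\Op^A(f) = \Op^A(g_n \Weyl f)$, Theorem~\ref{operator_valued_calculus:thm:magnetic_PsiDO_trace_class}~(2) applied to the Schwartz-type symbol $g_n \Weyl f$, Lemma~\ref{operator_valued_calculus:lem:int_Weyl_product_equals_int_product}, and pointwise cyclicity $\trace_{\Hil'}(g_n(X)\, f(X)) = \trace_{\Hil}(f(X)\, g_n(X))$, one obtains
\begin{equation*}
	\trace_{L^2(\R^d,\Hil)}\bigl(\Op^A(f)\Op^A(g_n)\bigr) = \frac{1}{(2\pi\eps)^d}\int_{T^*\R^d}\trace_{\Hil}\bigl(f(X)\,g_n(X)\bigr)\,\dd X,
\end{equation*}
and hypothesis~(d) combined with dominated convergence sends the right-hand side to $(2\pi\eps)^{-d}\int\trace_{\Hil}(fg)\,\dd X$ as $n \to \infty$.

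The main obstacle is upgrading convergence on the left-hand side to trace norm. Writing $\Op^A(f)\bigl(\Op^A(g_n)-\Op^A(g_m)\bigr) = \Op^A\bigl(f \Weyl (g_n-g_m)\bigr)$, Theorem~\ref{operator_valued_calculus:thm:magnetic_PsiDO_trace_class} reduces the task to showing $f \Weyl (g_n-g_m) \to 0$ in $W^{(m,\mu),1}(T^*\R^d,\mathcal{L}^1(\Hil))$ for the dimension-dependent orders $m,\mu$ demanded there. Since the magnetic Weyl product is not pointwise, the delicate point is to exploit the Schwartz regularity of $g_n-g_m$ to absorb the finitely many derivatives required, combining it with the $W^{(\alpha,\beta),1}$-integrability of $f$ and hypotheses~(c) and (d) to produce trace-class-valued $L^1$-integrability of the product symbol. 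Once Cauchyness in $\mathcal{L}^1$ is secured, completeness identifies the trace-norm limit as $\Op^A(f)\Op^A(g)$, and the integral formula is inherited from the approximants.
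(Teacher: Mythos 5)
Your overall architecture — approximate $g$ by trace-class-valued Schwartz symbols, prove the formula for the approximants, pass to the limit — is not the paper's route, and the step you yourself flag as "the delicate point" is a genuine gap rather than a technicality. To get trace-norm Cauchyness you would need $f \Weyl (g_n - g_m) \to 0$ in $W^{(m,\mu),1}\bigl(T^*\R^d,\mathcal{L}^1(\Hil)\bigr)$, but nothing in the hypotheses supplies this: (c) and (d) control only the \emph{pointwise} product $f\,g$ and its trace, not the non-local Weyl products $f \Weyl g_n$, and $f$ itself is only $\mathcal{B}(\Hil',\Hil)$-valued, so trace-class-valuedness of $f \Weyl (g_n-g_m)$ must be extracted from the cutoffs $P_n,P'_n$ alone. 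Worse, your $g_n$ converge to $g$ at best pointwise with uniform $S^0_{0,0}$ bounds — never in the $S^0_{0,0}$ topology itself — so there is no smallness of $g_n - g_m$ to propagate through the Weyl product. A similar (smaller) issue affects the right-hand side: dominated convergence for $\trace_{\Hil}(f\,g_n)$ needs an integrable majorant, and inserting the projections $P'_n,P_n$ in the middle of the product does not give $\snorm{f P'_n g P_n}_{\mathcal{L}^1(\Hil)} \leq \snorm{f g}_{\mathcal{L}^1(\Hil)}$, so hypothesis (d) alone does not dominate the sequence.

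The paper sidesteps any limiting procedure. It extends the identity $\int_{T^*\R^d} \dd X\,(f \Weyl g)(X) = \int_{T^*\R^d} \dd X\, f(X)\,g(X)$ from Schwartz functions (Lemma~\ref{operator_valued_calculus:lem:int_Weyl_product_equals_int_product}) directly to the classes (a) and (b), by writing $\int \dd x \int \dd y\, K^A_f(x,y)\,K^A_g(y,x)$ as an oscillatory integral and evaluating it; this reduces everything to the \emph{pointwise} product $f\,g$, which is exactly the object that hypotheses (c) and (d) control, and Theorem~\ref{operator_valued_calculus:thm:magnetic_PsiDO_trace_class} is then applied to $f\,g$ to obtain both the trace-class property and the trace formula. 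If you want to salvage your approach, you would have to either (i) prove the Weyl-product trace-norm estimate you postpone — which in effect requires redoing the oscillatory-integral reduction to the pointwise product anyway — or (ii) abandon the approximation of $g$ and argue as the paper does.
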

\begin{proof}
	First, we note that $\int_{T^* \R^d} \dd X \, (f \Weyl g)(X) = \int_{T^* \R^d} \dd X \, f(X) \, g(X)$ can be extended from Schwartz functions (\cf Lemma~\ref{operator_valued_calculus:lem:int_Weyl_product_equals_int_product}) to $f$ and $g$ satisfying assumptions~(a) and (b). One way to achieve this is to plug in the integral kernel map~\eqref{operator_valued_calculus:eqn:operator_kernel} and confirm 
	\begin{align}
		\int_{\R^d} \dd x \, \int_{\R^d} \dd y \, K^A_f(x,y) \, K^A_g(y,x) 
		&= \frac{1}{(2 \pi \eps)^d} \int_{T^* \R^d} \dd X \, f(X) \, g(X) 
		\label{operator_valued_calculus:eqn:integral_over_product_of_integral_kernels_equals_product_of_functions}
	\end{align}
	by interpreting the integral on the left as an oscillatory integral. The properties of the integrand on the right can be read off from our assumptions: the pointwise product $f \, g \in W^{(m,\mu),1} \bigl ( T^* \R^d , \mathcal{L}^1(\Hil) \bigr )$ satisfies the assumptions of Theorem~\ref{operator_valued_calculus:thm:magnetic_PsiDO_trace_class}, and is therefore trace class. 
\end{proof}
Our motivation to cover this seemingly trivial example is to be able to introduce the notion of $\mathcal{L}^p_{\mathrm{loc}}$: in view of Theorem~\ref{operator_valued_calculus:thm:magnetic_PsiDO_trace_class} we would like to define it as the set of pseudodifferential operators $\Op^A(f)$ for which the phase space integral 
\begin{align*}
	\frac{1}{(2\pi \eps)^d} \int_{\Lambda} \dd X \, \bnorm{f(X)}_{\mathcal{L}^1(\Hil)} < \infty
\end{align*}
over \emph{any} bounded Borel set $\Lambda \subset T^* \R^d$ is finite. Naïvely, one would like to introduce the operator “$1_{\Lambda}(Q,P)$” associated to the characteristic function. Unfortunately, $Q$ and $P$ do not commute, and this operator is not well-defined. 

Fortunately, we can replace the characteristic function by a smoothened characteristic function $\chi \in \Cont^{\infty}_{\mathrm{c}}(T^* \R^d)$, and its support $\supp \chi$ plays the role of the Borel set $\Lambda \subset T^* \R^d$. Hence, we arrive at the following 
\begin{definition}[$\mathcal{L}^p_{\mathrm{loc},A} \bigl ( L^2(\R^d,\Hil) \, , \, L^2(\R^d,\Hil') \bigr )$]
	We call an operator 
	\begin{align*}
		T \in \mathcal{B} \bigl ( L^2(\R^d,\Hil) \, , L^2(\R^d,\Hil') \bigr ) 
	\end{align*}
	locally $p$-Schatten class if and only if for all $\chi \in \Cont^{\infty}_{\mathrm{c}}(T^* \R^d)$ the product 
	\begin{align*}
		\Op^A(\chi) \, \sabs{T}^p \in \mathcal{L}^p \bigl ( L^2(\R^d,\Hil) \bigr ) 
	\end{align*}
	belongs to the (ordinary) $p$-Schatten class. 
\end{definition}
\begin{corollary}
	Suppose the magnetic field $B$ is bounded in the sense of Assumption~\ref{operator_valued_calculus:assumption:bounded_magnetic_fields}. 
	\begin{enumerate}[(1)]
		\item Suppose $m$ and $\mu$ satisfy the assumptions from Theorem~\ref{operator_valued_calculus:thm:magnetic_PsiDO_trace_class}. Then any magnetic pseudodifferential operator $\Op^A(f) \in \mathcal{L}^1_{\mathrm{loc},A} \bigl ( L^2(\R^d,\Hil) \, , \, L^2(\R^d,\Hil') \bigr )$ defined from $f \in W^{(m,\mu),1}_{\mathrm{loc}} \bigl ( T^* \R^d , \mathcal{L}^1(\Hil,\Hil') \bigr )$ is locally trace class. 
		\item Hörmander symbols of any order $m \in \R$ define locally trace class operators, \ie for any $m \in \R$ and $f \in S^m_{\rho,\delta} \bigl ( \mathcal{L}^1(\Hil,\Hil') \bigr )$ the operator $\Op^A(f) \in \mathcal{L}^1_{\mathrm{loc},A} \bigl ( L^2(\R^d,\Hil) \, , L^2(\R^d,\Hil') \bigr )$. 
	\end{enumerate}
\end{corollary}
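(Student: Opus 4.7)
The strategy is to reduce both statements to Theorem~\ref{operator_valued_calculus:thm:magnetic_PsiDO_trace_class} via the magnetic Weyl product, with Part~(2) being an immediate specialization of Part~(1).

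For Part~(1), fix an arbitrary cutoff $\chi \in \Cont^{\infty}_{\mathrm{c}}(T^* \R^d)$. Using the intertwining $\Op^A(\chi) \, \Op^A(f) = \Op^A(\chi \Weyl f)$ and the polar decomposition $\Op^A(f) = U \, \sabs{\Op^A(f)}$ (which gives $\sabs{\Op^A(f)} = U^* \, \Op^A(f)$), it suffices to prove that
\begin{align*}
	\Op^A(\chi \Weyl f) \in \mathcal{L}^1 \bigl ( L^2(\R^d,\Hil) \, , \, L^2(\R^d,\Hil') \bigr )
	.
\end{align*}
Since trace class is a two-sided ideal, closing the product with the bounded operator $U^*$ on either side preserves trace class. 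The plan therefore reduces to verifying the hypotheses of Theorem~\ref{operator_valued_calculus:thm:magnetic_PsiDO_trace_class} for the symbol $\chi \Weyl f$.

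First I would check the pointwise values: because $\chi$ is scalar-valued and $f$ takes values in $\mathcal{L}^1(\Hil,\Hil')$, the product $\chi \Weyl f$ remains $\mathcal{L}^1(\Hil,\Hil')$-valued. The heart of the argument is then to show
\begin{align*}
	\chi \Weyl f \in W^{(m,\mu),1} \bigl ( T^* \R^d \, , \, \mathcal{L}^1(\Hil,\Hil') \bigr )
	.
\end{align*}
I would start from the explicit integral formula~\eqref{operator_valued_calculus:eqn:Weyl_product} for $\Weyl$, substituting the (symplectic) Fourier transform of the compactly supported smooth function $\chi$, which is Schwartz class in both variables. Differentiating under the integral sign, every derivative $\partial_x^a \partial_{\xi}^{\alpha} (\chi \Weyl f)$ can be written as an oscillatory integral whose $\chi$-side is still Schwartz class (with compactly supported original $\chi$), while the $f$-side only ever involves values of $f$ inside a fixed neighborhood of $\supp \chi$ modulated by the magnetic phase $\e^{-\ii \lambda \gamma^B}$. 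The polynomial bounds on $A$ and its derivatives (Assumption~\ref{operator_valued_calculus:assumption:polynomially_bounded_B}), together with the rapid decay contributed by $\chi$, allow one to move derivatives onto $\chi$ via integration by parts whenever they would fall on $f$ beyond order $(m,\mu)$. What remains is a finite linear combination of expressions each of which is the convolution of a Schwartz kernel with a derivative of $f$ of order at most $(m,\mu)$ localized to a bounded set. Taking trace-norms under the integral and invoking the $W^{(m,\mu),1}_{\mathrm{loc}}$ assumption on $f$ yields finiteness of the $W^{(m,\mu),1}$-norm of $\chi \Weyl f$. This is the step I expect to be the main obstacle — the interplay between the non-locality of $\Weyl$, the magnetic phase, and the merely local integrability of $f$ requires careful bookkeeping of the cutoff propagation through the oscillatory integral.

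For Part~(2), any Hörmander symbol $f \in S^m_{\rho,\delta} \bigl ( \mathcal{L}^1(\Hil,\Hil') \bigr )$ is smooth with polynomially bounded derivatives in the trace-norm, so over any bounded Borel set $\Lambda \subset T^* \R^d$ all derivatives $\partial_x^a \partial_{\xi}^{\alpha} f$ are trace-norm bounded and hence in $L^1 \bigl ( \Lambda , \mathcal{L}^1(\Hil,\Hil') \bigr )$. This gives $f \in W^{(m',\mu'),1}_{\mathrm{loc}} \bigl ( T^* \R^d , \mathcal{L}^1(\Hil,\Hil') \bigr )$ for arbitrary $m' , \mu' \in \N_0$, in particular for $m' \geq 2 [\nicefrac{d}{2}] + 2$ and $\mu' \geq d + [\nicefrac{d}{2}] + 1$, and Part~(1) applies directly. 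Note that the order $m$ of the Hörmander class plays no role here: it is only used to ensure global integrability after multiplication by the cutoff, which is provided by $\chi$.
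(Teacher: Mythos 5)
Your Part~(2) is fine and matches the paper, but Part~(1) takes the wrong product and this creates a genuine gap. You reduce everything to showing $\chi \Weyl f \in W^{(m,\mu),1} \bigl ( T^* \R^d , \mathcal{L}^1(\Hil,\Hil') \bigr )$, and in the course of that you assert that ``the $f$-side only ever involves values of $f$ inside a fixed neighborhood of $\supp \chi$.'' That is false: the magnetic Weyl product is non-local, so $(\chi \Weyl f)(X)$ depends on $f$ on all of phase space, weighted only by the rapid decay of $\Fourier_{\sigma}\chi$. Since the hypothesis on $f$ is merely \emph{local} integrability of its derivatives in trace norm — with no global growth control — the rapid decay contributed by $\chi$ need not beat the tails of $f$, and neither the pointwise convergence of your oscillatory integral nor the global $L^1$-bound on $\partial_x^a\partial_\xi^\alpha(\chi\Weyl f)$ follows. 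The paper sidesteps this entirely: it never estimates $\chi \Weyl f$. Instead it uses identity~\eqref{operator_valued_calculus:eqn:integral_over_product_of_integral_kernels_equals_product_of_functions} (the content of Corollary~\ref{operator_valued_calculus:cor:product_W_1_Hoermander_class_trace_class}), which trades the Weyl product under the trace for the \emph{pointwise} product, and then only has to check that $\chi \, f \in W^{(m,\mu),1}$ — which is immediate from Leibniz's rule, the compact support of $\chi$, and the local $W^{(m,\mu),1}$ hypothesis on $f$. If you want to salvage your route you would have to replace the localization claim by that identity, at which point you have reproduced the paper's argument.

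A second, smaller problem is the polar-decomposition step. Writing $\sabs{\Op^A(f)} = U^* \, \Op^A(f)$ gives $\Op^A(\chi)\,\sabs{\Op^A(f)} = \Op^A(\chi)\,U^*\,\Op^A(f)$ with $U^*$ sandwiched \emph{between} the two factors, not on either side; the two-sided ideal property of $\mathcal{L}^1$ therefore does not let you conclude trace-classness of this operator from that of $\Op^A(\chi)\,\Op^A(f)$ (inserting a bounded operator in the middle of a trace-class product does not preserve trace class in general). To be fair, the paper's own two-line proof does not address the modulus in the definition of $\mathcal{L}^1_{\mathrm{loc},A}$ either, but since you made the reduction explicit, the non-sequitur becomes visible and needs an actual argument.
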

\begin{proof}
	The proofs in both cases is identical: we first justify \eqref{operator_valued_calculus:eqn:integral_over_product_of_integral_kernels_equals_product_of_functions} using standard arguments from the theory of oscillatory integrals, and then verify that for any cutoff function $\chi \in \Cont^{\infty}_{\mathrm{c}}(T^* \R^d)$ the product $\chi \, f \in W^{(m,\mu),1} \bigl ( T^* \R^d , \mathcal{L}^1(\Hil) \bigr )$ belongs to the Sobolev space whose order satisfies the assumptions of Theorem~\ref{operator_valued_calculus:thm:magnetic_PsiDO_trace_class}. 
\end{proof}
%

\section{Equivariant magnetic pseudodifferential calculus} 
\label{equivariant_calculus}
When the operator-valued functions involved all satisfy equivariance conditions like \eqref{setting:eqn:equivariance:symbols}, we can expect that the pseudodifferential operator $\Op^A(f)$ restricts to a densely defined operator between equivariant $L^2$ or magnetic Sobolev spaces (\cf equation~\eqref{setting:eqn:equivariant_L2_space}). Indeed, this is exactly what happens and all of the proofs below essentially amount to invoking the corresponding result from Section~\ref{operator_valued_calculus} and checking that equivariance is preserved. 

Throughout this section, \textbf{we shall always tacitly assume the following assumptions are satisfied unless specifically mentioned otherwise:} 
\begin{assumption}[Equivariant $\Psi$DOs]\label{equivariant_calculus:assumption:setting}
	\begin{enumerate}[(a)]
		\item The magnetic field $B$ is bounded in the sense of Definition~\ref{operator_valued_calculus:assumption:bounded_magnetic_fields}. 
		\item All Hilbert spaces $\Hil$, $\Hil'$, $\ldots$ are separable. 
		\item All Hilbert spaces are endowed with group actions 
		\begin{align*}
			\tau &: \Gamma^* \longrightarrow \mathcal{GL}(\Hil) 
			, 
			\\
			\tau' &: \Gamma^* \longrightarrow \mathcal{GL}(\Hil')
			, 
		\end{align*}
		etc.\ of a discrete lattice $\Gamma^* \cong \Z^d$ with tempered growth of \emph{non-negative} orders $q, q' \geq 0$, etc.\ (\cf Definition~\ref{setting:defn:order_tau}).
	\end{enumerate}
\end{assumption}
Conditions~(a) and (b) are taken verbatim from Section~\ref{operator_valued_calculus:Hoermander_symbols}, where we have constructed a pseudodifferential calculus for operator-valued Hörmander symbols; specifically, the boundedness condition on the components of $B$ is necessary to preserve Hörmander symbols. Only condition~(c) is new. For the definition of tempered growth and other essentials like $L^2_{\eq}(\R^d,\Hil)$ and $\Schwartz^*_{\eq}(\R^d,\Hil)$ we point the reader back to Section~\ref{setting:abstract}. An important example of a Hilbert space endowed with a group action that has tempered growth are the magnetic Sobolev spaces $\Hil = H^q_{A_0}(\T^d)$ on the torus (\cf Lemma~\ref{setting:lem:magnetic_Sobolev_norm_operator_tau}).

\subsection{Defining equivariant magnetic pseudodifferential operators} 
\label{equivariant_calculus:construction}
The relevant Hilbert spaces for equivariant operators have first been introduced in Section~\ref{setting:motivating_example:Zak_transform}, which led to the definition~\eqref{setting:eqn:equivariant_L2_space} for the abstract case, 
\begin{align*}
	L^2_{\eq}(\R^d,\Hil) := \Bigl \{ \psi \in L^2_{\mathrm{loc}}(\R^d,\Hil) \; \; \big \vert \; \; &\psi(k - \gamma^*) = \tau(\gamma^*) \, \psi(k) 
	\Bigr . \notag \\
	&\Bigl . 
	\mbox{ for almost all $k \in \R^d$} , \; 
	\forall \gamma^* \in \Gamma^* \Bigr \} 
	. 
\end{align*}
In principle, equivariance — just like periodicity — means it suffices to characterize the behavior of vectors and operators over “one period”, or, more correctly, over the fundamental cell $\BZ$. Concretely, we may identify the $L^2$ space of equivariant $\Hil$-valued vectors with 
\begin{align*}
	L^2(\BZ,\Hil) \cong L^2_{\eq}(\R^d,\Hil) 
	, 
\end{align*}
where $\BZ$ is the Brillouin torus, thought of as the fundamental cell located at the origin when splitting up $\R^d \cong \Gamma^* \times \BZ$ with respect to the dual lattice $\Gamma^*$. Together with the scalar product~\eqref{setting:eqn:equivariant_scalar_product}, this is indeed a Hilbert space that is unitarily equivalent to $L^2_{\eq}(\R^d,\Hil)$. Hence, any operator on $L^2_{\eq}(\R^d,\Hil)$ can be uniquely represented on $L^2(\BZ,\Hil)$ — and vice versa. 

What is more, the embedding of the Brillouin zone $\BZ \subseteq \R^d$ leads to an inclusion 
\begin{align*}
	\imath_0 : L^2(\BZ,\Hil) \longrightarrow L^2(\R^d,\Hil)
	, \quad 
	\psi_0 \mapsto \Psi := \psi_0 \, 1_{\BZ} 
	, 
\end{align*}
into the \emph{ordinary} $L^2$-space by extending $\imath_0(\psi_0)$ with $0$ outside of $\BZ$. Clearly, we can also restrict any $\Psi \in L^2(\R^d,\Hil)$ to $\BZ$, 
\begin{align*}
	\pi_0 : L^2(\R^d,\Hil) \longrightarrow L^2(\BZ,\Hil)
	, \quad 
	\Psi \mapsto \Psi \vert_{\BZ} 
	. 
\end{align*}
Consequently, suitable operators $\widehat{F} : \domain(\widehat{F}) \subseteq L^2(\R^d,\Hil) \longrightarrow L^2(\R^d,\Hil')$ define an operator 
\begin{align*}
	F_0 := \pi_0 \, \widehat{F} \, \imath_0 : \domain(F_0) \subseteq L^2(\BZ,\Hil) \longrightarrow L^2(\BZ,\Hil') 
\end{align*}
on the $L^2$-space over the Brillouin torus and hence, an operator 
\begin{align*}
	F : \domain(F) \subseteq L^2_{\eq}(\R^d,\Hil) \longrightarrow L^2_{\eq}(\R^d,\Hil') 
	. 
\end{align*}
Overall, this gives us a \emph{third} way to view equivariant operators. Not surprisingly, these three views are completely equivalent as we shall state below. 

Before we do, though, we need to introduce the appropriate Sobolev spaces of order $q \geq 0$, 
\begin{align*}
	H^q_{\Fourier}(\R^d,\Hil) := \Fourier H^q(\R^d,\Hil) 
	, 
\end{align*}
endowed with the scalar product 
\begin{align}
	\scpro{\varphi}{\psi}_{H^q_{\Fourier}(\R^d,\Hil)} := \bscpro{\sexpval{\hat{k}}^q \varphi}{\sexpval{\hat{k}}^q \psi}_{L^2(\R^d,\Hil)} 
	. 
	\label{equivariant_calculus:eqn:scalar_product_Sobolev_space}
\end{align}
Here, the weight operator is defined with respect to the multiplication operator $\hat{k} = \Fourier \, (- \ii \nabla_r) \, \Fourier^{-1}$. As the notation suggests, $H^q_{\Fourier}(\R^d,\Hil)$ is just the Fourier transform of the ordinary Sobolev space of order $q$. The reason we need to introduce Sobolev spaces is to cope with the tempered growth of the norms  $\bnorm{\tau^{(\prime)}(\gamma^*)}_{\mathcal{B}(\Hil^{(\prime)})}$. 

We are now in a position to make the equivalence of these three points of views mathematically precise: 
\begin{proposition}\label{equivariant_calculus:prop:characterization_equivariant_operators}
	There is a one-to-one correspondence between 
	\begin{enumerate}[(a)]
		\item a densely defined operator $F : \domain(F) \subseteq L^2_{\eq}(\R^d , \Hil) \longrightarrow L^2_{\eq}(\R^d , \Hil')$, 
		\item an operator $F_0 : \domain(F_0) \subseteq L^2(\BZ,\Hil') \longrightarrow L^2(\BZ,\Hil)$ and two group actions $\tau^{(\prime)} : \Gamma^* \longrightarrow \mathcal{GL}(\Hil^{(\prime)})$, and 
		\item a densely defined operator $\widehat{F} : \widehat{\domain}(F) \subseteq L^2(\R^d , \Hil) \longrightarrow L^2(\R^d , \Hil')$ subject to the equivariance condition 
		\begin{align}
			\widehat{T}'_{\gamma^*} \, \widehat{F} \, \widehat{T}_{\gamma^*}^{-1} = \bigl ( \id_{L^2(\R^d)} \otimes \tau'(\gamma^*) \bigr ) \, \widehat{F} \, \bigl ( \id_{L^2(\R^d)} \otimes \tau(\gamma^*)^{-1} \bigr ) 
			&&
			\forall \gamma^* \in \Gamma^* 
			, 
			\label{equivariant_calculus:eqn:equivariance_condition_L2_Rd}
		\end{align}
		where $(\widehat{T}^{(\prime)}_{\gamma^*} \Psi^{(\prime)})(k) := \Psi(k^{(\prime)} - \gamma^*) \in \Hil^{(\prime)}$ denote the translations by the reciprocal lattice vector $\gamma^* \in \Gamma^*$ on the Hilbert spaces $L^2(\R^d,\Hil^{(\prime)})$. 
	\end{enumerate}
\end{proposition}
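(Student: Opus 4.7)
The plan is to establish the chain of equivalences $(a) \Leftrightarrow (b) \Leftrightarrow (c)$; transitivity then yields the one-to-one correspondence among all three descriptions. Both equivalences factor through the \emph{hub space} $L^2(\BZ,\Hil^{(\prime)})$ of ordinary $\Hil^{(\prime)}$-valued $L^2$-sections over the fundamental cell.

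For $(a) \Leftrightarrow (b)$, I would first observe that the restriction map
\begin{align*}
	\rho : L^2_{\eq}(\R^d,\Hil) \longrightarrow L^2(\BZ,\Hil)
	,
	\qquad
	\psi \mapsto \psi\vert_{\BZ}
	,
\end{align*}
is a unitary isomorphism: it is an isometry directly from the defining scalar product~\eqref{setting:eqn:equivariant_scalar_product}, and its inverse is the equivariant extension $\psi_0 \mapsto \psi$ determined by $\psi(k_0 - \gamma^*) := \tau(\gamma^*)\,\psi_0(k_0)$ for $k_0 \in \BZ$ and $\gamma^* \in \Gamma^*$. Operators and domains then transport via $F \mapsto F_0 := \rho' F \rho^{-1}$ and $\domain(F_0) := \rho \bigl( \domain(F) \bigr)$; the group actions $\tau,\tau'$ enter the data of (b) precisely because they are what define the equivariant $L^2$-spaces of (a).

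For $(b) \Leftrightarrow (c)$, the direction $(c) \to (b)$ is immediate: set $F_0 := \pi_0\, \widehat{F}\, \imath_0$ with $\domain(F_0) := \pi_0\, \widehat{\domain}(F)$. For the converse, I would build $\widehat{F}$ cell-by-cell by a $\tau$-twisted transport: for $\Psi$ supported in $\BZ - \gamma^*$, put
\begin{align*}
	(\widehat{F}\Psi)(k) := \tau'(\gamma^*)\, \bigl( F_0\, \varphi \bigr)(k+\gamma^*)
	,
	\qquad
	\varphi(k_0) := \tau(\gamma^*)^{-1}\, \Psi(k_0 - \gamma^*)
	,
\end{align*}
for $k \in \BZ - \gamma^*$ and $k_0 \in \BZ$, and extend linearly to compactly supported sections. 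Because $\Gamma^*$ is abelian and $\tau^{(\prime)}(\gamma^* + \delta^*) = \tau^{(\prime)}(\gamma^*)\, \tau^{(\prime)}(\delta^*)$, a direct pointwise computation shows that this $\widehat{F}$ satisfies the equivariance condition~\eqref{equivariant_calculus:eqn:equivariance_condition_L2_Rd}; specializing to $\gamma^* = 0$ recovers $\pi_0 \widehat{F} \imath_0 = F_0$, so the two maps between (b) and (c) are mutually inverse.

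The hard part will be fixing a suitable dense domain $\widehat{\domain}(F)$ for the operator built in the converse of $(b) \to (c)$. Because $\tau$ and $\tau'$ have only tempered (rather than uniformly bounded) growth, the norms $\bnorm{\tau(\gamma^*)}_{\mathcal{B}(\Hil)}$ blow up as $\sabs{\gamma^*} \to \infty$, so the cell-by-cell formula cannot define a globally bounded operator on $L^2(\R^d,\Hil)$. I would therefore take $\widehat{\domain}(F)$ to be the compactly supported sections $\Psi$ whose transported piece $\varphi$ on each cell lies in $\domain(F_0)$: this is dense in $L^2(\R^d,\Hil)$, the pointwise expression for $\widehat{F}\Psi$ is then a locally finite sum, and uniqueness of $\widehat{F}$ within class (c) is automatic, since equivariance together with $\pi_0 \widehat{F} \imath_0 = F_0$ forces agreement on every $\Gamma^*$-translate of $\BZ$ and hence on the dense subspace where both candidate operators are defined.
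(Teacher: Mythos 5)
Your proof follows essentially the same route as the paper's: the paper likewise establishes the correspondence by restricting to a fundamental cell (its unitaries $U_{\gamma^*}: L^2_{\eq}(\R^d,\Hil) \to \mathfrak{h}_{\gamma^*}(\Hil)$, $\psi \mapsto \psi\vert_{\BZ_{\gamma^*}}$) and rebuilds $\widehat{F}$ as a block-diagonal, $\tau$-twisted transport of $F_0$ over the translated cells, i.e.\ $\widehat{F} = V^{\prime\,-1}_{\Gamma^*}\,(F_{\gamma^*})_{\gamma^* \in \Gamma^*}\,V_{\Gamma^*}$, which is exactly your cell-by-cell formula written in direct-sum notation. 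The only deviations are cosmetic: the paper takes the full domain $V_{\Gamma^*}^{-1}\bigl((\domain(F_{\gamma^*}))_{\gamma^* \in \Gamma^*}\bigr)$ rather than your compactly supported sections (both are dense, so nothing changes), and your closing uniqueness claim — that equivariance together with $\pi_0\,\widehat{F}\,\imath_0 = F_0$ pins down $\widehat{F}$ — is the same assertion the paper makes when it states that the equivariance condition forces $V_{\Gamma^*}\,\widehat{F}\,V_{\Gamma^*}^{-1}$ to be block-diagonal.
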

The proof, which we have relegated to Appendix~\ref{appendix:equivariant_operators:equivalent_characterizations_operators}, is just an exercise in book-keeping, and consists of making the unitaries and partial isometries that connect the different points of views explicit. 

The reason we bother with this exercise in tedium is that it allows us to \emph{define} equivariant pseudodifferential operators through characterization~(c): we merely have to verify the equivariance condition~\eqref{equivariant_calculus:eqn:equivariance_condition_L2_Rd} and that the operators we are dealing with are well-defined via the calculus from Section~\ref{operator_valued_calculus}. 

Before we continue, we would like to give a characterization of \emph{bounded} operators. While Proposition~\ref{equivariant_calculus:prop:characterization_equivariant_operators} applies to bounded and densely defined, unbounded operators alike, it is useful to have an explicit criterion for the boundedness of norms. Once more, we refer the interested reader to Appendix~\ref{appendix:equivariant_operators:equivalent_characterizations_operators} for a proof. 
\begin{proposition}\label{equivariant_calculus:prop:equivalence_boundedness_equivariant_operators}
	Suppose we are in the setting of Proposition~\ref{equivariant_calculus:prop:characterization_equivariant_operators}. 
	\begin{enumerate}[(1)]
		\item The boundedness of each of the operators from characterizations~(a) and (b) implies the boundedness of the others. Specifically, the norms are related by 
		\begin{align*}
			\snorm{F}_{\mathcal{B}(L^2_{\eq}(\R^d,\Hil),L^2_{\eq}(\R^d,\Hil'))} = \bnorm{F_0}_{\mathcal{B}(L^2(\BZ,\Hil) , \mathcal{B}(L^2(\BZ,\Hil'))} 
			. 
		\end{align*}
		\item The operators from characterizations~(a) and (b) are bounded if and only if 
		\begin{align*}
			\widehat{F} : H^{q + q'}_{\Fourier}(\R^d,\Hil) \longrightarrow L^2(\R^d,\Hil') 
		\end{align*}
		from characterization~(c) defines a bounded operator, where $q$ and $q'$ are the orders of growth for the group actions $\tau$ and $\tau'$. 
	\end{enumerate}
\end{proposition}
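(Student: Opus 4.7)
The plan splits cleanly along the two items. For part~(1), I would make the unitary equivalence between $L^2_{\eq}(\R^d,\Hil)$ and $L^2(\BZ,\Hil)$ completely explicit: restriction $\pi : \psi \mapsto \psi \vert_{\BZ}$ is norm-preserving by the very definition~\eqref{setting:eqn:equivariant_scalar_product} of $\scpro{\,\cdot\,}{\,\cdot\,}_{L^2_{\eq}}$, and its inverse $\iota_{\tau}$ is the unique equivariant extension dictated by $\tau$. Under this isomorphism $F$ is unitarily conjugate to $F_0 = \pi \, F \, \iota_{\tau}$ (and likewise on the target side via $\pi'$ and $\iota_{\tau'}$), so the two operator norms coincide. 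This is really just a reformulation of the construction entering characterizations~(a) and (b) of Proposition~\ref{equivariant_calculus:prop:characterization_equivariant_operators}.

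For part~(2), the bridge between $\widehat F$ and $F$ is the \emph{periodization map}
\begin{align*}
	P : \Schwartz(\R^d,\Hil) \longrightarrow \Schwartz^*_{\eq}(\R^d,\Hil)
	,
	\qquad
	(P \Psi)(k) := \sum_{\gamma^* \in \Gamma^*} \tau(\gamma^*) \, \Psi(k + \gamma^*)
	,
\end{align*}
which is precisely the ingredient used in the appendix proof of Proposition~\ref{equivariant_calculus:prop:characterization_equivariant_operators} to pass from $\widehat F$ to the equivariant operator $F$ and back. The central estimate I would prove is that $P$ extends continuously to $P : H^{q+q'}_{\Fourier}(\R^d,\Hil) \longrightarrow L^2_{\eq}(\R^d,\Hil)$. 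The standard way to obtain this is via Cauchy–Schwarz with an auxiliary weight $\sexpval{\gamma^*}^{-s}$ whose square sum is finite on $\Gamma^*$, combined with the tempered growth $\snorm{\tau(\gamma^*)}_{\mathcal{B}(\Hil)} \leq C \sexpval{\gamma^*}^q$, yielding
\begin{align*}
	\bnorm{P \Psi}_{L^2_{\eq}}^2 \leq C \sum_{\gamma^* \in \Gamma^*} \sexpval{\gamma^*}^{2(q+s)} \int_{\BZ + \gamma^*} \snorm{\Psi(k)}_{\Hil}^2 \dd k \leq C' \, \snorm{\Psi}_{H^{q+s}_{\Fourier}}^2
	.
\end{align*}
The index $q+q'$ in the statement comes from taking $s$ large enough to absorb the growth order $q'$ of $\tau'$ on the target side when this same estimate is applied in the dual pairing that defines $\widehat F$ in the appendix; the symmetric appearance of $q$ and $q'$ reflects that domain and target periodizations both enter.

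Granted this boundedness of $P$, the forward implication ($F$ bounded $\Rightarrow$ $\widehat F$ bounded on $H^{q+q'}_{\Fourier}$) is a composition: factor $\widehat F$ through $P$, apply $F$, and map back to $L^2(\R^d,\Hil')$ using the explicit formula from Proposition~\ref{equivariant_calculus:prop:characterization_equivariant_operators}. For the converse, I would use a cutoff argument: pick $\chi \in \Cont^{\infty}_{\mathrm{c}}(\R^d)$ with $\sum_{\gamma^* \in \Gamma^*} \chi(k - \gamma^*) \equiv 1$, and for $\psi \in L^2_{\eq}(\R^d,\Hil)$ set $\Psi := \chi \, \psi$. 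Since $\Psi$ has compact support in $k$, it lies in every $H^s_{\Fourier}$ with $\snorm{\Psi}_{H^{q+q'}_{\Fourier}} \leq C_{\chi} \snorm{\psi}_{L^2_{\eq}}$, so the assumed bound on $\widehat F$ on $H^{q+q'}_{\Fourier}$ translates, via the same correspondence, into the bound on $F$.

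The main obstacle I anticipate is precisely the bookkeeping that matches the Sobolev exponent to the sum $q+q'$ rather than to the naive $q + (d/2 + \epsilon)$ that the straightforward Cauchy–Schwarz estimate above produces. The way this is typically fixed is to exploit the paired structure in the appendix construction of $\widehat F$—both the domain-side periodization (absorbing $q$) and the target-side equivariance (absorbing $q'$) contribute half of the required regularity, and one must be careful to verify that the \emph{same} $\Psi$ regulates both contributions. Assumption~\ref{equivariant_calculus:assumption:setting}~(c), which enforces non-negativity of $q$ and $q'$, ensures that all the polynomial weights go in the same direction so that Cauchy–Schwarz applies uniformly on both sides.
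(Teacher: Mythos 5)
Your part~(1) is fine and matches the paper: restriction to $\BZ$ is unitary from $L^2_{\eq}(\R^d,\Hil)$ onto $L^2(\BZ,\Hil)$ by the very definition of the equivariant scalar product, and $F_0$ is the unitary conjugate of $F$, so the two norms coincide. Part~(2), however, contains a genuine gap, and the structural claim on which your argument rests is incorrect. The passage between $\widehat F$ and $F$ in the appendix is \emph{not} mediated by a periodization map: the equivariance condition~\eqref{equivariant_calculus:eqn:equivariance_condition_L2_Rd} makes $\widehat F$ block-diagonal with respect to the decomposition $L^2(\R^d,\Hil) \cong \bigoplus_{\gamma^* \in \Gamma^*} L^2(\BZ_{\gamma^*},\Hil)$, with all blocks $F_{\gamma^*}$ unitarily equivalent to $F_0$ \emph{in the weighted norms} of $\mathfrak{h}_{\gamma^*}(\Hil^{(\prime)})$. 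Your factorization of $\widehat F$ through $P$ is false: $\bigl( \widehat F \Psi \bigr) \big\vert_{\BZ_{\gamma^*}}$ depends only on $\Psi \vert_{\BZ_{\gamma^*}}$, whereas $F(P\Psi) \big\vert_{\BZ_{\gamma^*}}$ sees the whole lattice sum; moreover $F(P\Psi)$ is equivariant and therefore not an element of $L^2(\R^d,\Hil')$, so the step "map back to $L^2(\R^d,\Hil')$" is not defined.

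Even setting the factorization aside, you concede that your Cauchy--Schwarz estimate produces the exponent $q + \nicefrac{d}{2} + \epsilon$ rather than $q+q'$, and the promised repair ("each side contributes half") is never carried out --- but matching the exponent to $q+q'$ is precisely the content of the statement. The paper's mechanism is different and requires no summable auxiliary weight: translating $F_{\gamma^*}$ from $\mathcal{B}\bigl( \mathfrak{h}_{\gamma^*}(\Hil) , \mathfrak{h}_{\gamma^*}(\Hil') \bigr)$, where every block has the same norm as $F_0$, to the unweighted $\mathcal{B}\bigl( L^2(\BZ_{\gamma^*},\Hil) , L^2(\BZ_{\gamma^*},\Hil') \bigr)$ costs a factor $\bnorm{\tau(\gamma^*)}_{\mathcal{B}(\Hil)} \, \bnorm{\tau'(\gamma^*)^{-1}}_{\mathcal{B}(\Hil')} \lesssim \sexpval{\gamma^*}^{q+q'}$, and since $\sexpval{k} \sim \sexpval{\gamma^*}$ on $\BZ_{\gamma^*}$, the multiplication weight $\sexpval{\hat k}^{-(q+q')}$ cancels this loss exactly, cell by cell; the sum over $\gamma^*$ then converges simply because $\sum_{\gamma^* \in \Gamma^*} \bnorm{\Phi \vert_{\BZ_{\gamma^*}}}_{L^2(\BZ_{\gamma^*},\Hil)}^2 = \snorm{\Phi}_{L^2(\R^d,\Hil)}^2$. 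The converse implication is the easy one --- restrict to the block at the origin and use $\sexpval{\hat k}^{q+q'} \geq 1$ --- so your cutoff construction, while salvageable, is more roundabout than necessary. To repair your proof you would have to replace the periodization step by the block-diagonal decomposition and establish the per-cell cancellation explicitly.
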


\subsubsection{The equivariant building block operators: necessary book-keeping} 
\label{equivariant_calculus:construction:equivariant_building_block_operators}
The starting point for a pseudodifferential calculus are the building block operators position and momentum. They are defined in terms of the building block operators 
\begin{subequations}\label{equivariant_calculus:eqn:building_block_operators}
	\begin{align}
		\Reps &= \ii \eps \nabla_k \otimes \id_{\Hil} \big \vert_{\eq} 
		, 
		\\
		\KA &= \hat{k} \otimes \id_{\Hil} - \lambda A(\Reps) 
		\equiv \hat{k} - \lambda A(\Reps)
		, 
	\end{align}
\end{subequations}
where $(\cdots) \vert_{\eq}$ indicates that we impose equivariant boundary conditions and $\hat{k}$ denotes multiplication with $k$. The associated equivariant Weyl system 
\begin{align*}
	\WeylSys_{\eq}^A(r,k) := 
	\e^{- \ii (k \cdot \Reps - r \cdot \KA)}
\end{align*}
then leads to a \emph{formal} definition for the \emph{equivariant} magnetic pseudodifferential operator 
\begin{align}
	\Opeq^A (h) := \frac{1}{(2\pi)^d} \int_{\R^d} \dd r \int_{\R^d} \dd k \, (\Fs h)(r,k) \, \WeylSys_{\eq}^A(r,k) 
	. 
	\label{equivariant_calculus:magnetic_weyl_calculus:eqn:mag_wq_torus}
\end{align}
The purpose of Section~\ref{equivariant_calculus:construction} is to make sense of this definition. 

In principle, we have to distinguish between 
\begin{align*}
	(\Reps,\KA) = \bigl ( \ii \eps \nabla_k \otimes \id_{\Hil} \big \vert_{\eq} \, , \, \hat{k} \otimes \id_{\Hil} - \lambda A(\Reps) \bigr )
\end{align*}
defined on $L^2_{\eq}(\R^d,\Hil)$ and the analogous operators~\eqref{equivariant_calculus:eqn:building_block_operators} on $L^2_{\eq}(\R^d,\Hil')$ when $\Hil \neq \Hil'$; note that the restriction to equivariant $L^2$ functions intertwines the two factors in the tensor product. For instance, suppose we are given an operator-valued function $f \in \Cont^{\infty} \bigl ( T^* \R^d \, , \, \mathcal{B}(\Hil,\Hil') \bigr )$ that satisfies the equivariance relation
\begin{align}
	f(r,k-\gamma^*) &= \tau'(\gamma^*) \, f(r,k) \, \tau(\gamma^*)^{-1}
	&&
	\forall (r,k) \in T^* \R^d
	, \; 
	\gamma^* \in \Gamma^* 
	. 
	\label{equivariant_calculus:eqn:equivariance_function}
\end{align}
Then the equivariance of the operator-valued function $f$ implies the equivariance of 
\begin{align*}
	f(r,k) \, \e^{- \ii \sigma((r,k),(\Reps,\KA))} = \e^{- \ii \sigma((r,k),(\Reps',\mathsf{K}^{\prime \, A}))} \, f(r,k) 
\end{align*}
in the sense of \eqref{equivariant_calculus:eqn:equivariance_function}, where $\Reps'$ and $\mathsf{K}^{\prime \, A}$ are defined via \eqref{equivariant_calculus:eqn:building_block_operators} but act on $L^2_{\eq}(\R^d,\Hil')$. 
Keeping track of the subtle difference between $(\Reps,\KA)$ and $(\Reps' , \mathsf{K}^{\prime \, A})$ would result in an exercise of tedium without any clear gain. \emph{Therefore, we will unburden the notation and not distinguish between building block operators defined on different equivariant, Hilbert space-valued $L^2$ functions.}

In the context of magnetic systems, the various operators are frequently defined between \emph{magnetic} Sobolev spaces. And given that we had three points of views, we will need three types of magnetic Sobolev spaces: the first one 
\begin{align*}
	H^q_{\Fourier,A}(\R^d,\Hil) := \Fourier H^q_A(\R^d,\Hil)
\end{align*}
is just the Fourier transform of the ordinary magnetic Sobolev space, obtained upon replacing $\sexpval{\hat{k}}^q$ with $\bexpval{\hat{k} - \lambda \, A(\ii \eps \nabla_k)}^q$ in the scalar product~\eqref{equivariant_calculus:eqn:scalar_product_Sobolev_space}. Our assumptions on the magnetic vector potential ensure that $\Hil$-valued Schwartz functions lie densely in magnetic and non-magnetic Sobolev spaces of any order, 
%
\begin{align*}
	\Schwartz(\R^d,\Hil) \subset H^q_{\Fourier,A}(\R^d,\Hil) , \, 
	H^q_{\Fourier}(\R^d,\Hil)
	. 
\end{align*}
Whether magnetic Sobolev spaces nest into non-magnetic ones in the two relevant cases (\cf Assumptions~\ref{operator_valued_calculus:assumption:polynomially_bounded_B} and \ref{operator_valued_calculus:assumption:bounded_magnetic_fields}) is to our knowledge an open question. In fact, this is still an area of active research (see \eg \cite{Nguyen_Pinamonti_Squassina_Vecchi:charactizations_magnetic_Sobolev_spaces:2017,Nguyen_Pinamonti_Squassina_Vecchi:charactizations_magnetic_Sobolev_spaces:2020}).

The second type of magnetic Sobolev space is a subset of the \emph{equivariant} $L^2$-space, namely 
\begin{align*}
	H^q_{\eq,\Fourier,A}(\R^d,\Hil) := \Bigl \{ \psi \in L^2_{\eq}(\R^d,\Hil) \; \; \big \vert \; \; \sexpval{\KA}^q \psi \in  L^2_{\eq}(\R^d,\Hil) \Bigr \} 
\end{align*}
endowed with the scalar product 
\begin{align*}
	\scpro{\varphi}{\psi}_{H^q_{\eq,\Fourier,A}(\R^d,\Hil)} := \bscpro{\sexpval{\KA}^q \varphi}{\sexpval{\KA}^q \psi}_{L^2_{\eq}(\R^d,\Hil)}
	. 
\end{align*}
Finally, its restriction to the fundamental domain $\BZ$ is denoted with 
\begin{align*}
	H^q_{\eq,\Fourier,A}(\BZ,\Hil) := \Bigl \{ \psi \vert_{\BZ} \; \; \big \vert \; \; \psi \in H^q_{\eq,\Fourier,A}(\R^d,\Hil) \Bigr \} 
	. 
\end{align*}
One is tempted to drop “$\eq$” and define the magnetic Sobolev space on $\BZ$ directly. However, the equivariance condition \emph{still} enters the definition of $\KA$ as boundary conditions relating the behavior of elements of $H^q_{\eq,\Fourier,A}(\BZ,\Hil)$ at opposing faces of the Brillouin torus $\BZ$. 
\begin{remark}[Magnetic Sobolev spaces over the Brillouin torus]
	Even though the operators $\hat{k}_j$ are bounded on $L^2(\BZ,\Hil)$, the components of the magnetic vector potential $A_j(\Reps)$ need not be. Consequently, magnetic Sobolev spaces $H^q_{\eq,\Fourier,A}(\BZ,\Hil)$ are different Banach spaces for distinct values of $q \geq 0$. 
	
  However, when all the $A_j \in L^{\infty}(\R^d)$ are bounded, then the $H^q_{\eq,\Fourier,A}(\BZ,\Hil)$ coincide with $L^2(\BZ,\Hil)$ as Banach spaces for all $q \geq 0$.
\end{remark}
\begin{corollary}\label{equivariant_calculus:cor:equivalence_magnetic_equivariant_operators}
	Suppose we are in the setting of Proposition~\ref{equivariant_calculus:prop:characterization_equivariant_operators}. 
	Further, assume we are given a bounded operator 
	\begin{align*}
		\widehat{F}^A : H^{q + q'}_{\Fourier,A}(\R^d,\Hil) \longrightarrow L^2(\R^d,\Hil') 
	\end{align*}
	from the \emph{magnetic} Sobolev space of order $q + q'$, and that $\widehat{F}$ satisfies the covariance condition~\eqref{equivariant_calculus:eqn:equivariance_condition_L2_Rd} from characterization~(c). 
	\begin{enumerate}[(1)]
		\item Then $\widehat{F}^A$ defines the bounded operators 
		\begin{align*}
			F^A &: H^{q + q'}_{\eq,\Fourier,A}(\R^d,\Hil) \longrightarrow L^2_{\eq}(\R^d,\Hil') 
			, 
			\\
			F^A_0 &: H^{q + q'}_{\eq,\Fourier,A}(\BZ,\Hil) \longrightarrow L^2(\BZ,\Hil')
			. 
		\end{align*}
		\item When $q + q' = 0$ or all the components of $A$ are bounded, then the operators $F^A_0$ and $F^A$ define bounded operators between the corresponding $L^2$ spaces, 
		\begin{align*}
			F^A &\in \mathcal{B} \bigl ( L^2_{\eq}(\R^d,\Hil) \, , \, L^2_{\eq}(\R^d,\Hil') \bigr ) 
			, 
			\\
			F^A_0 &\in \mathcal{B} \bigl ( L^2(\BZ,\Hil) \, , \, L^2(\BZ,\Hil') \bigr ) 
			. 
		\end{align*}
		\item These operators and their relations are gauge-covariant in the following sense: given any $\vartheta \in \Cont^{\infty}_{\mathrm{pol}}(\R^d,\R)$, then the operators in the new gauge $A' = A + \eps \, \dd \vartheta$ are related to those in the gauge $A$ via $\e^{+ \ii \lambda \vartheta(Q)}$ and $\e^{+ \ii \lambda \vartheta(\Reps)}$, respectively, 
		\begin{align*}
			\widehat{F}^{A'} = \e^{+ \ii \lambda \vartheta(Q)} \, \widehat{F}^A \, \e^{- \ii \lambda \vartheta(Q)} &: H^{q + q'}_{\eq,\Fourier,A'}(\R^d,\Hil) \longrightarrow L^2_{\eq}(\R^d,\Hil') 
			, 
			\\
			F^{A'} = \e^{+ \ii \lambda \vartheta(\Reps)} \, F^A \, \e^{- \ii \lambda \vartheta(\Reps)} &: H^{q + q'}_{\eq,\Fourier,A'}(\R^d,\Hil) \longrightarrow L^2_{\eq}(\R^d,\Hil') 
			, 
			\\
			F_0^{A'} = \e^{+ \ii \lambda \vartheta(\Reps)} \vert_{\BZ} \, F_0^A \, \e^{- \ii \lambda \vartheta(\Reps)} \vert_{\BZ} &: H^{q + q'}_{\eq,\Fourier,A'}(\BZ,\Hil) \longrightarrow L^2(\BZ,\Hil')
			. 
		\end{align*}
		Here, $\e^{+ \ii \lambda \vartheta(\Reps)} \vert_{\BZ}$ denotes the unique operator obtained from the identification $U^{(\prime)}_0 : L^2_{\eq}(\R^d,\Hil^{(\prime)}) \longrightarrow L^2(\BZ,\Hil^{(\prime)})$ (\cf Lemma~\ref{appendix:equivariant_operators:lem:weighted_Hilbert_spaces_translated_Brillouin_zones}~(3)). 
		Importantly, the magnetic Sobolev spaces — the domains of the operators — use the new gauge $A'$. 
	\end{enumerate}
\end{corollary}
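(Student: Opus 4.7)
The strategy is to bootstrap from Propositions~\ref{equivariant_calculus:prop:characterization_equivariant_operators} and \ref{equivariant_calculus:prop:equivalence_boundedness_equivariant_operators}, observing that the magnetic building block operator $\KA = \hat{k} - \lambda A(\Reps)$ interacts with equivariance in exactly the same way as its non-magnetic counterpart $\hat{k}$: both commute pointwise with the group action $\tau(\gamma^*)$ on the fibers, because $A(\Reps)$ acts as a scalar-valued multiplication operator tensored with $\id_{\Hil}$. Consequently, $\sexpval{\KA}^{q+q'}$ preserves the equivariance subspace $L^2_{\eq}(\R^d,\Hil) \subset L^2(\R^d,\Hil)$, which makes the magnetic Sobolev space $H^{q+q'}_{\eq,\Fourier,A}(\R^d,\Hil)$ well-defined as the graph-norm completion of $\sexpval{\KA}^{q+q'}$ on equivariant functions.

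For part~(1), the equivariance condition~\eqref{equivariant_calculus:eqn:equivariance_condition_L2_Rd} for $\widehat{F}^A$ guarantees that $\widehat{F}^A$ maps equivariant functions to equivariant functions: the translation $\widehat{T}_{\gamma^*}$ commutes through $\widehat{F}^A$ up to adjoining $\tau(\gamma^*)$ on each side, which is exactly what it means to send one equivariance class to another. Hence the restriction
\begin{align*}
F^A := \widehat{F}^A \bigr\vert_{H^{q+q'}_{\eq,\Fourier,A}(\R^d,\Hil)} : H^{q+q'}_{\eq,\Fourier,A}(\R^d,\Hil) \longrightarrow L^2_{\eq}(\R^d,\Hil')
\end{align*}
is well-defined, and its boundedness is inherited from that of $\widehat{F}^A$ since the norms on the equivariant subspaces are the restrictions of the ambient norms. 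The corresponding statement for $F^A_0 : H^{q+q'}_{\eq,\Fourier,A}(\BZ,\Hil) \to L^2(\BZ,\Hil')$ follows by conjugating with the unitary $U_0 : L^2_{\eq}(\R^d,\Hil) \to L^2(\BZ,\Hil)$ from Lemma~\ref{appendix:equivariant_operators:lem:weighted_Hilbert_spaces_translated_Brillouin_zones}. Part~(2) is then immediate: for $q + q' = 0$ both Sobolev spaces collapse to $L^2$ by definition, and when $A \in L^\infty$ the operator $\sexpval{\KA}^{q+q'}$ is bounded and boundedly invertible on $L^2$, so $H^{q+q'}_{\eq,\Fourier,A}$ and $L^2_{\eq}$ coincide as Banach spaces.

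For part~(3), the crucial observation is that the gauge transformation $\e^{+\ii\lambda\vartheta(Q)}$ acts as a scalar-valued pointwise multiplication operator, hence commutes fiberwise with $\tau(\gamma^*)$ and therefore preserves the equivariance condition~\eqref{equivariant_calculus:eqn:equivariance_condition_L2_Rd}. Combined with the gauge-covariance of the operator-valued calculus established in Proposition~\ref{operator_valued_calculus:prop:extension_OpA_tempered_distributions}~(3), this shows that if $\widehat{F}^A$ satisfies \eqref{equivariant_calculus:eqn:equivariance_condition_L2_Rd}, then so does the conjugate $\widehat{F}^{A'} = \e^{+\ii\lambda\vartheta(Q)}\,\widehat{F}^A\,\e^{-\ii\lambda\vartheta(Q)}$, with the same group actions $\tau, \tau'$. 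The corresponding relations on $L^2_{\eq}(\R^d,\Hil)$ and $L^2(\BZ,\Hil)$ then follow by transporting through the identifications from Proposition~\ref{equivariant_calculus:prop:characterization_equivariant_operators}, noting that $\e^{+\ii\lambda\vartheta(Q)}$ on $L^2(\R^d,\Hil)$ corresponds to $\e^{+\ii\lambda\vartheta(\Reps)}$ on $L^2_{\eq}(\R^d,\Hil)$ under Fourier transform (since $Q$ and $\Reps$ are conjugate variables). Finally, one needs to verify that the domain adjusts consistently, i.e.\ that $\e^{-\ii\lambda\vartheta(Q)}$ maps $H^{q+q'}_{\eq,\Fourier,A'}$ to $H^{q+q'}_{\eq,\Fourier,A}$; this is the content of the intertwining $\e^{+\ii\lambda\vartheta(Q)}\,\sexpval{P^A}^{q+q'}\,\e^{-\ii\lambda\vartheta(Q)} = \sexpval{P^{A'}}^{q+q'}$, and its Fourier-conjugate on the $\Reps$--$\KA$ side.

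The main obstacle is essentially bookkeeping: one must track carefully in which Hilbert space and with which group action each operator is defined, particularly when $\Hil \neq \Hil'$, so that the three characterizations of Proposition~\ref{equivariant_calculus:prop:characterization_equivariant_operators} and their magnetic variants line up. No new analytical input is needed beyond the observation that the scalar nature of $\vartheta(Q)$ and $A(\Reps)$ makes them commute with the group action on the fibers.
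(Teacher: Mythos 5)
There is a genuine gap in your argument for part~(1). You define $F^A$ as "the restriction of $\widehat{F}^A$ to $H^{q+q'}_{\eq,\Fourier,A}(\R^d,\Hil)$" and claim boundedness is inherited "since the norms on the equivariant subspaces are the restrictions of the ambient norms." But $L^2_{\eq}(\R^d,\Hil)$ is \emph{not} a subspace of $L^2(\R^d,\Hil)$: an equivariant function satisfies $\psi(k-\gamma^*) = \tau(\gamma^*)\psi(k)$ for all $\gamma^*\in\Gamma^*$, so its $L^2(\R^d)$-norm is an infinite sum of comparable contributions from each translated unit cell and diverges unless $\psi = 0$. The equivariant norm is an integral over a \emph{single} fundamental cell, not the restriction of the ambient norm, so the operator "$\widehat{F}^A\vert_{H^{q+q'}_{\eq,\Fourier,A}}$" does not exist as written and no boundedness can be inherited this way. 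Note also that you cannot simply invoke Proposition~\ref{equivariant_calculus:prop:equivalence_boundedness_equivariant_operators}~(2), since that statement concerns the \emph{non-magnetic} Sobolev space $H^{q+q'}_{\Fourier}(\R^d,\Hil)$, and the paper explicitly points out that it is unknown whether magnetic Sobolev spaces nest into non-magnetic ones.

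The paper's proof (Corollary~\ref{appendix:equivariant_operators:cor:equivalence_boundedness_equivariant_operators}) closes exactly this gap. The operators $F^A_{\gamma^*}$ and $F^A$ are first obtained from the block-diagonal plaquette decomposition of Proposition~\ref{appendix:equivariant_operators:prop:characterization_equivariant_operators}; their boundedness is then derived from the identity $F^A_{\gamma^*}\psi_{\gamma^*} = \bigl(\widehat{F}^A\,\imath_{\chi}(\psi_{\gamma^*})\bigr)\big\vert_{\BZ_{\gamma^*}}$, where $\imath_{\chi}$ multiplies by a compactly supported cutoff equal to $1$ on the unit cell and thereby produces a genuine element of $H^{q+q'}_{\Fourier,A}(\R^d,\Hil)$. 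The entire analytical weight is carried by Lemma~\ref{appendix:equivariant_operators:lem:continuity_inclusion_map_L2_BZ_gamma_ast_L2_Rd}, which proves continuity of $\imath_{\chi} : H^{q+q'}_{\eq,\Fourier,A}(\BZ_{\gamma^*},\Hil) \longrightarrow H^{q+q'}_{\Fourier,A}(\R^d,\Hil)$; its proof requires the invertible weights $w_{\pm m}$, the Calder\'on--Vaillancourt theorem, and an intermediate Dirichlet-type magnetic Sobolev space on a compact patch to pass between equivariant and non-equivariant Sobolev norms. None of this appears in your proposal. Your parts~(2) and (3) are essentially correct and match the paper's reasoning (collapse of the Sobolev spaces when the weight is bounded; the intertwining $\e^{+\ii\lambda\vartheta(Q)}\,P^A_{\Fourier}\,\e^{-\ii\lambda\vartheta(Q)} = P^{A+\eps\dd\vartheta}_{\Fourier}$), but they rest on part~(1), which as written is unsupported.
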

Once more, we refer the interested reader to Appendix~\ref{appendix:equivariant_operators:operators_magnetic_Sobolev_spaces} for a proof.

\subsubsection{Connection to the pseudodifferential calculus from Section~\ref{operator_valued_calculus}} 
\label{equivariant_calculus:construction:connection_operator_valued_calculus}
The building block operators~\eqref{equivariant_calculus:eqn:building_block_operators} prior to restricting to equivariant Hilbert spaces are related to the standard ones~\eqref{operator_valued_calculus:eqn:building_block_operators} by the continuous Fourier transform $\Fourier$. Consequently, the link between the results here and those in Section~\ref{operator_valued_calculus} is to think of an equivariant pseudodifferential operator as the restriction 
\begin{align*}
	\Opeq^A(f) = \Fourier \, \Op^A(f) \, \Fourier^{-1} \big \vert_{\eq} 
\end{align*}
to (suitable subsets of) equivariant distributions. We will make this procedure precise in the course of this section. As a first step, we wish to convince the reader that despite the presence of the continuous Fourier transform $\Fourier$ we may still re-use all of the results from Section~\ref{operator_valued_calculus}. 

Fortunately, the continuous Fourier transform defines topological vector space isomorphisms on $\Schwartz(\R^d,\Hil)$, $\Schwartz^*(\R^d,\Hil)$ and $L^2(\R^d,\Hil)$, and the arguments from Section~\ref{operator_valued_calculus:building_block_operators} tell us that this change of representation does not impact the construction of the magnetic pseudodifferential calculus or the symbol calculus from Section~\ref{operator_valued_calculus}. Moreover, even though the explicit expression for $\Op^A$ becomes more implicit as the magnetic phase now turns into a convolution operator rather than a multiplication operator, thanks to \eqref{operator_valued_calculus:eqn:covariance_Weyl_system} the entire calculus on the level of symbols or distributions remains unchanged. Importantly, this includes the magnetic Weyl product. 

\emph{To streamline the presentation, we will abbreviate $\Fourier \, \Op^A(f) \, \Fourier^{-1}$ with $\Op_{\Fourier}^A(f)$. } 

\subsubsection{Characterization of equivariant pseudodifferential operators} 
\label{equivariant_calculus:construction:characterization_equivariant_operators}
For a consistent extension, we will need to extend the equivariance condition~\eqref{equivariant_calculus:eqn:equivariance_function} from smooth functions to tempered distributions: 
\begin{definition}[Equivariant tempered distributions]\label{equivariant_calculus:defn:equivariance_distribution}
	\begin{enumerate}[(1)]
		\item We call a tempered distribution $\Psi \in \Schwartz^*(\R^d,\Hil)$ equivariant if and only if 
		\begin{align*}
			T_{\gamma^*} \Psi = \tau(\gamma^*) \Psi 
		\end{align*}
		holds for all $\gamma^* \in \Gamma^*$, where the two operators are extended by duality as 
		\begin{align*}
			\scpro{T_{\gamma^*} \Psi}{\varphi}_{\Schwartz(\R^d,\Hil)} :& \negmedspace= \bscpro{\Psi}{T_{-\gamma^*} \varphi}_{\Schwartz(\R^d,\Hil)} 
			, 
			\\
			\bscpro{\tau(\gamma^*) \Psi}{\varphi}_{\Schwartz(\R^d,\Hil)} :& \negmedspace= \bscpro{\Psi}{\tau(\gamma^*)^* \varphi}_{\Schwartz(\R^d,\Hil)} 
			, 
		\end{align*}
		for all $\varphi \in \Schwartz(\R^d,\Hil)$. The subspace of equivariant tempered distributions is denoted with $\Schwartz^*_{\eq}(\R^d,\Hil)$. 
		\item We call a tempered distribution $F \in \Schwartz^* \bigl ( T^* \R^d \, , \, \mathcal{B}(\Hil,\Hil') \bigr )$ equivariant if and only if 
		\begin{align*}
			\scpro{\psi' \, }{ \, T_{\gamma^*} F \varphi}_{\Schwartz(\R^d,\Hil')} &= \Bscpro{\psi' \, }{ \, \bigl ( \id_{\Schwartz'(\R^d)} \otimes \tau'(\gamma^*) \bigr ) \, F \, \bigl ( \id_{\Schwartz'(\R^d)} \otimes \tau(\gamma^*)^{-1} \bigr ) \, \varphi}_{\Schwartz(\R^d,\Hil')}
			\\
			&= \Bscpro{\bigl ( \id_{\Schwartz'(\R^d)} \otimes \tau'(\gamma^*)^* \bigr ) \, \psi' \, }{ \, F \, \bigl ( \id_{\Schwartz'(\R^d)} \otimes \tau(\gamma^*)^{-1} \bigr ) \, \varphi}_{\Schwartz(\R^d,\Hil')}
		\end{align*}
		holds for all $\varphi \in \Schwartz(\R^d,\Hil)$, $\psi' \in \Schwartz(\R^d,\Hil')$ and $\gamma^* \in \Gamma^*$. The subspace of all equivariant, operator-valued tempered distributions is denoted with $\Schwartz^*_{\eq} \bigl ( T^* \R^d \, , \, \mathcal{B}(\Hil,\Hil') \bigr )$. 
	\end{enumerate}
\end{definition}
We may also consider subspaces of distributions that are also equivariant, \eg the equivariant operator-valued Moyal space
\begin{align}
	\MoyalSpace_{\eq} \bigl ( \mathcal{B}(\Hil,\Hil') \bigr ) := \MoyalSpace \bigl ( \mathcal{B}(\Hil,\Hil') \bigr ) \cap \Schwartz^*_{\eq} \bigl ( T^* \R^d \, , \, \mathcal{B}(\Hil,\Hil') \bigr )
	\label{equivariant_calculus:eqn:equivariant_Moyal_space_intersection_Moyal_space_cap_equivariant_distributions}
\end{align}
is the intersection of the operator-valued Moyal space and the subspace of the equivariant distributions. Importantly, distributions from this subspace define pseudodifferential operators that preserve equivariance. 
\begin{proposition}\label{equivariant_calculus:prop:equivariant_distributions_define_equivariant_PsiDOs}
	Suppose $f \in  \MoyalSpace \bigl ( \mathcal{B}(\Hil,\Hil') \bigr )$ is an element from the Moyal space. Then $f \in \MoyalSpace_{\eq} \bigl ( \mathcal{B}(\Hil,\Hil') \bigr )$ lies in the \emph{equivariant} Moyal space if and only if the restriction 
	\begin{align}
		\Opeq^A(f) := \Op_{\Fourier}^A(f) \big \vert_{\Schwartz_{\eq}^*(\R^d,\Hil)} : \Schwartz^*_{\eq}(\R^d,\Hil) \longrightarrow \Schwartz^*_{\eq}(\R^d,\Hil') 
		\label{equivariant_calculus:eqn:definition_OpeqA}
	\end{align}
	defines a continuous operator between \emph{equivariant} tempered distributions. Moreover, $\Opeq^A$ is gauge-covariant, \ie for any $\vartheta \in \Cont^{\infty}_{\mathrm{pol}}(\R^d,\R)$ the gauge operator $\e^{+ \ii \lambda \vartheta(\Reps)}$ relates $\Opeq^A(f)$ and 
	\begin{align*}
		\Opeq^{A + \eps \dd \vartheta}(f) = \e^{+ \ii \lambda \vartheta(\Reps)} \, \Opeq^A(f) \, \e^{- \ii \lambda \vartheta(\Reps)} 
		. 
	\end{align*}
\end{proposition}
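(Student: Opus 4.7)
Both directions of the biconditional reduce to a single operator-level intertwining. The plan is to establish, for every $f \in \MoyalSpace \bigl ( \mathcal{B}(\Hil,\Hil') \bigr )$ and every $\gamma^* \in \Gamma^*$, two complementary relations
\begin{align*}
	T_{\gamma^*} \, \Op_{\Fourier}^A(f) \, T_{\gamma^*}^{-1} &= \Op_{\Fourier}^A \bigl ( f(\, \cdot \, , \, \cdot \, - \gamma^*) \bigr ),
	\\
	\tau'(\gamma^*) \, \Op_{\Fourier}^A(f) \, \tau(\gamma^*)^{-1} &= \Op_{\Fourier}^A \bigl ( \tau'(\gamma^*) \, f \, \tau(\gamma^*)^{-1} \bigr ).
\end{align*}
The second is immediate from the tensor-product structure exploited in Section~\ref{operator_valued_calculus:construction_on_Schwartz:tensor_product_structure}: in the integral representing $\Op_{\Fourier}^A(f)$, the equivariant Weyl system $\WeylSys_{\eq}^A(r,k)$ acts as the identity on the $\Hil^{(\prime)}$ factor while $\tau^{(\prime)}(\gamma^*)$ acts as the identity on the $L^2(\R^d)$ factor, so the outer conjugation passes under the integral sign and modifies only the pointwise action of $f(r,k)$.

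For the first identity I would begin with $f \in \Schwartz \bigl ( T^* \R^d , \mathcal{B}(\Hil,\Hil') \bigr )$, where the defining Bochner integral is absolutely convergent. A direct computation on the building-block operators gives $T_{\gamma^*} \, \Reps \, T_{\gamma^*}^{-1} = \Reps$ and $T_{\gamma^*} \, \KA \, T_{\gamma^*}^{-1} = \KA - \gamma^*$; since $r \cdot \gamma^*$ is scalar, this yields the Weyl-system covariance
\begin{align*}
	T_{\gamma^*} \, \WeylSys_{\eq}^A(r,k) \, T_{\gamma^*}^{-1} = \e^{- \ii \, r \cdot \gamma^*} \, \WeylSys_{\eq}^A(r,k),
\end{align*}
and since the phase $\e^{- \ii \, r \cdot \gamma^*}$ is precisely the symplectic-Fourier image of the translation $k \mapsto k - \gamma^*$, the first identity follows on Schwartz symbols. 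The extension to $f \in \MoyalSpace \bigl ( \mathcal{B}(\Hil,\Hil') \bigr )$ then proceeds via the duality extension of $\Op_{\Fourier}^A$ in Proposition~\ref{operator_valued_calculus:prop:extension_OpA_tempered_distributions} combined with the Moyal-product extension of Definition~\ref{operator_valued_calculus:defn:extension_Weyl_product_Moyal_spaces}.

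Combining the two identities with the injectivity of $\Op_{\Fourier}^A$ on $\Schwartz^* \bigl ( T^* \R^d , \mathcal{B}(\Hil,\Hil') \bigr )$ (Proposition~\ref{operator_valued_calculus:prop:extension_OpA_tempered_distributions}~(2)), the equivariance of $f$ becomes equivalent to the operator identity
\begin{align*}
	T_{\gamma^*} \, \Op_{\Fourier}^A(f) \, T_{\gamma^*}^{-1} = \tau'(\gamma^*) \, \Op_{\Fourier}^A(f) \, \tau(\gamma^*)^{-1}
	\qquad \forall \gamma^* \in \Gamma^* .
\end{align*}
Applying this identity to any $\tau$-equivariant $\Psi$ and using $T_{\gamma^*} \Psi = \tau(\gamma^*) \Psi$ shows that $\Op_{\Fourier}^A(f) \Psi$ is $\tau'$-equivariant, which is one direction of the biconditional; continuity of the restriction is then automatic, since $\Schwartz^*_{\eq}(\R^d,\Hil^{(\prime)})$ carries the subspace topology inherited from $\Schwartz^*(\R^d,\Hil^{(\prime)})$. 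For the converse I would test on the periodisations
\begin{align*}
	\Psi_\varphi := \sum_{\gamma \in \Gamma^*} \tau(\gamma)^{-1} \, T_{\gamma} \, \varphi ,
	\qquad \varphi \in \Schwartz(\R^d,\Hil),
\end{align*}
which lie in $\Schwartz^*_{\eq}(\R^d,\Hil)$ precisely because $\tau$ has tempered growth (Definition~\ref{setting:defn:order_tau}); the map $\varphi \mapsto \Psi_\varphi$ is weak-$\ast$ dense-ranged enough to propagate the operator identity back to $\Schwartz(\R^d,\Hil)$ and hence, by injectivity of $\Op_{\Fourier}^A$, to the symbol-level equivariance of $f$.

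Gauge-covariance then descends essentially for free from Proposition~\ref{operator_valued_calculus:prop:extension_OpA_tempered_distributions}~(3): the gauge operator $\e^{+ \ii \lambda \vartheta(\Reps)}$, being a function of the translation-invariant operator $\ii \eps \nabla_k$ tensored with $\id_{\Hil^{(\prime)}}$, commutes both with $T_{\gamma^*}$ and with the group actions $\tau^{(\prime)}(\gamma^*)$, so it preserves each subspace $\Schwartz^*_{\eq}(\R^d,\Hil^{(\prime)})$ and the covariance relation for $\Op^A$ restricts verbatim to $\Opeq^A$. The main obstacle I anticipate is justifying the density claim for the periodisation family $\Psi_\varphi$: this is where the tempered-growth hypothesis on $\tau$ is not merely convenient but indispensable, since without it the defining sum need not converge as a tempered distribution at all and the converse implication of the biconditional would collapse.
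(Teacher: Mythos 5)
Your overall architecture is sound and is essentially a reorganization of the paper's argument: the paper proves the forward direction by computing $T_{\gamma^*}\,\e^{-\ii\sigma((r',k'),(\Reps,\KA))}\Psi = \e^{-\ii r'\cdot\gamma^*}\,\tau(\gamma^*)\,\e^{-\ii\sigma((r',k'),(\Reps,\KA))}\Psi$ directly inside the quantization integral applied to an equivariant $\Psi$, then shifts the $k''$-integration variable and invokes the equivariance of $f$ — exactly the phase/shift bookkeeping you package into the operator identity $T_{\gamma^*}\,\Op^A_{\Fourier}(f)\,T_{\gamma^*}^{-1}=\Op^A_{\Fourier}\bigl(f(\cdot,\cdot-\gamma^*)\bigr)$. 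Your two intertwining relations are both correct, the forward implication and its continuity are fine, and the gauge-covariance argument (the gauge operator is a function of $\Reps$ alone, hence commutes with $T_{\gamma^*}$ and $\tau^{(\prime)}(\gamma^*)$) matches the paper's.

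The genuine gap is in the converse, and you have misdiagnosed where it lies. The convergence of the periodisation $\Psi_\varphi=\sum_{\gamma}\tau(\gamma)^{-1}T_\gamma\varphi$ in $\Schwartz^*_{\eq}(\R^d,\Hil)$ — the obstacle you flag — is the easy part, immediate from the rapid decay of $\varphi$ against the polynomial growth of $\tau$. The hard part is the claim that testing on $\{\Psi_\varphi\}$ "propagates the operator identity back to $\Schwartz(\R^d,\Hil)$". Writing $G:=T_{\gamma^*}\,\Op^A_{\Fourier}(f)\,T_{\gamma^*}^{-1}-\tau'(\gamma^*)\,\Op^A_{\Fourier}(f)\,\tau(\gamma^*)^{-1}$, the hypothesis only gives $\sum_{\gamma}G\,\tau(\gamma)^{-1}T_\gamma\varphi=0$ for all $\varphi$; since $G$ is nonlocal you cannot separate the summands even for $\varphi$ supported in a single unit cell. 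On the level of kernels this says only that the $\tau$-weighted lattice periodization $\sum_{\gamma}K_G(k,k'+\gamma)\,\tau(\gamma)^{-1}$ of the distributional kernel of $G$ vanishes, which for a general kernel does not force $K_G=0$; one must exploit the specific pseudodifferential structure of $K_G$ (or enlarge the test family beyond the $\Psi_\varphi$) to conclude $G=0$ and then invoke injectivity of $\Op^A_{\Fourier}$. In fairness, the paper's own converse is equally terse — it simply asserts that imposing $T_{\gamma^*}\Op^A_{\Fourier}(f)\Psi=\tau'(\gamma^*)\Op^A_{\Fourier}(f)\Psi$ for equivariant $\Psi$ "is equivalent to" the symbol-level equivariance — so your proposal is no weaker than the source, but the step as you state it is not yet a proof.
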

\begin{proof}
	$\Longrightarrow$: Suppose $f \in \MoyalSpace_{\eq} \bigl ( \mathcal{B}(\Hil,\Hil') \bigr )$ is an element of the equivariant Moyal space. We need to prove that it restricts to a continuous map between \emph{equivariant} distributions. Continuity of $\Op_{\Fourier}^A(f)$ then follows directly as elements of $\MoyalSpace \bigl ( \mathcal{B}(\Hil,\Hil') \bigr ) \supseteq \MoyalSpace_{\eq} \bigl ( \mathcal{B}(\Hil,\Hil') \bigr )$ define continuous maps between not necessarily equivariant, Hilbert space-valued distributions $\Schwartz^*(\R^d,\Hil^{(\prime)}) \supseteq \Schwartz^*_{\eq}(\R^d,\Hil^{(\prime)})$. 
	
	To aid readability, we will use notation that suggests $f$ and $\Psi$ are functions rather than tempered distributions even though they are not. So instead of $\bigl ( \Opeq^A(f) \Psi \, , \, \varphi' \bigr )_{\Schwartz(\R^d,\Hil')}$ where $\varphi' \in \Schwartz(\R^d,\Hil')$, we will write the action of $\Opeq^A(f)$ on $\Psi \in \Schwartz_{\eq}^{\ast}(\R^d,\Hil)$ as the formal integral 
	\begin{align}
		\bigl ( \Op_{\Fourier}^A(f) \Psi \bigr )(k) &= \frac{1}{(2\pi)^{2d}} \int_{\R^d} \dd r' \int_{\R^d} \dd k' \int_{\R^d} \dd r'' \int_{\R^d} \dd k'' \, \e^{+ \ii \sigma((r',k'),(r'',k''))} \, f(r'',k'') \, 
		\cdot \notag \\
		&\qquad \qquad \qquad \qquad \qquad \qquad \qquad \qquad \qquad \cdot 
		\bigl ( \e^{- \ii \sigma((r',k'),(\Reps,\KA))} \Psi \bigr )(k) 
		\label{equivariant_calculus:eqn:formal_expression_OpeqA_evaluated_k_minus_gamma_ast}
	\end{align}
	with the tacit understanding that we need to apply this to a $\Hil'$-valued Schwartz function and extend all these operations by duality. We will also make repeated use of the fact that elements of $\MoyalSpace \bigl ( \mathcal{B}(\Hil,\Hil') \bigr )$ define continuous maps between (not necessarily equivariant) vector-valued distributions.
	
	That being said, the condition we need to check is 
	\begin{align*}
		T_{\gamma^*} \, \Op_{\Fourier}^A(f) \Psi &= \tau'(\gamma^*) \, \Op_{\Fourier}^A(f) \Psi 
		&&
		\forall \gamma^* \in \Gamma^*
		, \; 
		\Psi \in \Schwartz^*_{\eq}(\R^d,\Hil) 
		, 
	\end{align*}
	and almost all $k \in \R^d$, where the translations by $\gamma^*$ are generated by $\Reps$, 
	\begin{align*}
		T_{\gamma^*} = \e^{+ \frac{\ii}{\eps} \gamma^* \cdot \Reps} 
		. 
	\end{align*}
	Importantly, the $T_{\gamma^*}$ commute with all operators that are functions of $\Reps$. 
	
	Glancing at equation~\eqref{equivariant_calculus:eqn:formal_expression_OpeqA_evaluated_k_minus_gamma_ast}, we see that we need to consider the factor involving the Weyl system first. Writing out the action of the Weyl system, exploiting equivariance of $\Psi$, and commuting $T_{\gamma^*}$ with all operators that are functions of $\Reps$, we get 
	\begin{align*}
		T_{\gamma^*} \, \e^{- \ii \sigma((r',k'),(\Reps,\KA))} \Psi &= \e^{- \ii \frac{\eps}{2} r' \cdot k'} \, \e^{+ \frac{\ii}{\eps} \gamma^* \cdot \Reps} \, \e^{- \ii k' \cdot R} \, \e^{- \frac{\ii}{\eps} \int_{[\Reps,\Reps + \eps r']} A} \, \e^{- \ii r' \cdot \hat{k}} \Psi 
		\\
		&= \e^{- \ii \frac{\eps}{2} r' \cdot k'} \, \e^{- \ii k' \cdot R} \, \e^{- \frac{\ii}{\eps} \int_{[\Reps,\Reps + \eps r']} A} \, \e^{+ \frac{\ii}{\eps} \gamma^* \cdot \Reps} \, \e^{- \ii r' \cdot \hat{k}} \e^{- \frac{\ii}{\eps} \gamma^* \cdot \Reps} \, \e^{+ \frac{\ii}{\eps} \gamma^* \cdot \Reps} \, \Psi 
		. 
	\end{align*}
	Adjoining the translation in position by translations in momentum amounts to replacing $\hat{k}$ by $\hat{k} + \gamma^*$. Moreover, due to the assumed equivariance the translation of $\Psi$ yields the operator $\tau(\gamma^*)$. Lastly, the group action commutes with the Weyl system as it is an operator on the equivariant $L^2$ space. Overall, this gives us two extra factors in the expression: 
	\begin{align*}
		\ldots 
		&= \e^{- \ii \frac{\eps}{2} r' \cdot k'} \, \e^{- \ii k' \cdot R} \, \e^{- \frac{\ii}{\eps} \int_{[\Reps,\Reps + \eps r']} A} \, \e^{- \ii r' \cdot (\hat{k} + \gamma^*)} \, \tau(\gamma^*) \, \Psi 
		\\
		&= \e^{- \ii r' \cdot \gamma^*} \, \e^{- \ii \sigma((r',k'),(\Reps,\KA))} \, \tau(\gamma^*) \, \Psi 
		\\
		&= \e^{- \ii r' \cdot \gamma^*} \, \tau(\gamma^*) \; \e^{- \ii \sigma((r',k'),(\Reps,\KA))} \, \Psi 
	\end{align*}
	The phase factor will ultimately lead to a shift in the $k$ space coordinate where we compute the Fourier transform of the symbol $f$ in $\Op_{\Fourier}^A$: 
	\begin{align*}
		T_{\gamma^*} \, \Op_{\Fourier}^A(f) \Psi &= \frac{1}{(2\pi)^{2d}} \int_{\R^d} \dd r' \int_{\R^d} \dd k' \int_{\R^d} \dd r'' \int_{\R^d} \dd k'' \, \e^{+ \ii (k' \cdot r'' - r' \cdot k'')} \, f(r'',k'') \, 
		\cdot \notag \\
		&\qquad \qquad \qquad \qquad \qquad \qquad \qquad \qquad \qquad \cdot 
		\tau(\gamma^*) \, \e^{- \ii r' \cdot \gamma^*} \, \e^{- \ii \sigma((r',k'),(\Reps,\KA))} \Psi 
		\\
		&= \frac{1}{(2\pi)^{2d}} \int_{\R^d} \dd r' \int_{\R^d} \dd k' \int_{\R^d} \dd r'' \int_{\R^d} \dd k'' \, \e^{+ \ii (k' \cdot r'' - r' \cdot (k'' + \gamma^*))} \, f(r'',k'') \, \tau(\gamma^*) \, 
		\cdot \notag \\
		&\qquad \qquad \qquad \qquad \qquad \qquad \qquad \qquad \qquad \cdot 
		\e^{- \ii \sigma((r',k'),(\Reps,\KA))} \Psi 
	\end{align*}
	A simple change in variables and the covariance condition for $f$ then yields the claim, 
	\begin{align*}
		\ldots 
		&= \frac{1}{(2\pi)^{2d}} \int_{\R^d} \dd r' \int_{\R^d} \dd k' \int_{\R^d} \dd r'' \int_{\R^d} \dd \tilde{k}'' \, \e^{+ \ii (k' \cdot r'' - r' \cdot \tilde{k}'')} \, f(r'',\tilde{k}'' - \gamma^*) \, \tau(\gamma^*) \, 
		\cdot \notag \\
		&\qquad \qquad \qquad \qquad \qquad \qquad \qquad \qquad \qquad \cdot 
		\e^{- \ii \sigma((r',k'),(\Reps,\KA))} \Psi 
		\\
		&= \frac{1}{(2\pi)^{2d}} \int_{\R^d} \dd r' \int_{\R^d} \dd k' \int_{\R^d} \dd r'' \int_{\R^d} \dd \tilde{k}'' \, \e^{+ \ii (k' \cdot r'' - r' \cdot \tilde{k}'')} \, \tau'(\gamma^*) \, f(r'',\tilde{k}'') \, 
		\cdot \notag \\
		&\qquad \qquad \qquad \qquad \qquad \qquad \qquad \qquad \qquad \cdot 
		\e^{- \ii \sigma((r',k'),(\Reps,\KA))} \Psi 
		\\
		&= \tau'(\gamma^*) \, \Op_{\Fourier}^A(f) \Psi 
		. 
	\end{align*}
	$\Longleftarrow$: Suppose that the Weyl quantization $\Op_{\Fourier}^A(f)$ of some $f \in  \MoyalSpace \bigl ( \mathcal{B}(\Hil,\Hil') \bigr )$, which defines a continuous map between Hilbert space-valued distribution spaces, restricts to a continuous operator between equivariant distribution spaces. In view of the above computation, imposing the condition 
	\begin{align*}
		T_{\gamma^*} \, \Op_{\Fourier}^A(f) \Psi \overset{!}{=} \tau'(\gamma^*) \, \Op_{\Fourier}^A(f) \Psi
	\end{align*}
	is equivalent to the equivariance condition 
	\begin{align*}
		f(r,k - \gamma^*) = \tau'(\gamma^*) \, f(r,k) \, \tau(\gamma^*)^{-1} 
		. 
	\end{align*}
	That is, $f \in \MoyalSpace_{\eq} \bigl ( \mathcal{B}(\Hil,\Hil') \bigr )$ lies in the appropriate \emph{equivariant} Moyal space. 
\end{proof}
This proposition is the centerpiece of this section, for it guarantees that $\Opeq^A$ is well-defined. To prevent that we hide this important definition in a proposition, let us spell it out one more time: 
\begin{definition}[Equivariant magnetic pseudodifferential operator]\label{equivariant_calculus:defn:Opeq_A}
	For any equivariant operator-valued distribution $f \in \MoyalSpace_{\eq} \bigl ( \mathcal{B}(\Hil,\Hil') \bigr )$ we define the equivariant magnetic Weyl quantization as the restriction 
	\begin{align*}
		\Opeq^A(f) := \Op_{\Fourier}^A(f) \big \vert_{\Schwartz_{\eq}^*(\R^d,\Hil)} : \Schwartz^*_{\eq}(\R^d,\Hil) \longrightarrow \Schwartz^*_{\eq}(\R^d,\Hil') 
	\end{align*}
	of the map from Proposition~\ref{operator_valued_calculus:prop:extension_OpA_tempered_distributions}~(2) to \emph{equivariant} tempered distributions. 
\end{definition}
Because $\Opeq^A(f)$ is defined as the restriction of $\Op_{\Fourier}^A(f)$ to (subsets of) equivariant distributions, the operator kernel of $\Opeq^A(f)$ is identical to that of $\Op_{\Fourier}^A(f)$. However, as we will need the inverse of the kernel map later on, we will include its explicit form. 
\begin{lemma}\label{equivariant_calculus:lem:kernel_map}
	\begin{enumerate}[(1)]
		\item For any $f \in \MoyalSpace_{\eq} \bigl ( \mathcal{B}(\Hil,\Hil') \bigr )$ the kernel of $\Opeq^A(f)$ is given by 
		\begin{align}
			K_{\eq,f}^A(k,k'') = \frac{1}{(2\pi)^{2d}} \int_{\R^d} \dd r' \int_{\R^d} \dd r'' \int_{\R^d} \dd k' &\e^{- \ii r' \cdot (k - k')} \, \e^{- \ii r'' \cdot (k' - k'')} 
			\cdot \notag \\
			&\cdot 
			\e^{- \ii \frac{\lambda}{\eps} \int_{[\eps r' , \eps r'']} A} \, f \bigl ( \tfrac{\eps}{2} (r' + r'') , k' \bigr ) 
			. 
			\label{equivariant_calculus:eqn:kernel_map}
		\end{align}
		\item The kernel map 
		\begin{align*}
			f \mapsto K_{\eq,f}^A : \MoyalSpace_{\eq} \bigl ( \mathcal{B}(\Hil,\Hil') \bigr ) \longrightarrow \Schwartz^*_{\eq} \bigl ( \R^d \times \R^d , \mathcal{B}(\Hil,\Hil') \bigr ) 
		\end{align*}
		is a continuous injection into the space of equivariant distributions on $\R^d \times \R^d$, \ie distributions which satisfy the equivariance condition 
		\begin{align*}
			K_{\eq,f}^A(k - \gamma^* , k'' - \gamma^*) &= \tau'(\gamma^*) \, K_{\eq,f}^A(k,k'') \, \tau(\gamma^*)^{-1} 
		\end{align*}
		for all $\gamma^* \in \Gamma^*$ in the distributional sense. 
	\end{enumerate}
\end{lemma}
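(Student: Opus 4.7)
The plan is to reduce both statements to results already established for the non-equivariant calculus in Section~\ref{operator_valued_calculus:extension_by_duality}, exploiting the fact (pointed out in Section~\ref{equivariant_calculus:construction:connection_operator_valued_calculus}) that $\Opeq^A(f)$ is, by Definition~\ref{equivariant_calculus:defn:Opeq_A}, the restriction to equivariant distributions of $\Op_{\Fourier}^A(f) = \Fourier\,\Op^A(f)\,\Fourier^{-1}$.

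\textbf{Part~(1).} I would first observe that since restriction to a subspace of the domain does not alter the integral kernel, it suffices to compute the kernel of $\Op_{\Fourier}^A(f)$ on $\Schwartz(\R^d,\Hil)$ and then argue by continuity. Starting from the kernel formula~\eqref{operator_valued_calculus:eqn:operator_kernel} for $\Op^A(f)$ and conjugating with the continuous Fourier transform gives formally
\begin{align*}
  K^A_{\eq,f}(k,k'') = \frac{1}{(2\pi)^d}\int_{\R^d}\dd x \int_{\R^d}\dd y\;\e^{-\ii k\cdot x}\,K^A_f(x,y)\,\e^{+\ii k''\cdot y}.
\end{align*}
Plugging in \eqref{operator_valued_calculus:eqn:operator_kernel} and relabelling $(x,y,\eta)=(r',r'',k')$ yields precisely the right-hand side of \eqref{equivariant_calculus:eqn:kernel_map}. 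To justify this rigorously for $f\in\MoyalSpace_{\eq} \bigl ( \mathcal{B}(\Hil,\Hil') \bigr )$, I would invoke Lemma~\ref{operator_valued_calculus:lem:extension_kernel_map_distributions}: the kernel map is a topological vector space isomorphism on operator-valued tempered distributions, and partial Fourier transforms in the two remaining variables are continuous on $\Schwartz^*$ as well.

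\textbf{Part~(2).} Continuity of $f\mapsto K^A_{\eq,f}$ on $\MoyalSpace_{\eq} \bigl ( \mathcal{B}(\Hil,\Hil') \bigr )$ then reduces to continuity of the kernel map on $\Schwartz^*$ composed with continuous (partial) Fourier transforms, all of which was already verified in Lemmas~\ref{operator_valued_calculus:lem:kernel_map_Schwartz_class} and~\ref{operator_valued_calculus:lem:extension_kernel_map_distributions}. Injectivity follows because $f\mapsto K^A_f\mapsto K^A_{\eq,f}$ is a composition of topological vector space isomorphisms on the ambient distribution spaces, so a vanishing kernel forces $f=0$ via the magnetic Wigner transform (Lemma~\ref{operator_valued_calculus:lem:Wigner_transform_Schwartz_class} and its extension).

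The equivariance of the kernel is the step requiring the most care. Working formally with \eqref{equivariant_calculus:eqn:kernel_map} (and justifying everything distributionally by density, exactly as in Proposition~\ref{equivariant_calculus:prop:equivariant_distributions_define_equivariant_PsiDOs}), I would replace $(k,k'')$ with $(k-\gamma^*,k''-\gamma^*)$ and then substitute $k'\mapsto k'-\gamma^*$ in the inner momentum integral. The two phase factors combine to
\begin{align*}
  \e^{-\ii r'\cdot(k-\gamma^*-(k'-\gamma^*))}\,\e^{-\ii r''\cdot((k'-\gamma^*)-(k''-\gamma^*))} = \e^{-\ii r'\cdot(k-k')}\,\e^{-\ii r''\cdot(k'-k'')},
\end{align*}
so all $\gamma^*$-dependence is absorbed into $f\bigl(\tfrac{\eps}{2}(r'+r''),k'-\gamma^*\bigr)$. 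Applying the equivariance relation~\eqref{equivariant_calculus:eqn:equivariance_function} for $f$ produces the factors $\tau'(\gamma^*)$ on the left and $\tau(\gamma^*)^{-1}$ on the right, which, being independent of the integration variables, pull out of all three integrals, yielding the desired identity $K^A_{\eq,f}(k-\gamma^*,k''-\gamma^*)=\tau'(\gamma^*)\,K^A_{\eq,f}(k,k'')\,\tau(\gamma^*)^{-1}$.

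The main technical obstacle I anticipate is not the algebra — which reduces to the change-of-variables sketched above — but rather making the manipulations rigorous at the level of distributions, since elements of $\MoyalSpace_{\eq} \bigl ( \mathcal{B}(\Hil,\Hil') \bigr )$ need not be functions. The cleanest route is to use the tensor product decomposition $\MoyalSpace_{\eq}\bigl(\mathcal{B}(\Hil,\Hil')\bigr)\subseteq \MoyalSpace(\C)\otimes\mathcal{B}(\Hil,\Hil')$ from Proposition~\ref{operator_valued_calculus:prop:tensor_product_decomposition_Moyal_spaces}, reduce to sums of the form $g\otimes G$ with $g\in\MoyalSpace(\C)$ regularized by Schwartz approximations, and pass to the limit using the continuity established in the first part of the proof.
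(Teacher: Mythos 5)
Your proposal is correct and follows essentially the same route as the paper: part~(1) is the same phase-collection computation (the identity $\e^{- \ii r' \cdot k} \, \e^{- \ii (r'' - r') \cdot k'} \, \e^{+ \ii r'' \cdot k''} = \e^{- \ii r' \cdot (k - k')} \, \e^{- \ii r'' \cdot (k' - k'')}$ after relabelling the kernel variables), and part~(2) likewise obtains continuity and injectivity by restricting the non-equivariant kernel-map isomorphism and verifies equivariance by the same shift $k' \mapsto k' \pm \gamma^*$ combined with the equivariance of $f$. Your closing remarks on distributional rigor via the tensor-product decomposition are a reasonable supplement but not something the paper's proof requires or carries out.
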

\begin{proof}
	\begin{enumerate}[(1)]
		\item This follows from writing out the kernel for $\Op^A_{\Fourier}(f)$ and collecting the terms of the phase factors to emphasize that we obtain convolution integrals, 
		\begin{align*}
			\e^{- \ii r' \cdot k} \, \e^{- \ii (r'' - r') \cdot k'} \, \e^{+ \ii r'' \cdot k''} = \e^{- \ii r' \cdot (k - k')} \, \e^{- \ii r'' \cdot (k' - k'')} 
			. 
		\end{align*}
		\item Ignoring equivariance for the moment, Corollary~\ref{operator_valued_calculus:lem:Wigner_transform_Schwartz_class} tells us it is a topological vector space isomorphism between tempered distribution spaces. Consequently, it restricts to a continuous linear injection when we regard it as a map 
		\begin{align*}
			f \mapsto K_{\eq,f}^A : \MoyalSpace_{\eq} \bigl ( \mathcal{B}(\Hil,\Hil') \bigr ) \subset \Schwartz^* \bigl ( T^*\R^d , \mathcal{B}(\Hil,\Hil') \bigr )  \longrightarrow \Schwartz^* \bigl ( \R^d \times \R^d , \mathcal{B}(\Hil,\Hil') \bigr ) 
			. 
		\end{align*}
		It remains to show the kernel map takes values in the equivariant distributions on $\R^d \times \R^d$. But that is implied directly by the equivariance of $f$ and a simple change of coordinates $\tilde{k}' := k' + \gamma^*$: 
		\begin{align*}
			K_{\eq,f}^A(k - \gamma^*,k'' - \gamma^*) &= \frac{1}{(2\pi)^{2d}} \int_{\R^d} \dd r' \int_{\R^d} \dd r'' \int_{\R^d} \dd k' \e^{- \ii r' \cdot (k - \gamma^* - k')} \, \e^{- \ii r'' \cdot (k' - k'' + \gamma^*)} 
			\cdot \notag \\
			&\qquad \qquad \qquad \qquad \cdot 
			\e^{- \ii \frac{\lambda}{\eps} \int_{[\eps r' , \eps r'']} A} \, f \bigl ( \tfrac{\eps}{2} (r' + r'') , k' \bigr ) 
			\\
			&= \frac{1}{(2\pi)^{2d}} \int_{\R^d} \dd r' \int_{\R^d} \dd r'' \int_{\R^d} \dd \tilde{k}' \e^{- \ii r' \cdot (k - \tilde{k}')} \, \e^{- \ii r'' \cdot (\tilde{k}' - k'')} 
			\cdot \notag \\
			&\qquad \qquad \qquad \qquad \cdot 
			\e^{- \ii \frac{\lambda}{\eps} \int_{[\eps r' , \eps r'']} A} \, f \bigl ( \tfrac{\eps}{2} (r' + r'') , \tilde{k}' - \gamma^* \bigr )
			\\
			&= \frac{1}{(2\pi)^{2d}} \int_{\R^d} \dd r' \int_{\R^d} \dd r'' \int_{\R^d} \dd \tilde{k}' \e^{- \ii r' \cdot (k - \tilde{k}')} \, \e^{- \ii r'' \cdot (\tilde{k}' - k'')} 
			\cdot \notag \\
			&\qquad \qquad \qquad \qquad \cdot 
			\e^{- \ii \frac{\lambda}{\eps} \int_{[\eps r' , \eps r'']} A} \, \tau'(\gamma^*) \, f \bigl ( \tfrac{\eps}{2} (r' + r'') , \tilde{k}' \bigr ) \, \tau(\gamma^*)^{-1} 
			\\
			&= \tau'(\gamma^*) \, K_{\eq,f}^A(k,k'') \, \tau(\gamma^*)^{-1} 
		\end{align*}
		Hence, the kernel map takes values in the \emph{equivariant} distributions on $\R^d \times \R^d$. 
	\end{enumerate}
\end{proof}
%

\subsubsection{The Meta Theorem for extending results to equivariant magnetic $\Psi$DOs} 
\label{equivariant_calculus:construction:meta_theorem}
Crucially, Proposition~\ref{equivariant_calculus:prop:equivariant_distributions_define_equivariant_PsiDOs} allows one to systematically extend results from the operator-valued to the equivariant context. For example, we can consider a restriction of the Moyal product to equivariant Moyal spaces. 
\begin{proposition}\label{equivariant_calculus:prop:intertwining_operator_product_Weyl_product_Opeq}
	The magnetic Weyl product $\Weyl$ emulates the operator product of two equivariant magnetic $\Psi$DOs in the sense that 
	\begin{align}
		\Opeq^A(f \Weyl g) = \Opeq^A(f) \, \Opeq^A(g) 
		\label{equivariant_calculus:eqn:equivariance_Weyl_product}
	\end{align}
	holds for all $f \in \MoyalSpace_{\eq} \bigl ( \mathcal{B}(\Hil',\Hil'') \bigr )$ and $g \in \MoyalSpace_{\eq} \bigl ( \mathcal{B}(\Hil,\Hil') \bigr )$. Specifically, the magnetic Weyl product $\Weyl$ preserves equivariance.
\end{proposition}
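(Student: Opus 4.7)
The strategy is to reduce the claim to the operator-valued intertwining $\Op_{\Fourier}^A(f \Weyl g) = \Op_{\Fourier}^A(f) \, \Op_{\Fourier}^A(g)$ already available from Section~\ref{operator_valued_calculus:extension_by_duality:the_magnetic_weyl_product} (transported through the continuous Fourier transform, as outlined in Section~\ref{equivariant_calculus:construction:connection_operator_valued_calculus}) and to read off the equivariance of $f \Weyl g$ using the bidirectional characterization in Proposition~\ref{equivariant_calculus:prop:equivariant_distributions_define_equivariant_PsiDOs}. In particular, I would not attempt a direct computation on the integral formula \eqref{operator_valued_calculus:eqn:Weyl_product} for $f \Weyl g$, which, while in principle feasible via a change of variables $k' \mapsto k' - \gamma^*$ in the oscillatory integral, is awkward at the distributional level.

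First I would observe that by Proposition~\ref{equivariant_calculus:prop:equivariant_distributions_define_equivariant_PsiDOs} the equivariance of $g \in \MoyalSpace_{\eq}(\mathcal{B}(\Hil,\Hil'))$ and $f \in \MoyalSpace_{\eq}(\mathcal{B}(\Hil',\Hil''))$ gives continuous restrictions
\begin{align*}
	\Opeq^A(g) : \Schwartz^*_{\eq}(\R^d,\Hil) \longrightarrow \Schwartz^*_{\eq}(\R^d,\Hil')
	,
	&&
	\Opeq^A(f) : \Schwartz^*_{\eq}(\R^d,\Hil') \longrightarrow \Schwartz^*_{\eq}(\R^d,\Hil'')
	,
\end{align*}
so their composition takes $\Schwartz^*_{\eq}(\R^d,\Hil)$ into $\Schwartz^*_{\eq}(\R^d,\Hil'')$. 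Because $\Opeq^A(h) = \Op_{\Fourier}^A(h)\big\vert_{\Schwartz^*_{\eq}}$ by Definition~\ref{equivariant_calculus:defn:Opeq_A}, this composition is exactly the restriction of the unrestricted operator product $\Op_{\Fourier}^A(f) \, \Op_{\Fourier}^A(g)$, which by the operator-valued calculus coincides with $\Op_{\Fourier}^A(f \Weyl g)$. Here $f \Weyl g \in \MoyalSpace(\mathcal{B}(\Hil,\Hil''))$ is the extended Weyl product on composable Moyal spaces from Definition~\ref{operator_valued_calculus:defn:extension_Weyl_product_Moyal_spaces}, which is well-defined and associative by Lemma~\ref{operator_valued_calculus:lem:extension_Weyl_product_Moyal_spaces}.

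Now I would invoke the $\Longleftarrow$ direction of Proposition~\ref{equivariant_calculus:prop:equivariant_distributions_define_equivariant_PsiDOs}: since $\Op_{\Fourier}^A(f \Weyl g)$ restricts to a continuous map between equivariant distributions, we conclude $f \Weyl g \in \MoyalSpace_{\eq}(\mathcal{B}(\Hil,\Hil''))$, which is precisely the ``equivariance is preserved'' assertion. With this established, Definition~\ref{equivariant_calculus:defn:Opeq_A} directly yields
\begin{align*}
	\Opeq^A(f \Weyl g) = \Op_{\Fourier}^A(f \Weyl g)\big\vert_{\Schwartz^*_{\eq}} = \bigl ( \Op_{\Fourier}^A(f) \, \Op_{\Fourier}^A(g) \bigr ) \big\vert_{\Schwartz^*_{\eq}} = \Opeq^A(f) \, \Opeq^A(g)
	.
\end{align*}
The only delicate step is the interpretation of the operator composition $\Op_{\Fourier}^A(f) \, \Op_{\Fourier}^A(g)$ at the level of distributions, which I expect to be the main obstacle: one has to verify that the range of $\Op_{\Fourier}^A(g)$ on $\Schwartz^*_{\eq}(\R^d,\Hil)$ lands in a subspace on which $\Op_{\Fourier}^A(f)$ is defined. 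This is precisely what the composable Moyal space structure of Definition~\ref{operator_valued_calculus:defn:Moyal_spaces} guarantees, so no new technical machinery is required.
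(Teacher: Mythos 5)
Your proposal is correct and follows essentially the same route as the paper: reduce to the operator-valued identity $\Op_{\Fourier}^A(f \Weyl g) = \Op_{\Fourier}^A(f)\,\Op_{\Fourier}^A(g)$, use the forward direction of Proposition~\ref{equivariant_calculus:prop:equivariant_distributions_define_equivariant_PsiDOs} on each factor to see that the restricted composition lands in $\Schwartz^*_{\eq}(\R^d,\Hil'')$, and then apply the converse direction to conclude $f \Weyl g \in \MoyalSpace_{\eq}\bigl(\mathcal{B}(\Hil,\Hil'')\bigr)$. Your closing remark about the composability of the Moyal spaces is exactly the point the paper's argument implicitly relies on as well.
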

\begin{proof}
	Given that $\MoyalSpace_{\eq} \bigl ( \mathcal{B}(\Hil,\Hil') \bigr ) \subseteq \MoyalSpace \bigl ( \mathcal{B}(\Hil,\Hil') \bigr )$ is a subspace of the larger Moyal space that also includes non-equivariant distributions, the equation 
	\begin{align}
    \Op_{\Fourier}^A(f \Weyl g) = \Op_{\Fourier}^A(f) \, \Op_{\Fourier}^A(g) 
	\end{align}
	makes sense as continuous maps $\Schwartz(\R^d,\Hil) \longrightarrow \Schwartz(\R^d,\Hil'')$ that possesses a continuous extension to $\Schwartz^*(\R^d,\Hil) \longrightarrow \Schwartz^*(\R^d,\Hil'')$. If we restrict ourselves to the smaller space of equivariant distributions, then 
	\begin{align*}
		\Op_{\Fourier}^A(f \Weyl g) \big \vert_{\Schwartz^*_{\eq}(\R^d,\Hil)} : \Schwartz^*_{\eq}(\R^d,\Hil) \subseteq \Schwartz^*(\R^d,\Hil) \longrightarrow \Schwartz^*(\R^d,\Hil'')
	\end{align*}
	is still continuous. 
	
	We merely need to verify equivariance is preserved and invoke Proposition~\ref{equivariant_calculus:prop:equivariant_distributions_define_equivariant_PsiDOs}. But the left-hand side $\Opeq^A(f \Weyl g) = \Op_{\Fourier}^A(f \Weyl g) \big \vert_{\Schwartz^*_{\eq}(\R^d,\Hil)}$ is an equivariant pseudodifferential operator whenever the right-hand side is. And Proposition~\ref{equivariant_calculus:prop:equivariant_distributions_define_equivariant_PsiDOs} tells us that $\Opeq^A(f) = \Op_{\Fourier}^A(f) \vert_{\Schwartz^*_{\eq}(\R^d,\Hil)}$ takes values in the equivariant, $\Hil'$-valued tempered distributions. Consequently, we may replace $\Op_{\Fourier}^A(g)$ with its restriction $\Opeq^A(g)$ and also the composition 
	\begin{align*}
		\Opeq^A(f) \, \Opeq^A(g) 
	\end{align*}
	is equivariant. Thus, also the left-hand side 
	\begin{align*}
		\Opeq^A(f \Weyl g) = \Op_{\Fourier}^A(f \Weyl g) \big \vert_{\Schwartz^*_{\eq}(\R^d,\Hil)} : \Schwartz^*_{\eq}(\R^d,\Hil) \longrightarrow \Schwartz^*_{\eq}(\R^d,\Hil'')
	\end{align*}
	takes values in the \emph{equivariant} $\Hil''$-valued distributions. 
\end{proof}
The first part of the proof just re-states what we know for the operator-valued calculus from Section~\ref{operator_valued_calculus}, the second one elaborates how equivariance is verified and then invokes Proposition~\ref{equivariant_calculus:prop:equivariant_distributions_define_equivariant_PsiDOs}. 

This provides a blueprint to systematically extend properties of the operator-valued pseudodifferential calculus to the equivariant context. Giving us the benefit of vagueness, this can be summarized in the form of a meta theorem: 
\begin{metatheorem}\label{equivariant_calculus:meta_theorem:extension_operator_valued_to_equivariant}
	Suppose a statement for $\Op_{\Fourier}^A$ defined for operator-valued distributions holds true and that equivariance is preserved. Then the statement extends to $\Opeq^A$. 
\end{metatheorem}
The proofs are very similar to the one we have just presented: one needs to combine the statement for operator-valued symbols and invoke Proposition~\ref{equivariant_calculus:prop:equivariant_distributions_define_equivariant_PsiDOs}. Our discussion of an equivariant symbol calculus will make repeated use of this. 

\subsection{A pseudodifferential calculus for equivariant Hörmander symbols} 
\label{equivariant_calculus:calculus}
Equivariant pseudodifferential operators admit a symbol calculus. Almost all of the results in this section are instantiations of Meta Theorem~\ref{equivariant_calculus:meta_theorem:extension_operator_valued_to_equivariant}, and we will only spell out the details for some of the results. We begin by defining equivariant symbol classes and giving one very important property that relates the growth of the group actions $\tau$ and $\tau'$ to the order of the symbol class.

\subsubsection{Definition of equivariant symbol classes and some fundamental properties} 
\label{equivariant_calculus:construction:equivariant_symbol_classes}
\emph{Equivariant} operator-valued Hörmander symbol classes form subspaces of the corresponding operator-valued Hörmander symbol spaces. 
\begin{definition}[Equivariant symbol classes $\Hoereq{m} \bigl ( \mathcal{B}(\Hil' , \Hil) \bigr )$]\label{equivariant_calculus:defn:tau_equivariant_symbol}
	Suppose $m \in \R$. 
	\begin{enumerate}[(1)]
		\item We call symbols $f \in \Hoer{m}_{0,0} \bigl ( \mathcal{B}(\Hil,\Hil') \bigr )$ equivariant if and only if 
		\begin{align}
			f(r, k - \gamma^*) = \tau'(\gamma^*) \, f(r, k) \, \tau(\gamma^*)^{-1} 
			\label{equivariant_calculus:eqn:equivariance_condition_symbols}
		\end{align}
		holds for all $k \in \R^d$, $r \in \R^d$ and $\gamma^* \in \Gamma^*$. The space of equivariant Hörmander symbols of order $m$ will be denoted with $\Hoereq{m} \bigl ( \mathcal{B}(\Hil,\Hil') \bigr )$. 
		\item We call symbols $f \in \Hoer{m}_{0,0} \bigl ( \mathcal{B}(\Hil,\Hil') \bigr )$ periodic if they satisfy \eqref{equivariant_calculus:eqn:equivariance_condition_symbols} for the trivial group actions $\tau^{(\prime)} : \gamma^* \mapsto \id_{\Hil}$. The space of equivariant Hörmander symbols of order $m$ will be denoted with $\Hoer{m}_{\per} \bigl ( \mathcal{B}(\Hil,\Hil') \bigr )$. 
		\item The spaces of equivariant \emph{asymptotic} Hörmander symbols and formal sums of order $m$ (\cf Definition~\ref{operator_valued_calculus:defn:asymptotic_Hoermander_class}) are likewise denoted with $\SemiHoereq{m} \bigl ( \mathcal{B}(\Hil,\Hil') \bigr )$ and $\Sigma \Hoereq{m} \bigl ( \mathcal{B}(\Hil,\Hil') \bigr )$. 
	\end{enumerate}
\end{definition}
In fact, one may think of equivariant symbols as the intersections of the ordinary symbol classes with the space of equivariant distributions or equivariant Moyal spaces, 
\begin{align}
	\Hoereq{m} \bigl ( \mathcal{B}(\Hil,\Hil') \bigr ) &= \Hoer{m}_{0,0} \bigl ( \mathcal{B}(\Hil,\Hil') \bigr ) \cap \Schwartz^*_{\eq} \bigl ( T^* \R^d , \mathcal{B}(\Hil,\Hil') \bigr )
	, 
	\notag \\
	\Hoereq{m} \bigl ( \mathcal{B}(\Hil,\Hil') \bigr ) &= \Hoer{m}_{0,0} \bigl ( \mathcal{B}(\Hil,\Hil') \bigr ) \cap \MoyalSpace_{\eq} \bigl ( \mathcal{B}(\Hil,\Hil') \bigr )
	. 
	\label{equivariant_calculus:eqn:equivariant_Hoermander_class_intersection_usual_Hoermander_class_equivariant_Moyal_space}
\end{align}
Note that in applications it does not suffice to consider $m = 0$ and $\rho = 0$. In fact, the orders of $\tau^{(\prime)}$ determine the order of the Hörmander symbol class the symbol belongs to. 
\begin{lemma}\label{equivariant_calculus:magnetic_PsiDOs:lem:bound_Hoermander_order_by_tau_orders}
	Any Hörmander symbol $f \in \Hoer{m}_{0,0} \bigl ( \mathcal{B}(\Hil , \Hil') \bigr )$ of order $m \in \R$ that satisfies the equivariance condition~\eqref{equivariant_calculus:eqn:equivariance_condition_symbols} is an equivariant Hörmander symbol of order $q + q'$, 
	\begin{align*}
		f \in \Hoer{m}_{0,0} \bigl ( \mathcal{B}(\Hil , \Hil') \bigr )
		\; \wedge \; 
		\mbox{\eqref{equivariant_calculus:eqn:equivariance_condition_symbols}}
		\; \Longrightarrow \; 
		f \in \Hoereq{q+q'} \bigl ( \mathcal{B}(\Hil , \Hil') \bigr ) 
		. 
	\end{align*}
	That is, the order of the symbol class $m \geq q + q'$ has to be greater or equal to the sum of the orders of the actions $\tau$ and $\tau'$. 
\end{lemma}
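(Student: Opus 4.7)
The plan is to use the equivariance condition~\eqref{equivariant_calculus:eqn:equivariance_condition_symbols} to reduce the pointwise bounds on $f$ for $(r,k) \in \R^d \times \R^d$ to bounds on the compact fundamental strip $\R^d \times \BZ$, at which point the tempered growth of $\tau$ and $\tau'$ supplies exactly the $\sexpval{k}^{q+q'}$ polynomial growth needed to land in $\Hoereq{q+q'} \bigl ( \mathcal{B}(\Hil,\Hil') \bigr )$.

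First I would decompose an arbitrary $k \in \R^d$ as $k = k_0 + \gamma^*$ with $k_0 \in \BZ$ and $\gamma^* \in \Gamma^*$; these are uniquely determined up to a set of measure zero. Setting $k \mapsto k_0 + \gamma^*$ in \eqref{equivariant_calculus:eqn:equivariance_condition_symbols} and rearranging gives
\begin{align*}
	f(r,k) = f(r,k_0 + \gamma^*) = \tau'(\gamma^*)^{-1} \, f(r,k_0) \, \tau(\gamma^*)
	.
\end{align*}
Because $\tau'$ is a group representation, $\tau'(\gamma^*)^{-1} = \tau'(-\gamma^*)$, so its operator norm still obeys the tempered-growth bound of order $q'$ thanks to $\sexpval{-\gamma^*} = \sexpval{\gamma^*}$. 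Applying submultiplicativity of the operator norm together with Definition~\ref{setting:defn:order_tau} for both $\tau$ and $\tau'$ gives
\begin{align*}
	\bnorm{f(r,k)}_{\mathcal{B}(\Hil,\Hil')} \leq C \, \sexpval{\gamma^*}^{q+q'} \, \sup_{(r',k_0') \in \R^d \times \BZ} \bnorm{f(r',k_0')}_{\mathcal{B}(\Hil,\Hil')}
	.
\end{align*}
Since $\BZ$ is bounded and $f \in \Hoer{m}_{0,0} \bigl ( \mathcal{B}(\Hil,\Hil') \bigr )$, the supremum on the right is finite; moreover, $\sexpval{\gamma^*}$ and $\sexpval{k}$ are comparable uniformly in $k_0 \in \BZ$, which yields the zeroth-order estimate $\bnorm{f(r,k)}_{\mathcal{B}(\Hil,\Hil')} \leq C_0 \, \sexpval{k}^{q+q'}$.

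Second, I would observe that since $\tau$ and $\tau'$ do not depend on $(r,k)$, every partial derivative of $f$ inherits the equivariance condition:
\begin{align*}
	(\partial_r^a \partial_k^{\alpha} f)(r, k - \gamma^*) = \tau'(\gamma^*) \, (\partial_r^a \partial_k^{\alpha} f)(r,k) \, \tau(\gamma^*)^{-1}
	.
\end{align*}
Repeating the argument of the previous paragraph verbatim for each multi-index pair $(a,\alpha) \in \N_0^d \times \N_0^d$ yields the seminorm bounds $\bnorm{\partial_r^a \partial_k^{\alpha} f(r,k)}_{\mathcal{B}(\Hil,\Hil')} \leq C_{a,\alpha} \, \sexpval{k}^{q+q'}$, which is exactly the defining condition for $f \in \Hoer{q+q'}_{0,0} \bigl ( \mathcal{B}(\Hil,\Hil') \bigr )$. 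Combined with the equivariance of $f$ and the characterization~\eqref{equivariant_calculus:eqn:equivariant_Hoermander_class_intersection_usual_Hoermander_class_equivariant_Moyal_space} of equivariant Hörmander classes as intersections, this places $f$ in $\Hoereq{q+q'} \bigl ( \mathcal{B}(\Hil,\Hil') \bigr )$.

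There is no serious analytical obstacle here; the only point requiring care is matching the correct exponent $q+q'$ (rather than $\max(q,q')$ or $2\max(q,q')$) by exploiting the representation property $\tau(\gamma^*)^{-1} = \tau(-\gamma^*)$ to transfer the tempered-growth estimate to the inverse. The hypothesis $f \in \Hoer{m}_{0,0}$ is used only to ensure smoothness and local boundedness of $f$ and all its derivatives on the compact set $\R^d \times \BZ$; the specific value of $m$ plays no role in the bound, which is precisely why equivariance forces the symbol into $\Hoereq{q+q'}$ irrespective of the original order $m$.
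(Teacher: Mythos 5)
Your proposal is correct and follows essentially the same route as the paper's proof: decompose $k$ into a fundamental-cell component plus a reciprocal lattice vector, use equivariance to reduce to the compact strip $\R^d \times \BZ$, invoke the tempered growth of $\tau$ and $\tau'$ (with $\sexpval{\gamma^*}$ comparable to $\sexpval{k}$) to obtain the $\sexpval{k}^{q+q'}$ bound, and note that all derivatives inherit the equivariance condition. Your explicit remark that $\tau'(\gamma^*)^{-1} = \tau'(-\gamma^*)$ obeys the same growth bound is a welcome bit of extra care, but the argument is the same as the paper's.
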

This innocent-looking and easy-to-prove lemma has some very important consequences. For example, rather than being nested, \emph{equivariant} Hörmander symbol become “constant” as long as $m \geq q + q'$, 
\begin{align*}
	S^m_{\eq} \bigl ( \mathcal{B}(\Hil,\Hil') \bigr ) = S^{q + q'}_{\eq} \bigl ( \mathcal{B}(\Hil,\Hil') \bigr )
	&&
	\forall m \geq q + q'
	. 
\end{align*}
Furthermore, when we take the magnetic Weyl product of two equivariant Hörmander symbols, the order of the product is determined by the order of growth of the group actions on initial and target spaces. 
\begin{proof}
	To prove that $f \in \Hoereq{q+q'} \bigl ( \mathcal{B}(\Hil,\Hil') \bigr )$ belongs to the Hörmander symbol class of order $q+q'$, we split $k = [k] + \gamma^*$ into a reciprocal lattice vector $\gamma^*$ and a vector $[k] \in \BZ$ that is located inside the fundamental cell centered at $0$ and exploit the equivariance~\eqref{equivariant_calculus:eqn:equivariance_condition_symbols} of the symbol, 
	\begin{align*}
		\bnorm{f(r,k)}_{\mathcal{B}(\Hil,\Hil')} &= \bnorm{\tau'(\gamma) \, f(r,[k]) \, \tau(\gamma^*)^{-1}}_{\mathcal{B}(\Hil)}
		\\
		&\leq \bnorm{\tau(\gamma)}_{\mathcal{B}(\Hil)} \, \bnorm{\tau'(\gamma)}_{\mathcal{B}(\Hil')} \, \sup_{\substack{r \in \R^d \\ [k] \in \BZ}} \bnorm{f(r,[k])}_{\mathcal{B}(\Hil,\Hil')}
		. 
	\end{align*}
	Because the symbol $f$ is a continuous function that is $\Cont^{\infty}_{\mathrm{b}}$ in the position variable and we take the supremum over the compact set $\BZ$ in momentum, the supremum 
	\begin{align*}
		C_1(f) := \sup_{\substack{r \in \R^d \\ [k] \in \BZ}} \bnorm{f(r,[k])}_{\mathcal{B}(\Hil,\Hil')} 
		< \infty
	\end{align*}
	is finite. 

	Furthermore, because $\tau^{(\prime)}$ are of order $q^{(\prime)}$, we can estimate the other two norms in a straightforward fashion by 
	\begin{align*}
		\bnorm{\tau^{(\prime)}(\gamma^*)}_{\Hil^{(\prime)}} \leq C_2^{(\prime)} \, \sexpval{\gamma^*}^{q^{(\prime)}} 
		. 
	\end{align*}
	This together with the trivial estimate $\sabs{\gamma^*} \leq 2 \sabs{k}$ lets us control the growth on the right-hand side, 
	\begin{align*}
		\snorm{f(r,k)}_{\mathcal{B}(\Hil,\Hil')} \leq 4 C_1(f) \, C_2 \, C'_2 \, \sexpval{k}^{q+q'} 
		. 
	\end{align*}
	Since also all partial derivatives of $f$ are equivariant, we may in fact replace $f$ by one of its derivatives in the above arguments. Consequently, we arrive at the claim, namely 
	\begin{align*}
		f \in \Hoereq{q+q'} \bigl ( \mathcal{B}(\Hil,\Hil') \bigr ) 
		. 
	\end{align*}
\end{proof}
Sometimes we will need to consider products of periodic \emph{scalar-valued} symbols with equivariant symbols. Fortunately, we may view those as equivariant or periodic operator-valued Hörmander symbols. 
\begin{lemma}\label{equivariant_calculus:magnetic_PsiDOs:lem:periodic_scalar_valued_symbols_embedding_periodic_equivariant_symbol_classes}
	Suppose $f \in \Hoer{m}_{0,0}(\C)$ is periodic in the sense that 
	\begin{align*}
		f(r,k - \gamma^*) = f(r,k) 
		&&
		\forall (r,k) \in T^* \R^d 
		, \; 
		\gamma^* \in \Gamma^* 
	\end{align*}
	holds. Then we may view $f$ either an equivariant (with respect to any group action $\tau$ with tempered growth) or a periodic Hörmander symbol, 
	\begin{align*}
		f \in \Hoer{m}_{0,0}(\C) \mbox{ periodic}
		\; \Longrightarrow \; 
		f \in \Hoereq{2q} \bigl ( \mathcal{B}(\Hil) \bigr ) 
		, 
		\\
		f \in \Hoer{m}_{0,0}(\C) \mbox{ periodic}
		\; \Longrightarrow \; 
		f \in \Hoer{0}_{\per} \bigl ( \mathcal{B}(\Hil) \bigr ) 
		. 
	\end{align*}
\end{lemma}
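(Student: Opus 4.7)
The plan is to reduce everything to the preceding Lemma~\ref{equivariant_calculus:magnetic_PsiDOs:lem:bound_Hoermander_order_by_tau_orders} by reinterpreting the scalar periodic symbol as an operator-valued one via the embedding $f \mapsto f \otimes \id_{\Hil}$, and then checking that the commutation of scalars with operators turns the scalar periodicity into the required operator-valued equivariance condition.

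First I would observe that the assignment $(r,k) \mapsto f(r,k) \, \id_{\Hil}$ takes values in $\mathcal{B}(\Hil)$ with operator norm equal to $\sabs{f(r,k)}$, and that all partial derivatives $\partial_r^a \partial_k^\alpha \bigl ( f(r,k) \, \id_{\Hil} \bigr ) = (\partial_r^a \partial_k^\alpha f)(r,k) \, \id_{\Hil}$ enjoy the same bounds. Thus every seminorm of the operator-valued symbol coincides with the corresponding scalar seminorm, so $f \, \id_{\Hil} \in \Hoer{m}_{0,0} \bigl ( \mathcal{B}(\Hil) \bigr )$.

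Second I would verify the equivariance condition \eqref{equivariant_calculus:eqn:equivariance_condition_symbols} with $\tau = \tau'$. Because $f(r,k) \, \id_{\Hil}$ is a scalar multiple of the identity, it commutes with every bounded operator, in particular with $\tau(\gamma^*)$, so
\begin{align*}
	\tau(\gamma^*) \, \bigl ( f(r,k) \, \id_{\Hil} \bigr ) \, \tau(\gamma^*)^{-1} = f(r,k) \, \id_{\Hil} = f(r, k - \gamma^*) \, \id_{\Hil} ,
\end{align*}
where the last equality is the scalar periodicity assumption. This is precisely the equivariance condition~\eqref{equivariant_calculus:eqn:equivariance_condition_symbols} with the same group action $\tau = \tau'$ on both sides. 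Applying Lemma~\ref{equivariant_calculus:magnetic_PsiDOs:lem:bound_Hoermander_order_by_tau_orders} with $q' = q$ yields $f \in \Hoereq{q + q} \bigl ( \mathcal{B}(\Hil) \bigr ) = \Hoereq{2q} \bigl ( \mathcal{B}(\Hil) \bigr )$. Specializing to the trivial group action $\tau(\gamma^*) = \id_{\Hil}$, which has order $q = 0$, the same argument gives $f \in \Hoer{0}_{\per} \bigl ( \mathcal{B}(\Hil) \bigr )$.

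There is essentially no obstacle here: the statement is a bookkeeping consequence of the previous lemma. The one point worth emphasizing in the write-up is why the original order $m$ may be replaced by $2q$ (respectively $0$) — namely, a periodic scalar symbol is automatically uniformly bounded in $k$, so it already lies in $\Hoer{0}_{0,0}(\C)$, and Lemma~\ref{equivariant_calculus:magnetic_PsiDOs:lem:bound_Hoermander_order_by_tau_orders} is sharp in exchanging the Hörmander order for the sum of the growth orders of $\tau$ and $\tau'$.
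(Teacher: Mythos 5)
Your proof is correct and follows essentially the same route as the paper's: tensor on the identity, use that scalars commute with $\tau(\gamma^*)$ to convert periodicity into the equivariance condition~\eqref{equivariant_calculus:eqn:equivariance_condition_symbols}, and invoke Lemma~\ref{equivariant_calculus:magnetic_PsiDOs:lem:bound_Hoermander_order_by_tau_orders} to read off the orders $2q$ and $0$. Your additional remark on why the order $m$ can be discarded (uniform boundedness over the compact fundamental cell) is a welcome clarification, but does not constitute a different argument.
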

\begin{proof}
	Any scalar-valued Hörmander symbol $f \in \Hoer{m}_{0,0}(\C)$ defines an element of $\Hoer{m}_{0,0} \bigl ( \mathcal{B}(\Hil) \bigr )$ by tensoring on the identity, 
	\begin{align*}
		f \cong f \otimes \id_{\Hil} \in \Hoer{m}_{0,0} \bigl ( \mathcal{B}(\Hil) \bigr ) 
		. 
	\end{align*}
	Since the identity commutes with any of the $\tau(\gamma^*)$, we may either regard them as equivariant or periodic (in the special case $\tau : \gamma^* \mapsto \id_{\Hil}$). The order of the Hörmander classes follow from Lemma~\ref{equivariant_calculus:magnetic_PsiDOs:lem:bound_Hoermander_order_by_tau_orders}, \ie either $q + q = 2q$ or $0$.
\end{proof}
However, there is one subtle difference in how we treat asymptotic expansions in equivariant versus ordinary symbol classes, where typically $\rho > \delta$. Given that derivatives of equivariant Hörmander symbols lie in the equivariant symbol class of the \emph{same} order, that is, $\rho = 0 = \delta$, we need to deal with asymptotic Hörmander symbols and formal sums differently. As mentioned in Remark~\ref{operator_valued_calculus:rem:lack_resummation_results_rho_equal_delta}, we are not aware of an existence proof of a resummation in case $\rho = \delta$. Hence, we are at present unable to prove an analog of Lemma~\ref{operator_valued_calculus:lem:existence_resummation_formal_sum} for equivariant symbols. For applications this means we need to distinguish between formals sums obtained from \eg perturbation expansions and asymptotic symbols. 

Still, once we adapt our notation somewhat, we can still reformulate results involving asymptotic expansions (\eg of the Weyl product $\Weyl$ or for a parametrix). 
\begin{definition}[Weaker definition of $\order(\epsilon^{\infty})$]\label{equivariant_calculus:defn:weaker_definition_O_epsilon_infty}
	\begin{enumerate}[(1)]
		\item We abbreviate the formal sum $\sum_{n = 0}^{\infty} \epsilon^n \, f_n \in \Sigma \Hoereq{m} \bigl ( \mathcal{B}(\Hil,\Hil') \bigr )$ with $f_{\epsilon}$ and also write $f_{\epsilon} \asymp \sum_{n = 0}^{\infty} \epsilon^n \, f_n$ as 
		\begin{align*}
			f_{\epsilon} = \sum_{n = 0}^{\infty} \epsilon^n \, f_n + \order(\epsilon^{\infty}) 
			. 
		\end{align*}
		\item When $f_{\epsilon}$ and $g_{\epsilon}$ are two formal sums, an asymptotic symbol and a formal sum, or two asymptotic sums, then 
		\begin{align*}
			f_{\epsilon} = g_{\epsilon} + \order(\epsilon^{\infty}) 
		\end{align*}
		is defined to mean that the coefficients $f_n = g_n$ of the asymptotic expansions of $f_{\epsilon} = \sum_{n = 0}^{\infty} \epsilon^n \, f_n + \order(\epsilon^{\infty})$ and $g_{\epsilon} = \sum_{n = 0}^{\infty} \epsilon^n \, g_n + \order(\epsilon^{\infty})$ agree for all $n \in \N_0$. 
	\end{enumerate}
\end{definition}
Item~(1) erases the notational distinction between asymptotic symbols and formal sums. The advantage is that it shortens expressions like the almost-projection property 
\begin{align*}
	\pi_{\epsilon} \Weyl \pi_{\epsilon} = \pi_{\epsilon} + \order(\epsilon^{\infty})
\end{align*}
to a single, easily comprehensible line rather than a rather cumbersome expression for the formal sums. 

Indeed, in many applications it suffices to compute $f_{\epsilon}$ to some finite, but \emph{arbitary} order $N$ and consider the \emph{finite} resummation $f^{(N)}_{\epsilon} = \sum_{n = 0}^N \epsilon^n \, f_n + \order(\epsilon^{N+1})$ of $f_{\epsilon} = f^{(N)}_{\epsilon} + \order(\epsilon^{N+1})$. Seeing as we are only dealing with finite sums here, questions of convergence (\eg in the Fréchet topology of $\Hoereq{m} \bigl ( \mathcal{B}(\Hil,\Hil') \bigr )$) do not arise. And since $N \in \N_0$ can be chosen arbitrarily large, this amounts to working with formal sums. 

We caution the reader that there \emph{are} circumstances where we actually need to work with asymptotic symbols obtained by resumming a formal series. 

\emph{Throughout the remainder of this section, we shall always use the weaker definition of $\order(\epsilon^{\infty})$.}

\subsubsection{Equivariant magnetic pseudodifferential operators on equivariant $L^2$ and Sobolev spaces} 
\label{equivariant_calculus:calculus:boundedness_results}
So far we have regarded equivariant pseudodifferential operators as continuous maps between equivariant distribution spaces. For applications we want to restrict them further, in the simplest case we want to know whether the restriction of $\Opeq^A(f)$ to \eg 
\begin{align*}
	L^2_{\eq}(\R^d,\Hil) \subseteq \Schwartz^*_{\eq}(\R^d,\Hil) 
\end{align*}
defines a continuous, that is, bounded operator 
\begin{align*}
	\Opeq^A(f) : L^2_{\eq}(\R^d,\Hil) \longrightarrow L^2_{\eq}(\R^d,\Hil')
\end{align*}
between equivariant $L^2$ spaces. Like before, such a boundedness result will lead to a boundedness result between equivariant magnetic Sobolev spaces. 
\begin{theorem}\label{equivariant_calculus:thm:Calderon_Vaillancourt}
	Suppose $f \in \SemiHoereq{q + q'} \bigl ( \mathcal{B}(\Hil,\Hil') \bigr )$ is an equivariant Hörmander symbol. 
	\begin{enumerate}[(1)]
		\item Then $\Opeq^A(f) : H^{q + q'}_{\eq,\Fourier,A}(\R^d,\Hil) \longrightarrow L^2_{\eq}(\R^d,\Hil)$ defines a bounded operator. 
		%
		\item When $q + q' = 0$ or the components of the magnetic vector potential $A_j \in L^{\infty}(\R^d,\R)$, $j = 1 , \ldots , d$, are all bounded, then 
		\begin{align*}
			\Opeq^A(f) : L^2_{\eq}(\R^d,\Hil) \longrightarrow L^2_{\eq}(\R^d,\Hil')
		\end{align*}
		defines a bounded operator between equivariant $L^2$-spaces. 
	\end{enumerate}
\end{theorem}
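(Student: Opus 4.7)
The plan is to apply Meta Theorem~\ref{equivariant_calculus:meta_theorem:extension_operator_valued_to_equivariant}: we first invoke the operator-valued Calder\'on-Vaillancourt result on $\R^d$, then descend to the equivariant setting using the machinery of Section~\ref{equivariant_calculus:construction}. More precisely, thanks to Lemma~\ref{equivariant_calculus:magnetic_PsiDOs:lem:bound_Hoermander_order_by_tau_orders} and the inclusion \eqref{equivariant_calculus:eqn:equivariant_Hoermander_class_intersection_usual_Hoermander_class_equivariant_Moyal_space}, any equivariant Hörmander symbol $f \in S^{q+q'}_{\eq}(\mathcal{B}(\Hil,\Hil'))$ also belongs to $S^{q+q'}_{0,0}(\mathcal{B}(\Hil,\Hil'))$ viewed as a non-equivariant symbol of type $(\rho,\delta) = (0,0)$. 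Corollary~\ref{operator_valued_calculus:cor:boundedness_magnetic_Sobolev_spaces} (whose hypothesis $\rho = \delta \in [0,1)$ is satisfied) then guarantees that $\Op^A(f) : H^{q+q'}_A(\R^d,\Hil) \to L^2(\R^d,\Hil')$ is bounded. Conjugating by the continuous Fourier transform yields the boundedness of
\begin{align*}
\Op^A_{\Fourier}(f) : H^{q+q'}_{\Fourier,A}(\R^d,\Hil) \longrightarrow L^2(\R^d,\Hil').
\end{align*}

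The next step is to descend to the equivariant setting. By Proposition~\ref{equivariant_calculus:prop:equivariant_distributions_define_equivariant_PsiDOs}, the equivariance of the symbol $f$ forces $\Op^A_{\Fourier}(f)$ to preserve equivariance, which is precisely the operator equivariance condition \eqref{equivariant_calculus:eqn:equivariance_condition_L2_Rd} needed to invoke characterisation~(c) of Proposition~\ref{equivariant_calculus:prop:characterization_equivariant_operators}. We are then in the exact setting of Corollary~\ref{equivariant_calculus:cor:equivalence_magnetic_equivariant_operators}: a bounded operator $\widehat{F}^A := \Op^A_{\Fourier}(f)$ from $H^{q+q'}_{\Fourier,A}(\R^d,\Hil)$ to $L^2(\R^d,\Hil')$ that satisfies \eqref{equivariant_calculus:eqn:equivariance_condition_L2_Rd}. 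Part~(1) of that corollary immediately provides the bounded operator
\begin{align*}
\Opeq^A(f) : H^{q+q'}_{\eq,\Fourier,A}(\R^d,\Hil) \longrightarrow L^2_{\eq}(\R^d,\Hil'),
\end{align*}
which settles item~(1) of the theorem. Gauge covariance, should it be needed for intermediate computations, is likewise available from the same corollary.

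For item~(2), the two cases are essentially bookkeeping. When $q + q' = 0$ the weight operator $\sexpval{\KA}^{q+q'}$ reduces to the identity, so $H^{0}_{\eq,\Fourier,A}(\R^d,\Hil) = L^2_{\eq}(\R^d,\Hil)$ as Hilbert spaces. When instead all components $A_j \in L^{\infty}(\R^d,\R)$ are bounded, the remark following the definition of the magnetic Sobolev spaces over the Brillouin torus (equivalently, the corresponding statement on $\R^d$ obtained by commuting with $\Fourier$) yields that $H^{q+q'}_{\eq,\Fourier,A}(\R^d,\Hil)$ and $L^2_{\eq}(\R^d,\Hil)$ coincide as Banach spaces with equivalent norms. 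In both cases the bounded operator from~(1) then restricts/extends to a bounded operator between equivariant $L^2$-spaces.

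The only potential source of friction I anticipate is purely technical: one must make sure that the non-equivariant Corollary~\ref{operator_valued_calculus:cor:boundedness_magnetic_Sobolev_spaces} really does cover the type $(\rho,\delta) = (0,0)$ relevant here, and that the Fourier-conjugated magnetic Sobolev spaces $H^{q+q'}_{\Fourier,A}(\R^d,\Hil) = \Fourier H^{q+q'}_A(\R^d,\Hil)$ are indeed the right intermediate space for applying Corollary~\ref{equivariant_calculus:cor:equivalence_magnetic_equivariant_operators}. Both points are built into how the Sobolev scales were defined, so no substantive new estimate is required; the proof is, as predicted by the Meta Theorem, a matter of assembling existing pieces rather than proving anything new from scratch.
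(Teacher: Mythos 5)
Your proposal is correct and follows essentially the same route as the paper: invoke Corollary~\ref{operator_valued_calculus:cor:boundedness_magnetic_Sobolev_spaces} for the non-equivariant operator-valued symbol, pass through the Fourier-conjugated magnetic Sobolev scale, and then apply Corollary~\ref{equivariant_calculus:cor:equivalence_magnetic_equivariant_operators} (via the equivariance preserved by Proposition~\ref{equivariant_calculus:prop:equivariant_distributions_define_equivariant_PsiDOs}), with item~(2) handled exactly as in the paper by observing that the weights become bounded with bounded inverses. The technical caveat you flag is harmless, since Corollary~\ref{operator_valued_calculus:cor:boundedness_magnetic_Sobolev_spaces} explicitly covers the case $\rho = \delta \in [0,1)$.
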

\begin{proof}
	Corollary~\ref{operator_valued_calculus:cor:boundedness_magnetic_Sobolev_spaces} to the Calderón-Vaillancourt Theorem~\ref{operator_valued_calculus:thm:Calderon_Vaillancourt} for operator-valued symbols ensures that the operator $\Op_{\Fourier}^A(f) : H^{q + q'}_{\Fourier,A}(\R^d,\Hil) \longrightarrow L^2(\R^d,\Hil)$ is bounded. Hence, Proposition~\ref{equivariant_calculus:cor:equivalence_magnetic_equivariant_operators} applies for $F^A = \Opeq^A(f)$, and gives us claim (1) and the first half of claim~(2). 
	
	When the components of the vector potential $A_j \in L^{\infty}(\R^d,\R)$, $j = 1 , \ldots , d$, are bounded, then the $\KA_j = \hat{k}_j - A_j(\R)$ all define bounded operators. Consequently, the weights that enter the Sobolev norms are just bounded operators with bounded inverses, which means $H^{q + q'}_{\eq,\Fourier,A}(\R^d,\Hil^{(\prime)})$ and $L^2_{\eq}(\R^d,\Hil^{(\prime)})$ agree as Banach spaces, regardless of the value of $q + q'$. 
\end{proof}
Like in the case of magnetic pseudo\-differential operators defined from scalar- or operator-valued symbols, this result implies norm bounds if we consider $\Opeq^A(f)$ as an operator between equivariant Sobolev spaces. 
\begin{corollary}\label{equivariant_calculus:cor:Calderon_Vaillancourt}
	Any $f \in \Hoereq{q+q'} \bigl ( \mathcal{B}(\Hil , \Hil') \bigr )$ defines a bounded equivariant magnetic pseudo\-differential operator between magnetic equivariant Sobolev spaces of any order $s \geq q+q' \geq 0$, 
	\begin{align*}
		\Opeq^A(f) : H^s_{\eq,\Fourier,A}(\R^d,\Hil) \longrightarrow H^{s-q-q'}_{\eq,\Fourier,A}(\R^d,\Hil')
		.
	\end{align*}
\end{corollary}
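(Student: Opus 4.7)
The strategy is to piggy-back on the operator-valued Calderón-Vaillancourt theorem on magnetic Sobolev spaces (Corollary~\ref{operator_valued_calculus:cor:boundedness_magnetic_Sobolev_spaces}) and then restrict the resulting bounded operator to the equivariant subspaces via the three-view correspondence of Propositions~\ref{equivariant_calculus:prop:characterization_equivariant_operators} and~\ref{equivariant_calculus:prop:equivalence_boundedness_equivariant_operators} together with Corollary~\ref{equivariant_calculus:cor:equivalence_magnetic_equivariant_operators}. Intuitively, one should think of the equivariant magnetic Sobolev space as the intersection of $\Schwartz^*_{\eq}(\R^d,\Hil)$ with a Sobolev condition built from $\KA$, and the operator $\Opeq^A(f)$ as the restriction of $\Op^A_{\Fourier}(f)$ to this subspace.

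First I would invoke the inclusion $\Hoereq{q+q'} \bigl ( \mathcal{B}(\Hil,\Hil') \bigr ) \subseteq \Hoer{q+q'}_{0,0} \bigl ( \mathcal{B}(\Hil,\Hil') \bigr )$ from equation~\eqref{equivariant_calculus:eqn:equivariant_Hoermander_class_intersection_usual_Hoermander_class_equivariant_Moyal_space} to view $f$ as an ordinary operator-valued Hörmander symbol. Corollary~\ref{operator_valued_calculus:cor:boundedness_magnetic_Sobolev_spaces} then yields a bounded operator $\Op^A(f) : H^s_A(\R^d,\Hil) \longrightarrow H^{s-q-q'}_A(\R^d,\Hil')$ for every $s \in \R$. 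Conjugating by the continuous Fourier transform $\Fourier$ — which is an isometry between $L^2(\R^d,\Hil)$ and itself and, by construction of $H^s_{\Fourier,A} := \Fourier H^s_A$, intertwines the magnetic Sobolev norms of order $s$ — produces the bounded operator $\Op^A_{\Fourier}(f) : H^s_{\Fourier,A}(\R^d,\Hil) \longrightarrow H^{s-q-q'}_{\Fourier,A}(\R^d,\Hil')$. By Proposition~\ref{equivariant_calculus:prop:equivariant_distributions_define_equivariant_PsiDOs}, the equivariance of the symbol $f$ is exactly what ensures that $\Op^A_{\Fourier}(f)$ preserves equivariance of distributions, so its restriction is precisely $\Opeq^A(f)$ in the sense of Definition~\ref{equivariant_calculus:defn:Opeq_A}.

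The main obstacle is the last step: transferring this non-equivariant Sobolev-to-Sobolev estimate into a bound between the corresponding \emph{equivariant} magnetic Sobolev spaces. Corollary~\ref{equivariant_calculus:cor:equivalence_magnetic_equivariant_operators} handles exactly the case where the target is $L^2$ and the source is $H^{q+q'}_{\Fourier,A}$, but here we need both the source and the target to be magnetic Sobolev spaces of possibly positive order. I would therefore extend Corollary~\ref{equivariant_calculus:cor:equivalence_magnetic_equivariant_operators} as follows: any bounded operator $\widehat{F}^A : H^s_{\Fourier,A}(\R^d,\Hil) \longrightarrow H^{s-q-q'}_{\Fourier,A}(\R^d,\Hil')$ that satisfies the covariance condition~\eqref{equivariant_calculus:eqn:equivariance_condition_L2_Rd} restricts to a bounded operator $F^A : H^s_{\eq,\Fourier,A}(\R^d,\Hil) \longrightarrow H^{s-q-q'}_{\eq,\Fourier,A}(\R^d,\Hil')$. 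The argument mimics the proof of that corollary: using the identification $L^2_{\eq}(\R^d,\Hil^{(\prime)}) \cong L^2(\BZ,\Hil^{(\prime)})$ from Lemma~\ref{appendix:equivariant_operators:lem:weighted_Hilbert_spaces_translated_Brillouin_zones}, one characterizes the equivariant Sobolev norm as the magnetic Sobolev norm of the restriction to the Brillouin cell, so that the covariance condition allows one to patch the local $\BZ$-bounds into a global estimate on $L^2_{\eq}$. The subtlety, which is the real work, is that the weight operators $\sexpval{\KA}^s$ and $\sexpval{\KA'}^{s-q-q'}$ used to define the equivariant Sobolev norms are related to $\sexpval{P^A}^s$ and $\sexpval{P^{A'}}^{s-q-q'}$ only modulo magnetic translations, but this mismatch is absorbed by the covariance condition~\eqref{equivariant_calculus:eqn:equivariance_condition_L2_Rd} and the tempered-growth assumption on $\tau$ and $\tau'$.

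Once this extended correspondence is established, the proof concludes immediately: applying it to $\widehat{F}^A := \Op^A_{\Fourier}(f)$ yields the desired continuous operator between equivariant magnetic Sobolev spaces, with operator norm controlled by finitely many Hörmander seminorms of $f$ through the uniform bound inherited from Corollary~\ref{operator_valued_calculus:cor:boundedness_magnetic_Sobolev_spaces}.
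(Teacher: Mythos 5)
Your first two steps (viewing $f$ as an ordinary operator-valued Hörmander symbol, applying Corollary~\ref{operator_valued_calculus:cor:boundedness_magnetic_Sobolev_spaces}, and conjugating by $\Fourier$) are fine, but your route diverges from the paper's at the crucial point. The paper proves this corollary by the weight-insertion trick: it writes $\norm{\Opeq^A(f)\Psi}_{H^{s-q-q'}_{\eq,\Fourier,A}}$ as $\bnorm{\Opeq^A\bigl(w_{s-q-q'}\Weyl f\Weyl w_{-s}\bigr)\,\Opeq^A(w_s)\Psi}_{L^2_{\eq}}$ exactly as in equation~\eqref{operator_valued_calculus:eqn:proof_corollary_Calderon_Vaillancourt_Sobolev_norm_L2_norm}, handles the composed symbol with Theorem~\ref{equivariant_calculus:thm:magnetic_Weyl_product_equivariant_Hoermander_symbols}, and then invokes Theorem~\ref{equivariant_calculus:thm:Calderon_Vaillancourt}. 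Because the weights $\Opeq^A(w_{\pm s})$ are by construction bounded bijections between the equivariant Sobolev spaces and $L^2_{\eq}$, no transfer of Sobolev-to-Sobolev estimates between the non-equivariant and equivariant settings is ever required.

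The step you defer — extending Corollary~\ref{equivariant_calculus:cor:equivalence_magnetic_equivariant_operators} so that a covariant bounded operator $H^s_{\Fourier,A}(\R^d,\Hil)\longrightarrow H^{s-q-q'}_{\Fourier,A}(\R^d,\Hil')$ yields a bounded operator between the corresponding \emph{equivariant} magnetic Sobolev spaces — is a genuine gap, not a routine mimicking. The existing transfer results (Proposition~\ref{appendix:equivariant_operators:prop:equivalence_boundedness_equivariant_operators} and Corollary~\ref{appendix:equivariant_operators:cor:equivalence_boundedness_equivariant_operators}) rely essentially on the target being $L^2$: the output norm is then computed cell-by-cell with no weight operator on the output side, and the only nontrivial ingredient is the one-directional cutoff embedding $\imath_{\chi}:H^m_{\eq,\Fourier,A}(\BZ_{\gamma^*},\Hil)\longrightarrow H^m_{\Fourier,A}(\R^d,\Hil)$ on the \emph{source} side (Lemma~\ref{appendix:equivariant_operators:lem:continuity_inclusion_map_L2_BZ_gamma_ast_L2_Rd}), whose proof already occupies several pages. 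For a target of positive order $s-q-q'>0$ you would additionally need to control the equivariant Sobolev norm of $\bigl(\widehat{F}^A\imath_{\chi}(\psi)\bigr)\vert_{\BZ_{\gamma^*}}$ — defined through $\sexpval{\KA}^{s-q-q'}$ with quasi-periodic boundary conditions on a cell — by the non-equivariant $H^{s-q-q'}_{\Fourier,A}(\R^d,\Hil')$-norm of $\widehat{F}^A\imath_{\chi}(\psi)$, i.e.\ a restriction estimate running in the \emph{opposite} direction to $\imath_{\chi}$. Nothing in the paper supplies this, and it does not follow from the covariance condition~\eqref{equivariant_calculus:eqn:equivariance_condition_L2_Rd} or the tempered growth of $\tau$, $\tau'$ alone; since $L^2_{\eq}(\R^d,\Hil)$ is not a subspace of $L^2(\R^d,\Hil)$, "restricting" the non-equivariant Sobolev bound is not an operation you get for free. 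You would have to prove this transfer lemma in full, whereas the weight-insertion argument sidesteps it entirely.
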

The proof is a blend of the proofs of Theorem~\ref{equivariant_calculus:thm:Calderon_Vaillancourt} and Corollary~\ref{operator_valued_calculus:cor:boundedness_magnetic_Sobolev_spaces}: we can express the appropriate operator norm of $\Opeq^A(f)$ similarly to  equation~\eqref{operator_valued_calculus:eqn:proof_corollary_Calderon_Vaillancourt_Sobolev_norm_L2_norm} and then invoke Theorem~\ref{equivariant_calculus:thm:Calderon_Vaillancourt}. Strictly speaking, it involves a product of equivariant Hörmander symbols, but that part of the argument is taken care of in Theorem~\ref{equivariant_calculus:thm:magnetic_Weyl_product_equivariant_Hoermander_symbols} below. 
\medskip

\noindent
Our exposition so far suggests that all results transfer directly from the operator-valued case covered in Section~\ref{operator_valued_calculus} to the equivariant context. Unfortunately, that is not always true. One source of trouble is the common assumption $\rho > \delta$ in the scalar-valued or non-equivariant, operator-valued cases; this condition is necessarily violated for equivariant magnetic $\Psi$DOs as the quasi-periodicity condition~\ref{equivariant_calculus:eqn:equivariance_condition_symbols} implies $\rho = \delta = 0$. We will showcase this with the adaptation of Theorem~\ref{operator_valued_calculus:thm:selfadjointness_elliptic_symbols}: 
\begin{theorem}\label{equivariant_calculus:thm:selfadjointness_elliptic_symbols_weak_result}
	Assume the two Hilbert spaces $\Hil = \Hil'$ agree. 
	Further, we impose \emph{at least one} of the following two conditions: 
	\begin{enumerate}[(a)]
		\item The group actions $\tau^{(\prime)}$ remain uniformly bounded as $\sabs{\gamma^*} \rightarrow \infty$ (\ie $q = 0$). 
		\item The components of the magnetic vector potential $A \in L^{\infty}(\R^d,\R^d)$ are bounded. 
	\end{enumerate}
	Suppose $h \in \Hoereq{2q} \bigl ( \mathcal{B}(\Hil) \bigr )$ is an equivariant Hörmander symbol of order $2q$ that takes values in the selfadjoint operators,
	\begin{align*}
		h(r,k)^* = h(r,k) 
		&&
		\forall (r,k) \in T^* \BZ
		. 
	\end{align*}
	In case $q > 0$ we assume in addition that $h$ is elliptic of order $2q$ in the sense of Definition~\ref{operator_valued_calculus:defn:elliptic_symbols}. 
	
	Then $\Opeq^A(h) = \Opeq^A(h)^*$ defines a \emph{bounded} selfadjoint operator on $L^2_{\eq}(\R^d,\Hil)$. 
\end{theorem}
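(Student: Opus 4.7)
The plan is to reduce the selfadjointness claim to a combination of boundedness on $L^2_{\eq}(\R^d,\Hil)$ plus symmetry, thereby side-stepping entirely the parametrix construction that underpins Theorem~\ref{operator_valued_calculus:thm:selfadjointness_elliptic_symbols} but fails in the equivariant regime. Indeed, equivariant Hörmander classes are of type $(0,0)$ (Lemma~\ref{equivariant_calculus:magnetic_PsiDOs:lem:bound_Hoermander_order_by_tau_orders}), so the Iftimie--Măntoiu--Purice strategy for constructing a parametrix whose remainder is smoothing is unavailable. The hypotheses (a) and (b) are precisely designed so that $\Opeq^A(h)$ is already bounded on the whole Hilbert space $L^2_{\eq}(\R^d,\Hil)$, which renders the parametrix unnecessary.

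First I would establish boundedness under either of the two alternative hypotheses. Under (a) we have $q=0$, so $h \in \Hoereq{0} \bigl ( \mathcal{B}(\Hil) \bigr )$ is of order zero and the equivariant Calderón-Vaillancourt Theorem~\ref{equivariant_calculus:thm:Calderon_Vaillancourt}~(2) yields directly $\Opeq^A(h) \in \mathcal{B} \bigl ( L^2_{\eq}(\R^d,\Hil) \bigr )$. Under (b) the components $A_j \in L^{\infty}(\R^d,\R)$ are bounded, and therefore each $\KA_j = \hat k_j - \lambda A_j(\Reps)$ is bounded on $L^2_{\eq}(\R^d,\Hil)$ (the multiplication operator $\hat k$ is bounded on $L^2_{\eq}$ after identifying with $L^2(\BZ,\Hil)$, and $A_j(\Reps)$ is bounded by assumption). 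Hence $\sexpval{\KA}^{q+q'}$ is bounded with bounded inverse, so $H^{q+q'}_{\eq,\Fourier,A}(\R^d,\Hil)$ coincides with $L^2_{\eq}(\R^d,\Hil)$ as a Banach space, and Theorem~\ref{equivariant_calculus:thm:Calderon_Vaillancourt}~(1) again gives boundedness on $L^2_{\eq}(\R^d,\Hil)$.

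Next I would prove symmetry. The symbol satisfies the pointwise equality $h(r,k)^* = h(r,k)$, so on the level of equivariant Moyal-space distributions the pointwise adjoint coincides with $h$. Arguing exactly as in the computation at the start of the proof of Theorem~\ref{operator_valued_calculus:thm:selfadjointness_elliptic_symbols} (swapping the roles of the integration variables in the oscillatory integral and using that $[\eps x,\eps y]=-[\eps y,\eps x]$), one obtains
\begin{align*}
  \bscpro{\varphi}{\Opeq^A(h)\,\psi}_{L^2_{\eq}(\R^d,\Hil)} = \bscpro{\Opeq^A(h)\,\varphi}{\psi}_{L^2_{\eq}(\R^d,\Hil)}
\end{align*}
for all $\varphi,\psi$ in a dense subspace (e.g.\ the image of $\Schwartz(\R^d,\Hil)$ under the symmetrization-by-$\Gamma^*$ procedure, which lands in $H^{q+q'}_{\eq,\Fourier,A}(\R^d,\Hil)$ and is dense in $L^2_{\eq}(\R^d,\Hil)$). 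Equivalently, one may invoke Proposition~\ref{equivariant_calculus:prop:intertwining_operator_product_Weyl_product_Opeq} and the intertwining $\Opeq^A(h)^* = \Opeq^A(h^*)$ (which is inherited from the operator-valued calculus upon restricting to equivariant distributions). Combining boundedness on $L^2_{\eq}(\R^d,\Hil)$ with the resulting symmetry yields selfadjointness, since for a bounded symmetric operator on a Hilbert space the domain of the adjoint automatically equals the whole space.

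The main obstacle is not any single step of the above but rather justifying that we may dispense with an ellipticity-driven parametrix argument. In the unbounded case ($q>0$ and $A$ unbounded) the domain of selfadjointness should presumably be the equivariant magnetic Sobolev space $H^{2q}_{\eq,\Fourier,A}(\R^d,\Hil)$, and equivalence of the graph norm with the Sobolev norm requires a left parametrix that is smoothing in the sense of $\rho>\delta$ — exactly what fails in the equivariant setting (cf.\ Remark~\ref{operator_valued_calculus:rem:existence_parametrices}). This is why the statement is labelled as a weak result: it restricts to hypotheses under which $H^{q+q'}_{\eq,\Fourier,A}(\R^d,\Hil) = L^2_{\eq}(\R^d,\Hil)$ as Banach spaces, so that no parametrix is needed and the standard bounded selfadjointness argument applies verbatim.
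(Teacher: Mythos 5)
Your proposal is correct and follows essentially the same route as the paper: boundedness on $L^2_{\eq}(\R^d,\Hil)$ via Theorem~\ref{equivariant_calculus:thm:Calderon_Vaillancourt} under either hypothesis, symmetry transferred from $\Op^A(h)$ on Schwartz functions to a dense subspace of equivariant functions, and then the standard fact that a bounded symmetric operator is selfadjoint. The only cosmetic difference is the choice of dense subspace — the paper uses $\Cont^{\infty}_{\eq,0}(\R^d,\Hil)$ embedded into $\Schwartz(\R^d,\Hil)$ via $\imath_{\chi}$ rather than symmetrizing Schwartz functions over $\Gamma^*$ — but this does not change the argument.
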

Compared with Theorem~\ref{operator_valued_calculus:thm:selfadjointness_elliptic_symbols} we have had to add the assumptions $\Hil = \Hil'$, and $q = 0 = q'$ or that the components of $A$ are bounded. Both restrictions are rather strong and are designed to achieve the same thing, namely ensuring that the relevant equivariant magnetic Sobolev space
\begin{align*}
	H^{2q}_{\eq,\Fourier,A}(\R^d,\Hil) = L^2_{\eq}(\R^d,\Hil)
\end{align*}
agrees with the equivariant $L^2$-space as Banach spaces (\cf the proof of Theorem~\ref{equivariant_calculus:thm:Calderon_Vaillancourt}~(2)). 
\begin{proof}
	The proof of Theorem~\ref{equivariant_calculus:thm:selfadjointness_elliptic_symbols_weak_result} amounts to showing that $\Opeq^A(h)$ is a bounded symmetric operator. Boundedness follows from Theorem~\ref{equivariant_calculus:thm:Calderon_Vaillancourt}~(2). 
	
	To prove symmetry, we leverage the symmetry of $\Op^A(h)$ (defined with the \emph{non-equivariant} quantization map from Section~\ref{operator_valued_calculus}) on the dense set 
	\begin{align*}
		\Schwartz(\R^d,\Hil) \subset L^2(\R^d,\Hil) 
		. 
	\end{align*}
	The essential ingredients are $\Op^A(h)^* = \Op^A(h^*)$ and the fact that $h = h^*$ takes values in the selfadjoint operators. 
	
	We then define the set of smooth, equivariant $\Hil$-valued functions, which vanish in a vicinity of the boundary $\partial \BZ$ of the fundamental cell, 
	\begin{align*}
		\Cont^{\infty}_{\eq,0}(\R^d,\Hil) := \Bigl \{ \psi \in \Cont^{\infty}(\R^d,\Hil) \; \; \big \vert \; \; &\psi(k - \gamma^*) = \tau(\gamma^*) \, \psi(k) \; \forall k \in \R^d , \, \gamma^* \in \Gamma^* 
		\Bigr . \\
		&\Bigl. 
		\mbox{$\psi$ vanishes in a neighborhood of $\partial \BZ$}
		\Bigr \}
		.  
	\end{align*}
	Evidently, $\Cont^{\infty}_{\eq,0}(\R^d,\Hil) \subset L^2(\R^d,\Hil)$ lies densely in the equivariant $L^2$-space. Moreover, the map $\imath_{\chi}$ from Lemma~\ref{appendix:equivariant_operators:lem:continuity_inclusion_map_L2_BZ_gamma_ast_L2_Rd} embeds $\Cont^{\infty}_{\eq,0}(\R^d,\Hil) \subseteq \Schwartz(\R^d,\Hil)$ into the Schwartz functions. Thus, the symmetry of $\Op^A(h) \vert_{\Schwartz(\R^d,\Hil)}$ implies the symmetry of $\Opeq^A(h) \vert_{\Cont^{\infty}_{\eq,0}(\R^d,\Hil)}$. 
	
	As a bounded, symmetric operator, its extension $\Opeq^A(h)^* = \Opeq^A(h)$ to $L^2_{\eq}(\R^d,\Hil)$ is selfadjoint. 
\end{proof}
In practice, condition~(a) limits us to \eg $\Hil = L^2(\T^d,\C^n)$. The group actions on magnetic Sobolev spaces $H^m_A(\T^d)$ are of order $m$ (\cf Lemma~\ref{setting:lem:magnetic_Sobolev_norm_operator_tau}), which means $q = 0$ implies $m = 0$. 

Condition~(b), $A \in L^{\infty}(\R^d,\R^d) \cap \Cont^{\infty}_{\mathrm{pol}}(\R^d,\R^d)$, excludes constant magnetic fields $B = \mathrm{const.}$, which is of particular interest in the study of magnetic systems. 

Conditions (a) and (b) allow us to skip one crucial step in the proof of Theorem~\ref{operator_valued_calculus:thm:selfadjointness_elliptic_symbols} and \cite[Theorem~5.1]{Iftimie_Mantoiu_Purice:magnetic_psido:2006}. In order to identify the domain as the relevant magnetic Sobolev space means we needed to prove that the graph norm of $\Op^A(h)$ is equivalent to the magnetic Sobolev norm of order $q + q'$. The upper bound follows from Corollary~\ref{operator_valued_calculus:thm:Calderon_Vaillancourt} — which extends to Corollary~\ref{equivariant_calculus:cor:Calderon_Vaillancourt}. The issue is the lower bound. To obtain that, we needed to obtain a parametrix for $\Op^A(h)$. Initially, the parametrix was constructed as a formal sum (\cf \eg Corollary~\ref{operator_valued_calculus:cor:existence_parametrix}). The assumption $\rho > \delta$ entered the proof in several ways: first of all, we had to choose a resummation for this formal sum. And this resummation is only guaranteed to exist when subsequent terms of the parametrix expansion belong to Hörmander classes of ever smaller order $m_n \rightarrow -\infty$. The latter implies $\rho \gneq \delta$. Secondly, we needed this condition to ensure that the remainder 
\begin{align*}
	R^A := \Op^A \bigl ( h^{(-1)_{\epsilon}} \bigr ) \, \Op^A(h) - \id_{\Hil} \in S^{-\infty} \bigl ( \mathcal{B}(\Hil) \bigr )
\end{align*}
is the magnetic quantization of a smoothing symbol, and thus, not only a bounded operator on $L^2(\R^d,\Hil) \longrightarrow L^2(\R^d,\Hil)$, but small with respect to $\Op^A(h)$. 

In the context of equivariant operators, the order of the symbol is completely determined by the order of growth of $\tau$ and $\tau'$ (\cf Lemma~\ref{equivariant_calculus:magnetic_PsiDOs:lem:bound_Hoermander_order_by_tau_orders}). So the remainder for equivariant symbols cannot be not small with respect to $\Op^A(h)$, it must be of exactly the same order. Hence, \emph{even if} we could construct a resummation for the parametrix's formal sum, the proof would still fail. 

One may try to look for Beals' Commutator Criterion for help, specifically an equivariant analog of Theorem~\ref{operator_valued_calculus:thm:existence_Moyal_resolvent}. The idea is to replace the parametrix with the (true) resolvent defined from the symbol $(h - z)^{(-1)_{\Weyl}}$ for some $z \not\in \sigma \bigl ( \Op^A(h) \bigr )$. However, that would only work if we could localize the spectrum in the complex plane, \ie show $\sigma \bigl ( \Op^A(h) \bigr ) \subseteq \R$. The latter \emph{requires} selfadjointness. Indeed, the conditions in Theorem~\ref{operator_valued_calculus:thm:existence_Moyal_resolvent} guarantee $\Op^A(h) = \Op^A(h)^*$. To summarize, going this route means entering a self-referential loop: in order to prove selfadjointness, we need selfadjointness. 

At present, the only alternative known to us is to prove the selfadjointness of $\Opeq^A(h)$ directly with functional analytic methods. 

\subsubsection{Composition of equivariant magnetic pseudodifferential operators} 
\label{equivariant_calculus:calculus:Weyl_product}
The extension of the magnetic Weyl product to the equivariant context is a straightforward invokation of Meta Theorem~\ref{equivariant_calculus:meta_theorem:extension_operator_valued_to_equivariant}. We have already covered in our motivating example, Proposition~\ref{equivariant_calculus:prop:intertwining_operator_product_Weyl_product_Opeq}, that also in the equivariant context the magnetic Weyl product intertwines $\Opeq^A$ and the operator product, 
\begin{align*}
	\Opeq^A(f \Weyl g) = \Opeq^A(f) \, \Opeq^A(g) 
	. 
\end{align*}
It remains to show that $\Weyl$ preserves \emph{equivariant} symbol classes. 
\begin{theorem}\label{equivariant_calculus:thm:magnetic_Weyl_product_equivariant_Hoermander_symbols}
	\begin{enumerate}[(1)]
		\item The magnetic Weyl product defines a continuous bilinear map 
		\begin{align*}
			\Weyl : \Hoereq{m_1} \bigl ( \mathcal{B}(\Hil',\Hil'') \bigr ) \times \Hoereq{m_2} \bigl ( \mathcal{B}(\Hil,\Hil') \bigr ) \longrightarrow \Hoereq{m_1 + m_2} \bigl ( \mathcal{B}(\Hil,\Hil'') \bigr )
		\end{align*}
		between equivariant Hörmander spaces. 
		\item In fact, if $\tau$, $\tau'$ and $\tau''$ have tempered growth in the sense of Definition~\ref{setting:defn:order_tau} of orders $q$, $q'$ and $q''$, then $m_1 \geq q' + q''$ and $m_2 \geq q + q'$ has to hold and we can sharpen the statement from (1) to 
		\begin{align*}
			\Weyl : \Hoereq{q' + q''} \bigl ( \mathcal{B}(\Hil',\Hil'') \bigr ) \times \Hoereq{q + q'} \bigl ( \mathcal{B}(\Hil,\Hil') \bigr ) \longrightarrow \Hoereq{q + q''} \bigl ( \mathcal{B}(\Hil,\Hil'') \bigr )
		\end{align*}
		being a bilinear continuous map. Specifically, the order of the target space only depends on the order of growth of $\tau$ and $\tau''$. 
		\item The Weyl product between periodic, scalar-valued and equivariant, operator-valued symbols is well-defined and gives rise to the bilinear continuous maps 
		\begin{align*}
			&\Weyl : \Hoer{0}_{\mathrm{per}}(\C) \times \Hoereq{q + q'} \bigl ( \mathcal{B}(\Hil,\Hil') \bigr ) \longrightarrow \Hoereq{q + q'} \bigl ( \mathcal{B}(\Hil,\Hil') \bigr )
			, 
			\\
			&\Weyl : \Hoereq{q + q'} \bigl ( \mathcal{B}(\Hil,\Hil') \bigr ) \times \Hoer{0}_{\per}(\C) \longrightarrow \Hoereq{q + q'} \bigl ( \mathcal{B}(\Hil,\Hil') \bigr )
			. 
		\end{align*}
		\item The asymptotic expansions from Section~\ref{operator_valued_calculus:Hoermander_symbols:Weyl_product} also preserve equivariance, and all formulas for the explicit terms hold verbatim, just seen as maps between the appropriate \emph{equivariant} Hörmander symbol classes of formal sums. Specifically, the asymptotic expansions of Weyl define are continuous bilinear maps 
		\begin{align*}
			\Weyl : \Sigma \Hoereq{q' + q''} \bigl ( \mathcal{B}(\Hil',\Hil'') \bigr ) \times \Sigma \Hoereq{q + q'} \bigl ( \mathcal{B}(\Hil,\Hil') \bigr ) \longrightarrow \Sigma \Hoereq{q + q''} \bigl ( \mathcal{B}(\Hil,\Hil'') \bigr )
			. 
		\end{align*}
	\end{enumerate}
\end{theorem}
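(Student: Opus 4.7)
The plan is to invoke Meta Theorem~\ref{equivariant_calculus:meta_theorem:extension_operator_valued_to_equivariant} in combination with the identification~\eqref{equivariant_calculus:eqn:equivariant_Hoermander_class_intersection_usual_Hoermander_class_equivariant_Moyal_space}, which realizes
\[
	\Hoereq{m} \bigl ( \mathcal{B}(\Hil,\Hil') \bigr ) = \Hoer{m}_{0,0} \bigl ( \mathcal{B}(\Hil,\Hil') \bigr ) \cap \MoyalSpace_{\eq} \bigl ( \mathcal{B}(\Hil,\Hil') \bigr )
\]
as an intersection. Given $f \in \Hoereq{m_1}(\mathcal{B}(\Hil',\Hil''))$ and $g \in \Hoereq{m_2}(\mathcal{B}(\Hil,\Hil'))$, Theorem~\ref{operator_valued_calculus:thm:Weyl_product_Hoermander_symbols} already ensures $f \Weyl g \in \Hoer{m_1+m_2}_{0,0}(\mathcal{B}(\Hil,\Hil''))$ with bilinear continuous dependence on the seminorms. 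Hence part~(1) reduces to showing that $f \Weyl g$ is equivariant, which is precisely the content of Proposition~\ref{equivariant_calculus:prop:intertwining_operator_product_Weyl_product_Opeq}: its proof established $\Opeq^A(f \Weyl g) = \Opeq^A(f) \, \Opeq^A(g)$ as a map between equivariant distribution spaces, and Proposition~\ref{equivariant_calculus:prop:equivariant_distributions_define_equivariant_PsiDOs} translates this back into $f \Weyl g \in \MoyalSpace_{\eq}(\mathcal{B}(\Hil,\Hil''))$. Part~(2) is then an immediate consequence of Lemma~\ref{equivariant_calculus:magnetic_PsiDOs:lem:bound_Hoermander_order_by_tau_orders}: any equivariant Hörmander symbol of order $m$ between $\Hil$ and $\Hil''$ automatically belongs to $\Hoereq{q+q''}(\mathcal{B}(\Hil,\Hil''))$, so upon composing we may sharpen the target order from $m_1+m_2$ to $q+q''$.

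For part~(3) the plan is to invoke Lemma~\ref{equivariant_calculus:magnetic_PsiDOs:lem:periodic_scalar_valued_symbols_embedding_periodic_equivariant_symbol_classes}, which says any periodic scalar-valued $f \in \Hoer{0}_{\per}(\C)$ may be viewed as an equivariant $\mathcal{B}(\Hil)$- or $\mathcal{B}(\Hil')$-valued symbol with respect to the same group actions $\tau$, $\tau'$ (using the trivial action on the scalar factor so that the identification commutes with all $\tau(\gamma^*)$). Then the two maps in~(3) are special cases of~(2) applied with one factor taking the form $f \otimes \id$. The bilinear continuity is inherited from~(1).

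Part~(4) follows the same blueprint but carried out term-by-term on the formal series. Recall that the explicit coefficients $(f \Weyl g)_{(n)}$ in the asymptotic expansion from \cite[Theorems~1.1 and 2.12]{Lein:two_parameter_asymptotics:2008} are finite $\C$-linear combinations of pointwise products of derivatives $\partial_r^a \partial_k^{\alpha} f$, $\partial_r^b \partial_k^{\beta} g$, and derivatives of the magnetic field components $\partial_r^c B_{jl}$. Because derivation commutes with the equivariance relation~\eqref{equivariant_calculus:eqn:equivariance_condition_symbols} and because $B_{jl}$ is $k$-independent (hence commutes trivially with $\tau(\gamma^*)$ and $\tau'(\gamma^*)$), each such term is itself equivariant and lies in the appropriate $\Hoereq{\cdot}(\mathcal{B}(\Hil,\Hil''))$ by part~(1). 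Bilinearity and continuity on formal sums, endowed with the inductive limit topology, are then immediate from the corresponding properties on each truncation.

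The one delicate point I anticipate is the step identifying $f \Weyl g$ as an element of $\MoyalSpace_{\eq}$: although Proposition~\ref{equivariant_calculus:prop:intertwining_operator_product_Weyl_product_Opeq} already does the work, it goes through $\Opeq^A$, and one should double-check that the composition really defines a continuous map $\Schwartz^*_{\eq}(\R^d,\Hil) \to \Schwartz^*_{\eq}(\R^d,\Hil'')$ before invoking Proposition~\ref{equivariant_calculus:prop:equivariant_distributions_define_equivariant_PsiDOs}. As an alternative direct route, one can also verify the equivariance $(f \Weyl g)(r,k-\gamma^*) = \tau''(\gamma^*) \, (f \Weyl g)(r,k) \, \tau(\gamma^*)^{-1}$ from the oscillatory-integral formula~\eqref{operator_valued_calculus:eqn:Weyl_product}, using the shift $Z \mapsto Z + (0,\gamma^*)$ in the integration variables (which leaves $\sigma(X,Y+Z)$ and $\sigma(Y,Z)$ invariant up to linear phases) and the fact that the flux factor $\e^{-\ii \lambda \gamma^B(x,y,z)}$ depends only on $x$ and $B$, not on $k$. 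Either route avoids appealing to a resummation result and thus sidesteps the $\rho = \delta = 0$ obstruction flagged in Remark~\ref{operator_valued_calculus:rem:lack_resummation_results_rho_equal_delta}, which is why part~(4) is formulated on $\Sigma\Hoereq{\cdot}$ rather than on $\mathrm{A}\Hoereq{\cdot}$.
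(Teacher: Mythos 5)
Your proposal is correct and follows essentially the same route as the paper: parts (1) and (4) via the Meta Theorem together with the intersection characterization of equivariant Hörmander classes, part (2) via Lemma~\ref{equivariant_calculus:magnetic_PsiDOs:lem:bound_Hoermander_order_by_tau_orders}, and part (3) via Lemma~\ref{equivariant_calculus:magnetic_PsiDOs:lem:periodic_scalar_valued_symbols_embedding_periodic_equivariant_symbol_classes}. The extra term-by-term check of the expansion coefficients in (4) and the alternative direct verification of equivariance from the oscillatory-integral formula are harmless additions beyond what the paper spells out.
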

\begin{proof}
	As stated before, (1) and (4) are straightforward invocations of Meta Theorem~\ref{equivariant_calculus:meta_theorem:extension_operator_valued_to_equivariant} because equivariant Hörmander classes can be understood as the intersection of the operator-valued Hörmander classes with the equivariant Moyal space (\cf equation~\eqref{equivariant_calculus:eqn:equivariant_Hoermander_class_intersection_usual_Hoermander_class_equivariant_Moyal_space}). 
	
	Item~(3) is a direct consequence of item~(2) combined with Lemma~\ref{equivariant_calculus:magnetic_PsiDOs:lem:periodic_scalar_valued_symbols_embedding_periodic_equivariant_symbol_classes}, which allows us to view any $f \in \Hoer{0}_{\per}(\C) \subset \Hoereq{2q^{(\prime)}} \bigl ( \mathcal{B}(\Hil^{(\prime)}) \bigr )$ as an equivariant, operator-valued symbol. 
	
	Only item~(2) remains to be proven. Thanks to Lemma~\ref{equivariant_calculus:magnetic_PsiDOs:lem:bound_Hoermander_order_by_tau_orders}, we know that $f$ and $g$ are equivariant Hörmander symbols of order $q + q'$ and $q' + q''$, respectively. Thus, we may assume $m_1 = q + q'$ and $m_2 = q' + q''$ without loss of generality. 
	
	Ignoring equivariance, we infer from (1) that the product 
	\begin{align*}
		f \Weyl g \in S^{m_1 + m_2}_{\eq} \bigl ( \mathcal{B}(\Hil,\Hil') \bigr )
	\end{align*}
	is a Hörmander symbol of order $m_1 + m_2 \geq q + 2 q' + q''$. However, applying Lemma~\ref{equivariant_calculus:magnetic_PsiDOs:lem:bound_Hoermander_order_by_tau_orders} a second time, we see that the order of the symbol is determined by the sum of the orders of $\tau$ and $\tau''$, which is $q + q''$. This finishes the proof. 
\end{proof}

\subsection{Extension of some more advanced results} 
\label{equivariant_calculus:advanced_results}
Rather than trying to compile an exhaustive library of more advanced results, the purpose of this section is to showcase how to put Meta Theorem~\ref{equivariant_calculus:meta_theorem:extension_operator_valued_to_equivariant} into practice. Specifically, we will establish an equivariant version of Beals' Commutator Criterion and a functional calculus for equivariant magnetic $\Psi$DOs.

\subsubsection{Beals' Commutator Criterion to identify equivariant magnetic $\Psi$DOs} 
\label{equivariant_calculus:advanced_results:commutator_criteria}
Proving an analog of Beals' Commutator Criterion~\ref{operator_valued_calculus:thm:Beals_commutator_criterion} is straightforward once we have introduced the right notation. However, we will directly skip ahead and furnish a proof of the more general Corollary~\ref{operator_valued_calculus:cor:Beals_commutator_criterion} instead. 

First of all, equivariant Hörmander symbols are necessarily of type $(0,0)$: the order of a symbol and all its derivatives is determined by the growth of the group actions $\tau$ and $\tau'$ on initial and target spaces $\Hil$ and $\Hil'$, respectively. 

Secondly, we need to define the equivariant analog 
\begin{align}
	\mathfrak{A}^B_{\eq} \bigl ( \mathcal{B}(\Hil,\Hil') \bigr ) := \bigl ( \Opeq^A \bigr )^{-1} \Bigl ( \mathcal{B} \bigl ( L^2_{\eq}(\R^d,\Hil) , L^2_{\eq}(\R^d,\Hil') \bigr ) \Bigr ) 
\end{align}
of the Banach space $\mathfrak{A}^B \bigl ( \mathcal{B}(\Hil,\Hil') \bigr )$; as before, we may view 
\begin{align*}
	\mathfrak{A}^B_{\eq} \bigl ( \mathcal{B}(\Hil,\Hil') \bigr ) := \bigl ( \Opeq^A \bigr )^{-1} \Bigl ( \mathcal{B} \bigl ( L^2_{\eq}(\R^d,\Hil) , L^2(\R^d,\Hil') \bigr ) \Bigr ) \subseteq \Schwartz^*_{\eq} \bigl ( \mathcal{B}(\Hil,\Hil') \bigr ) 
\end{align*}
as being composed of equivariant tempered distributions, courtesy of the Schwartz Kernel Theorem~\ref{operator_valued_calculus:lem:Schwartz_kernel_theorem}: we could either choose a direct approach and sandwich 
\begin{align*}
	\Cont^{\infty}_{\eq,0}(\R^d,\Hil) \subset L^2_{\eq}(\R^d,\Hil) \subset \Schwartz^*_{\eq}(\R^d,\Hil)
\end{align*}
in between smooth equivariant functions that vanish near the boundaries of the fundamental cells and the equivariant tempered distributions. Instead, we will exploit that the equivariant $\Psi$DO $\Opeq^A(f) = \Op_{\Fourier}^A(f) \vert_{\Schwartz^*_{\eq}(\R^d,\Hil)}$ can be seen as the restriction of the ordinary magnetic pseudodifferential operator to equivariant tempered distributions (Proposition~\ref{equivariant_calculus:prop:equivariant_distributions_define_equivariant_PsiDOs}). What is more, know that bounded \emph{equivariant} operators define bounded operators between non-equivariant $L^2$-spaces (Proposition~\ref{equivariant_calculus:prop:equivalence_boundedness_equivariant_operators} for $q + q' = 0$). Hence, we may think of 
\begin{align}
	\mathfrak{A}^B_{\eq} \bigl ( \mathcal{B}(\Hil,\Hil') \bigr ) := \bigl ( \Opeq^A \bigr )^{-1} \Bigl ( \mathcal{B} \bigl ( L^2_{\eq}(\R^d,\Hil) , L^2(\R^d,\Hil') \bigr ) \Bigr ) \subseteq \mathfrak{A}^B \bigl ( \mathcal{B}(\Hil,\Hil') \bigr ) 
	\label{equivariant_calculus:eqn:equivariant_Banach_space_A_B_eq_embedded_in_non_equivariant_Banach_space_A_B}
\end{align}
as a closed subspace.

For two composable such Banach spaces we can pull back a product $\Weyl$ and an adjoint ${}^*$ to the level of equivariant tempered distributions as before; clearly, completeness is inherited again from the parent space $\mathcal{B} \bigl ( L^2_{\eq}(\R^d,\Hil) , L^2(\R^d,\Hil') \bigr )$. 

Lastly, non-equivariant and equivariant position and momentum operators are the quantizations of the exact same symbols, \eg 
\begin{align*}
	Q_j &= \Op^A(x_j) 
	\quad \mbox{versus} \quad 
	\Reps_j = \Opeq^A(x_j)
	. 
\end{align*}
Consequently, the derivations 
\begin{align*}
  \ad_{\Reps}^a \, \ad_{\KA}^{\alpha}(F) := \ad_{\KA_1}^{a_1} \circ \cdots \circ \ad_{\KA_d}^{a_d} \circ \ad_{\Reps_1}^{\alpha^1} \circ \cdots \circ \ad_{\Reps_d}^{\alpha_d}(F) 
\end{align*}
on the level of operators pull back to the exact same derivations $\partial_{(r,k)}^{(a,\alpha)} f$ from \eqref{operator_valued_calculus:eqn:Moyal_derivations_for_commutator_criteria}. Only the notation differs, we have replaced $x$ by $r$ and $\xi$ by $k$, \eg 
\begin{align*}
	\ad_{k_j}(f) = [k_j , f]_{\Weyl} = k_j \Weyl f - f \Weyl k_j 
	 . 
\end{align*}
That means that thanks to the inclusion we have proven 
\begin{theorem}[Beals' Commutator Criterion for equivariant $\Psi$DOs]\label{equivariant_calculus:thm:Beals_commutator_criterion}~\\
	An equivariant bounded operator 
	\begin{align*}
		F = \Opeq^A(f) \in \mathcal{B} \bigl ( H^{q + q'}_{\eq,\Fourier,A}(\R^d,\Hil) , L^2_{\eq}(\R^d,\Hil') \bigr ) 
	\end{align*}
	is a magnetic pseudodifferential operator associated to a Hörmander symbol 
	\begin{align*}
		f \in S^{q + q'}_{\eq} \bigl ( \mathcal{B}(\Hil,\Hil') \bigr ) 
	\end{align*}
	if and only if  
	\begin{align}
		w_{-(q + q')} \Weyl \partial^{(a,\alpha)}_{(r,k)} f \in \mathfrak{A}^B_{\eq} \bigl ( \mathcal{B}(\Hil,\Hil') \bigr ) 
		&&
		\forall a , \alpha \in \N_0^d 
		. 
		\label{equivariant_calculus:eqn:Beals_commutator_criterion}
	\end{align}
\end{theorem}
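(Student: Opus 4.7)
The plan is to apply Meta Theorem~\ref{equivariant_calculus:meta_theorem:extension_operator_valued_to_equivariant} to bootstrap from the operator-valued Beals' Commutator Criterion (Corollary~\ref{operator_valued_calculus:cor:Beals_commutator_criterion}). The starting observation is the characterization~\eqref{equivariant_calculus:eqn:equivariant_Hoermander_class_intersection_usual_Hoermander_class_equivariant_Moyal_space}, which expresses equivariant Hörmander classes as intersections
\begin{align*}
	S^{q + q'}_{\eq} \bigl ( \mathcal{B}(\Hil,\Hil') \bigr ) = S^{q + q'}_{0,0} \bigl ( \mathcal{B}(\Hil,\Hil') \bigr ) \cap \Schwartz^*_{\eq} \bigl ( T^* \R^d , \mathcal{B}(\Hil,\Hil') \bigr )
	,
\end{align*}
together with the inclusion~\eqref{equivariant_calculus:eqn:equivariant_Banach_space_A_B_eq_embedded_in_non_equivariant_Banach_space_A_B} which realizes $\mathfrak{A}^B_{\eq} \bigl ( \mathcal{B}(\Hil,\Hil') \bigr ) \subseteq \mathfrak{A}^B \bigl ( \mathcal{B}(\Hil,\Hil') \bigr )$ as a closed subspace consisting of those distributions whose quantizations preserve equivariance. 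Concretely, both $\Opeq^A$ and its operator-valued counterpart $\Op^A_{\Fourier}$ act on the same tempered distributions, differing only by restriction to equivariant $L^2$ (or magnetic Sobolev) spaces.

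For the $\Longrightarrow$ direction, I start with $f \in S^{q+q'}_{\eq} \bigl ( \mathcal{B}(\Hil,\Hil') \bigr ) \subseteq S^{q+q'}_{0,0} \bigl ( \mathcal{B}(\Hil,\Hil') \bigr )$. Corollary~\ref{operator_valued_calculus:cor:Beals_commutator_criterion} (with $\rho = 0$, $m = q + q'$) immediately gives $w_{-(q+q')} \Weyl \partial_{(r,k)}^{(a,\alpha)} f \in \mathfrak{A}^B \bigl ( \mathcal{B}(\Hil,\Hil') \bigr )$ for all $a, \alpha \in \N_0^d$. It remains to upgrade this to membership in the equivariant Banach space. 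For that I will observe that the Moyal derivations $\ad_{r_j}$ and $\ad_{k_j}$ preserve equivariant Moyal spaces (since $r_j \mapsto r_j$ and $k_j \mapsto k_j$ are both scalar-valued and trivially commute with the adjoint group actions, the commutators inherit the equivariance condition~\eqref{equivariant_calculus:eqn:equivariance_condition_symbols} from $f$). Similarly, $w_{-(q+q')} \in S^{-(q+q')}_{1,0}(\C) \subseteq S^{0}_{\per}(\C)$ is periodic and thus by Lemma~\ref{equivariant_calculus:magnetic_PsiDOs:lem:periodic_scalar_valued_symbols_embedding_periodic_equivariant_symbol_classes} acts equivariantly under $\Weyl$ (Theorem~\ref{equivariant_calculus:thm:magnetic_Weyl_product_equivariant_Hoermander_symbols}~(3)). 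Hence the distribution $w_{-(q+q')} \Weyl \partial_{(r,k)}^{(a,\alpha)} f$ is equivariant; combined with its membership in $\mathfrak{A}^B \bigl ( \mathcal{B}(\Hil,\Hil') \bigr )$ and Proposition~\ref{equivariant_calculus:prop:equivariant_distributions_define_equivariant_PsiDOs}, we conclude it lies in $\mathfrak{A}^B_{\eq} \bigl ( \mathcal{B}(\Hil,\Hil') \bigr )$.

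For the converse $\Longleftarrow$ direction, assume condition~\eqref{equivariant_calculus:eqn:Beals_commutator_criterion} holds. By the inclusion $\mathfrak{A}^B_{\eq} \bigl ( \mathcal{B}(\Hil,\Hil') \bigr ) \subseteq \mathfrak{A}^B \bigl ( \mathcal{B}(\Hil,\Hil') \bigr )$, the hypothesis of Corollary~\ref{operator_valued_calculus:cor:Beals_commutator_criterion} is satisfied, which yields $f \in S^{q+q'}_{0,0} \bigl ( \mathcal{B}(\Hil,\Hil') \bigr )$. Since $F = \Opeq^A(f)$ is equivariant by assumption, Proposition~\ref{equivariant_calculus:prop:equivariant_distributions_define_equivariant_PsiDOs} forces $f$ to be an equivariant distribution, which together with the Hörmander regularity gives $f \in S^{q+q'}_{\eq} \bigl ( \mathcal{B}(\Hil,\Hil') \bigr )$ via~\eqref{equivariant_calculus:eqn:equivariant_Hoermander_class_intersection_usual_Hoermander_class_equivariant_Moyal_space}.

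The main obstacle I anticipate is verifying cleanly that the inclusion $\mathfrak{A}^B_{\eq} \hookrightarrow \mathfrak{A}^B$ really is as transparent as claimed — specifically, that a tempered distribution whose equivariant quantization is bounded on $L^2_{\eq}$ also has bounded non-equivariant quantization on $L^2$, or at least that the commutator condition transfers correctly between the two. This is controlled by Proposition~\ref{equivariant_calculus:prop:equivalence_boundedness_equivariant_operators}~(2), which in the relevant $q + q' = 0$ regime of \emph{derivations} (recall the derivations $\ad_{r_j}$, $\ad_{k_j}$ produce distributions of the \emph{same} equivariant order as $f$ itself after being dressed by $w_{-(q + q')}$) identifies the two notions of boundedness. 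The other delicate point is the order bookkeeping: one must check that $w_{-(q+q')} \Weyl \partial_{(r,k)}^{(a,\alpha)} f$ indeed lands in a bounded-operator-valued space of order zero and type $(0,0)$, which is precisely where Lemma~\ref{equivariant_calculus:magnetic_PsiDOs:lem:bound_Hoermander_order_by_tau_orders} and Theorem~\ref{equivariant_calculus:thm:magnetic_Weyl_product_equivariant_Hoermander_symbols}~(2) come into play, ensuring the final order is $0 + 0 = 0$ rather than some positive value that would preclude applying the Calderón-Vaillancourt Theorem.
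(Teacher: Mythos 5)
Your proof follows the paper's own argument in both directions: for $\Longleftarrow$ you use the inclusion $\mathfrak{A}^B_{\eq} \bigl ( \mathcal{B}(\Hil,\Hil') \bigr ) \subseteq \mathfrak{A}^B \bigl ( \mathcal{B}(\Hil,\Hil') \bigr )$ to invoke Corollary~\ref{operator_valued_calculus:cor:Beals_commutator_criterion} and then recover equivariance of $f$ via the intersection characterization~\eqref{equivariant_calculus:eqn:equivariant_Hoermander_class_intersection_usual_Hoermander_class_equivariant_Moyal_space}, which is exactly what the paper does (the paper reads off equivariance of $f$ from the equivariance of its derivations as distributions, you read it off from the equivariance of the operator $F$ via Proposition~\ref{equivariant_calculus:prop:equivariant_distributions_define_equivariant_PsiDOs}; both are legitimate). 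The forward direction is likewise the paper's (equivariant Calderón--Vaillancourt plus preservation of equivariance), merely routed through the non-equivariant corollary first.

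One concrete misstep in your write-up: you assert that $w_{-(q+q')} \in S^{-(q+q')}_{1,0}(\C)$ is \emph{periodic} and invoke Lemma~\ref{equivariant_calculus:magnetic_PsiDOs:lem:periodic_scalar_valued_symbols_embedding_periodic_equivariant_symbol_classes}. It is not: $w_{q+q'}(r,k) = \sexpval{k}^{q+q'} + \lambda(q+q')$ and its Moyal inverse are manifestly not $\Gamma^*$-periodic in $k$, so that lemma and Theorem~\ref{equivariant_calculus:thm:magnetic_Weyl_product_equivariant_Hoermander_symbols}~(3) do not apply to the weight. This means your justification for why $w_{-(q+q')} \Weyl \partial^{(a,\alpha)}_{(r,k)} f$ is \emph{equivariant} (and hence lands in $\mathfrak{A}^B_{\eq}$ rather than merely $\mathfrak{A}^B$) does not go through as written. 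To be fair, the paper's own one-sentence forward direction glosses over the identical point, so this does not distinguish your argument from the paper's; but since you made the periodicity claim explicit, you should either drop it or replace it with an argument that works with the non-periodic weight.
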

\begin{proof}
	The forward direction where we start with the assumption that $f \in \Hoereq{q + q'} \bigl ( \mathcal{B}(\Hil,\Hil') \bigr )$ follows directly from the boundedness result, Theorem~\ref{equivariant_calculus:thm:Calderon_Vaillancourt}. 
	
	Let us prove non-trivial direction: suppose equation~\eqref{equivariant_calculus:eqn:Beals_commutator_criterion} holds true. Thanks to definition \eqref{equivariant_calculus:eqn:equivariant_Banach_space_A_B_eq_embedded_in_non_equivariant_Banach_space_A_B} and the comments preceding the statement of Theorem~\ref{equivariant_calculus:thm:Beals_commutator_criterion}, we may view $f$ as an element of the non-equivariant Banach space $\mathfrak{A}^B \bigl ( \mathcal{B}(\Hil,\Hil') \bigr )$ from Section~\ref{operator_valued_calculus:Hoermander_symbols:commutator_criteria}. Put another way, the assumptions in Corollary~\ref{operator_valued_calculus:cor:Beals_commutator_criterion} are satisfied and we deduce $f \in S^{q + q'}_{0,0} \bigl ( \mathcal{B}(\Hil,\Hil') \bigr )$ is a Hörmander symbol of order $q + q'$ and type $(0,0)$. 
	
	What is more, $f$ and all its derivations — and thus, all its derivatives — are equivariant as elements of 
	\begin{align*}
		\partial_{(r,k)}^{(a,\alpha)} f \in \mathfrak{A}^B_{\eq} \bigl ( \mathcal{B}(\Hil,\Hil') \bigr ) \subset \Schwartz^*_{\eq} \bigl ( \mathcal{B}(\Hil,\Hil') \bigr ) 
		&&
		\forall a , \alpha \in \N_0^d
		. 
	\end{align*}
	Characterization~\eqref{equivariant_calculus:eqn:equivariant_Hoermander_class_intersection_usual_Hoermander_class_equivariant_Moyal_space} of equivariant Hörmander symbols therefore tells us that $f \in \Hoereq{q + q'} \bigl ( \mathcal{B}(\Hil,\Hil') \bigr )$ is in fact an \emph{equivariant} Hörmander symbol. 
\end{proof}
%

\subsubsection{Results on inversion and functional calculus} 
\label{equivariant_calculus:advanced_results:inversion_and_functional_calculus}
With the Commutator Criterion~\ref{equivariant_calculus:thm:Beals_commutator_criterion} in hand, we proceed as in Section~\ref{operator_valued_calculus:Hoermander_symbols:inversion_and_functional_calculus}. To avoid being repetitive we will omit proofs either largely or in their entirety. 

The Moyal resolvent is again the preimage of the resolvent operator under the quantization, 
\begin{align*}
	\bigl ( \Opeq^A(h) - z \bigr )^{-1} = \Opeq^A \bigl ( (h - z)^{(-1)_{\Weyl}} \bigr ) 
	. 
\end{align*}
For suitable functions we can then define functional calculi, \eg via the Helffer-Sjöstrand formula~\eqref{operator_valued_calculus:eqn:Helffer_Sjoestrand_functional_calculus} or the complex integral \eqref{operator_valued_calculus:eqn:holomorphic_functional_calculus}. Clearly, the central point is invertibility. 
\begin{theorem}[Invertibility]\label{equivariant_calculus:thm:invertiblity}
	Suppose the Hilbert spaces $\Hil$ and $\Hil'$ are isomorphic. 
	Suppose $f \in \Hoereq{q + q'} \bigl ( \mathcal{B}(\Hil,\Hil') \bigr )$ is invertible either as an element of $\mathfrak{A}^B \bigl ( \mathcal{B}(\Hil,\Hil') \bigr )$ ($q + q' = 0$) or $\mathcal{M}^B \bigl ( \mathcal{B}(\Hil,\Hil') \bigr )$ ($q + q' \geq 0$). 
	\begin{enumerate}[(1)]
		\item Then also its inverse $f^{(-1)_{\Weyl}} \in \Hoereq{q + q'} \bigl ( \mathcal{B}(\Hil',\Hil) \bigr )$ is a Hörmander symbol of the \emph{same} order $q + q' \geq 0$. 
		\item The products of $f$ and $f^{(-1)_{\Weyl}}$ belong to the equivariant Hörmander spaces 
		\begin{align*}
			f \Weyl f^{(-1)_{\Weyl}} &= \id_{\Hil'} \in \Hoereq{2q'} \bigl ( \mathcal{B}(\Hil') \bigr ) 
			, 
			\\
			f^{(-1)_{\Weyl}} \Weyl f &= \id_{\Hil} \in \Hoereq{2q} \bigl ( \mathcal{B}(\Hil) \bigr ) 
			. 
		\end{align*}
	\end{enumerate}
\end{theorem}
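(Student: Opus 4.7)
The plan is to reduce Theorem~\ref{equivariant_calculus:thm:invertiblity} to its non-equivariant counterpart, Theorem~\ref{operator_valued_calculus:thm:invertiblity}, in the spirit of Meta Theorem~\ref{equivariant_calculus:meta_theorem:extension_operator_valued_to_equivariant}. Since any equivariant Hörmander symbol is in particular an operator-valued Hörmander symbol of the same order, the non-equivariant inversion result yields a Moyal inverse in some non-equivariant symbol class; the remaining work is then to check that the inverse is again equivariant, after which the characterization~\eqref{equivariant_calculus:eqn:equivariant_Hoermander_class_intersection_usual_Hoermander_class_equivariant_Moyal_space} together with Lemma~\ref{equivariant_calculus:magnetic_PsiDOs:lem:bound_Hoermander_order_by_tau_orders} pins down the correct equivariant symbol class.

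Concretely, the inclusion $\Hoereq{q+q'} \bigl ( \mathcal{B}(\Hil,\Hil') \bigr ) \subset \Hoer{q+q'}_{0,0} \bigl ( \mathcal{B}(\Hil,\Hil') \bigr )$, together with the identical wording of the invertibility hypothesis in the two theorems, lets me apply Theorem~\ref{operator_valued_calculus:thm:invertiblity} with $m = q+q' \geq 0$ and $\rho = 0$. This produces
\[
  f^{(-1)_{\Weyl}} \in \Hoer{-(q+q')}_{0,0} \bigl ( \mathcal{B}(\Hil',\Hil) \bigr ) \subseteq \Hoer{q+q'}_{0,0} \bigl ( \mathcal{B}(\Hil',\Hil) \bigr ),
\]
where the final inclusion is the standard nesting of Hörmander classes and uses $q+q' \geq 0$.

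The key intermediate step is to show that $f^{(-1)_{\Weyl}}$ is itself equivariant, with the roles of $\tau$ and $\tau'$ swapped so that the equivariance reads $g(r,k-\gamma^*) = \tau(\gamma^*) \, g(r,k) \, \tau'(\gamma^*)^{-1}$. I would argue this by uniqueness of the Moyal inverse in $\mathcal{M}^B \bigl ( \mathcal{B}(\Hil',\Hil) \bigr )$: for any fixed $\gamma^* \in \Gamma^*$ the twisted translate
\[
  \tilde g(r,k) := \tau(\gamma^*)^{-1} \, f^{(-1)_{\Weyl}}(r, k-\gamma^*) \, \tau'(\gamma^*)
\]
is again a two-sided Moyal inverse of $f$, and therefore agrees with $f^{(-1)_{\Weyl}}$. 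To verify $f \Weyl \tilde g = \id_{\Hil'}$, one pulls the constant operators $\tau(\gamma^*)^{-1}$ and $\tau'(\gamma^*)$ through $\Weyl$ (which is permissible because $\Fs$ acts trivially on constants and the Weyl integral is bilinear), uses the equivariance of $f$ in the form $f \cdot \tau(\gamma^*)^{-1} = \tau'(\gamma^*)^{-1} \cdot T_{\gamma^*} f$ (writing $T_{\gamma^*}$ for translation by $\gamma^*$ in the $k$-variable), and then invokes the momentum-translation covariance $T_{\gamma^*}(f \Weyl g) = T_{\gamma^*} f \Weyl T_{\gamma^*} g$. Combined with the Hörmander class membership already established, the characterization~\eqref{equivariant_calculus:eqn:equivariant_Hoermander_class_intersection_usual_Hoermander_class_equivariant_Moyal_space} yields $f^{(-1)_{\Weyl}} \in \Hoereq{q+q'} \bigl ( \mathcal{B}(\Hil',\Hil) \bigr )$, which is claim~(1). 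Claim~(2) follows immediately from Theorem~\ref{equivariant_calculus:thm:magnetic_Weyl_product_equivariant_Hoermander_symbols}~(2) applied to the composable pair $f$ and $f^{(-1)_{\Weyl}}$, the identities $\id_{\Hil'}$ and $\id_{\Hil}$ being trivially equivariant and thus automatically in the stated equivariant Hörmander classes of orders $2q'$ and $2q$.

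The main technical point I expect to do real work is the momentum-translation covariance $T_{\gamma^*}(f \Weyl g) = T_{\gamma^*} f \Weyl T_{\gamma^*} g$ that enters the uniqueness argument. This should follow from the explicit integral formula~\eqref{operator_valued_calculus:eqn:Weyl_product}: a translation in $k$ of the phase space variable $X$ produces a momentum-shift phase in the symplectic factor $e^{+ \ii \sigma(X,Y+Z)}$, and this phase is precisely the one that arises when one replaces $\Fs f$ and $\Fs g$ by $\Fs(T_{\gamma^*} f)$ and $\Fs(T_{\gamma^*} g)$; crucially, neither the semiclassical twist $\sigma(Y,Z)$ nor the magnetic flux $\gamma^B(x,y,z)$ are affected, the latter because it depends only on the position variables, which reflects the fact that the magnetic field lives on configuration space. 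Once this identity is in hand, the uniqueness argument closes the loop and the remainder of the proof is essentially book-keeping.
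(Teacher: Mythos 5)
Your proposal is correct and follows essentially the same route as the paper: reduce to Theorem~\ref{operator_valued_calculus:thm:invertiblity}, verify that the Moyal inverse is again equivariant, pin down the orders via Lemma~\ref{equivariant_calculus:magnetic_PsiDOs:lem:bound_Hoermander_order_by_tau_orders}, and obtain (2) from Theorem~\ref{equivariant_calculus:thm:magnetic_Weyl_product_equivariant_Hoermander_symbols}. Your uniqueness-of-the-two-sided-Moyal-inverse argument, resting on the momentum-translation covariance of $\Weyl$, is a correct and more explicit rendering of the step the paper dispatches with \enquote{checking that equivariance is indeed preserved}.
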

\begin{proof}
	\begin{enumerate}[(1)]
		\item The fact that $f^{(-1)_{\Weyl}} \bigl ( \mathcal{B}(\Hil',\Hil) \bigr )$ exists as a Hörmander symbol follows by applying Theorem~\ref{operator_valued_calculus:thm:invertiblity} and checking that equivariance is indeed preserved. 
		
		The only question is as to the order of the symbol $f^{(-1)_{\Weyl}}$, and that is covered by Lemma~\ref{equivariant_calculus:magnetic_PsiDOs:lem:bound_Hoermander_order_by_tau_orders}. 
		\item Theorem~\ref{equivariant_calculus:thm:magnetic_Weyl_product_equivariant_Hoermander_symbols} tells us these products exist in $S^{2(q + q')}_{0,0} \bigl ( \mathcal{B}(\Hil') \bigr )$ and $S^{2(q + q')}_{0,0} \bigl ( \mathcal{B}(\Hil) \bigr )$, respectively. However, Lemma~\ref{equivariant_calculus:magnetic_PsiDOs:lem:bound_Hoermander_order_by_tau_orders} informs the actual order is lower, $q'$ and $q$, respectively. 
	\end{enumerate}
\end{proof}
A direct corollary to Theorem~\ref{equivariant_calculus:thm:invertiblity} is the existence of Moyal resolvents. 
\begin{theorem}[Existence of Moyal resolvent]\label{equivariant_calculus:thm:existence_Moyal_resolvent}
	Suppose $\Hil \hookrightarrow \Hil'$ can be continuously and densely injected into $\Hil'$. 
	
	Assume we are given a symbol $f \in \Hoereq{q + q'} \bigl ( \mathcal{B}(\Hil,\Hil') \bigr )$ and $f - z$ is an invertible element of $\mathcal{M}^B_{\eq} \bigl ( \mathcal{B}(\Hil,\Hil') \bigr )$ for some $z \not\in \sigma \bigl ( \Op^A(f) \bigr )$. 
	\begin{enumerate}[(1)]
		\item Then the Moyal resolvent $(f - z)^{(-1)_{\Weyl}} \in \Hoereq{q + q'} \bigl ( \mathcal{B}(\Hil',\Hil) \bigr )$ exists as a Hörmander symbol of order $q + q' \geq 0$. 
		\item We may view their products 
		\begin{align*}
			(f - z)^{(-1)_{\Weyl}} \Weyl (f - z) &= \id_{\Hil} \in \Hoereq{2q} \bigl ( \mathcal{B}(\Hil) \bigr ) \cap \Hoereq{2q'} \bigl ( \mathcal{B}(\Hil') \bigr ) 
			, 
			\\
			(f - z) \Weyl (f - z)^{(-1)_{\Weyl}} &= \id_{\Hil'} \in \Hoereq{2q} \bigl ( \mathcal{B}(\Hil) \bigr ) \cap \Hoereq{2q'} \bigl ( \mathcal{B}(\Hil') \bigr ) 
			. 
		\end{align*}
		as elements of two different equivariant Hörmander classes. 
	\end{enumerate}
\end{theorem}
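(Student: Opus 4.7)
The plan is essentially a direct application of Theorem~\ref{equivariant_calculus:thm:invertiblity} to the symbol $g := f - z\,\imath$, where $\imath : \Hil \hookrightarrow \Hil'$ denotes the continuous, dense embedding postulated in the hypothesis. First I would check that $g$ is a well-defined equivariant Hörmander symbol of the same order $q + q'$. Since $\imath \in \mathcal{B}(\Hil,\Hil')$, interpreted as a constant (phase-space-independent) symbol, it belongs trivially to every equivariant Hörmander class: its equivariance follows from the compatibility of the group actions with the embedding, $\tau'(\gamma^*) \, \imath = \imath \, \tau(\gamma^*)$ for all $\gamma^* \in \Gamma^*$. Adding $-z\,\imath$ to $f$ therefore neither disturbs the symbol seminorms nor the equivariance condition~\eqref{equivariant_calculus:eqn:equivariance_condition_symbols}, and $g \in \Hoereq{q + q'}\bigl(\mathcal{B}(\Hil,\Hil')\bigr)$.

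Next, I would invoke the hypothesis that $g = f - z$ is invertible in $\mathcal{M}^B_{\eq}\bigl(\mathcal{B}(\Hil,\Hil')\bigr)$. By the identification~\eqref{equivariant_calculus:eqn:equivariant_Moyal_space_intersection_Moyal_space_cap_equivariant_distributions},
\begin{align*}
\mathcal{M}^B_{\eq}\bigl(\mathcal{B}(\Hil,\Hil')\bigr) = \mathcal{M}^B\bigl(\mathcal{B}(\Hil,\Hil')\bigr) \cap \Schwartz^*_{\eq}\bigl(T^* \R^d, \mathcal{B}(\Hil,\Hil')\bigr),
\end{align*}
so $g$ is \emph{a fortiori} invertible in the larger (non-equivariant) Moyal space. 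Theorem~\ref{equivariant_calculus:thm:invertiblity}~(1) then applies and produces the Moyal inverse
\begin{align*}
(f - z)^{(-1)_{\Weyl}} = g^{(-1)_{\Weyl}} \in \Hoereq{q + q'}\bigl(\mathcal{B}(\Hil',\Hil)\bigr),
\end{align*}
proving item~(1). The order of this symbol is $q + q'$, as dictated by Lemma~\ref{equivariant_calculus:magnetic_PsiDOs:lem:bound_Hoermander_order_by_tau_orders} applied to the reversed pair of group actions $(\tau',\tau)$.

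Item~(2) is essentially a book-keeping exercise on the orders. The defining identities
\begin{align*}
(f - z)^{(-1)_{\Weyl}} \Weyl (f - z) = \id_{\Hil}, \qquad (f - z) \Weyl (f - z)^{(-1)_{\Weyl}} = \id_{\Hil'}
\end{align*}
follow immediately from Theorem~\ref{equivariant_calculus:thm:invertiblity}~(2). Theorem~\ref{equivariant_calculus:thm:magnetic_Weyl_product_equivariant_Hoermander_symbols}~(2) then tells me that the product on the left-hand side lies \emph{a priori} in $\Hoereq{2q}\bigl(\mathcal{B}(\Hil)\bigr)$, while the product on the right-hand side lies in $\Hoereq{2q'}\bigl(\mathcal{B}(\Hil')\bigr)$. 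Finally, using the embedding $\imath$ to re-identify $\id_{\Hil}$ and $\id_{\Hil'}$ across the two Hilbert spaces, and invoking Lemma~\ref{equivariant_calculus:magnetic_PsiDOs:lem:bound_Hoermander_order_by_tau_orders} one more time, one sees that both identities can be simultaneously read in both symbol classes $\Hoereq{2q}\bigl(\mathcal{B}(\Hil)\bigr) \cap \Hoereq{2q'}\bigl(\mathcal{B}(\Hil')\bigr)$.

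The only delicate point — and the main obstacle — is making the substitution $f \leadsto f - z$ mathematically unambiguous in a setting where the initial and target Hilbert spaces differ. This is where the assumption of a continuous dense inclusion $\imath : \Hil \hookrightarrow \Hil'$ and the compatibility of the group actions with $\imath$ become essential: both are needed to ensure that $z\,\imath$ behaves like a scalar multiple of the identity within the equivariant formalism. Once this interpretation is fixed, everything else collapses into an invocation of the previously established Theorem~\ref{equivariant_calculus:thm:invertiblity} together with the order-tracking lemma.
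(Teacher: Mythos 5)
Your proposal is correct and follows essentially the same route as the paper: the theorem is obtained as a direct corollary of Theorem~\ref{equivariant_calculus:thm:invertiblity} applied to $f - z$, with the dense embedding $\Hil \hookrightarrow \Hil'$ used to identify $\id_{\Hil}$ with the restriction of $\id_{\Hil'}$ and Lemma~\ref{equivariant_calculus:magnetic_PsiDOs:lem:bound_Hoermander_order_by_tau_orders} fixing the orders of the symbol classes in item~(2). You additionally make explicit the intertwining relation $\tau'(\gamma^*)\,\imath = \imath\,\tau(\gamma^*)$ needed for $f - z$ to remain equivariant, a hypothesis the paper leaves implicit.
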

\begin{proof}
	We only need to comment on the Hörmander classes of the products: because $\Hil \hookrightarrow \Hil'$ can be viewed as a dense subspace, we may view $\id_{\Hil'}$ as a continuous extension of $\id_{\Hil}$. Equivalently, $\id_{\Hil}$ is the restriction of $\id_{\Hil'}$ to the dense subspace $\Hil$. 
	
	The order of the equivariant Hörmander classes are then determined by the growth of the group actions $\tau$ and $\tau'$ (Lemma~\ref{equivariant_calculus:magnetic_PsiDOs:lem:bound_Hoermander_order_by_tau_orders}). 
\end{proof}
One case where we can ensure the invertibility is when $h$ is an elliptic Hörmander symbol. 
\begin{corollary}
	Suppose $\Hil \hookrightarrow \Hil'$ can be continuously and densely injected into $\Hil'$. 
	
	Assume we are given a symbol $h \in \Hoereq{q + q'} \bigl ( \mathcal{B}(\Hil,\Hil') \bigr )$ that is elliptic in the sense of Definition~\ref{operator_valued_calculus:defn:elliptic_symbols} when $q + q' > 0$ and takes values in the selfadjoint operators. Then for any $z \not\in \sigma \bigl ( \Op^A(h) \bigr )$ the Moyal resolvent 
	\begin{align*}
		(h - z)^{(-1)_{\Weyl}} \in \Hoereq{q + q'} \bigl ( \mathcal{B}(\Hil,\Hil') \bigr ) 
	\end{align*}
	exists as a Hörmander symbol.
\end{corollary}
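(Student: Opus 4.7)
The plan is to reduce the statement to an invocation of Theorem~\ref{equivariant_calculus:thm:existence_Moyal_resolvent}: the only thing not already in the hypotheses of that theorem is the invertibility of $h - z$ in the equivariant Moyal space $\mathcal{M}^B_{\eq} \bigl ( \mathcal{B}(\Hil,\Hil') \bigr )$ for $z \not\in \sigma \bigl ( \Opeq^A(h) \bigr )$. So the bulk of the proof is to verify this invertibility from ellipticity plus pointwise selfadjointness.

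First, I would establish that $\Opeq^A(h)$ defines a selfadjoint operator on $L^2_{\eq}(\R^d,\Hil')$ with domain $H^{q+q'}_{\eq,\Fourier,A}(\R^d,\Hil)$, mirroring Theorem~\ref{operator_valued_calculus:thm:selfadjointness_elliptic_symbols} and its equivariant weak version, Theorem~\ref{equivariant_calculus:thm:selfadjointness_elliptic_symbols_weak_result}. Symmetry on the dense subspace $\Cont^{\infty}_{\eq,0}(\R^d,\Hil)$ follows from $h(r,k)^* = h(r,k)$ by the same integral-kernel computation as before, exploiting that elements of $\Cont^{\infty}_{\eq,0}(\R^d,\Hil)$ lift to Schwartz functions via the inclusion $\imath_\chi$. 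To identify the domain of the adjoint, I would use ellipticity (when $q+q' > 0$) to show that the graph norm of $\Opeq^A(h)$ is equivalent to $\snorm{\cdot}_{H^{q+q'}_{\eq,\Fourier,A}(\R^d,\Hil)}$, combined with the boundedness from Corollary~\ref{equivariant_calculus:cor:Calderon_Vaillancourt}.

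Once selfadjointness is in hand, $\sigma \bigl ( \Opeq^A(h) \bigr ) \subseteq \R$, and for any $z \not\in \sigma \bigl ( \Opeq^A(h) \bigr )$ the resolvent $\bigl ( \Opeq^A(h) - z \bigr )^{-1} : L^2_{\eq}(\R^d,\Hil') \to H^{q+q'}_{\eq,\Fourier,A}(\R^d,\Hil)$ exists as a bounded operator. Being bounded and equivariant, it is of the form $\Opeq^A(g)$ for a unique $g \in \MoyalSpace_{\eq} \bigl ( \mathcal{B}(\Hil',\Hil) \bigr )$ by Proposition~\ref{equivariant_calculus:prop:equivariant_distributions_define_equivariant_PsiDOs}. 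The intertwining property, Proposition~\ref{equivariant_calculus:prop:intertwining_operator_product_Weyl_product_Opeq}, together with injectivity of $\Opeq^A$ on the Moyal space then yields $g \Weyl (h - z) = \id_{\Hil}$ and $(h - z) \Weyl g = \id_{\Hil'}$, which exhibits $g = (h - z)^{(-1)_{\Weyl}}$ as an inverse of $h - z$ in $\MoyalSpace_{\eq} \bigl ( \mathcal{B}(\Hil',\Hil) \bigr )$. Plugging this invertibility witness into Theorem~\ref{equivariant_calculus:thm:existence_Moyal_resolvent} delivers $(h - z)^{(-1)_{\Weyl}} \in \Hoereq{q+q'} \bigl ( \mathcal{B}(\Hil,\Hil') \bigr )$.

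The main obstacle is the selfadjointness step, for exactly the reasons spelled out in the remarks following Theorem~\ref{equivariant_calculus:thm:selfadjointness_elliptic_symbols_weak_result}: in the equivariant case we necessarily have $\rho = \delta = 0$, so the classical parametrix-based route to graph-norm equivalence breaks down (the remainder is not smoothing and no resummation is guaranteed to exist). I would therefore expect that, in full generality, one has to either impose the supplementary hypotheses of Theorem~\ref{equivariant_calculus:thm:selfadjointness_elliptic_symbols_weak_result} (bounded $\tau^{(\prime)}$ or bounded $A$), or replace the parametrix argument by a direct Gårding-type lower bound on the fundamental cell — using pointwise ellipticity of $h - z_0$ for some large $z_0 \in \ii \R$ and the equivariant Calderón-Vaillancourt Theorem~\ref{equivariant_calculus:thm:Calderon_Vaillancourt} to control the remainder. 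Everything after selfadjointness is essentially bookkeeping.
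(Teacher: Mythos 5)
Your reduction is exactly the one the paper intends: the paper gives no explicit proof of this equivariant corollary (the section announces that proofs are "omitted either largely or in their entirety"), and the operator-valued analogue is justified only by the remark that ellipticity makes $\Op^A(h-z)\,\Op^A(w_{-m})$ a continuous bijection, so that the operator resolvent supplies the invertibility witness in the Moyal space which is then fed into the invertibility theorem (equivariant Beals). Your chain — resolvent operator $\Rightarrow$ preimage in $\MoyalSpace_{\eq}$ via Proposition~\ref{equivariant_calculus:prop:equivariant_distributions_define_equivariant_PsiDOs} $\Rightarrow$ Moyal inverse via the intertwining property and injectivity of $\Opeq^A$ $\Rightarrow$ Theorem~\ref{equivariant_calculus:thm:existence_Moyal_resolvent} — is that same route made explicit. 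Two remarks. First, the selfadjointness difficulty you flag is genuine and the paper does not resolve it either: Theorem~\ref{equivariant_calculus:thm:selfadjointness_elliptic_symbols_weak_result} needs the extra hypotheses you list, and the paper's own discussion concedes that in the equivariant $\rho=\delta=0$ setting the parametrix route to the graph-norm equivalence fails and "the only alternative known to us is to prove the selfadjointness of $\Opeq^A(h)$ directly with functional analytic methods." However, you can sidestep this entirely for the corollary as stated: the hypothesis $z \not\in \sigma\bigl(\Op^A(h)\bigr)$ already presupposes a closed realization with $z$ in its resolvent set, so the bounded resolvent operator is handed to you and you never need to locate the spectrum on $\R$; selfadjointness and ellipticity only serve to make that hypothesis non-vacuous. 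Your proof therefore does more work than the statement requires, but none of it is wrong. Second, a cosmetic point: the inverse should take values in $\mathcal{B}(\Hil',\Hil)$ (compare Theorem~\ref{equivariant_calculus:thm:existence_Moyal_resolvent}); the $\mathcal{B}(\Hil,\Hil')$ in the corollary's display appears to be a typo that you have reproduced.
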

With the Moyal resolvents in hand, we can proceed to define functional calculi. The first one is for situations where we can use the Helffer-Sjöstrand formula: 
\begin{theorem}[Existence of a functional calculus]
	Suppose $\Hil \hookrightarrow \Hil'$ can be continuously and densely injected into $\Hil'$. 
	Assume we are given a symbol $h \in \Hoereq{q + q'} \bigl ( \mathcal{B}(\Hil,\Hil') \bigr )$ that is elliptic in the sense of Definition~\ref{operator_valued_calculus:defn:elliptic_symbols} when $q + q' > 0$ and takes values in the selfadjoint operators. 
	Then this defines a functional calculus for $\varphi \in \Cont^{\infty}_{\mathrm{c}}(\R,\C)$ via the Hellfer-Sjöstrand formula~\eqref{operator_valued_calculus:eqn:Helffer_Sjoestrand_functional_calculus}, and for each smooth function with compact support 
	\begin{align*}
		\varphi^B(h) \in \Hoereq{q + q'} \bigl ( \mathcal{B}(\Hil',\Hil) \bigr ) 
	\end{align*}
	defines a Hörmander symbol of order $q + q'$. 
\end{theorem}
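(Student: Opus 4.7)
The plan is to follow the strategy of the operator-valued analogue (Theorem~\ref{operator_valued_calculus:thm:functional_calculus_Helffer_Sjoestrand}) and then invoke Meta Theorem~\ref{equivariant_calculus:meta_theorem:extension_operator_valued_to_equivariant}. First I would establish that $\Opeq^A(h) = \Opeq^A(h)^*$ is selfadjoint so that $\C \setminus \R \subseteq \rho \bigl ( \Opeq^A(h) \bigr )$. The pointwise selfadjointness $h(r,k)^* = h(r,k)$ together with the ellipticity assumption (for $q + q' > 0$) provides the symmetry on the dense equivariant core $\Cont^{\infty}_{\eq,0}(\R^d , \Hil)$ introduced in the proof of Theorem~\ref{equivariant_calculus:thm:selfadjointness_elliptic_symbols_weak_result}, and the domain is identified as the equivariant magnetic Sobolev space via Corollary~\ref{equivariant_calculus:cor:equivalence_magnetic_equivariant_operators}.

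Next, for every $z \in \C \setminus \R$, Theorem~\ref{equivariant_calculus:thm:existence_Moyal_resolvent} provides the equivariant Moyal resolvent $(h - z)^{(-1)_{\Weyl}} \in \Hoereq{q+q'} \bigl ( \mathcal{B}(\Hil',\Hil) \bigr )$. I would then interpret the Helffer--Sjöstrand formula
\begin{align*}
\varphi^B(h) := \tfrac{1}{\pi} \int_{\C} \dd z \, \partial_{\bar z} \widetilde{\varphi}(z) \, (h - z)^{(-1)_{\Weyl}}
\end{align*}
as a Bochner integral taking values in the Fréchet space $\Hoereq{q+q'} \bigl ( \mathcal{B}(\Hil',\Hil) \bigr )$, where $\widetilde{\varphi} \in \Cont^{\infty}_{\mathrm{c}}(\C,\C)$ is a quasi-analytic extension of $\varphi$ whose order can be chosen as large as needed.

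To verify convergence in the Fréchet topology, I would estimate each seminorm $\snorm{\cdot}_{q+q' , a \alpha}$ of $(h - z)^{(-1)_{\Weyl}}$ by a polynomial in $\sabs{\Im z}^{-1}$. The key step is to iterate the Leibniz-type identity
\begin{align*}
\ad_{k_j} \bigl ( (h-z)^{(-1)_{\Weyl}} \bigr ) = - (h-z)^{(-1)_{\Weyl}} \Weyl \ad_{k_j}(h) \Weyl (h-z)^{(-1)_{\Weyl}} ,
\end{align*}
together with its position analogue, as in the proof of Theorem~\ref{equivariant_calculus:thm:invertiblity}: this converts arbitrary derivations of the Moyal resolvent into finite sums of Moyal products involving at most $\sabs{a}+\sabs{\alpha}+1$ factors of $(h-z)^{(-1)_{\Weyl}}$ and derivations of $h \in \Hoereq{q+q'} \bigl ( \mathcal{B}(\Hil,\Hil') \bigr )$. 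Theorem~\ref{equivariant_calculus:thm:magnetic_Weyl_product_equivariant_Hoermander_symbols} then controls the resulting equivariant Hörmander seminorms, and the $\sabs{\Im z}^{-N}$ growth is compensated by the rapid vanishing of $\partial_{\bar z} \widetilde{\varphi}(z) = \order(\sabs{\Im z}^{N})$ as $\sabs{\Im z} \rightarrow 0$.

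Equivariance of the limit $\varphi^B(h)$ is automatic: the integrand is equivariant for every $z$ by Theorem~\ref{equivariant_calculus:thm:existence_Moyal_resolvent}, and $\Schwartz^*_{\eq}(T^*\R^d , \mathcal{B}(\Hil',\Hil))$ is a closed subspace of $\Schwartz^*(T^*\R^d , \mathcal{B}(\Hil',\Hil))$. Combined with the operator-valued Theorem~\ref{operator_valued_calculus:thm:functional_calculus_Helffer_Sjoestrand}, which places $\varphi^B(h)$ in $S^0_{0,0} \bigl ( \mathcal{B}(\Hil',\Hil) \bigr )$, the intersection characterisation~\eqref{equivariant_calculus:eqn:equivariant_Hoermander_class_intersection_usual_Hoermander_class_equivariant_Moyal_space} yields $\varphi^B(h) \in S^0_{\eq} \bigl ( \mathcal{B}(\Hil',\Hil) \bigr )$, and Lemma~\ref{equivariant_calculus:magnetic_PsiDOs:lem:bound_Hoermander_order_by_tau_orders} pins the order to $q+q'$. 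The main obstacle is the selfadjointness step, since Theorem~\ref{equivariant_calculus:thm:selfadjointness_elliptic_symbols_weak_result} requires either $q = 0$ or the boundedness of $A$, neither of which is among the hypotheses here; I expect the cleanest fix is to reduce to a fibered picture over the Brillouin torus via Proposition~\ref{equivariant_calculus:prop:characterization_equivariant_operators}, where each fiber operator is pointwise selfadjoint on $\Hil'$, and to then appeal to a measurable selfadjointness-of-fibers argument for the resulting direct integral.
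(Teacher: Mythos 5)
Your proposal is correct and, in its closing paragraph, coincides exactly with the paper's proof, which consists of three lines: existence of $\varphi^B(h)$ as an operator-valued Hörmander symbol via Theorem~\ref{operator_valued_calculus:thm:functional_calculus_Helffer_Sjoestrand}, equivariance inherited from the Moyal resolvent (Theorem~\ref{equivariant_calculus:thm:existence_Moyal_resolvent}), and Lemma~\ref{equivariant_calculus:magnetic_PsiDOs:lem:bound_Hoermander_order_by_tau_orders} to fix the order at $q+q'$. The selfadjointness obstacle you flag at the end is a genuine subtlety, but the paper does not resolve it here either — it is simply absorbed into the hypotheses of the cited operator-valued theorem — so your extra care on that point goes beyond, rather than conflicts with, the published argument.
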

\begin{proof}
	$\varphi^B(h)$ exists as an operator-valued Hörmander symbol by Theorem~\ref{operator_valued_calculus:thm:functional_calculus_Helffer_Sjoestrand}. Evidently, $\varphi^B(h)$ inherits the equivariance of the Moyal resolvent (Theorem~\ref{equivariant_calculus:thm:existence_Moyal_resolvent}). Therefore, Lemma~\ref{equivariant_calculus:magnetic_PsiDOs:lem:bound_Hoermander_order_by_tau_orders} tells us the order of the \emph{equivariant} symbol equals the sum of the orders of growth of the group actions $\tau$ and $\tau'$. 
\end{proof}
In principle, we can also define a holomorphic functional calculus, and a proof of the following statement is identical to the one above, we merely have to set $\Hil = \Hil'$ and point to Theorem~\ref{operator_valued_calculus:thm:holomorphic_functional_calculus} in the first step instead. 
\begin{theorem}[Holomorphic functional calculus on $\Hoereq{2q} \bigl ( \mathcal{B}(\Hil) \bigr )$]\label{operator_valued_calculus:thm:holomorphic_functional_calculus}
	There exists a holomorphic functional calculus on $\Hoereq{2q} \bigl ( \mathcal{B}(\Hil) \bigr )$, \ie for any function $\varphi$ that is holomoprhic on some neighborhood of the spectrum $\sigma \bigl ( \Op^A(f) \bigr )$, equation~\eqref{operator_valued_calculus:eqn:holomorphic_functional_calculus} defines a map 
	\begin{align*}
		\varphi \mapsto \varphi^B(f) \in \Hoereq{2q} \bigl ( \mathcal{B}(\Hil) \bigr ) 
	\end{align*}
	into the Hörmander symbols of order $2q$. 
\end{theorem}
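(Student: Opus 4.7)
The plan is to follow the blueprint provided by the Meta Theorem~\ref{equivariant_calculus:meta_theorem:extension_operator_valued_to_equivariant}, namely (i)~leverage the corresponding statement for operator-valued symbols, (ii)~check that equivariance is preserved, and (iii)~read off the correct order of the resulting equivariant symbol from Lemma~\ref{equivariant_calculus:magnetic_PsiDOs:lem:bound_Hoermander_order_by_tau_orders}. Since the statement concerns the case $\Hil = \Hil'$, the group actions on initial and target spaces coincide and both have order $q$, so that $q + q' = 2q$; this is what sets the order of the resulting symbol class.

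Concretely, the first step would be to apply Theorem~\ref{operator_valued_calculus:thm:holomorphic_functional_calculus} (the operator-valued holomorphic functional calculus, or equivalently its $\Psi^*$-algebraic formulation via the Corollary right after it) to conclude that for $\varphi$ holomorphic in a neighborhood of $\sigma \bigl ( \Op^A(f) \bigr )$, the contour integral
\begin{align*}
	\varphi^B(f) = \frac{\ii}{2\pi} \int_{\Gamma} \dd z \, \varphi(z) \, (f - z)^{(-1)_{\Weyl}}
\end{align*}
converges in the Fréchet topology of $S^0_{0,0} \bigl ( \mathcal{B}(\Hil) \bigr )$ and defines an element of that operator-valued Hörmander class. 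This uses the fact that the Moyal resolvents $(f - z)^{(-1)_{\Weyl}}$ exist as Hörmander symbols for $z$ on the contour $\Gamma \subset \C \setminus \sigma(\Op^A(f))$.

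The second step is to verify that $\varphi^B(f)$ inherits equivariance from $f$. This is the natural checkpoint where we invoke Theorem~\ref{equivariant_calculus:thm:existence_Moyal_resolvent}: for each $z$ on the contour, the Moyal resolvent $(f - z)^{(-1)_{\Weyl}}$ is an \emph{equivariant} Hörmander symbol. Since the equivariance condition~\eqref{equivariant_calculus:eqn:equivariance_condition_symbols} is linear and closed under limits in the relevant Fréchet topology, it passes through the contour integral: for each $\gamma^* \in \Gamma^*$ we have
\begin{align*}
	\varphi^B(f)(r,k - \gamma^*) = \frac{\ii}{2\pi} \int_{\Gamma} \dd z \, \varphi(z) \, \tau(\gamma^*) \, (f - z)^{(-1)_{\Weyl}}(r,k) \, \tau(\gamma^*)^{-1} = \tau(\gamma^*) \, \varphi^B(f)(r,k) \, \tau(\gamma^*)^{-1}.
\end{align*}
Hence $\varphi^B(f)$ belongs to the intersection characterization~\eqref{equivariant_calculus:eqn:equivariant_Hoermander_class_intersection_usual_Hoermander_class_equivariant_Moyal_space}, i.e.\ to an equivariant Hörmander class.

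The third and final step pins down the order: by Lemma~\ref{equivariant_calculus:magnetic_PsiDOs:lem:bound_Hoermander_order_by_tau_orders}, any equivariant Hörmander symbol that takes values in $\mathcal{B}(\Hil)$ lies automatically in $\Hoereq{2q} \bigl ( \mathcal{B}(\Hil) \bigr )$, where $2q$ is the sum of the orders of the group actions on initial and target space. This gives $\varphi^B(f) \in \Hoereq{2q} \bigl ( \mathcal{B}(\Hil) \bigr )$ as claimed. There is no real obstacle here; the only subtle point, and the one that warrants stating separately, is the commutation of the contour integral with the equivariance condition, but this is automatic once one notes that translation by $\gamma^*$ and conjugation by $\tau(\gamma^*)$ are continuous linear operations on the Fréchet space $S^0_{0,0} \bigl ( \mathcal{B}(\Hil) \bigr )$ in which the integral already converges by step~(i).
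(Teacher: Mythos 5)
Your proposal is correct and follows the same route as the paper: the paper proves this by pointing to the operator-valued holomorphic functional calculus in the first step, noting that $\varphi^B(f)$ inherits equivariance from the equivariant Moyal resolvent (Theorem~\ref{equivariant_calculus:thm:existence_Moyal_resolvent}), and then invoking Lemma~\ref{equivariant_calculus:magnetic_PsiDOs:lem:bound_Hoermander_order_by_tau_orders} to identify the order as $q + q' = 2q$. Your explicit verification that translation and conjugation by $\tau(\gamma^*)$ commute with the contour integral is a welcome elaboration of a step the paper leaves implicit, but it is the same argument.
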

Just like in the operator-valued case, we can often construct a parametrix for equivariant magnetic $\Psi$DOs. The symbol agrees with the Moyal resolvent up to $\order(\epsilon^{\infty})$ — if the latter exists. Of course, the crucial assumption is that a $0$th-order approximation exists and is equivariant. Then the equivariance of the calculus ensures that all higher-order terms are equivariant as well. Compared to the analogous results from Section~\ref{operator_valued_calculus:Hoermander_symbols:Weyl_product}, the parametrix exists only as a formal sum and we do not know whether a resummation exists (\cf Remark~\ref{operator_valued_calculus:rem:lack_resummation_results_rho_equal_delta}). 
\begin{theorem}[Existence of a parametrix]\label{equivariant_calculus:cor:existence_parametrix}
	Suppose the following assumptions are satisfied: 
	\begin{enumerate}[(a)]
		\item $\Hil \hookrightarrow \Hil'$ can be continuously and densely injected into $\Hil'$. 
		\item We are given a symbol $f \in \Hoereq{q + q'} \bigl ( \mathcal{B}(\Hil,\Hil') \bigr )$. 
		\item We are given an asymptotic expansion of the magnetic Weyl product (\cf Definition~\ref{operator_valued_calculus:defn:asymptotic_Hoermander_class}) 
		\begin{align*}
			f \Weyl g \asymp \sum_{n = 0}^{\infty} \epsilon^n \, (f \Weyl g)_{(n)} 
			, 
		\end{align*}
		of $f \in S^{q' + q''}_{\rho,\delta} \bigl ( \mathcal{B}(\Hil',\Hil'') \bigr )$ and $g \in S^{q + q'}_{\rho,\delta} \bigl ( \mathcal{B}(\Hil,\Hil') \bigr )$, where $\epsilon \ll 1$ is a small parameter and the terms of the expansion satisfy 
		\begin{align*}
			(f \Weyl g)_{(n)} \in \Hoereq{q + q''} \bigl ( \mathcal{B}(\Hil,\Hil'') \bigr ) 
			. 
		\end{align*}
		\item There exists a symbol $g_0 \in \Hoereq{q + q'} \bigl ( \mathcal{B}(\Hil',\Hil) \bigr )$ that satisfies 
		\begin{align*}
			g_0 \Weyl f - \id_{\Hil} &= \order(\epsilon) \in \Hoereq{2q} \bigl ( \mathcal{B}(\Hil) \bigr ) \cap \Hoereq{2q'} \bigl ( \mathcal{B}(\Hil') \bigr )
			, 
			\\
			f \Weyl g_0 - \id_{\Hil'} &= \order(\epsilon) \in \Hoereq{2q} \bigl ( \mathcal{B}(\Hil) \bigr ) \cap \Hoereq{2q'} \bigl ( \mathcal{B}(\Hil') \bigr )
			. 
		\end{align*}
	\end{enumerate}
	Then there exists a parametrix $f^{(-1)_{\epsilon}} \in \Sigma \Hoereq{q + q'} \bigl ( \mathcal{B}(\Hil',\Hil) \bigr )$, which satisfies 
	\begin{align*}
			f^{(-1)_{\epsilon}} \Weyl f &= \id_{\Hil} + \order(\epsilon^{\infty}) \in \Sigma \Hoereq{2q} \bigl ( \mathcal{B}(\Hil) \bigr ) \cap \Sigma \Hoereq{2q'} \bigl ( \mathcal{B}(\Hil') \bigr ) 
			, 
			\\
			f \Weyl f^{(-1)_{\epsilon}} &= \id_{\Hil'} + \order(\epsilon^{\infty}) \in \Sigma \Hoereq{2q} \bigl ( \mathcal{B}(\Hil) \bigr ) \cap \Sigma \Hoereq{2q'} \bigl ( \mathcal{B}(\Hil') \bigr ) 
			. 
	\end{align*}
\end{theorem}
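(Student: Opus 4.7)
The plan is to adapt the construction of the parametrix from Proposition~\ref{operator_valued_calculus:prop:existence_left_parametrix} and Corollaries~\ref{operator_valued_calculus:cor:existence_right_parametrix}--\ref{operator_valued_calculus:cor:existence_parametrix} to the equivariant setting, in the spirit of Meta Theorem~\ref{equivariant_calculus:meta_theorem:extension_operator_valued_to_equivariant} and following the guidance of Remark~\ref{operator_valued_calculus:rem:existence_parametrices}. Because equivariance forces $\rho = 0 = \delta$ (Lemma~\ref{equivariant_calculus:magnetic_PsiDOs:lem:bound_Hoermander_order_by_tau_orders}) and the asymptotic expansion in assumption~(c) has $\mu = 0$, we cannot hope to build an asymptotic \emph{symbol}, only a \emph{formal sum}; this is consistent with the target space $\Sigma\Hoereq{q+q'}(\mathcal{B}(\Hil',\Hil))$ in the statement.

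First, I would introduce the two inversion defects
\begin{align*}
	r_{\mathrm{L},0} &:= \epsilon^{-1} \bigl ( g_0 \Weyl f - \id_{\Hil} \bigr ), \\
	r_{\mathrm{R},0} &:= \epsilon^{-1} \bigl ( f \Weyl g_0 - \id_{\Hil'} \bigr ),
\end{align*}
which by assumption~(d) lie in $\Hoereq{2q}(\mathcal{B}(\Hil)) \cap \Hoereq{2q'}(\mathcal{B}(\Hil'))$ and $\Hoereq{2q'}(\mathcal{B}(\Hil')) \cap \Hoereq{2q}(\mathcal{B}(\Hil))$ respectively; the double membership is essential and is the equivariant analog of assumption~(b'') of Corollary~\ref{operator_valued_calculus:cor:existence_parametrix}. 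I then define the left and right parametrices by the formal geometric series
\begin{align*}
	f^{(-1)_{\epsilon}}_{\mathrm{L}} := \Bigl ( \id_{\Hil} + \sum_{n \geq 1} (-1)^n \, \epsilon^n \, r_{\mathrm{L},0}^{\Weyl\, n} \Bigr ) \Weyl g_0,
	\qquad
	f^{(-1)_{\epsilon}}_{\mathrm{R}} := g_0 \Weyl \Bigl ( \id_{\Hil'} + \sum_{n \geq 1} (-1)^n \, \epsilon^n \, r_{\mathrm{R},0}^{\Weyl\, n} \Bigr ).
\end{align*}
Each truncation to finite order is a well-defined equivariant Hörmander symbol: Theorem~\ref{equivariant_calculus:thm:magnetic_Weyl_product_equivariant_Hoermander_symbols}~(2) shows that all $\Weyl$-powers and products occurring in the sum remain in $\Hoereq{q+q'}(\mathcal{B}(\Hil',\Hil))$, and Proposition~\ref{equivariant_calculus:prop:intertwining_operator_product_Weyl_product_Opeq} ensures equivariance is preserved. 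Hence both expressions are legitimate elements of $\Sigma\Hoereq{q+q'}(\mathcal{B}(\Hil',\Hil))$.

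Next, I would verify the approximate-inverse properties order-by-order in $\epsilon$, exactly mimicking the proof of Proposition~\ref{operator_valued_calculus:prop:existence_left_parametrix}: writing $f^{(-1)_{\epsilon}}_{\mathrm{L}} \Weyl f = r_{\mathrm{L}} \Weyl (\id_\Hil + \epsilon\, r_{\mathrm{L},0}) = r_{\mathrm{L}} \Weyl (\id_\Hil - (-\epsilon\, r_{\mathrm{L},0}))$ and using the telescoping identity satisfied by partial sums of the geometric series, all terms cancel to arbitrary order, giving a remainder that is $\order(\epsilon^\infty)$ in the sense of Definition~\ref{equivariant_calculus:defn:weaker_definition_O_epsilon_infty}. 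The equality $f^{(-1)_{\epsilon}}_{\mathrm{L}} = f^{(-1)_{\epsilon}}_{\mathrm{R}} + \order(\epsilon^\infty)$ follows from the standard sandwiching argument
\begin{align*}
	f^{(-1)_{\epsilon}}_{\mathrm{L}} - f^{(-1)_{\epsilon}}_{\mathrm{R}} = f^{(-1)_{\epsilon}}_{\mathrm{L}} \Weyl \bigl ( \id_{\Hil'} - f \Weyl f^{(-1)_{\epsilon}}_{\mathrm{R}} \bigr ) - \bigl ( \id_{\Hil} - f^{(-1)_{\epsilon}}_{\mathrm{L}} \Weyl f \bigr ) \Weyl f^{(-1)_{\epsilon}}_{\mathrm{R}},
\end{align*}
where assumption~(a) lets us freely switch between viewing the identities on $\Hil$ and $\Hil'$. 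Calling the common value $f^{(-1)_\epsilon}$ yields the claimed parametrix.

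The main obstacle I expect is bookkeeping the symbol classes for the remainder identities. In the non-equivariant case one relied on subsequent terms dropping into lower-order Hörmander classes ($\mu>0$, $\rho>\delta$), which is precisely what fails here. The fix is the observation --- already baked into assumption~(d) --- that the identity operators, being scalar multiples of $\id$, commute with every $\tau^{(\prime)}(\gamma^*)$ and therefore can be regarded simultaneously as equivariant symbols of order $2q$ and $2q'$, via Lemmas~\ref{equivariant_calculus:magnetic_PsiDOs:lem:bound_Hoermander_order_by_tau_orders} and \ref{equivariant_calculus:magnetic_PsiDOs:lem:periodic_scalar_valued_symbols_embedding_periodic_equivariant_symbol_classes}. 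Consequently the remainders $f^{(-1)_\epsilon}\Weyl f - \id_\Hil$ and $f \Weyl f^{(-1)_\epsilon} - \id_{\Hil'}$ live in the claimed intersection $\Sigma\Hoereq{2q}(\mathcal{B}(\Hil)) \cap \Sigma\Hoereq{2q'}(\mathcal{B}(\Hil'))$; this intersection, and not any genuine smoothing-symbol statement, is what replaces the $S^{-\infty}$ remainder in the non-equivariant proof. Everything else --- associativity of $\Weyl$ on the relevant formal-sum level, continuity of the construction, and the telescoping cancellations --- transfers directly by Meta Theorem~\ref{equivariant_calculus:meta_theorem:extension_operator_valued_to_equivariant}.
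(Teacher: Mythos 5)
Your proposal is correct and follows exactly the route the paper intends (the paper omits an explicit proof, deferring to Proposition~\ref{operator_valued_calculus:prop:existence_left_parametrix}, Corollaries~\ref{operator_valued_calculus:cor:existence_right_parametrix}--\ref{operator_valued_calculus:cor:existence_parametrix}, Remark~\ref{operator_valued_calculus:rem:existence_parametrices} and Meta~Theorem~\ref{equivariant_calculus:meta_theorem:extension_operator_valued_to_equivariant}): Neumann series of the two inversion defects, order-by-order cancellation on the level of formal sums since $\rho=\delta=0$ and $\mu=0$ preclude a resummation, the sandwiching identity to match the left and right parametrices, and Lemma~\ref{equivariant_calculus:magnetic_PsiDOs:lem:bound_Hoermander_order_by_tau_orders} plus Theorem~\ref{equivariant_calculus:thm:magnetic_Weyl_product_equivariant_Hoermander_symbols}~(2) to place the remainders in the intersection $\Sigma\Hoereq{2q}\bigl(\mathcal{B}(\Hil)\bigr)\cap\Sigma\Hoereq{2q'}\bigl(\mathcal{B}(\Hil')\bigr)$ rather than in $S^{-\infty}$. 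Your identification of the double membership in assumption~(d) as the equivariant stand-in for assumption~(b'') of Corollary~\ref{operator_valued_calculus:cor:existence_parametrix} is precisely the right observation.
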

Applying the parametrix construction to $h - z$ yields an “approximate Moyal resolvent” $(h - z)^{(-1)_{\epsilon}}$, which is then the basis for a functional calculus. 
\begin{remark}[Equivariant functional calculi using parametrices]
	We could have repeated the discussion from Section~\ref{operator_valued_calculus:Hoermander_symbols:inversion_and_functional_calculus} and set up functional calculi with respect to parametrices rather than Moyal resolvents. The basis for those is Theorem~\ref{equivariant_calculus:cor:existence_parametrix}. Our arguments hold verbatim after adding the keyword equivariant in the right places. Indeed, this has been exploited in the literature on perturbed periodic systems (see \eg \cite{PST:effective_dynamics_Bloch:2003,DeNittis_Lein:Bloch_electron:2009,DeNittis_Lein:sapt_photonic_crystals:2013,Freund_Teufel:non_trivial_Bloch_sapt:2013}). 
\end{remark}
The fact that finite resummations of $(h - z)^{(-1)_{\epsilon}}$ in general do not converge may actually be an advantage in many situations: when $\Op^A(h)$ is selfadjoint and $z \not\in \sigma \bigl ( \Opeq^A(h) \bigr )$, Beal's Commutator Criterion~\ref{equivariant_calculus:thm:Beals_commutator_criterion} ensures the existence of the exact Moyal resolvent $(h - z)^{(-1)_{\Weyl}} \in \Hoereq{q + q'} \bigl ( \mathcal{B}(\Hil',\Hil) \bigr )$. When in addition $\bigl ( (h - z)^{(-1)_{\epsilon}} \bigr )_0$ the principal symbol in the $\epsilon$ expansion of $\Weyl$ exists (condition~(d) above), then we can construct the formal sum $(h - z)^{(-1)_{\epsilon}}$ as well. The two are related by 
\begin{align*}
	(h - z)^{(-1)_{\Weyl}} = (h - z)^{(-1)_{\epsilon}} + \order(\epsilon^{\infty})
	, 
\end{align*}
\ie the formal sum is the asymptotic expansion of the true Moyal resolvent. Results like \cite[Theorem~1.4]{Mantoiu_Purice:continuity_spectra:2009}, which guarantee the continuity of spectra for parameter-dependent magnetic pseudodifferential operators with scalar-valued symbols then suggest that we should expect $z \not\in \sigma \bigl ( \Opeq^A(h) \bigr )$ for $\epsilon = \epsilon_0$ extends to a neighborhood of $\epsilon_0$. In that case, $(h - z)^{(-1)_{\Weyl}} \in \SemiHoereq{q + q'} \bigl ( \mathcal{B}(\Hil',\Hil) \bigr )$ defines an asymptotic equivariant symbol. 

These relations between approximate and true Moyal resolvents extend to symbols obtained via the two functional calculi from the same function. 
%
%
%
%
\section{Conclusion \& outlook} 
\label{outlook}
The aim of this work was providing a solid foundation for magnetic pseudodifferential calculi for operator-valued and equivariant operator-valued functions and distributions. Up until now the mathematical theory of which lived in the appendices of some works (see \eg \cite{PST:sapt:2002,PST:effective_dynamics_Bloch:2003,Teufel:adiabatic_perturbation_theory:2003}) or in short explanations that merely sketched these ideas (as in \eg \cite{DeNittis_Lein:Bloch_electron:2009,Fuerst_Lein:scaling_limits_Dirac:2008,DeNittis_Lein:sapt_photonic_crystals:2013,DeNittis_Lein:ray_optics_photonic_crystals:2014}). The downside was that this had to be — partly or entirely — repeated for each publication, and more advanced results were usually out of reach. What is more, there are some technical subtleties one may not think of when outlining the construction in broad strokes. 

We hope that this work has filled this gap. Apart from providing fully featured pseudodifferential calculi, which importantly include \emph{functional calculi,} our discussion clarifies how to overcome the technical hurdles when extending analytical results to operator-valued and equivariant magnetic $\Psi$DOs. A common theme was that \emph{morally} we can often think of some operator-valued function space 
\begin{align*}
	\mathcal{A} \bigl ( \mathcal{B}(\Hil,\Hil') \bigr ) \neq \mathcal{A}(\C) \otimes \mathcal{B}(\Hil,\Hil')
\end{align*}
as the tensor product of the corresponding scalar-valued function space $\mathcal{A}(\C)$ with the Banach space of bounded operators. Unfortunately, unless $\mathcal{A}(\C)$ is nuclear or $\mathcal{B}(\Hil,\Hil')$ is finite-dimensional, this is false as there are several inequivalent topologies with respect to which we can complete the algebraic tensor product. Extra care needs to be taken to make the mathematical arguments rigorous. The crucial ingredient is that the relevant spaces of Schwartz functions and tempered distributions \emph{are} nuclear, and on this level the tensor product decomposition can be made rigorous and exploited systematically. 

Furthermore, we have conceptually clarified the relation between the operator-valued and the equivariant calculus. Big picture, there are two important ingredients: 
\begin{enumerate}[(1)]
	\item On the level of distributions equivariant magnetic $\Psi$DOs are the restriction of operator-valued magnetic $\Psi$DOs between (not necessarily equivariant) tempered distributions (\cf Proposition~\ref{equivariant_calculus:prop:equivariant_distributions_define_equivariant_PsiDOs}). 
	\item Bounded operators between ordinary, non-equivariant $L^2$ spaces that satisfy an equivariance condition define bounded operators between equivariant magnetic Sobolev and $L^2$ spaces (\cf \eg Propositions~\ref{equivariant_calculus:prop:characterization_equivariant_operators} and \ref{equivariant_calculus:prop:equivalence_boundedness_equivariant_operators}). 
\end{enumerate}
Exploiting this has a number of advantages. For example, rather than introducing suitable weights as in \cite[Appendix~B]{PST:effective_dynamics_Bloch:2003}, our proof of the Calderón-Vaillancourt Theorem for equivariant $\Psi$DOs (Theorem~\ref{equivariant_calculus:thm:Calderon_Vaillancourt}) rests on the fact that any bounded operator between (non-equivariant) magnetic Sobolev spaces that satisfies the equivariance condition~\eqref{equivariant_calculus:eqn:equivariance_condition_L2_Rd} uniquely defines a bounded \emph{equivariant} operator between \emph{equivariant} magnetic Sobolev spaces. This is one incarnation of Meta~Theorem~\ref{equivariant_calculus:meta_theorem:extension_operator_valued_to_equivariant}, which provides a general blueprint for extending other results. 

We believe that this review forms a good basis for more advanced results and generalizations. We have purposefully included the small magnetic field limit $\lambda \rightarrow 0$, which has been investigated by \eg Cornean, Iftimie and Purice \cite{Cornean_Iftimie_Purice:Peierls_substitution_magnetic_PsiDOs:2019} and by \cite{Fuerst_Lein:scaling_limits_Dirac:2008} for the non-relativistic limit of the Dirac dynamics. 

Our initial motivation for this paper came from our work on the semiclassical limit for metallic crystalline solids \cite{DeNittis_Lein_Seri:semiclassics_Bloch_electron_Fermi_surface:2021}. Having split off the technical aspects related to the pseudodifferential calculus for equivariant $\Psi$DOs allows us to focus on the ideas and physical interpretation. Moreover, should we or someone else decide to generalize the setting of \cite{DeNittis_Lein_Seri:semiclassics_Bloch_electron_Fermi_surface:2021}, then they can make use of a robust and fully developed pseudodifferential calculus. And should they need more advanced results that have not been covered here, then they only need to prove those. 

Further natural developments for the calculus could be the extension to quasicrystals or to more general $2$-cocycles that need not be the exponential of a magnetic flux. Or one could adapt the algebraic point of view developed in \cite{Mantoiu_Purice_Richard:twisted_X_products:2004,Mantoiu_Purice_Richard:Cstar_algebraic_framework:2007,Lein_Mantoiu_Richard:anisotropic_mag_pseudo:2009,Belmonte_Lein_Mantoiu:mag_twisted_actions:2010}. The latter could be rewarding not just from the point of view of applications, going beyond \emph{abelian} “coefficient $C^*$-algebras” (to use the terminology from \eg \cite{Mantoiu_Purice_Richard:twisted_X_products:2004,Lein_Mantoiu_Richard:anisotropic_mag_pseudo:2009}) introduces interesting mathematical obstacles. 
\appendix
\section{Tempered growth of group action on $H^m_A(\T^d)$} 
\label{appendix:tempered_growth_magnetic_Sobolev_spaces_torus}
An important example of a Hilbert space that is endowed with a group action $\tau$ of order $m$ is the magnetic Sobolev space $\Hil = H^m_{A_0}(\T^d)$ of the same order $m$ over the torus. 
\begin{proof}[Lemma~\ref{setting:lem:magnetic_Sobolev_norm_operator_tau}]
	The magnetic Sobolev norm we use is defined by 
	\begin{align*}
		\snorm{\psi}_{H^m_{A_0}(\T^d)} := \bnorm{\bexpval{- \ii \nabla_y - A_0(\hat{y})} \, \psi}_{L^2(\T^d)} 
		, 
	\end{align*}
	where the weight is defined through functional calculus. 
	
	For $\psi \in H^m_{A_0}(\T^d)$ we can express the Sobolev norm in terms of the usual $L^2$ norm, 
	\begin{align*}
		\bnorm{\e^{+ \ii \gamma^* \cdot \hat{y}} \psi}_{H^m_{A_0}(\T^d)} &= \Bnorm{\bexpval{- \ii \nabla_y - A_0(\hat{y})}^m \; \e^{+ \ii \gamma^* \cdot \hat{y}} \psi}_{L^2(\T^d)}
		\\
		&\leq \Bnorm{\e^{+ \ii \gamma^* \cdot \hat{y}} \, \bexpval{\gamma^* - A_0(\hat{y})}^m \, \Psi}_{L^2(\T^d)} + \Bnorm{\e^{+ \ii \gamma^* \cdot \hat{y}} \, \bexpval{- \ii \nabla_y - A_0(\hat{y})}^m \, \psi}_{L^2(\T^d)}
		\\
		&= \Bnorm{\bexpval{\gamma^* - A_0(\hat{y})}^m \, \Psi}_{L^2(\T^d)} + \Bnorm{\bexpval{- \ii \nabla_y - A_0(\hat{y})}^m \, \psi}_{L^2(\T^d)}
		\\
		&\leq \Bnorm{\bexpval{\gamma^* - A_0(\hat{y})}^m}_{\mathcal{B}(L^2(\T^d))} \; \snorm{\psi}_{L^2(\T^d)} + \snorm{\psi}_{H^m_{A_0}(\T^d)} 
		\\
		&\leq \left ( 1 + \Bnorm{\bexpval{\gamma^* - A_0(\hat{y})}^m}_{\mathcal{B}(L^2(\T^d))} \right ) \; \snorm{\psi}_{H^m_{A_0}(\T^d)} 
		. 
	\end{align*}
	After we take the supremum over $\psi \in H^m_{A_0}(\T^d)$ that are normalized with respect to the $m$th magnetic Sobolev norm, we obtain the claimed estimate: because the multiplication operators $A_{0,j}(\hat{y})$ are bounded on $L^2(\T^d)$ (the polynomial growth at $\infty$ does not matter as $\T^d$ is compact), we can estimate the first factor by 
	\begin{align*}
		1 + \Bnorm{\bexpval{\gamma^* - A_0(\hat{y})}^m}_{\mathcal{B}(L^2(\T^d))} \leq C \, \sexpval{\gamma^*}^m  
		. 
	\end{align*}
	This finishes the proof. 
\end{proof}
%

\section{Auxiliary results to extend $\Op^A$ from Schwartz functions to tempered distributions} 
\label{appendix:identification_operator_valued_Schwartz_spaces_distributions}
The most significant technical complication of Section~\ref{operator_valued_calculus} enters the extension by duality. To correctly adapt the strategy of \cite{Mantoiu_Purice:magnetic_Weyl_calculus:2004} from the scalar-valued to the operator-valued case, we have to carefully select the right Fréchet spaces and pick the right topologies. 

To make it easier for the reader to compare our results with our main resource, Treves' excellent book on topological vector spaces \cite{Treves:topological_vector_spaces:1967}, we will revert back to more standard notation. For example, we will label Hilbert spaces $\Hil_1 = \Hil$, $\Hil_2 = \Hil'$ and so forth to avoid confusing the prime with the strong or the Banach space dual $'$ later on.

\subsection{Space of trace class operators $\mathcal{L}^1(\Hil_1,\Hil_2)$ between two Hilbert spaces} 
\label{appendix:extension_by_duality_operator_valued_calculus:trace_class_operators}
In applications, we \emph{have} to consider cases where our functions and distributions take values in bounded operators between two \emph{different} Hilbert spaces. Frequently, our magnetic $\Psi$DO is defined by an \emph{unbounded}-operator-valued function; in that case $\Hil_1 = \mathcal{D} \subset \Hil_2$ is the pointwise operator domain that lies densely in the target space $\Hil_2$. 

When $\Hil_1 = \Hil_2$, our construction relies on well-known properties of $p$-Schatten classes (\cf \eg \cite[Section~VI.6]{Reed_Simon:M_cap_Phi_1:1972}). For example, the dual $\mathcal{L}^1(\Hil_1)' = \mathcal{B}(\Hil_1)$ of the trace class operators are the bounded operators and $\mathcal{L}^1(\Hil_1) \subseteq \mathcal{B}(\Hil_1)$ is a two-sided ideal with respect to multiplication. Moreover, the trace is cyclical, $\trace_{\Hil_1} (A \, B) = \trace_{\Hil_1} (B \, A)$. 

The purpose of this subsection is to furnish a definition of $\mathcal{L}^1(\Hil_1,\Hil_2)$ and convince the reader that this space acts in very much the same way and possesses the same properties as $\mathcal{L}^1(\Hil_1)$. All of the material we need can be found in \cite{Treves:topological_vector_spaces:1967}, where the topic is developed in much greater generality. When $\Hil_1$ and $\Hil_2$ are replaced by two Fréchet spaces $\mathcal{X}_1$ and $\mathcal{X}_2$, then $\mathcal{L}^1(\mathcal{X}_1,\mathcal{X}_2)$ is called the space of \emph{nuclear operators} mapping between $\mathcal{X}_1$ and $\mathcal{X}_2$. However, in our setting it makes sense to call elements of $\mathcal{L}^1(\Hil_1,\Hil_2)$ trace class: not only does $\mathcal{L}^1(\Hil_1,\Hil_1)$ coincide with the usual space of trace class operators, it inherits all essential properties of $\mathcal{L}^1(\Hil_1)$ even when $\Hil_1 \neq \Hil_2$. 

Let us offer up a definition that exploits our setting: 
\begin{definition}[$\mathcal{L}^1(\Hil_1,\Hil_2)$]\label{appendix:extension_by_duality_operator_valued_calculus:defn:trace_class_operators}
	We endow the vector space of \emph{trace class operators} 
  \begin{align*}
		\mathcal{L}^1(\Hil_1,\Hil_2) := \Bigl \{ T \in \mathcal{B}(\Hil_1,\Hil_2) \; \; \big \vert \; \; \norm{T}_{\mathcal{L}^1(\Hil_1,\Hil_2)} < \infty \Bigr \} 
	\end{align*}
	with the norm	%
	\begin{align*}
		\norm{T}_{\mathcal{L}^1(\Hil_1,\Hil_2)} := \bnorm{\sabs{T}}_{\mathcal{L}^1(\Hil_1)}
	\end{align*}
	where $\sabs{T} := \sqrt{T^* T}$ is the absolute value of $T$. 
\end{definition}
This is not the first-principles definition, because starting with that would necessitate a long — and for our purposes \emph{completely unnecessary} — preamble. Fortunately, an operator $T$ between Hilbert spaces is nuclear if and only if its modulus $\sabs{T} \in \mathcal{L}^1(\Hil_1)$ is trace class in the usual sense (\cf \cite[Theorem~48.2]{Treves:topological_vector_spaces:1967}). Hence, we have taken the liberty to replace the first-principles definition, \cite[Definition~47.2]{Treves:topological_vector_spaces:1967}, with Definition~\ref{appendix:extension_by_duality_operator_valued_calculus:defn:trace_class_operators} that only requires knowledge of standard results from functional analysis. 

A second characterization of trace class operators closely mimics the well-known case $\mathcal{L}^1(\Hil_1)$: an operator $T \in \mathcal{L}^1(\Hil_1)$ is trace class in the usual sense if and only if $T$ is the sum of rank-$1$ operators and the eigenvalues of $\sabs{T}$ are absolutely summable. The Corollary on p.~494 of \cite{Treves:topological_vector_spaces:1967} states that this is still true for elements of $\mathcal{L}^1(\Hil_1,\Hil_2)$: an operator $T$ between Hilbert spaces is nuclear if and only if there exist two (potentially finite) orthonormal sequences $\{ \varphi_{1 \, j} \}_{j \in \mathcal{I}} \subset \Hil_1$, $\mathcal{I} \subseteq \N$, and $\{ \varphi_{2 \, j} \}_{j \in \mathcal{I}} \subset \Hil_2$ as well as a sequence of complex numbers $\{ \mu_j \}_{j \in \mathcal{I}} \subset \C$ so that the action of the operator 
\begin{align*}
	T \psi_1 &= \sum_{j \in \mathcal{I}} \mu_j \, \scpro{\varphi_{1 \, j}}{\psi_1} \, \varphi_{2 \, j} 
	\\
	&\equiv \sum_{j \in \mathcal{I}} \mu_j \, \sket{\varphi_{2 \, j}}_{\Hil_2} \sbra{\varphi_{1 \, j}}_{\Hil_1} \, \psi_1 
\end{align*}
can be expressed as the linear combination of rank-$1$ operators and the coefficients 
\begin{align*}
	\sum_{j \in \mathcal{I}} \sabs{\mu_j} < \infty 
\end{align*}
are absolutely summable. This is the reason why we can use our abbreviated Definition~\ref{appendix:extension_by_duality_operator_valued_calculus:defn:trace_class_operators}. Furthermore, it explains why requiring $\sqrt{T^* \, T} \in \mathcal{L}^1(\Hil_1)$ is equivalent to $\sqrt{T \, T^*} \in \mathcal{L}^1(\Hil_2)$ — both yield the same summability condition $\{ \mu_j \}_{j \in \mathcal{I}} \in \ell^1(\mathcal{I})$ on the coefficients. 

Let us summarize some relevant properties. Essentially, the next Proposition merely states that $\mathcal{L}^1(\Hil_1,\Hil_2)$ has all the (suitably generalized) properties of $\mathcal{L}^1(\Hil_1)$. 
\begin{proposition}[Properties of $\mathcal{L}^1(\Hil_1,\Hil_2)$]\label{appendix:extension_by_duality_operator_valued_calculus:prop:properties_generalized_L1_spaces}
	\begin{enumerate}[(1)]
		\item $\mathcal{L}^1(\Hil_1,\Hil_2)$ is a Banach space. 
		\item The Hilbert space adjoint ${}^* : \mathcal{L}^1(\Hil_1,\Hil_2) \longrightarrow \mathcal{L}^1(\Hil_2,\Hil_1)$ is an isometry between Banach spaces. 
		\item $\mathcal{L}^1(\Hil_1,\Hil_2) \subseteq \mathcal{K}(\Hil_1,\Hil_2)$ is a subset of the compact operators. 
		\item The trace class property is stable under multiplication with bounded operators from the left and the right, \ie for all $A \in \mathcal{B}(\Hil_3,\Hil_4)$, $T \in \mathcal{L}^1(\Hil_2,\Hil_3)$ and $C \in \mathcal{B}(\Hil_1,\Hil_2)$ we have 
		\begin{align}
			\bnorm{A \, T C}_{\mathcal{L}^1(\Hil_1,\Hil_4)} &\leq \snorm{A}_{\mathcal{B}(\Hil_3,\Hil_4)} \, \snorm{T}_{\mathcal{L}^1(\Hil_2,\Hil_3)} \, \snorm{C}_{\mathcal{B}(\Hil_1,\Hil_2)} 
			, 
			\label{appendix:extension_by_duality_operator_valued_calculus:eqn:trace_norm_estimate_product}
		\end{align}
		which in particular implies $A \, T C \in \mathcal{L}^1(\Hil_1,\Hil_4)$. 
		\item The trace is cyclical in the sense that 
		\begin{align}
			\trace_{\Hil_2} \bigl ( A \, B \bigr ) &= \trace_{\Hil_1} \bigl ( B \, A \bigr ) 
			\label{appendix:extension_by_duality_operator_valued_calculus:eqn:cyclicity_trace}
		\end{align}
		holds whenever \emph{at least one} of the operators $A \in \mathcal{B}(\Hil_1,\Hil_2)$ and $B \in \mathcal{B}(\Hil_2,\Hil_1)$ is trace class. 
		\item The $\eps$-tensor product (or \emph{injective} tensor product) of two Hilbert spaces 
		\begin{align*}
			\Hil_1 \otimes_{\eps} \Hil_2 &= \mathcal{K}(\Hil_1,\Hil_2) 
		\end{align*}
		equals the compact operators $\mathcal{K}(\Hil_1,\Hil_2)$ mapping between them. 
		\item The $\pi$-tensor product (or \emph{projective} tensor product) of two Hilbert spaces 
		\begin{align*}
			\Hil_1 \otimes_{\pi} \Hil_2 &= \mathcal{L}^1(\Hil_1,\Hil_2) 
		\end{align*}
		equals the trace class operators $\mathcal{L}^1(\Hil_1,\Hil_2)$ mapping between them. 
		\item The dual of the trace class operators are the bounded operators, 
		\begin{align*}
			\mathcal{L}^1(\Hil_1,\Hil_2)^* = \mathcal{B}(\Hil_1,\Hil_2)
			, 
		\end{align*}
		where the ${}^{\ast}$ indicates that we have used the sesquilinear duality bracket  
		\begin{align*}
			\scpro{A}{T}_{\mathcal{L}^1(\Hil_1,\Hil_2)} &= \trace_{\Hil_1} \bigl ( A^* \, T \bigr )
			\\
			&= \overline{\scpro{T}{A}_{\mathcal{L}^1(\Hil_2,\Hil_1)}}
			&&
			\forall A \in \mathcal{B}(\Hil_1,\Hil_2) 
			, \; 
			T \in \mathcal{L}^1(\Hil_1,\Hil_2) 
			. 
		\end{align*}
	\end{enumerate}
\end{proposition}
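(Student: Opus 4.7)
The plan is to prove the eight items in a logical sequence that mirrors the classical development of $\mathcal{L}^1(\Hil_1)$, leveraging the singular value decomposition discussed right before the statement: any $T \in \mathcal{L}^1(\Hil_1,\Hil_2)$ admits a representation $T = \sum_{j \in \mathcal{I}} \mu_j \, \sket{\varphi_{2\,j}}_{\Hil_2} \sbra{\varphi_{1\,j}}_{\Hil_1}$ with $\{\mu_j\}_{j \in \mathcal{I}} \in \ell^1(\mathcal{I})$ and orthonormal sequences in $\Hil_1$ and $\Hil_2$; moreover, $\snorm{T}_{\mathcal{L}^1(\Hil_1,\Hil_2)} = \sum_{j \in \mathcal{I}} \sabs{\mu_j}$ because the $\sabs{\mu_j}$ are precisely the eigenvalues of $\sabs{T} \in \mathcal{L}^1(\Hil_1)$. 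Items~(6) and (7) are just restatements of standard identifications (\cite[Chapter~48]{Treves:topological_vector_spaces:1967}, in particular Theorem~48.2 and the corollary on p.~494), so I would state them with a pointer to the literature rather than reproving them.

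First I would prove (2), because an isometric adjoint will simplify much of what follows: from the singular value decomposition above, $T^* = \sum_{j \in \mathcal{I}} \bar{\mu}_j \, \sket{\varphi_{1\,j}}_{\Hil_1} \sbra{\varphi_{2\,j}}_{\Hil_2}$ has the same absolute coefficients $\sabs{\mu_j}$, so $\snorm{T^*}_{\mathcal{L}^1(\Hil_2,\Hil_1)} = \sum_j \sabs{\mu_j} = \snorm{T}_{\mathcal{L}^1(\Hil_1,\Hil_2)}$. Item~(3) then follows immediately by truncating the singular value decomposition: the partial sums $T_N := \sum_{j \leq N} \mu_j \, \sket{\varphi_{2\,j}}_{\Hil_2} \sbra{\varphi_{1\,j}}_{\Hil_1}$ are finite-rank and converge to $T$ in operator norm (in fact in trace norm), and the set of compact operators is operator-norm closed. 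For item~(1), I would verify the triangle inequality by combining the singular value expansions and exploiting that the trace norm on $\mathcal{L}^1(\Hil_1)$ is already known to be a norm, and prove completeness by a Cauchy-in-trace-norm sequence $\{T_n\}$ giving both a bounded-operator limit $T$ (since $\snorm{\cdot}_{\mathcal{B}} \leq \snorm{\cdot}_{\mathcal{L}^1}$) and an $\ell^1$-limit of singular coefficients, then checking $T \in \mathcal{L}^1(\Hil_1,\Hil_2)$ with $T_n \to T$ in trace norm via a diagonal argument on the rank-$1$ summands. Alternatively, this is \cite[Proposition~47.2]{Treves:topological_vector_spaces:1967}.

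For (4), I would use the polar decomposition $T = U \sabs{T}$ with $U : \Hil_1 \to \Hil_2$ a partial isometry; then $A T C = (A U) \sabs{T} \, C$ and inserting $\sabs{T} = \sqrt{\sabs{T}} \sqrt{\sabs{T}}$ permits writing $A T C$ as a composition where the trace norm on $\Hil_1$ can be estimated by the cyclicity in the single-Hilbert-space case together with standard inequalities $\snorm{B \, S}_{\mathcal{L}^1(\Hil_1)} \leq \snorm{B}_{\mathcal{B}(\Hil_1)} \snorm{S}_{\mathcal{L}^1(\Hil_1)}$. Equivalently, and perhaps cleaner, I would apply the rank-$1$ expansion directly: $\snorm{A T C}_{\mathcal{L}^1} \leq \sum_j \sabs{\mu_j} \, \snorm{A \varphi_{2\,j}}_{\Hil_4} \, \snorm{C^* \varphi_{1\,j}}_{\Hil_2} \leq \snorm{A}_{\mathcal{B}} \snorm{C}_{\mathcal{B}} \sum_j \sabs{\mu_j}$. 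Item~(5) I would establish by first reducing to the case where $A$ is trace class (the other case follows by taking adjoints and using (2)), writing $A = \sum_j \mu_j \sket{\varphi_{2\,j}}_{\Hil_2} \sbra{\varphi_{1\,j}}_{\Hil_1}$, and then exchanging the absolutely convergent sums that define $\trace_{\Hil_2}(AB)$ and $\trace_{\Hil_1}(BA)$ to get both equal to $\sum_j \mu_j \sscpro{\varphi_{1\,j}}{B \varphi_{2\,j}}_{\Hil_1}$; the dominated convergence argument uses $\snorm{B}_{\mathcal{B}}$ and $\{\mu_j\} \in \ell^1$.

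Finally, for (8) I would set up the duality pairing $\scpro{A}{T}_{\mathcal{L}^1} := \trace_{\Hil_1}(A^* T)$, show that it is well-defined and bounded by $\snorm{A}_{\mathcal{B}(\Hil_1,\Hil_2)} \snorm{T}_{\mathcal{L}^1(\Hil_1,\Hil_2)}$ using (4) and (5), and then prove surjectivity of the induced map $\mathcal{B}(\Hil_1,\Hil_2) \to \mathcal{L}^1(\Hil_1,\Hil_2)^*$: given a continuous linear functional $\ell$, test it against rank-$1$ operators $\sket{\psi_2}_{\Hil_2} \sbra{\psi_1}_{\Hil_1}$ to define a sesquilinear form $b(\psi_2,\psi_1) := \ell \bigl ( \sket{\psi_2}_{\Hil_2} \sbra{\psi_1}_{\Hil_1} \bigr )$ bounded by $\snorm{\ell} \, \snorm{\psi_1}_{\Hil_1} \snorm{\psi_2}_{\Hil_2}$; Riesz representation yields $A \in \mathcal{B}(\Hil_1,\Hil_2)$ with $b(\psi_2,\psi_1) = \sscpro{A \psi_1}{\psi_2}_{\Hil_2}$, and the density of finite-rank operators proven along the way in item~(3) shows $\ell(T) = \scpro{A}{T}_{\mathcal{L}^1}$ for all $T$. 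I expect the only genuinely delicate step to be the duality in (8), specifically demonstrating that the bounded operator recovered from the functional by Riesz representation actually represents the functional on \emph{all} of $\mathcal{L}^1(\Hil_1,\Hil_2)$; this requires the trace-norm density of finite-rank operators (a corollary of the explicit singular value decomposition and truncation argument I set up in (3)) together with the norm estimate from (4). The isometry $\snorm{\ell} = \snorm{A}_{\mathcal{B}}$ is then a standard consequence.
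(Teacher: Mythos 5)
Your plan is correct, but it takes a genuinely different route from the paper. The paper's proof is almost entirely a sequence of citations into Treves' theory of nuclear operators — Proposition~47.1 for item~(4), Proposition~47.3 for (3), Proposition~47.5 and reflexivity for (2), Theorems~48.3, 48.4 and 48.5' for (6)–(8) — with only one self-contained argument: for the cyclicity~(5) the paper embeds $A$ and $B$ as off-diagonal blocks $\tilde{A}$, $\tilde{B}$ on $\Hil_1 \oplus \Hil_2$ and reduces to cyclicity of the trace on a single Hilbert space. You instead rebuild everything from the singular value decomposition $T = \sum_j \mu_j \, \sopro{\varphi_{2\,j}}{\varphi_{1\,j}}$ with $\{\mu_j\} \in \ell^1$, which makes (2), (3), (4), (5) and the duality (8) elementary and self-contained; your sum-exchange argument for (5) and the adjoint reduction are both sound, and your Riesz-representation proof of (8) is the standard one. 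What the paper's approach buys is brevity and a uniform framework that generalizes beyond Hilbert spaces; what yours buys is a reader who does not need Treves open on the desk. One point to make explicit in item~(4): the operators $\sopro{A \varphi_{2\,j}}{C^* \varphi_{1\,j}}$ are no longer built from orthonormal sequences, so the estimate $\snorm{A \, T C}_{\mathcal{L}^1} \leq \sum_j \sabs{\mu_j} \, \snorm{A \varphi_{2\,j}}_{\Hil_4} \, \snorm{C^* \varphi_{1\,j}}_{\Hil_2}$ relies on the trace norm being dominated by $\sum_j \snorm{u_j} \snorm{v_j}$ over \emph{arbitrary} rank-$1$ representations (the projective tensor norm characterization underlying item~(7)), not merely on the SVD identity $\snorm{T}_{\mathcal{L}^1} = \sum_j \sabs{\mu_j}$; since you take (7) from the literature this is not circular, but it should be stated.
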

Item~(2) can be succinctly understood as the natural generalization of the two-sided ideal property of $\mathcal{L}^1(\Hil_1)$. 

Moreover, the two tensor products can be viewed as completions of the algebraic tensor product $\Hil_1 \odot \Hil_2$ with respect to different topologies, \ie \emph{finite} linear combinations of rank-$1$ operators 
\begin{align*}
	\Hil_1 \times \Hil_2 \ni (\psi_1,\psi_2) \mapsto \sket{\psi_2}_{\Hil_2} \sbra{\psi_1}_{\Hil_1} 
	. 
\end{align*}
\begin{proof}
	\begin{enumerate}[(1)]
		\item The first principles definition, \cite[Definition~47.2]{Treves:topological_vector_spaces:1967}, declares $\mathcal{L}^1(\Hil_1,\Hil_2) \cong \Hil_1 \otimes_{\eps} \Hil_2 / N$ as the quotient of the Banach space $\Hil_1 \otimes_{\eps} \Hil_2$ with a closed subspace $N$ (\cf the first paragraph on \cite[p.~479]{Treves:topological_vector_spaces:1967}). Thus, it is a Banach space. 
		\item For the Banach space adjoint the relevant statement can be found as \cite[Proposition~47.5]{Treves:topological_vector_spaces:1967} and its Corollary on p.~484. Given that Hilbert spaces are reflexive, these facts immediately extend to the Hilbert space adjoint. 
		\item Any nuclear operator is compact (\cf \cite[Proposition~47.3]{Treves:topological_vector_spaces:1967}). 
		\item This is the content of \cite[Proposition~47.1]{Treves:topological_vector_spaces:1967}. 
		\item Here, we use a trick by identifying $A \in \mathcal{B}(\Hil_1,\Hil_2)$ and $B \in \mathcal{B}(\Hil_2,\Hil_1)$ with the block operators 
		\begin{align*}
			\tilde{A} := \left (
			\begin{matrix}
				0 & 0 \\
				A & 0 \\
			\end{matrix}
			\right )
			, 
			\quad
			\tilde{B} := \left (
			\begin{matrix}
				0 & B \\
				0 & 0 \\
			\end{matrix}
			\right )
			, 
		\end{align*}
		on the direct sum $\Hil_1 \oplus \Hil_2$. We can then use the cyclicity of the trace on $\Hil_1 \oplus \Hil_2$ to deduce 
		\begin{align*}
			\trace_{\Hil_2} \bigl ( A \, B \bigr ) &= \trace_{\Hil_1 \oplus \Hil_2} \bigl ( \tilde{A} \, \tilde{B} \bigr ) 
			= \trace_{\Hil_1 \oplus \Hil_2} \bigl ( \tilde{B} \, \tilde{A} \bigr ) 
			\\
			&= \trace_{\Hil_1} \bigl ( B \, A \bigr ) 
			. 
		\end{align*}
		\item We refer to \cite[Theorem~48.3]{Treves:topological_vector_spaces:1967}. 
		\item This follows from \cite[Theorem~48.4]{Treves:topological_vector_spaces:1967} and the arguments in the paragraph preceding Theorem~48.4 in the reference. 
    \item The reader may find the claim as \cite[Theorem~48.5']{Treves:topological_vector_spaces:1967}. 
	\end{enumerate}
\end{proof}
For the sake of completeness, we give the straight-forward extension of Definition~\ref{appendix:extension_by_duality_operator_valued_calculus:defn:trace_class_operators} to $p$-Schatten class operators. 
\begin{definition}[$\mathcal{L}^p(\Hil_1,\Hil_2)$]\label{appendix:extension_by_duality_operator_valued_calculus:defn:Schatten_class_operators}
	For any $1 \leq p < \infty$ we endow the vector space of \emph{$p$-Schatten class operators} 
	\begin{align*}
		\mathcal{L}^p(\Hil_1,\Hil_2) := \Bigl \{ T \in \mathcal{B}(\Hil_1,\Hil_2) \; \; \big \vert \; \; \norm{T}_{\mathcal{L}^p(\Hil_1,\Hil_2)} < \infty \Bigr \} 
	\end{align*}
	with the norm	%
	\begin{align*}
		\norm{T}_{\mathcal{L}^p(\Hil_1,\Hil_2)} := \bnorm{\sabs{T}}_{\mathcal{L}^p(\Hil_1)}
		= \bigl ( \trace_{\Hil_1} \sabs{T}^p \bigr )^{\nicefrac{1}{p}} 
	\end{align*}
	where $\sabs{T} := \sqrt{T^* T}$ is the absolute value of $T$. 
\end{definition}
%

\subsection{Proof of Proposition~\ref{operator_valued_calculus:prop:extension_OpA_tempered_distributions}: extension of $\Op^A$ by duality} 
\label{appendix:extension_by_duality_operator_valued_calculus:proof_central_proposition_extension}
To allow us to refer to the relevant results from \cite{Treves:topological_vector_spaces:1967} more directly, in this subsection we will work with the \emph{strong} or the \emph{Banach space dual} ${}'$ rather than the dual ${}^*$ obtained from a sesquilinear duality bracket. We leave it to the reader to restore the notation from Section~\ref{operator_valued_calculus} at the end. 

Many of the technical results we will need are summarized in Proposition~\ref{appendix:extension_by_duality_operator_valued_calculus:prop:properties_generalized_L1_spaces}. Item~(8) justifies why looking at trace class operators between $\Hil_1$ and $\Hil_2$ is the right Banach space to start with — its dual are the bounded operators. Moreover, item~(7) connects $\mathcal{L}^1(\Hil_1,\Hil_2) \cong \Hil_1 \otimes_{\pi} \Hil_2$ to \emph{a} tensor product. 

As we have emphasized several times in the main body of the text, dealing with the subtleties of tensor products is one of the key technical challenges. For example, even though for \emph{most} of the proof, our choice of tensor product does not matter — \emph{most} of the spaces are nuclear and for nuclear spaces injective and projective tensor products coincide (\cf \cite[Theorem~50.1]{Treves:topological_vector_spaces:1967}). In one key step, though, it \emph{does:} the tensor products of $\Hil_1$ and $\Hil_2$ in Proposition~\ref{appendix:extension_by_duality_operator_valued_calculus:prop:properties_generalized_L1_spaces}~(6) and (7) yield different spaces of operators. Sussing out exactly \emph{where} we need to make a choice of tensor product in our arguments and why is \emph{very} subtle. Ultimately, it can be traced to the fact that the $\pi$-tensor product is always \emph{associative} whereas the $\eps$-tensor product is \emph{not} (\cf \eg the paragraph below Definition~2.7 in \cite{Dabrowski_Kerjean:models_linear_logic_Schwartz_eps_product:2019}). 
\begin{proof}[Proposition~\ref{operator_valued_calculus:prop:extension_OpA_tempered_distributions}]
	\begin{enumerate}[(1)]
		\item First of all, all topological vector spaces involved are derived from Fréchet spaces or their duals. Moreover, the vector spaces $\Schwartz(\R^d)$ and $\Schwartz'(\R^d)$ are nuclear (\cf \cite[p.~530, Corollary]{Treves:topological_vector_spaces:1967}), and therefore the completions $\Schwartz(\R^d) \otimes \Hil_j$, $j = 1 , 2$, of the corresponding algebraic tensor products are uniquely defined and we may identify them with $\Schwartz(\R^d) \otimes \Hil_j \cong \Schwartz(\R^d,\Hil_j)$ (\cf \cite[p.~533]{Treves:topological_vector_spaces:1967}). Owing to the nuclearity of these spaces we have the freedom of interpreting $\otimes$ as the projective tensor product $\otimes_{\pi}$ and to use its algebraic properties in the proof. 
		
      Moreover, the canonical identification of $\Hil_j$ and its Banach space dual $\Hil_j' \cong \Hil_j$ via Riesz Lemma leads to an identification of the dual space $\Schwartz(\R^d,\Hil_j)' \cong \Schwartz'(\R^d,\Hil_j)$.
		
		We will now start with the target space in Proposition~\ref{operator_valued_calculus:prop:extension_OpA_tempered_distributions}~(1) and transform it to the initial space on the left-hand side. The first identification comes courtesy of \cite[equation~(50.17)]{Treves:topological_vector_spaces:1967}, namely 
		\begin{align*}
			\mathcal{B} \bigl ( \Schwartz'(\R^d,\Hil_1) , \Schwartz(\R^d,\Hil_2) \bigr ) &\cong \mathcal{B} \bigl ( \Schwartz(\R^d,\Hil_1)' , \Schwartz(\R^d,\Hil_2) \bigr ) 
			\\
			&\cong \Schwartz(\R^d,\Hil_1) \otimes \Schwartz(\R^d,\Hil_2) 
			. 
		\end{align*}
		In order to group the factors of the tensor product, let us insert $\pi$ back for emphasis so that we can use the the associativity and the commutativity of the projective tensor product, 
		\begin{align*}
			\bigl ( \Schwartz(\R^d) \otimes \Hil_1 \bigr ) \otimes \bigl ( \Schwartz(\R^d) \otimes \Hil_2 \bigr ) &= \bigl ( \Schwartz(\R^d) \otimes_{\pi} \Hil_1 \bigr ) \otimes_{\pi} \bigl ( \Schwartz(\R^d) \otimes_{\pi} \Hil_2 \bigr ) 
			\\
			&\cong \Schwartz(\R^d) \otimes_{\pi} \Bigl ( \bigl ( \Hil_1 \otimes_{\pi} \Schwartz(\R^d) \bigr ) \otimes_{\pi} \Hil_2 \Bigr ) 
			\\
			&\cong \bigl ( \Schwartz( \R^d) \otimes_{\pi} \Schwartz( \R^d) \bigr ) \otimes_{\pi} \bigl ( \Hil_1 \otimes_{\pi} \Hil_2 \bigr ) 
			. 
		\end{align*}
		After invoking \cite[Theorem~51.6]{Treves:topological_vector_spaces:1967} and identifying $\R^{2d} \cong T^* \R^d$ with the cotangent bundle, the first factor reduces to $\Schwartz(T^* \R^d)$. For the second, we make use of Proposition~\ref{appendix:extension_by_duality_operator_valued_calculus:prop:properties_generalized_L1_spaces}~(7), which leads to the first identification in 
		\begin{align*}
			\Schwartz(T^* \R^d) \otimes_{\pi} \bigl ( \Hil_1 \otimes_{\pi} \Hil_2 \bigr ) &\cong  \Schwartz(T^* \R^d) \otimes \mathcal{L}^1(\Hil_1,\Hil_2) 
			\\
			&\cong  \Schwartz \bigl ( T^* \R^d, \mathcal{L}^1(\Hil_1,\Hil_2) \bigr ) 
			. 
		\end{align*}
		The last isomorphism that pulls in $\mathcal{L}^1(\Hil_1,\Hil_2)$ is again taken from \cite[p.~533]{Treves:topological_vector_spaces:1967}. This then proves item~(1). 
		\item We will build on some of the arguments from the proof of (1). The central identification is the last equality on \cite[p.~534]{Treves:topological_vector_spaces:1967}, 
		\begin{align*}
			\Schwartz' \bigl ( T^* \R^d , \mathcal{B}(\Hil_1,\Hil_2) \bigr ) &\cong \Schwartz' \bigl ( T^* \R^d , \mathcal{L}^1(\Hil_1,\Hil_2)' \bigr ) 
			\\
			&\cong \Bigl ( \Schwartz \bigl ( T^* \R^d , \mathcal{L}^1(\Hil_1,\Hil_2) \bigr ) \Bigr )' 
			.
		\end{align*}
		Note that the position of the dual ${}'$ is essential, and the additional pair of brackets in the last line are merely for emphasis. 
		In the proof of item~(1) we have shown that we may identify 
		\begin{align*}
			\Schwartz \bigl ( T^* \R^d , \mathcal{L}^1(\Hil_1,\Hil_2) \bigr ) \cong \Schwartz(\R^d,\Hil_1) \otimes \Schwartz(\R^d,\Hil_2) 
		\end{align*}
		with \emph{the} tensor product of two Hilbert space-valued Schwartz functions; we may write \emph{the} tensor product since Schwartz functions are nuclear and all tensor products are one and the same. 
		
		But the dual of the tensor product is nothing but the tensor product of the duals (\cf \cite[equation~(50.19)]{Treves:topological_vector_spaces:1967}), 
		\begin{align*}
			\Bigl ( \Schwartz \bigl ( T^* \R^d , \mathcal{L}^1(\Hil_1,\Hil_2) \bigr ) \Bigr )' &\cong \Schwartz(\R^d,\Hil_1)' \otimes \Schwartz(\R^d,\Hil_2)' 
			\\
			&\cong \Schwartz'(\R^d,\Hil_1) \otimes \Schwartz'(\R^d,\Hil_2) 
			. 
		\end{align*}
		According to the first displayed equation on \cite[p.~525]{Treves:topological_vector_spaces:1967}, the right-hand side can be seen as 
		\begin{align*}
			\Schwartz'(\R^d,\Hil_1) \otimes \Schwartz'(\R^d,\Hil_2) &\cong \mathcal{B} \bigl ( \Schwartz(\R^d,\Hil_1) \, , \, \Schwartz'(\R^d,\Hil_2) \bigr ) 
			. 
		\end{align*}
		Thus, we have proven item~(2). 
		\item We need to prove gauge-covariance. Assumption~\ref{operator_valued_calculus:assumption:polynomially_bounded_B} ensures that the gauge function $x \mapsto \e^{+ \ii \lambda \vartheta(\eps x)} \in \Cont^{\infty}_{\mathrm{pol}}(\R^d,\C)$ is a smooth, bounded function with polynomially bounded derivatives. 
		
		Consequently, multiplication with $\e^{+ \ii \lambda \vartheta(\eps x)}$ defines a continuous map 
		\begin{align*}
			x \mapsto \e^{+ \ii \lambda \vartheta(\eps x)} : \Schwartz(\R^d,\Hil_j) \longrightarrow \Schwartz(\R^d,\Hil_j) 
			, 
			&&
			j = 1 , 2 
			, 
		\end{align*}
		between Schwartz spaces, which extends by duality to a continuous map on the tempered distributions. 
		
		Thus, no matter if we view $f \mapsto \Op^A(f)$ as maps on $\Schwartz \bigl ( T^* \R^d , \mathcal{L}^1(\Hil_1,\Hil_2) \bigr )$ or on $\Schwartz' \bigl ( T^* \R^d , \mathcal{B}(\Hil_1,\Hil_2) \bigr )$, the right-hand side of equation~\eqref{operator_valued_calculus:eqn:covariance_Op_A} is the concatenation of three continuous linear maps. Hence, the right-hand side is continuous. 
		
		Equality of left- and right-hand side follows directly for Schwartz functions and extends by duality to tempered distributions. Therefore, the maps from item~(1) and (2) inherit gauge-covariance~\eqref{operator_valued_calculus:eqn:covariance_Op_A}. This finishes the proof. 
	\end{enumerate}
\end{proof}
%

\section{Characterizing equivariant operators} 
\label{appendix:equivariant_operators}
The central piece to extending the magnetic pseudodifferential calculus to \emph{equivariant} operator-valued symbols is Proposition~\ref{equivariant_calculus:prop:characterization_equivariant_operators}. The main point of this appendix is to prove it rigorously. However, it turns out that in addition to the three equivalent points of views, we need a fourth one, which corresponds to another identification of $L^2_{\eq}(\R^d,\Hil)$ with a Hilbert space over the Brillouin zone $\BZ_{\gamma^*} := \BZ - \gamma^*$ that has been shifted by $\gamma^* \in \Gamma^*$. For consistency, we omit the index when the fundamental cell $\BZ = \BZ_0$ is centered at the origin $0 \in \Gamma^*$.

\subsection{Relevant Hilbert spaces and identifications between them} 
\label{appendix:equivariant_operators:hilbert_spaces}
The first three relevant spaces have already been introduced, $L^2_{\eq}(\R^d,\Hil)$, $L^2(\BZ,\Hil)$ and $L^2(\R^d,\Hil)$. The fourth is the Hilbert space $\mathfrak{h}_{\gamma^*}(\Hil)$ defined for any $\gamma^* \in \Gamma^*$ as the \emph{Banach} space $L^2(\BZ_{\gamma^*},\Hil)$ endowed with the weighted scalar product 
\begin{align*}
	\scpro{\varphi_{\gamma^*}}{\psi_{\gamma^*}}_{\mathfrak{h}_{\gamma^*}(\Hil)} := \Bscpro{\bigl ( \id_{L^2(\BZ_{\gamma^*})} \otimes \tau(\gamma^*)^{-1} \bigr ) \, \varphi_{\gamma^*} \, }{ \, \bigl ( \id_{L^2(\BZ_{\gamma^*})} \otimes \tau(\gamma^*)^{-1} \bigr ) \, \psi_{\gamma^*}}_{L^2(\BZ_{\gamma^*},\Hil)} 
	. 
\end{align*}
Since $\tau(0) = \id_{\Hil}$, the \emph{Hilbert} spaces $\mathfrak{h}_0(\Hil) = L^2(\BZ_0,\Hil) \equiv L^2(\BZ,\Hil)$ agree. The presence of the weight ensures $L^2_{\eq}(\R^d,\Hil)$ is \emph{unitarily} equivalent to $\mathfrak{h}_{\gamma^*}(\Hil)$ for \emph{any} $\gamma^* \in \Gamma^*$. 

Let us list and prove the relations between $\mathfrak{h}_{\gamma^*}(\Hil)$, $L^2(\BZ_{\gamma^*},\Hil)$ and $L^2_{\eq}(\R^d,\Hil)$. We emphasize that we will continue to tacitly assume that Assumption~\ref{equivariant_calculus:assumption:setting} is satisfied, \ie all Hilbert spaces are separable and all group actions have at most polynomial growth. 
\begin{lemma}\label{appendix:equivariant_operators:lem:weighted_Hilbert_spaces_translated_Brillouin_zones}
	Let $\gamma^* , \tilde{\gamma}^* \in \Gamma^*$ be two reciprocal lattice vectors. 
	\begin{enumerate}[(1)]
		\item For $\gamma^* = 0$ the \emph{Hilbert} spaces $\mathfrak{h}_0(\Hil) = L^2(\BZ_0,\Hil) \equiv L^2(\BZ,\Hil)$ all coincide. 
		\item $\mathfrak{h}_{\gamma^*}(\Hil)$ and $L^2(\BZ_{\gamma^*},\Hil)$ agree as \emph{Banach} spaces. 
		\item The linear map 
		\begin{align*}
			U_{\gamma^*} : \, &L^2_{\eq}(\R^d,\Hil) \longrightarrow \mathfrak{h}_{\gamma^*}(\Hil)
			, 
			\\
			&\psi \mapsto \psi \vert_{\BZ_{\gamma^*}} 
			, 
		\end{align*}
		is a unitary between Hilbert spaces. Furthermore, for almost all $k \in \BZ_{\gamma^*}$ we have 
		\begin{align*}
			(U_{\gamma^*} \psi)(k) = \psi(k) = \tau(\gamma^*) \, \psi(k + \gamma^*) 
			. 
		\end{align*}
		\item The composition 
		\begin{align*}
			U_{\gamma^*,\tilde{\gamma}^*} := U_{\gamma^*} \, U_{\tilde{\gamma}^*}^{-1} : \mathfrak{h}_{\tilde{\gamma}^*}(\Hil) \longrightarrow \mathfrak{h}_{\gamma^*}(\Hil) 
		\end{align*}
		is a unitary that acts on $\psi_{\tilde{\gamma}^*} \in \mathfrak{h}_{\tilde{\gamma}^*}(\Hil)$ as 
		\begin{align*}
			\bigl ( U_{\gamma^*,\tilde{\gamma}^*} \psi_{\tilde{\gamma}^*} \bigr )(k) &= \tau(\gamma^*) \, \tau(\tilde{\gamma}^*)^{-1} \, \psi_{\tilde{\gamma}^*}(k + \gamma^* - \tilde{\gamma}^*)
			\\
			&= \tau(\tilde{\gamma}^* - \gamma^*)^{-1} \, \psi_{\tilde{\gamma}^*}(k + \gamma^* - \tilde{\gamma}^*)
		\end{align*}
		for almost all $k \in \BZ_{\gamma^*}$. 
	\end{enumerate}
\end{lemma}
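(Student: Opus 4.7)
The plan is to prove the four items in order, treating items (1) and (2) as essentially definitional warmups, devoting most care to item (3), and reducing item (4) to (3) by direct computation.

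For item (1), I would simply observe that $\tau(0) = \id_\Hil$ since $\tau$ is a group representation, so the weight $\id_{L^2(\BZ_0)} \otimes \tau(0)^{-1}$ is the identity and $\BZ_0 = \BZ$; hence the two scalar products coincide verbatim. For item (2), the operator $\id_{L^2(\BZ_{\gamma^*})} \otimes \tau(\gamma^*)^{-1}$ is bounded and boundedly invertible on $L^2(\BZ_{\gamma^*},\Hil)$ because $\tau(\gamma^*) \in \mathcal{GL}(\Hil)$. Consequently the weighted scalar product defines a norm equivalent to the standard $L^2$-norm, so $\mathfrak{h}_{\gamma^*}(\Hil)$ and $L^2(\BZ_{\gamma^*},\Hil)$ coincide as Banach spaces (though they differ as Hilbert spaces when $\gamma^* \neq 0$).

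The heart of the lemma is item (3). The restriction map $U_{\gamma^*}$ is plainly linear and well-defined as a map into $L^2(\BZ_{\gamma^*},\Hil) = \mathfrak{h}_{\gamma^*}(\Hil)$. Its injectivity and surjectivity come from the equivariance: any $\psi \in L^2_{\eq}(\R^d,\Hil)$ is determined on all translates $\BZ_{\gamma^*} + \tilde\gamma^*$, $\tilde\gamma^* \in \Gamma^*$, by $\psi\vert_{\BZ_{\gamma^*}}$ via $\psi(k - \tilde\gamma^*) = \tau(\tilde\gamma^*)\psi(k)$, and conversely this formula consistently extends any $\psi_{\gamma^*} \in L^2(\BZ_{\gamma^*},\Hil)$ to an $L^2_{\mathrm{loc}}$-function on $\R^d$ satisfying the equivariance condition (each $\tau(\tilde\gamma^*)$ preserves $\Hil$-valued $L^2$ locally). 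For the isometry, starting from $k \in \BZ_{\gamma^*}$ the equivariance gives $\psi(k) = \tau(\gamma^*)\psi(k+\gamma^*)$, so $\tau(\gamma^*)^{-1}\psi(k) = \psi(k+\gamma^*)$ and
\begin{align*}
\snorm{U_{\gamma^*}\psi}^2_{\mathfrak{h}_{\gamma^*}(\Hil)} = \int_{\BZ_{\gamma^*}} \snorm{\psi(k+\gamma^*)}^2_{\Hil}\,\dd k = \int_{\BZ} \snorm{\psi(k')}^2_{\Hil}\,\dd k' = \snorm{\psi}^2_{L^2_{\eq}(\R^d,\Hil)}
\end{align*}
by a translation in $k$. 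This both yields the second equality in the displayed formula and establishes that $U_{\gamma^*}$ is unitary.

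For item (4), unitarity of $U_{\gamma^*,\tilde\gamma^*} = U_{\gamma^*}\,U_{\tilde\gamma^*}^{-1}$ is automatic. To obtain the explicit formula, let $\psi := U_{\tilde\gamma^*}^{-1}\psi_{\tilde\gamma^*} \in L^2_{\eq}(\R^d,\Hil)$, so that $\psi\vert_{\BZ_{\tilde\gamma^*}} = \psi_{\tilde\gamma^*}$ and $(U_{\gamma^*,\tilde\gamma^*}\psi_{\tilde\gamma^*})(k) = \psi(k)$ for $k \in \BZ_{\gamma^*}$. Setting $\delta := \tilde\gamma^* - \gamma^*$, note that $k - \delta = k + \gamma^* - \tilde\gamma^* \in \BZ_{\tilde\gamma^*}$, and applying the equivariance with reciprocal lattice vector $\delta$ gives $\psi(k - \delta) = \tau(\delta)\psi(k)$, hence $\psi(k) = \tau(\tilde\gamma^* - \gamma^*)^{-1}\psi_{\tilde\gamma^*}(k + \gamma^* - \tilde\gamma^*)$. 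The identity $\tau(\tilde\gamma^* - \gamma^*)^{-1} = \tau(\gamma^*)\tau(\tilde\gamma^*)^{-1}$ follows from $\tau$ being a group homomorphism. The main obstacle is the care required in item (3): verifying that the pointwise equivariant extension genuinely produces an $L^2_{\mathrm{loc}}$-function and that the weighted norm on $\mathfrak{h}_{\gamma^*}(\Hil)$ is precisely calibrated to absorb the operator $\tau(\gamma^*)$ — this is the whole reason for introducing the weight in the first place, and it is what makes the unitaries $U_{\gamma^*}$ available for every $\gamma^*$ simultaneously.
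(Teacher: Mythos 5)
Your proposal is correct and follows essentially the same route as the paper's proof: item (1) from $\tau(0)=\id_{\Hil}$, item (2) from boundedness of $\tau(\gamma^*)^{\pm 1}$, item (3) by the change-of-variables computation that the weight $\tau(\gamma^*)^{-1}$ exactly cancels the factor produced by equivariance, and item (4) by composing unitaries and unwinding the equivariance condition. Your treatment of (3) and (4) is somewhat more explicit than the paper's (which simply asserts surjectivity and that the formula in (4) "follows from (3)"), but the argument is the same.
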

\begin{proof}
	\begin{enumerate}[(1)]
		\item This follows from $\tau(0) = \id_{\Hil}$ and $\BZ_0 = \BZ$. 
		\item The norms of $\mathfrak{h}_{\gamma^*}(\Hil)$ and $L^2(\BZ_{\gamma^*},\Hil)$ are equivalent, 
		\begin{align}
			\bnorm{\tau(\gamma^*)}_{\mathcal{B}(\Hil)}^{-1} \, \bnorm{\psi_{\gamma^*}}_{L^2(\BZ_{\gamma^*},\Hil)} \leq \bnorm{\psi_{\gamma^*}}_{\mathfrak{h}_{\gamma^*}(\Hil)} 
			\leq \bnorm{\tau(\gamma^*)^{-1}}_{\mathcal{B}(\Hil)} \, \bnorm{\psi_{\gamma^*}}_{L^2(\BZ_{\gamma^*},\Hil)} 
			, 
			\label{appendix:equivariant_operators:eqn:equivalence:h_gammaStar_L2_BZ_gammaStar_norms}
		\end{align}
		because both, $\tau(\gamma^*)$ and its inverse $\tau(\gamma^*)^{-1} = \tau(-\gamma^*)$ are bounded operators on $\Hil$ by Assumption~\ref{equivariant_calculus:assumption:setting}~(c). 
		\item First of all, we can verify by direct computation that $U_{\gamma^*}$ preserves the scalar product: for any $\varphi , \psi \in L^2_{\eq}(\R^d,\Hil)$ we deduce 
		\begin{align*}
			\bscpro{U_{\gamma^*} \varphi}{U_{\gamma^*} \psi}_{\mathfrak{h}_{\gamma^*}(\Hil)} &= \int_{\BZ_{\gamma^*}} \dd k \, \bscpro{\tau(\gamma^*)^{-1} \varphi(k)}{\tau(\gamma^*)^{-1} \psi(k)}_{\Hil} 
			\\
			&= \int_{\BZ} \dd k' \, \bscpro{\tau(\gamma^*)^{-1} \varphi(k' + \gamma^*)}{\tau(\gamma^*)^{-1} \psi(k + \gamma^*)}_{\Hil} 
			\\
			&= \int_{\BZ} \dd k' \, \scpro{\varphi(k')}{\psi(k)}_{\Hil} 
			= \scpro{\varphi}{\psi}_{L^2_{\eq}(\R^d,\Hil)} 
			. 
		\end{align*}
		Furthermore, $U_{\gamma^*}$ is surjective as we can use the equivariance condition to construct a preimage to any $\psi_{\gamma^*} \in \mathfrak{h}_{\gamma^*}(\Hil)$. Therefore, $U_{\gamma^*}$ is unitary. 
		
		The explicit expression for $(U_{\gamma^*} \psi)(k)$ follows directly from the equivariance condition that characterizes elements of $L^2_{\eq}(\R^d,\Hil)$. 
		\item As the composition of unitaries, $U_{\gamma^*,\tilde{\gamma}^*}$ is unitary. The explicit formula for the point evaluation of $U_{\gamma^*,\tilde{\gamma}^*} \psi_{\tilde{\gamma}^*}$ follows from (3). 
	\end{enumerate}
\end{proof}
The second lemma collects facts on how to relate the \emph{collection} $\mathfrak{h}_{\gamma^*}(\Hil)$ to $L^2(\R^d,\Hil)$. Clearly, we can identify the Hilbert spaces 
\begin{align*}
	L^2(\R^d,\Hil) \cong \bigoplus_{\gamma^* \in \Gamma^*} L^2(\BZ_{\gamma^*},\Hil)
\end{align*}
by carving up $\R^d \cong \bigcup_{\gamma^* \in \Gamma^*} \BZ_{\gamma^*}$ into fundamental cells. If we take the direct sum of the $\mathfrak{h}_{\gamma^*}(\Hil)$ instead, in general we will obtain a dense subspace of $L^2(\R^d,\Hil)$ due to the presence of the weights. 
\begin{lemma}\label{appendix:equivariant_operators:lem:direct_sum_weighted_Hilbert_spaces_translated_Brillouin_zones}
	Suppose the order of the group action $\tau : \Gamma^* \longrightarrow \mathcal{GL}(\Hil)$ on $\Hil$ is $q \geq 0$. Define the Hilbert spaces 
	\begin{align*}
		\mathfrak{h}_{\Gamma^*}(\Hil) := \bigoplus_{\gamma^* \in \Gamma^*} \mathfrak{h}_{\gamma^*}(\Hil) 
	\end{align*}
	and $\mathfrak{h}_{\Gamma^*}^q(\Hil)$ that consists of those elements of $\bigoplus_{\gamma^* \in \Gamma^*} L^2(\BZ_{\gamma^*},\Hil)$ for which the norm 
	\begin{align*}
		\bnorm{(\psi_{\gamma^*})_{\gamma^* \in \Gamma^*}}_{\mathfrak{h}_{\Gamma^*}^q(\Hil)}^2 := \sum_{\gamma^* \in \Gamma^*} \sexpval{\gamma^*}^q \, \snorm{\psi_{\gamma^*}}_{L^2(\BZ_{\gamma^*},\Hil)}^2 
	\end{align*}
	is finite. 
	\begin{enumerate}[(1)]
		\item The \emph{Banach} spaces $H^q_{\Fourier}(\R^d,\Hil) := \Fourier H^q(\R^d,\Hil)$ and $\mathfrak{h}_{\Gamma^*}^q(\Hil)$ are isomorphic under the map $U_{\Gamma^*}^q : \Psi \mapsto \bigl ( \Psi \vert_{\BZ_{\gamma^*}} \bigr )_{\gamma^* \in \Gamma^*}$. 
		\item The embedding $\imath : \mathfrak{h}_{\Gamma^*}^q(\Hil) \longrightarrow \mathfrak{h}_{\Gamma^*}(\Hil)$ is continuous. 
		\item The set of finite sequences 
		\begin{align*}
			c_{0,\Gamma^*}(\Hil) := \biggl \{ (\psi_{\gamma^*})_{\gamma^* \in \Gamma^*} \in \bigoplus_{\gamma^* \in \Gamma^*} L^2(\BZ_{\gamma^*},\Hil) \; \; \big \vert \; \; &\exists R > 0 : \; \psi_{\gamma^*} = 0 \quad 
			\biggr. \\
			&\biggl . 
			\forall \gamma^* \in \Gamma^* \mbox{ with } \sabs{\gamma^*} \geq R \biggr \} 
		\end{align*}
		lies dense in $\mathfrak{h}_{\Gamma^*}(\Hil)$, $\mathfrak{h}_{\Gamma^*}^q(\Hil)$ and $\bigoplus_{\gamma^* \in 
		\Gamma^*} L^2(\BZ_{\gamma^*},\Hil)$. Therefore, $\mathfrak{h}_{\Gamma^*}^q(\Hil) \subseteq \mathfrak{h}_{\Gamma^*}(\Hil)$ and $\mathfrak{h}_{\Gamma^*}(\Hil)$, $\mathfrak{h}_{\Gamma^*}^q(\Hil) \subseteq \bigoplus_{\gamma^* \in 
		\Gamma^*} L^2(\BZ_{\gamma^*},\Hil)$ are dense as well. 
		\item Suppose that $\tau$ is elliptic of order $q$ in the sense that there exist two constants $C > c > 0$ so that 
		\begin{align}
			c \, \sexpval{\gamma^*}^q \, \bnorm{\psi_{\gamma^*}}_{L^2(\BZ_{\gamma^*},\Hil)} \leq \bnorm{\psi_{\gamma^*}}_{\mathfrak{h}_{\gamma^*}(\Hil)} 
			\leq C \, \sexpval{\gamma^*}^q \, \bnorm{\psi_{\gamma^*}}_{L^2(\BZ_{\gamma^*},\Hil)}
			\label{appendix:equivariant_operators:eqn:direct_sum_weighted_Hilbert_spaces_ellipticity_norms_plaquettes}
		\end{align}
		holds for all $(\psi_{\gamma^*})_{\gamma^* \in \Gamma^*} \in \mathfrak{h}_{\Gamma^*}^q(\Hil)$. Then $\mathfrak{h}_{\Gamma^*}^q(\Hil) = \mathfrak{h}_{\Gamma^*}(\Hil)$ agree as Banach spaces. 
	\end{enumerate}
\end{lemma}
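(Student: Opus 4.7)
The plan is to dispatch the four items sequentially, with the uniform equivalence
\begin{align*}
	c_0 \, \sexpval{\gamma^*} \leq \sexpval{k} \leq C_0 \, \sexpval{\gamma^*}
	&&
	\forall k \in \BZ_{\gamma^*}
	, \;
	\gamma^* \in \Gamma^*
\end{align*}
for some constants $C_0 > c_0 > 0$ playing the role of a shared workhorse; this follows from $\BZ_{\gamma^*} = \BZ - \gamma^*$ being a translate of the bounded fundamental cell $\BZ$. Throughout I will also use that, by Definition~\ref{setting:defn:order_tau} applied to both $\tau(\gamma^*)$ and $\tau(\gamma^*)^{-1} = \tau(-\gamma^*)$, there exists $C_\tau > 0$ with $\bnorm{\tau(\pm \gamma^*)}_{\mathcal{B}(\Hil)} \leq C_\tau \, \sexpval{\gamma^*}^q$ for all $\gamma^* \in \Gamma^*$.

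For item~(1), I would show that the map $U^q_{\Gamma^*} : \Psi \mapsto \bigl ( \Psi \vert_{\BZ_{\gamma^*}} \bigr )_{\gamma^* \in \Gamma^*}$ is a Banach space isomorphism. Surjectivity and injectivity are clear because $\R^d$ is the disjoint union (up to a null set) of the translated cells $\BZ_{\gamma^*}$. Bi-continuity reduces to a straightforward norm comparison:
\begin{align*}
	\snorm{\Psi}_{H^q_{\Fourier}(\R^d,\Hil)}^2
	= \int_{\R^d} \dd k \, \sexpval{k}^{2q} \, \snorm{\Psi(k)}_\Hil^2
	= \sum_{\gamma^* \in \Gamma^*} \int_{\BZ_{\gamma^*}} \dd k \, \sexpval{k}^{2q} \, \snorm{\Psi(k)}_\Hil^2
	,
\end{align*}
and the uniform equivalence $\sexpval{k} \asymp \sexpval{\gamma^*}$ on each cell lets one bound this two-sidedly by $\sum_{\gamma^*} \sexpval{\gamma^*}^{2q} \, \snorm{\Psi_{\gamma^*}}_{L^2(\BZ_{\gamma^*},\Hil)}^2$, which is the $\mathfrak{h}^q_{\Gamma^*}$-norm (up to the paper's power convention). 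For item~(2), combining the upper estimate in \eqref{appendix:equivariant_operators:eqn:equivalence:h_gammaStar_L2_BZ_gammaStar_norms} with the tempered growth of $\tau(\gamma^*)^{-1}$ yields
\begin{align*}
	\bnorm{(\psi_{\gamma^*})_{\gamma^*}}_{\mathfrak{h}_{\Gamma^*}(\Hil)}^2
	= \sum_{\gamma^*} \snorm{\psi_{\gamma^*}}_{\mathfrak{h}_{\gamma^*}(\Hil)}^2
	\leq C_\tau^2 \sum_{\gamma^*} \sexpval{\gamma^*}^{2q} \snorm{\psi_{\gamma^*}}_{L^2(\BZ_{\gamma^*},\Hil)}^2
	,
\end{align*}
which bounds the $\mathfrak{h}_{\Gamma^*}(\Hil)$-norm in terms of the $\mathfrak{h}^q_{\Gamma^*}(\Hil)$-norm, hence proves continuity of the inclusion $\imath$.

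For item~(3), density of $c_{0,\Gamma^*}(\Hil)$ in any of the three countable direct-sum Banach spaces follows from the standard truncation argument: given $(\psi_{\gamma^*})_{\gamma^*}$ with finite norm in one of them, the tails $\sum_{\abs{\gamma^*} \geq R} w(\gamma^*) \, \snorm{\psi_{\gamma^*}}^2$ (with $w(\gamma^*) \in \{ 1, \sexpval{\gamma^*}^q, \bnorm{\tau(\gamma^*)^{-1}}^2 \sexpval{\gamma^*}^{-q} \}$ as appropriate) vanish as $R \rightarrow \infty$, so the finite-support truncations approximate $(\psi_{\gamma^*})_{\gamma^*}$. Finally, for item~(4) the ellipticity bound \eqref{appendix:equivariant_operators:eqn:direct_sum_weighted_Hilbert_spaces_ellipticity_norms_plaquettes} is exactly the two-sided inequality that turns the continuous inclusion from~(2) into a norm equivalence on each cell; summing over $\gamma^* \in \Gamma^*$ yields the Banach space equality $\mathfrak{h}^q_{\Gamma^*}(\Hil) = \mathfrak{h}_{\Gamma^*}(\Hil)$.

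The only non-routine step is (1), and the main subtlety there is matching the two distinct book-keeping schemes for the weight — the continuous weight $\sexpval{\hat k}^q$ built into $H^q_{\Fourier}$ via functional calculus versus the discrete weight $\sexpval{\gamma^*}$ attached to each cell — which is precisely what the uniform equivalence of $\sexpval{k}$ and $\sexpval{\gamma^*}$ on $\BZ_{\gamma^*}$ resolves. Everything else is direct norm bookkeeping.
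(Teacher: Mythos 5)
Your proposal is correct and follows essentially the same route as the paper: the uniform comparability of $\sexpval{k}$ and $\sexpval{\gamma^*}$ on each translated cell for item~(1), the tempered growth of $\tau(\gamma^*)^{\pm 1}$ for item~(2), truncation to finite sequences for item~(3), and summing the assumed two-sided ellipticity bound over cells for item~(4). The only differences are cosmetic — you make the truncation argument in (3) explicit where the paper merely cites the chain of inclusions, and you correctly flag the $q$-versus-$2q$ weight convention that the paper itself glosses over.
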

\begin{proof}
	\begin{enumerate}[(1)]
		\item Clearly, we can view $H^q_{\Fourier}(\R^d,\Hil) \ni \Psi \cong (\psi_{\gamma^*})_{\gamma^* \in \Gamma^*}$ as a collection of vectors $\psi_{\gamma^*} := \Psi \vert_{\BZ_{\gamma^*}} \in L^2(\BZ_{\gamma^*},\Hil)$. That gives an injection 
		\begin{align*}
			\imath_q : H^q_{\Fourier}(\R^d,\Hil) \hookrightarrow \bigoplus_{\gamma^* \in \Gamma^*} L^2(\BZ_{\gamma^*},\Hil) 
		\end{align*}
		which is continuous since $\sexpval{K}^q \geq 1$ holds and therefore 
		\begin{align*}
			\bnorm{\Psi}_{H^q_{\Fourier}(\R^d,\Hil)}^2 &= \bnorm{\sexpval{K}^q \, \Psi}_{L^2(\R^d,\Hil)}^2 
			= \sum_{\gamma^* \in \Gamma^*} \bnorm{\sexpval{K}^q \, \Psi \vert_{\BZ_{\gamma^*}}}_{L^2(\BZ,\Hil)}^2 
			\\
			&\geq \sum_{\gamma^* \in \Gamma^*} \bnorm{\Psi \vert_{\BZ_{\gamma^*}}}_{L^2(\BZ,\Hil)}^2 
			. 
		\end{align*}
		So the question that remains is whether the two norms are equivalent. That amounts to showing there exist three constants $\tilde{C} > \tilde{c} > 0$ and $R > 0$ for which 
		\begin{align}
			\tilde{c} \, \sexpval{\gamma^*}^q \, \bnorm{\Psi \vert_{\BZ_{\gamma^*}}}_{L^2(\BZ_{\gamma^*},\Hil)} 
			\leq \bnorm{\sexpval{K}^q \, \Psi \vert_{\BZ_{\gamma^*}}}_{L^2(\BZ,\Hil)}
			\leq \tilde{C} \, \sexpval{\gamma^*}^q \, \bnorm{\Psi \vert_{\BZ_{\gamma^*}}}_{L^2(\BZ_{\gamma^*},\Hil)}
			\label{appendix:equivariant_operators:eqn:equivalence_Hq_hq_norms_estimate}
		\end{align}
		holds for all $\bigl ( \Psi \vert_{\BZ_{\gamma^*}} \bigr )_{\gamma^* \in \Gamma^*}$ and $\gamma^* \in \Gamma^*$ provided $\sabs{\gamma^*} \geq R$. Since $\sexpval{K}^q$ is a multiplication operator upper and lower bound follow from the elementary estimate 
		\begin{align*}
			\tfrac{1}{2} \, \sabs{\gamma^*} \leq \sabs{k} \leq 2 \, \sabs{\gamma^*} 
			, 
		\end{align*}
		which is true for all $k \in \BZ_{\gamma^*}$ and $\gamma^* \in \Gamma^*$ as long as $\sabs{\gamma^*} \geq R$ is large enough. 
		
		Thus, the norms are in fact equivalent and the spaces $H^q_{\Fourier}(\R^d,\Hil)$ and $\mathfrak{h}_{\Gamma^*}^q(\Hil)$ are isomorphic as Banach spaces. 
		\item The density follows immediately from the inclusions 
		\begin{align*}
			c_{0,\Gamma^*}(\Hil) \subset \mathfrak{h}_{\Gamma^*}^q(\Hil) 
			\subseteq \mathfrak{h}_{\Gamma^*}(\Hil) 
			\subseteq \bigoplus_{\gamma^* \in \Gamma^*} L^2(\BZ_{\gamma^*},\Hil)
		\end{align*}
		and the density of $c_{0,\Gamma^*}(\Hil) \subset \bigoplus_{\gamma^* \in \Gamma^*} L^2(\BZ_{\gamma^*},\Hil)$. 
		\item Assumption~\ref{equivariant_calculus:assumption:setting}~(c), namely that $\tau$ is of order $q$, directly implies that there exists a constant $C_{\tau} > 0$ for which 
		\begin{align*}
			\bnorm{\psi_{\gamma^*}}_{\mathfrak{h}_{\gamma^*}(\Hil)} &= \bnorm{\tau(\gamma^*)^{-1} \psi_{\gamma^*}}_{L^2(\BZ_{\gamma^*},\Hil)} 
			\leq C_{\tau} \, \sexpval{\gamma^*}^q \, \bnorm{\psi_{\gamma^*}}_{L^2(\BZ_{\gamma^*},\Hil)} 
		\end{align*}
		is satisfied for all $\gamma^* \in \Gamma^*$ and $\psi_{\gamma^*} \in L^2(\BZ_{\gamma^*},\Hil)$. Writing out the $\mathfrak{h}_{\Gamma^*}^q$ norm and inserting this estimate yields 
		\begin{align*}
			\bnorm{\imath \bigl ( (\psi_{\gamma^*})_{\gamma^* \in \Gamma^*} \bigr )}_{\mathfrak{h}_{\Gamma^*}(\Hil)} \leq C_{\tau} \, \bnorm{(\psi_{\gamma^*})_{\gamma^* \in \Gamma^*}}_{\mathfrak{h}_{\Gamma^*}^q(\Hil)} 
			. 
		\end{align*}
		\item We have to show that the norms of $\mathfrak{h}_{\Gamma^*}(\Hil)$ and $\mathfrak{h}_{\Gamma^*}^q(\Hil)$ are equivalent. The upper bound has already been shown in item~(2). The lower bound can be proven analogously to the upper bound, exploiting the lower bound~\eqref{appendix:equivariant_operators:eqn:direct_sum_weighted_Hilbert_spaces_ellipticity_norms_plaquettes} we have assumed to be true. 
	\end{enumerate}
\end{proof}
Characterization~(c) of Proposition~\ref{equivariant_calculus:prop:characterization_equivariant_operators}, which corresponds to characterization~(d) in the extended Proposition~\ref{appendix:equivariant_operators:prop:characterization_equivariant_operators} below, involves (the Fourier transform of) non-magnetic Sobolev spaces. This is to deal with the polynomial growth of $\bnorm{\tau(\gamma^*)}_{\mathcal{B}(\Hil)}$ as $\sabs{\gamma^*} \rightarrow \infty$. 
\begin{corollary}\label{appendix:equivariant_operators:cor:direct_sum_weighted_Hilbert_spaces_translated_Brillouin_zones}
	Suppose we are in the setting of Lemma~\ref{appendix:equivariant_operators:lem:direct_sum_weighted_Hilbert_spaces_translated_Brillouin_zones}. Then the following holds true: 
	\begin{enumerate}[(1)]
		\item The map 
		\begin{align*}
			U_{\Gamma^*} : \, &H^q_{\Fourier}(\R^d,\Hil) \longrightarrow \mathfrak{h}_{\Gamma^*}(\Hil) 
			, 
			\\
			&\Psi \mapsto \bigl ( \Psi \vert_{\BZ_{\gamma^*}} \bigr )_{\gamma^* \in \Gamma^*}
		\end{align*}
		is a continuous injection with dense range. 
		\item In the above claim, we can replace $q$ by a larger $q + q'$, $q' > 0$, without altering the conclusion. 
		%
		\item When the group action is elliptic in the sense that it satisfies equation~\eqref{appendix:equivariant_operators:eqn:direct_sum_weighted_Hilbert_spaces_ellipticity_norms_plaquettes}, then the map $U_{\Gamma^*}$ is a bijection with bounded inverse. 
		\item In the special case where the group action $\tau : \Gamma^* \longrightarrow \mathcal{U}(\Hil)$ takes values in the \emph{unitary} operators, the map $U_{\Gamma^*}$ is a unitary.
	\end{enumerate}
\end{corollary}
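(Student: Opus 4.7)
The strategy is to factor the map $U_{\Gamma^*}$ through the framework already established by Lemma~\ref{appendix:equivariant_operators:lem:direct_sum_weighted_Hilbert_spaces_translated_Brillouin_zones}. Concretely, both $U_{\Gamma^*}$ and the map $U_{\Gamma^*}^q : H^q_{\Fourier}(\R^d,\Hil) \to \mathfrak{h}_{\Gamma^*}^q(\Hil)$ from part~(1) of that lemma act on $\Psi$ by restriction to the translates $\BZ_{\gamma^*}$. Denoting by $\imath : \mathfrak{h}_{\Gamma^*}^q(\Hil) \hookrightarrow \mathfrak{h}_{\Gamma^*}(\Hil)$ the continuous embedding from part~(2) of the same lemma, we obtain the factorization
\begin{align*}
	U_{\Gamma^*} = \imath \circ U_{\Gamma^*}^q
	.
\end{align*}
Every subsequent claim reduces to a property of $U_{\Gamma^*}^q$, $\imath$, or both.

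For item~(1), continuity of $U_{\Gamma^*}$ is immediate from the factorization since both factors are continuous; $U_{\Gamma^*}^q$ is even a Banach space isomorphism and $\imath$ is an injection, so injectivity of $U_{\Gamma^*}$ follows as well. For the density of the range, observe that $\ran U_{\Gamma^*} = \imath \bigl ( \mathfrak{h}_{\Gamma^*}^q(\Hil) \bigr ) \supseteq c_{0,\Gamma^*}(\Hil)$, because every finite sequence $(\psi_{\gamma^*})_{\gamma^* \in \Gamma^*}$ trivially satisfies the weighted summability condition defining $\mathfrak{h}_{\Gamma^*}^q(\Hil)$. Part~(3) of the lemma then guarantees that $c_{0,\Gamma^*}(\Hil)$ — and hence $\ran U_{\Gamma^*}$ — is dense in $\mathfrak{h}_{\Gamma^*}(\Hil)$.

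Item~(2) requires only to note that the argument for item~(1) is oblivious to the precise Sobolev exponent: replacing $q$ by $q + q'$ yields a continuous injection $U_{\Gamma^*}^{q+q'} : H^{q+q'}_{\Fourier}(\R^d,\Hil) \to \mathfrak{h}_{\Gamma^*}^{q+q'}(\Hil)$ by part~(1) of the lemma, and the target continuously embeds into $\mathfrak{h}_{\Gamma^*}(\Hil)$ via the chain $\mathfrak{h}_{\Gamma^*}^{q+q'}(\Hil) \hookrightarrow \mathfrak{h}_{\Gamma^*}^q(\Hil) \hookrightarrow \mathfrak{h}_{\Gamma^*}(\Hil)$. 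Density of the range is preserved because $c_{0,\Gamma^*}(\Hil) \subseteq \mathfrak{h}_{\Gamma^*}^{q+q'}(\Hil)$ independently of $q'$, any weighted summability condition being satisfied vacuously by finite sequences. For item~(3), the ellipticity assumption~\eqref{appendix:equivariant_operators:eqn:direct_sum_weighted_Hilbert_spaces_ellipticity_norms_plaquettes} activates Lemma part~(4), so $\imath$ becomes a Banach space isomorphism. Consequently, $U_{\Gamma^*} = \imath \circ U_{\Gamma^*}^q$ is a composition of two Banach space isomorphisms, hence itself bijective with bounded inverse; alternatively, one may appeal directly to the open mapping theorem.

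Finally, item~(4) is a clean consequence of the pointwise structure of the weighted norms. When $\tau$ takes values in the unitaries on $\Hil$, the order may be chosen to be $q = 0$ and, for every $\gamma^* \in \Gamma^*$ and $\psi_{\gamma^*} \in L^2(\BZ_{\gamma^*},\Hil)$,
\begin{align*}
	\bnorm{\psi_{\gamma^*}}_{\mathfrak{h}_{\gamma^*}(\Hil)}
	= \bnorm{\tau(\gamma^*)^{-1} \psi_{\gamma^*}}_{L^2(\BZ_{\gamma^*},\Hil)}
	= \bnorm{\psi_{\gamma^*}}_{L^2(\BZ_{\gamma^*},\Hil)}
	.
\end{align*}
Hence $\mathfrak{h}_{\Gamma^*}(\Hil)$ coincides with $\bigoplus_{\gamma^* \in \Gamma^*} L^2(\BZ_{\gamma^*},\Hil) \cong L^2(\R^d,\Hil) = H^0_{\Fourier}(\R^d,\Hil)$ as Hilbert spaces, and $U_{\Gamma^*}$ reduces to the standard partition-of-$\R^d$ unitary. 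The only delicate point throughout — and the mildest conceptual obstacle — is the bookkeeping verification that $U_{\Gamma^*}$ and $U_{\Gamma^*}^q$ really do factor as claimed; once this identification of maps has been made, every assertion reduces to invoking the appropriate item of the preceding lemma.
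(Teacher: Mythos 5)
Your proof is correct and follows essentially the same route as the paper's: the factorization $U_{\Gamma^*} = \imath \circ U_{\Gamma^*}^q$ through Lemma~\ref{appendix:equivariant_operators:lem:direct_sum_weighted_Hilbert_spaces_translated_Brillouin_zones}, with items~(2)--(4) each reducing to the corresponding item of that lemma. The only difference is that you spell out the density of the range in item~(1) via $c_{0,\Gamma^*}(\Hil)$, which the paper leaves implicit in Lemma~\ref{appendix:equivariant_operators:lem:direct_sum_weighted_Hilbert_spaces_translated_Brillouin_zones}~(3); that is a welcome, if minor, addition.
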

\begin{proof}
	\begin{enumerate}[(1)]
		\item The map $U_{\Gamma^*}$ is the composition of the map $\Psi \mapsto \bigl ( \Psi \vert_{\BZ_{\gamma^*}} \bigr )_{\gamma^* \in \Gamma^*}$ that facilitates the identification of $H^q_{\Fourier}(\R^d,\Hil)$ and $\mathfrak{h}_{\Gamma^*}^q$, and the injection $\mathfrak{h}_{\Gamma^*}^q(\Hil) \hookrightarrow \mathfrak{h}_{\Gamma^*}(\Hil)$ from Lemma~\ref{appendix:equivariant_operators:lem:direct_sum_weighted_Hilbert_spaces_translated_Brillouin_zones}~(1) and (2), respectively. According to the Lemma, both maps are continuous and thus, their composition is as well. 
		
		Injectivity is evident from the definition: $U_{\Gamma^*} \Psi = 0$ holds if and only if all the restrictions to the plaquettes $\BZ_{\gamma^*}$ vanish on a set of full measure. Thus, $\Psi = 0$ is the only solution and $U_{\Gamma^*}$ is injective. 
		\item This follows directly from $H^{q+q'}_{\Fourier}(\R^d,\Hil) \subseteq H^q_{\Fourier}(\R^d,\Hil)$ and 
		\begin{align*}
			\snorm{\Psi}_{H^{q+q'}_{\Fourier}(\R^d,\Hil)} \leq \snorm{\Psi}_{H^q_{\Fourier}(\R^d,\Hil)} 
			&&
			\forall \Psi \in H^{q+q'}_{\Fourier}(\R^d,\Hil)
			, 
		\end{align*}
		which is a consequence of $\sexpval{K}^q \leq \sexpval{K}^{q + q'}$. 
		\item Lemma~\ref{appendix:equivariant_operators:lem:direct_sum_weighted_Hilbert_spaces_translated_Brillouin_zones}~(4) tells us that if $\tau$ is elliptic in the sense that equation~\eqref{appendix:equivariant_operators:lem:direct_sum_weighted_Hilbert_spaces_translated_Brillouin_zones} holds, then the Banach spaces $H^q_{\Fourier}(\R^d,\Hil)$ and $\mathfrak{h}_{\Gamma^*}(\Hil)$ are isomorphic Banach spaces. In fact, $U_{\Gamma^*}$ is the isomorphism we are looking for. 
		
		Lower and upper bounds in the estimate 
		\begin{align*}
			c \, \snorm{\Psi}_{H^q_{\Fourier}(\R^d,\Hil)} \leq \bnorm{U_{\Gamma^*} \Psi}_{\mathfrak{h}_{\Gamma^*}^q(\Hil)} 
			= \bnorm{\bigl( \Psi \vert_{\BZ_{\gamma^*}} \bigr )_{\gamma^* \in \Gamma^*}}_{\mathfrak{h}_{\Gamma^*}(\Hil)} 
			\leq C \, \snorm{\Psi}_{H^q_{\Fourier}(\R^d,\Hil)} 
		\end{align*}
		that shows equivalence of the norms give us continuity of $U_{\Gamma^*}^{-1}$ and $U_{\Gamma^*}$. 
		\item This follows directly from (3): when the group action $\tau$ is unitary, then we may choose $q = 0$. Moreover, the scalar products — and thus, the induced norms — of $\mathfrak{h}_{\Gamma^*}^q(\Hil)$, $\mathfrak{h}_{\Gamma^*}(\Hil)$ and $H^0_{\Fourier}(\R^d,\Hil) = L^2(\R^d,\Hil)$ all coincide. Consequently, $U_{\Gamma^*}$ is a surjective norm-preserving map between Hilbert spaces, and as such unitary. 
	\end{enumerate}
\end{proof}
%

\subsection{Equivalent characterizations of operators} 
\label{appendix:equivariant_operators:equivalent_characterizations_operators}
We will now generalize and extend Proposition~\ref{equivariant_calculus:prop:characterization_equivariant_operators} in two ways: first of all, rather than just look at $F_0$ as a linear map between  
\begin{align*}
	\domain(F_0) \subseteq L^2(\BZ,\Hil) = \mathfrak{h}_0(\Hil) \longrightarrow L^2(\BZ,\Hil') = \mathfrak{h}_0(\Hil')
	, 
\end{align*}
we will instead allow one to pick \emph{any} fundamental cell $\BZ_{\gamma^*}$. Due to the presence of $\tau(\gamma^*)$ in the equivariance condition that defines $L^2_{\eq}(\R^d,\Hil)$, it is more suitable to work with the weighted Hilbert spaces $\mathfrak{h}_{\gamma^*}(\Hil^{(\prime)})$ rather than $L^2(\BZ_{\gamma^*},\Hil^{(\prime)})$. 

The second modification is that we will look at the collection $( F_{\gamma^*} )_{\gamma^* \in \Gamma^*}$ as a whole for two reasons: first of all, it is easier to make the equivariance condition explicit through the unitary $U_{\gamma^*,\tilde{\gamma}^*}$ from Lemma~\ref{appendix:equivariant_operators:lem:weighted_Hilbert_spaces_translated_Brillouin_zones}~(4). And secondly, upon identifying $\mathfrak{h}_{\Gamma^*}(\Hil)$ with a dense subspace of $L^2(\R^d,\Hil)$, we can more easily relate $\widehat{F} : \domain(\widehat{F}) \subseteq L^2(\R^d,\Hil) \longrightarrow L^2(\R^d,\Hil')$ with a collection $(F_{\gamma^*})_{\gamma^* \in \Gamma^*}$ of operators. 

The third and final modification is that we have reordered the items to follow the natural order in the proof. 
\begin{proposition}\label{appendix:equivariant_operators:prop:characterization_equivariant_operators}
	There is a one-to-one correspondence between 
	\begin{enumerate}[(a)]
		\item a single operator $F_{\gamma^*} : \domain(F_{\gamma^*}) \subseteq \mathfrak{h}_{\gamma^*}(\Hil) \longrightarrow \mathfrak{h}_{\gamma^*}(\Hil')$ for some $\gamma^* \in \Gamma^*$ and two group actions $\tau^{(\prime)} : \Gamma^* \longrightarrow \mathcal{GL}(\Hil^{(\prime)})$, 
		\item a collection of densely defined operators $F_{\gamma^*} : \domain(F_{\gamma^*}) = \tau(\gamma^*) \domain(F_0) \subseteq L^2(\BZ,\Hil) \longrightarrow L^2(\BZ,\Hil')$ related by 
		\begin{align}
			F_{\gamma^*} &= U'_{\gamma^*,\tilde{\gamma}^*} \, F_{\tilde{\gamma}^*} \, U_{\gamma^*,\tilde{\gamma}^*}^{-1}
			&&
			\forall \gamma^* , \tilde{\gamma}^* \in \Gamma^* 
			, 
			\label{equivariant_calculus:eqn:equivariance_condition_plaquettes}
		\end{align}
		\item a densely defined operator $F : \domain(F) \subseteq L^2_{\eq}(\R^d , \Hil) \longrightarrow L^2_{\eq}(\R^d , \Hil')$, and 
		\item a densely defined operator $\widehat{F} : \widehat{\domain}(F) \subseteq L^2(\R^d , \Hil) \longrightarrow L^2(\R^d , \Hil')$ subject to the equivariance condition 
		\begin{align}
			\widehat{T}'_{\gamma^*} \, \widehat{F} \, \widehat{T}_{\gamma^*}^{-1} = \bigl ( \id_{L^2(\R^d)} \otimes \tau'(\gamma^*) \bigr ) \, \widehat{F} \, \bigl ( \id_{L^2(\R^d)} \otimes \tau(\gamma^*)^{-1} \bigr ) 
			&&
			\forall \gamma^* \in \Gamma^* 
			, 
			\label{appendix:equivariant_operators:eqn:equivariance_condition_L2_Rd}
		\end{align}
		and the domain 
		\begin{align}
			\widehat{T}_{\gamma^*} \bigl ( \domain(\widehat{F}) \bigr ) &= \bigl (\id_{L^2(\BZ_{\gamma^*})} \otimes \tau(\gamma^*) \bigr ) \Bigl ( \domain(\widehat{F}) \Bigr )
			, 
			\label{appendix:equivariant_operators:eqn:equivariance_condition_domain_L2_Rd}
		\end{align}
		where $\bigl ( \widehat{T}^{(\prime)}_{\gamma^*} \Psi^{(\prime)} \bigr )(k) := \Psi(k^{(\prime)} - \gamma^*) \in \Hil^{(\prime)}$ denotes the translation by $\gamma^* \in \Gamma^*$ on $L^2(\R^d,\Hil^{(\prime)})$. 
	\end{enumerate}
\end{proposition}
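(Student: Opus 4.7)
The plan is to establish the cycle (a) $\Leftrightarrow$ (b) $\Leftrightarrow$ (c) $\Leftrightarrow$ (d), exploiting the unitaries $U_{\gamma^*}$ and $U_{\gamma^*,\tilde{\gamma}^*}$ supplied by Lemma~\ref{appendix:equivariant_operators:lem:weighted_Hilbert_spaces_translated_Brillouin_zones} and the density/identification results of Corollary~\ref{appendix:equivariant_operators:cor:direct_sum_weighted_Hilbert_spaces_translated_Brillouin_zones}. Throughout, the key book-keeping observation is the cocycle identity $U_{\gamma^*,\tilde{\gamma}^*} \, U_{\tilde{\gamma}^*,\hat{\gamma}^*} = U_{\gamma^*,\hat{\gamma}^*}$, which is immediate from $U_{\gamma^*,\tilde{\gamma}^*} = U_{\gamma^*} \, U_{\tilde{\gamma}^*}^{-1}$, and similarly for the primed version.

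For (a) $\Leftrightarrow$ (b), the plan is to \emph{use} equation~\eqref{equivariant_calculus:eqn:equivariance_condition_plaquettes} as a definition: given $F_{\gamma^*}$, set $F_{\tilde{\gamma}^*} := (U'_{\gamma^*,\tilde{\gamma}^*})^{-1} \, F_{\gamma^*} \, U_{\gamma^*,\tilde{\gamma}^*}$ with the transported domain. Consistency of the collection across all pairs is precisely the cocycle identity. For (b) $\Leftrightarrow$ (c), I will use the unitary $U_{\gamma^*} : L^2_{\eq}(\R^d,\Hil) \to \mathfrak{h}_{\gamma^*}(\Hil)$ from Lemma~\ref{appendix:equivariant_operators:lem:weighted_Hilbert_spaces_translated_Brillouin_zones}~(3) to define $F_{\gamma^*} := U'_{\gamma^*} \, F \, U_{\gamma^*}^{-1}$; equation~\eqref{equivariant_calculus:eqn:equivariance_condition_plaquettes} then follows directly from $U_{\gamma^*,\tilde{\gamma}^*} = U_{\gamma^*} \, U_{\tilde{\gamma}^*}^{-1}$. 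Conversely, any single $F_{\gamma^*}$ (equivalently, the whole collection) determines $F = (U'_{\gamma^*})^{-1} \, F_{\gamma^*} \, U_{\gamma^*}$, and the compatibility \eqref{equivariant_calculus:eqn:equivariance_condition_plaquettes} ensures the assignment does not depend on which $\gamma^*$ we pick.

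For (c) $\Leftrightarrow$ (d), the idea is to pass through the cellwise decomposition $L^2(\R^d,\Hil) \cong \bigoplus_{\gamma^* \in \Gamma^*} L^2(\BZ_{\gamma^*},\Hil)$, using that by Lemma~\ref{appendix:equivariant_operators:lem:weighted_Hilbert_spaces_translated_Brillouin_zones}~(2) each summand $L^2(\BZ_{\gamma^*},\Hil)$ and $\mathfrak{h}_{\gamma^*}(\Hil)$ agree as Banach spaces. Given the collection $(F_{\gamma^*})_{\gamma^* \in \Gamma^*}$ from (b), define $\widehat{F}$ cell by cell via
\begin{align*}
	\widehat{F} \, \Psi \big\vert_{\BZ_{\gamma^*}} := F_{\gamma^*} \bigl ( \Psi \vert_{\BZ_{\gamma^*}} \bigr )
\end{align*}
on the finite-sequence core $c_{0,\Gamma^*}$, and check that the translation formulas $\widehat{T}_{\gamma^*}(\Psi \vert_{\BZ_0}) = \Psi(\, \cdot - \gamma^*) \vert_{\BZ_{\gamma^*}}$ reduce the equivariance condition \eqref{appendix:equivariant_operators:eqn:equivariance_condition_L2_Rd} to the cellwise equivariance \eqref{equivariant_calculus:eqn:equivariance_condition_plaquettes}. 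Conversely, given $\widehat{F}$ subject to \eqref{appendix:equivariant_operators:eqn:equivariance_condition_L2_Rd}--\eqref{appendix:equivariant_operators:eqn:equivariance_condition_domain_L2_Rd}, restrict cellwise to read off $(F_{\gamma^*})$.

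The main obstacle is the interplay in this last step between the Banach-space identification $\mathfrak{h}_{\gamma^*}(\Hil) = L^2(\BZ_{\gamma^*},\Hil)$ (whose scalar products differ by the nonunitary weight $\tau(\gamma^*)^{-1}$) and the \emph{unitary} translation $\widehat{T}_{\gamma^*}$ on $L^2(\R^d,\Hil)$. Concretely, $F_{\gamma^*}$ and $F_{\tilde{\gamma}^*}$ are related through $U_{\gamma^*,\tilde{\gamma}^*}$, which contains an explicit $\tau(\gamma^* - \tilde{\gamma}^*)^{-1}$; the same factor $\tau(\gamma^*)$ appears in \eqref{appendix:equivariant_operators:eqn:equivariance_condition_L2_Rd} flanking $\widehat{F}$, so the two viewpoints must be reconciled by tracking the weight carefully on both sides. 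Ensuring that the domains also match up under \eqref{appendix:equivariant_operators:eqn:equivariance_condition_domain_L2_Rd} is a similar book-keeping exercise; density of $c_{0,\Gamma^*}(\Hil)$ from Lemma~\ref{appendix:equivariant_operators:lem:direct_sum_weighted_Hilbert_spaces_translated_Brillouin_zones}~(3) guarantees the construction extends from a suitably equivariant core to the full (possibly unbounded) operator.
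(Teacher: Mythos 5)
Your proposal follows essentially the same route as the paper's proof: transport the operator between the various Hilbert spaces via the maps $U_{\gamma^*}$ and $U_{\gamma^*,\tilde{\gamma}^*}$ from Lemma~\ref{appendix:equivariant_operators:lem:weighted_Hilbert_spaces_translated_Brillouin_zones}, use the cocycle identity to check consistency and independence of the choice of $\gamma^*$, and pass to characterization~(d) through the plaquette decomposition of $L^2(\R^d,\Hil)$ together with the explicit action of the translations on the direct sum. One caveat: for unbounded operators one cannot literally ``extend from a core by density''; instead $\widehat{F}$ should be defined outright as the block-diagonal operator with domain the preimage of $\bigl(\domain(F_{\gamma^*})\bigr)_{\gamma^* \in \Gamma^*}$ under the cellwise identification — which is what your formula $\widehat{F}\Psi\vert_{\BZ_{\gamma^*}} := F_{\gamma^*}(\Psi\vert_{\BZ_{\gamma^*}})$ already accomplishes — so density is only needed to verify that this domain is dense, not to extend the operator.
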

\begin{proof}
	The structure of the proof is as follows: 
	\begin{align*}
		\bfig
			\node a(-400,400)[\mbox{(a)}]
			\node b(0,200)[\mbox{(b)}]
			\node c(-400,0)[\mbox{(c)}]
			\node d(400,200)[\mbox{(d)}]
			\arrow/=>/[a`b;]
			\arrow/=>/[b`c;]
			\arrow/=>/[c`a;]
			\arrow/<=>/[b`d;]
		\efig
	\end{align*}
  Since the Hilbert space structure is unimportant, \textbf{we shall be working in the category of Banach spaces}. Consequently, we can identify Hilbert spaces whose norms are equivalent such as $\mathfrak{h}_{\gamma^*}(\Hil) = L^2(\BZ_{\gamma^*},\Hil)$ even though these are different \emph{Hilbert} spaces. It is not important that operators such as $U_{\tilde{\gamma}^*,\gamma^*} \in \mathcal{GL} \bigl ( L^2(\BZ_{\gamma^*},\Hil) \, , \, L^2(\BZ_{\tilde{\gamma}^*},\Hil) \bigr )$ are \emph{unitary} with respect to one scalar product. All that matters is that they are bounded operators with bounded inverses, a fact that stays true in any equivalent norm. 
	
  Lastly, in the preceding Lemmas and Corollary, we have introduced several maps on $\Hil$-valued Banach spaces. Whenever we deal with $\Hil'$-valued Banach spaces, we will systematically add a prime so that \eg $U'_{\tilde{\gamma}^*,\gamma^*} \in \mathcal{GL} \bigl ( L^2(\BZ_{\gamma^*},\Hil') \, , \, L^2(\BZ_{\tilde{\gamma}^*},\Hil') \bigr )$. 
	\medskip
	
	\noindent
	(a) $\Longrightarrow$ (b): Suppose we are given a single densely defined operator $F_{\gamma^*} : \domain(F_{\gamma^*}) \subseteq L^2(\BZ_{\gamma^*},\Hil)$ and two group actions $\tau$ and $\tau'$. To construct the operator at $\tilde{\gamma}^* \in \Gamma^*$ we set 
	\begin{align*}
		F_{\tilde{\gamma}^*} := U'_{\tilde{\gamma}^*,\gamma^*} \, F_{\gamma^*} \, U_{\tilde{\gamma}^*,\gamma^*}^{-1}
	\end{align*}
	and endow it with the domain 
	\begin{align*}
		\domain(F_{\tilde{\gamma}^*}) := U_{\tilde{\gamma}^*,\gamma^*} \bigl ( \domain(F_{\gamma^*}) \bigr ) 
		. 
	\end{align*}
	As the image of a dense set under a $\mathcal{GL} \bigl ( L^2(\BZ_{\gamma^*},\Hil) \, , \, L^2(\BZ_{\tilde{\gamma}^*},\Hil) \bigr )$ map, the domain $\domain(F_{\tilde{\gamma}^*}) \subseteq L^2(\BZ_{\tilde{\gamma}^*},\Hil)$ is again dense. And evidently, the covariance condition~\eqref{equivariant_calculus:eqn:equivariance_condition_plaquettes} — which implies also the covariance of the domains — is baked right into the definition of $(F_{\gamma^*})_{\gamma^* \in \Gamma^*}$. 
	\medskip
	
	\noindent
	(b) $\Longrightarrow$ (c): Suppose we are given a collection of operators $(F_{\gamma^*})_{\gamma^* \in \Gamma^*}$ that satisfies the equivariance condition~\eqref{equivariant_calculus:eqn:equivariance_condition_plaquettes}. Pick any $\gamma^* \in \Gamma^*$. With the help of 
	\begin{align*}
		U_{\gamma^*} \in \mathcal{GL} \bigl ( L^2_{\eq}(\R^d,\Hil) \, , \, L^2(\BZ_{\gamma^*},\Hil) \bigr ) 
	\end{align*}
	from Lemma~\ref{appendix:equivariant_operators:lem:weighted_Hilbert_spaces_translated_Brillouin_zones}, we define the operator 
	\begin{align}
		F := U^{\prime \, -1}_{\gamma^*} \, F_{\gamma^*} \, U_{\gamma^*} 
		\label{appendix:equivariant_operators:eqn:relation_F_L2_eq_F_gamma_ast}
	\end{align}
	endowed with the domain $\domain(F) := U_{\gamma^*}^{-1} \bigl ( \domain(F_{\gamma^*}) \bigr )$. Clearly, this gives us a well-defined operator between $L^2_{\eq}(\R^d,\Hil)$ and $L^2_{\eq}(\R^d,\Hil')$. As the range of a dense set under a $\mathcal{GL}$ map, the domain $\domain(F) \subseteq L^2_{\eq}(\R^d,\Hil)$ is again dense. 
	
	Our remaining task is to show that $F$ is independent of our choice of $\gamma^* \in \Gamma^*$. So let $0 \in \Gamma^*$ be the zero element of our group. Then analogously to above we can define the operator 
	\begin{align*}
		\tilde{F} := U^{\prime \, -1}_0 \, F_0 \, U_0 
	\end{align*}
	equipped with the domain $\domain(\tilde{F}) := U_0^{-1} \bigl ( \domain(F_0) \bigr )$. Covariance~\eqref{equivariant_calculus:eqn:equivariance_condition_plaquettes} relates 
	\begin{align*}
		F_{\gamma^*} &= U'_{\gamma^*,0} \, F_0 \, U_{\gamma^*,0}^{-1} 
		\\
		&= U'_{\gamma^*} \, U^{\prime \, -1}_0 \, F_0 \, U_0 \, U_{\gamma^*}^{-1} 
	\end{align*}
	to $F_0$, which is equivalent to 
	\begin{align*}
		F &= U^{\prime \, -1}_{\gamma^*} \, F_{\gamma^*} \, U_{\gamma^*} 
		\\
		&= U^{\prime \, -1}_{\gamma^*} \, U'_{\gamma^*} \, U^{\prime \, -1}_0 \, F_0 \, U_0 \, U_{\gamma^*}^{-1} \, U_{\gamma^*} 
		\\
		&= U^{\prime \, -1}_0 \, F_0 \, U_0
		= \tilde{F} 
		. 
	\end{align*}
	Likewise, the domains coincide by covariance, 
	\begin{align*}
		\domain(F_0) = U_0 \, U_{\gamma^*}^{-1} \bigl ( \domain(F_{\gamma^*}) \bigr ) 
		. 
	\end{align*}
	This shows that the operator $F$ is independent of our choice of $\gamma^* \in \Gamma^*$.
	\medskip
	
	\noindent
	(c) $\Longrightarrow$ (a):
	Suppose we are given an operator $F : \domain(F) \subseteq L^2_{\eq}(\R^d,\Hil) \longrightarrow L^2_{\eq}(\R^d,\Hil')$. Pick a lattice vector $\gamma^* \in \Gamma^*$. Then we define the operator 
	\begin{align*}
		F_{\gamma^*} := U'_{\gamma^*} \, F \, U_{\gamma^*}^{-1} 
	\end{align*}
	by concatenating $F$ with the $\mathcal{GL}$ maps $U_{\gamma^*}^{-1}$ and $U'_{\gamma^*}$ from Lemma~\ref{appendix:equivariant_operators:lem:weighted_Hilbert_spaces_translated_Brillouin_zones}~(3), and equipping it with the domain $\domain(F_{\gamma^*}) := U_{\gamma^*} \bigl ( \domain(F) \bigr )$. As the range of a dense set under a $\mathcal{GL}$ map, the domain $\domain(F_{\gamma^*}) \subseteq L^2(\BZ_{\gamma^*},\Hil)$ is again dense. 
	\medskip
	
	\noindent
	(b) $\Longrightarrow$ (d): Suppose we are given a collection of covariant operators $(F_{\gamma^*})_{\gamma^* \in \Gamma^*}$. For any lattice vector $\gamma^* \in \Gamma^*$ we define the surjection 
	\begin{align*}
		\pi_{\gamma^*} : L^2(\R^d,\Hil) \longrightarrow L^2(\BZ_{\gamma^*},\Hil)
		, \; 
		\Psi \mapsto \Psi \vert_{\BZ_{\gamma^*}} 
		, 
	\end{align*}
	that restricts vectors to the fundamental cell located at $\gamma^*$. Furthermore, we define the Banach space isomorphisms   
	\begin{align*}
		V_{\Gamma^*}^{(\prime)} : \, &L^2(\R^d,\Hil^{(\prime)}) \longrightarrow \bigoplus_{\gamma^* \in \Gamma^*} L^2(\BZ_{\gamma^*},\Hil^{(\prime)}) 
		, 
		\\
		&\Psi^{(\prime)} \mapsto \bigl ( \pi_{\gamma^*}(\Psi^{(\prime)}) \bigr )_{\gamma^* \in \Gamma^*}
		. 
	\end{align*}
	Then considering the covariant collection $(F_{\gamma^*})_{\gamma^* \in \Gamma^*}$ as a densely defined blockdiagonal operator on the Hilbert space $\bigoplus_{\gamma^* \in \Gamma^*} L^2(\BZ_{\gamma^*},\Hil^{(\prime)})$, we define $\widehat{F}$ as the operator 
	\begin{align*}
		\widehat{F} := V^{\prime \, -1}_{\Gamma^*} \, (F_{\gamma^*})_{\gamma^* \in \Gamma^*} \, V_{\Gamma^*} 
	\end{align*}
	endowed with the domain 
	\begin{align*}
		\domain(\widehat{F}) := V_{\Gamma^*}^{-1} \Bigl ( \bigl ( \domain(F_{\gamma^*}) \bigr )_{\gamma^* \in \Gamma^*} \Bigr ) 
		\subseteq L^2(\R^d,\Hil)
		. 
	\end{align*}
	The fact that $V_{\Gamma^*} \in \mathcal{GL} \Bigl ( L^2(\R^d,\Hil^{(\prime)}) \, , \, \bigoplus_{\gamma^* \in \Gamma^*} L^2(\BZ_{\gamma^*},\Hil^{(\prime)})  \Bigr )$ and the density of all the $\domain(F_{\gamma^*}) \subseteq L^2(\BZ_{\gamma^*},\Hil)$ imply that the domain of $\widehat{F}$ lies dense. 
	
	To verify the covariance condition~\eqref{appendix:equivariant_operators:eqn:equivariance_condition_L2_Rd}, we note that for any $\tilde{\gamma}^* \in \Gamma^*$ and $k \in \BZ_{\gamma^*}$ the action of the lattice translation $\widehat{T}_{\tilde{\gamma}^*}$ simplifies to 
	\begin{align}
		\bigl ( V_{\Gamma^*} \, \widehat{T}_{\tilde{\gamma}^*} \, V_{\Gamma^*}^{-1} (\psi_{\gamma^*})_{\gamma^* \in \Gamma^*} \bigr )_{\gamma^*}(k) &= \psi_{\gamma^* + \tilde{\gamma}^*}(k - \tilde{\gamma}^*) 
		\in L^2(\BZ_{\gamma^*},\Hil)
		. 
		\label{appendix:equivariant_operators:eqn:translations_in_direct_sum_plaquettes_representation}
	\end{align}
	Hence, the covariance condition~\eqref{equivariant_calculus:eqn:equivariance_condition_plaquettes} for the $(F_{\gamma^*})_{\gamma^* \in \Gamma^*}$ and their domains implies \eqref{appendix:equivariant_operators:eqn:equivariance_condition_L2_Rd}. Moreover, the equivariance of the domains of the $F_{\gamma^*}$ translates to \eqref{appendix:equivariant_operators:eqn:equivariance_condition_domain_L2_Rd}. 
	\medskip
	
	\noindent
	(d) $\Longrightarrow$ (b): Suppose we are given a densely defined operator $\widehat{F} : \domain(\widehat{F}) \subseteq L^2(\R^d,\Hil) \longrightarrow L^2(\R^d,\Hil')$. For the most part, we just need to read the proof for “(b) $\Longrightarrow$ (d)” in reverse: the covariance condition~\eqref{appendix:equivariant_operators:eqn:equivariance_condition_L2_Rd} guarantees that the operator 
	\begin{align}
		V_{\Gamma^*} \, \widehat{F} \, V_{\Gamma^*}^{-1} = (F_{\gamma^*})_{\gamma^* \in \Gamma^*}
		\label{appendix:equivariant_operators:eqn:relation_widehat_F_collection_F_gamma_ast}
	\end{align}
	is blockdiagonal with respect to the direct sum decomposition. The domain 
	\begin{align*}
		\domain \bigl ( (F_{\gamma^*})_{\gamma^* \in \Gamma^*} \bigr ) := V_{\Gamma^*} \bigl ( \domain(\widehat{F}) \bigr ) 
		\subseteq \bigoplus_{\gamma^* \in \Gamma^*} L^2(\BZ_{\gamma^*},\Hil)
	\end{align*}
	is necessarily dense as the image of a dense set under a $\mathcal{GL}$ map. Moreover, equation~\eqref{appendix:equivariant_operators:eqn:translations_in_direct_sum_plaquettes_representation} guarantees the domain satisfies \eqref{appendix:equivariant_operators:eqn:equivariance_condition_domain_L2_Rd}. 
\end{proof}
The first three characterizations~(a)–(c) lead to clear criteria for when these operators are bounded: since $L^2_{\eq}(\R^d,\Hil)$ and $\mathfrak{h}_{\gamma^*}(\Hil)$ are \emph{unitarily} equivalent, the norms of all these operators are the same. In contrast, the identification between $\widehat{F}$ and $F_{\gamma^*}$ involves an injection and a surjection as well as different weights. Consequently, there is no simple relation between the operator norm of $\widehat{F}$ and those of $F$ or $F_{\gamma^*}$. 
\begin{proposition}\label{appendix:equivariant_operators:prop:equivalence_boundedness_equivariant_operators}
	Suppose we are in the setting of Proposition~\ref{appendix:equivariant_operators:prop:characterization_equivariant_operators}. 
	\begin{enumerate}[(1)]
		\item The boundedness of each of the operators from characterizations~(a)–(c) implies the boundedness of the others. Specifically, the norms are related by 
		\begin{align*}
			\bnorm{F_{\gamma^*}}_{\mathcal{B}(\mathfrak{h}_{\gamma^*}(\Hil) , \mathfrak{h}_{\gamma^*}(\Hil'))} &= \bnorm{F_0}_{\mathcal{B}(L^2(\BZ,\Hil) , L^2(\BZ,\Hil'))} 
			\\
			&= \snorm{F}_{\mathcal{B}(L^2_{\eq}(\R^d,\Hil),L^2_{\eq}(\R^d,\Hil'))}
			. 
		\end{align*}
		\item The operators from characterizations~(a)–(c) are bounded if and only if 
		\begin{align*}
			\widehat{F} : H^{q + q'}_{\Fourier}(\R^d,\Hil) \longrightarrow L^2(\R^d,\Hil') 
		\end{align*}
		from characterization~(d) defines a bounded operator, where $q$ and $q'$ are the orders of growth for the group actions $\tau$ and $\tau'$. 
	\end{enumerate}
\end{proposition}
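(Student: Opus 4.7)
The strategy is to leverage the machinery we have already assembled, most notably Lemma~\ref{appendix:equivariant_operators:lem:weighted_Hilbert_spaces_translated_Brillouin_zones}, Lemma~\ref{appendix:equivariant_operators:lem:direct_sum_weighted_Hilbert_spaces_translated_Brillouin_zones} and Corollary~\ref{appendix:equivariant_operators:cor:direct_sum_weighted_Hilbert_spaces_translated_Brillouin_zones}. The key point is that passing between characterizations (a)--(c) involves only the \emph{unitaries} $U_{\gamma^*}$ and $U_{\gamma^*,\tilde{\gamma}^*}$, whereas passing to (d) goes through the \emph{non-isometric} injection $U_{\Gamma^*}$ and thus costs us the Sobolev weight $\sexpval{\hat{k}}^{q+q'}$.

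For part~(1), I will start from the identity~\eqref{appendix:equivariant_operators:eqn:relation_F_L2_eq_F_gamma_ast}, $F = U'^{-1}_{\gamma^*}\,F_{\gamma^*}\,U_{\gamma^*}$, where by Lemma~\ref{appendix:equivariant_operators:lem:weighted_Hilbert_spaces_translated_Brillouin_zones}~(3) both $U_{\gamma^*}$ and $U'_{\gamma^*}$ are unitaries between Hilbert spaces. Hence boundedness of $F_{\gamma^*}$ is equivalent to boundedness of $F$ with the same norm, and specializing to $\gamma^* = 0$ (where $\mathfrak{h}_0(\Hil) = L^2(\BZ,\Hil)$ as Hilbert spaces by Lemma~\ref{appendix:equivariant_operators:lem:weighted_Hilbert_spaces_translated_Brillouin_zones}~(1)) gives the remaining equality. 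Equivalence across different $\gamma^*$ follows identically via $U_{\gamma^*,\tilde{\gamma}^*}$.

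For part~(2), for the direction ``(a)--(c) bounded $\Rightarrow$ $\widehat{F}$ bounded'', I will use characterization~(d) via $\widehat{F} = V'^{-1}_{\Gamma^*}\,(F_{\gamma^*})_{\gamma^*\in\Gamma^*}\,V_{\Gamma^*}$ from equation~\eqref{appendix:equivariant_operators:eqn:relation_widehat_F_collection_F_gamma_ast}. For $\widehat{\Psi} \in H^{q+q'}_{\Fourier}(\R^d,\Hil) \cap \domain(\widehat{F})$ with restrictions $\psi^{\gamma^*} := \widehat{\Psi}\vert_{\BZ_{\gamma^*}}$, I will estimate
\begin{align*}
	\bnorm{\widehat{F}\widehat{\Psi}}_{L^2(\R^d,\Hil')}^2 = \sum_{\gamma^* \in \Gamma^*} \bnorm{F_{\gamma^*}\psi^{\gamma^*}}_{L^2(\BZ_{\gamma^*},\Hil')}^2
\end{align*}
by chaining the norm equivalence~\eqref{appendix:equivariant_operators:eqn:equivalence:h_gammaStar_L2_BZ_gammaStar_norms} on both the source and target side, the uniform bound $\snorm{F_{\gamma^*}}_{\mathcal{B}(\mathfrak{h}_{\gamma^*}(\Hil),\mathfrak{h}_{\gamma^*}(\Hil'))} = \snorm{F}$ from part~(1), and the tempered growth bounds of Assumption~\ref{equivariant_calculus:assumption:setting}~(c), yielding a factor $\bnorm{\tau'(\gamma^*)}\,\bnorm{\tau(\gamma^*)^{-1}} \leq C\,\sexpval{\gamma^*}^{q+q'}$. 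Recognizing $\sexpval{\gamma^*}^{q+q'} \sim \sexpval{k}^{q+q'}$ uniformly on $\BZ_{\gamma^*}$ (compare the elementary estimate inside the proof of Lemma~\ref{appendix:equivariant_operators:lem:direct_sum_weighted_Hilbert_spaces_translated_Brillouin_zones}~(1)) then gives $\snorm{\widehat{F}\widehat{\Psi}}_{L^2}^2 \leq C'\,\snorm{F}^2\,\snorm{\widehat{\Psi}}_{H^{q+q'}_{\Fourier}}^2$.

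For the converse ``$\widehat{F}$ bounded $\Rightarrow$ (a)--(c) bounded'', it suffices by part~(1) to control $F_0$. For any $\psi_0 \in \domain(F_0) \subseteq L^2(\BZ,\Hil)$, extend by zero to $\widehat{\Psi} := \psi_0 \cdot 1_{\BZ} \in L^2(\R^d,\Hil)$; since the support lies in the bounded set $\BZ$, we have $\snorm{\widehat{\Psi}}_{H^{q+q'}_{\Fourier}} \leq (\sup_{k \in \BZ} \sexpval{k}^{q+q'})\,\snorm{\psi_0}_{L^2(\BZ,\Hil)}$, and by construction $\widehat{\Psi} \in \domain(\widehat{F})$ with $V'_{\Gamma^*}(\widehat{F}\widehat{\Psi}) = (F_0\psi_0, 0, 0, \ldots)$. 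Combining these observations yields
\begin{align*}
	\bnorm{F_0 \psi_0}_{L^2(\BZ,\Hil')} = \bnorm{\widehat{F}\widehat{\Psi}}_{L^2(\R^d,\Hil')} \leq \bnorm{\widehat{F}}\,\bnorm{\widehat{\Psi}}_{H^{q+q'}_{\Fourier}} \leq C\,\bnorm{\widehat{F}}\,\bnorm{\psi_0}_{L^2(\BZ,\Hil)}
	,
\end{align*}
so $F_0$ extends by density to a bounded operator, and the proof is complete. The only mildly delicate step is verifying that the summation over $\gamma^*$ in the forward direction of (2) can be uniformly absorbed into the Sobolev weight; once the tempered growth estimate on $\tau^{(\prime)}$ is combined with the plaquette-level comparison $\sexpval{\gamma^*} \sim \sexpval{k}$ on $\BZ_{\gamma^*}$, this becomes a routine bookkeeping exercise.
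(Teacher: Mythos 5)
Your proof is correct and follows essentially the same route as the paper: for part~(1) the unitaries $U_{\gamma^*}$ and $U_{\gamma^*,\tilde{\gamma}^*}$, and for part~(2) the block-diagonal decomposition over plaquettes together with the tempered-growth factor $\sexpval{\gamma^*}^{q+q'}$ being absorbed into the Sobolev weight via $\sexpval{\gamma^*} \sim \sexpval{k}$ on $\BZ_{\gamma^*}$. The only cosmetic difference is in the direction "$\widehat{F}$ bounded $\Rightarrow F_0$ bounded", where you test against the zero-extension $\psi_0 \, 1_{\BZ}$ while the paper routes the same computation through the identification of $H^{q+q'}_{\Fourier}(\R^d,\Hil)$ with $\mathfrak{h}_{\Gamma^*}^{q+q'}(\Hil)$; both amount to restricting to vectors supported in a single fundamental cell.
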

\begin{proof}
	For this proof, we need to distinguish between $\mathfrak{h}_{\gamma^*}(\Hil)$ and $L^2(\BZ_{\gamma^*},\Hil)$ since we claim equality of specific operator norms. Thus, \textbf{we will work in the category of Hilbert spaces}. 
	\begin{enumerate}[(1)]
		\item Suppose we are given a single operator $F_{\gamma^*}$ and two group actions $\tau^{(\prime)}$ as in characterization~(a), where $\gamma^*$ indicates the location of the unit cell. Then we may use the equivariance condition~\eqref{equivariant_calculus:eqn:equivariance_condition_plaquettes} to translate it back to the origin, and obtain the norm estimate 
		\begin{align*}
			\bnorm{F_{\gamma^*}}_{\mathcal{B}(\mathfrak{h}_{\gamma^*}(\Hil) , \mathfrak{h}_{\gamma^*}(\Hil'))} &= \sup_{\substack{\psi_{\gamma^*} \in \mathfrak{h}_{\gamma^*}(\Hil) \\ \norm{\psi_{\gamma^*}}_{\mathfrak{h}_{\gamma^*}(\Hil)} = 1}} \bnorm{F_{\gamma^*} \psi_{\gamma^*}}_{\mathfrak{h}_{\gamma^*}(\Hil')}
			\\
			&= \sup_{\substack{\psi_0 \in L^2(\BZ,\Hil) \\ \norm{\psi_0}_{L^2(\BZ,\Hil)} = 1}} \bnorm{U'_{0,\gamma^*} \, F_{\gamma^*} \, U_{0,\gamma^*}^{-1} \psi_0}_{L^2(\BZ,\Hil')}
			\\
			&= \sup_{\substack{\psi_0 \in L^2(\BZ,\Hil) \\ \norm{\psi_0}_{L^2(\BZ,\Hil)} = 1}} \bnorm{F_0 \psi_0}_{L^2(\BZ,\Hil')}
			\\
			&= \bnorm{F_0}_{\mathcal{B}(L^2(\BZ,\Hil) , \mathcal{B}(L^2(\BZ,\Hil'))}
			. 
		\end{align*}
		Note that the Hilbert spaces $\mathfrak{h}_0(\Hil^{(\prime)}) = L^2(\BZ_0,\Hil^{(\prime)}) \equiv L^2(\BZ,\Hil^{(\prime)})$ agree since the weight operator $\tau^{(\prime)}(0) = \id_{\Hil^{(\prime)}}$ reduces to the identity. So clearly, $F_{\gamma^*}$ is bounded if and only if $F_0$ is, and their norms agree. As $\gamma^*$ had been chosen arbitrarily, we deduce the boundedness of each of the operators in the collection $(F_{\gamma^*})_{\gamma^* \in \Gamma^*}$ from characterization~(b). 
		
		Now we move on to characterization~(c): our choice of scalar product on $L^2_{\eq}(\R^d,\Hil^{(\prime)})$ picks the unit cell at the origin $0 \in \Gamma^*$ in its definition, which links it to the operator $F_0$ from (b). The operators $F$ and $F_0$ are related by \eqref{appendix:equivariant_operators:eqn:relation_F_L2_eq_F_gamma_ast} for the special choice $\gamma^* = 0 \in \Gamma^*$. Since the maps that facilitate the change of representation are unitary (Lemma~\ref{appendix:equivariant_operators:lem:weighted_Hilbert_spaces_translated_Brillouin_zones}~(3)), the norms of $F$ and $F_0$ necessarily agree, 
		\begin{align*}
			\bnorm{F_0}_{\mathcal{B}(L^2(\BZ,\Hil),L^2(\BZ,\Hil'))} = \snorm{F}_{\mathcal{B}(L^2_{\eq}(\R^d,\Hil),L^2_{\eq}(\R^d,\Hil'))}
			. 
		\end{align*}
		\item $\Longrightarrow$: Suppose $\widehat{F} : H^{q + q'}_{\Fourier}(\R^d,\Hil) \longrightarrow L^2(\R^d,\Hil')$ is bounded, where $q \geq 0$ and $q' \geq 0$ are the orders of growth of the group actions $\tau$ and $\tau'$. 
		Then for any $(\psi_{\gamma^*})_{\gamma^* \in \Gamma^*} \in \mathfrak{h}_{\Gamma^*}^{q+q'}(\Hil)$ we can estimate 
		\begin{align*}
			\bnorm{F_0 \psi_0}_{L^2(\BZ,\Hil')}^2 &\leq \sum_{\gamma^* \in \Gamma^*} \bnorm{F_{\gamma^*} \psi_{\gamma^*}}_{L^2(\BZ_{\gamma^*},\Hil')}^2 
			\\
			&= \bnorm{(F_{\gamma^*})_{\gamma^* \in \Gamma^*} \, (\psi_{\gamma^*})_{\gamma^* \in \Gamma^*}}_{\mathcal{B}(\mathfrak{h}^{q+q'}_{\Gamma^*}(\Hil),\bigoplus_{\gamma^* \in \Gamma^*} L^2(\BZ_{\gamma^*},\Hil'))}^2 
			. 
		\end{align*}
		The covariant collection $(F_{\gamma^*})_{\gamma^* \in \Gamma^*}$ is related to $\widehat{F}$ via equation~\eqref{appendix:equivariant_operators:eqn:relation_widehat_F_collection_F_gamma_ast} by the unitaries 
		\begin{align*}
			V_{\Gamma^*}^{(\prime)} : L^2(\R^d,\Hil^{(\prime)}) \longrightarrow \bigoplus_{\gamma^* \in \Gamma^*} L^2(\BZ_{\gamma^*},\Hil^{(\prime)})
		\end{align*}
		from the proof of Proposition~\ref{appendix:equivariant_operators:prop:characterization_equivariant_operators}. Moreover, the image 
		\begin{align*}
			V_{\Gamma^*}^{-1} \bigl ( \mathfrak{h}_{\Gamma^*}^{q + q'}(\Hil) \bigr ) = H^{q + q'}_{\Fourier}(\R^d,\Hil) 
		\end{align*}
		is nothing by the Sobolev space of order $q + q'$ and the restriction of the unitary 
		\begin{align*}
			U_{\Gamma^*}^{q + q'} := V_{\Gamma^*} \big \vert_{H^{q + q'}_{\Fourier}(\R^d,\Hil)} :  H^{q + q'}_{\Fourier}(\R^d,\Hil) \longrightarrow \mathfrak{h}_{\Gamma^*}^{q + q'}(\Hil)
		\end{align*}
		is unitary with respect to the weighted scalar products (\cf Corollary~\ref{appendix:equivariant_operators:cor:direct_sum_weighted_Hilbert_spaces_translated_Brillouin_zones}). 
		
		Consequently, the norms 
		\begin{align*}
			\bnorm{(F_{\gamma^*})_{\gamma^* \in \Gamma^*}}_{\mathcal{B}(\mathfrak{h}_{\Gamma^*}^{q + q'}(\Hil),\bigoplus_{\gamma^* \in \Gamma^*} L^2(\BZ_{\gamma^*},\Hil'))} = \norm{\widehat{F}}_{\mathcal{B}(H^{q+q'}_{\Fourier}(\R^d,\Hil),L^2(\R^d,\Hil'))} < \infty 
		\end{align*}
		coincide, which shows that $\norm{F_0}_{\mathcal{B}(L^2(\BZ,\Hil),L^2(\BZ,\Hil'))} < \infty$ is finite. By (1), the norms of the $F_{\gamma^*}$ are also necessarily finite. 
		\medskip
		
		\noindent
		$\Longleftarrow$: Suppose one of the operators from characterizations~(a)–(c) is bounded. By (1) we may therefore assume that the covariant family of operators $(F_{\gamma^*})_{\gamma^* \in \Gamma^*}$ is bounded. 
		
		Then with the help of the unitary $V_{\Gamma^*}$ and equation~\eqref{appendix:equivariant_operators:eqn:relation_widehat_F_collection_F_gamma_ast} we can rewrite the square of the norm as 
		\begin{align*}
			&\snorm{\widehat{F}}_{\mathcal{B}(H^{q+q'}_{\Fourier}(\R^d,\Hil),L^2(\R^d,\Hil'))}^2 
			= \\
			&\qquad 
			= \sup_{\snorm{\Psi}_{H^{q+q'}_{\Fourier}(\R^d,\Hil)} = 1} \bnorm{\widehat{F} \Psi}_{L^2(\R^d,\Hil')}^2 
			\\
			&\qquad 
			= \sup_{\snorm{\Phi}_{L^2(\R^d,\Hil)} = 1} \bnorm{\widehat{F} \, \sexpval{K}^{-(q+q')} \Phi}_{L^2(\R^d,\Hil')}^2 
			\\
			&\qquad 
			= \sum_{\gamma^* \in \Gamma^*} \bnorm{F_{\gamma^*} \, \sexpval{K}^{-(q+q')} \pi_{\gamma^*}(\Phi)}_{L^2(\BZ_{\gamma^*},\Hil')}^2
			\\
			&\qquad 
			\leq \sum_{\gamma^* \in \Gamma^*} \bnorm{F_{\gamma^*}}_{\mathcal{B}(L^2(\BZ_{\gamma^*},\Hil),L^2(\BZ_{\gamma^*},\Hil'))}^2 \, \bnorm{\sexpval{K}^{-(q+q')}}_{\mathcal{B}(L^2(\BZ_{\gamma^*}))}^2 \bnorm{\pi_{\gamma^*}(\Phi)}_{L^2(\BZ_{\gamma^*},\Hil)}^2
			. 
		\end{align*}
		Let us discuss each of the factors in turn. First of all, we note that the maps 
		\begin{align*}
			\id_{L^2(\BZ_{\gamma^*})} \otimes \tau^{(\prime)}(\gamma^*) : \mathfrak{h}_{\gamma^*}(\Hil^{(\prime)}) \longrightarrow L^2(\BZ_{\gamma^*},\Hil^{(\prime)})
		\end{align*}
		are unitary. Therefore, we obtain a uniform bound for the norm of $F_{\gamma^*}$, 
		\begin{align*}
			&\bnorm{F_{\gamma^*}}_{\mathcal{B}(L^2(\BZ_{\gamma^*},\Hil),L^2(\BZ_{\gamma^*},\Hil'))} 
			= \\
			&\qquad 
			= \bnorm{\bigl ( \id_{L^2(\BZ_{\gamma^*})} \otimes \tau'(\gamma^*) \bigr )^{-1} \, F_{\gamma^*} \, \bigl ( \id_{L^2(\BZ_{\gamma^*})} \otimes \tau(\gamma^*) \bigr )}_{\mathcal{B}(\mathfrak{h}_{\gamma^*}(\Hil),\mathfrak{h}_{\gamma^*}(\Hil'))}
			\\
			&\qquad 
			\leq \bnorm{\tau'(\gamma^*)^{-1}}_{\mathcal{B}(\Hil')} \, \bnorm{\tau(\gamma^*)}_{\mathcal{B}(\Hil)} \, \bnorm{F_{\gamma^*}}_{\mathcal{B}(\mathfrak{h}_{\gamma^*}(\Hil),\mathfrak{h}_{\gamma^*}(\Hil'))}
			\\
			&\qquad 
			\leq C_{\tau} \, C_{\tau'} \, \sexpval{\gamma^*}^{q + q'} \, \bnorm{F_0}_{\mathcal{B}(L^2(\BZ,\Hil),L^2(\BZ,\Hil'))} 
			, 
		\end{align*}
		exploiting our assumption that $\tau$ and $\tau'$ are of orders $q$ and $q'$ as well as (1). Note that compared to (1), we use different norms for $\mathfrak{h}_{\gamma^*}(\Hil^{(\prime)})$ and $L^2(\BZ_{\gamma^*},\Hil^{(\prime)})$, the extra factors come precisely from the weights $\tau^{(\prime)}(\gamma^*)^{-1}$ that relate the two norms. 
		
		Moreover, for $\gamma^* \in \Gamma^*$ with $\sabs{\gamma^*} \geq R$ large enough, we can bound 
		\begin{align*}
			\tfrac{1}{2} \, \sabs{\gamma^*} \leq \sup_{k \in \BZ_{\gamma^*}} \sabs{k} 
			= \bnorm{\sabs{K}}_{\mathcal{B}(L^2(\BZ_{\gamma^*}))} 
			\leq 2 \, \sabs{\gamma^*} 
		\end{align*}
		from above and below, which immediately leads to the estimate 
		\begin{align*}
			\tfrac{1}{2^{q + q'}} \, \sexpval{\gamma^*}^{\pm (q + q')} \leq \sup_{k \in \BZ_{\gamma^*}} \sexpval{k}^{\pm (q + q')} = \bnorm{\sexpval{K}^{\pm (q + q')}}_{\mathcal{B}(L^2(\BZ_{\gamma^*}))} \leq 2^{q + q'} \, \sexpval{\gamma^*}^{\pm (q + q')}
		\end{align*}
		for all $\gamma^* \in \Gamma^*$ with $\sabs{\gamma^*} \geq R$ and $k \in \BZ_{\gamma^*}$. The growth of the norm due to the presence of $\tau$ and $\tau'$ is exactly canceled by the presence of $\sexpval{K}^{-(q + q')}$. Putting all the pieces together yields 
		\begin{align*}
			&\bnorm{F_{\gamma^*}}_{\mathcal{B}(L^2(\BZ_{\gamma^*},\Hil),L^2(\BZ_{\gamma^*},\Hil'))} 
			= \\
			&\qquad 
			\leq 4^{q + q'} \, C_R \, C_{\tau}^2 \, C_{\tau'}^2 \, \bnorm{F_0}_{\mathcal{B}(L^2(\BZ,\Hil),L^2(\BZ,\Hil'))}^2 \, \sup_{\snorm{\Phi}_{L^2(\R^d,\Hil)} = 1} \sum_{\gamma^* \in \Gamma^*} \bnorm{\pi_{\gamma^*}(\Phi)}_{L^2(\BZ_{\gamma^*},\Hil)}^2
			\\
			&\qquad 
			= 4^{q + q'} \, C_R \, C_{\tau}^2 \, C_{\tau'}^2 \, \bnorm{F_0}_{\mathcal{B}(L^2(\BZ,\Hil),L^2(\BZ,\Hil'))}^2 
			, 
		\end{align*}
		where $C_R > 0$ is a constant that takes into account that the estimate on the norm of $\sexpval{K}^{-(q + q')}$ only holds for sufficiently large $\sabs{\gamma^*} \geq R$. Thus, the norm of the operator $\widehat{F} : H^{q + q'}_{\Fourier}(\R^d,\Hil) \longrightarrow L^2(\R^d,\Hil')$ is bounded. 
	\end{enumerate}
\end{proof}
%

\subsection{Equivalence for operators between magnetic Sobolev spaces} 
\label{appendix:equivariant_operators:operators_magnetic_Sobolev_spaces}
Given the context, \emph{magnetic} pseudodifferential theory, many of the operators are naturally defined on  \emph{magnetic} Sobolev spaces. These give rise to \emph{bounded} operators when we regard them as maps between the corresponding magnetic Sobolev and $L^2$-spaces. 
\begin{corollary}\label{appendix:equivariant_operators:cor:equivalence_boundedness_equivariant_operators}
	Suppose we are in the setting of Proposition~\ref{appendix:equivariant_operators:prop:characterization_equivariant_operators}. 
	Further, assume we are given a bounded operator 
	\begin{align*}
		\widehat{F}^A : H^{q + q'}_{\Fourier,A}(\R^d,\Hil) \longrightarrow L^2(\R^d,\Hil') 
	\end{align*}
	from the \emph{magnetic} Sobolev space of order $q + q'$ that is covariant in the sense of characterization~(d). 
	\begin{enumerate}[(1)]
		\item Then $\widehat{F}^A$ defines the bounded operators 
		\begin{align*}
			F^A_{\gamma^*} &: H^{q + q'}_{\eq,\Fourier,A}(\BZ_{\gamma^*},\Hil) \longrightarrow L^2(\BZ_{\gamma^*},\Hil')
			, 
			&&
			\gamma^* \in \Gamma^*
			, 
			\\
			F^A &: H^{q + q'}_{\eq,\Fourier,A}(\R^d,\Hil) \longrightarrow L^2_{\eq}(\R^d,\Hil') 
			. 
		\end{align*}
		\item When $q + q' = 0$ or all the components of $A$ are bounded, then the operators $F^A_{\gamma^*}$ and $F^A$ define bounded operators between the corresponding $L^2$ spaces. 
		\item These operators and their relations are gauge-covariant in the following sense: given any $\vartheta \in \Cont^{\infty}_{\mathrm{pol}}(\R^d,\R)$, then the operators in the gauge $A' = A + \eps \, \dd \vartheta$ are related to those in the gauge $A$ via $\e^{+ \ii \lambda \vartheta(Q)}$ and $\e^{+ \ii \lambda \vartheta(\Reps)}$, respectively, 
		\begin{align*}
			\widehat{F}^{A'} = \e^{+ \ii \lambda \vartheta(Q)} \, \widehat{F}^A \, \e^{- \ii \lambda \vartheta(Q)} &: H^{q + q'}_{\eq,\Fourier,A'}(\R^d,\Hil) \longrightarrow L^2_{\eq}(\R^d,\Hil') 
			, 
			\\
			F^{A'} = \e^{+ \ii \lambda \vartheta(\Reps)} \, F^A \, \e^{- \ii \lambda \vartheta(\Reps)} &: H^{q + q'}_{\eq,\Fourier,A'}(\R^d,\Hil) \longrightarrow L^2_{\eq}(\R^d,\Hil') 
			, 
			\\
			F_{\gamma^*}^{A'} = \e^{+ \ii \lambda \vartheta(\Reps)} \big \vert_{\BZ_{\gamma^*}} \, F_{\gamma^*}^A \, \e^{- \ii \lambda \vartheta(\Reps)} \big \vert_{\BZ_{\gamma^*}} &: H^{q + q'}_{\eq,\Fourier,A'}(\BZ_{\gamma^*},\Hil) \longrightarrow \mathfrak{h}_{\gamma^*}(\Hil^{(\prime)})
			. 
		\end{align*}
		Here, $\e^{+ \ii \lambda \vartheta(\Reps)} \big \vert_{\BZ_{\gamma^*}}$ denotes the unique operator obtained from the identification $U^{(\prime)}_{\gamma^*} : L^2_{\eq}(\R^d,\Hil^{(\prime)}) \longrightarrow \mathfrak{h}_{\gamma^*}(\Hil^{(\prime)})$ (\cf Lemma~\ref{appendix:equivariant_operators:lem:weighted_Hilbert_spaces_translated_Brillouin_zones}~(3)). 
	\end{enumerate}
\end{corollary}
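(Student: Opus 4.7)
The strategy is to reduce the three claims to the already-proven Propositions~\ref{appendix:equivariant_operators:prop:characterization_equivariant_operators} and~\ref{appendix:equivariant_operators:prop:equivalence_boundedness_equivariant_operators}, reading every occurrence of the non-magnetic weight $\sexpval{\hat{k}}^{q+q'}$ in the latter as its magnetic analog $\sexpval{\KA}^{q+q'}$. For part~(1), I would first invoke Proposition~\ref{appendix:equivariant_operators:prop:characterization_equivariant_operators} to extract, from the equivariant operator $\widehat{F}^A$, the densely-defined operators $F^A$ on $L^2_{\eq}(\R^d,\Hil)$ and the covariant family $(F^A_{\gamma^*})_{\gamma^* \in \Gamma^*}$; no boundedness is asserted at this stage. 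To upgrade these to bounded maps between the magnetic equivariant Sobolev space and $L^2_{\eq}$, I would repeat the \emph{forward} direction of the proof of Proposition~\ref{appendix:equivariant_operators:prop:equivalence_boundedness_equivariant_operators}(2) verbatim. The key identity is that, by definition, $H^{q+q'}_{\Fourier,A}(\R^d,\Hil) = \sexpval{\KA}^{-(q+q')} \, L^2(\R^d,\Hil)$ and $H^{q+q'}_{\eq,\Fourier,A}(\R^d,\Hil) = \sexpval{\KA}^{-(q+q')} \, L^2_{\eq}(\R^d,\Hil)$; consequently, the norm of $F^A_0 : H^{q+q'}_{\eq,\Fourier,A}(\BZ,\Hil) \to L^2(\BZ,\Hil')$ transported by the unitary $U_0$ from Lemma~\ref{appendix:equivariant_operators:lem:weighted_Hilbert_spaces_translated_Brillouin_zones}~(3) coincides with the norm of $\widehat{F}^A$ restricted to the equivariant sector, and the remaining family members $F^A_{\gamma^*}$ are then obtained by covariance.

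For part~(2), when $q+q'=0$ the magnetic Sobolev spaces trivially coincide with the corresponding $L^2$ spaces. When the components $A_j \in L^\infty(\R^d,\R)$ are bounded, the operators $\KA_j = \hat{k}_j - \lambda A_j(\Reps)$ restrict to bounded operators on $L^2(\BZ_{\gamma^*},\Hil)$ because $\hat{k}_j$ is bounded on the compact cell and $A_j(\Reps)$ is bounded by assumption; hence $H^{q+q'}_{\eq,\Fourier,A}(\BZ,\Hil)$ and $L^2(\BZ,\Hil)$ agree as Banach spaces and the asserted boundedness between $L^2$ spaces follows from part~(1). For part~(3), gauge covariance is inherited from Proposition~\ref{operator_valued_calculus:prop:extension_OpA_tempered_distributions}~(3): the identity $\widehat{F}^{A+\eps \dd\vartheta} = \e^{+\ii\lambda\vartheta(Q)}\,\widehat{F}^A\,\e^{-\ii\lambda\vartheta(Q)}$, after Fourier transform to the $k$-representation, becomes conjugation by $\e^{+\ii\lambda\vartheta(\Reps)}$. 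The same unitary maps $H^{q+q'}_{\eq,\Fourier,A'}(\R^d,\Hil)$ isomorphically onto $H^{q+q'}_{\eq,\Fourier,A}(\R^d,\Hil)$ because it intertwines $\KA$ with $\mathsf{K}^{A'}$; this follows from the canonical commutation relation $[\hat{k}_j,\vartheta(\Reps)] = -\ii\eps\,(\partial_j\vartheta)(\Reps)$ and a short Baker--Campbell--Hausdorff computation showing that all higher commutators vanish. Passing to $F^{A'}$ and $F^{A'}_{\gamma^*}$ is then a matter of applying the identifications $U_0$ and $U_{\gamma^*}$ to both sides of this identity.

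The main obstacle will be ensuring that the forward direction of Proposition~\ref{appendix:equivariant_operators:prop:equivalence_boundedness_equivariant_operators}(2) survives the replacement of the multiplicative weight $\sexpval{\hat{k}}^{q+q'}$ by the pseudodifferential weight $\sexpval{\KA}^{q+q'}$: in the non-magnetic case the weight is block-diagonal in the plaquette decomposition $L^2(\R^d,\Hil) = \bigoplus_{\gamma^*} L^2(\BZ_{\gamma^*},\Hil)$, whereas $\sexpval{\KA}^{q+q'}$ is not, because the term $\lambda A(\Reps)$ is differential in $k$ and couples neighbouring cells. Fortunately one does not need to redo the reverse direction --- $\widehat{F}^A$ is bounded in the appropriate magnetic norms by hypothesis --- and the forward argument only exploits the block-diagonality of $\widehat{F}^A$ itself (guaranteed by the equivariance condition~\eqref{appendix:equivariant_operators:eqn:equivariance_condition_L2_Rd} and the domain condition~\eqref{appendix:equivariant_operators:eqn:equivariance_condition_domain_L2_Rd}), not of the weight. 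Hence the plaquette-by-plaquette norm bound on each $F^A_{\gamma^*}$ can be extracted directly from the operator norm of $\widehat{F}^A$ acting on the equivariant sector, and the rest is bookkeeping.
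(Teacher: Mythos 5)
There is a genuine gap in your argument for part~(1), and it sits exactly at the obstacle you name in your final paragraph and then wave away. You claim that the forward direction of Proposition~\ref{appendix:equivariant_operators:prop:equivalence_boundedness_equivariant_operators}~(2) ``only exploits the block-diagonality of $\widehat{F}^A$ itself, not of the weight,'' so that the plaquette-by-plaquette bound on $F^A_{\gamma^*}$ can be ``extracted directly'' from the norm of $\widehat{F}^A$. That is not so. The hypothesis only controls $\widehat{F}^A$ on the \emph{global} magnetic Sobolev space $H^{q+q'}_{\Fourier,A}(\R^d,\Hil)$, so to bound $\bnorm{F^A_{\gamma^*}\psi_{\gamma^*}}$ for $\psi_{\gamma^*} \in H^{q+q'}_{\eq,\Fourier,A}(\BZ_{\gamma^*},\Hil)$ you must produce an element of $H^{q+q'}_{\Fourier,A}(\R^d,\Hil)$ that agrees with $\psi_{\gamma^*}$ on $\BZ_{\gamma^*}$ and whose global magnetic Sobolev norm is controlled by the cell norm of $\psi_{\gamma^*}$. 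In the non-magnetic case this is trivial: the weight $\sexpval{\hat{k}}^{q+q'}$ is multiplicative, the Sobolev space decomposes as $\mathfrak{h}^{q+q'}_{\Gamma^*}(\Hil)$, and extending by zero works. In the magnetic case with $q+q'>0$ neither mechanism is available: extension by zero creates a jump at $\partial\BZ_{\gamma^*}$ that is not in the domain of $\sexpval{P^A_{\Fourier}}^{q+q'}$, and $H^{q+q'}_{\Fourier,A}(\R^d,\Hil)$ does \emph{not} split as a weighted direct sum over plaquettes precisely because $A(\ii\eps\nabla_k)$ couples neighbouring cells. (Your identity $H^{q+q'}_{\Fourier,A}(\R^d,\Hil)=\sexpval{\KA}^{-(q+q')}L^2(\R^d,\Hil)$ is also off: the non-equivariant space is defined via $P^A_{\Fourier}$, not via the equivariant $\KA$. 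And ``the norm of $\widehat{F}^A$ restricted to the equivariant sector'' is not meaningful, since $L^2_{\eq}(\R^d,\Hil)$ is not a subspace of $L^2(\R^d,\Hil)$; the paper explicitly warns that the norms of $\widehat{F}^A$ and $F^A$ are not simply related.)

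The paper closes this gap with a dedicated construction: the cutoff embedding $\imath_{\chi}:H^{q+q'}_{\eq,\Fourier,A}(\BZ_{\gamma^*},\Hil)\longrightarrow H^{q+q'}_{\Fourier,A}(\R^d,\Hil)$, $\psi_{\gamma^*}\mapsto \chi\cdot U_{\gamma^*}^{-1}\psi_{\gamma^*}$, whose continuity is the content of Lemma~\ref{appendix:equivariant_operators:lem:continuity_inclusion_map_L2_BZ_gamma_ast_L2_Rd}. With $\bigl(\widehat{F}^A\imath_{\chi}(\psi_{\gamma^*})\bigr)\big\vert_{\BZ_{\gamma^*}}=F^A_{\gamma^*}\psi_{\gamma^*}$ the desired bound follows in one line. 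The continuity of $\imath_{\chi}$ is itself nontrivial — it requires factoring $\imath_{\chi,\R^d}=M_{\chi}\,\pi_{\Lambda_{\chi}}$ through a Dirichlet magnetic Sobolev space on a finite union of cells and proving that the multiplication operator $\chi(\hat{k})$ is bounded on $H^{q+q'}_{\Fourier,A}(\R^d,\Hil)$, which is done via the invertible pseudodifferential weights $w_{\pm m}$ and the magnetic Calder\'on--Vaillancourt theorem. This is the real technical content of part~(1), and your proposal omits it. Parts~(2) and~(3) of your proposal match the paper's argument and are fine.
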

In order to connect the norms of the operators $\widehat{F}^A$ with those of $F^A$ and $F^A_{\gamma^*}$, $\gamma^* \in \Gamma^*$, we will need a way to identify suitable elements of the \emph{equivariant} magnetic Sobolev space $H^m_{\eq,\Fourier,A}(\R^d,\Hil)$ with elements of the non-equivariant magnetic Sobolev space $H^m_{\Fourier,A}(\R^d,\Hil)$. This is facilitated through the following injections: 
\begin{definition}\label{appendix:equivariant_operators:defn:inclusion_map_L2_BZ_gamma_ast_L2_Rd}
	Fix $\gamma^* \in \Gamma^*$. Let $\chi \in \Cont^{\infty}_{\mathrm{c}}(\R^d,\R)$ be a cutoff function, namely $\ran \chi = [0,1]$, for the unit cell located at $\BZ_{\gamma^*}$, \ie $\chi \vert_{\BZ_{\gamma^*}} = 1$. 
	Then we define the maps 
	\begin{align*}
		\imath_{\chi,\R^d} &: L^2_{\eq}(\R^d,\Hil) \longrightarrow L^2(\R^d,\Hil) 
		, \quad
		\bigl ( \imath_{\chi,\R^d}(\psi) \bigr )(k) := \chi(k) \, \psi(k) 
		\\
		\imath_{\chi} &: L^2(\BZ_{\gamma^*},\Hil) \longrightarrow L^2(\R^d,\Hil) 
		, \quad
		\imath_{\chi} := \imath_{\chi,\R^d} \circ U_{\gamma^*}^{-1} 
		. 
	\end{align*}
\end{definition}
Of course, we will have to show that these maps are well defined (case $m = 0$ below) and continuous, and that they restrict to continuous maps between magnetic Sobolev spaces. 
\begin{lemma}\label{appendix:equivariant_operators:lem:continuity_inclusion_map_L2_BZ_gamma_ast_L2_Rd}
	Fix any $\gamma^* \in \Gamma^*$ and $m \geq 0$. Then the maps 
	\begin{align*}
		\imath_{\chi} : H^m_{\eq,\Fourier,A}(\BZ_{\gamma^*},\Hil) \longrightarrow H^m_{\Fourier,A}(\R^d,\Hil) 
		, 
		\\
		\imath_{\chi,\R^d} : H^m_{\eq,\Fourier,A}(\R^d,\Hil) \longrightarrow H^m_{\Fourier,A}(\R^d,\Hil) 
		,
	\end{align*}
	between magnetic Sobolev spaces are bounded, \ie continuous. 
\end{lemma}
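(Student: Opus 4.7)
Because $\imath_\chi = \imath_{\chi,\R^d} \circ U_{\gamma^*}^{-1}$ and, by Lemma~\ref{appendix:equivariant_operators:lem:weighted_Hilbert_spaces_translated_Brillouin_zones}~(3), $U_{\gamma^*}^{-1}$ is a topological isomorphism between $H^m_{\eq,\Fourier,A}(\BZ_{\gamma^*},\Hil)$ and $H^m_{\eq,\Fourier,A}(\R^d,\Hil)$ (it preserves the equivariant magnetic Sobolev norm by construction), it is enough to prove the boundedness of $\imath_{\chi,\R^d}$.

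I would first settle the case $m=0$ by an elementary estimate. Since $\chi \in \Cont^{\infty}_{\mathrm{c}}(\R^d,\R)$, its support meets only finitely many translated fundamental cells $\BZ_{\gamma_1^*},\dots,\BZ_{\gamma_N^*}$. Applying the equivariance $\psi(k) = \tau(\gamma_i^*)\,\psi(k+\gamma_i^*)$ valid on $\BZ_{\gamma_i^*}$, together with the tempered growth of $\tau$, one obtains
$$\bnorm{\chi\,\psi}_{L^2(\R^d,\Hil)}^{2} \;\leq\; \snorm{\chi}_{\infty}^{2} \,\Bigl(\sum_{i=1}^{N} \bnorm{\tau(\gamma_i^*)}_{\mathcal{B}(\Hil)}^{2}\Bigr)\; \snorm{\psi}_{L^2(\BZ,\Hil)}^{2},$$
and the right-hand side is finite because the sum is finite. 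This settles the $m=0$ claim.

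To extend to $m>0$, the strategy is to show that multiplication by $\chi(\hat k)$ is, morally, a zero-order (pseudo)differential operator and hence commutes with $\sexpval{\KA}^{m}$ modulo bounded errors. Since $\hat k$ commutes with $\chi(\hat k)$, the entire obstruction comes from $A(\Reps)$ in $\KA = \hat k - \lambda A(\Reps)$. The key computation is $[\Reps_{l},\chi(\hat k)] = \ii\eps\,(\partial_{k_{l}}\chi)(\hat k)$, which is again a smooth, compactly supported function of $\hat k$; iterating this and using that Assumption~\ref{operator_valued_calculus:assumption:bounded_magnetic_fields} makes the derivatives of $A$ polynomially bounded, one shows that each $[A_{j}(\Reps),\chi(\hat k)]$ is bounded on $L^2(\R^d,\Hil)$. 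For integer $m$, an iterated Leibniz rule then expresses $\sexpval{\KA}^{m}\bigl(\chi\,\psi\bigr)$ as $\chi\,\sexpval{\KA}^{m}\psi$ plus a finite sum of terms of the form $R_{\alpha}\,\sexpval{\KA}^{m'}\psi$ with $m' < m$ and $R_{\alpha}$ bounded. Each such term is estimated in $L^2(\R^d,\Hil)$ by applying the $m=0$ bound to a suitable auxiliary cutoff multiplied by $\sexpval{\KA}^{m'}\psi$, which lies in $L^2_{\eq}(\R^d,\Hil)$ by the very definition of $H^{m}_{\eq,\Fourier,A}(\R^d,\Hil)$. Non-integer $m\geq 0$ are then covered by complex interpolation between consecutive integer orders (equivalently, by inserting a Helffer--Sj\"ostrand representation of $\sexpval{\KA}^{m}$).

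The main obstacle is not conceptual but purely bookkeeping: keeping track of the commutator expansion of $\sexpval{\KA}^{m}$ through $\chi(\hat k)$, together with the interpolation step for fractional $m$. Assumption~\ref{operator_valued_calculus:assumption:bounded_magnetic_fields} on the vector potential $A$ is exactly what guarantees that all of the bounded-operator factors produced by the commutator calculus remain bounded, so that the final estimate does close.
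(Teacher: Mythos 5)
Your reduction to $\imath_{\chi,\R^d}$ and your $m=0$ estimate are both fine — the latter is in fact more elementary than the paper's treatment of the localization, which factors $\imath_{\chi,\R^d}$ through a restriction $\pi_{\Lambda_\chi}$ to the finitely many cells meeting $\supp\chi$ and a Dirichlet-type magnetic Sobolev space on that union. The genuine gap is in your step for $m>0$: the claim that each commutator $[A_j(\Reps),\chi(\hat k)]$ is bounded on $L^2(\R^d,\Hil)$ does \emph{not} follow from $A\in\Cont^{\infty}_{\mathrm{pol}}(\R^d,\R^d)$. Conjugating with $\Fourier$, this commutator becomes $[A_j(\eps \hat{x}),\chi(-\ii\nabla_x)]$, whose integral kernel is $\bigl(A_j(\eps x)-A_j(\eps y)\bigr)\,\check{\chi}(x-y)$; a Schur-type bound requires $A_j$ to be Lipschitz, i.e.\ $\nabla A_j\in L^{\infty}$. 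Assumption~\ref{operator_valued_calculus:assumption:bounded_magnetic_fields} only gives \emph{polynomially} bounded derivatives, and already for $A_j$ growing quadratically the commutator contains a term proportional to $(\partial_{k_l}\chi)(\hat k)\,(\partial A_j)(\Reps)$, which is unbounded. Crucially, the offending factor is a function of $\Reps$, the variable conjugate to $\hat k$, so it cannot be traded against a lower power of $\sexpval{\KA}$; your Leibniz expansion therefore does not close, and the interpolation step has nothing to interpolate between.

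The paper avoids commutators altogether: it reduces the matter to the boundedness of $\chi(\hat k)$ on the non-equivariant space $H^m_{\Fourier,A}(\R^d,\Hil)$ and establishes that by conjugating with the invertible elliptic weights $w_{\pm m}(P^A_{\Fourier})$ and invoking the magnetic symbolic calculus — the composed symbol $w_m\Weyl\chi\Weyl w_{-m}$ lies in a Hörmander class of order $0$ by Theorem~\ref{operator_valued_calculus:thm:Weyl_product_Hoermander_symbols}, hence quantizes to a bounded operator by Theorem~\ref{operator_valued_calculus:thm:Calderon_Vaillancourt}. That composition theorem is precisely the mechanism that replaces your commutator estimate. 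To repair your argument you must either add the hypothesis that $\nabla A$ is bounded (true for natural gauges of a bounded $B$, such as the transversal gauge, but not for an arbitrary $A\in\Cont^{\infty}_{\mathrm{pol}}$ with $\dd A=B$) or route the $m>0$ step through the magnetic Weyl product as the paper does.
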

\begin{proof}
	First of all, it suffices to consider the map $\imath_{\chi,\R^d}$, because $U_{\gamma^*}$ restricts to an isomorphism between magnetic Sobolev spaces. Hence, $\imath_{\chi}$ is bounded whenever $\imath_{\chi,\R^d}$ is. 
	
	Secondly, in principle, we need to ensure $\imath_{\chi}$ and $\imath_{\chi,\R^d}$ are well-defined. But this corresponds to the case $m = 0$, and given that we will make the arguments below for $m \geq 0$, it is not necessary to study the case $m = 0$ separately. 
	
	So let $m \geq 0$. The idea of our proof is to write $\imath_{\chi,\R^d} = M_{\chi} \, \pi_{\Lambda_{\chi}}$ as the product of two operators between suitable magnetic Sobolev spaces, and show that both are bounded, 
	\begin{align}
		&\bnorm{\imath_{\chi,\R^d}}_{\mathcal{B}(H^m_{\eq,\Fourier,A}(\R^d,\Hil) , H^m_{\Fourier,A}(\R^d,\Hil))} 
		\leq \notag \\
		&\qquad 
		\leq \bnorm{M_{\chi}}_{\mathcal{B}(H^m_{\eq,\Fourier,A}(\Lambda_{\chi},\Hil) , H^m_{\Fourier,A}(\R^d,\Hil))} \, \bnorm{\pi_{\Lambda_{\chi}}}_{\mathcal{B}(H^m_{\eq,\Fourier,A}(\R^d,\Hil) , H^m_{\eq,\Fourier,A}(\Lambda_{\chi},\Hil))}
		. 
		\label{appendix:equivariant_operators:eqn:imath_chi_Rd_norm_estimate_product_two_operators}
	\end{align}
	Let us begin with a definition of the second operator $\pi_{\Lambda_{\chi}}$. Here, the set 
	\begin{align*}
  \Lambda_{\chi} := \overline{\bigcup_{\tilde{\gamma}^* \in \mathcal{I}_\chi} \BZ_{\tilde{\gamma}^*}} 
		\subset \R^d 
	\end{align*}
	is the closure of the union of unit cells that has non-empty intersection with $\supp \chi$, defined through the set of reciprocal lattice vectors lying inside $\supp \chi$, 
	\begin{align*}
		\mathcal{I}_\chi := \Bigl \{ \tilde{\gamma}^* \in \Gamma^* \; \; \big \vert \; \; \supp \chi \cap \overline{\BZ_{\tilde{\gamma}^*}} \neq \emptyset \Bigr \} 
		\subset \Gamma^* 
		. 
	\end{align*}
	The cardinality of $\mathcal{I}_\chi$ is merely the number of Brillouin zones that intersect with the support of $\chi$; as the support is compact by assumption, $\sabs{\mathcal{I}_\chi} < \infty$ is necessarily finite. 
	To make sure that $\supp \chi \subset \mathrm{Int}(\Lambda_{\chi})$ lies in the \emph{interior} of $\Lambda_{\chi}$, a fact we will later need, the set $\mathcal{I}_\chi$ is defined with respect to the \emph{closure} of the unit cells. 
	
	We label the restriction of $H^m_{\eq,\Fourier,A}(\R^d,\Hil)$ to $\Lambda_{\chi}$ with 
	\begin{align*}
		\pi_{\Lambda_{\chi}} : H^m_{\eq,\Fourier,A}(\R^d,\Hil) \longrightarrow H^m_{\eq,\Fourier,A}(\Lambda_{\chi},\Hil) 
		, \quad
		\psi \mapsto \psi \vert_{\Lambda_{\chi}} 
	\end{align*}
	We may view the equivariant magnetic Sobolev space 
	\begin{align}
		H^m_{\eq,\Fourier,A}(\Lambda_{\chi},\Hil) \cong \bigoplus_{\tilde{\gamma}^* \in \mathcal{I}_\chi} H^m_{\eq,\Fourier,A}(\BZ_{\tilde{\gamma}^*},\Hil)
		\label{appendix:equivariant_operators:eqn:plaquette_decomposition_equivariant_H_m_A_Lambda_chi}
	\end{align}
	that acts as the target space either as the direct sum of the equivariant magnetic Sobolev spaces on the relevant unit cells or define it as those elements $\psi$ of $L^2(\Lambda_{\chi},\Hil)$ for which the equivariant Sobolev norm 
	\begin{align*}
		\snorm{\psi}_{H^m_{\eq,\Fourier,\Hil}(\Lambda_{\chi},\Hil)} := \sqrt{\Bscpro{\bexpval{\KA \vert_{\Lambda_{\chi}}}^m \psi \, }{ \, \bexpval{\KA \vert_{\Lambda_{\chi}}}^m \psi}_{L^2(\Lambda_{\chi},\Hil)}} < \infty 
	\end{align*}
	is finite. Note that equivariance is still imposed since we use the \emph{equivariant} momentum operator $\KA$ in the norm. 
	
	The operator norm of $\pi_{\Lambda_{\chi}}$ is the square root of the number of unit cells that intersects with $\supp \chi$: 
	\begin{align*}
		\bnorm{\pi_{\Lambda_{\chi}}(\psi)}_{H^m_{\eq,\Fourier,A}(\Lambda_{\chi},\Hil)}^2 &= \bnorm{\psi \vert_{\Lambda_{\chi}}}_{H^m_{\eq,\Fourier,A}(\Lambda_{\chi},\Hil)}^2 
		= \sum_{\tilde{\gamma}^* \in \mathcal{I}_\chi} \bnorm{\psi \vert_{\BZ_{\tilde{\gamma}^*}}}_{H^m_{\eq,\Fourier,A}(\BZ_{\tilde{\gamma}^*},\Hil)}^2 
		\\
		&= \sabs{\mathcal{I}_\chi} \, \bnorm{\psi \vert_{\BZ_{\gamma^*}}}_{H^m_{\eq,\Fourier,A}(\BZ_{\gamma^*},\Hil)}^2
		= \sabs{\mathcal{I}_\chi} \, \snorm{\psi}_{H^m_{\eq,\Fourier,A}(\R^d,\Hil)}^2 
	\end{align*}
	Hence, the second operator $\pi_{\Lambda_{\chi}}$ is bounded. 
	
	Now it is our turn to define the other operator, 
	\begin{align*}
		M_{\chi} : \; \; &H^m_{\eq,\Fourier,A}(\Lambda_{\chi},\Hil) \longrightarrow H^m_{\eq,\Fourier,A}(\R^d,\Hil) 
		\\
		&\bigl ( M_{\chi} \psi_{\Lambda_{\chi}} \bigr )(k) := 
		\begin{cases}
			\chi(k) \, \psi_{\Lambda_{\chi}}(k) & k \in \Lambda_{\chi} \\
			0 & k \not\in \Lambda_{\chi} \\
		\end{cases}
		. 
	\end{align*}
	Its boundedness will be proven in two steps: we will first furnish the estimate 
	\begin{align}
		\bnorm{M_{\chi}}_{\mathcal{B}(H^m_{\eq,\Fourier,A}(\Lambda_{\chi},\Hil) , H^m_{\Fourier,A}(\R^d,\Hil))} \leq \bnorm{\chi(\hat{k})}_{\mathcal{B}(H^m_{\Fourier,A}(\R^d,\Hil))}
		\label{appendix:equivariant_operators:eqn:norm_estimate_M_chi_chi_hat_k}
	\end{align}
	by the operator norm of the multiplication operator 
	\begin{align*}
		\chi(\hat{k}) : H^m_{\Fourier,A}(\R^d,\Hil) \longrightarrow H^m_{\Fourier,A}(\R^d,\Hil) 
		. 
	\end{align*}
	And then we will give a proof that $\chi(\hat{k})$ is bounded as an operator between \emph{non-equivariant} magnetic Sobolev spaces. 
	
	The proof of \eqref{appendix:equivariant_operators:eqn:norm_estimate_M_chi_chi_hat_k} is subtle. We wish to exploit that $\Lambda_{\chi}$ is a subset of $\R^d$ and that we can drop equivariance. However, the order in which we do that matters, and we cannot do both simultaneously since there is \emph{no continuous embedding} of the equivariant magnetic Sobolev space $H^m_{\eq,\Fourier,A}(\Lambda_{\chi},\Hil)$ into the \emph{non}-equivariant magnetic Sobolev space $H^m_{\Fourier,A}(\R^d,\Hil)$. 
	
	For technical reasons, we will have to introduce yet another magnetic Sobolev space, 
	\begin{align*}
		H^m_{\Fourier,A,0}(\Lambda_{\chi},\Hil) := \overline{\bigl \{ \psi \in \Cont^{\infty}(\Lambda_{\chi},\Hil) \; \; \vert \; \; \supp \psi \subset \mathrm{Int}(\Lambda_{\chi}) \bigr \}}^{\norm{ \, \cdot \, }_{H^m_{\Fourier,A,0}(\Lambda_{\chi},\Hil)}}
	\end{align*}
	defined as the completion of the smooth functions whose support lies in the \emph{interior} of $\Lambda_{\chi}$ with respect to the norm induced by the scalar product 
	\begin{align*}
		\scpro{\varphi}{\psi}_{H^m_{\Fourier,A,0}(\Lambda_{\chi},\Hil)} := \Bscpro{\bexpval{P^A_{\Lambda_{\chi},0}}^m \varphi \, }{ \, \bexpval{P^A_{\Lambda_{\chi},0}}^m \psi}_{L^2(\Lambda_{\chi},\Hil)} 
	\end{align*}
	that involve the magnetic Laplacian 
	\begin{align*}
		\bigl ( P^A_{\Lambda_{\chi},0} \bigr )^2 := \bigl ( \hat{k} - \lambda \, A(\ii \eps \nabla_k) \bigr )^2 
	\end{align*}
	with Dirichlet boundary conditions. Importantly, we can continuously embed 
	\begin{align}
		H^m_{\Fourier,A,0}(\Lambda_{\chi},\Hil) \hookrightarrow H^m_{\Fourier,A}(\R^d,\Hil) 
		. 
		\label{appendix:equivariant_operators:eqn:inclusion_H_m_A_0_Lambda_chi_H_m_A_Rd}
	\end{align}
	into the non-equivariant magnetic Sobolev space on $\R^d$. 
	
	The reason we can involve $H^m_{\Fourier,A,0}(\Lambda_{\chi},\Hil)$ in our arguments is that when computing the operator norm of $M_{\chi}$, it suffices to take the supremum over all functions whose essential support is contained in $\supp \chi$, 
	\begin{align*}
		\bnorm{M_{\chi}}_{\mathcal{B}(H^m_{\eq,\Fourier,A}(\Lambda_{\chi},\Hil) , H^m_{\Fourier,A}(\R^d,\Hil))} &= \sup_{\snorm{\psi}_{H^m_{\eq,\Fourier,A}(\Lambda_{\chi},\Hil)} = 1} \bnorm{M_{\chi} \psi}_{H^m_{\Fourier,A}(\R^d,\Hil))} 
		\\
		&= \sup_{\substack{\snorm{\psi}_{H^m_{\eq,\Fourier,A}(\Lambda_{\chi},\Hil)} = 1 \\ \mathrm{ess \, supp} \, \psi \subseteq \supp \chi}} \bnorm{M_{\chi} \psi}_{H^m_{\Fourier,A}(\R^d,\Hil))} 
		. 
	\end{align*}
	By construction, $\supp \chi \subset \mathrm{Int}(\Lambda_{\chi})$ lies in the interior of $\Lambda_{\chi}$ and hence, the boundary conditions on $\partial \Lambda_{\chi}$ do not matter for functions that approximately maximize the norm. Restricted to $\supp \chi$, we may view $H^m_{\Fourier,A,0}(\Lambda_{\chi},\Hil)$ as a superset of 
	\begin{align*}
		H^m_{\eq,\Fourier,A}(\Lambda_{\chi},\Hil) \big \vert_{\supp \chi} \subset H^m_{\Fourier,A,0}(\Lambda_{\chi},\Hil) \big \vert_{\supp \chi} 
		, 
	\end{align*}
	since all we do is drop the equivariance condition. We combine this with the inclusion~\eqref{appendix:equivariant_operators:eqn:inclusion_H_m_A_0_Lambda_chi_H_m_A_Rd} to arrive at 
	\begin{align*}
		\bnorm{M_{\chi}}_{\mathcal{B}(H^m_{\eq,\Fourier,A}(\Lambda_{\chi},\Hil) , H^m_{\Fourier,A}(\R^d,\Hil))} &\leq \sup_{\substack{\snorm{\psi}_{H^m_{\Fourier,A,0}(\Lambda_{\chi},\Hil)} = 1 \\ \mathrm{ess \, supp} \, \psi \subseteq \supp \chi}} \bnorm{\chi(\hat{k}) \psi}_{H^m_{\Fourier,A}(\R^d,\Hil))} 
		\\
		&= \sup_{\snorm{\psi}_{H^m_{\Fourier,A,0}(\Lambda_{\chi},\Hil)} = 1} \bnorm{\chi(\hat{k}) \psi}_{H^m_{\Fourier,A}(\R^d,\Hil))} 
		\\
		&\leq \sup_{\snorm{\psi}_{H^m_{\Fourier,A}(\R^d,\Hil)} = 1} \bnorm{\chi(\hat{k}) \psi}_{H^m_{\Fourier,A}(\R^d,\Hil))} 
		= \bnorm{\chi(\hat{k})}_{\mathcal{B}(H^m_{\Fourier,A}(\R^d,\Hil))} 
		. 
	\end{align*}
	In conclusion, we have shown estimate~\eqref{appendix:equivariant_operators:eqn:norm_estimate_M_chi_chi_hat_k}, and proceed to prove that the operator norm on the right is finite as well. 
	
	Let us introduce some better notation to write out the $H^m_{\Fourier,A}(\R^d,\Hil)$ norm. Specifically, we abbreviate the kinetic momentum operator in momentum representation with 
	\begin{align*}
		P^A_{\Fourier} := \Fourier \, \bigl ( - \ii \nabla_x - \lambda \, A(\eps \hat{x}) \bigr ) \, \Fourier^{-1} = \hat{k} - \lambda \, A(\ii \eps \nabla_k)
		. 
	\end{align*}
	Then the weights that enter the scalar product of $H^m_{\Fourier,A}(\R^d,\Hil)$ are $\sexpval{P^A_{\Fourier}}^m = \Op_{\Fourier}^A(\sexpval{k}^m)$; they can also be viewed as a magnetic pseudodifferential operator. Note that Assumption~\ref{equivariant_calculus:assumption:setting} applies and hence, the components of the magnetic field $B$ are all $\Cont^{\infty}_{\mathrm{b}}$ functions. 
	
	We can replace the weights $\sexpval{P^A_{\Fourier}}^m = \Op_{\Fourier}^A(\sexpval{k}^m)$ with the weights 
	\begin{align*}
		w_m(P^A_{\Fourier}) = \sexpval{P^A_{\Fourier}}^m = \Op_{\Fourier}^A(\sexpval{k}^m) + \lambda(m) 
	\end{align*}
	from equation~\eqref{operator_valued_calculus:eqn:definition_alternate_weights_magnetic_Sobolev_spaces} that lead to an \emph{equivalent} norm on the magnetic Sobolev space; they differ only by a constant $\lambda(m) \geq 0$ chosen so that $w_m(P^A_{\Fourier})$ is an invertible operator with bounded inverse. 
	
	Owing to our assumptions on the magnetic field (Assumption~\ref{equivariant_calculus:assumption:setting}~(a)), we know $w_m(P^A_{\Fourier}) = \Op_{\Fourier}^A(w_m)$ and its inverse 
	\begin{align*}
		w_m(P^A_{\Fourier})^{-1} = \Op_{\Fourier}^A \bigl ( w_m^{(-1)_{\Weyl}} \bigr ) 
		= \Op_{\Fourier}^A(w_{-m})
	\end{align*}
	are magnetic pseudodifferential operators associated to Hörmander symbols, 
	\begin{align*}
		w_{\pm m} \in S^{\pm m}_{1,0}(\C) 
		. 
	\end{align*}
	Corollary~\ref{operator_valued_calculus:cor:Calderon_Vaillancourt} tells us that 
	\begin{align*}
		\Op_{\Fourier}^A(w_{\pm m}) : H^s_{\Fourier,A}(\R^d,\Hil) \longrightarrow H^{s \mp m}_{\Fourier,A}(\R^d,\Hil) 
	\end{align*}
	define continuous operators between magnetic Sobolev spaces for any $s \in \R$. What is more, by construction these operators are invertible with bounded inverse. 
	
	Hence, we can compute the relevant operator norm by inserting 
	\begin{align*}
		\id_{H^s_{\Fourier,A}(\R^d,\Hil)} = w_m(P^A_{\Fourier}) \, w_{-m}(P^A_{\Fourier}) 
	\end{align*}
	and writing out the Sobolev norm, 
	\begin{align*}
		\bnorm{\chi(\hat{k})}_{\mathcal{B}(H^m_{\Fourier,A}(\R^d,\Hil))} &= \sup_{\snorm{\Psi}_{H^m_{\Fourier,A}(\R^d,\Hil)} = 1} \bnorm{\chi(\hat{k}) \Psi}_{H^m_{\Fourier,A}(\R^d,\Hil)} 
		\\
		&= \sup_{\snorm{\Phi}_{L^2(\R^d,\Hil)} = 1} \bnorm{\sexpval{P^A_{\Fourier}}^m \, \chi(\hat{k}) \, w_{-m}(P^A_{\Fourier}) \, w_m(P^A_{\Fourier}) \, \Phi}_{H^m_{\Fourier,A}(\R^d,\Hil)} 
		\\
		&\leq \bnorm{\sexpval{P^A_{\Fourier}}^m}_{\mathcal{B}(L^2(\R^d,\Hil),H^m_{\Fourier,A}(\R^d,\Hil))} \, \bnorm{\chi(\hat{k})}_{\mathcal{B}(L^2(\R^d,\Hil))} 
		\cdot \\
		&\quad \, \cdot 
		\, \bnorm{w_{-m}(P^A_{\Fourier})}_{\mathcal{B}(L^2(\R^d,\Hil))} \, \bnorm{w_m(P^A_{\Fourier})}_{\mathcal{B}(H^m_{\Fourier,A}(\R^d,\Hil) , L^2(\R^d,\Hil))} 
		\\
		&= \bnorm{\id_{H^m_{\Fourier,A}(\R^d,\Hil)}}_{\mathcal{B}(H^m_{\Fourier,A}(\R^d,\Hil))} \, \bnorm{\chi(\hat{k})}_{\mathcal{B}(L^2(\R^d,\Hil))} 
		\cdot \\
		&\quad \, \cdot 
		\, \bnorm{w_{-m}(P^A_{\Fourier})}_{\mathcal{B}(L^2(\R^d,\Hil))} \, \bnorm{\sexpval{P^A_{\Fourier}}^m + \lambda(m)}_{\mathcal{B}(H^m_{\Fourier,A}(\R^d,\Hil) , L^2(\R^d,\Hil))} 
		. 
	\end{align*}
	The operator norm of the first factor, $\bnorm{\id_{H^m_{\Fourier,A}(\R^d,\Hil)}}_{\mathcal{B}(H^m_{\Fourier,A}(\R^d,\Hil))} = 1$, is evidently $1$. 
	
	The rapid decay of $\chi$ implies it can be regarded as an element of $S^{-\infty}(\C)$, and hence, by the Calderón-Vaillancourt Theorem~\ref{operator_valued_calculus:thm:Calderon_Vaillancourt} the operator $\chi(\hat{k}) = \Op_{\Fourier}^{A = 0}(\chi)$ defines a bounded operator on $L^2(\R^d,\Hil)$. 
	
	For the third operator norm, we exploit the inclusion $S^{-m}_{1,0}(\C) \subset S^0_{1,0} \bigl ( \mathcal{B}(\Hil) \bigr )$ of Hörmander classes and deduce once more from the Calderón-Vaillancourt Theorem~\ref{operator_valued_calculus:thm:Calderon_Vaillancourt} that 
	\begin{align*}
		w_{-m}(P^A_{\Fourier}) = \Op_{\Fourier}^A(w_{-m}) \in \mathcal{B} \bigl ( L^2(\R^d,\Hil) \bigr ) 
	\end{align*}
	defines a bounded operator on $L^2(\R^d,\Hil)$. 
	
	The norm of the fourth and final term in the product is bounded as well, courtesy of Corollary~\ref{operator_valued_calculus:cor:Calderon_Vaillancourt}. 
	
	This means $\chi(\hat{k}) \in \mathcal{B} \bigl ( H^m_{\Fourier,A}(\R^d,\Hil) \bigr )$ is a continuous linear operator, which, in view of equation~\eqref{appendix:equivariant_operators:eqn:imath_chi_Rd_norm_estimate_product_two_operators}, also proves the continuity of $\imath_{\chi,\R^d}$. 
\end{proof}
Armed with this knowledge, we can provide a proof of the last remaining Corollary. 
\begin{proof}[Corollary~\ref{appendix:equivariant_operators:cor:equivalence_boundedness_equivariant_operators}]
	\begin{enumerate}[(1)]
		\item Given that the domain $\domain(\widehat{F}^A) = H^{q + q'}_{\Fourier,A}(\R^d,\Hil) \subseteq L^2(\R^d,\Hil)$ is dense, Proposition~\ref{appendix:equivariant_operators:prop:characterization_equivariant_operators} applies and the operators $F_{\gamma^*}^A$ and $F^A$ exist as densely defined operators. 
		
		Let us start with the family of operators from characterization~(b). Pick some $\gamma^* \in \Gamma^*$. The equivariance condition~\eqref{appendix:equivariant_operators:eqn:equivariance_condition_domain_L2_Rd} implies that the domains $\domain(F_{\gamma^*}) = U_{\gamma^*,\tilde{\gamma}^*} \, \bigl ( \domain(F_{\tilde{\gamma}^*}^A) \bigr )$ for different reciprocal lattice vectors $\gamma^* , \tilde{\gamma}^* \in \Gamma^*$ are related by a combination of translations and the group action $\tau(\gamma^* - \tilde{\gamma}^*)$ (\cf Lemma~\ref{appendix:equivariant_operators:lem:weighted_Hilbert_spaces_translated_Brillouin_zones}~(4)). Consequently, imposing the equivariance condition on the kinetic momentum operator and restricting it to functions on $\BZ_{\gamma^*}$, 
		\begin{align*}
			\bigl ( P^A \vert_{\BZ_{\gamma^*}} \bigr ) \big \vert_{\eq} &= \bigl ( \bigl ( \hat{k} + \lambda \, A(\ii \eps \nabla_k) \bigr ) \bigr ) \big \vert_{\BZ_{\gamma^*}} \Big \vert_{\eq} 
			\\
			&= \bigl ( \hat{k} + \lambda \, A(\Reps) \bigr ) \big \vert_{\BZ_{\gamma^*}}
			= \KA \big \vert_{\BZ_{\gamma^*}}
			, 
		\end{align*}
		gives the equivariant kinetic momentum operator restricted to a single fundamental cell. This is precisely the operator that enters into the definition of $H^{q + q'}_{\eq,\Fourier,A}(\BZ_{\gamma^*},\Hil)$ through the weight 
		\begin{align*}
			\bigl ( \sexpval{P^A}^{q + q'} \bigr ) \big \vert_{\BZ_{\gamma^*}} \Big \vert_{\eq} = \sexpval{\KA}^{q + q'} \big \vert_{\BZ_{\gamma^*}} 
			. 
		\end{align*}
		Consequently, 
		the domain $\domain(F_{\gamma^*}^A) = H^{q + q'}_{\eq,\Fourier,A}(\BZ_{\gamma^*},\Hil)$ is just the magnetic Sobolev space of order $q + q'$, where we have imposed equivariant boundary conditions. 
		
		Translated to the operator $F^A$ from characterization~(c), the above arguments show that the domain is $H^{q + q'}_{\eq,\Fourier,A}(\R^d,\Hil)$. 
		
		It remains to prove that the operators $F^A_{\gamma^*}$, $\gamma^* \in \Gamma^*$, and $F^A$ are bounded when $\widehat{F}^A$ is. Our construction of $F^A_{\gamma^*}$ from $\widehat{F}^A$ by splitting up $L^2(\R^d,\Hil)$ into a sum over shifted Brillouin tori and the definition of the map $\imath_{\chi}$ (\cf Definition~\ref{appendix:equivariant_operators:defn:inclusion_map_L2_BZ_gamma_ast_L2_Rd}) tells us that the action of 
		\begin{align*}
			\bigl( \widehat{F}^A \imath_{\chi}(\psi_{\gamma^*}) \bigr ) \big \vert_{\BZ_{\gamma^*}} = F^A_{\gamma^*} \psi_{\gamma^*}
		\end{align*}
		on the relevant unit cell $\BZ_{\gamma^*}$ coincides with the action of $F^A_{\gamma^*}$. 
		
		Consequently, for any $\psi_{\gamma^*} \in H^{q + q'}_{\eq,\Fourier,A}(\BZ_{\gamma^*},\Hil)$ this yields the first estimate of 
		\begin{align*}
			\bnorm{F^A_{\gamma^*} \psi_{\gamma^*}}_{\mathfrak{h}_{\gamma^*}(\Hil')} &\leq \bnorm{\widehat{F}^A}_{\mathcal{B}(H^{q + q'}_{\Fourier,A}(\R^d,\Hil),L^2(\R^d,\Hil'))} \, \bnorm{\imath_{\chi}(\psi_{\gamma^*})}_{H^{q + q'}_{\Fourier,A}(\R^d,\Hil)} 
			\\
			&\leq C \, \bnorm{\widehat{F}^A}_{\mathcal{B}(H^{q + q'}_{\Fourier,A}(\R^d,\Hil),L^2(\R^d,\Hil'))} \, \snorm{\psi_{\gamma^*}}_{H^{q + q'}_{\eq,\Fourier,A}(\BZ_{\gamma^*},\Hil)} 
			, 
		\end{align*}
		and the continuity of the map $\imath_{\chi} : H^{q + q'}_{\eq,\Fourier,A}(\BZ_{\gamma^*},\Hil) \longrightarrow H^{q + q'}_{\Fourier,A}(\R^d,\Hil)$ (Lemma~\ref{appendix:equivariant_operators:lem:continuity_inclusion_map_L2_BZ_gamma_ast_L2_Rd}) the second. Hence, boundedness of $F^A_{\gamma^*}$ follows from the boundedness of $\widehat{F}^A$.
		
		Moreover, we can use the construction of $F^A$ from $F^A_{\gamma^*}$ in step (b)~$\Longrightarrow$~(c) of the proof of Proposition~\ref{appendix:equivariant_operators:prop:characterization_equivariant_operators} to infer the boundedness of $F^A$: regarding the domain, the unitary $U_{\gamma^*} : L^2_{\eq}(\R^d,\Hil) \longrightarrow \mathfrak{h}_{\gamma^*}(\Hil)$ from Lemma~\ref{appendix:equivariant_operators:lem:weighted_Hilbert_spaces_translated_Brillouin_zones}~(3) restricts to an isomorphism 
		\begin{align*}
			U_{\gamma^*} \big \vert_{H^{q + q'}_{\eq,\Fourier,A}(\BZ_{\gamma^*},\Hil)} : H^{q + q'}_{\eq,\Fourier,A}(\BZ_{\gamma^*},\Hil) \longrightarrow H^{q + q'}_{\eq,\Fourier,A}(\R^d,\Hil)
			, \quad
			\psi \mapsto \psi \vert_{\BZ_{\gamma^*}}
		\end{align*}
		between magnetic Sobolev spaces. Combining this with the boundedness of $F^A_{\gamma^*}$ yields the boundedness of $F^A$. 
		\item When $q + q' = 0$, the magnetic Sobolev spaces reduce to the corresponding $L^2$ spaces, \eg $H^0_{\eq,\Fourier,A}(\BZ_{\gamma^*},\Hil) = L^2(\BZ_{\gamma^*},\Hil)$. 
		
		So suppose $q + q' > 0$, but assume all the $A_j \in L^{\infty}(\R^d,\R)$, $j = 1 , \ldots , d$, are bounded. Then as the sum of two bounded operators, the components of the kinetic momentum operator $\KA_j \vert_{\BZ_{\gamma^*}} \in \mathcal{B} \bigl ( L^2(\BZ_{\gamma^*},\Hil) \bigr )$ are all bounded. Therefore, $H^{q + q'}_{\eq,\Fourier,A}(\BZ_{\gamma^*},\Hil)$ agrees with $L^2(\BZ_{\gamma^*},\Hil)$ as \emph{Banach} spaces and $F^A_{\gamma^*}$ is bounded. In view of Proposition~\ref{equivariant_calculus:prop:equivalence_boundedness_equivariant_operators} this means all $F^A_{\gamma^*}$, $\gamma^* \in \Gamma^*$, and $F^A$ are bounded as well. 
		\item The fact that the relations between the operators transform in a gauge-covariant way follows from 
		\begin{align*}
			P^{A + \eps \dd \vartheta}_{\Fourier} := \Fourier \, \bigl ( - \ii \nabla_x - \lambda \, A(\eps \hat{x}) - \lambda \, \eps \, \nabla_x \vartheta(\eps x) \bigr ) 
			= \e^{+ \ii \lambda \vartheta(\ii \eps \nabla_k)} \, P^A_{\Fourier} \, \e^{- \ii \lambda \vartheta(\ii \eps \nabla_k)} 
		\end{align*}
		and similar relations for the other kinetic momentum operators. Those imply that the gauge transformations relate the various magnetic Sobolev spaces as well as the operators for the gauges $A$ and $A + \eps \dd \vartheta$. 
	\end{enumerate}
\end{proof}
%


\printbibliography

\end{document}